\numberwithin{equation}{section}
\newtheorem*{proposition*}{Proposition}
\newtheorem*{theorem*}{Theorem}
\newtheorem*{conjecture*}{Conjecture}
\newtheorem*{claim*}{Claim}
\newtheorem*{lemma*}{Lemma}
\newtheorem*{corollary*}{Corollary}
\newtheorem{theorem}{Theorem}[section]
\newtheorem{proposition}[theorem]{Proposition}
\newtheorem{lemma}[theorem]{Lemma}
\newtheorem{corollary}[theorem]{Corollary}
\newtheorem*{definition*}{Definition}
\newtheorem{definition}{Definition}[section]
\newtheorem*{assumption*}{\mathcal{A}ssumption}
\newtheorem*{remark*}{Remark}
\newtheorem{remark}{Remark}[section]
\newtheorem{thmx}{Theorem}
\newtheorem{mainthm}{Theorem}
\newcommand{\la}{\langle}
\newcommand{\ra}{\rangle}
\newcommand{\R}{\mathbb{R}}
\newcommand{\s}{\mathbb{S}}
\newcommand{\C}{\mathbb{C}}
\newcommand{\Z}{\mathbb{Z}}
\newcommand{\N}{\mathbb{N}}
\DeclareMathOperator{\supp}{\textnormal{supp}}
\newcommand{\snabla}{\slashed{\nabla}}
\newcommand{\hphi}{\hat{\phi}}
\newcommand{\re}{\mathfrak{Re}\,} 
\newcommand{\im}{\mathfrak{Im}\,}
\begin{document}

\author{Dejan Gajic$^*$$^1$}\author{Claude Warnick$^{\dag}$$^1$}
\address{$^{1}$\small University of Cambridge, Department of Pure Mathematics and Mathematical Statistics, Wilberforce Road, Cambridge CB3 0WB, United Kingdom }
\email{$^*$d.gajic@dpmms.cam.ac.uk}
\email{$^{\dag}$c.m.warnick@maths.cam.ac.uk}

\title{Quasinormal modes in extremal Reissner--Nordstr\"om spacetimes}

\begin{abstract}
We present a new framework for characterizing quasinormal modes (QNMs) or resonant states for the wave equation on asymptotically flat spacetimes, applied to the setting of extremal Reissner--Nordstr\"om black holes. We show that QNMs can be interpreted as honest eigenfunctions of generators of time translations acting on Hilbert spaces of initial data, corresponding to a suitable time slicing. The main difficulty that is present in the asymptotically flat setting, but is absent in the previously studied asymptotically de Sitter or anti de Sitter sub-extremal black hole spacetimes, is that $L^2$-based Sobolev spaces are not suitable Hilbert space choices. Instead, we consider Hilbert spaces of functions that are additionally Gevrey regular at infinity and at the event horizon. We introduce $L^2$-based Gevrey estimates for the wave equation that are intimately connected to the existence of conserved quantities along null infinity and the event horizon. We relate this new framework to the traditional interpretation of quasinormal frequencies as poles of the meromorphic continuation of a resolvent operator and obtain new quantitative results in this setting. 
\end{abstract}

\maketitle

\tableofcontents

\section{Introduction}
\label{sec:intro}

An important problem in the theory of general relativity is to classify the behaviour of gravitational radiation emitted by dynamical solutions to the vacuum Einstein equations
\begin{equation}
\label{eq:eve}
\textnormal{Ric}[g]=0,
\end{equation}
where $g$ is a Lorentzian metric and $\textnormal{Ric}[g]$ is the corresponding Ricci tensor. It is expected that a significant proportion of the gravitational radiation emitted by dynamical black hole solutions to \eqref{eq:eve} may be dominated by \emph{quasinormal modes} (QNMs), also known as \emph{resonant states} as they settle down to a stationary Kerr black hole solution \cite{kerr63}; see for example the numerics in \cite{buon07, berti07}, the first experimental observations of gravitational radiation in \cite{ligo} and subsequent further analysis in \cite{isi19}. QNMs are exponentially damped, oscillating solutions to linear wave equations on fixed stationary spacetime backgrounds that are characterized by a discrete set of complex frequencies called \emph{quasinormal frequencies} (QNFs) or scattering resonances. They were first observed in numerics of Vishveshwara \cite{vish70} and have been a prevalent topic in the theoretical physics literature ever since, see the review articles \cite{Kok1999, berti09, konop11} and references therein.\footnote{See Section \ref{sec:tradapproach} for further discussion regarding the precise definitions of QNMs in the literature.} QNMs may be viewed as the dispersive analogues of the \emph{normal modes} that dictate the dynamics of an idealized vibrating string. 

If one assumes the dominance of QNMs in the gravitational radiation emitted at late time intervals in the dynamical evolution of perturbations of Kerr black holes, it is possible to entertain the notion of ``black hole spectroscopy'' \cite{dreyer04, isi19} which is the inference of properties of black hole end state from precise experimental measurements of the most dominant QNFs in the gravitational radiation.

Recently, there have been significant advances towards a mathematical proof of the Kerr Stability Conjecture in linearized settings, see \cite{part3, Dafermos2016, john18, ma2017, dhr-teukolsky-kerr} and also \cite{anderssonkerr, haf19, hung2017}.\footnote{We refer to \cite{lecturesMD} for a comprehensive list of additional earlier foundational results in the direction of linear stability.} This conjecture asserts that initially small and localized metric perturbations of sub-extremal Kerr black hole solutions to \eqref{eq:eve} should decay in time and the metric should asymptotically approach a nearby Kerr solution.  We note moreover a recent related \emph{nonlinear} stability result in the black hole setting: the stability of Schwarzschild black holes under axisymmetric, polarized perturbations \cite{klainerman17}. Stability is established by showing that perturbations decay \emph{at least} inverse polynomially in time.

In fact, early heuristic and numerical analysis of the linearized problem initiated by Price \cite{Price1972, leaver85, CGRPJP94b} suggested that one can do \emph{no better} than proving inverse polynomial decay estimates, because at sufficiently late times, the leading-order behaviour should be \emph{exactly} inverse polynomial (this is sometimes referred to as ``Price's law''). The presence of so-called \emph{polynomial tails} in the context of the linear wave equation on asymptotically flat, spherically symmetric black hole backgrounds has recently been proved in a mathematically rigorous setting \cite{paper2, paper4, extremal-prl}, where it has moreover been connected to the existence of conservation laws along null hypersurfaces, first discovered by Newman and Penrose at null infinity \cite{np2} and discovered in a different guise by Aretakis at the event horizons of extremal black holes\footnote{See \cite{are18} for a comprehensive overview on extremal black holes.} \cite{aretakis1, aretakis2, aretakis4, aretakisglue}.\footnote{The connection between the conserved quantities of Newman--Penrose and Aretakis was first observed in \cite{hm2012, bizon2012}.} In Fourier space, polynomial tails can alternatively be related to the precise behaviour of resolvent operators corresponding to the wave equation near the zero time frequency, see \cite{leaver85,tataru3, hintz20a}.

The presence of polynomial tails illustrates a clear contrast with the non-dispersive setting of the wave equation on a compact domain, where normal modes form an orthonormal basis for the space of solutions and therefore characterize fully the dynamics and in particular the late-time behaviour. Let us also note that polynomial tails are characteristic to \emph{asymptotically flat} spacetimes. Indeed, in the proof of the nonlinear stability of slowly rotating Kerr--de Sitter black holes \cite{HV2016}, it was shown that in the cosmological setting there are \emph{no} polynomial tails and the asymptotic behaviour of gravitational radiation is instead determined by QNMs.\footnote{We refer to Section \ref{intro:cosmo} for a further discussion on the de Sitter black hole setting.}

It remains an open problem in the asymptotically flat setting to reconcile the expected polynomial tails in gravitational radiation with the observed exponentially decaying and oscillating QNMs. In order to discern the dominant behaviour in a given time interval and to determine \emph{how} this behaviour depends on the type of initial data, it is necessary to establish in the first place a suitably robust and quantitative understanding of QNMs in the context of the linear wave equation
\begin{equation}
\label{eq:waveequationvac}
\square_g\psi=0,
\end{equation}
with $\square_g$ the Laplace--Beltrami operator corresponding to an asymptotically flat black hole spacetime background $(\mathcal{M},g)$.

\textbf{In this paper, we present a novel, mathematically rigorous construction of quasinormal modes for \eqref{eq:waveequationvac} on extremal Reissner--Nordstr\"om black hole spacetimes}. Using this construction, we obtain new quantitative statements regarding the distribution of QNMs near the zero time frequency and the properties of infinite sums over all angular frequencies. We provide in particular the first construction of QNMs on extremal Reissner--Nordstr\"om and the first construction of asymptotically flat black hole QNMs without the a priori assumption of fixed angular frequencies. 

We moreover demonstrate for each extremal Reissner--Nordstr\"om quasinormal frequency in a sector of the complex plane and for each sequence of Reissner--Nordstr\"om--de Sitter spacetimes with cosmological constants $\Lambda$ approaching zero and black hole charges $e$ approaching the extremal Reissner--Nordstr\"om value, the existence of a corresponding converging sequence of Reissner--Nordstr\"om--de Sitter quasinormal frequencies.\footnote{See Section \ref{intro:cosmo} for further details regarding the role of $\Lambda$.} We present rough statements of the main theorems of the paper in Section \ref{sec:roughmainthms}; see Section \ref{sec:mainthms} for precise statements of the main results.

We restrict to \emph{extremal} Reissner--Nordstr\"om only because of technical simplifications resulting from the existence of simple, closed-form expressions for conserved quantities along the event horizon and future null infinity.\footnote{In the sub-extremal case, there are no conserved quantities at the horizon but conserved quantities at null infinity are still present.} The exploitation of the aforementioned conservation laws along future null infinity and the event horizon plays a fundamental role in the new type of $L^2$-based ``Gevrey estimates'' that are introduced in this paper and involve functions that are \emph{Gevrey regular} at infinity and near the horizon, a notion that first appeared in \cite{gev18}.\footnote{The spaces of Gevrey functions can be viewed as lying ``between'' smooth and analytic functions. See Section \ref{def:gev} for a precise definition. Moreover, in the sub-extremal setting we expect Gevrey regularity at infinity (i.e.\ regularity with respect to the variable $\frac{1}{r}$ for large values of $r$, with $r$ a radial coordinate) to be sufficient, which allows us to include in particular all smooth and compactly supported data.} In future work, we hope to explore applications and extensions of the methods introduced in the present paper to a quantitative study of QNMs in \emph{sub-extremal} Reissner--Nordstr\"om and Kerr spacetimes.\footnote{An alternative strategy to deal with the sub-extremal case in the setting of fixed angular frequency solutions is to view the relevant operators on sub-extremal Reissner--Nordstr\"om as compact perturbations of the analogous operators in Minkowski or extremal Reissner--Nordstr\"om and apply more directly the methods of the present paper and the companion paper \cite{gajwar19b}. We refer to the companion paper \cite{gajwar19b} for related arguments involving compact operator perturbations.}

A key feature of the methods introduced in the present paper is the demonstration that, in contrast with previous approaches, QNMs may be interpreted as honest eigenfunctions of the infinitesimal generator of time translations acting on a suitable Hilbert space of initial data, thus placing them on a similar footing to normal modes. This interpretation was first introduced in the asymptotically anti de Sitter setting in \cite{warn15}. In contrast with the asymptotically (anti) de Sitter settings, where it is sufficient to consider (modified) $L^2$-based Sobolev spaces as Hilbert spaces, the Hilbert spaces in the asymptotically flat setting are shown to consist of functions that are Gevrey regular at future null infinity and at the future event horizon. The results of the present paper demonstrate that such spaces are natural candidate initial data spaces for investigating the role of quasinormal modes in the time evolution of solutions to \eqref{eq:waveequationvac}. Note that weighted $L^2$-based Sobolev spaces are \emph{not} well-suited for this purpose, see Theorem 4.5 and Remark 4.4 in \cite{aagscat} and the discussion in Section \ref{sec:smoothmode}.

\subsection{Traditional approaches to defining quasinormal modes}
\label{sec:tradapproach}
The wave equation \eqref{eq:waveequationvac} on a Reissner--Nordstr\"om spacetime with mass $M$ and charge $e$ (see Section \ref{sec:spacetimes} for more details) takes the following form in standard $(t,r,\theta,\varphi)$ coordinates when restricted to the coefficients $\psi_{\ell m}(t,r)$ in a spherical harmonic expansion $\psi(t,r,\theta,\varphi)= \sum_{\ell\in \N_0,m\in \Z, |m|\leq \ell}\psi_{\ell m}(t,r)Y_{\ell m}(\theta,\varphi)$:
\begin{equation}
\label{eq:asympflatode}
D^{-1}r^{-1}\left[ (D\partial_r)^2(r\psi_{\ell m})-\partial_t^2 (r\psi_{\ell m})-V_{\ell} \cdot r\psi_{\ell m}\right]=0,
\end{equation}
with 
\begin{align*}
D(r)=&\:1-2Mr^{-1}+e^2r^{-2},\\
V_{\ell}(r)=&\: \ell(\ell+1)r^{-2}D(r)+r^{-1}DD'(r),
\end{align*}
where $r\in (r_+,\infty)$ and $D(r_+)=0$.

Solutions to \eqref{eq:asympflatode} of the form $\psi_{\ell m}(t,r)=e^{st}\hat{\psi}_{\ell m}(r)$ with $s\in \C$ therefore satisfy the following (time-independent) Schr\"odinger equation:
\begin{align}
\label{eq:mainode}
\hat{\mathcal{L}}_{s,\ell}(\hat{\psi}_{\ell m})=&\:0,\quad \textnormal{where}\\  \nonumber
\hat{\mathcal{L}}_{s,\ell}(\cdot ):=&\:\frac{d^2}{dr_*^2} (r (\cdot))-(s^2+V_{\ell})(r(\cdot)),
\end{align}
and we changed from the coordinate $r$ to the radial coordinate $r_*(r)$, which satisfies $\frac{dr_*}{dr}=D^{-1}$ (see Section \ref{sec:spacetimes} for more details). 

Standard asymptotic analysis of second-order homogeneous ODE, see for example Chapter 7 Section 2 of \cite{olver74}, implies that solutions to \eqref{eq:mainode} must satisfy the asymptotic behaviour
\begin{equation*}
r\hat{\psi}_{\ell m}\sim e^{s r_*} \quad \textnormal{or} \quad r\hat{\psi}_{\ell m}\sim e^{-s r_*},
\end{equation*}
both as $r_*\to\infty$ and $r_*\to- \infty$, if $s\notin -\kappa_+ \N_0$, with $\kappa_+>0$ a constant known as the surface gravity of the event horizon, defined in Section \ref{sec:metric2}.\footnote{If $s\in -\kappa_+\N_0$, then there are still two linearly dependent solutions, but the asymptotic behaviour is slightly different, see for example \cite{olver74}.} Here, the notation ``$\sim$'' means that there exists smooth functions $P_{\textnormal{hor},\pm}(r)$ near $r=r_+$ and $P_{\textnormal{inf},\pm}(\rho)$ near $\rho=0$ and constants $A_+,A_->0$ and $B_+,B_-$, such that we can decompose
\begin{align}
\label{eq:asym1}
r\hat{\psi}_{\ell m}(r)=&\:A_+ e^{+s r_*}P_{\textnormal{hor},+}(r)+ A_-e^{- s r_*}P_{\textnormal{hor},-}(r)\quad \textnormal{near $r=r_+$ and},\\
\label{eq:asym2}
r\hat{\psi}_{\ell m}(r)=&\:B_+e^{+ s r_*}P_{\textnormal{inf},+}(1/r)+B_-e^{- s r_*}P_{\textnormal{inf},-}(1/r)\quad \textnormal{for large $r$}.
\end{align}
A key question that is relevant for the definition of quasinormal modes is whether the above decompositions are unique (up to renormalization). Approach 1 and Approach 2 below provide two different paths towards obtaining uniqueness, which allows one to define ``ingoing'' and ``outgoing'' solutions.

\subsubsection{Approach 1: quasinormal modes as solutions to a boundary value problem with convergent Taylor series}
\label{sec:approach1}

If $V_{\ell}$ was compactly supported in $r_*$, then the solution $r\psi_{\ell m}$ to \eqref{eq:mainode} at sufficiently large $|r_*|$ could have be written a sum of ingoing and outgoing travelling waves, so we would have made the canonical choice $P_{\textnormal{hor},\pm }\equiv P_{\textnormal{inf},\pm}\equiv 1$ for large $|r_*|$. In the seminal work \cite{chandra75}, \emph{quasinormal modes} were characterized as solutions to \eqref{eq:mainode} with $\re s<0$ (corresponding to compactly supported analogues of $V_{\ell}$) that are ``outgoing at infinity'' ($B_+=0$) and ``ingoing at the horizon'' ($A_-=0$).

The choice of $P_{\textnormal{hor},\pm }$ and $P_{\textnormal{inf},\pm}$ (which determines the notions of ``ingoing'' and ''outgoing'' by setting  $A_-=B_+=0$) is, however, not so straightforward since $V_{\ell}$ is \underline{not} compactly supported. Indeed, let $P_{\textnormal{inf},+}$ and $P_{\textnormal{inf},-}$ be a choice of smooth functions as in \eqref{eq:asym2} and consider the following \emph{ingoing} solution at infinity with respect to $P_{\textnormal{inf},\pm}$:
\begin{equation*}
r\hat{\psi}_{\ell m}(r)=B_+e^{+ s r_*}P_{\textnormal{inf},+}(1/r)=B_+e^{-s r_*}(e^{+2sr_*}P_{\textnormal{inf},+}(1/r)).
\end{equation*}
Define $\widetilde{P}_{\textnormal{inf},-}(1/r):=e^{+2sr_*}P_{\textnormal{inf},+}(1/r)$ and note that $\widetilde{P}_{\textnormal{inf},-}$ is also smooth in $\frac{1}{r}$ (because $\re s <0$!). Then the above solution is actually \emph{outgoing} with respect to $\widetilde{P}_{\textnormal{inf},-}$.

In the case of sub-extremal Reissner--Nordstr\"om, smoothness of $P_{\textnormal{hor},+}(r)$ in $r$ at $r=r_+$ does turn out to be a sufficient regularity condition to determine uniqueness of $P_{\textnormal{hor},+}(r)$ up to renormalization, since $e^{-2sr_*}P_{\textnormal{hor},-}$ (with $P_{\textnormal{hor},-}$ smooth) fails to be smooth in $r$ at $r=r_+$ (when $s\notin -\kappa_+\N_0$). In extremal Reissner--Nordstr\"om, however, $e^{-2sr_*}P_{\textnormal{hor},-}$ \underline{is} smooth in $r$; see also Section \ref{sec:smoothmode}.

One approach to deal with the non-uniqueness of the decomposition involving $P_{\textnormal{inf},\pm}$ in the sub-extremal case is to demand \underline{more than smoothness} by requiring sufficiently rapid decay of the coefficients of the Taylor series of $e^{+2 s r_*} r\hat{\psi}_{\ell m}(r)$ in terms of $z=1-r_+/r$ around $z=0$ so that the series converges at $z=1$.\footnote{Note that convergence is guaranteed at $z<1$ by analyticity. Indeed, analyticity of $P_{\textnormal{hor},+}(r)$ at $r=r_+$ follows from the fact that $r=r_+$ is a regular singular point of the ODE in the sub-extremal case, whereas $r=\infty$ corresponds to an irregular singular point and, \emph{a priori}, only smoothness of $P_{\textnormal{inf},-}(1/r)$ can be guaranteed at $\frac{1}{r}=0$.} This approach was first carried out by Leaver \cite{leaver85b} in the sub-extremal setting, following earlier ideas in the setting of the Schr\"odinger equation describing hydrogen molecule ions \cite{baber35}, and it forms the basis of the \emph{continued fraction method} for approximating quasinormal frequencies numerically. Indeed, in \cite{ans16} numerical evidence is given for the claim that the coefficients in the Taylor series of the function $e^{+s r_*}$ around $z=0$ indeed do \underline{not} decay sufficiently rapidly so as to guarantee convergence at $z=1$.  Furthermore, \cite{leaver85b} provides numerical evidence that the set of $s$ for which the Taylor series of $e^{+2 s r_*} r\hat{\psi}_{\ell m}(r)$ does converge is discrete.

The continued fraction method of Leaver was generalized to extremal Reissner--Nordstr\"om in \cite{extremal-rn-qnm}.

\subsubsection{Approach 2: quasinormal modes as poles of the meromorphic continuation of the resolvent}
\label{sec:approach2}
A different approach to defining QNMs is to interpret solutions to \eqref{eq:mainode} with the desired boundary behaviour as \emph{analytic continuations} in $s$ of analogous solutions with $\re s>0$ instead of $\re s<0$. Indeed, when $\re s>0$, one can unambiguously exclude the solutions with an undesired boundary behaviour $e^{+s |r_*|}$ (which now \emph{grow} exponentially in $|r_*|$) by demanding $r\hat{\psi}_{\ell m}$ to be suitably bounded at $r_*=\pm \infty$. Quasinormal frequencies then correspond to zeroes of the analytic continuation of the Wronskian corresponding to the two $\re s>0$ solutions which are outgoing at infinity and ingoing at the event horizon, respectively.\footnote{The analytic continuation of the Wronskian is only valid away from the non-positive real axis.} 

This approach was carried out by Bachelot--Motet-Bachelot in \cite{ba} in Schwarzschild (where $e=0$), and it constitutes the first mathematically rigorous result regarding the definition and distribution of QNMs.\footnote{See also earlier heuristic work \cite{det77, leaver85b} providing a similar interpretation of QNMs.}
\begin{thmx}[\cite{ba}]
\label{thm:bach}
Fix $\ell\in \N_0$. The resolvent operator $R_{\ell}(s)=\hat{\mathcal{L}}^{-1}_{s,\ell}: L^2(\R) \to L^2(\R)$ is a bounded linear operator that is holomorphic in $s$ when $\re s >0$ and for any pair $(\chi,\chi')\in C^{\infty}_c(\R)$, the operator
\begin{equation*}
\chi'\circ R_{\ell}(s) \circ \chi: L^2(\R) \to L^2(\R)
\end{equation*}
can be meromorphically continued to $\C\setminus \R_{\leq 0}$.
\end{thmx}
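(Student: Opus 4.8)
The plan is to realise $\hat{\mathcal{L}}_{s,\ell}$ as (minus) a holomorphic family of shifted one‑dimensional Schr\"odinger operators, obtain holomorphy of the resolvent on $\{\re s>0\}$ from the spectral theorem, and then build the resolvent kernel out of Jost solutions and continue the Jost solutions. In the variable $u=r\hat{\psi}_{\ell m}$ one has $\hat{\mathcal{L}}_{s,\ell}=-(P_\ell+s^2)$, where $P_\ell:=-\frac{d^2}{dr_*^2}+V_\ell$ is self-adjoint on $L^2(\R)$ with domain $H^2(\R)$ (the potential $V_\ell$ being real, bounded, and $O(r_*^{-2})$ as $r_*\to\pm\infty$, using $D(r_+)=0$ at the horizon end and $D\to1$ at infinity). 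A short computation gives $V_\ell\geq0$ on $(r_+,\infty)$ for every $\ell\in\N_0$ (indeed $D,D'>0$ there), so $\mathrm{spec}(P_\ell)\subseteq[0,\infty)$; this is the mode-stability input. Hence for $\re s>0$ one has $s^2\in\C\setminus(-\infty,0]$, so $-s^2$ lies in the resolvent set of $P_\ell$ and $R_\ell(s):=-(P_\ell+s^2)^{-1}$ is bounded on $L^2(\R)$; it is holomorphic on $\{\re s>0\}$ because $s\mapsto-s^2$ maps that half-plane into the resolvent set and $\lambda\mapsto(P_\ell-\lambda)^{-1}$ is operator-holomorphic there.

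For the continuation I would construct, for $\re s>0$, the Jost solutions $f_\pm(\cdot,s)$ of $\hat{\mathcal{L}}_{s,\ell}u=0$ normalised so that $f_+(r_*,s)\sim e^{-sr_*}$ as $r_*\to+\infty$ and $f_-(r_*,s)\sim e^{+sr_*}$ as $r_*\to-\infty$. Since $V_\ell$ has an inverse-square tail and is therefore only borderline short-range, I would first split off the exact centrifugal potential $\ell(\ell+1)r_*^{-2}$, whose solutions are the explicit comparison functions $\sqrt{|r_*|}\,I_{\ell+1/2}(s|r_*|)$ and $\sqrt{|r_*|}\,K_{\ell+1/2}(s|r_*|)$, and solve the resulting Volterra equation for the genuinely short-range remainder by Picard iteration, with bounds locally uniform in $s$. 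Because $f_+\in L^2$ near $+\infty$ and $f_-\in L^2$ near $-\infty$, mode stability forces the (constant) Wronskian $W(s):=f_-f_+'-f_-'f_+$ to be nonzero for $\re s>0$ (a zero would give an $L^2$ eigenfunction of $P_\ell$ at energy $-s^2\notin[0,\infty)$), and variation of parameters yields, up to an irrelevant constant, the integral kernel $G_s(r_*,r_*')=W(s)^{-1}f_+(\max(r_*,r_*'),s)\,f_-(\min(r_*,r_*'),s)$ for $R_\ell(s)$.

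Now, for a cutoff pair $(\chi,\chi')\in C^\infty_c(\R)$, the operator $\chi'\circ R_\ell(s)\circ\chi$ has kernel $\chi'(r_*)\,G_s(r_*,r_*')\,\chi(r_*')$, supported in the \emph{fixed compact} set $\supp\chi'\times\supp\chi$, so it suffices to continue $f_\pm$ and $W$ on that set. From the Bessel representation of the comparison functions together with the absolute convergence of the Picard series, $e^{+sr_*}f_+(r_*,s)$ and $e^{-sr_*}f_-(r_*,s)$, hence also $W(s)$, extend holomorphically from $\{\re s>0\}$ to $s\in\C\setminus\R_{\leq 0}$: the branch point at $s=0$ and the cut along $\R_{<0}$ are exactly the ones carried by $I_{\ell+1/2}$ and $K_{\ell+1/2}$, i.e.\ a reflection of the merely polynomial decay of $V_\ell$ at null infinity (and, in the extremal case, also at the event horizon, which contributes the same cut). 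For $s\in\C\setminus\R_{\leq 0}$ with $W(s)\neq0$ the kernel $\chi'(r_*)G_s(r_*,r_*')\chi(r_*')$ is continuous and compactly supported, hence defines a Hilbert--Schmidt operator on $L^2(\R)$ depending holomorphically on $s$, while at the zeros of $W$ one obtains poles (the quasinormal frequencies). Since $\C\setminus\R_{\leq 0}$ is connected and contains $\{\re s>0\}$, this gives the asserted meromorphic continuation.

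The hard part is the holomorphic continuation of $f_\pm$ (equivalently of $W$) with exactly the domain $\C\setminus\R_{\leq 0}$: it rests on weighted Volterra estimates for the remainder equation measured against the Bessel comparison functions, and is delicate precisely because in the extremal case $V_\ell$ decays only like $r_*^{-2}$ at \emph{both} ends. A purely abstract Birman--Schwinger argument does not help, since the operator $|V_\ell|^{1/2}R_0(s)|V_\ell|^{1/2}$, whose kernel involves $s^{-1}e^{-s|r_*-r_*'|}$, fails even to be bounded once $\re s<0$ because $V_\ell$ does not decay fast enough to absorb the exponential growth of the free kernel; one genuinely has to exploit the cutoffs together with the explicit comparison solutions.
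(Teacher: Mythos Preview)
This theorem is not proved in the present paper; it is quoted from Bachelot--Motet-Bachelot \cite{ba}, and as the paper remarks immediately after the statement, the proof there proceeds via \emph{complex scaling}, exploiting analyticity of the Schwarzschild potential to deform $r_*$ into the complex plane. Your Jost-solution route is a genuinely different strategy; the $\re s>0$ part (self-adjointness, $V_\ell\ge 0$, holomorphy of $(P_\ell+s^2)^{-1}$) and the kernel representation via $f_\pm$ and $W(s)$ are correct. The gap is in the continuation step.

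Your claim that the Picard iteration for $f_+$ converges absolutely, locally uniformly on $\C\setminus\R_{\leq 0}$, fails once $\re s<0$. The comparison solution $u_K(r_*)=\sqrt{r_*}\,K_{\ell+1/2}(sr_*)\sim c\,s^{-1/2}e^{-sr_*}$ is then the \emph{dominant} solution at $+\infty$, and the Volterra kernel built from $(u_I,u_K)$, applied to the zeroth iterate $u_K$, contains the term $u_I(r_*)\int_{r_*}^{\infty} u_K(r_*')^2\bigl(V_\ell(r_*')-\ell(\ell+1)r_*'^{-2}\bigr)\,dr_*'$, whose integrand behaves like $e^{2|\re s|r_*'}r_*'^{-3}$ and diverges; an $O(r_*^{-3})$ remainder cannot absorb exponential growth. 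At the horizon end the situation is no better: in the sub-extremal (Schwarzschild) setting of \cite{ba} one has $V_\ell\sim e^{2\kappa_+ r_*}$ as $r_*\to-\infty$, and plane-wave Volterra yields $f_-$ only on the strip $\re s>-\kappa_+$. So your argument, as written, does not produce $f_\pm(R,s)$ or $W(s)$ beyond a half-plane/strip. What is missing is exactly the input that complex scaling supplies: analyticity of $V_\ell$ in a sector, or equivalently the regular/irregular singular-point structure in the $r$-variable (Frobenius at $r=r_+$; the asymptotic theory at $r=\infty$ indicated around \eqref{eq:asym1}--\eqref{eq:asym2} and the reference to \cite{olver74}), which produces the $s$-holomorphic functions $P_{\textnormal{hor},\pm},P_{\textnormal{inf},\pm}$ for $s\notin-\kappa_+\N_0\subset\R_{\leq 0}$. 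Your final paragraph is right that this is the crux; Bessel comparison alone does not deliver it.
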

One can define ``quasinormal frequencies'' to be the poles of $R_{\ell}(s)$ and corresponding solutions to \eqref{eq:waveequationvac} to be ``quasinormal modes''. Theorem \ref{thm:bach} illustrates the discreteness of the set of quasinormal frequencies in the complex plane away from the non-positive real axis. We refer to \cite{sabaretto} for additional results regarding the distribution of quasinormal frequencies in a small conic sector around the imaginary axis when $|s|\to \infty$ and $\ell\to \infty$. 

A key feature in both \cite{ba} and \cite{sabaretto} is the complex scaling method which uses analyticity of the Schwarzschild metric components to analytically continue the coordinate $r_*$ away from the real axis into the complex plane. This method was introduced in \cite{agu71, bal71} and extended to a very general, ``black-box'' setting, in \cite{sjozwo91,sjo97}.

Note that it is not clear from the above discussions that the notions of ``quasinormal modes'' and ``quasinormal frequencies'' as prescribed in Approach 1 and Approach 2 actually agree (although there is ample numerical evidence, presented for example in \cite{leaver85b} and \cite{ba}). We refer to Remark \ref{rmk:relationapproach12} for a further discussion on this issue. In order to differentiate between the two approaches, we will refer in this paper to the (heuristic) QNMs/QNFs of Approach 1 as \emph{Leaver quasinormal modes/quasinormal frequencies} and the QNMs/QNFs of Approach 2 as \emph{resonant states}/\emph{scattering resonances}.

The resolvent operator with $\re s>0$ can be defined without the restriction to spherical harmonic modes of fixed $\ell$ in all Reissner--Nordstr\"om spacetimes, using basic uniform energy boundedness properties from \cite{aretakis1,Civin2014} and this definition can be extended to more general classes of spacetimes satisfying suitable energy boundedness properties. 

One would arrive at a natural extension of the notion of fixed $\ell$ scattering resonances by considering the poles of a meromorphic continuation of this more general resolvent operator to a suitable subset of the complex left half-plane. \textbf{It does not follow from the proof of Theorem \ref{thm:bach} that such a continuation exists as, in particular, one cannot rule out \emph{a priori} accumulation points of scattering resonances with fixed $\ell$ as $\ell\to \infty$!} We address this point in Theorem \ref{thm:polesresolvent} in the setting of extremal Reissner--Nordstr\"om.

\subsection{Quasinormal modes in time evolution}
\label{intro:laplacetransf}

The relevance of the scattering resonances described in Section \ref{sec:tradapproach} for the behaviour of general solutions to the initial value problem corresponding to \eqref{eq:waveequationvac} becomes clear when one considers the Laplace transform of solutions to \eqref{eq:waveequationvac}. Let us consider for simplicity initial data of the form:
\begin{equation*}
(\psi|_{t=0},\partial_t\psi|_{t=0})\in (C^{\infty}_c((r_+,\infty)_r\times \s^2))^2.
\end{equation*}
The corresponding solutions $\psi$ to \eqref{eq:waveequationvac} are uniformly bounded in $L^{\infty}$ in both sub-extremal and extremal Reissner--Nordstr\"om (see for example \cite{aretakis1,Civin2014}), so for $\re s>0$ the following Laplace transform is well-defined
\begin{equation*}
\hat{\psi}(s,r,\theta,\varphi)=\int_0^{\infty} e^{-s t} \psi(t,r,\theta,\varphi)\,dt,
\end{equation*}
and after restricting to fixed spherical harmonic modes, it follows immediately that $\hat{\psi}_{\ell m}$ is the unique solution to
\begin{equation*}
\hat{\mathcal{L}}_{s,\ell}(\hat{\psi}_{\ell m})=f_{\ell m},
\end{equation*} 
with 
\begin{equation*}
f_{\ell m}(r;s)=-s\cdot r\psi_{\ell m}|_{t=0}(r)-\partial_t(r\psi_{\ell m})|_{t=0}(r).
\end{equation*}

Note moreover that we can express
\begin{equation*}
\hat{\psi}_{\ell m}=R_{\ell}(s)(f_{\ell m}).
\end{equation*}

Let  $\chi,\chi'$ be cut-off functions in $r$, as in Theorem \ref{thm:bach} such that $\chi\equiv 1$ on $\supp f_{\ell m}$, then the following inverse Laplace transform can then be obtained via the \emph{Bromwich integral}:
\begin{equation*}
\chi' \cdot \psi_{\ell m}(t,r)=\lim_{S\to \infty} \frac{1}{2\pi i} \int_{s_0-i S}^{s_0+i S} (\chi'\circ R_{\ell}(s)\circ \chi)(f_{\ell m})\, ds,
\end{equation*}
for any $s_0>0$.

By the meromorphicity property of $\chi'\ \circ R_{\ell}(s)\circ \chi$, established in Theorem \ref{thm:bach}, we can deform the contour of integration in the Bromwich integral so that it enters the complex left half-plane (while avoiding $\R_{\leq 0}$), see Figure \ref{fig:contour} below. 

 \begin{figure}[H]
	\begin{center}
\includegraphics[scale=0.5]{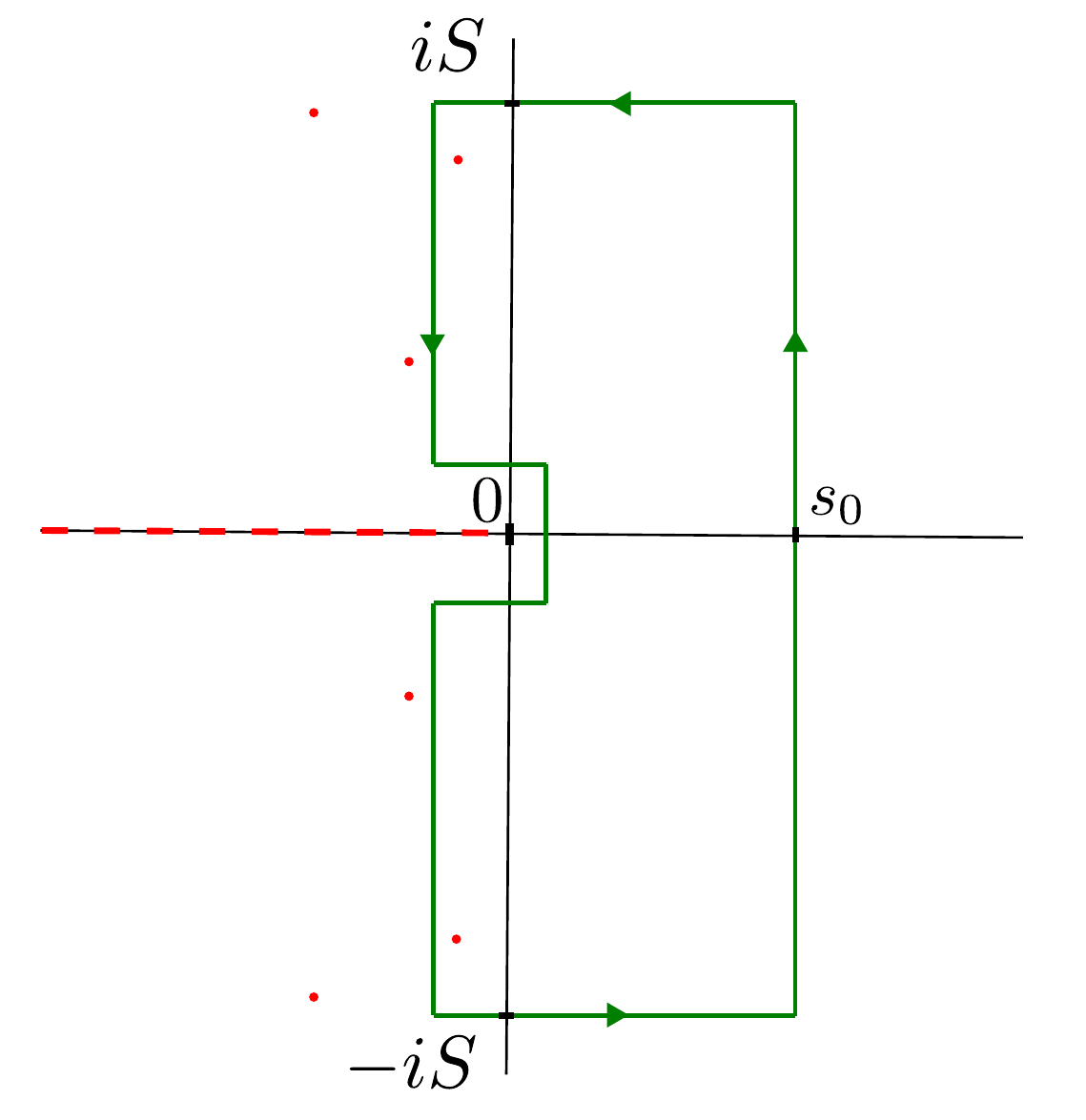}
\end{center}
\vspace{-0.2cm}
\caption{An example of a possible integration contour deformation. The dots indicate possible poles of the resolvent operator.}
	\label{fig:contour}
\end{figure}

Poles of $R_{\ell}(s)$ in the left half-plane, which correspond to scattering resonances, will contribute to the contour integral via the Residue Theorem as exponentially decaying and oscillating terms, proportional to resonant states, with coefficients determined by $f_{\ell m}$.

If one could determine the remaining contributions to the deformed contour integral, one could \emph{in theory} reconstruct the solution $\psi_{\ell m}$ to determine the dominant contributions of various parts of the contour integral at sufficiently late times. See for example the heuristics in Schwarzschild spacetimes in \cite{leaver85b} where the resonant state contribution is compared to a polynomial tail contribution, which dominates the very late-time behaviour \cite{paper2}.

The first observation of the appearance of exponentially damped and oscillating functions in the time evolution of solutions to \eqref{eq:waveequationvac} was made in a numerical setting in \cite{vish70} and actually preceded the time-independent definitions of quasinormal modes described in Section \ref{sec:tradapproach}.

\subsection{Open problems in the traditional approach}
\label{intro:points}
In order to make use of a contour deformation for the inverse Laplace transform, as in Section \ref{intro:laplacetransf}, for determining the role of quasinormal modes in the temporal behaviour of solutions to \eqref{eq:waveequationvac} arising from suitably regular, localized and generic initial data, additional information is needed with regards to the distribution of scattering resonances and further properties of the resolvent operator $R_{\ell}(s)$.

\textbf{In this paper, we will address the following points in the setting of extremal Reissner--Nordstr\"om spacetimes}:
\begin{enumerate}[(I)]
\item The resolvent operator ${R}_{\ell}(s)$ is only defined for fixed $\ell$. What happens when summing over \emph{all} spherical harmonic modes? Is it possible to meromophically continue $R(s)=\sum_{\ell \in \N_0} R_{\ell}(s)$?
\item What does the distribution of scattering resonances look like near $s=0$?
\item The $L^2$-norm of resonant states along $\{t=const.\}$ slices is infinite, due to their exponential growth in $|r_*|$ as $|r_*|\to \infty$ (see Section \ref{sec:tradapproach}).
\item It is tempting to view scattering resonances as eigenfunctions of the operator $\hat{\mathcal{L}}_{s,\ell}+s^2$, with eigenvalues $s^2$, in analogy with normal modes. However, in contrast with normal modes it is not clear what the corresponding Hilbert spaces should be (see (III)).
\item By Section \ref{intro:laplacetransf}, one has to restrict to initial data that are supported \emph{away} from the horizon and infinity in order to reconstruct $\psi_{\ell m}$ using the meromorphic continuation of $R_{\ell}(s)$. There is no physical motivation for solely considering initial data supported away from the horizon.
\item Since $R_{\ell}(s)$ has to be multiplied with cut-off functions in order for a meromorphic continuation to be well-defined, the contour deformation of the inverse Laplace transform can not be used directly to provide information about the behaviour in time at the horizon and null infinity. 
\end{enumerate}

In the companion paper \cite{gajwar19b}, we moreover address:
\begin{enumerate}[(I)]
\item[(VII)] What is the relation between Leaver QNFs and scattering resonances?
\end{enumerate}
In the setting of Schr\"odinger operators $\mathcal{L}_{s,\ell}$ as in \eqref{eq:mainode} with more general $V_{\ell}$, we address also in \cite{gajwar19b}:
\begin{enumerate}[(I)]
\item[(VIII)] It is possible to define quasinormal modes when the potential $V_{\ell}$ in \eqref{eq:mainode} behaves to leading-order like $\frac{1}{r_*^2}$, but is \underline{non-analytic} in the variable $\frac{1}{r_*}$?
\end{enumerate}
See also the recent work \cite{galzwo20} for progress towards addressing (VIII) using complex scaling methods.

\subsection{Comparison with the cosmological setting}
\label{intro:cosmo}
A key difficulty in determining the regularity class of homogeneous solutions to $\hat{\mathcal{L}}_{s,\ell}(\hat{\psi}_{\ell m})=0$, that is required to be able impose the desired outgoing boundary condition, is the slow $r_*^{-2}$ fall-off of $V_{\ell}$ when $r_*\to \infty$; see Section \ref{sec:tradapproach}. A similar difficulty is additionally present in the extremal case ($|e|=M$) when $r_*\to -\infty$, in contrast with the sub-extremal ($|e|<M$) case where $V_{\ell}\sim e^{\kappa_+ r_*}$ as $r_* \to -\infty$, with $\kappa_+$ the surface gravity of the event horizon (see Section \ref{sec:metric2}). This exponential fall-off is intimately related to the presence of a red-shift effect along the event horizon at $r=r_+$, see \cite{redshift}. 

In order to avoid dealing with the difficulty present at $r_*=\infty$, one can consider a different problem by replacing ``infinity'' with a \emph{cosmological horizon} along which there is an \emph{additional} red-shift effect (see for example \cite{lecturesMD}).

Indeed, we can consider modified Reissner--Nordstr\"om metrics by adding a cosmological constant term depending on an additional parameter $\Lambda>0$:
\begin{align*}
D(r)=&\:1-2Mr^{-1}+e^2r^{-2}-\frac{\Lambda}{3}r^2,\quad \textnormal{so that}\\
V_{\ell}\sim& e^{-\kappa_c r_*}\quad \textnormal{as $r_*\to \infty$},
\end{align*}
where $\kappa_c>0$ (the surface gravity of the cosmological horizon) and $r\in (r_+,r_c)$ for some $r_c>0$ with $D(r_+)=0$ and $D(r_c)=0$. The corresponding spacetimes are called Reissner--Nordstr\"om--de Sitter spacetimes and they are solutions to the electrovacuum Einstein--Maxwell equations with positive cosmological constant, see for example \cite{haw}. See Section \ref{sec:spacetimes} for more details.

Geometrically, the $\Lambda$-modification amounts to replacing null infinity with a Killing horizon at finite radius $r=r_c$ and as a result, when $\kappa_+,\kappa_c>0$, there is a red-shift effect along \emph{both} the event and the cosmological horizon. From an ODE perspective, the homogeneous ODE $\hat{\mathcal{L}}_{s,\ell}(\hat{\psi}_{\ell m})=0$ has an irregular singular point at $r_*=\infty$ and a singular point at $r_*=-\infty$ that is either regular ($\kappa_+>0$) or irregular ($\kappa_+=0$) in the $\Lambda=0$ case, whereas in the $\kappa_+,\kappa_c>0$ ($\Lambda>0$) case \emph{both} singular points are regular singular points.

In the $\kappa_+,\kappa_c>0$ setting it is possible to define scattering resonances as poles of the resolvent operator for the (massive) wave equation as in Theorem \ref{thm:bach}, without requiring a decomposition into spherical harmonics, addressing (I) of Section \ref{intro:points} in the $\Lambda>0$ setting; see \cite{sabaretto, bonyh, vasydesitter, semyon1}. See also the very general set-up developed by Vasy for defining scattering resonances in the $\Lambda>0$ setting  \cite{vasy1}, where the red-shift effect is indirectly also exploited. The role of scattering resonances in the $\Lambda>0$ setting on the time-evolution of solutions to the wave equation was moreover investigated in \cite{dyatlov2015asymptotics}.

The above works use microlocal methods that have previously been applied to study scattering resonances in a variety of settings (see \cite{zwordyabook} for a detailed overview of the corresponding literature) and they are closely related to the study of resonances on asymptotically hyperbolic Riemannian manifolds \cite{mazzeo1987}. The above results address moreover (II), (V), (VI) in the $\Lambda>0$ setting. Let us mention finally that quasinormal modes in Kerr--de Sitter spacetimes play a fundamental role in the proof of the nonlinear stability of Kerr--de Sitter in \cite{HV2016}.

A different perspective on QNMs was taken in \cite{warn15}, and when applied to the setting of Reissner--Nordstr\"om--de Sitter, it results in the following theorem that addresses all the points (I)---(VI) above in the $\Lambda>0$ setting:
\begin{thmx}[\cite{warn15}]
\label{thm:ads}
Let $k\in \N_0$ and $\Sigma$ be a hypersurface in sub-extremal Reissner--Nordstr\"om--de Sitter that intersects both horizons to the future of the bifurcation sphere. Then the \emph{infinitesimal generator of time translations}\footnote{If one denotes with $\mathcal{S}(\tau): {H}^{k+1}(\Sigma)\times H^{k}(\Sigma)\to {H}^{k+1}(\Sigma)\times H^{k}(\Sigma)$ the semigroup of time translations along a timelike Killing vector field that map initial data to the corresponding solution to \eqref{eq:waveequation} and its time derivative in Reissner--Nordstr\"om--de Sitter along a time slice at time $\tau$, then we can write $S(\tau)=e^{\tau \mathcal{A}}$, with $\mathcal{A}$ a densely defined operator. See Section \ref{sec:propgevreyphys} for more details.} $\mathcal{A}: {H}^{k+1}(\Sigma)\times H^{k}(\Sigma)\supseteq \mathcal{D}(\mathcal{A})\to {H}^{k+1}(\Sigma)\times H^{k}(\Sigma)$ is a well-defined (unbounded) closed, linear operator and:
\begin{enumerate}[\rm (i)]
\item The operator $\mathcal{A}$ has pure point spectrum $\Lambda_{QNF}^k$  in 
\begin{equation*}
\left\{ s\in \C\,|\, \re s>-\left(\frac{1}{2}+k\right)\min\{\kappa_+,\kappa_c\}\right\},
\end{equation*}
with $\bigcup_{k\in \N_0}\Lambda_{QNF}^k\subset \C$ a discrete subset, and the corresponding eigenfunctions are smooth.
\item For every scattering resonance there exists a sufficiently large $k$ such that the scattering resonance is an element of $\Lambda_{QNF}^k$ and there exists a subspace $\mathbf{H}^k_{\rm res}(\Sigma)< {H}^{k+1}(\Sigma)\times H^{k}(\Sigma)$ such that the union over $k$ of the set of eigenvalues of the restrictions $\mathcal{A}|_{\mathbf{H}^k_{\rm res}(\Sigma)}$ coincides \emph{precisely} with the total set of scattering resonances.
\end{enumerate}
\end{thmx}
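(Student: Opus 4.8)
The plan is to realize $\mathcal{A}$ as the infinitesimal generator of the time-translation semigroup attached to a foliation of Reissner--Nordstr\"om--de Sitter by hypersurfaces of the type of $\Sigma$, and then to extract its spectrum from energy estimates, semigroup theory and an analytic-Fredholm argument governed by the radial equation at the two horizons. \textbf{Step 1: first-order reduction and generation of the semigroup.} First I would recast $\square_g\psi=0$ as a first-order-in-time system for the data $u=(\psi|_\Sigma,\,T\psi|_\Sigma)$, so that transport along the Killing flow reads $\partial_\tau u=\mathcal{A}u$, where the spatial part of $\mathcal{A}$ is a second-order operator, elliptic in the interior of $\Sigma$ and degenerating at the event and cosmological horizons. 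The analytic input is a hierarchy of energy estimates: a degenerate estimate at the level of $H^1(\Sigma)\times L^2(\Sigma)$ obtained with $f$-twisted derivatives (to cancel the dangerous zeroth-order terms) and Hardy inequalities near the horizons, then commutation with $T$, with the angular momentum operators, and with the red-shift vector fields transverse to the two horizons; after $k$ red-shift commutations one controls $H^{k+1}(\Sigma)\times H^k(\Sigma)$ with at most exponential growth of the flow. By Hille--Yosida, $\mathcal{A}$ is a closed, densely defined operator on $H^{k+1}(\Sigma)\times H^k(\Sigma)$ generating a $C_0$-semigroup, so $s\in\rho(\mathcal{A})$ for $\re s$ large.

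\textbf{Step 2: discreteness of the spectrum and the threshold $\omega_k$.} Since in Reissner--Nordstr\"om--de Sitter the region between the horizons sits at finite radius, the slice $\Sigma$ --- taken to reach, or to extend slightly past, both horizons, where the Killing flow is outgoing and no boundary condition is needed --- is essentially a compact manifold, so by Rellich the resolvent $(\mathcal{A}-s)^{-1}$ is compact wherever it is defined; hence $\mathcal{A}$ on $H^{k+1}(\Sigma)\times H^k(\Sigma)$ has purely discrete spectrum with finite-multiplicity eigenvalues and $(\mathcal{A}-s)^{-1}$ is a meromorphic family with finite-rank poles. The region $\left\{\re s>-\left(\tfrac12+k\right)\min\{\kappa_+,\kappa_c\}\right\}$ enters because there the eigenvalue equation is rigid: near each horizon the radial equation has a regular singular point, and the branch of solutions failing to be regular across the horizon behaves like a non-integer power of the distance to the horizon in a horizon-penetrating coordinate and lies in $H^{k+1}$ precisely when $\re s\le-(\tfrac12+k)\kappa$. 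Thus for $s\in\Lambda_{QNF}^k:=\sigma(\mathcal{A})\cap\{\re s>\omega_k\}$, with $\omega_k=-(1/2+k)\min\{\kappa_+,\kappa_c\}$, every eigenfunction sits on the regular branch at both horizons. Each $\Lambda_{QNF}^k$ is discrete; the regularity bootstrap of Step 1 shows that an element of $\Lambda_{QNF}^k$ lying in $\{\re s>\omega_{k+1}\}$ lies in $\Lambda_{QNF}^{k+1}$, and since $\omega_k\to-\infty$ this nesting forces $\bigcup_k\Lambda_{QNF}^k$ to be discrete in $\C$.

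\textbf{Step 3: smoothness and the correspondence with scattering resonances.} Smoothness of eigenfunctions then follows: an eigenfunction with eigenvalue in $\{\re s>\omega_k\}$ is smooth in the interior by ellipticity, sits on the Frobenius-regular (hence analytic) branch at each horizon by Step 2, and, commuting the eigenvalue equation repeatedly with the red-shift fields, bootstraps into $H^{k'+1}\times H^{k'}$ for every $k'$, hence into $C^\infty$ across both horizons; this gives (i). For (ii): a scattering resonance $s_0$, being a pole of the meromorphically continued resolvent, is associated with a mode solution regular across the future event and cosmological horizons, so its restriction to $\Sigma$ lies, for every $k$ with $\re s_0>\omega_k$, in $H^{k+1}\times H^k$ and solves $(\mathcal{A}-s_0)u=0$; hence $s_0\in\Lambda_{QNF}^k$. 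Conversely, for an eigenpair $(s,u)$ with $\re s>\omega_k$ the function $e^{s\tau}u$ solves $\square_g(\cdot)=0$; decomposed in spherical harmonics, each component solves $\hat{\mathcal{L}}_{s,\ell}(\cdot)=0$, by Step 2 with exactly the regular/outgoing behaviour at the horizons, so $s$ is a zero of the relevant Wronskian. One then sets $\mathbf{H}^k_{\rm res}(\Sigma)\subset H^{k+1}(\Sigma)\times H^k(\Sigma)$ to be the closed, $\mathcal{A}$-invariant span of the eigendata whose associated mode solution is a genuine resonant state of the exterior problem --- discarding any eigenvalues attributable to the portions of $\Sigma$ adjoined beyond the horizons --- so that $\bigcup_k\{\text{eigenvalues of }\mathcal{A}|_{\mathbf{H}^k_{\rm res}(\Sigma)}\}$ is exactly the set of scattering resonances.

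\textbf{The main obstacle.} I expect the sharp higher-order energy estimates to be the technical heart: arranging the red-shift commutations so that the estimate controlling $(\mathcal{A}-s)^{-1}$ on $H^{k+1}\times H^k$ closes exactly in $\{\re s>-(1/2+k)\min\{\kappa_+,\kappa_c\}\}$ --- equivalently, that the non-regular indicial branch is expelled from $H^{k+1}$ precisely at that threshold --- requires a careful treatment of the boundary terms at the degenerate points and of the $f$-twisted Hardy inequalities there. The secondary delicate point is the bookkeeping in (ii) needed to pin down $\mathbf{H}^k_{\rm res}(\Sigma)$ and to rule out eigenvalues of $\mathcal{A}$ not corresponding to scattering resonances.
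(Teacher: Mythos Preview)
Your outline is broadly aligned with the approach of \cite{warn15} as adapted in the paper (see Appendix~\ref{sec:red-shift} and Theorem~\ref{thm:fredholmpositivekappa}): red-shift commutation to obtain $H^{k+1}\times H^k$ estimates with the enhanced gain of $\kappa$ per derivative, semigroup generation, Rellich--Kondrachov compactness, and the Analytic Fredholm Theorem. Two points deserve comment.

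First, the paper does not work directly with $\mathcal{A}$ but reduces, via Proposition~\ref{prop:relationAL}, to the second-order fixed-frequency operator $\hat{L}_{s,\kappa}$; invertibility for $\re s>-(\tfrac12+k)\kappa$ is then obtained by pairing the red-shift estimate (Proposition~\ref{prop:mainred-shiftest}) with an estimate for the \emph{adjoint} $\hat{L}^{\dag}_{s,\kappa}$ (Proposition~\ref{prop:adjred-shiftest}) to get surjectivity. Your route---Hille--Yosida for large $\re s$, then analytic Fredholm---also works, but you should be explicit that the compactness you invoke is the embedding $\mathcal{D}^k(\hat{L}_{s,\kappa})\subset \widetilde{H}^{k+1}_2\Subset \widetilde{H}^k$, and that this embedding (and hence the Fredholm alternative) is only available once the enhanced red-shift estimate closes, i.e.\ precisely for $\re s>-(\tfrac12+k)\kappa$; your indicial-root description is the ODE shadow of this.

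Second, your construction of $\mathbf{H}^k_{\rm res}(\Sigma)$ as the ``closed, $\mathcal{A}$-invariant span of the eigendata whose associated mode solution is a genuine resonant state'' is not the paper's construction and is somewhat circular: it presupposes one already knows which eigenvalues are scattering resonances. The paper's analogue (Section~\ref{sec:tradqnms} for $\Lambda=0$, following \cite{warn15}) instead takes $\mathbf{H}^{\rm res}$ to be the smallest closed, $\mathcal{S}(\tau)$-invariant subspace containing compactly supported initial data, and then \emph{proves} that the eigenvalues of $\mathcal{A}$ restricted to this space coincide with the poles of the cut-off resolvent $\chi'\circ R(s)\circ\chi$. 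This is a data-based definition that makes part~(ii) a genuine statement rather than a tautology; you should adopt it.
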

We will refer to $\bigcup_{k\in \N_0}\Lambda_{QNF}^k$ as \textbf{regularity quasinormal frequencies} and the corresponding eigenfunctions as \textbf{regularity quasinormal modes}.\footnote{Note that in the specific case of Reissner--Nordstr\"om--de Sitter, by applying standard ODE theory, one can appeal to regularity of mode solutions to show that for $-s\notin \kappa_+\N_0\cup \kappa_c\N_0$, the regularity QNFs agree precisely with the scattering resonances. In the case of exact de Sitter spacetimes, the set of scattering resonances is empty but the set of regularity QNFs is non-empty and coincides with $-\kappa \N_0$, with $\kappa>0$ the surface gravity of the de Sitter horizon, see \cite{warn15} and the toy model problem discussed in Section \ref{intro:toy}.}

See also the related results in \cite{vasy1}, where, in addition, high frequency estimates are obtained. We note that in \cite{warn15}, a general asymptotically anti de Sitter $\Lambda<0$ setting is considered rather than Reissner--Nordstr\"om--de Sitter, but the methods can be straightforwardly applied to the $\Lambda>0$ Reissner--Nordstr\"om--de Sitter setting to arrive at Theorem \ref{thm:ads}; see also Appendix \ref{sec:red-shift}. See \cite{gan14} for an alternative consideration of the $\Lambda<0$ setting and \cite{gusmu2} for a related construction of ``quasimodes'' in the $\Lambda<0$ setting.
 
We emphasize that Theorem \ref{thm:ads} addresses (III) and (IV) by demonstrating that the restrictions of quasinormal modes to hypersurfaces intersecting the horizons to the future of the bifurcation spheres (rather than hypersurfaces of constant $t$, which intersect the bifurcation spheres) can be viewed as eigenfunctions in suitable $L^2$-based Sobolev spaces.

Furthermore, since normal frequencies can also be interpreted as eigenvalues of $\mathcal{A}$ (on a Sobolev space), corresponding to a non-degenerate wave equation on a spatially compact domain, Theorem \ref{thm:ads} moreover allows one to view normal modes and (regularity) quasinormal modes as the \emph{same kind of objects}.

We note that the proof of Theorem \ref{thm:ads} relies fundamentally on the red-shift estimates and the enhanced red-shift\footnote{Here, ``enhanced'' refers to an increase of the strength of the red shift effect when considering higher-derivative norms.} estimates near both horizons, developed in \cite{redshift}. \textbf{Since there is no red-shift at null infinity in the $\Lambda=0$ setting, and additionally no red-shift at the event horizon in the extremal ($|e|=M$) case, the methods of Theorem \ref{thm:ads} do \underline{not} apply in the present paper!}

The above results in the cosmological setting motivate another problem that will be addressed the present paper:
\begin{enumerate}[(IX)]
\item How do cosmological quasinormal modes and frequencies behave in the limit $\Lambda\downarrow 0$?
\end{enumerate}

\subsection{Asymptotically hyperboloidal and null foliations}
\label{intro:litasymflat}

Theorem \ref{thm:ads} demonstrates how, rather than considering hypersurfaces of constant $t$ in Reissner--Nordstr\"om--de Sitter (that intersect the bifurcation spheres), it is more natural to consider hypersurfaces interesting the event and cosmological horizons to the future of the bifurcation spheres when investigating candidate function spaces containing regularity QNMs. In this section we consider analogous foliations of Reissner--Nordstr\"om by hypersurfaces that intersect the event horizon and are moreover asymptotically hyperboloidal or null (they ``intersect'' future null infinity in the conformal picture; see Figure \ref{fig:foliations}).

 \begin{figure}[H]
	\begin{center}
\includegraphics[scale=0.5]{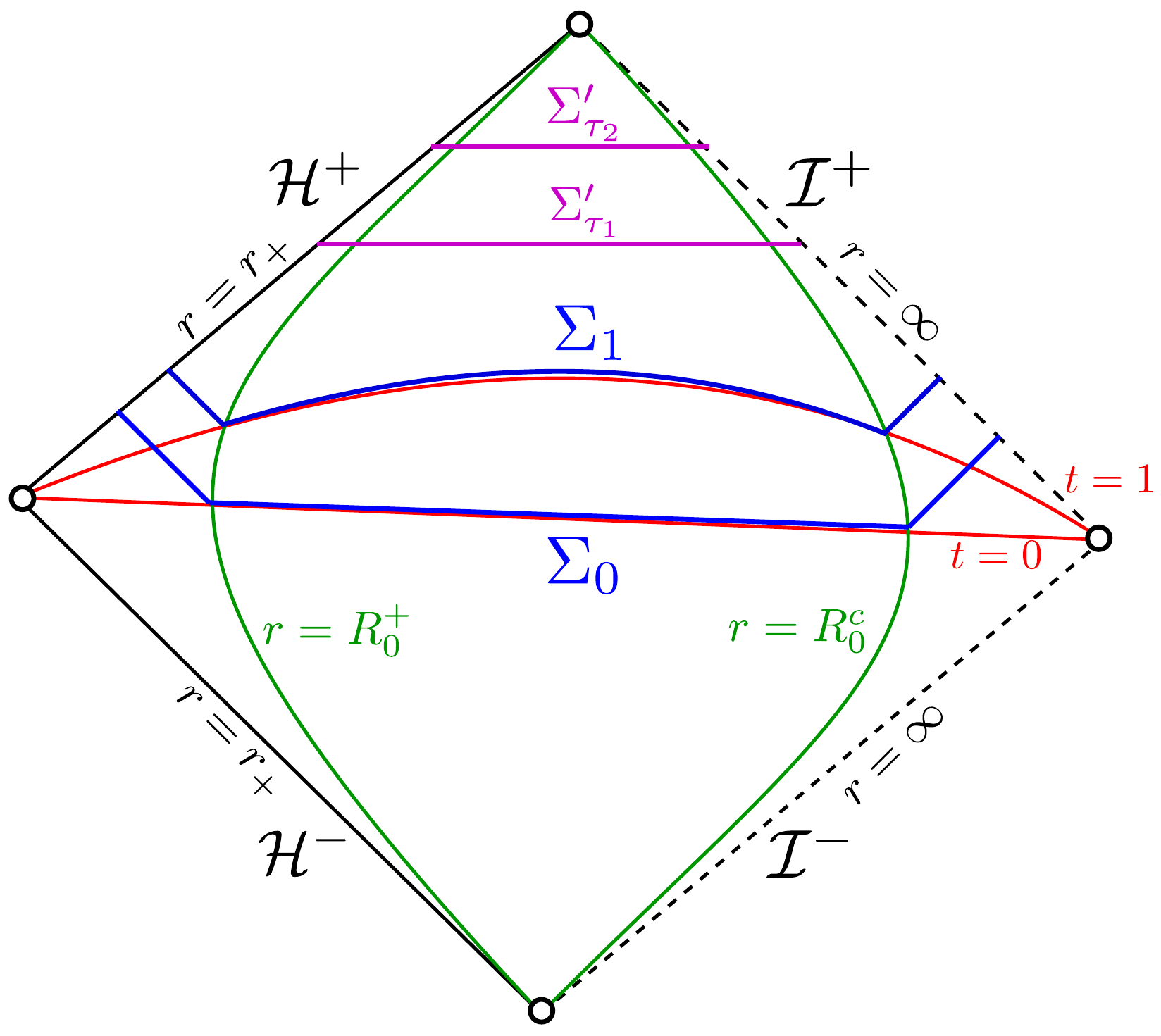}
\end{center}
\vspace{-0.2cm}
\caption{A Penrose diagrammatic depiction of examples of an asymptotically null foliation $\{\Sigma_{\tau}\}_{\tau\geq 0}$ and an asymptotically hyperboloidal foliation $\{\Sigma'_{\tau}\}_{\tau\geq 0}$, compared to a foliation of asymptotically flat $t$-level sets. }
	\label{fig:foliations}
\end{figure}

\subsubsection{Uniform decay estimates}
A foliation of stationary asymptotically flat spacetimes by asymptotically hyperboloidal or null hypersurfaces (intersecting the event horizon to the future of the bifurcation sphere), rather than the asymptotically flat foliation by $t$-level sets is well-suited for deriving uniform decay estimates for solutions to \eqref{eq:waveequationvac}. Indeed, while the energy associated to the vector field generating the time translation symmetry is \emph{conserved} along $t$-level sets, when considered along asymptotically hyperboloidal or null hypersurfaces, the energy is generically \emph{decreasing} in time due to energy radiation through the event horizon and future null infinity, see Figure \ref{fig:foliations}. We refer to \cite{newmethod, volker1, part3, moschidis1, aag18} and references therein for energy decay results along asymptotically hyperboloidal or null hypersurfaces in the asymptotically flat setting. Of particular relevance to the setting of the present paper is the \emph{Dafermos--Rodnianski $r^p$-weighted energy method} \cite{newmethod} and the extended methods in \cite{aag18} and \cite{paper4} , which relate the existence of hierarchies of $r$-weighted energy estimates along asymptotically hyperboloidal or null hypersurfaces to polynomial energy time-decay rates.

In the context of extremal Reissner--Nordstr\"om spacetimes, energy and pointwise decay estimates for solutions to \eqref{eq:waveequationvac} were first obtained by Aretakis in \cite{aretakis1, aretakis2}. Aretakis presented a novel instability phenomenon: transversal derivatives of waves along the event horizon generically do not decay and higher order transversal derivatives blow up asymptotically in time along the horizon. The mechanism for this instability is the presence of conserved quantities along the spheres foliating the event horizon.

The conserved quantities discovered by Aretakis are intimately connected to the conserved quantities along future null infinity that were discovered by Newman and Penrose \cite{np2} and are present in much more general, stationary, asymptotically flat spacetimes. Indeed, one can explicitly relate the conserved quantities at the horizon with the conserved quantities at null infinity via a conformal transformation that maps the horizon to null infinity \cite{couch}, see \cite{hm2012, bizon2012}.

In \cite{paper4} further results were obtained regarding \eqref{eq:waveequationvac} on extremal Reissner--Nordstr\"om, exploiting heavily the aforementioned connection between null infinity and the event horizon and the presence of conserved quantities along both. In particular, the precise leading-order behaviour in time was obtained for solutions to \eqref{eq:waveequationvac}, demonstrating the presence of polynomial tails, first predicted in heuristics and numerics \cite{hm2012,ori2013, sela}. See also \cite{paper2} for an illustration of the importance of conserved quantities along future null infinity in the sub-extremal setting.

\subsubsection{Smooth mode solutions}
\label{sec:smoothmode}

In \cite{aagscat}, a scattering theory is developed in extremal Reissner--Nordstr\"om involving non-degenerate energies. As an application of this theory, it is shown that for any smooth, superpolynomially decaying scattering data along the event horizon and future null infinity, there exists a corresponding unique solution to \eqref{eq:waveequationvac} that is smooth along an asymptotically null foliation and is moreover smooth with respect to the coordinate $\frac{1}{r}$ at $r=\infty$. In particular, one can construct smooth solutions in the above sense (with respect to $\frac{1}{r}$) and with the exact time dependence $e^{st}$ for \emph{any} $s\in \C$ with $\re s<0$. We refer to such solutions as \emph{mode solutions}.

If we were to consider the infinitesimal generator of time translations $\mathcal{A}$, as in Theorem \ref{thm:ads}, acting on a dense subset of a naturally $r$-weighted $L^2$-based Sobolev space of arbitrarily high order in extremal Reissner--Nordstr\"om, the point spectrum of $\mathcal{A}$ would therefore include the entire open left half-plane!

The above result illustrates that weighted Sobolev spaces are \emph{ill-suited} as choices for Hilbert spaces for which the point spectrum of $\mathcal{A}$ coincides with the set of quasinormal frequencies. Indeed, if we want to interpret \emph{all} eigenfunctions of $\mathcal{A}$ as regularity QNMs or we demand at least discreteness of the set of eigenvalues, we need to \underline{exclude} the above smooth mode solutions, but we cannot achieve this by simply restricting to a Sobolev space of suitably high regularity.

An alternative motivation for the failure of weighted Sobolev spaces as candidate Hilbert spaces, which applies also to sub-extremal Reissner--Nordstr\"om, can be found at the ODE level by considering the functions $P_{\textnormal{hor}, \pm}$ and $P_{\textnormal{inf}, \pm}$ appearing in \eqref{eq:asym1} and \eqref{eq:asym2}. Recall that \emph{both} $P_{\textnormal{inf},-}$ and $e^{2sr_*}P_{\textnormal{inf},+}$ are smooth in $1/r$ for large $r$, whereas $P_{\textnormal{hor}, +}$ is smooth in $r$ near $r=r_+$, but $e^{-2sr_*} P_{\textnormal{hor}, -}$ is \emph{not} when $|e|<M$. Hence, for \emph{any} frequency $s$ with $\re s<0$, we can restrict to homogeneous solutions with boundary behaviour $P_{\textnormal{hor}, +}$ near $r=r_+$ to obtain a smooth solution in $1/r$ near $r=\infty$. As a result, the corresponding radiation field $r \psi_{\ell m}$, with $\psi_{\ell m}$ a solution to \eqref{eq:waveequationvac}, is smooth along suitable hyperboloidal/asymptotically null hypersurfaces intersecting the horizon to the future of the bifurcation sphere.\footnote{When $|e|=M$, the mode solutions corresponding to \emph{any} homogeneous solution to the ODE are smooth at the horizon and at infinity with respect to $1/r$, which is consistent with the scattering theory viewpoint in \cite{aagscat} where exponentially decaying data along future null infinity and the event horizon that lead to smooth solutions can be chosen independently.} This implies that, in contrast with the $\Lambda\neq 0$ settings, a restriction to an arbitrarily regular Sobolev space will also not ensure discreteness of the set of eigenvalues of $\mathcal{A}$ in the sub-extremal ($\Lambda=0$) setting.

The above observations may instil doubts on the \emph{relevance} of QNMs when considering the time evolution of solutions to \eqref{eq:waveequationvac} in the $\Lambda=0$ setting, as in Section \ref{intro:laplacetransf}, arising from \emph{generic} smooth initial data along an asymptotically hyperboloidal or null slice, with $r\psi$ moreover smooth at infinity with respect to the coordinate $\frac{1}{r}$. Indeed, it is not immediately clear in this setting what singles out quasinormal mode solutions over other smooth mode solutions with arbitrary frequencies $s$. See, however, Remark \ref{rmk:QNMvssmoothmode} for a discussion on why QNMs \emph{are} relevant when considering smooth \underline{and suitably localized} initial data.

\subsection{A toy model}
\label{intro:toy}
In this section, we briefly discuss a toy model ODE which illustrates the key differences between the $\Lambda=0$ and $\Lambda>0$ settings and motivates the necessity of Gevrey regularity. We discuss this toy model at length in a companion paper \cite{gajwar19b}.

Consider the following Dirichlet problem:
\begin{align}
\label{eq:introtoy1} 
L_{s,\kappa}(u):=\frac{d}{dx}\left( (\kappa x+x^2) \frac{du}{dx}\right)+s \frac{du}{dx}=&\:f\quad \textnormal{for $x\in [0,1]$},\\\nonumber
u(1)=&\:0,
\end{align}
with $f$ a suitably regular function on $[0,1]$, $s\in \C$ and $\kappa\geq 0$. The parameter $\kappa$ plays the role of either $\kappa_c$, the surface gravity of the cosmological horizon, or $\kappa_+$, the surface gravity of the event horizon. When $\kappa=0$, the above equation models the equation for the spherical mean of $r\psi$ with $\psi$ a mode solution to \eqref{eq:waveequationvac} near infinity, with $x$ taking on the role of $\frac{1}{r}$, or near the event horizon of extremal Reissner--Nordstr\"om with $x=r-M$, where $M$ is the radius of the event horizon. The $\kappa>0$ case models the $\Lambda>0$ setting with $\psi$ a solution to the \emph{conformally invariant} Klein--Gordon equation \eqref{eq:waveequation}.\footnote{In fact, if we replace $s$ by $2s$ and $\kappa$ by $2\kappa_c$, the equation is \emph{precisely} the equation satisfied by the spherical mean of the radiation field on an exact de Sitter spacetime or the Minkowski spacetime, but with a reflecting boundary condition at $r=(1+r_c^{-1})^{-1}$ and $r=1$, respectively; see \eqref{eq:eqradfieldcos2}.}

One can easily verify that in the $\kappa>0$ case, any solution to $L_{s,\kappa}(u)=0$ can be expressed as a linear combination of the following solutions:
\begin{align*}
u_1(x)=&\: 1,\\
u_2(x)=&\: \left( \frac{x}{x+\kappa}\right)^{-\frac{s}{\kappa}}.
\end{align*}
Note that for $\re s<0$ and $s\notin -\kappa \N$,\footnote{When $s\in  -\kappa \N$, both $u_1$ and $u_2$ are smooth and $-\kappa \N$ may be thought of as regularity quasinormal frequencies which are not scattering resonances. See also the discussion in Section 6 of \cite{warn15}.} we have that $u_2\in C^k([0,1])\setminus C^{k+1}([0,1])$ for $k= \lfloor \kappa^{-1} |\re s|\rfloor$, whereas $u_1\in C^{\infty}([0,1])$. Hence, by restricting to $C^n([0,1])$, with $n\in \N$ suitably high depending on $\re s$, we can rule out the existence of homogeneous solutions to the Dirichlet problem and guarantee uniqueness of solutions to $L_{s,\kappa}(u)=f$.

In the $\kappa=0$ case, any solution to $L_{s,\kappa}(u)=0$ can be expressed as a linear combination of the following solutions:
\begin{align*}
u_1(x)=&\: 1,\\
u_2(x)=&\:e^{\frac{s}{x}}.
\end{align*}
If $\re s<0$, both solutions are smooth, so restricting to $C^n$ or $C^{\infty}$ spaces will \underline{not} guarantee uniqueness of $L_{s,\kappa}(u)=f$ in this case. Since $u_2$ fails to be analytic at $x=0$, one way of obtaining uniqueness of solutions to $L_{s,\kappa}(u)=f$ is to restrict to the space of analytic solutions. However, assuming a solution $u$ exists to \eqref{eq:introtoy1} with $f\equiv 1$, we can apply the equation \eqref{eq:introtoy1}, commuted with $\frac{d^k}{dx^k}$, $k=0,\ldots,n$ and $n\in \N_0$, to obtain
\begin{equation*}
u^{(n+1)}(0)=(-1)^n s^{-(n+1)}n!(n+1)!,
\end{equation*}
which implies that $u$ \emph{cannot} be analytic at $x=0$, so \emph{existence} fails in the analytic category. The behaviour of $u^{(n)}(0)$ in fact suggests instead the larger space of $(\sigma,2)$-Gevrey functions, with $\sigma \in \R_{>0}$, which are smooth functions such that moreover
\begin{equation*}
\sup_{x\in [0,1]} \left|\frac{d^n}{dx^n}u\right|(x)\leq C \sigma^{-n} (n!)^2
\end{equation*}
for all $n\in \N_0$ and for some constant $C>0$.\footnote{Note that a similar estimate with $(n!)^2$ replaced by $n!$ (1-Gevrey) would guarantee analyticity of $u$.}

Indeed, $u_2$ is not $(\sigma,2)$-Gevrey if $\sigma>|\re s|$ (see Lemma A.1.\ of \cite{gajwar19b}). Hence, loosely speaking, the role of the exponent $k$ in $C^k$ spaces, that plays an important role in the $\kappa>0$ setting, is taken on by $\sigma$ in the $\kappa=0$ setting. Note that in the present paper it is more convenient to work with an alternative notion of Gevrey functions, using $L^2$-norms rather than $L^{\infty}$-norms, see Section \ref{sec:enspaces}. 

\subsection{Main theorems}
\label{sec:roughmainthms}
We state below rough versions of the main theorems of the paper:

\begin{mainthm}[Rough statement]
\label{thm:rough1}
The set of eigenvalues of the infinitesimal generator of time-translations corresponding to the wave equation on extremal Reissner--Nordstr\"om,
\begin{equation*}
\mathcal{A}: \mathbf{H}\to \mathbf{H},
\end{equation*}
with $\mathbf{H}$ Hilbert spaces of initial data that are 2-Gevrey regular at infinity and at the horizon and are supported on suitable asymptotically null hypersurfaces, is \underline{discrete} when restricted to the sector\\ $\{|\textnormal{arg}(z)|<\frac{2}{3}\pi\}\subset \C$.
\end{mainthm}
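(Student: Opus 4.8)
The plan is to realise $\mathcal{A}$ as the infinitesimal generator of the time-translation semigroup associated with an asymptotically null foliation $\{\Sigma_\tau\}_{\tau\geq 0}$ of extremal Reissner--Nordstr\"om crossing the future event horizon, and to study its spectrum through the resolvent $(\mathcal{A}-s)^{-1}$ by means of the $L^2$-based Gevrey estimates. First I would fix adapted coordinates: a time function $\tau$ whose level sets are asymptotically null, together with $x=1/r$ near infinity and $x=r-M$ near the horizon. In both regimes the mode-reduced wave equation takes the structural form of the toy operator \eqref{eq:introtoy1} with $\kappa=0$, i.e.\ a Fuchsian operator with an irregular singular point at $x=0$; this reflects the absence of a red-shift at either end in the extremal case and explains why only Gevrey (and not Sobolev) spaces can yield discreteness. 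Rewriting \eqref{eq:waveequationvac} as a first-order system in $\tau$ for $(\psi|_{\Sigma_\tau},\partial_t\psi|_{\Sigma_\tau})$ produces a densely defined operator $\mathcal{A}$ on a product Hilbert space $\mathbf{H}$, where $\mathbf{H}$ is the $L^2$-weighted space encoding order-$2$ Gevrey regularity at $x=0$ at both ends (the precise norms being those of Section~\ref{sec:enspaces}); one then checks that $\mathcal{A}$ is closed.

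Next I would show that $s\mapsto(\mathcal{A}-s)^{-1}$ exists and is holomorphic on a half-plane $\{\re s>C\}$. This rests on the uniform energy boundedness for \eqref{eq:waveequationvac} on extremal Reissner--Nordstr\"om from \cite{aretakis1,Civin2014}, upgraded to the Gevrey norms via a hierarchy of $L^2$-Gevrey estimates: one commutes the wave equation with the vector fields adapted to the foliation and sums the resulting weighted estimates against Gevrey-summable coefficients. The essential new point is that these estimates are powered by the conservation laws --- of Newman--Penrose \cite{np2} at null infinity and of Aretakis at the horizon: the conserved quantity furnishes exactly the borderline weight whose iteration produces the $(n!)^2$ growth characteristic of order-$2$ Gevrey regularity, and it is this mechanism, in place of a red-shift, that renders $\mathbf{H}$ invariant under the flow and makes $\mathcal{A}$ dissipative modulo a shift.

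The core of the argument is the meromorphic continuation of $(\mathcal{A}-s)^{-1}$ from $\{\re s>C\}$ into the sector $\{|\arg s|<\tfrac{2}{3}\pi\}$, with discrete poles. Here I would decompose into spherical harmonics, express $(\mathcal{A}-s)^{-1}$ formally as a sum over $\ell$ of the fixed-frequency resolvents built from $\hat{\mathcal{L}}_{s,\ell}$, and prove: (i) for each $\ell$, the Gevrey-space resolvent $R_\ell(s)$ continues holomorphically into the sector up to a discrete pole set, by constructing the homogeneous solutions that are outgoing at infinity and at the horizon through a Gevrey/Borel--Laplace resummation --- the opening $\tfrac{4}{3}\pi$ of the admissible sector being dictated precisely by the order-$2$ Gevrey summability of the formal solution associated with $e^{s r_*}$; and (ii) the series $\sum_\ell R_\ell(s)$ converges in operator norm on $\mathbf{H}$, locally uniformly in $s$ off the poles, with the poles of the summands not accumulating. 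Step (ii) is where the Gevrey structure is indispensable: since $V_\ell$ grows like $\ell(\ell+1)$, the resolvent bounds for $R_\ell(s)$ must improve with $\ell$, and the weighted Gevrey norms supply the required $\ell^{-1}$-semiclassical gain with constants that are exponentially small in $\ell$ inside the sector, so that $\sum_\ell R_\ell(s)$ converges and no accumulation of resonances as $\ell\to\infty$ is possible --- precisely the obstruction highlighted after Theorem~\ref{thm:bach}.

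Discreteness of the spectrum of $\mathcal{A}$ in the sector then follows from analytic Fredholm theory: the Gevrey weights make the embedding of the graph-norm domain of $\mathcal{A}$ into $\mathbf{H}$ compact, so $\mathcal{A}-s$ is Fredholm of index $0$ on the connected sector; being invertible for $\re s>C$, its inverse is a meromorphic operator-valued family there, whence the spectrum in the sector is discrete. I expect the main obstacle to be step (ii) --- the uniform-in-$\ell$ Gevrey resolvent estimates together with the matching of the horizon and infinity Gevrey constructions across the interior region --- since this is exactly where the classical fixed-$\ell$ theory fails, and the whole point of the $L^2$-Gevrey estimates is that, unlike weighted Sobolev estimates, they are compatible with, and summable over, the angular decomposition.
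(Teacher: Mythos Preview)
Your outline captures the overall architecture (semigroup, fixed-$\ell$ resolvent, analytic Fredholm, summation over $\ell$) but diverges from the paper's actual mechanism at the two points you yourself flag as the crux.

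\textbf{How the paper proceeds.} The paper does \emph{not} construct $R_\ell(s)$ by Borel--Laplace resummation of outgoing solutions. Instead it proves a direct a priori estimate of the form $\|\hat\psi\|_{\sigma,2,\rho_0}\leq C\|\hat L_{s,\ell}\hat\psi\|_{\sigma,\rho_0}$ by $L^2$-Gevrey multiplier methods: one squares the commuted equation, multiplies by exponential weights, and sums over the order of differentiation $n$ with Gevrey weights $\sigma^{2n}/(n!(n+1)!)^2$. The conservation law enters precisely as the vanishing of the coefficient $(n(n+1)-\ell(\ell+1))$ at $n=\ell$, which allows the sum to close from $n=\ell$ upward. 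Existence is obtained not directly at $\kappa=0$ but by first working on Reissner--Nordstr\"om--de Sitter with $\kappa_+,\kappa_c>0$ (where the enhanced red-shift absorbs the top-order term for $N_\infty$ large enough), proving the Gevrey estimate \emph{uniformly in $\kappa$}, and then passing to the limit $\kappa\downarrow 0$. This regularisation step is essential and is absent from your plan.

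\textbf{Two specific concerns.} First, your claim that the sector opening $\tfrac{4}{3}\pi$ is ``dictated precisely by the order-$2$ Gevrey summability'' is not how the paper arrives at it. The restriction $|\re s|<\tfrac12|s|$ (equivalently $|\arg s|<\tfrac{2}{3}\pi$) emerges from an explicit optimisation over four auxiliary parameters $\alpha,\beta,\mu,\nu$ in the multiplier estimate (Lemma~\ref{lm:allowedvaluess}); there is no appeal to abstract Gevrey/Stokes theory, and the paper explicitly remarks that the sector is not expected to be optimal. Second, your step (ii) asserts that the Gevrey norms give resolvent bounds with constants ``exponentially small in $\ell$''. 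In the paper the opposite happens: the Gevrey estimate near the horizon and infinity comes with a constant $A^\ell$ that \emph{grows} exponentially (Corollary~\ref{cor:maingevreysumbound}), and this is compensated by a separate \emph{degenerate elliptic estimate} in the bulk (Proposition~\ref{prop:degellipticest}) whose coupling constant $K_0^{-\ell}$ decays exponentially with $K_0$ arbitrarily large. It is the cancellation of these two exponential factors that makes the coupling close for $\ell$ large; without the degenerate elliptic ingredient, your summation argument has a genuine gap.
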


\begin{remark}
We do not expect the sector $\left\{|\textnormal{arg}(z)|<\frac{2}{3}\pi\right\}$ to be optimal. In the context of a toy model equation, we in fact obtain a slightly larger sector in \cite{gajwar19b}.
\end{remark}

\begin{mainthm}[Rough statement]
\label{thm:rough2}
There exists subspaces $\mathbf{H}_{\rm res}< \mathbf{H}$ so that the union of eigenvalues of operators of the form
\begin{equation*}
\mathcal{A}: \mathbf{H}_{\rm res}\to \mathbf{H}_{\rm res},
\end{equation*}
restricted to $\left\{|\textnormal{arg}(z)|<\frac{2}{3}\pi\right\}$ agree precisely with poles of the meromorphic continuation of the resolvent operator corresponding to the constant $t$ foliation in this sector.
\end{mainthm}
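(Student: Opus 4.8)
The plan is to identify the eigenvalue problem for $\mathcal{A}$ on the Gevrey Hilbert spaces with the ODE mode problem of Section \ref{sec:tradapproach}, and then to match the $2$-Gevrey boundary conditions at $r=r_+$ and $\frac{1}{r}=0$ with the conditions that define resonant states in Approach 2. Write $\mathbf{H}=\mathbf{H}_{\sigma}$ for the Gevrey space with parameter $\sigma$; by Theorem \ref{thm:rough1} (precise version in Section \ref{sec:mainthms}) the operator $\mathcal{A}\colon\mathbf{H}_{\sigma}\to\mathbf{H}_{\sigma}$ has discrete point spectrum in the region $\mathcal{R}_{\sigma}\subset\{|\arg z|<\frac{2}{3}\pi\}$ on which the Gevrey estimates are available, and the regularity quasinormal frequencies are $\bigcup_{\sigma}\mathrm{spec}(\mathcal{A})\cap\mathcal{R}_{\sigma}$. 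First I would record, using well-posedness of \eqref{eq:waveequationvac} along the asymptotically null foliation together with the realisation of $\mathcal{A}$ as the generator of time translations, that $s\in\mathrm{spec}(\mathcal{A})$ with eigenfunction $(\psi_0,\psi_1)\in\mathbf{H}_{\sigma}$ is equivalent to the existence of a nonzero mode solution $\psi=e^{st}\hat\psi$ with $\hat\psi$ restricting to $(\psi_0,\psi_1)$ on the slice and lying in the spatial Gevrey space; decomposing into spherical harmonics, each coefficient then solves $\hat{\mathcal{L}}_{s,\ell}(\hat{\psi}_{\ell m})=0$ with $r\hat{\psi}_{\ell m}$ that is $2$-Gevrey at $r=r_+$ and at $\frac{1}{r}=0$.

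The technical heart is then to use the $L^2$-based Gevrey estimates — which are tied to the Newman--Penrose conserved quantity at null infinity and the Aretakis conserved quantity at the horizon — to determine which of the two-dimensional solution space of $\hat{\mathcal{L}}_{s,\ell}(\hat{\psi}_{\ell m})=0$ meets the $2$-Gevrey condition at each end. On the one hand, these estimates show that the outgoing solution at infinity and the ingoing solution at the horizon — exactly the pair of solutions whose Wronskian vanishes precisely at scattering resonances, cf.\ Approach 2 and Theorem \ref{thm:bach} — lie in the Gevrey class throughout $\mathcal{R}_{\sigma}$. On the other hand, the competing ``smooth but non-Gevrey'' mode solutions discussed in Section \ref{sec:smoothmode} (those responsible for the failure of weighted Sobolev spaces, such as $e^{-2sr_*}P_{\textnormal{hor},-}$ in the extremal case) fail to be $2$-Gevrey for $s\in\mathcal{R}_{\sigma}$, by exactly the mechanism already visible for $u_2=e^{s/x}$ in the toy model of Section \ref{intro:toy}: non-membership in the $(\sigma,2)$-Gevrey class once $\sigma$ exceeds a fixed multiple of $|\re s|$. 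Consequently, for each fixed $\ell$ the $2$-Gevrey condition selects a one-dimensional space of solutions at each end, a nonzero solution that is Gevrey at both ends exists if and only if these two spaces coincide, and this occurs precisely when $s$ is a scattering resonance at that $\ell$; this matches fixed-$\ell$ resonances with the regularity QNFs supported on a single spherical harmonic.

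The step requiring genuine work is the passage from fixed $\ell$ to the full operator $\mathcal{A}$, which does not see the spherical harmonic decomposition. One inclusion is easy: given a fixed-$\ell$ scattering resonance in the sector, the single-harmonic mode solution just constructed lies in $\mathbf{H}_{\sigma}$ by the Gevrey estimates, so every fixed-$\ell$ resonance in the sector is an eigenvalue of $\mathcal{A}$. For the converse, an eigenfunction of $\mathcal{A}$ may \emph{a priori} carry infinitely many nonzero harmonics; applying the angular Laplacian and its spectral projections, which act boundedly on $\mathbf{H}_{\sigma}$, reduces each nonzero component to the fixed-$\ell$ analysis above, forcing the eigenvalue to be a resonance for some $\ell$. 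The remaining and most delicate point is that this does not create eigenvalues accumulating as $\ell\to\infty$, equivalently that $\bigcup_{\ell}\{\text{poles of }R_{\ell}\}$ has no accumulation point in the sector and coincides with the poles of the meromorphically continued $R(s)=\sum_{\ell}R_{\ell}(s)$: this is exactly the content of Theorem \ref{thm:polesresolvent}, which I would invoke here and which rests on uniformity in $\ell$ of the Gevrey estimates. I expect this $\ell$-summation to be the \emph{main obstacle}, since it is precisely the difficulty flagged in point (I) of Section \ref{intro:points} and in the discussion following Theorem \ref{thm:bach}, and the reason this construction is, unlike earlier work, not restricted to fixed angular frequencies.

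Finally, to extract $\mathbf{H}_{\rm res}$ I would discard the ``spurious'' regularity QNFs — those eigenvalues of $\mathcal{A}$ on $\mathbf{H}_{\sigma}$ that do not arise from vanishing of the outgoing/ingoing Wronskian (for instance frequencies visible only through a conserved quantity, or isolated points analogous to the $-\kappa_+\N_0$ of the cosmological picture) — by setting $\mathbf{H}_{\rm res}\subset\mathbf{H}_{\sigma}$ to be the closed span of the generalised eigenspaces of $\mathcal{A}$ attached to the resonant eigenvalues, equivalently by imposing the sharper boundary condition matching the resonant-state asymptotics exactly. Since $\mathcal{A}$ commutes with the time flow it preserves $\mathbf{H}_{\rm res}$, the spectrum of $\mathcal{A}|_{\mathbf{H}_{\rm res}}$ is by construction precisely the set of scattering resonances of the given $\sigma$ (and slice) lying in $\{|\arg z|<\frac{2}{3}\pi\}$, and taking the union over $\sigma$ and over admissible slices recovers the full set of poles of the meromorphic continuation of the constant-$t$ resolvent in this sector, as claimed. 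As in Theorem \ref{thm:ads}(ii) I would also check that multiplicities match once generalised eigenspaces are used, and point to Remark \ref{rmk:relationapproach12} for the separate comparison with Leaver quasinormal frequencies carried out in \cite{gajwar19b}.
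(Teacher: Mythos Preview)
Your proposal has a genuine gap in two related places.

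First, your claim that the $2$-Gevrey condition at each end ``selects a one-dimensional space of solutions'' and that this space is \emph{exactly} the ingoing (resp.\ outgoing) branch is not established, and the paper does not claim it. What the Gevrey estimates give (Theorem~\ref{thm:rough1}) is discreteness of the eigenvalue set $\Lambda_{QNF}$; the paper explicitly allows the inclusion $\Lambda_{QNF}^{\rm res}\subseteq\Lambda_{QNF}$ to be strict. If your selection claim were true as stated, regularity QNFs and scattering resonances would coincide and no subspace $\mathbf{H}_{\rm res}$ would be needed. The toy-model fact that $e^{s/x}$ fails to be $(\sigma,2)$-Gevrey only excludes the \emph{exponentially growing} branch; it does not by itself identify the Gevrey branch with the analytic-continuation-from-$\re s>0$ branch that defines scattering resonances.

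Second, and more seriously, your construction of $\mathbf{H}_{\rm res}$ as ``the closed span of the generalised eigenspaces attached to the resonant eigenvalues'' is circular: it presupposes you already know which eigenvalues are resonances, and then the conclusion that $\mathrm{spec}(\mathcal{A}|_{\mathbf{H}_{\rm res}})$ equals the resonance set is tautological. The paper's definition is entirely different and intrinsic: $\mathbf{H}_{\sigma,\rho_0}^{\rm res}$ is the smallest closed, $\mathcal{S}(\tau)$-invariant subspace of $\mathbf{H}_{\sigma,\rho_0}$ containing $(C_c^\infty(\Sigma)\times C^\infty(S))\cap\mathbf{H}_{\sigma,\rho_0}$. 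The proof (Proposition~\ref{prop:relwithtradition}) then goes through the conjugation $Dr\,\hat{\mathcal{L}}_{s,\kappa}=Q_s^{-1}\circ\hat{L}_{s,\kappa}\circ Q_s$ of Lemma~\ref{lm:boundresposres}, which relates the constant-$t$ resolvent to the Gevrey-space resolvent \emph{after} composition with cut-offs. The cut-offs are the whole point: a frequency $s$ fails to be a pole of $\chi'\circ R_\kappa(s)\circ\chi$ precisely when the finite-rank singular parts $B_k$ in the Laurent expansion of $\hat{L}^{-1}_{s,\kappa,\ell}$ annihilate \emph{compactly supported} inhomogeneities, which forces $\textnormal{Ran}(\mathcal{A}_\ell^{\rm res}-s)\supset X\cap V_\ell$; closedness of the range plus $\mathcal{S}(\tau)$-invariance then upgrade this to $\textnormal{Ran}(\mathcal{A}_\ell^{\rm res}-s)=\mathbf{H}_{\sigma,\rho_0}^{\rm res}\cap V_\ell$, whence $s\notin\Lambda_{QNF}^{\sigma,\ell,\rm res}$. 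The compact-support characterisation of $\mathbf{H}^{\rm res}$ is what makes this work; your eigenspace-span definition gives no handle on either direction of the equivalence.
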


We refer to the eigenvalues of $\mathcal{A}: \mathbf{H}\to \mathbf{H}$ in the sector $\left\{|\textnormal{arg}(z)|<\frac{2}{3}\pi\right\}$ as \textbf{regularity quasinormal frequencies} and the eigenvalues of $\mathcal{A}: \mathbf{H}_{\rm res}\to \mathbf{H}_{\rm res}$ as \textbf{scattering resonances}, in analogy with the nomenclature introduced in the cosmological setting in Section \ref{intro:cosmo}, and the corresponding eigenfunctions as regularity quasinormal modes and resonant states, respectively. See also the more precise statements in Definitions \ref{def:regqnf} and \ref{def:outqnf}.

\begin{mainthm}[Rough statement]
\label{thm:rough3}
For each regularity QNF $s$ and corresponding QNM $\hat{\psi}_s$ in extremal Reissner--Nordstr\"om, there exists a sequence of sub-extremal Reissner--Nordstr\"om--de Sitter spacetimes approaching extremal Reissner--Nordstr\"om with a corresponding sequence of regularity QNFs approaching $s$ and regularity QNMs approaching $\hat{\psi}_s$ with respect to the norm on the Hilbert space $\mathbf{H}$.
\end{mainthm}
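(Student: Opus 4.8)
The plan is to set up all the relevant operators---the infinitesimal generators $\mathcal{A}_\Lambda$ for Reissner--Nordstr\"om--de Sitter with small $\Lambda>0$ and near-extremal charge, and the generator $\mathcal{A}_0$ for extremal Reissner--Nordstr\"om---on a \emph{single} fixed Hilbert space $\mathbf{H}$ (or on a family of Hilbert spaces identified with $\mathbf{H}$ via a $\Lambda$-dependent diffeomorphism of the slice and a rescaling of the radial coordinate, so that the asymptotically null slice $\Sigma_\tau$ in the $\Lambda=0$ geometry is matched to a slice intersecting the cosmological horizon when $\Lambda>0$). The key structural input is the Gevrey estimate machinery developed earlier in the paper: the resolvents $(\mathcal{A}_\Lambda-s)^{-1}$ exist as bounded operators on $\mathbf{H}$ in the sector $\{|\arg(z)|<\tfrac{2}{3}\pi\}$ away from the discrete spectrum, with bounds that are \emph{uniform in $\Lambda$ for $\Lambda$ small}, precisely because the $L^2$-based Gevrey estimates are driven by the conservation laws at the horizon and at null infinity/the cosmological horizon, and these degenerate \emph{continuously} as $\Lambda\downarrow 0$ and $|e|\uparrow M$ (this is exactly the red-shift-independent mechanism emphasized in the introduction). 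First I would prove that $\mathcal{A}_\Lambda\to\mathcal{A}_0$ in the norm-resolvent sense on $\mathbf{H}$: fix $s_0$ with $\re s_0$ large positive (so that $(\mathcal{A}_\Lambda-s_0)^{-1}$ is given by the explicit Laplace-transform/variation-of-parameters formula uniformly in $\Lambda$), write $(\mathcal{A}_\Lambda-s_0)^{-1}-(\mathcal{A}_0-s_0)^{-1}$ as an integral of the difference of the coefficient functions of the two ODE operators against the respective Green's kernels, and estimate it in $\mathbf{H}$ using the Gevrey norms; the coefficient difference is $O(\Lambda)$ plus $O(|e|-M)$ pointwise with the right weights, which closes the estimate.

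Granting norm-resolvent convergence in the sector, the convergence of eigenvalues and eigenfunctions is then a standard perturbation-theoretic argument. Fix a regularity QNF $s_*$ of $\mathcal{A}_0$ with $|\arg(s_*)|<\tfrac{2}{3}\pi$; by Theorem \ref{thm:rough1} it is an isolated point of $\mathrm{Spec}(\mathcal{A}_0)$, so choose a small positively oriented circle $\gamma$ around $s_*$ containing no other spectrum and with $\gamma$ inside the sector. Form the Riesz spectral projections
\begin{equation*}
\Pi_\Lambda=\frac{1}{2\pi i}\oint_\gamma (s-\mathcal{A}_\Lambda)^{-1}\,ds,\qquad
\Pi_0=\frac{1}{2\pi i}\oint_\gamma (s-\mathcal{A}_0)^{-1}\,ds.
\end{equation*}
Norm-resolvent convergence, uniform along the compact contour $\gamma$, gives $\|\Pi_\Lambda-\Pi_0\|_{\mathbf{H}\to\mathbf{H}}\to 0$ as $\Lambda\downarrow 0$ (with $|e|\to M$ along the chosen sequence), hence for $\Lambda$ small $\Pi_\Lambda$ has the same finite rank as $\Pi_0$ and $\mathcal{A}_\Lambda$ has eigenvalue(s) inside $\gamma$; shrinking $\gamma$ shows these eigenvalues converge to $s_*$. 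For the eigenfunctions, pick the QNM $\hat\psi_{s_*}$ with $\Pi_0\hat\psi_{s_*}=\hat\psi_{s_*}$ (in the simple-eigenvalue case; otherwise pass to the generalized eigenspace and use $\Pi_\Lambda\hat\psi_{s_*}\to\hat\psi_{s_*}$ to produce a convergent sequence of near-eigenfunctions and then project onto the true eigenspace of $\mathcal{A}_\Lambda$), normalize, and conclude $\hat\psi_{s_*}^{(\Lambda)}\to\hat\psi_{s_*}$ in $\mathbf{H}$. One must also check that the $\Lambda$-dependent identification of Hilbert spaces is itself continuous, i.e.\ that the pullback maps $\mathbf{H}_\Lambda\to\mathbf{H}$ converge to the identity strongly; this follows from smooth dependence of the RNdS metric coefficients on $(\Lambda,e)$ together with the explicit form of the Gevrey weights.

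The main obstacle I anticipate is establishing the \emph{uniformity in $\Lambda$} of the Gevrey resolvent estimates in a full neighborhood of the sector boundary, rather than just on the imaginary axis. The delicate point is that in the RNdS setting the natural Gevrey/red-shift estimates come with constants involving $\min\{\kappa_+,\kappa_c\}$, and while $\kappa_c\to 0$ as $\Lambda\downarrow 0$ (matching the loss of red-shift at null infinity) and $\kappa_+\to 0$ as $|e|\to M$, one must verify that the \emph{Gevrey} estimates---which in the $\Lambda=0$ paper are shown to survive the degeneration of the red-shift precisely because they rest on the exact conservation laws rather than on $\kappa_\pm>0$---have RNdS analogues whose constants are controlled independently of how $\kappa_c,\kappa_+$ degenerate. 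Concretely this requires re-deriving the RNdS $r^p$-type / Gevrey hierarchy in a form where the conserved (or almost-conserved, up to $O(\Lambda)$) quantities at $\mathcal{H}_c$ and $\mathcal{H}^+$ play the structural role, and checking that the error terms introduced by $\Lambda>0$ are lower order and uniformly small; once that is in hand, the abstract perturbation argument above runs without further difficulty. A secondary technical nuisance is handling the non-compactness in $\ell$: one needs the resolvent estimates, and hence the resolvent convergence, to hold with constants summable over angular frequencies (as in the $\Lambda=0$ part of the paper and Theorem \ref{thm:polesresolvent}), so that the full operators $\mathcal{A}_\Lambda$ on $\mathbf{H}$, not merely their fixed-$\ell$ reductions, converge in norm-resolvent sense.
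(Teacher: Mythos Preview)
Your proposal is correct in outline and correctly identifies the uniform-in-$\kappa$ Gevrey estimates as the essential analytic input, but the paper takes a different technical route that sidesteps both of the difficulties you flag.

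First, the paper does \emph{not} work with the full generator $\mathcal{A}$ on $\mathbf{H}$. Because regularity QNMs are supported on finitely many angular frequencies (Theorem~\ref{thm:smallslargel}(ii)) and the precise statement (Theorem~\ref{thm:convqnms}) only asserts convergence of each $(\hat\psi)_{\ell m}$ in $H_{\sigma,\rho_0}$, the argument is carried out entirely at fixed $\ell$ with the operators $\hat L_{s,\ell,\kappa}$ on $H_{\sigma,\rho_0}$. Your ``secondary nuisance'' about summability in $\ell$ simply never arises.

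Second, the paper does \emph{not} use Riesz projections or contour integrals of $(\mathcal{A}_\kappa-s)^{-1}$. Instead of shifting in the spectral parameter to a point with $\re s_0\gg 0$, it shifts in the angular-momentum direction: one passes to the compact operators $A_{\kappa,s}:=\lambda(\lambda+1)\,\hat L_{s,\ell+\lambda,\kappa}^{-1}$, which exist for \emph{all} $s\in\Omega_\sigma$ and all small $\kappa\ge 0$ once $\lambda$ is taken large (Propositions~\ref{prop:inversepositivekappa}, \ref{prop:resolventzerokappa}). Norm convergence $A_{\kappa,s}\to A_{0,s}$, uniformly for $s$ in compacta, is obtained from the resolvent identity together with a variant of the main Gevrey estimate carrying an extra $(n+1)^{-2}$ weight in the summation (Proposition~\ref{prop:maingevreysum3v2}); this weakened norm is needed precisely so that $\|(\hat L_{\kappa_m}-\hat L_{\kappa_n})\hat\psi\|$ can be controlled by $\|\hat\psi\|_{\sigma,2,\rho_0}$. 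The eigenvalue problem $\ker\hat L_{s,\ell,\kappa}\ne 0$ is equivalent to $\ker(1-A_{\kappa,s})\ne 0$; a finite-rank approximation (as in the proof of the Analytic Fredholm Theorem) reduces this to the vanishing of an $N\times N$ determinant $\det(1-M_\kappa(s))$, and \emph{Hurwitz's theorem} applied to these holomorphic-in-$s$, uniformly-convergent-in-$\kappa$ scalar functions produces the sequence $s_n\to s_*$ (Proposition~\ref{prop:convqnf}). Eigenvector convergence is then extracted via an abstract compact-operator lemma (Lemma~\ref{lm:conveigenvectors}) rather than via convergence of spectral projections.

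Your Riesz-projection approach should also go through once restricted to fixed $\ell$, since norm-resolvent convergence of $\hat L_{s,\ell+\lambda,\kappa}^{-1}$ does, by Kato's stability theory for isolated eigenvalues of finite multiplicity, yield what you need on a small contour around $s_*$. The Hurwitz route chosen in the paper has the practical advantage that one never has to locate the spectrum of the perturbed operator relative to a contour: the shifted inverses $A_{\kappa,s}$ are globally defined in $s$, so one simply tracks zeros of a single analytic function.
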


See Theorem \ref{thm:discreigen}, Theorem \ref{thm:polesresolvent} and Theorem \ref{thm:convqnms} for more precise versions of  Theorem \ref{thm:rough1}, Theorem \ref{thm:rough2} and Theorem \ref{thm:rough3}, respectively. Theorems \ref{thm:rough1} and \ref{thm:rough2} address the points (I) and (III)--(VI) in Section \ref{intro:points} and Theorem \ref{thm:rough3} addresses (IX). In \cite{gajwar19b}, we address additionally (VII) and (VIII).

\begin{remark}
Theorems \ref{thm:rough1} and \ref{thm:rough2} illustrate that we can maintain the interpretation of quasinormal modes as eigenfunctions of $\mathcal{A}$ when $\Lambda=0$, with the key difference with the $\Lambda>0$ setting being that we need to adapt out choice of function spaces by restricting to suitably \emph{Gevrey regular} functions at infinity and at the horizon; see Section \ref{sec:enspaces}. In fact, these function spaces will allow us to simultaneously also consider the case of small $\Lambda> 0$ and derive estimates that are \emph{uniform in (small) $\Lambda$}.
\end{remark}

\begin{remark}
\label{rmk:relationapproach12}
Theorems \ref{thm:rough1} and \ref{thm:rough2} address the relation between Leaver QNFs and scattering resonances (see Section \ref{sec:approach1} and \ref{sec:approach2})  by presenting a ``third'' notion of regularity QNFs as eigenvalues of the infinitesimal generator of time translations for the wave equation on suitable Hilbert spaces (Theorem \ref{thm:rough1}). By Theorem \ref{thm:rough2}, scattering resonances form a subset of the set of regularity QNFs. In \cite{gajwar19b}, we show that, in the setting of a toy model problem, Leaver QNFs also form a subset of the set of regularity QNFs. \emph{The present paper and \cite{gajwar19b} therefore suggest that Leaver QNFs and scattering resonances can be related \underline{via} regularity QNFs}, which would be a resolution to problem (VII).
\end{remark}

\begin{remark}
\label{rmk:QNMvssmoothmode}
Another consequence of Theorems \ref{thm:rough1} and \ref{thm:rough2} is that the smooth mode solutions with \emph{arbitrary} frequencies $s$, which are mentioned in Section \ref{sec:smoothmode}, do \underline{not} play a role when considering initial data with suitable Gevrey regularity at infinity and at the horizon in the extremal setting. Moreover, we do not expect the Gevrey regularity condition at the horizon to be necessary in the sub-extremal setting; instead, we expect it can be replaced by finite Sobolev regularity at the horizon. Note that Gevrey regular data at infinity include in particular data with finite Sobolev norms that are \emph{compactly supported}. 
\end{remark}

If we restrict to solutions with a fixed angular frequency, we obtain an additional statement which addresses (II) in Section \ref{intro:points}.
\begin{mainthm}[Rough statement]
\label{thm:rough4}
If we restrict to a fixed angular frequency $\ell$, there exists a radius $\delta_{\ell}>0$ such that all the corresponding the regularity QNFs are supported outside of the ball $B_{\delta_{\ell}}$.
\end{mainthm}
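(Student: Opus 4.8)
The plan is to work at the level of the fixed-$\ell$ ODE \eqref{eq:mainode} and show that the only possible obstruction to invertibility of $\hat{\mathcal{L}}_{s,\ell}$ on the relevant Gevrey space for $s$ near $0$ comes from a low-frequency expansion that can be controlled quantitatively in terms of $\ell$. First I would set up the Gevrey-based Hilbert space $\mathbf{H}$ adapted to the asymptotically null slice (as in Section \ref{sec:enspaces}) and recall that regularity QNFs for fixed $\ell$ are exactly the $s$ in the sector $\{|\arg z|<\tfrac{2}{3}\pi\}$ for which $\hat{\mathcal{L}}_{s,\ell}$ fails to be boundedly invertible, equivalently (by the earlier analysis, Theorem \ref{thm:rough1} and the discussion around \eqref{eq:asym1}--\eqref{eq:asym2}) for which there exists a nontrivial Gevrey homogeneous solution that is simultaneously ``outgoing'' at infinity and ``ingoing'' at the horizon. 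The key structural fact I would exploit is that at $s=0$ the operator $\hat{\mathcal{L}}_{0,\ell}$ is explicitly solvable: the two homogeneous solutions of $\tfrac{d^2}{dr_*^2}(r\cdot)-V_\ell (r\cdot)=0$ are given in closed form (Legendre-type functions in the variable related to $1/r$ and $r-M$), and for $\ell\ge 1$ neither the solution regular at $r=\infty$ nor the one regular at $r=r_+$ simultaneously satisfies the other boundary condition — the relevant Wronskian at $s=0$ is nonzero, with a lower bound that one can track in $\ell$. Concretely, the conserved Newman--Penrose quantity at infinity and the Aretakis quantity at the horizon (whose closed-form expressions are precisely what makes the extremal case tractable, cf.\ the introduction) are the obstructions, and for $\ell\ge 1$ they are nondegenerate with an $\ell$-dependent normalization.

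The main steps, in order, would be: (1) establish an invertibility statement for $\hat{\mathcal{L}}_{0,\ell}$ on $\mathbf{H}$ for each fixed $\ell$, together with an explicit (or at least effectively estimable) bound $\|\hat{\mathcal{L}}_{0,\ell}^{-1}\| \le C_\ell$; (2) prove a quantitative perturbation estimate showing $\|\hat{\mathcal{L}}_{s,\ell}-\hat{\mathcal{L}}_{0,\ell}\|_{\mathbf{H}\to \mathbf{H}'} \le C'_\ell |s|$ uniformly for $s$ in a fixed compact neighborhood of $0$ inside the sector — here one must be careful because the $s$-dependence enters both through the $s^2$ term in \eqref{eq:mainode} and, more delicately, through the $s$-dependent weights $e^{\pm s r_*}$ defining the ``outgoing'' boundary condition and hence the norm on $\mathbf{H}$; the Gevrey estimates of Section \ref{sec:enspaces} (the $L^2$-based Gevrey estimates tied to the conservation laws) are exactly what makes this $s$-dependence Lipschitz rather than merely continuous near $s=0$; (3) combine (1) and (2) by a Neumann series: $\hat{\mathcal{L}}_{s,\ell}$ is invertible whenever $|s| < (C_\ell C'_\ell)^{-1} =: \delta_\ell$, so no regularity QNF with angular parameter $\ell$ lies in $B_{\delta_\ell}$; (4) record that the argument genuinely uses $\ell\ge 1$ (for $\ell=0$ the spherical mean has its own special structure, handled separately or excluded by convention), and note that $\delta_\ell$ will in general degenerate as $\ell\to\infty$, consistent with the fact that Theorem \ref{thm:rough4} is only a fixed-$\ell$ statement.

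The hard part will be step (2): controlling the dependence of the Gevrey Hilbert space norm on the frequency $s$ near $s=0$. Away from $s=0$ one can absorb the weight $e^{s r_*}$ into the definition of the space, but the whole point of Theorem \ref{thm:rough4} is behaviour \emph{at} the origin, where the weight degenerates and the ``outgoing'' and ``ingoing'' decompositions \eqref{eq:asym1}--\eqref{eq:asym2} become singular in $s$ (the $1/s$ factors visible already in the toy model computation $u^{(n+1)}(0)=(-1)^n s^{-(n+1)} n!(n+1)!$ in Section \ref{intro:toy}). I expect the resolution to be that, although individual coefficients blow up like negative powers of $s$, the \emph{Gevrey} norm — which sums these with factorial weights tuned to $\sigma$ — remains finite and depends on $s$ in a controlled way precisely because $\sigma$ is chosen below the relevant threshold; making this quantitative, uniformly down to $s=0$, is where the $L^2$-based Gevrey estimates of the paper do the real work, and is the step I would expect to occupy the bulk of the proof.
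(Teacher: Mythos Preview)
Your high-level target is the right one: show that for each fixed $\ell$ the operator $\hat{L}_{s,\ell,0}$ is boundedly invertible on the Gevrey space for all sufficiently small $|s|$, whence there can be no QNFs near the origin. But your proposed execution departs from the paper in a way that rests on a misconception about the framework, and the step you flag as the ``hard part'' is in fact a non-issue.

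The key point is that the Hilbert spaces $H_{\sigma,\rho_0}$ and $\mathbf{H}_{\sigma,\rho_0}$ do \emph{not} depend on $s$. The Gevrey regularity at the horizon and at infinity is measured in the variables $\rho_+$, $\rho_c$ along a fixed asymptotically null hypersurface; there are no $e^{\pm s r_*}$ weights built into the norm. All of the $s$-dependence is carried by the operator $\hat{L}_{s,\ell,\kappa}$ (see \eqref{eq:fixedfreqpsis}). So your worry about controlling ``the dependence of the Gevrey Hilbert space norm on the frequency $s$ near $s=0$'' is misplaced, and the toy-model blow-up $u^{(n+1)}(0)\sim s^{-(n+1)}n!(n+1)!$ is a feature of particular \emph{solutions} near $s=0$, not of the norm.

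Given that, the paper does not perturb from $s=0$ via a Neumann series. Instead it proves directly, for each $L$ and all $\ell\le L$, a closed Gevrey-plus-elliptic a priori estimate valid for $s\in\Omega_\sigma$ with $|s|$ small (Theorem~\ref{thm:smallsthmpaper}, Propositions~\ref{prop:boundlgevrey}--\ref{prop:boundlgev2}). The mechanism is Proposition~\ref{prop:lowfreqellipticest}: the degenerate elliptic estimate in the intermediate region has coupling terms proportional to $|s|$, so for $|s|$ small enough (depending on $\ell$) they can be absorbed, after which the near-horizon and near-infinity Gevrey estimates close as in Section~\ref{sec:gevrest}. This yields invertibility of $\hat{L}_{s,\ell,\kappa}$ for $0<\kappa\le\kappa_0$ and $|s|\le s_\ell$ (Proposition~\ref{prop:inversepositivekappa}), and then for $\kappa=0$ by the limiting argument of Proposition~\ref{prop:resolventzerokappa}; the statement about $\mathcal{A}$ is Corollary~\ref{cor:mainresult}(iv). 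In particular the argument works uniformly for all $\ell\in\N_0$, including $\ell=0$, whereas your scheme would need a separate treatment there since $\hat{L}_{0,0,0}$ genuinely has the constants in its kernel. Your Neumann-series route is not obviously unworkable for $\ell\ge 1$ once the misconception about the space is corrected, but it is more circuitous: you would first have to establish invertibility of $\hat{L}_{0,\ell,0}$ on $H_{\sigma,\rho_0}$ with the right mapping properties to absorb the first-order-in-$\hat\psi$ perturbation $2(1-h_{r_+}D)s\,r^{-1}\partial_r(r\hat\psi)$, and this amounts to reproving the same a priori estimate the paper obtains directly.
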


See Theorem \ref{thm:smallslargel} for more a precise statement.

\begin{remark}
While Theorem \ref{thm:rough4} guarantees that QNFs supported on a bounded set of angular frequencies are supported away from the origin in the complex plane (when restricted to an appropriate sector of the complex plane), it remains an open problem to rule out an possible accumulation of QNFs at the origin as $\ell\to \infty$.
\end{remark}

\subsection{Overview of paper}
We give in this section an overview of the remainder of the paper.

\begin{itemize}
\item In Section \ref{sec:geom}, we introduce the necessary geometric notions and we set the notation that is used in the rest of the paper.
\item In Section \ref{sec:enspaces}, we introduce the main Hilbert spaces that play a role in the paper and we define the precise notions of Gevrey regularity that we will use.
\item Equipped with the notions and notation from Sections \ref{sec:geom} and \ref{sec:enspaces}, we then state precisely the main theorems of the paper in Section \ref{sec:mainthms}.
\item Before proving the theorems of Section \ref{sec:mainthms}, we present the structure of the main proofs in Section \ref{sec:mainideas}, highlighting the main new ideas and techniques that are introduced in this paper.
\item In Section \ref{sec:eq} we derive the main equations that we will use to prove estimates.
\item In Section \ref{sec:propgevreyphys} we derive the necessary estimates in physical space that allow us to make sense of the infinitesimal generator of time translations $\mathcal{A}$ on the desired Hilbert spaces.
\item In Sections \ref{sec:degelliptic} -- \ref{sec:lowfreqgev}, we derive the main frequency space estimates of the paper.
\item We apply the estimates from Sections \ref{sec:degelliptic} -- \ref{sec:lowfreqgev} in Section \ref{sec:constrresolvent} together with functional analytic arguments to arrive at the desired properties of $\mathcal{A}$.
\item Finally, we relate the eigenvalues of $\mathcal{A}$ to scattering resonances in Section \ref{sec:tradqnms}.
\end{itemize}

\subsection{Acknowledgements}
The authors would like to thank Mihalis Dafermos and Georgios Moschidis for helpful discussions. They would also like to thank an anonymous referee for a very thorough and helpful reading of the article.

\section{Geometric preliminaries}
\label{sec:geom}
In this section, we review elementary properties of the three-parameter family of Reissner--Nordstr\"om--de Sitter spacetimes. We moreover introduce coordinate charts that will allow us to treat \emph{uniformly} spacetime backgrounds with non-negative cosmological constants.
\subsection{Reissner--Nordstr\"om(--de Sitter) spacetimes}
\label{sec:spacetimes}
We treat both the cases $\Lambda>0$ and $\Lambda=0$. We start by considering the Lorentzian manifolds $(\widetilde{\mathcal{M}}_{\Lambda}, g_{\Lambda})$, with $\widetilde{\mathcal{M}}_{\Lambda}=\R_v\times (r_+,r_c)_r\times \s^2$ and:
\begin{equation*}
g_{\Lambda}=-D(r)dv^2+2dvdr+r^2(d\theta^2+\sin^2\theta d\varphi^2),
\end{equation*}
where
\begin{equation}
\label{eq:D}
D(r)=1-\frac{2M}{r}+\frac{e^2}{r^2}-\frac{\Lambda}{3}r^2,
\end{equation}
$M>0$, $e\in \R$ with $|e|\leq M$, $\Lambda> 0$, and $r_c=r_c(e,M,\Lambda)$ and $r_+=r_+(e,M,\Lambda)$ the largest and second-largest roots of the quartic polynomial $r^2D(r)$, respectively, which we will assume to be distinct; see Section \ref{sec:metric2} for further properties of the roots.

Given our choice of coordinates, it is immediate that we can embed $\widetilde{\mathcal{M}}_{\Lambda}$ into the manifold-with-boundary $\mathcal{M}_{\Lambda,+}=\R_v\times [r_+,r_c]_r\times \s^2$, such that
\begin{equation*}
\partial \mathcal{M}_{\Lambda,+}=\mathcal{H}^+\cup \mathcal{C}^-,
\end{equation*}
where $\mathcal{H}^+=\{r=r_+\}\subset\mathcal{M}_{\Lambda,+}$ is the \emph{future event horizon} and $\mathcal{C}^-=\{r=r_c\}\subset \mathcal{M}_{\Lambda,+}$ is the \emph{past cosmological horizon}; see Figure \ref{fig:penrose}. Both $\mathcal{H}^+$ and $\mathcal{C}^-$ are null hypersurfaces and Killing horizons with respect to the Killing vector field $T=\partial_v$ on $\mathcal{M}_{\Lambda,+}$. We moreover take $T$ to fix the time-orientation on $\mathcal{M}_{\Lambda,+}$.

Alternatively, we can introduce a coordinate $u=v-2r_*$ on $\widetilde{\mathcal{M}}_{\Lambda}$, where $\frac{dr}{dr_*}=D(r)$, and consider the coordinate chart $(u,r,\theta,\varphi)$ on $\widetilde{\mathcal{M}}_{\Lambda}=\R_u\times (r_+,r_c)_r\times \s^2$:
\begin{equation*}
g_{\Lambda}=-D(r)du^2-2dudr+r^2(d\theta^2+\sin^2\theta d\varphi^2).
\end{equation*}

Note that we can also introduce $t=\frac{v+u}{2}$ to cover $\widetilde{\mathcal{M}}_{\Lambda}$ with \emph{standard static coordinates} $(t,r,\theta,\varphi)$.

We can now embed $\widetilde{\mathcal{M}}_{\Lambda}$ into the manifold-with-boundary $\mathcal{M}_{\Lambda,-}=\R_u\times [r_+,r_c]_r\times \s^2$, such that the boundary can be decomposed as follows:
\begin{equation*}
\partial \mathcal{M}_{\Lambda,-}=\mathcal{H}^-\cup \mathcal{C}^+,
\end{equation*}
where $\mathcal{H}^-=\{r=r_+\}\subset \mathcal{M}_{\Lambda,-}$ is the \emph{past event horizon} and $\mathcal{C}^+=\{r=r_c\}\subset \mathcal{M}_{\Lambda,-}$ is the \emph{future cosmological horizon}. Both $\mathcal{H}^+$ and $\mathcal{C}^-$ are null hypersurfaces and Killing horizons with respect to the Killing vector field $\partial_u$. Furthermore, $\partial_u=T$ on $\widetilde{\mathcal{M}}_{\Lambda}$.

The main manifold under consideration in this paper is 
\begin{equation*}
\mathcal{M}_{\Lambda}=\widetilde{\mathcal{M}}_{\Lambda}\cup \mathcal{H}^+ \cup \mathcal{C}^+, 
\end{equation*}
which cannot be covered by either a single $(v,r)$ or $(u,r)$ chart. See Figure \ref{fig:penrose} for an illustration.
\begin{figure}[h!]
\centering
\begin{subfigure}[b]{0.45\textwidth}
\centering
\includegraphics[width=2.1in]{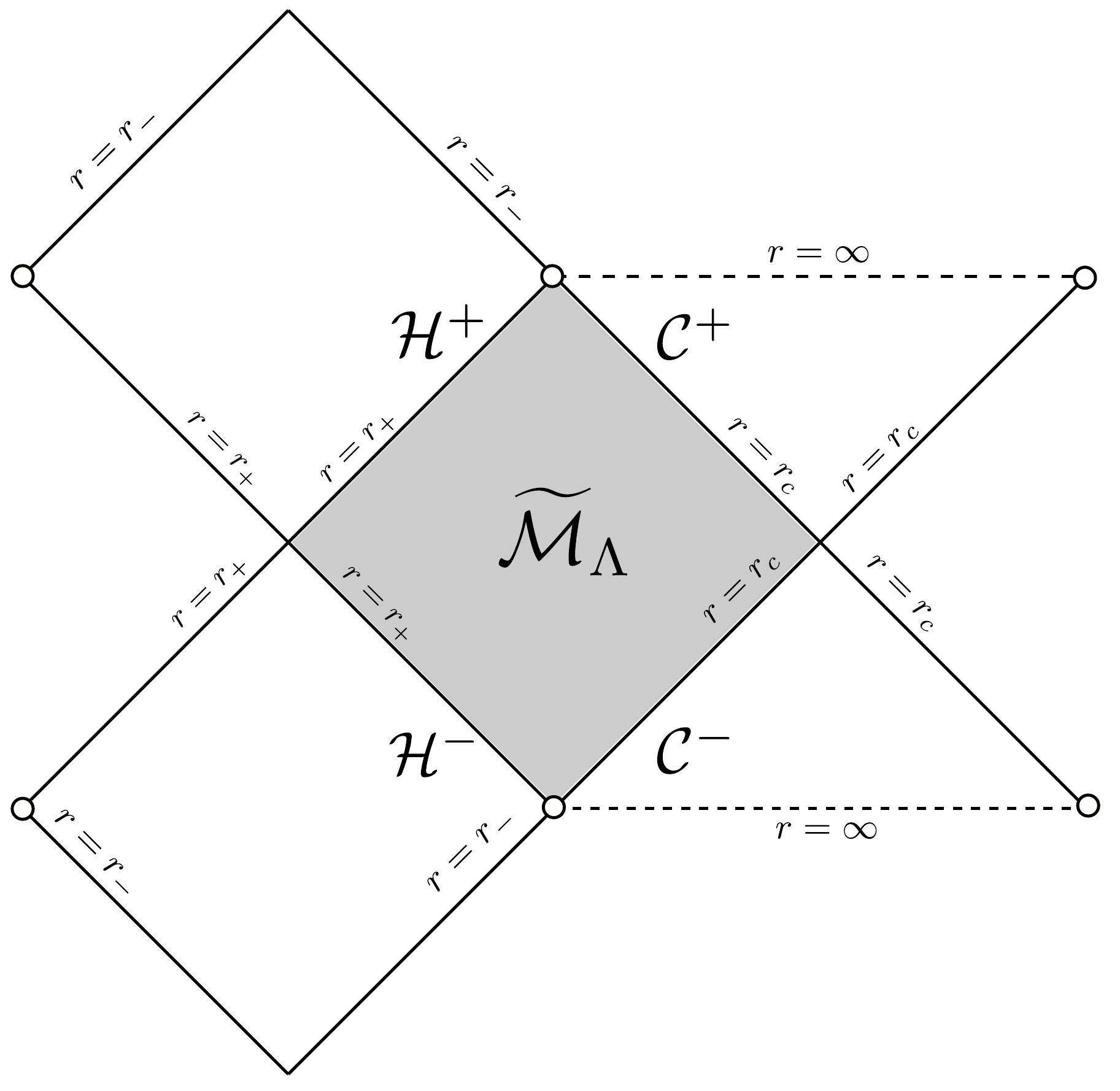}
\caption{Sub-extremal \\Reissner--Nordstr\"om--de Sitter.}
\end{subfigure}
\begin{subfigure}[b]{0.23\textwidth}
\includegraphics[width=1.5in]{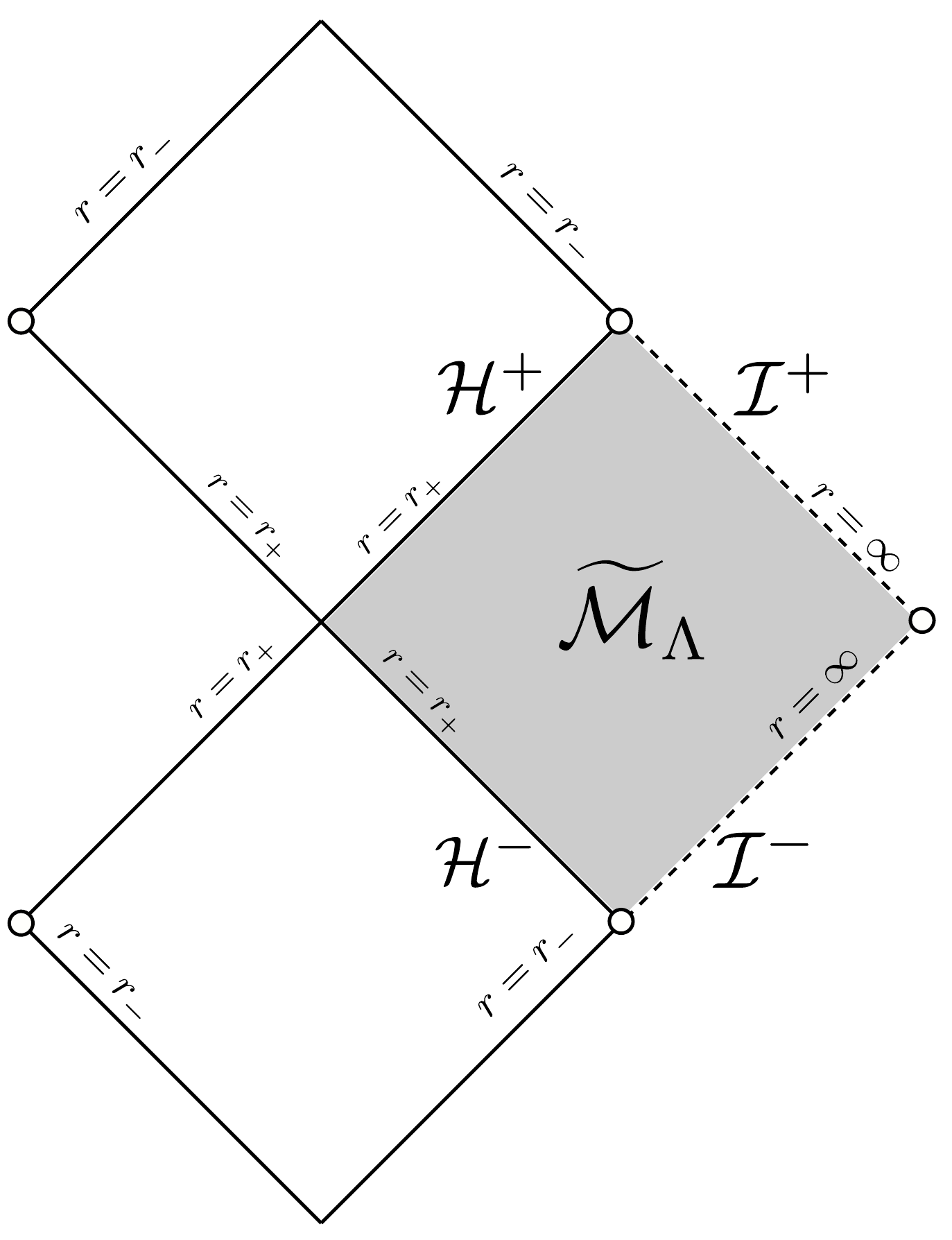}
\caption{Sub-extremal \\Reissner--Nordstr\"om}
\end{subfigure}
\begin{subfigure}[b]{0.3\textwidth}
\centering
\includegraphics[width=1.15in]{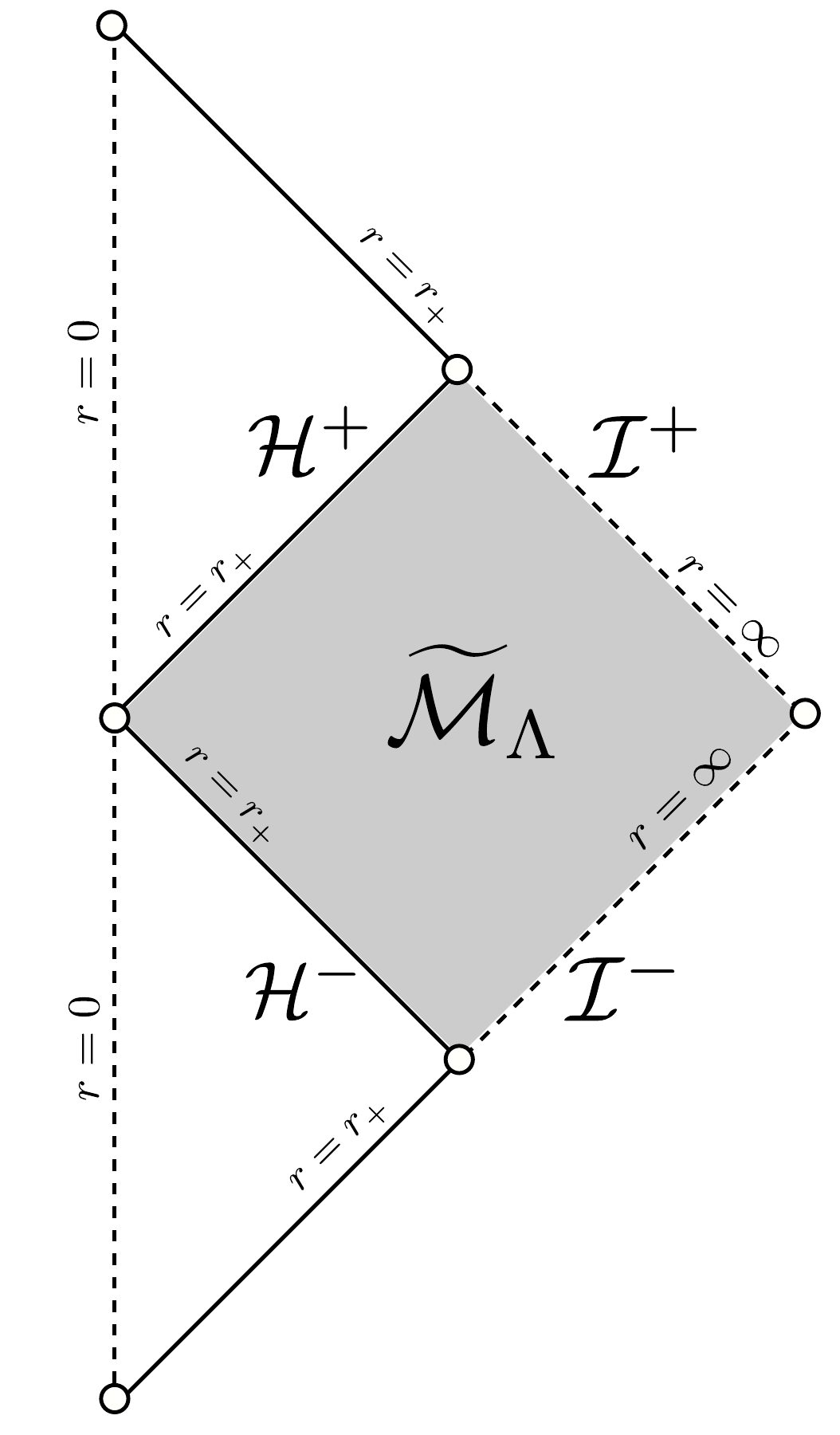}
\caption{Extremal\\ Reissner--Nordstr\"om}
\end{subfigure}
\caption{A Penrose diagrammatic representation of the embedding of $\widetilde{\mathcal{M}}_{\Lambda}$ in the Reissner--Nordstr\"om--de Sitter and Reissner--Nordstr\"om spacetimes in consideration.}
\label{fig:penrose}
\end{figure}

Now take $\Lambda=0$. The corresponding two-parameter family of Lorentzian manifolds, the Reissner--Nordstr\"om family $(\widetilde{\mathcal{M}}_{0}, g_{0})$ is defined as follows: $\widetilde{\mathcal{M}}_{0}=\R_v\times (r_+,\infty)_r\times \s^2$ and:
\begin{equation*}
g_{0}=-D(r)dv^2+2dvdr+r^2(d\theta^2+\sin^2\theta d\varphi^2),
\end{equation*}
where
\begin{equation*}
D(r)=1-\frac{2M}{r}+\frac{e^2}{r^2}.
\end{equation*} 
As above, we can extend $\widetilde{\mathcal{M}}_{0}$ by embedding it into a manifold with boundary, working in either $(v,r)$ or $(u,r)$ coordinates (defined as above, replacing $r_c$ with $\infty$.) to obtain
\begin{equation*}
\mathcal{M}_0=\widetilde{\mathcal{M}}_{0}\cup \mathcal{H}^+.
\end{equation*}
In contrast with the $\Lambda>0$, the boundary of $\mathcal{M}_0$ has only one connected component; there is no cosmological horizon present. Formally, one may think of ``$\{r=\infty\}$'' as replacing $\mathcal{C}^+$ above. This can be actually made precise by introducing a further extension of $\mathcal{M}_0$; see Section \ref{sec:confcoord} for more details. See Figure \ref{fig:penrose} for an illustration.

In fact, it is possible to extend $\mathcal{M}_{\Lambda,+}$ and $\mathcal{M}_{0,+}$ to $r<r_+$. The additional region in the extension is called the \emph{black hole region} and it will not play a role in this paper. Similarly, in the $\Lambda>0$ case it is possible to extend $\mathcal{M}_{\Lambda,-}$ to $r>r_c$. The additional region in this case is called the \emph{cosmological} or \emph{expanding region}.

\subsection{Further properties of the metric}
\label{sec:metric2}

Let $\Lambda>0$. We will denote $\mathfrak{l}^2=\frac{3}{\Lambda}$. Then it can be shown that the polynomial $r^2D(r)$ has four roots $r_0<0\leq r_-\leq r_+\leq r_c$, with $r_-=0$ if $e=0$ and $r_->0$ if $e\neq 0$. The roots $r_c$ and $r_+$ are the area radii at the cosmological horizons $\mathcal{C}^+$ and $\mathcal{C}^-$ and the event horizons $\mathcal{H}^+$ and $\mathcal{H}^-$, respectively. If $e\neq 0$, the root $r_-$ corresponds to the area radius at the \emph{inner} horizon. We may write
\begin{equation}
\label{eq:Dr2}
\begin{split}
r^2D(r)=&-\mathfrak{l}^{-2}(r-r_c)(r-r_+)(r-r_-)(r-r_0).
\end{split}
\end{equation}
By comparing \eqref{eq:Dr2} with \eqref{eq:D}, we obtain the following identities for $r_c,r_+,r_-,r_0$:
\begin{align*}
r_0=&-(r_c+r_++r_-),\\
\mathfrak{l}^2=&\: r_c^2+r_+^2+r_-^2+r_-r_++r_-r_c+r_+r_c,\\
2M\mathfrak{l}^2=&\: r_c^2(r_++r_-)+r_+^2(r_c+r_-)+r_-^2(r_++r_c)+2r_cr_+r_-,\\
e^2\mathfrak{l}^2=&\:(r_c+r_++r_-)r_cr_+r_-.
\end{align*}
 Using that the surface gravities corresponding to the cosmological horizon and event horizon are given by $\kappa_c=-\frac{1}{2}D'(r_c)$ and $\kappa_+=\frac{1}{2}D'(r_+)$, respectively, we obtain:
 \begin{align}
 \label{eq:kappac}
 \kappa_c=&\:\frac{1}{2\mathfrak{l}^2} \frac{1}{r_c^2}(r_c-r_+)(r_c-r_-)(2r_c+r_++r_-),\\
  \label{eq:kappaplus}
  \kappa_+=&\:\frac{1}{2\mathfrak{l}^2} \frac{1}{r_+^2}(r_c-r_+)(r_+-r_-)(2r_++r_c+r_-).
 \end{align} 
 
 Let $\Lambda=0$. Then the polynomial $r^2D(r)$ has three roots $r_0<0\leq r_-\leq r_+$ and
\begin{align*}
r_0=&-(r_++r_-),\\
2M=&\: r_++r_-,\\
e^2=&\:r_+r_-.
\end{align*}
Suppose we fix $e$ and $M$, then $r_+(e,M,\Lambda)\to r_+(e,M,0)$ and $r_-(e,M,\Lambda)\to r_-(e,M,0)$ as $\mathfrak{l} \to \infty$. In particular, $r_+$ and $r_-$ stay bounded as $\mathfrak{l}\to \infty$. Therefore
\begin{equation*}
\frac{r_c}{\mathfrak{l}}\to 1
\end{equation*}
 as $\Lambda\downarrow 0$. It then follows that $\kappa_c(e,M,\Lambda)\to 0$ as $\Lambda\downarrow 0$ and $\kappa_+(e,M,\Lambda)\to \kappa_+(e,M)$ as $\Lambda\downarrow 0$.

\textbf{The main spacetimes of interest in this paper are the \emph{extremal Reissner--Nordstr\"om spacetimes}, which correspond to the limits $\Lambda\downarrow 0$ and $e^2\uparrow M^2$.} However, we will consider the bigger Reissner--Nordstr\"om--de Sitter family (with suitably small $\kappa_+$ and $\kappa_c$) in order to arrive at the desired estimates in extremal Reissner--Nordstr\"om.

\subsection{Conformal radial coordinates}
\label{sec:confcoord}
Let us introduce the \emph{conformal radial coordinate} $\rho_c=\frac{r_c-r}{r_cr}$, or equivalently, $\rho_c=\frac{1}{r}-\frac{1}{r_c}$. Then, $\frac{d\rho_c}{dr}=-r^{-2}$ and the metric $g_{\Lambda}$ takes the following form in $(u,\rho_c,\theta,\varphi)$ coordinates:
\begin{equation*}
g_{\Lambda}=r^2\left(-D(r)r^{-2}du^2+2du d\rho_c+d\theta^2+\sin^2\theta d\varphi^2\right)
\end{equation*}
with $u\in \R$, $\rho_c\in (0,\frac{r_c-r_+}{r_cr_+})$ and $r=\frac{1}{\rho_c+r_c^{-1}}$.

Using the expression for $\kappa_c$ in \eqref{eq:kappac}, we can further write
 \begin{equation}
 \label{eq:Dcosmo}
 Dr^{-2} =2\kappa_c\rho_c +(1-A_{r_c}^{(2)})\rho_c^2+A_{r_c}^{(3)}\rho_c^3+A_{r_c}^{(4)}\rho_c^4.
 \end{equation}
 where
 \begin{align*}
 A_{r_c}^{(2)}=&\:3\mathfrak{l}^{-2}\left[r_c(r_++r_-)+r_+^2+r_-^2-r_c^{-1}r_+r_-(r_++r_-)\right],\\
 A_{r_c}^{(3)}=&\:-(r_++r_-)+\mathfrak{l}^{-2}(4r_cr_+r_-+r_+^3+r_-^3+5r_+^2r_-+5r_-^2r_+),\\
 A_{r_c}^{(4)}=&\:e^2.
 \end{align*}
 Indeed, this follows from a Taylor expansion of $Dr^{-2}\mathfrak{l}^2$ in the variable $\rho_c$ around $\rho_c=0$, using that $Dr^{-2}\mathfrak{l}^2$ is a polynomial in $r^{-1}$ and therefore also in $\rho_c$:
  \begin{equation*}
 \begin{split}
 Dr^{-2}\mathfrak{l}^2=&\:\frac{r_c-r}{r}(1-r_+r^{-1})(1-r_-r^{-1})(1-r_0r^{-1})\\
 =&\: r_c\rho_c (1-r_+r^{-1})(1-r_-r^{-1})(1+(r_c+r_++r_-)r^{-1})\\
 =&\:r_c^{-2}(r_c-r_-)(r_c-r_+)(2r_c+r_-+r_+)\rho_c\\\
 &+\left(r_c^2-2r_-^2-2r_+^2+r_-r_+-2r_c(r_-+r_+)+3 r_c^{-1}r_-r_+(r_-+r_+)\right)\rho_c^2\\
 &+\left(-r_c(r_-+r_+)^2-r_c^2(r_-+r_+)+3r_-r_+(r_-+r_+)\right)\rho_c^3+r_-r_+r_c(r_c+r_-+r_+)\rho_c^4\\
 =&\: 2\kappa_c\mathfrak{l}^2\rho_c +(1-A_{r_c}^{(2)})\mathfrak{l}^2\rho_c^2+A_{r_c}^{(3)}\mathfrak{l}^2\rho_c^3+A_{r_c}^{(4)}\mathfrak{l}^2\rho_c^4.
\end{split}
 \end{equation*}
 From the above expression for $(Dr^{-2})(\rho_c)$ it follows that by considering the conformal metric $\hat{g}_{\Lambda}=r^{-2}g_{\Lambda}$ for $\Lambda\geq 0$, we can embed the Lorentzian manifold $(\widetilde{\mathcal{M}}_{\Lambda}, \hat{g}_{\Lambda})$ into the manifold-with-boundary $(\widehat{\mathcal{M}}_{\Lambda}, \hat{g}_{\Lambda})$, with $\widehat{\mathcal{M}}_{\Lambda}=\R_u \times [0,\frac{r_c-r_+}{r_cr_+})_{\rho_c}\times \s^2$. If $\Lambda>0$, we have that $\partial \widehat{\mathcal{M}}_{\Lambda}=\{\rho_c=0\}=\mathcal{C}^+$. In the $\Lambda=0$ case, we define
 \begin{equation*}
 \mathcal{I}^+:=\partial \widehat{\mathcal{M}}_{0}=\{\rho_c=0\},
 \end{equation*}
 and we refer to $\mathcal{I}^+$ as \emph{future null infinity}. Note that $\mathcal{I}^+$ is a \emph{degenerate Killing horizon}, i.e.\  it has vanishing surface gravity with respect to the conformal metric $\hat{g}_{\Lambda}$.
 
In order to be able to treat the spacetime regions near $\mathcal{H}^+$ and $\mathcal{C}+$ simultaneously, we introduce another radial coordinate that plays a similar role to $\rho_c$. Let $\rho_+=\frac{r-r_+}{r_+r}$, or equivalently, $\rho_+=\frac{1}{r_+}-\frac{1}{r}$. Then, $\frac{d\rho_+}{dr}=r^{-2}$. We can change from $(v,r,\theta,\varphi)$ coordinates to $(v,\rho_+,\theta,\varphi)$ coordinates on $\mathcal{M}_{\Lambda,+}$ and then it is immediate that
\begin{equation*}
\partial \mathcal{M}_{\Lambda,+}=\mathcal{H}^+=\{\rho_+=0\}
\end{equation*}
for both the $\Lambda>0$ and $\Lambda=0$ cases. Note that in this case there is no need to pass to a conformal metric and extend the spacetime as the original metric $g_{\Lambda}$ is well-defined at $\mathcal{H}^+$ when $\Lambda\geq 0$.

Using the expression for $\kappa_+$ in \eqref{eq:kappaplus}, we can write
 \begin{equation}
  \label{eq:Devent}
 Dr^{-2}= 2\kappa_+ \rho_+ +(1-A_{r_+}^{(2)})\rho_+^2+A_{r_+}^{(3)}\rho_+^3+A_{r_+}^{(4)}\rho_+^4,
 \end{equation}
 where
 \begin{align*}
  A_{r_+}^{(2)}=&\:3\mathfrak{l}^{-2}\left[r_+(r_c+r_-)+r_c^2+r_-^2-r_+^{-1}r_c r_-(r_c+r_-)\right],\\
 A_{r_+}^{(3)}=&\:(r_++r_-)-\mathfrak{l}^{-2}\left[4r_cr_+r_-+r_c^3+r_-^3+5r_c^2r_-+5r_-^2r_c\right],\\
 A_{r_+}^{(4)}=&\:e^2.
 \end{align*}
This follows from a Taylor expansion of $Dr^{-2}\mathfrak{l}^2$ in the variable $\rho_+$ around $\rho_+=0$, using that $Dr^{-2}\mathfrak{l}^2$ is a polynomial in $r^{-1}$ and therefore also in $\rho_+$:
   \begin{equation*}
 \begin{split}
 Dr^{-2}\mathfrak{l}^2=&\:-\frac{r-r_+}{r}(1-r_cr^{-1})(1-r_-r^{-1})(1-r_0r^{-1})\\
 =&\: -r_+\rho_+ (1-r_cr^{-1})(1-r_-r^{-1})(1+(r_++r_-+r_c)r^{-1})\\
 =&\: r_+^{-2}(r_c-r_+)(r_+-r_-)(2r_++r_c+r_-)\rho_+\\
 &+\left(-2r_c^2-2r_-^2+r_c r_--2(r_c+r_-)r_++r_+^2\right)\rho_+^2\\
 &+\left(-3r_cr_-(r_c+r_-)+r_+(r_c-r_-)^2+r_+^2(r_c+r_-)\right)\rho_+^3+r_cr_-r_+(r_c+r_-+r_+)\rho_+^4\\
 =&\: 2\kappa_+ \mathfrak{l}^2\rho_+ +(1-A_{r_+}^{(2)})\mathfrak{l}^2\rho_+^2+A_{r_+}^{(3)}\mathfrak{l}^2\rho_+^3+A_{r_+}^{(4)}\mathfrak{l}^2\rho_+^4.
 \end{split}
 \end{equation*}

It is convenient to introduce the following modification of $\rho_+$ and $\rho_c$. We will employ the notation $\rho$ when we do not discriminate between $\rho_+$ and $\rho_c$. Let
\begin{equation*}
\tilde{\rho}=\frac{\rho}{1-M\rho}.
\end{equation*}
Then
\begin{equation*}
\partial_{\rho}=\frac{d\tilde{\rho}}{d\rho} \partial_{\tilde{\rho}}=\frac{1}{(1-M\rho)^2} \partial_{\tilde{\rho}}.
\end{equation*}
Furthermore,
\begin{equation*}
\rho= \frac{\tilde{\rho}}{1+M\tilde{\rho}}.
\end{equation*}
Note in particular that $\tilde{\rho}(0)=0$.

 \subsection{Foliations}
 \label{sec:foliations}
We will construct a suitable spacelike hypersurface $\Sigma_0$ in $\mathcal{M}_{\Lambda}$.

In order to avoid ambiguity when passing from $(v,r)$ to $(u,r)$ coordinates, we introduce the following null vector fields:
\begin{align*}
L=&\:T+\frac{D}{2}\partial_r,\\
\underline{L}=&\:-\frac{D}{2}\partial_r,\\
\hat{\underline{L}}=&\:r^2 \partial_r,
\end{align*}
with respect to $(v,r)$ coordinates. The vector fields $L$ and $\underline{L}$ take the following form when expressed in $(u,r)$ coordinates:
\begin{align*}
L=&\:\frac{D}{2}\partial_r,\\
\underline{L}=&\:T-\frac{D}{2}\partial_r.
\end{align*}
Let us also define the rescaled vector field
\begin{equation*}
\hat{L}=-r^2\partial_r.
\end{equation*}
Consider a piecewise smooth function $h_{r_+}:(r_+,r_c)\to \R$, where $r_c\leq \infty$, satisfying the following properties: 
\begin{enumerate}[(i)]
\item $0\leq h_{r_+}(r)\leq \frac{2}{D(r)}$,
\item $|h_{+}(r)|\leq C_0$ in $\{r\leq r_++\min\{r_+, \frac{r_c-r_+}{2}\}\}$,
\item $|r^2(2D^{-1}-h_{r_+})(r)|\leq C_0M^2$ in $\{r\geq r_++\min\{r_+, \frac{r_c-r_+}{2}\}\}$,
\end{enumerate}
with $C_0>0$ a numerical constant. 

It will be useful to consider the following special choice of $h_{r_+}$: take $h_{r_+}$ to be a piecewise smooth function, with $h_{r_+}(r)=0$ when $r< r((\rho_+)_0)$, $h_{r_+}(r)=2D^{-1}(r)$ when $r> r((\rho_c)_0)$ and $0<h_{r_+}(r)<2D^{-1}$ smooth when $r((\rho_+)_0)\leq r \leq r((\rho_+)_0)$. Note that $h_{r_+}$ is therefore discontinuous at $r=r((\rho_+)_0)$ and $r((\rho_c)_0)$, and it is smooth everywhere else.

We define $h_{r_c}(r):=2D^{-1}(r)-h_{r_+}(r)$ and introduce the vector field 
\begin{equation*}
Y=-\frac{2}{D}\underline{L}+h_{r_+} T=\frac{2}{D}L-h_{r_c} T.
\end{equation*}
Note that the properties (i)--(iii) above imply that $Y$ must be spacelike or null everywhere. 

Let $\gamma: (r_+,r_c) \to \mathcal{R}_v \times (r_+,r_c)_r$ be the unique function that satisfies:
\begin{enumerate}[(a)]
\item $\lim_{r\downarrow r_+} \gamma(r)=(v_0,r_+)$, with $v_0>0$,
\item $\frac{d}{dr}\gamma(r)=Y$.
\end{enumerate}
Then $(\gamma,\theta_0,\varphi_0)$ defines an integral curve of $Y$ in $\mathcal{M}_{\Lambda}$ for each $\theta_0\in (0,\pi), \varphi_0\in (0,2\pi)$ that has a limit point $(v_0,r_+,\theta_0,\varphi_0)\in \mathcal{H}^+$ in $(v,r)$ coordinates.

The hypersurface $\Sigma_0=\gamma\times \s^2\subset \mathcal{M}_{\Lambda}$ is null or spacelike everywhere and has a boundary that intersects $\mathcal{H}^+$ at $\mathcal{H}^+\cap \{v=v_0\}$.

Let us denote with $v_{\Sigma_0}(r)$ the value of the $v$ coordinate along $\Sigma_0$ and $u_{\Sigma_0}(r)$ the value of the $u$ coordinate along $\Sigma_0$. By construction $\frac{dv_{\Sigma_0}}{dr}=h_{r_+}$.

Since $u=v-2r_*$, we have that 
\begin{equation*}
\frac{du_{\Sigma_0}}{dr}=h_{r_+}-2D^{-1}=-h_c(r).
\end{equation*}
Let $r_0<r_+$ be arbitrary. By property (i)---(iii) above, we can bound for all $r\in (r_0,r_c)$: $|h_c(r)|\leq C(r_0)M^2r^{-2}$, for some constant depending on $r_0$. After integrating in $r$ from $r=r_0$ to $r=r_c$ or $r=\infty$, it therefore follows that $u|_{\Sigma_0}$ attains a finite value as $r\to r_c$ or $r\to \infty$. In other words, the curve $(\gamma,\theta_0,\varphi_0)$ has a limit point on $\mathcal{C}^+$ or $\mathcal{I}^+$.

We define a foliation $\Sigma_{\tau}$ by flowing $\Sigma_0$ along the integral curves of $T$, with $T(\tau)=1$. We denote
\begin{equation*}
\mathcal{R}=\bigcup_{\tau \in [0,\infty)}\Sigma_{\tau}.
\end{equation*}
See Figure \ref{fig:foliations} for an illustration in the $\Lambda=0$ setting. 

By considering the extensions $\widehat{\Sigma}_{\tau}$ of $\Sigma_{\tau}$ in the larger manifold $\widehat{\mathcal{M}}_{\Lambda}\cup \mathcal{H}^+$, we can moreover consider the region
\begin{equation*}
\widehat{\mathcal{R}}=\bigcup_{\tau \in [0,\infty)}\widehat{\Sigma}_{\tau}.
\end{equation*}

We introduce the coordinate chart $(\tau,r, \theta,\varphi)$ associated to the foliation $\Sigma_{\tau}$. In these coordinates, we have that $\partial_{\tau}=T$ and $\partial_{r}=Y$.

We will also consider $(\tau,\tilde{\rho}_+,\theta,\varphi)$ coordinates. In these coordinates,
\begin{equation}
\label{eq:eqrhohat1}
\partial_{\tilde{\rho}_+}=(1+M\tilde{\rho}_+)^{-2}\partial_{\rho_+}=(1+M\tilde{\rho}_+)^{-2}r^2Y=(1+M\tilde{\rho}_+)^{-2}\hat{\underline{L}}+r^2h_{r_+}(1+M\tilde{\rho}_+)^{-2} T.
\end{equation}
If instead we consider $(\tau,\tilde{\rho}_c,\theta,\varphi)$ coordinates, we can express
\begin{equation}
\label{eq:eqrhohat2}
\partial_{\tilde{\rho}_c}=(1+M\tilde{\rho_c})^{-2}\partial_{\rho_c}=-(1+M\tilde{\rho}_c)^{-2}r^2Y=(1+M\tilde{\rho}_c)^{-2}\hat{L}+r^2h_{r_c}(1+M\tilde{\rho}_c)^{-2} T.
\end{equation}
Let us define $\hat{h}= r^2h(1+M\tilde{\rho})^{-2}$, where $h=h_{r_+}$ or $h=h_{r_c}$ and $\rho=\rho_+$ or $\rho=\rho_c$, respectively.

Let $(\rho_+)_0,(\rho_c)_0\in (0,\frac{r_c-r_+}{r_cr_+})$. We denote $r((\rho_+)_0)=r|_{\rho_+=(\rho_+)_0}$, $r((\rho_c)_0)=r|_{\rho_c=(\rho_c)_0}$ and assume $(\rho_+)_0,(\rho_c)_0$ are chosen such that $r((\rho_+)_0)<r((\rho_c)_0)$. 

Consider again the special choice of $h_{r_+}$ defined above: $h_{r_+}(r)=0$ when $r< r((\rho_+)_0)$, $h_{r_c}(r)=0$ when $r> r((\rho_c)_0)$ and $0<h_{r_+}(r)<2D^{-1}$ smooth when $r((\rho_+)_0)\leq r \leq r((\rho_+)_0)$.

With the above choice of $h_{r_+}$, we can split
\begin{align*}
\Sigma_{\tau}=&\:\underline{N}_{\tau}\cup S_{\tau}\cup N_{\tau},
\end{align*}
where $\underline{N}_{\tau}=\Sigma_{\tau}\cap \{r\leq r((\rho_+)_0)\}$ are ingoing null hypersurfaces, $N_{\tau}=\Sigma_{\tau}\cap \{ r\geq r((\rho_c)_0)\}$ are outgoing null hypersurfaces and $S_{\tau}=\Sigma_{\tau}\cap \{ r((\rho_+)_0)\leq r \leq r((\rho_c)_0)\}$ are spacelike.

\subsection{Additional notation}

For convenience, let us introduce the following notation: let  $0<(\rho_+)_0,(\rho_c)_0<\frac{r_c-r_+}{r_cr_+}$, then
\begin{align*}
R_0^+:=&\:r((\rho_+)_0),\\
R_0^c:=&\:r((\rho_c)_0).
\end{align*}

Suppose $\psi$ is a solution to the conformally invariant Klein--Gordon equation
 \begin{equation}
 \label{eq:waveequation}
 \square_{g_{\Lambda}}\psi=\frac{1}{6}R[g_{\Lambda}]\psi=\frac{2}{3}\Lambda \psi=2\mathfrak{l}^{-2}\psi,
 \end{equation}
 where $R[g_{\Lambda}]$ is the Ricci scalar corresponding to the metric $g_{\Lambda}$. Note that in the case $\Lambda=0$, \eqref{eq:waveequation} is simply the wave equation with respect to the Reissner--Nordstr\"om metric $g_0$.

We denote the components of the stress-energy tensor corresponding to \eqref{eq:waveequation} as follows:
\begin{equation*}
\mathbb{T}_{\mu \nu}[\psi]=\partial_{\mu}\psi \partial_{\nu }\psi-\frac{1}{2}g_{\mu \nu}[(g^{-1})^{\alpha \beta} \partial_{\alpha}\psi\partial_{\beta}\psi+2\mathfrak{l}^{-2}\psi^2 ]
\end{equation*}
and we denote with $\mathbf{n}_{\tau}$ the normal vector field to $\Sigma_{\tau}$ and $d\mu_{\tau}$ the induced volume form on $\Sigma_{\tau}$. We also use the notation $\mathbf{n}_{\Sigma_0}:=\mathbf{n}_0$ and $d \mu_{\Sigma_0}:=d\mu_0$. Note that if $\Sigma_{\tau}$ has an null piece, we let $d\mu_{\tau}=r^2\sin \theta d\theta d\varphi$ along the null piece and $\mathbf{n}_{\tau}=\underline{L}$ if the null piece is ingoing and $\mathbf{n}_{\tau}={L}$ if it is outgoing.
 \section{Gevrey regularity and Hilbert spaces}
  \label{sec:enspaces}
  
  In this section, we introduce a notion of Gevrey regularity and we define the main Hilbert spaces that will be relevant in the paper.
 \begin{definition}
 \label{def:gev}
We define the \textbf{$(\sigma,2)$-Gevrey inner products} $\la \cdot , \cdot \ra_{G^2_{\sigma,j,\rho_0}}$ with $j=0,1,2$ on the space $C^{\infty}((0,\rho_0);\C)$ with $\sigma \in {\R_{>0}}$ and $\rho_0>0$ as follows: let $f,g\in  C^{\infty}((0,\rho_0);\C)$, then
\begin{align*}
\la f, g \ra_{G^2_{\sigma,0,\rho_0}}=&\:\sum_{n=0}^{\infty} \frac{\sigma^{2n}}{n!^2 (n+1)!^2}\int_0^{\rho_0} \partial_{\rho}^n f \partial_{\rho}^n \overline{g}\,d\rho,\\
\la f, g \ra_{G^2_{\sigma,1,\rho_0}}=&\:\sum_{n=0}^{\infty} \frac{\sigma^{2n}}{n!^2 (n+1)!^2}\int_0^{\rho_0} \left[\partial_{\rho}^n f \partial_{\rho}^n \overline{g}+\partial_{\rho}^{n+1} f \partial_{\rho}^{n+1} \overline{g}\right]\,d\rho,\\
\la f, g \ra_{G^2_{\sigma,2,\rho_0}}=&\:\sum_{n=0}^{\infty} \frac{\sigma^{2n}}{n!^2 (n+1)!^2}\int_0^{\rho_0} \left[\partial_{\rho}^n f \partial_{\rho}^n \overline{g}+\partial_{\rho}^{n+1} f \partial_{\rho}^{n+1} \overline{g}+\rho^4\partial_{\rho}^{n+2} f \partial_{\rho}^{n+2} \overline{g}\right]\,d\rho
\end{align*}
We denote the corresponding norms by $||\cdot||_{G^2_{\sigma,0,\rho_0}}$, $||\cdot||_{G^2_{\sigma,1,\rho_0}}$,  $||\cdot||_{G^2_{\sigma,2,\rho_0}}$, respectively.

We refer to functions $f\in C^{\infty}((0,\rho_0);\C)$ with $||f||_{G^2_{\sigma,0,\rho_0}}<\infty$ as \textbf{$(\sigma,2)$-Gevrey functions} on $[0,\rho_0]$.
\end{definition}

\begin{lemma}
The spaces $(G^2_{\sigma,j,\rho_0}, \la \cdot , \cdot \ra_{G^2_{\sigma,j,\rho_0}})$, with $j=0,1,2,$ and
\begin{equation*}
G^2_{\sigma,j,\rho_0}:=\left\{f\in  C^{\infty}((0,\rho_0);\C)\:|\: ||f||_{G^2_{\sigma,j,\rho_0}}<\infty\right\},
\end{equation*}
are Hilbert spaces.
\end{lemma}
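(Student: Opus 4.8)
The plan is to verify the Hilbert space axioms directly, with completeness as the only substantive point. First I would observe that each $\la \cdot, \cdot \ra_{G^2_{\sigma,j,\rho_0}}$ is manifestly a sesquilinear form on the stated subspace, and it is positive definite: if $\|f\|_{G^2_{\sigma,j,\rho_0}} = 0$ then already the $n=0$ term forces $\int_0^{\rho_0} |f|^2 \, d\rho = 0$, hence $f \equiv 0$ since $f$ is continuous on $(0,\rho_0)$. Thus each $(G^2_{\sigma,j,\rho_0}, \la \cdot, \cdot \ra_{G^2_{\sigma,j,\rho_0}})$ is a genuine inner product space, and it only remains to prove completeness.

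For completeness, the key idea is to realize $G^2_{\sigma,j,\rho_0}$ as a closed subspace of a weighted $\ell^2$-direct sum of ordinary $L^2$ spaces. Concretely, consider the map $\Phi_j$ sending $f$ to the sequence of its derivatives: for $j=0$, $\Phi_0(f) = \big( \tfrac{\sigma^n}{n!(n+1)!} \partial_\rho^n f \big)_{n \geq 0} \in \bigoplus_{n \geq 0} L^2((0,\rho_0))$, with the analogous (finitely many extra slots per $n$) definitions for $j=1,2$, where for $j=2$ the top slot carries the weight $\rho^2$, i.e.\ the entry $\tfrac{\sigma^n}{n!(n+1)!}\rho^2 \partial_\rho^{n+2} f$. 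By construction $\Phi_j$ is an isometry from $G^2_{\sigma,j,\rho_0}$ into the Hilbert space $\mathcal{H}_j := \bigoplus_{n} L^2$ (respectively $\bigoplus_n L^2 \oplus L^2$, etc., and for $j=2$ the last factor is $L^2((0,\rho_0), \rho^4 \, d\rho)$, which is itself complete). So it suffices to show the image is closed. Given a Cauchy sequence $(f_k)$ in $G^2_{\sigma,j,\rho_0}$, the sequences $(\partial_\rho^n f_k)_k$ are Cauchy in $L^2((0,\rho_0))$ for every $n$ (each slot is controlled by the full norm), so there exist limits $g_n \in L^2((0,\rho_0))$ with $\partial_\rho^n f_k \to g_n$ in $L^2$; set $f := g_0$.

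The main obstacle — the only place where real work happens — is showing that $f = g_0$ is smooth with $\partial_\rho^n f = g_n$, so that $f \in C^\infty((0,\rho_0);\C)$ and $\Phi_j(f)$ is the limit of $\Phi_j(f_k)$. For this I would argue locally on compact subintervals $[a,b] \subset (0,\rho_0)$: on such an interval, $L^2$-convergence implies convergence in the sense of distributions, so $\partial_\rho g_n = g_{n+1}$ as distributions on $(a,b)$ for all $n$. Then a bootstrap using the fundamental theorem of calculus (or Sobolev embedding $H^1_{\mathrm{loc}} \hookrightarrow C^0_{\mathrm{loc}}$) shows that each $g_n$ has a continuous representative and $g_n = \partial_\rho g_{n-1}$ classically; iterating, $g_0 \in C^\infty((a,b))$ with $\partial_\rho^n g_0 = g_n$. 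Since $[a,b]$ was arbitrary, $f = g_0 \in C^\infty((0,\rho_0);\C)$ and $\partial_\rho^n f = g_n$ globally on $(0,\rho_0)$. Finally, $\sum_{n} \frac{\sigma^{2n}}{n!^2(n+1)!^2}\|\partial_\rho^n f\|_{L^2}^2 = \sum_n \lim_k \frac{\sigma^{2n}}{n!^2(n+1)!^2}\|\partial_\rho^n f_k\|^2 < \infty$ by Fatou's lemma applied to the counting measure, so $f \in G^2_{\sigma,j,\rho_0}$; and $\|f_k - f\|_{G^2_{\sigma,j,\rho_0}} \to 0$ again by Fatou applied to the Cauchy tails. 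This proves $G^2_{\sigma,j,\rho_0}$ is complete, hence a Hilbert space. The weight $\rho^4$ in the $j=2$ case needs only the trivial remark that $\rho^4 \geq a^4 > 0$ on $[a,b]$, so the local bootstrap for $g_{n+2}$ goes through identically; and the inclusion $G^2_{\sigma,2,\rho_0} \subseteq G^2_{\sigma,1,\rho_0} \subseteq G^2_{\sigma,0,\rho_0}$ is not needed for the argument but is a convenient sanity check.
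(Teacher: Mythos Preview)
Your proof is correct and follows essentially the same approach as the paper's: take a Cauchy sequence, use that the Gevrey norm controls all Sobolev norms to obtain a smooth limit $f$, then pass to the limit in the truncated sums to show $f$ lies in the space and is the norm limit. The paper invokes Sobolev embeddings directly to get smoothness and an explicit $\epsilon$--$N$ truncation argument for the final step, whereas you package things via an isometric embedding into a direct sum and use Fatou on the counting measure; these are cosmetic differences in an otherwise identical argument.
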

\begin{proof}
It is straightforward to see that $(G^2_{\sigma,j,\rho_0}, \la \cdot , \cdot \ra_{G^2_{\sigma,j,\rho_0}})$ are well-defined inner product spaces. We will show that every Cauchy sequence in $G^2_{\sigma,j,\rho_0}$ converges with respect to $||\cdot||_{G^2_{\sigma,j,\rho_0}}$. We will consider the case $j=0$. The cases $j=1,2$ can be treated similarly. Let $\{f_n\}$ be a Cauchy sequence in $G^2_{\sigma,0,\rho_0}$. By standard Sobolev embeddings on $(0,\rho_0)$ and completeness of the Sobolev spaces $H^N$, we must have that $f_n$ converges in $H^N$ to a smooth function $f$, for any $N\in \N$. 

By the Cauchy property of $\{f_n\}$, we have that for all $\epsilon>0$, there exists $L>0$ such that for all $k>l>L$ and for any $N\in \N_0$:
\begin{equation*}
\sum_{n=0}^{N} \frac{\sigma^{2n}}{n!^2(n+1)!^2}||\partial_{\rho}^n(f_k-f_{l})||^2_{L^2(0,\rho_0)}<\epsilon
\end{equation*}
Hence, by taking the limit $k \to \infty$ and using the convergence of $\{f_n\}$ in $H^N$, we obtain that for any $N\in \N_0$,
\begin{equation*}
\sum_{n=0}^{N} \frac{\sigma^{2n}}{n!^2(n+1)!^2}||\partial_{\rho}^n(f-f_{l})||^2_{L^2(0,\rho_0)}<\epsilon
\end{equation*}
We can now take the limit $N \to \infty$ to obtain $f-f_{\ell}\in G^2_{\sigma,0,\rho_0}$, and we can conclude that $f\in G^2_{\sigma,0,\rho_0}$.

\end{proof}

\begin{definition}
Let $0<\rho_0=(\rho_+)_0=(\rho_c)_0<\frac{r_c-r_+}{r_cr_+}$, with $r_c<\infty$ or $r_c=\infty$. For a function $f\in  C^{\infty}((r_+,r_c);\C)$, we denote $f_+=f|_{r\in (r_+,R^+_0)}$ and $f_c=f|_{r\in  (R_0^c,r_c)}$. We introduce the following inner product on $C^{\infty}((r_+,r_c);\C)$: let $f,g \in C^{\infty}((r_+,r_c);\C)$, then
\begin{align*}
\la f,g \ra_{\sigma,\rho_0}:=  &\:\la rf_+, r g_+ \ra_{G^2_{\sigma,0,\rho_0}}+ \la rf_c, r g_c \ra_{G^2_{\sigma,0,\rho_0}}+ \la f, g\ra_{L^2[R^+_0,R^c_0]},\\
\la f,g \ra_{\sigma,1,\rho_0}:= &\: \la rf_+, r g_+ \ra_{G^2_{\sigma,1,\rho_0} }+ \la rf_c, r g_c \ra_{G^2_{\sigma,1,\rho_0}}+ \la f, g\ra_{H^1[R^+_0,R^c_0]},\\
\la f,g \ra_{\sigma, 2, \rho_0}:= &\: \la rf_+, r g_+ \ra_{G^2_{\sigma,2,\rho_0} }+ \la rf_c, r g_c \ra_{G^2_{\sigma,2,\rho_0}}+ \la f, g\ra_{H^2[R^+_0,R^c_0]}.
\end{align*}
We denote the norms corresponding to $\la \cdot , \cdot \ra_{\sigma,\rho_0}$, $\la \cdot , \cdot \ra_{\sigma,1,\rho_0}$ and $\la \cdot , \cdot \ra_{\sigma,2,\rho_0}$ by $||\cdot||_{\sigma,\rho_0}$ , $||\cdot||_{\sigma,1,\rho_0}$ and  $||\cdot||_{\sigma,2,\rho_0}$, respectively. We then define the Hilbert spaces $H_{\sigma,\rho_0}$, $H_{\sigma,1,\rho_0}$, $H_{\sigma,2,\rho_0}$ as the completions of the spaces 
\begin{align*}
&\{f\in  C^{\infty}((r_+,r_c);\C)\:|\: ||rf_+||_{G^2_{\sigma,0,\rho_0}}+||rf_c||_{G^2_{\sigma,0,\rho_0}}<\infty\},\\
& \{f\in  C^{\infty}((r_+,r_c);\C)\:|\: ||rf_+||_{G^2_{\sigma,1,\rho_0}}+||rf_c||_{G^2_{\sigma,1,\rho_0}}<\infty\},\\
&\{f\in  C^{\infty}((r_+,r_c);\C)\:|\: ||rf_+||_{G^2_{\sigma,2,\rho_0}}+||rf_c||_{G^2_{\sigma,2,\rho_0}}<\infty\}.
\end{align*}
 with respect to the norms $||\cdot ||_{\sigma,\rho_0}$\,, $||\cdot ||_{\sigma,1,\rho_0}$ and $||\cdot ||_{\sigma,2,\rho_0}$, respectively.
\end{definition}

Let $\ell\in \N_0$ and consider the projection operators
\begin{align*}
\pi_{\ell}:& L^2(\s^2)\to L^2(\s^2),\\
\pi_{\ell}f=&f_{\ell}:=\sum_{m=-\ell}^{\ell} f_{\ell m} Y_{\ell m}(\theta,\varphi),
\end{align*}
with $f_{\ell m} \in \C$ and $Y_{\ell m}$, $m=-\ell,\ldots,\ell$ the spherical harmonics with angular momentum $\ell$.

The operator $\pi_{\ell}$ is well-defined on the  domains $C^{\infty}(\Sigma_0)$ and $C^{\infty}(S_0)$, and since it is a bounded linear operator with respect to $||\cdot||_{L^2(\Sigma_0)}$ and $||\cdot||_{L^2(S_0)}$, the following extensions are also well-defined:
\begin{align*}
\pi_{\ell}: L^2(\Sigma_0)\to L^2(\Sigma_0),\\
\pi_{\ell}: L^2(S_0)\to L^2(S_0).
\end{align*}
We use the notation $\pi_{\ell}$ both for the operator acting on $L^2(\s^2)$ and the extensions to $L^2(\Sigma_0)$ and $L^2(S_0)$, for the sake of notational convenience.

Let us now introduce the sets of fixed angular momentum $\ell$: 
\begin{equation*}
V_{\ell}=\ker (\pi_{\ell}-id)\times \ker (\pi_{\ell}-id)\subset L^2(\Sigma_0)\times L^2(S_0).
\end{equation*}

Note that $\partial_{\varphi} Y_{\ell m}=im Y_{\ell m}$. We denote moreover
\begin{equation*}
V_{\ell m}=(\ker (\pi_{\ell}-id)\times \ker (\pi_{\ell}-id))\cap (\ker(\partial_{\varphi}-im)\times  \ker(\partial_{\varphi}-im)).
\end{equation*}

\begin{definition}
Let $\Sigma_0=\underline{N}_0\cup S_0\cup N_0$. We introduce the following norm on $C^{\infty}(\Sigma_0;\C)\times C^{\infty}(S_0;\C)$: let $\Psi \in C^{\infty}(\Sigma_0;\C)$ and $\Psi'\in C^{\infty}(S_0;\C)$. We denote with $\psi$ the unique solution to the wave equation satisfying $\psi|_{\Sigma_0}=\Psi$ and $T\psi|_{S_0}=\Psi'$. Then:
\begin{align*}
||(\Psi,\Psi')||_{\mathbf{H}_{\sigma,\rho_0}}^2:=&\:\int_{\Sigma_0} \mathbb{T}(T,\mathbf{n}_{\Sigma_0})[\psi]\,d\mu_{\Sigma_0}+ \sum_{\ell=0}^{\infty}\Bigg[\sum_{n=1}^{\ell-1}\frac{\sigma^{2n}}{\ell^{2n}(\ell+1)^{2n}} (\ell+1)^4 \int_{\underline{N}_0} |\hat{\underline{L}}^n(r\Psi_{\ell})|^2\,d\omega d\rho_+ \\
&+\frac{\ell^{2\ell}(\ell+1)^{2\ell}}{\ell!^{2} (\ell+1)!^2} \sum_{n=\ell}^{\infty} \frac{\sigma^{2n}}{n!^2(n+1)!^2}  (n+1)^4\int_{\underline{N}_0} |\hat{\underline{L}}^n(r\Psi_{\ell})|^2\,d\omega d\rho_+\Bigg]\\
&+ \sum_{\ell=0}^{\infty}\Bigg[\sum_{n=1}^{\ell-1}\frac{\sigma^{2n}}{\ell^{2n}(\ell+1)^{2n}} (\ell+1)^4\int_{N_0} |\hat{{L}}^n(r\Psi_{\ell})|^2\,d\omega d\rho_c \\
&+\frac{\ell^{2\ell}(\ell+1)^{2\ell}}{\ell!^{2} (\ell+1)!^2} \sum_{n=\ell}^{\infty} \frac{\sigma^{2n}}{n!^2(n+1)!^2}  (n+1)^4 \int_{N_0} |\hat{{L}}^n(r\Psi_{\ell})|^2\,d\omega d\rho_c \Bigg].
\end{align*}
We denote with $\mathbf{H}_{\sigma,\rho_0}$ the completion of $\{ (f,g)\in C^{\infty}(\Sigma_0 ;\C)\times C^{\infty}(S_0;\C)\,|\, ||(f,g)||_{\mathbf{H}_{\sigma,\rho_0}}<\infty\}$ with respect to the norm $||(\cdot,\cdot)||_{\mathbf{H}_{\sigma,\rho_0}}$. Furthermore, $\mathbf{H}_{\sigma,\rho_0}$  is a Hilbert space with respect to the natural choice of inner product.
\end{definition}

In the proposition below, we state useful relations between the Hilbert spaces $H_{\sigma,i, \rho_0}$, with $i=0,1$, and $\mathbf{H}_{\sigma,\rho_0}$.

\begin{proposition}
Let $\ell\in \N_0$.
\begin{itemize}
\item[(i)]
Let $\Psi_{\ell m}\in H_{\sigma,1,\rho_0}$ and $\Psi_{\ell m}'\in L^2[ R_0^+,R_0^c]$ for all $|m|\leq \ell$. Denote $\Psi_{\ell}(r,\theta,\varphi)=\sum_{m=-\ell}^{\ell}\Psi_{\ell m}(r)Y_{\ell m}(\theta,\varphi)$ and $\Psi'_{\ell}(r,\theta,\varphi)=\sum_{m=-\ell}^{\ell}\Psi'_{\ell m}(r)Y_{\ell m}(\theta,\varphi)$. Then 
\begin{equation*}
(\Psi_{\ell},\Psi_{\ell}')\in \mathbf{H}_{\sigma,\rho_0}\cap V_{\ell}.
\end{equation*}
\item[(ii)]
Let $(\Psi_{\ell},\Psi_{\ell}')\in \mathbf{H}_{\sigma,\rho_0}\cap V_{\ell}$. Then we can write
\begin{equation*}
(\Psi_{\ell},\Psi_{\ell}')(r,\theta,\varphi)=\sum_{m=-\ell}^{m=\ell} (\Psi_{\ell m}(r) Y_{\ell m}(\theta,\varphi),\Psi'_{\ell m}(r) Y_{\ell m}(\theta,\varphi))
\end{equation*}
and we have that
\begin{equation*}
\Psi_{\ell m}\in H_{\sigma,1,\rho_0}\quad\textnormal{and}\quad  \Psi_{\ell m}'\in L^2[R_0^+,R_0^c].
\end{equation*}
\end{itemize}
\end{proposition}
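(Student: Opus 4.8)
The plan is to prove the proposition by first establishing, for \emph{smooth} data, a two-sided norm equivalence $\|(\Psi_\ell,\Psi'_\ell)\|_{\mathbf{H}_{\sigma,\rho_0}}^2 \sim \sum_{|m|\le\ell}\big(\|\Psi_{\ell m}\|_{H_{\sigma,1,\rho_0}}^2+\|\Psi'_{\ell m}\|_{L^2[R_0^+,R_0^c]}^2\big)$ with constants depending on $\ell$, $\sigma$ and the spacetime parameters, and then passing to completions. Three reductions set this up. First, smooth functions are dense in each of the spaces by construction, so it suffices to treat smooth $\Psi_{\ell m}$, $\Psi'_{\ell m}$ and the associated $\Psi_\ell=\sum_m\Psi_{\ell m}Y_{\ell m}$, $\Psi'_\ell=\sum_m\Psi'_{\ell m}Y_{\ell m}$. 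Second, by orthonormality of $\{Y_{\ell m}\}$ on $\s^2$ (together with $\int_{\s^2}\snabla Y_{\ell m}\cdot\overline{\snabla Y_{\ell m'}}=\ell(\ell+1)\delta_{mm'}$), every integral over $\underline{N}_0$, $S_0$ or $N_0$ in the definition of $\|(\Psi_\ell,\Psi'_\ell)\|_{\mathbf{H}_{\sigma,\rho_0}}$ splits as $\sum_{|m|\le\ell}$ of a one-dimensional radial integral. Third, since $\tfrac{d\rho_+}{dr}=r^{-2}$ and $\tfrac{d\rho_c}{dr}=-r^{-2}$, the vector field $\hat{\underline{L}}$ acts as $\partial_{\rho_+}$ on the constant-$v$ hypersurface $\underline{N}_0$ and $\hat{L}$ acts as $\partial_{\rho_c}$ on the constant-$u$ hypersurface $N_0$, so $\int_{\underline{N}_0}|\hat{\underline{L}}^n(r\Psi_\ell)|^2\,d\omega\,d\rho_+ = \sum_{|m|\le\ell}\int_0^{\rho_0}|\partial_{\rho}^n(r\Psi_{\ell m})|^2\,d\rho$ (with $\rho=\rho_+$) and analogously on $N_0$ (with $\rho=\rho_c$). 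After these reductions the two sums over $n$ in $\|(\Psi_\ell,\Psi'_\ell)\|_{\mathbf{H}_{\sigma,\rho_0}}$ become, for each $m$, the Gevrey sums defining the $G^2$-norm of $r\Psi_{\ell m}$ near the horizon and near $\mathcal{I}^+$ (resp.\ $\mathcal{C}^+$), only with different weights, and the remaining work is to compare these weights and to handle the energy term.

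For the weight comparison: when $n\ge\ell$, the coefficient of $\int_0^{\rho_0}|\partial_\rho^n(r\Psi_{\ell m})|^2\,d\rho$ in $\|(\Psi_\ell,\Psi'_\ell)\|_{\mathbf{H}_{\sigma,\rho_0}}^2$ is $\tfrac{\ell^{2\ell}(\ell+1)^{2\ell}}{\ell!^2(\ell+1)!^2}\cdot\tfrac{\sigma^{2n}}{n!^2(n+1)!^2}(n+1)^4$, whereas in $\|r\Psi_{\ell m}\|_{G^2_{\sigma,1,\rho_0}}^2$ it is $\tfrac{\sigma^{2n}}{n!^2(n+1)!^2}+\tfrac{\sigma^{2(n-1)}}{(n-1)!^2\,n!^2}=\tfrac{\sigma^{2n}}{n!^2(n+1)!^2}\big(1+\sigma^{-2}n^2(n+1)^2\big)$ for $n\ge1$; the ratio equals $\tfrac{\ell^{2\ell}(\ell+1)^{2\ell}}{\ell!^2(\ell+1)!^2}\,(n+1)^4\big(1+\sigma^{-2}n^2(n+1)^2\big)^{-1}$, which is bounded above and below by constants depending only on $\ell$ and $\sigma$, uniformly in $n\ge\ell$. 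In other words, the factor $(n+1)^4$ in the definition of $\|\cdot\|_{\mathbf{H}_{\sigma,\rho_0}}$ is precisely what promotes $G^2_{\sigma,0,\rho_0}$ to $G^2_{\sigma,1,\rho_0}$. For the finitely many indices $1\le n\le\ell-1$, the corresponding $\mathbf{H}_{\sigma,\rho_0}$-weight $\tfrac{\sigma^{2n}}{\ell^{2n}(\ell+1)^{2n}}(\ell+1)^4$ and the $G^2_{\sigma,1,\rho_0}$-weight are both strictly positive and finite, so their ratio is harmlessly bounded above and below by $\ell,\sigma$-dependent constants. Finally, the $n=0$ term $\int_0^{\rho_0}|r\Psi_{\ell m}|^2\,d\rho$ occurs directly in $\|r\Psi_{\ell m}\|_{G^2_{\sigma,1,\rho_0}}^2$; on the $\mathbf{H}_{\sigma,\rho_0}$-side it appears directly in the $n=0$ summand when $\ell=0$, and for $\ell\ge1$ it is recovered from the $n=1$ term (which is in the sum) together with the trace of $r\Psi_{\ell m}$ at $R_0^+$ (resp.\ $R_0^c$), by Cauchy--Schwarz over the bounded $\rho$-interval.

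The hard part is the energy term $\int_{\Sigma_0}\mathbb{T}(T,\mathbf{n}_{\Sigma_0})[\psi]\,d\mu_{\Sigma_0}$, which is defined through the solution $\psi$ with $\psi|_{\Sigma_0}=\Psi_\ell$, $T\psi|_{S_0}=\Psi'_\ell$ rather than through the data directly. Splitting the flux over $\underline{N}_0$, $S_0$ and $N_0$: on $S_0$ the Killing field $T$ is uniformly timelike, so by coercivity of the energy density the flux through $S_0$ is comparable (with constants depending on $\ell$, $\Lambda$ and the spacetime parameters) to $\sum_{|m|\le\ell}\big(\|\Psi_{\ell m}\|_{H^1[R_0^+,R_0^c]}^2+\|\Psi'_{\ell m}\|_{L^2[R_0^+,R_0^c]}^2\big)$; in the case $\Lambda=0$, $\ell\ge1$ the angular term $\ell(\ell+1)r^{-2}|\psi|^2$ restores the zeroth-order coercivity, while for $\Lambda=0$, $\ell=0$ the missing zeroth-order control comes instead from the $n=0$ term on the null pieces. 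The fluxes through $\underline{N}_0$ and $N_0$ are degenerate (the transverse derivative enters only as $(\underline{L}\psi)^2=\tfrac{D^2}{4r^4}(\hat{\underline{L}}\psi)^2$, with $D\to0$ at $r_+$, and similarly at $\mathcal{I}^+$ or $\mathcal{C}^+$), hence are dominated by the $n=1$ Gevrey term already present and add nothing new. Combining this with the weight comparison yields the desired two-sided equivalence on smooth data; since the coefficient map $(\Psi_{\ell m},\Psi'_{\ell m})_{|m|\le\ell}\mapsto(\Psi_\ell,\Psi'_\ell)$ and its inverse (extraction of spherical-harmonic coefficients) are then bounded between dense subspaces, they extend to mutually inverse isomorphisms of the completions, and since $\pi_\ell$ is bounded on $L^2(\Sigma_0)\times L^2(S_0)\supseteq\mathbf{H}_{\sigma,\rho_0}$ and commutes with $T$, $\hat{\underline{L}}$ and $\hat{L}$, the space $V_\ell$ is closed, so $\mathbf{H}_{\sigma,\rho_0}\cap V_\ell$ is well-defined; this gives precisely (i) and (ii). I expect the real difficulty to be making the energy comparison on $S_0$ precise, especially the degenerate $\Lambda=0$, $\ell=0$ subcase where the flux is invariant under $\psi\mapsto\psi+\mathrm{const}$ and the $L^2$-bound on $\Psi_0$ must be traced entirely through the transport equation for $r\psi$ along the null generators of $\underline{N}_0$ and $N_0$; this relies on the characteristic well-posedness and transport estimates for \eqref{eq:waveequation} established in Section \ref{sec:propgevreyphys}.
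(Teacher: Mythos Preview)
Your proposal is correct and follows the same approach as the paper, which dismisses the proof as ``a straightforward application of the definitions'' and singles out only the observation that the factor $(n+1)^4$ in the $\mathbf{H}_{\sigma,\rho_0}$-norm is what promotes $G^2_{\sigma,0,\rho_0}$-control to $G^2_{\sigma,1,\rho_0}$-control. You have essentially filled in every detail that the paper omits: the spherical-harmonic orthogonality, the identification $\hat{\underline{L}}=\partial_{\rho_+}$, $\hat{L}=\partial_{\rho_c}$ on the null pieces, and the explicit weight comparison.

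One simplification: you overstate the difficulty of the energy term. A direct computation in $(v,r)$ coordinates gives $\mathbb{T}(T,\underline{L})=(\underline{L}\psi)^2+\tfrac{D}{4r^2}|\snabla\psi|^2+\tfrac{D}{2}\mathfrak{l}^{-2}\psi^2$ on $\underline{N}_0$ (and analogously on $N_0$), so the $T$-flux through $\Sigma_0$ is an explicit quadratic form in $(\Psi,\Psi')$ and their tangential derivatives---no appeal to well-posedness or transport along characteristics is needed. In particular, for $\Lambda=0$, $\ell=0$, the missing $L^2$-control of $\Psi_{00}$ on $[R_0^+,R_0^c]$ comes from the $n=0$ term of the Gevrey sum on $\underline{N}_0$ (which is present when $\ell=0$, since the second sum in the definition of $\|\cdot\|_{\mathbf{H}_{\sigma,\rho_0}}$ then starts at $n=0$) together with the $H^1$-control on $S_0$, via the fundamental theorem of calculus; Section~\ref{sec:propgevreyphys} is not required.
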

\begin{proof}
The proof is a straightforward application of the definitions of $\mathbf{H}_{\sigma,\rho_0}$, $H_{\sigma,\rho_0}$, and $H_{\sigma,1,\rho_0}$. We use that the factor $(n+1)^4$ appearing in the infinite sums in the definition of $||\cdot||_{\mathbf{H}_{\sigma,\rho_0}}$ implies control over similar infinite sums with one additional $\rho_+$ or $\rho_c$ derivative, but no factor $(n+1)^4$. This allows us to conclude that not only $\Psi_{\ell m}\in H_{\sigma,\rho_0}$, but in fact $\Psi_{\ell m}\in H_{\sigma,1,\rho_0}$.
\end{proof}

We will moreover need the following Hilbert spaces in the case where $\kappa_+$ and $\kappa_c$ are strictly positive. Let $k\in \N$ and define
\begin{align*}
\widetilde{H}^k:=&\:L^2((R^+_0,R^c_0)\times \s^2)\cap H^k((r_+,R^+_0) \times \s^2) \cap H^k((R^c_0,r_c)\times \s^2),\\
\widetilde{H}^{k}_2:=&\:H^2((R^+_0,R^c_0)\times \s^2)\cap H^{k}((r_+,R^+_0) \times \s^2) \cap H^{k}((R^c_0,r_c)\times \s^2).
\end{align*}

\section{Precise statements of the main theorems}
\label{sec:mainthms}
In this section, we state the main theorems of the paper. We make use of the notation and concepts introduced in Sections \ref{sec:geom} and \ref{sec:enspaces}.
\subsection{Construction of regularity quasinormal modes}
We first provide a construction of regularity quasinormal modes in the extremal Reissner--Nordstr\"om setting (as referred to in Section \ref{sec:roughmainthms}) and describe their relation with the scattering resonances (cf. Approach 2 of Section \ref{sec:tradapproach}).
\begin{theorem}
\label{thm:discreigen}
Assume that either $\kappa_+,\kappa_c> 0$ or $\kappa_+=\kappa_c=0$. Let $\sigma \in {\R_{>0}}$ and let $\rho_0>0$ be suitably small. Then the family of solution operators
\begin{align*}
\mathcal{S}(\tau): \mathbf{H}_{\sigma,\rho_0}&\to\mathbf{H}_{\sigma,\rho_0},\\
  (\psi|_{\Sigma_{0}}, T\psi|_{S_{0}})\:& \mapsto  (\psi|_{\Sigma_{\tau}}, T\psi|_{S_{\tau}}),
\end{align*}
with $\psi$ the solution to \eqref{eq:waveequation} corresponding to initial data $(\psi|_{\Sigma_{0}}, T\psi|_{S_{0}})$, define a $C_0$-semigroup and the corresponding infinitesimal generator $\mathcal{A}: \mathbf{H}_{\sigma,\rho_0} \supseteq \mathcal{D}_{\sigma}(\mathcal{A})\to \mathbf{H}_{\sigma,\rho_0}$ satisfies the following properties: let 
\begin{equation*}
\Omega_{\sigma}=\left\{ z\in \C\:|\: \re z<0,\: |z|<\sigma,\: 3(\im z)^2-5(\re z)^2>{\sigma}^2\right\}\cup \{ z\in \C\:|\: \re z \geq 0, z\neq 0\}\subset \C,
\end{equation*}
then
\begin{enumerate}[\rm (i) ]
\item
\begin{equation*}
\textnormal{Spect}_{\rm point} (\mathcal{A})\cap \Omega_{\sigma}=\Lambda^{\sigma}_{QNF}
\end{equation*}
is independent of $\rho_0$, with $\Lambda^{\sigma}_{QNF}\subset \Omega_{\sigma}\cap \{\re z <0\}$ a discrete set of eigenvalues which moreover have finite multiplicity.
\item
The set
\begin{equation*}
\Lambda_{QNF}:=\bigcup_{\sigma \in {\R_{>0}}}\Lambda^{\sigma}_{QNF}\subset \left\{-\re z<\frac{1}{2}|z| \right\}=\left\{|\textnormal{arg}(z)|<\frac{2}{3}\pi\right\},
\end{equation*}
is a discrete subset of $\left\{|\textnormal{arg}(z)|<\frac{2}{3}\pi\right\}$ (i.e.\ with accumulation points only possible on the boundary of $\left\{|\textnormal{arg}(z)|<\frac{2}{3}\pi\right\}$ in $\C$).
\end{enumerate}
\end{theorem}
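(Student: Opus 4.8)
The plan is to extract the spectral properties of $\mathcal{A}$ from a resolvent construction whose engine is a family of $L^2$-based ``Gevrey estimates'' for the fixed-frequency, fixed-angular-momentum reduced equation, summed over all angular momenta. First I would establish the physical-space estimates of Section~\ref{sec:propgevreyphys}: that the evolution propagates the norm, $\|\mathcal{S}(\tau)(\Psi,\Psi')\|_{\mathbf{H}_{\sigma,\rho_0}}\lesssim\|(\Psi,\Psi')\|_{\mathbf{H}_{\sigma,\rho_0}}$ uniformly in $\tau\geq 0$, together with strong continuity as $\tau\downarrow 0$. The zeroth-order term is handled by the non-degenerate energy flux through $\mathcal{H}^+$ and $\mathcal{I}^+$; the higher $\hat{\underline{L}}^n$- and $\hat{L}^n$-weighted terms in $\|\cdot\|_{\mathbf{H}_{\sigma,\rho_0}}$ are controlled by commuting \eqref{eq:waveequation} with these vector fields and exploiting the conserved quantities of Newman--Penrose \cite{np2} along $\mathcal{I}^+$ and of Aretakis \cite{aretakis1} along $\mathcal{H}^+$ in the $\kappa_+=\kappa_c=0$ case, or the enhanced red-shift estimates of \cite{redshift} in the $\kappa_+,\kappa_c>0$ case, the estimates being arranged to hold uniformly as $\kappa_+,\kappa_c\downarrow 0$. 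Granting this, $\mathcal{S}(\tau)=e^{\tau\mathcal{A}}$ is a $C_0$-semigroup, so by Hille--Yosida $\mathcal{A}$ is automatically closed and densely defined, and the uniform bound gives $(\mathcal{A}-s)^{-1}=-\int_0^\infty e^{-s\tau}\mathcal{S}(\tau)\,d\tau$ bounded for all $\re s>0$; in particular $\mathcal{A}$ has no point spectrum with $\re s>0$.

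The core is the frequency-space analysis of Sections~\ref{sec:degelliptic}--\ref{sec:lowfreqgev}. An eigenvalue $\mathcal{A}\Psi=s\Psi$ corresponds to a mode solution $\psi=e^{s\tau}\hat\psi$ of \eqref{eq:waveequation}, and more generally $(\mathcal{A}-s)\Psi=F$ reduces, after a spherical-harmonic decomposition, to the ODEs $\hat{\mathcal{L}}_{s,\ell}(\hat\psi_{\ell m})=f_{\ell m}$. The goal is a coercive a priori bound, valid for all $s\in\Omega_\sigma$,
\[
\|\Psi\|_{\mathbf{H}_{\sigma,\rho_0}}\lesssim\|(\mathcal{A}-s)\Psi\|_{\mathbf{H}_{\sigma,\rho_0}}+\|\Psi\|_{K},
\]
with $\|\cdot\|_K$ a norm that embeds compactly in $\mathbf{H}_{\sigma,\rho_0}$, obtained by splitting into a degenerate-elliptic regime, a high-frequency regime ($|s|\to\infty$ or $\ell\to\infty$) and a low-frequency regime near $s=0$, and then reassembling the sum over $\ell$ --- here the weights $(n+1)^4$ and $\ell^{2\ell}(\ell+1)^{2\ell}/(\ell!^2(\ell+1)!^2)$ built into $\|\cdot\|_{\mathbf{H}_{\sigma,\rho_0}}$ are exactly what is needed to close the $\ell$-sum with $\ell$-uniform constants, and $\rho_0$ must be small enough for the error terms in the Gevrey estimates to be absorbed. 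Together with the analogous estimate for the adjoint operator, this shows $\mathcal{A}-s\colon\mathcal{D}_{\sigma}(\mathcal{A})\to\mathbf{H}_{\sigma,\rho_0}$ is Fredholm of index zero, holomorphically in $s$ on $\Omega_\sigma$.

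Since $\Omega_\sigma$ is connected (its two petals in $\{\re z<0\}$ are joined through the half-plane $\{\re z\geq 0\}$) and $\mathcal{A}-s$ is invertible for $\re s>0$, the analytic Fredholm theorem (Section~\ref{sec:constrresolvent}) gives that $(\mathcal{A}-s)^{-1}$ exists and is bounded on $\Omega_\sigma$ away from a discrete subset, the points of which are eigenvalues of finite multiplicity; this set is $\Lambda^{\sigma}_{QNF}$. To place $\Lambda^{\sigma}_{QNF}$ inside $\{\re z<0\}$ it remains to rule out eigenvalues on the imaginary axis, which would be finite-energy mode solutions at a real frequency; since membership in $\mathbf{H}_{\sigma,\rho_0}$ forces the outgoing branch at both $\mathcal{H}^+$ and $\mathcal{I}^+$, such a solution has non-negative energy flux through $\mathcal{H}^+\cup\mathcal{I}^+$, and comparison with the vanishing $\tau$-derivative of its conserved energy forces $\hat\psi\equiv 0$ (alternatively one invokes unique continuation). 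Independence of $\rho_0$ follows because eigenfunctions are mode solutions, real-analytic in the interior $r\in(r_+,r_c)$, while the membership condition for $\mathbf{H}_{\sigma,\rho_0}$ is a Gevrey-$\sigma$ condition localised near $\rho_+=0$ and $\rho_c=0$; Gevrey-$\sigma$ regularity on $[0,\rho_0]$ is equivalent to Gevrey-$\sigma$ regularity on $[0,\rho_0']$ for any other suitably small $\rho_0'$ (shrinking is immediate, enlarging uses interior analyticity and the ODE), so neither the eigenvalue condition nor the eigenspace sees $\rho_0$. For part~(ii), the elementary computation is that, for $\re z<0$, one has $z\in\Omega_\sigma$ for some $\sigma>0$ precisely when $|z|^2<3(\im z)^2-5(\re z)^2$, i.e.\ $(\im z)^2>3(\re z)^2$, i.e.\ $-\re z<\tfrac12|z|$, i.e.\ $|\mathrm{arg}(z)|<\tfrac23\pi$; moreover the same inequalities force $\sigma>|z|>2|\re z|$, which (compare the toy model of Section~\ref{intro:toy} and \cite{gajwar19b}) is the margin ensuring that the Gevrey-$\sigma$ condition selects the same branch of mode solution for every admissible $\sigma$, so $\Lambda^{\sigma}_{QNF}=\Lambda_{QNF}\cap\Omega_\sigma$. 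Consequently, near any $z_0$ with $|\mathrm{arg}(z_0)|<\tfrac23\pi$ one may fix a single $\sigma$ with $z_0$ interior to $\Omega_\sigma$, and $\Lambda_{QNF}$ coincides there with the discrete set $\Lambda^{\sigma}_{QNF}$, giving discreteness of $\Lambda_{QNF}$ in the open sector.

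The main obstacle is the second step: establishing the fixed-$(s,\ell)$ Gevrey coercivity uniformly down into the sector $\Omega_\sigma$ --- which is where the new $L^2$-based Gevrey estimates, and their link to the $\mathcal{H}^+$/$\mathcal{I}^+$ conservation laws, must be developed from scratch --- and then summing over $\ell$ with $\ell$-uniform constants so that the estimate holds for the full generator $\mathcal{A}$ rather than mode by mode. The functional-analytic consequences (Fredholm theory, location of eigenvalues, $\rho_0$-independence, and the sector computation) are comparatively routine.
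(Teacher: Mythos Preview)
Your overall architecture is close in spirit, but two substantive points diverge from the paper and the first of them is a genuine gap.

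\textbf{The global Fredholm estimate for $\mathcal{A}-s$ is not what the paper does, and your justification for it does not hold.} The paper never proves a coercivity estimate of the form $\|\Psi\|_{\mathbf{H}_{\sigma,\rho_0}}\lesssim \|(\mathcal{A}-s)\Psi\|_{\mathbf{H}_{\sigma,\rho_0}}+\|\Psi\|_{K}$ on the full space. Instead it works entirely mode by mode: it relates $(\mathcal{A}_\ell-s)^{-1}$ to $\hat L_{s,\ell,\kappa}^{-1}$ (Proposition~\ref{prop:relationAL}), shows $\hat L_{s,\ell+\lambda,\kappa}^{-1}:H_{\sigma,\rho_0}\to H_{\sigma,2,\rho_0}$ is bounded and compact for $\lambda$ large (Proposition~\ref{prop:inversepositivekappa}), and then applies analytic Fredholm to each fixed $\ell$ (Proposition~\ref{prop:fredholm}). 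The passage to the full $\mathcal{A}$ is via Proposition~\ref{prop:relationAL}(ii): for any $s\in\Omega_\sigma$ there is $L=L(s)$ such that $(\mathcal{A}_\ell-s)^{-1}$ exists for all $\ell>L$, which forces every element of $\ker(\mathcal{A}-s)$ to be supported on $\ell\le L$. This is how the union over $\ell$ remains discrete. Your claim that ``the weights\ldots are exactly what is needed to close the $\ell$-sum with $\ell$-uniform constants'' is not supported by the estimates actually proved: the Gevrey bound of Theorem~\ref{thm:mainthmpaper} and Corollary~\ref{cor:maingevreysumbound} carries a factor $A^{\ell}$ (equivalently $C_{\ell,\lambda}^{-1}$ on the left), so after reinserting the $\mathbf{H}_{\sigma,\rho_0}$ weights you only control a strictly \emph{weaker}-in-$\ell$ norm of $\Psi$ by $\|(\mathcal{A}-s)\Psi\|_{\mathbf{H}_{\sigma,\rho_0}}$, which is not a Fredholm estimate on $\mathbf{H}_{\sigma,\rho_0}$. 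Relatedly, your plan needs an adjoint estimate at $\kappa=0$; the paper has no such estimate (Appendix~\ref{sec:red-shift} treats only $\kappa>0$) and instead constructs $\hat L_{s,\ell+\lambda,0}^{-1}$ as a uniform-operator-norm limit of $\hat L_{s,\ell+\lambda,\kappa_n}^{-1}$ as $\kappa_n\downarrow 0$ (Proposition~\ref{prop:resolventzerokappa}). This limiting construction, together with the mode-by-mode Fredholm theory, is the actual engine behind Corollary~\ref{cor:mainresult} from which Theorem~\ref{thm:discreigen} is read off.

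\textbf{The semigroup bound is not uniform in $\tau$.} Your assertion that $\|\mathcal{S}(\tau)\|$ is uniformly bounded is false in extremal Reissner--Nordstr\"om: the higher-order transversal derivatives grow along $\mathcal{H}^+$ (the Aretakis instability) and along $\mathcal{I}^+$, so the conserved quantities you invoke do not give a $\tau$-uniform bound on the Gevrey part of the norm; see Remark~\ref{rmk:relationapproach12} and the explicit exponential bound $\|\mathcal{S}(\tau)\|^2\le Ce^{B\sigma\tau}$ in~\eqref{eq:mainboundS}. Consequently the Laplace-transform formula for the resolvent only gives boundedness for $\re s>B$ (Theorem~\ref{thm:semigroup}(iii)), not for all $\re s>0$; the absence of eigenvalues in $\{\re s\ge 0\}$ is obtained separately via the degenerate $T$-energy estimate (Corollary~\ref{cor:mainresult}(iii)), essentially the flux argument you sketch. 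This point is minor and easily repaired, but it matters for where you can anchor the analytic Fredholm argument. Your treatments of $\rho_0$-independence and of part~(ii) are in the right direction; in the paper both are carried out by going through the $\kappa>0$ approximants (Propositions~\ref{prop:indprho0} and~\ref{prop:moreregqnm}) rather than by a direct interior-analyticity argument.
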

Theorem \ref{thm:discreigen} follows from Corollary \ref{cor:mainresult}. See also Figure \ref{fig:res} for a pictorial representation of $\Omega_{\sigma}$.

\begin{definition}
\label{def:regqnf}
We refer to the elements of $\Lambda_{QNF}$ as \textbf{regularity quasinormal frequencies} and the corresponding eigenvectors as \textbf{regularity quasinormal modes}. 
\end{definition}

\begin{theorem}
\label{thm:polesresolvent}
Assume that either $\kappa_+,\kappa_c> 0$ or $\kappa_+=\kappa_c=0$, and consider the operator
\begin{equation*}
\mathcal{L}_s(\hat{\psi})=\frac{d^2}{dr_*^2}(r\hat{\psi})-s^2 r\hat{\psi}+Dr^{-2}\mathring{\slashed{\Delta}}(r\hat{\psi})-r^{-1}DD' \cdot r\hat{\psi}-2\mathfrak{l}^{-2}D\cdot r\hat{\psi},
\end{equation*}
with $\mathring{\slashed{\Delta}}(\cdot)= \frac{1}{\sin \theta} \partial_{\theta}(\sin \theta \partial_{\theta}(\cdot))+\frac{1}{\sin^2\theta} \partial_{\varphi}^2(\cdot)$ the standard Laplacian on the unit round sphere.
\begin{enumerate}[\rm (i) ]
\item Then for all $\re s>0$ and any two smooth, compactly supported cut-off functions $\chi,\chi': (r_+,r_c)_r\to \R$ (with $r_c\leq \infty$), the inverse map
\begin{equation*}
\chi' \circ R(s)\circ \chi:=\chi' \circ \mathcal{L}_s^{-1}\circ \chi: L^2(\{t=0\})\to H^2(\{t=0\})
\end{equation*}
defines a holomorphic family of bounded linear operators that admits a meromorphic continuation to $\left\{|\textnormal{arg}(z)|<\frac{2}{3}\pi\right\} \subset \C$ with the set of poles $\Lambda_{QNF}^{\rm res}$ satisfying
\begin{equation*}
\Lambda_{QNF}^{\rm res} \subseteq \Lambda_{QNF}.
\end{equation*}
\item For all $\sigma \in {\R_{>0}}$ and for $\rho_0>0$ suitably small, there exist $\mathcal{S}(\tau)$-invariant subspaces $\mathbf{H}_{\sigma,\rho_0}^{\rm res}< \mathbf{H}_{\sigma,\rho_0}$ such that the infinitesimal generators $\mathcal{A}^{\rm res}$ corresponding to the restricted operators $\mathcal{S}(\tau): \mathbf{H}_{\sigma,\rho_0}^{\rm res}\to\mathbf{H}_{\sigma,\rho_0}^{\rm res}$ satisfy:
\begin{equation*}
\bigcup_{\sigma \in {\R_{>0}}} \textnormal{Spect}_{\rm point} (\mathcal{A}^{\rm res})\cap \Omega_{\sigma}=\Lambda_{QNF}^{\rm res}.
\end{equation*}
\end{enumerate}
\end{theorem}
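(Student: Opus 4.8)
The plan is to realize the cut-off static resolvent $\chi'\circ R(s)\circ\chi$ as a $\sigma$-independent localization of the resolvent $(\mathcal{A}-s)^{-1}$ of the time-translation generator on $\mathbf{H}_{\sigma,\rho_0}$, and then transfer the meromorphic structure supplied by Theorem \ref{thm:discreigen}. Since $R_0^+$ and $R_0^c$ are free parameters in the construction of the foliation $\{\Sigma_\tau\}$, I would first fix them so that $\supp\chi\cup\supp\chi'\subset(R_0^+,R_0^c)$; on a neighbourhood of this set the slices $\Sigma_\tau$ coincide with the static slices $\{t=\tau\}$, so on $S_0$ the Laplace transform in the foliation time $\tau$ of a solution $\psi$ to \eqref{eq:waveequation} agrees with its Laplace transform in $t$, and one may identify $\mathcal{L}_s$ (and the associated source term) computed for the two foliations. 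Holomorphy and boundedness of $R(s)\colon L^2(\{t=0\})\to H^2(\{t=0\})$ for $\re s>0$ then follow from uniform energy boundedness of solutions, summed over all spherical harmonics (using \cite{aretakis1,Civin2014}), together with elliptic regularity applied to $\mathcal{L}_s\hat\psi=f$ and the holomorphy of $s\mapsto\mathcal{L}_s$.

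Next I would produce the identification. Given $g\in L^2(\{t=0\})$, let $\Psi'_g\in L^2(S_0)$ be chosen so that the Laplace-transformed source associated with the initial data $(\Psi,\Psi')=(0,\Psi'_g)$ on $\Sigma_0$ equals $\chi g$, and set $\iota(g):=(0,\Psi'_g)$. Because the first component of $\iota(g)$ vanishes, all of the infinite Gevrey sums in $\|\cdot\|_{\mathbf{H}_{\sigma,\rho_0}}$ vanish, and the norm reduces to the $T$-energy flux of the associated solution through $\Sigma_0$, which is controlled by $\|g\|_{L^2}^2$ via the physical-space estimates of Section \ref{sec:propgevreyphys}; hence $\iota\colon L^2(\{t=0\})\to\mathbf{H}_{\sigma,\rho_0}$ is bounded and $s$-independent. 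Using $\partial_\tau=T$ and $(\mathcal{A}-s)^{-1}=\int_0^\infty e^{-s\tau}\mathcal{S}(\tau)\,d\tau$ for $\re s$ large, the first component $\pi_1\big((\mathcal{A}-s)^{-1}\iota(g)\big)$ is exactly the $\tau$-Laplace transform $\hat\psi(s,\cdot)$ of the corresponding solution, restricted to $\Sigma_0$, which solves $\mathcal{L}_s\hat\psi=\chi g$ on $S_0$. Therefore, where $\chi'$ is supported,
\[
\chi'\circ R(s)\circ\chi \;=\; \chi'\cdot\big(\pi_1\circ(\mathcal{A}-s)^{-1}\circ\iota\big),
\]
first on $\re s>0$ and then, by uniqueness of analytic continuation, wherever either side is defined; since the restriction to $S_0$ of the first component of any element of $\mathcal{D}_\sigma(\mathcal{A})$ lies in $H^2$, the right-hand side indeed takes values in $H^2(\{t=0\})$. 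By Theorem \ref{thm:discreigen}(i) together with the construction underlying Corollary \ref{cor:mainresult}, $(\mathcal{A}-s)^{-1}$ is meromorphic on each $\Omega_\sigma$ with poles contained in $\Lambda^\sigma_{QNF}$; composing with the bounded, $s$-independent maps $\iota$ and $\chi'\cdot\pi_1$, and taking the union over $\sigma\in\R_{>0}$ (so that $\bigcup_\sigma\Omega_\sigma=\{|\textnormal{arg}(z)|<\frac{2}{3}\pi\}$), we obtain the meromorphic continuation of $\chi'\circ R(s)\circ\chi$ to $\{|\textnormal{arg}(z)|<\frac{2}{3}\pi\}$ with pole set $\Lambda_{QNF}^{\rm res}\subseteq\Lambda_{QNF}$; independence of the continuation from $\sigma$ and $\rho_0$ is automatic since it agrees with $R(s)$ on $\re s>0$.

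For part (ii), each $s_0\in\Lambda^\sigma_{QNF}$ is, by Theorem \ref{thm:discreigen}(i), an isolated eigenvalue of $\mathcal{A}$ of finite multiplicity, hence carries a Riesz projection $P_{s_0}=\frac{1}{2\pi i}\oint_{|z-s_0|=\epsilon}(\mathcal{A}-z)^{-1}\,dz$ with finite-dimensional range equal to the generalized eigenspace; the $P_{s_0}$ are mutually annihilating idempotents commuting with $\mathcal{A}$ and with every $\mathcal{S}(\tau)$. I would then set
\[
\mathbf{H}_{\sigma,\rho_0}^{\rm res}:=\overline{\bigoplus\nolimits_{s_0\in\Lambda_{QNF}^{\rm res}}\textnormal{ran}(P_{s_0})}\ <\ \mathbf{H}_{\sigma,\rho_0},
\]
which is closed and $\mathcal{S}(\tau)$-invariant by construction, and take $\mathcal{A}^{\rm res}$ to be the generator of the restricted $C_0$-semigroup. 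Every $s_0\in\Lambda_{QNF}^{\rm res}\cap\Omega_\sigma$ is an eigenvalue of $\mathcal{A}^{\rm res}$, since the corresponding regularity quasinormal mode lies in $\textnormal{ran}(P_{s_0})\subseteq\mathbf{H}_{\sigma,\rho_0}^{\rm res}$. Conversely, if $v\in\mathbf{H}_{\sigma,\rho_0}^{\rm res}$ and $\mathcal{A}v=s_0v$ with $s_0\in\Omega_\sigma$, then $s_0\in\Lambda^\sigma_{QNF}$ and $v=P_{s_0}v$; but $P_{s_0}$ annihilates $\textnormal{ran}(P_{s_1})$ for all $s_1\neq s_0$ and is continuous, so $v=0$ unless $s_0\in\Lambda_{QNF}^{\rm res}$. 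This gives $\textnormal{Spect}_{\rm point}(\mathcal{A}^{\rm res})\cap\Omega_\sigma=\Lambda_{QNF}^{\rm res}\cap\Omega_\sigma$, and the union over $\sigma$ yields the claim.

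The step I expect to be the main obstacle is the identification in part (i) of $\chi'\circ R(s)\circ\chi$ with $\chi'\cdot\pi_1\circ(\mathcal{A}-s)^{-1}\circ\iota$. Two points require care: that data with vanishing first component and compactly supported $L^2$ second component genuinely defines an element of the Gevrey-type space $\mathbf{H}_{\sigma,\rho_0}$ with norm controlled by $\|g\|_{L^2}$ — this is where one invokes the physical-space energy estimates — and the reconciliation of the Laplace transform along the asymptotically null/hyperboloidal foliation $\{\Sigma_\tau\}$ with the Laplace transform along $\{t=\tau\}$, which is legitimate only because the two foliations coincide near $\supp\chi\cup\supp\chi'$ once $R_0^+,R_0^c$ are chosen appropriately. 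Everything downstream — the meromorphic continuation, the pole containment $\Lambda_{QNF}^{\rm res}\subseteq\Lambda_{QNF}$, and the construction of $\mathbf{H}_{\sigma,\rho_0}^{\rm res}$ with the stated point spectrum — is then a formal consequence of Theorem \ref{thm:discreigen} and Riesz-projection functional calculus.
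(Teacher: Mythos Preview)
Your overall architecture is reasonable, but there is a genuine gap in part (i): you invoke meromorphy of the \emph{full} resolvent $(\mathcal{A}-s)^{-1}$ on $\Omega_\sigma$, citing Theorem \ref{thm:discreigen}(i) and Corollary \ref{cor:mainresult}. Neither result gives this. Corollary \ref{cor:mainresult}(i) only asserts that $\textnormal{Spect}_{\rm point}(\mathcal{A})\cap\Omega_\sigma=\Lambda^\sigma_{QNF}$ is discrete and that the \emph{restricted} operators $\mathcal{A}_\ell$ have pure point spectrum there; it does not rule out essential spectrum of $\mathcal{A}$ in $\Omega_\sigma$, so $(\mathcal{A}-s)^{-1}$ need not exist off $\Lambda^\sigma_{QNF}$, and the Riesz projections $P_{s_0}$ you use in part (ii) are not a priori defined. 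The paper circumvents this by never touching $(\mathcal{A}-s)^{-1}$: in Proposition \ref{prop:merocont} it conjugates $R_\kappa(s)$ via the explicit map $Q_s\colon f\mapsto \overline{f}e^{st}|_\Sigma$ to the fixed-$\ell$ resolvents $\hat{L}^{-1}_{s,\ell,\kappa}$, then sums over $\ell$. The key point---which your argument has no substitute for---is that Corollary \ref{cor:maingevreysum3} gives an estimate whose constant in front of the $H^2$ norm is \emph{uniform in $\ell$} for $\ell$ large, so the tail $\sum_{\ell>L}$ is holomorphic in $s$; the finite head $\sum_{\ell\le L}$ is then meromorphic by Proposition \ref{prop:fredholm}. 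Without this $\ell$-uniform control you cannot pass from the fixed-$\ell$ statement to the summed statement.

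Your construction of $\mathbf{H}^{\rm res}_{\sigma,\rho_0}$ in part (ii) is also different from the paper's, and again depends on the undefined Riesz projections. The paper takes $\mathbf{H}^{\rm res}_{\sigma,\rho_0}$ to be the closure of the $\mathcal{S}(\tau)$-orbit of $(C^\infty_c(\Sigma)\times C^\infty(S))\cap\mathbf{H}_{\sigma,\rho_0}$, i.e.\ the smallest closed invariant subspace containing compactly supported data. The equality $\textnormal{Spect}_{\rm point}(\mathcal{A}^{\rm res})\cap\Omega_\sigma=\Lambda^{\rm res}_{QNF}\cap\Omega_\sigma$ is then proved in Proposition \ref{prop:relwithtradition} by working $\ell$-by-$\ell$: if $s$ is not a pole of $\chi'\circ R_{\kappa,\ell}(s)\circ\chi$, the Laurent expansion $\hat{L}^{-1}_{s',\kappa,\ell}=A(s')+\sum_k(s-s')^{-k}B_k$ together with \eqref{eq:relationresolvents} forces $B_k(f)=0$ for every $f$ arising from compactly supported data, whence $(\mathcal{A}^{\rm res}_\ell-s)$ is invertible on the generating set $X\cap V_\ell$ and then on its closure via \eqref{eq:rangeequality}. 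This argument genuinely uses that $\mathbf{H}^{\rm res}_{\sigma,\rho_0}$ is \emph{generated} by compactly supported data, a feature your Riesz-projection space does not obviously have.
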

Theorem \ref{thm:polesresolvent} follows from Proposition \ref{prop:merocont} and Proposition \ref{prop:relwithtradition}.

\begin{definition}
\label{def:outqnf}
We refer to the regularity quasinormal modes that are elements of the subspace $\mathbf{H}_{\sigma,\rho_0}^{\rm res}$ as \textbf{resonant states} and the corresponding eigenvalues as \textbf{scattering resonances}. 
\end{definition}

\subsection{Further properties regarding the distribution of quasinormal frequencies}
We state here additional results regarding the distribution of regularity quasinormal modes and their relation to regularity quasinormal modes in the setting of positive $\Lambda$.

\begin{theorem}
\label{thm:smallslargel}
The following additional properties hold for $\Lambda_{QNF}$:
\begin{enumerate}[\rm (i) ]
\item
Let $\ell\in \N$ and denote with $\Lambda_{QNF}^{\ell}$ the subset of $\Lambda_{QNF}$ corresponding to eigenvectors in $V_{\ell}$. Then
\begin{equation*}
\Lambda_{QNF}^{\ell}=\textnormal{Spect}(\mathcal{A}|_{V_{\ell}}),
\end{equation*}
i.e. $\mathcal{A}|_{V_{\ell}}$ has a pure point spectrum. Furthermore, for all $L>0$, there exists $\delta_L>0$ such that
\begin{equation*}
\bigcup_{\ell=0}^L\Lambda_{QNF}^{\ell}\cap \{z\in \C \,|\, |z|\leq \delta_L\}=\emptyset.
\end{equation*}
\item
Let $K \subset  \left\{|\textnormal{arg}(z)|<\frac{2}{3}\pi\right\}$ be a compact set. Then there exists $L=L(K)\in \N$ such that
\begin{equation*}
\ker (\mathcal{A}|_{\mathbf{H}_{\sigma,\rho_0}}-s)=\sum_{\ell=0}^L \pi_{\ell}\left(\ker (\mathcal{A}|_{\mathbf{H}_{\sigma,\rho_0}}-s)\right)
\end{equation*}
for all $\sigma \in {\R_{>0}}$, for $\rho_0>0$ suitably small and $s\in \Omega_{\sigma}\cap K$.
\item
The map $s\mapsto (\mathcal{A}|_{\mathbf{H}_{\sigma,\rho_0}\cap V_{\ell}}-s)^{-1}$ is meromorphic on $\Omega_{\sigma}$ with the poles coinciding with the elements of $\Lambda^{\ell}_{QNF}\cap \Omega_{\sigma}$.
\end{enumerate}
\end{theorem}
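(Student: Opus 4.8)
The plan is to exploit that fixing the angular frequency $\ell$ collapses the infinite sums defining $\|\cdot\|_{\mathbf{H}_{\sigma,\rho_0}}$ to a single summand, so that, by the Proposition in Section~\ref{sec:enspaces} relating $\mathbf{H}_{\sigma,\rho_0}\cap V_\ell$ to $H_{\sigma,1,\rho_0}$, the eigenvalue problem for $\mathcal{A}|_{V_\ell}$ reduces to the homogeneous ODE problem for $\mathcal{L}_s$ restricted to $V_\ell$, and $(\mathcal{A}|_{V_\ell}-s)^{-1}$ is built, via the standard first-order reformulation, from the inverse of $\mathcal{L}_s$ within $V_\ell$ on $H_{\sigma,1,\rho_0}\oplus L^2[R_0^+,R_0^c]$. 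First I would record, from the frequency-space estimates of Sections~\ref{sec:degelliptic}--\ref{sec:lowfreqgev} specialised to a single $\ell$ and a single $\sigma$, that $\mathcal{A}|_{V_\ell}-s$ is a holomorphic family of Fredholm operators of index zero on $\Omega_\sigma$ which is invertible for $\re s$ large (the $C_0$-semigroup bound); the two cases $\kappa_+,\kappa_c>0$ and $\kappa_+=\kappa_c=0$ are treated uniformly since the Gevrey spaces accommodate both.

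Part (iii) then follows from analytic Fredholm theory: $s\mapsto(\mathcal{A}|_{\mathbf{H}_{\sigma,\rho_0}\cap V_\ell}-s)^{-1}$ is meromorphic on $\Omega_\sigma$ with poles exactly at the $s$ for which $\ker(\mathcal{A}|_{V_\ell}-s)\neq\{0\}$, which by Theorem~\ref{thm:discreigen} is precisely $\Lambda_{QNF}^\sigma\cap V_\ell=\Lambda_{QNF}^\ell\cap\Omega_\sigma$. For part (i), the same estimates, together with the compactness of the inclusion of the domain of $\mathcal{A}|_{V_\ell}$ into $\mathbf{H}_{\sigma,\rho_0}\cap V_\ell$ — for a single $\ell$ the domain carries an extra $r$-derivative, which gives Rellich compactness on $[R_0^+,R_0^c]$ and, with the $\rho^4$-weighted Gevrey control, compactness also at the degenerate ends — show that $\mathcal{A}|_{V_\ell}$ has compact resolvent, hence pure point spectrum, discrete and of finite multiplicity; identifying this spectrum with $\Lambda_{QNF}^\ell$ then follows from (iii) and Theorem~\ref{thm:discreigen}.

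It remains to prove the two quantitative statements. For the ``no quasinormal frequencies near $0$'' part of (i), I would establish a low-frequency estimate: for each fixed $\ell$ there is $\delta_\ell>0$ with $\mathcal{L}_s$ within $V_\ell$ invertible on $H_{\sigma,1,\rho_0}$ for all $0<|s|<\delta_\ell$. The mechanism is that at $s=0$ the conserved quantities along $\mathcal{I}^+$ and $\mathcal{H}^+$ (the Newman--Penrose and Aretakis charges, cf.\ Section~\ref{sec:eq}) obstruct the existence of a nontrivial homogeneous Gevrey solution satisfying both boundary conditions, so $\mathcal{A}|_{V_\ell}-s$ at $s=0$ is injective with closed range; invertibility then persists for small $|s|\neq0$ by continuity of the Fredholm family, and $\delta_L:=\min_{\ell\le L}\delta_\ell$ gives the assertion (the $\ell$-dependence of $\delta_\ell$ being the reason one cannot exclude accumulation at $0$ as $\ell\to\infty$). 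For part (ii), I would prove a high-angular-frequency estimate: given a compact $K\subset\{|\textnormal{arg}(z)|<\frac{2}{3}\pi\}$ there is $L(K)\in\N$ with $\mathcal{L}_s$ within $V_\ell$ uniformly invertible on $H_{\sigma,1,\rho_0}$ for all $s\in K$ and all $\ell>L(K)$, since for large $\ell$ the potential term $\ell(\ell+1)Dr^{-2}$ dominates the remaining terms uniformly over $s\in K$, producing coercivity. Since $\mathcal{A}$ commutes with the projections $\pi_\ell$, one has $\ker(\mathcal{A}|_{\mathbf{H}_{\sigma,\rho_0}}-s)=\sum_\ell\pi_\ell\ker(\mathcal{A}|_{\mathbf{H}_{\sigma,\rho_0}}-s)=\sum_\ell\ker(\mathcal{A}|_{V_\ell}-s)$, and the high-$\ell$ estimate forces the summands with $\ell>L(K)$ to vanish for $s\in K$, which is the claim.

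I expect the main obstacle to be the two quantitative estimates in the Gevrey framework rather than in $L^2$: making the $s=0$ conservation-law obstruction quantitative enough to produce a gap $\delta_L>0$ that is uniform over $\ell\le L$, and establishing the large-$\ell$ coercivity with the $r$-weights and the $\rho^4$-weighted Gevrey seminorms carried through the multiplier identities. By comparison, the Fredholm/meromorphy statement (iii) and the compact-resolvent argument for (i) are routine once the frequency-space estimates of Sections~\ref{sec:degelliptic}--\ref{sec:lowfreqgev} are in hand.
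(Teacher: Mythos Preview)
Your overall architecture—reducing to fixed $\ell$ via Proposition~\ref{prop:relationAL}, establishing the Fredholm alternative for $\hat{L}_{s,\ell,\kappa}$, and applying analytic Fredholm theory—matches the paper, and parts (i) (pure point spectrum) and (iii) are essentially what Proposition~\ref{prop:fredholm} and Corollary~\ref{cor:mainresult}(i) give.

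The gap is in the mechanism you propose for the low-frequency statement in (i). First, $s=0$ is explicitly excluded from every $\Omega_\sigma$, so the holomorphic Fredholm family you set up on $\Omega_\sigma$ is not defined there; ``injectivity at $s=0$ persists by continuity'' would require extending the Fredholm property to $s=0$, which is precisely the kind of degenerate endpoint estimate you have not provided. Second, the Newman--Penrose/Aretakis conservation laws are not what rules out nontrivial kernel at small $|s|$: in the paper they enter only through Section~\ref{sec:hoconsvlaws} to close the Gevrey summation starting at $n=\ell$. The actual low-frequency input is the pair of estimates in Proposition~\ref{prop:lowfreqellipticest} (a degenerate elliptic estimate whose coupling terms carry a factor $|s|$ and are therefore absorbable for $|s|$ small) and Theorem~\ref{thm:smallsthmpaper}/Proposition~\ref{prop:boundlgev2} (the bounded-$\ell$ Gevrey estimate with constant $C_{L,s}$ that stays bounded as $|s|\to 0$). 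These give invertibility of $\hat{L}_{s,\ell,\kappa}$ directly for all $0<|s|\le s_\ell$, uniformly in $\kappa$, which then feeds into Proposition~\ref{prop:inversepositivekappa} and Proposition~\ref{prop:resolventzerokappa}; no argument at $s=0$ is needed or used.

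Your high-$\ell$ sketch for (ii) is also too coarse. The statement ``$\ell(\ell+1)Dr^{-2}$ dominates'' is correct in the bulk and is indeed the content of Proposition~\ref{prop:degellipticest}, but $D$ vanishes at both ends, so the potential provides no coercivity near $\mathcal{H}^+$ or $\mathcal{I}^+$; there the control must come from the full Gevrey machinery of Section~\ref{sec:gevrest}. The point—emphasized in the Remark after Corollary~\ref{cor:maingevreysumbound} and in Proposition~\ref{prop:maingevreysumv2}—is that the Gevrey estimate produces a boundary constant growing like $A^\ell$, while the degenerate elliptic estimate can be arranged to produce a coupling constant decaying like $K_0^{-\ell}$ with $K_0$ arbitrarily large; it is this exponential cancellation, not mere dominance of the potential, that closes the estimate for large $\ell$.
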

Theorem \ref{thm:smallslargel} is included in a combination of the results of Proposition \ref{prop:resolventzerokappa}, Proposition \ref{prop:fredholm} and Proposition \ref{prop:relationAL}.

\begin{remark}
Theorem \ref{thm:smallslargel} illustrates that when restricting to fixed angular frequencies, there are \underline{no} regularity quasinormal frequencies near the zero frequency (when restricted to $\left\{|\textnormal{arg}(z)|<\frac{2}{3}\pi\right\}$). Furthermore, regularity quasinormal modes are supported on a bounded set of angular frequencies.
\end{remark}

\begin{theorem}
\label{thm:convqnms}
Let $s_*\in \Lambda_{QNF}$ and let $\hat{\psi}_{s_*}$ a corresponding regularity quasinormal mode. Then for any sequence of sufficiently small positive cosmological constants $\Lambda_n$ approaching zero and  charges $e_n$ approaching $M$, there exists a corresponding sequence of regularity quasinormal frequencies $s_n$ for the equation \eqref{eq:waveequation} and a corresponding sequence of regularity quasinormal modes $\hat{\psi}_{n}$, such that for all suitably small $\rho_0>0$:
\begin{align*}
s_n\to&\: s_*,\\
||(\hat{\psi}_{s_*})_{\ell m}-(\hat{\psi}_{n})_{\ell m}||_{\sigma,\rho_0}\to&\: 0\quad \textnormal{for all $\ell\in \N$ and $m\in \Z$, with $|m|\leq \ell$}.
\end{align*}
\end{theorem}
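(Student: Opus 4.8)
The plan is to reduce the assertion to a fixed angular momentum $\ell_0$ and then to run a Gohberg--Sigal-type stability argument for the meromorphic resolvent of Theorem~\ref{thm:smallslargel}(iii), using that the frequency-space estimates underlying the construction of $\mathcal{A}$ in Sections~\ref{sec:degelliptic}--\ref{sec:constrresolvent} are \emph{uniform} in the spacetime parameters as $(\Lambda,e)\to(0,M)$. Since the potential in $\mathcal{L}_s$ acts diagonally on spherical harmonics, $\ker(\mathcal{A}^{(0,M)}|_{\mathbf{H}_{\sigma,\rho_0}}-s_*)$ splits as an orthogonal direct sum of the $\ker(\mathcal{A}^{(0,M)}|_{V_\ell}-s_*)$, with only finitely many $\ell$ occurring by Theorem~\ref{thm:smallslargel}(ii); we may therefore take $\hat{\psi}_{s_*}\in V_{\ell_0}$ for a single $\ell_0$, so that $s_*$ is a pole of $s\mapsto(\mathcal{A}^{(0,M)}|_{\mathbf{H}_{\sigma,\rho_0}\cap V_{\ell_0}}-s)^{-1}$ and $\hat{\psi}_{s_*}$ lies in the range of the associated Riesz projection $\tilde{P}_0$. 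It then suffices to produce $s_n\to s_*$ with $s_n\in\Lambda_{QNF}(\Lambda_n,e_n)$ and corresponding modes $\hat{\psi}_n$ in $V_{\ell_0}$ with $\hat{\psi}_n\to\hat{\psi}_{s_*}$, since the components with $\ell\ne\ell_0$ vanish identically on both sides.

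Next I would set up a parameter-independent frame. Fix a closed disc $\overline{D}=\overline{B(s_*,\varepsilon)}\subset\Omega_\sigma$ so small that it meets neither $\R_{\le 0}$ --- possible because $\Omega_\sigma\cap\{\re z<0\}\subset\{\im z\ne 0\}$, so $s_*\notin\R$ --- nor $\Lambda^{\ell_0}_{QNF}(0,M)\setminus\{s_*\}$ (discreteness, Theorem~\ref{thm:discreigen}(i)). By Section~\ref{sec:metric2} one has $r_+(\Lambda_n,e_n)\to M$ and $r_c(\Lambda_n,e_n)\to\infty$, so the coordinates $\rho_+,\rho_c$ and the radii $R_0^{\pm}$ attached to $(\Lambda_n,e_n)$ differ from those of $(0,M)$ only by quantities tending to $0$; this furnishes uniformly bounded identifications $J_n\colon\mathbf{H}_{\sigma,\rho_0}(0,M)\cap V_{\ell_0}\to\mathbf{H}_{\sigma,\rho_0}(\Lambda_n,e_n)\cap V_{\ell_0}$ with $J_n\to\mathrm{id}$, allowing all operators to be compared on a single Hilbert space. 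Moreover $\kappa_{+,n},\kappa_{c,n}\downarrow 0$ while $|s|$ stays bounded below on $\overline{D}$ (as $s_*\ne 0$), so for $n$ large every $s\in\overline{D}$ avoids $-\kappa_{+,n}\N_0\cup-\kappa_{c,n}\N_0$; hence the hypotheses of the estimates of Sections~\ref{sec:degelliptic}--\ref{sec:constrresolvent} hold with constants uniform in $n$.

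The main analytic step is then to push convergence through the construction. From the explicit expansions \eqref{eq:D}, \eqref{eq:Devent}, \eqref{eq:Dcosmo} the coefficients of $\mathcal{L}^{(\Lambda_n,e_n)}_{s,\ell_0}$ converge to those of $\mathcal{L}^{(0,M)}_{s,\ell_0}$, in the topologies adapted to the Gevrey inner products near $\rho_+=0$, $\rho_c=0$ and in $C^\infty_{\mathrm{loc}}$ on the transition region. Feeding this into the construction of Section~\ref{sec:constrresolvent} --- whose parametrix bounds are uniform in $n$ by the previous paragraph --- yields a finite-dimensional (Grushin) reduction $E^{(\Lambda_n,e_n)}_{-+}(s)$, holomorphic near $\overline{D}$, converging uniformly on $\overline{D}$ to $E^{(0,M)}_{-+}(s)$, with $\det E^{(\Lambda,e)}_{-+}(s)=0$ characterising $\Lambda^{\ell_0}_{QNF}(\Lambda,e)\cap D$. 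Since $\det E^{(0,M)}_{-+}$ vanishes precisely at $s_*$ on $\overline{D}$, Hurwitz's theorem produces, for $n$ large, a zero $s_n\in D$ with no zero on $\partial D$; a diagonal argument over a sequence $\varepsilon\downarrow 0$ then gives $s_n\in\Lambda^{\ell_0}_{QNF}(\Lambda_n,e_n)\subseteq\Lambda_{QNF}(\Lambda_n,e_n)$ with $s_n\to s_*$. Because no QNF lies on $\partial D$, the Riesz projections $\tilde{P}_n=-\tfrac{1}{2\pi i}\oint_{\partial D}(\mathcal{A}^{(\Lambda_n,e_n)}|_{V_{\ell_0}}-s)^{-1}\,ds$ satisfy $J_n^{-1}\tilde{P}_nJ_n\to\tilde{P}_0$; setting $J_n^{-1}\hat{\psi}_n:=(J_n^{-1}\tilde{P}_nJ_n)\hat{\psi}_{s_*}$ and using $\tilde{P}_0\hat{\psi}_{s_*}=\hat{\psi}_{s_*}$ gives $J_n^{-1}\hat{\psi}_n\to\hat{\psi}_{s_*}$ in $\mathbf{H}_{\sigma,\rho_0}\cap V_{\ell_0}$. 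When $s_*$ is simple, $\hat{\psi}_n$ is then a genuine eigenvector at $s_n$; in general one extracts, by finite-dimensional perturbation theory on $\operatorname{range}\tilde{P}_n$ and passage to a subsequence, an eigenvector at a QNF in $D$ converging to $\hat{\psi}_{s_*}$. Finally the stated relation $\mathbf{H}_{\sigma,\rho_0}\cap V_{\ell_0}\hookrightarrow H_{\sigma,1,\rho_0}$ (on first-component $m$-coefficients) together with $H_{\sigma,1,\rho_0}\hookrightarrow H_{\sigma,\rho_0}$ upgrades this to $\|(\hat{\psi}_{s_*})_{\ell_0m}-(\hat{\psi}_n)_{\ell_0m}\|_{\sigma,\rho_0}\to 0$, the mild parameter-dependence of the norm being absorbed by $J_n\to\mathrm{id}$.

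The hard part will be the convergence claimed in the last paragraph: upgrading the \emph{uniform boundedness} of the Gevrey parametrix/Grushin reduction, which is already part of Sections~\ref{sec:degelliptic}--\ref{sec:constrresolvent}, to genuine convergence of $E^{(\Lambda_n,e_n)}_{-+}$ (and of the resolvents on $\partial D$) as $(\Lambda_n,e_n)\to(0,M)$. This entails tracking every error term in the Gevrey estimates through the degenerate limit $\kappa_{+,n},\kappa_{c,n}\downarrow 0$, where the singular points of $\mathcal{L}_{s,\ell_0}$ at $\rho_\pm=0$ and $\rho_c=0$ change character from regular-singular to degenerate, and simultaneously keeping track of the parameter-dependence of the Hilbert spaces and of the weight $\rho^4$ in the Gevrey inner products. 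Once this continuity is in hand, the remaining ingredients --- Hurwitz's theorem, norm-continuity of Riesz projections, and finite-dimensional perturbation theory --- are routine.
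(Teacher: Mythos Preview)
Your approach is essentially the same as the paper's: both reduce to a fixed angular momentum $\ell_0$, establish a finite-dimensional reduction, apply Hurwitz's theorem to locate the perturbed frequencies, and then recover convergence of eigenfunctions by a perturbation argument. The packaging differs slightly. The paper works not with $\mathcal{A}|_{V_{\ell_0}}$ directly but with the compact operators $A_{\kappa,s}:=\lambda(\lambda+1)\hat{L}_{s,\ell_0+\lambda,\kappa}^{-1}$ on $H_{\sigma,\rho_0}$ (so that $\ker(1-A_{\kappa,s})\neq\emptyset$ characterises the QNFs), obtains the finite-dimensional reduction via the finite-rank approximation step of the Analytic Fredholm Theorem rather than a Grushin problem, and extracts convergent eigenvectors via an elementary lemma on eigenvectors of norm-convergent compact operators (Lemma~\ref{lm:conveigenvectors}) rather than Riesz projections. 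These are interchangeable ingredients.

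The substantive point is your acknowledged ``hard part'': upgrading uniform boundedness to genuine \emph{operator-norm} convergence of the resolvents as $\kappa\downarrow 0$. The paper does not do this by tracking error terms through the Gevrey estimates as you suggest, but via a resolvent-identity trick (Proposition~\ref{prop:resolventzerokappa}): one writes
\[
\hat{L}_{s,\ell_0+\lambda,\kappa_n}^{-1}-\hat{L}_{s,\ell_0+\lambda,\kappa_m}^{-1}
=\hat{L}_{s,\ell_0+\lambda,\kappa_m}^{-1}\circ(\hat{L}_{s,\ell_0+\lambda,\kappa_m}-\hat{L}_{s,\ell_0+\lambda,\kappa_n})\circ\hat{L}_{s,\ell_0+\lambda,\kappa_n}^{-1},
\]
notes that $\hat{L}_{s,\ell_0+\lambda,\kappa_m}-\hat{L}_{s,\ell_0+\lambda,\kappa_n}$ is of size $|\kappa_n-\kappa_m|$ but involves a term $\rho\,\partial_\rho^2$ that is \emph{not} controlled by $\|\cdot\|_{\sigma,2,\rho_0}$, and resolves this by introducing a slightly weaker Gevrey norm with an extra $(n+1)^{-2}$ weight (Proposition~\ref{prop:maingevreysum3v2}). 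This yields the Cauchy property in $B(H_{\sigma,\rho_0},H_{\sigma,\rho_0})$ directly, with no need for your identification maps $J_n$ between parameter-dependent spaces. If you pursue your route, this is the device you will need.
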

Theorem \ref{thm:convqnms} is a reformulation of Proposition \ref{prop:conveigenf}.

\begin{remark}
Theorem \ref{thm:convqnms} shows that for each extremal Reissner--Nordstr\"om regularity quasinormal frequency, there is a corresponding converging sequence of sub-extremal  Reissner--Nordstr\"om--de Sitter regularity quasinormal frequencies.
\end{remark}

\section{Structure of proofs and main ideas and techniques}
\label{sec:mainideas}
In this section we sketch the logic and structure of the proofs of the theorems stated in Section \ref{sec:mainthms} and we highlight the main new ideas and techniques that are introduced in this paper. 

\subsection{Infinitesimal generators of time translations and resolvent operators}
The theorems in Section \ref{sec:mainthms} are concerned with the operator $\mathcal{A}$, which is a densely defined, closed, unbounded operator that generates the time translation semigroup corresponding to a mixed spacelike-null foliation of the spacetime (see Section \ref{sec:foliations}). Rather than inferring properties about the spectrum of $\mathcal{A}$ by proving estimates for $\mathcal{A}$ directly, we first consider the restrictions $\mathcal{A}_{\ell}=\mathcal{A}|_{V_{\ell}}$ to fixed spherical harmonic modes and use that the invertibility of $\mathcal{A}_{\ell}-s$ is equivalent to the existence of the operator $\hat{L}_{s,\ell}^{-1}$, which is the resolvent operator corresponding to fixed spherical harmonic modes on a mixed spacelike-null foliation. The precise definitions of $\mathcal{A}_{\ell}$ and $\hat{L}_{s,\ell}$ and their relation are described in detail in Section \ref{sec:propgevreyphys}. 

The equivalence of the invertibility of $\mathcal{A}_{\ell}-s$ with the existence of $\hat{L}_{s,\ell}^{-1}$ can also be easily seen when one considers \eqref{eq:waveequation} with respect to the Minkowski metric, i.e.\ the standard wave equation, and investigates the existence of the standard resolvent operator $(\Delta_{\R^3}-s^2)^{-1}$. In that case, the infinitesimal generator of time translation $\mathcal{A}$ corresponding to a foliation by hypersurfaces of constant $t$ (the standard time coordinate) is given by:
\begin{equation*}
(\mathcal{A}-s)\begin{pmatrix}
\Psi\\
\Psi'
\end{pmatrix}
=\begin{pmatrix}
0 & 1\\
1 & -s
\end{pmatrix}
\begin{pmatrix}
-\Delta_{\R^3}+s^2& 0\\
0 & 1
\end{pmatrix}
\begin{pmatrix}
-1 & 0\\
-s & 1
\end{pmatrix}
\begin{pmatrix}
{\Psi}\\
{\Psi}'
\end{pmatrix}.
\end{equation*}
Since the first and third matrix on the right-hand side above are clearly invertible, it follows immediately that invertibility of $\mathcal{A}-s$ is related to existence of $(\Delta_{\R^3}-s^2)^{-1}$, which is the resolvent operator with respect to a foliation by $t$-level sets. This kind of relation between $\mathcal{A}_{\ell}$ and $\hat{L}_{s,\ell}$ also plays a central role in the proof of Theorem \ref{thm:ads}. 

Note that \emph{in passing from $\mathcal{A}_{\ell}-s$ to $\hat{L}_{s,\ell}$, we lose the simple dependence of the operator on $s$ but we gain the ability to apply Fredholm theory}, see Section \ref{intro:fred}.

\subsection{Fredholm theory for resolvent operators}
\label{intro:fred}
We will show that the operator
\begin{equation*}
\hat{L}^{-1}_{s,\ell}:H_{\sigma, \rho_0}\to H_{\sigma,\rho_0}
\end{equation*}
is compact and holomorphic in $s$, provided $s$ lies in suitable sector of the complex plane (depending on the choice of $\sigma$, but independent of $\rho_0$), for sufficiently large $\ell$ (or for bounded $\ell$ and sufficiently small $|s|$), which allows us to apply the Analytic Fredholm Theorem to infer meromorphicity of $\hat{L}^{-1}_{s,\ell}$ for \underline{all} $\ell$.\footnote{One should think of the $\ell$-dependent terms in $\hat{L}^{-1}_{s,\ell}$ as zeroth order terms with a favourable sign, cf. the existence theory for solutions to uniformly elliptic PDE on compact domains, see for example Theorem 3 of \S 6.2 in \cite{evans}.}

 As a corollary, we conclude that $\mathcal{A}_{\ell}$ has a pure point spectrum in a suitable sector including the imaginary axis and moreover, if we restrict to eigenvalues in any compact subset, then all corresponding eigenfunctions of $\mathcal{A}$ must be supported on a bounded set of angular frequencies, so the spectra of $\mathcal{A}_{\ell}$ in fact determine fully the point spectrum of $\mathcal{A}$.

\subsection{Key estimates}
\label{intro:keyest}
As described in the previous paragraphs, we reduce the problem of characterizing the point spectrum of $\mathcal{A}$ to establishing compactness of the resolvent operator $\hat{L}^{-1}_{s,\ell}$ for suitably large $\ell$. We establish compactness by showing that in fact
\begin{equation*}
\hat{L}^{-1}_{s,\ell}(H_{\sigma, \rho_0})\subseteq H_{\sigma,\rho_0,2}
\end{equation*}
and using that $H_{\sigma,\rho_0,2}$ can be compactly embedded in $H_{\sigma,\rho_0}$. The compactness of this embedding follows from an analogue of the Rellich--Kondrachov theorem to the setting of $L^2$-based Gevrey norms.

The arguments discussed so far can be mostly considered ``soft'' (as they are primarily variations of well-stablished results in functional analysis), and the ``hard'' part of the proofs consists of proving the estimate:
\begin{equation}
\label{eq:introkeyest}
||\hat{\psi}||_{\sigma,\rho_0,2}\leq C_{\ell,s}||\hat{L}_{s,\ell}(\hat{\psi})||_{\sigma,\rho_0}
\end{equation}
with a constant $C_{\ell,s}>0$. The estimate \eqref{eq:introkeyest} is central to establishing existence and compactness of $\hat{L}^{-1}_{s,\ell}$. In fact, in order to prove Theorem \ref{thm:polesresolvent}, we need a refined version of \eqref{eq:introkeyest}, where we keep more precise track of the $\ell$-dependence in the constant, but we will ignore that point in the discussion in the present section.

We give an outline below of the main steps involved in obtaining \eqref{eq:introkeyest}. We will carry out the discussion for the shifted operator $L_{s,\kappa,\ell}:=L_{s,\kappa}-\ell(\ell +1)$ where $L_{s,\kappa}$ is the operator appearing in the toy model equation \eqref{eq:introtoy1}. The logic of the proof in the context of the toy model is very similar to the real problem. The key difference that appears when considering the true resolvent operator are discussed in Section \ref{intro:consvlaw} below.

\begin{enumerate}
\item The resemblance of the operator $\hat{L}_{s,\ell}$ to its toy model version $L_{s,\kappa=0, \ell}$ applies only in regions near the event horizon and infinity (where $r$ is large or $r$ is close to the horizon radius $M$). These regions are both modelled by the interval $ [0,1]$ in the toy model problem. We consider $\frac{d^n}{dx^n}(L_{s,\kappa,\ell}(u))$ in $[0,1]$ with $\kappa\geq 0$ and use \eqref{eq:introtoy1} to write:
\begin{equation}
\label{eq:keytoyeq}
\frac{d}{dx}\left((\kappa x+x^2)u^{(n+1)}\right)+(2nx+\kappa n)u^{(n+1)}+su^{(n+1)}=(\ell(\ell+1)-n(n+1))u^{(n)}+ \frac{d^n}{dx^n}(L_{s,\kappa,\ell}(u)),
\end{equation}
with $u^{(k)}=\frac{d^ku}{dx^k}$.

We take the square norm on both sides of \eqref{eq:keytoyeq} and integrate in $x$. In order to derive the desired estimate, we need to absorb the term $(\ell(\ell+1)-n(n+1))^2|u^{(n)}|^2$ that appears with a bad sign in the estimate. Observe that this term vanishes when $n=\ell$. Furthermore, when $n>\ell$, we can absorb it into the analogous estimate with $n$ replaced by $n-1$, provided we multiply with appropriate $n$-dependent weights. 

This observation motivates the following integration and summation over $n$:
\begin{equation*}
\sum_{n=\ell}^{N_{\infty}} \frac{\tilde{\sigma}^{2n}|s|^{2n}}{n!^2(n+1)!^2}\int_0^1\left[\cdot\right]\,dx.
\end{equation*}
Indeed, such a summation allows us to absorb non-coercive terms, which arise by taking the square norm of both sides of \eqref{eq:keytoyeq}, into the estimates of either order $n-1$ or $n+1$. The above summation moreover elucidates the relevance of $L^2$-based Gevrey regularity. In order for this procedure to work, it is in particular necessary that $\re s>-\frac{1}{2}|s|$ and we set $\sigma=\tilde{\sigma}|s|$.

A caveat of the above summation procedure is that at the top order $n=N_{\infty}$ we cannot absorb into order $n=N_{\infty}+1$. If we had assumed \emph{a priori} the finiteness of the Gevrey norm for $u$ that we want to estimate, then these top order terms would vanish in the limit $N_{\infty}\to \infty$. We do not make such an a priori assumption, but instead, we restrict to the case of positive $\kappa$ and make use of the presence of a \emph{red-shift effect} in the problem when $\kappa>0$ (using that $\kappa$ plays the role of a surface gravity), which is moreover \emph{stronger} for larger values of $n$ (this is known as the ``enhanced red-shift effect'' and it also plays an important role in the proof of Theorem \ref{thm:ads}), to absorb the top order term, assuming $N_{\infty}>N_{\kappa}$ with $N_{\kappa}\to \infty$ as $\kappa\downarrow 0$. After closing the estimate for $\kappa>0$, we can then take the limit $N_{\infty}\to \infty$ on both sides of the estimate to arrive at an $L^2$-based Gevrey estimate for $\kappa>0$ that is uniform in $\kappa$, \emph{without} control of the lower order derivatives with $0\leq n\leq \ell-1$.

\item We control derivatives with $0\leq n\leq \ell-1$ by modifying the estimates above with suitable exponential weights so that we can control lower-order terms by higher-order terms and boundary terms at $x=1$ via ``Carleman-type'' estimates. It is important in this step that we can close the Gevrey estimates with $n\geq \ell$ \underline{\emph{independently}} of the $n\leq \ell-1$ estimates (modulo boundary terms at $x=1$). We can control the boundary terms at $x=1$ in terms of the lowest order derivatives: $|u|^2(1)$ (which vanishes in the toy problem, by assumption) and $|\frac{du}{dx}|^2(1)$, together with derivatives of $L_{s,\kappa,\ell}(u)$, by applying \eqref{eq:keytoyeq} repeatedly. As a result, the constant multiplying $|\frac{du}{dx}|^2(1)$ grows exponentially in $\ell$.

\item In order to pass from $L^2$-based Gevrey estimates to the toy model analogue of \eqref{eq:introkeyest}, we need to estimate the $|\frac{du}{dx}|^2(1)$ boundary term that is multiplied by a constant that grows exponentially in $\ell$. In the real problem, this amounts to estimating boundary terms along a large fixed constant $r$ hypersurface and a constant $r$ hypersurface with $r$ close to the horizon area radius $M$, which arise from restricting the Gevrey estimates to the near-infinity and near-horizon regions, respectively. 

We make use of a new type of degenerate elliptic estimate, with a degeneracy that depends on $\ell$, to achieve the desired bounds. In the context of the toy model problem, we can write
\begin{equation*}
x^k\left|\frac{d}{dx}\left((\kappa x+x^2)\frac{du}{dx}\right)-\ell(\ell+1)u\right|^2=x^k\left|s\frac{du}{dx}+L_{s,\kappa,\ell}(u)\right|^2
\end{equation*}
for $k\geq 0$. We arrive at a suitable degenerate elliptic estimate by integrating the above equation over an interval $[\eta,1]$, where $\eta>0$ is suitably small, integrating by parts, and averaging out the $x=\eta$ boundary terms with a suitable cut-off function.  We can then absorb the term proportional to $x^k|s|^2|\frac{du}{dx}|^2$ that appears on the right-hand side into a term proportional to $x^{k+2}\ell(\ell+1)|\frac{du}{dx}|^2$ that appears with a good sign on the left-hand side, provided $\ell$ is chosen suitably large depending on $\eta$ and $|s|$.\footnote{Alternatively, for any $\ell$, we can take $|s|$ suitably small, depending on $\ell$, in order to close the Gevrey estimates for small $|s|$. This approach is taken in Section \ref{sec:lowfreqgev}.}

Furthermore, the degenerate elliptic estimate is only valid  if $0\leq k\lesssim \ell$. In other words, for larger $\ell$, we can introduce a \emph{stronger degeneracy} in the elliptic estimate at $x=0$. As a result, we can multiply the terms arising from the cut-off near $x=\eta$ with a factor that \emph{decays exponentially in $\ell$}.

Let us moreover note that physical space versions of the above degenerate elliptic estimates also play a key role in deriving the leading-order behaviour (polynomial tails) of $\psi_{\ell}$; see the upcoming work \cite{aagprice}.

\item We couple the Gevrey estimates with the above degenerate elliptic estimates, absorbing the coupling terms by using that the exponentially decaying factor in $\ell$ in the elliptic estimates \emph{cancels out} the exponentially growing factor in the Gevrey estimates that appears in front of the boundary term. We arrive at the toy model analogue of \eqref{eq:introkeyest} with $\kappa>0$, but $C_{\ell,s}$ still independent of $\kappa$.

\item Using the \emph{uniformity} in $\kappa$ of the $\kappa>0$ estimates, we can construct $L^{-1}_{s,\kappa=0,\ell}$ and establish its compactness as a limit of $\kappa>0$ operators as $\kappa \downarrow 0$.
\end{enumerate}

\subsection{Conservation laws for spherical harmonic modes}
\label{intro:consvlaw}
A crucial step in the above sketch of the $L^2$-based Gevrey estimates above is the vanishing of the term  $(\ell(\ell+1)-n(n+1))^2|u^{(n)}|^2$ when $n=\ell$. This is in fact a manifestation of a conservation law along null infinity in Minkowski that is present in physical space. Indeed, 
\begin{equation*}
\partial_u \left((r^2\partial_v)^{\ell+1}(r\psi_{\ell})\right)=0
\end{equation*}
along $\mathcal{I}^+$ for solutions $\psi_{\ell}$ to \eqref{eq:waveequation} on Minkowski arising from suitable Cauchy initial data that are supported on spherical harmonic modes with angular frequency $\ell$. 

In Reissner--Nordstr\"om spacetimes, however, we generically have that
\begin{equation*}
\partial_u \left((r^2\partial_v)^{\ell+1}(r\psi_{\ell})\right)\neq 0
\end{equation*}
along $\mathcal{I}^+$. Nevertheless, there are still conserved quantities present which are \emph{linear combinations} of $(r^2\partial_v)^{n}(r\psi_{\ell})$ with $n\leq \ell$. Hence, rather than considering derivatives of the form $(r^2\partial_v)^{n}(r\psi_{\ell})$, we need to modify the $n$-th order quantities in order to see the conservation law, so that we can close the Gevrey estimates starting at $n=\ell$.

In the extremal Reissner--Nordstr\"om case, the required modification can be done explicitly by an appropriate change of coordinates, see Section \ref{sec:hoconsvlaws}. Furthermore, there are similar higher-order conserved quantities present along the event horizon. The price we pay by carrying out the modification is that we have to deal with additional terms in the modified higher-order equations that have to be absorbed appropriately. Nevertheless, we arrive at $L^2$-based Gevrey estimates that are similar to those that appear in the toy model discussed above and the steps described in Section \ref{intro:keyest} still hold.

\subsection{Eigenvalues for small $\Lambda>0$ and $M-e$}
An advantage of deriving estimates that are uniform in the surface gravities $\kappa_c,\kappa_+$, and are therefore valid not only for extremal Reissner--Nordstr\"om, but also for near-extremal Reissner--Nordstr\"om--de Sitter spacetimes with small $\Lambda>0$, where $\kappa_+,\kappa_c>0$, is that we obtain as a \emph{corollary} the convergence of regularity quasinormal frequencies (eigenvalues of $\mathcal{A}$) for the conformally invariant Klein--Gordon equation \eqref{eq:waveequation} as $\kappa_+ \downarrow 0$ and $\kappa_c\downarrow 0$, which is the content of Theorem \ref{thm:convqnms}.
 
 \section{Main equations}
 \label{sec:eq}
 In this section we write down the equation \eqref{eq:waveequation} in coordinates adapted to either the future event horizon $\mathcal{H}^+$ or the future cosmological horizon $\mathcal{C}^+$, allowing us to easily take the limit $\Lambda \downarrow 0$ or $M \downarrow  |e|$.
  
In $(\tau,r, \theta,\varphi)$ coordinates, \eqref{eq:waveequation} can be expressed as follows:
\begin{equation}
\label{eq:waveeqhyperboloidal}
0=2(1-h_{r_+}D)\partial_{r}T\psi+r^{-2}\partial_{r}(Dr^2\partial_{r}\psi)-h_{r_+}(2-h_{r_+}D)T^2\psi+(2r^{-1}-r^{-2}(Dr^2h_{r_+})')T\psi+r^{-2} \mathring{\slashed{\Delta}}\psi-2\mathfrak{l}^{-2}\psi.
\end{equation}

 \subsection{Equation for the radiation field near the event horizon}
 \label{sec:waveeqevent}
 
It can easily be shown that the \emph{Friedlander radiation field} $\phi:=r\cdot \psi$ corresponding to a solution $\psi$ of \eqref{eq:waveequation} satisfies the following equation in $(v,r,\theta,\varphi)$ coordinates:
 \begin{equation}
 \label{eq:eqradfieldhor1}
 2\partial_v\partial_r\phi+\partial_r(D\partial_r\phi)-(r^{-1}D'+2\mathfrak{l}^{-2})\phi+r^{-2}\mathring{\slashed{\Delta}} \phi=0,
 \end{equation}
 where $\mathring{\slashed{\Delta}}$ is the Laplacian with respect to the unit round sphere. Let us denote for convenience in this section $\rho=\rho_+$. In $(v,\rho,\theta,\varphi)$ coordinates, \eqref{eq:eqradfieldhor1} reduces to
  \begin{equation*}
 2\partial_v\partial_{\rho}\phi+\partial_{\rho}(Dr^{-2}\partial_{\rho}\phi)-(rD'+2\mathfrak{l}^{-2}r^2)\phi+\mathring{\slashed{\Delta}} \phi=0,
 \end{equation*}
and hence,
   \begin{equation*}
   2\partial_v\partial_{\rho}\phi+\partial_{\rho}(Dr^{-2}\partial_{\rho}\phi)-(2Mr^{-1}-2e^2r^{-2})\phi+\mathring{\slashed{\Delta}} \phi=0,
    \end{equation*}
 where we used that
\begin{equation*}
rD'=2Mr^{-1}-2e^2r^{-2}-2\mathfrak{l}^{-2}r^2.
\end{equation*}

We can further write
\begin{equation*}
\begin{split}
2Mr^{-1}-2e^2r^{-2}=&\:2Mr_+^{-1}-2e^2r_+^{-2}+(4e^2r_+^{-2}-2Mr_+^{-1})r_+\rho-2e^2\rho^2\\
=&\:2r_+\kappa_++2l^{-2}r_+^2+(4e^2r_+^{-2}-2Mr_+^{-1})r_+\rho-2e^2\rho^2\\
=&\:B_{r_+}^{(0)}+B_{r_+}^{(1)}\rho+B_{r_+}^{(2)}\rho^2,
\end{split}
\end{equation*}
with coefficients
\begin{align*}
B_{r_+}^{(0)}=&\:2\kappa_+r_++2\mathfrak{l}^{-2}r_+^2,\\
B_{r_+}^{(1)}=&\:(4e^2r_+^{-2}-2Mr_+^{-1})r_+,\\
B_{r_+}^{(2)}=&\:-2e^2.
\end{align*}
Recall from Section \ref{sec:confcoord} that we can moreover write
   \begin{equation*}
 \begin{split}
 Dr^{-2}=&\: 2\kappa_+ \rho +(1-A_{r_+}^{(2)})\rho^2+A_{r_+}^{(3)}\rho^3+A_{r_+}^{(4)}\rho^4,
 \end{split}
 \end{equation*}
 where
 \begin{align*}
  A_{r_+}^{(2)}=&\:3\mathfrak{l}^{-2}\left[r_+^{-1}(r_+-r_-)(r_c^2+r_-r_c)+r_+(r_-+r_c)\right],\\
 A_{r_+}^{(3)}=&\:-2r_-+\mathfrak{l}^{-2}\left[(r_+-r_-)r_c^2+(r_+^2-r_-^2)r_c+2r_-^3 +3r_+^2r_-+3r_-^2r_+\right],\\
 A_{r_+}^{(4)}=&\:-e^2.
 \end{align*}  
    
Combining the above equations, we arrive at the following equation for the radiation field:
\begin{equation}
 \begin{split}
 \label{eq:eqradfieldhor2}
 0=&\:2\partial_v\partial_{\rho}\phi+\partial_{\rho}\left(\left((1-A_{r_+}^{(2)})\rho^2+2\kappa_+ \rho+A_{r_+}^{(3)}\rho^3+A_{r_+}^{(4)} \rho^4\right)\partial_{\rho}\phi\right)\\
 &-(B_{r_+}^{(0)}+B_{r_+}^{(1)}\rho +B_{r_+}^{(2)}\rho^2)\phi+\mathring{\slashed{\Delta}} \phi.
  \end{split}
\end{equation}

 \subsection{Equation for the radiation field near the cosmological horizon/future null infinity}
 \label{sec:waveeqevent}
 In $(u,r,\theta,\varphi)$ coordinates, the radiation field $\phi=r\cdot \psi$ satisfies the equation
 \begin{equation}
 \label{eq:eqradfieldcos1}
 2\partial_u\partial_r\phi-\partial_r(D\partial_r\phi)+(r^{-1}D'+2\mathfrak{l}^{-2})\phi-r^{-2}\mathring{\slashed{\Delta}} \phi=0.
 \end{equation}
Let us denote for convenience in this section $\rho=\rho_c$. In $(u,\rho,\theta,\varphi)$ coordinates, \eqref{eq:eqradfieldcos1} reduces to
   \begin{equation*}
   2\partial_u\partial_{\rho}\phi+\partial_{\rho}(Dr^{-2}\partial_{\rho}\phi)-(2Mr^{-1}-2e^2r^{-2})\phi+\mathring{\slashed{\Delta}} \phi=0.
    \end{equation*}

    We can further write
\begin{equation*}
\begin{split}
2Mr^{-1}-2e^2r^{-2}=&\:2Mr_c^{-1}-2e^2r_c^{-2}-(4e^2r_c^{-2}-2Mr_c^{-1})r_c\rho-2e^2\rho^2\\
=&\:-2r_c\kappa_c+2\mathfrak{l}^{-2}r_c^2-(4e^2r_c^{-2}-2Mr_c^{-1})r_c\rho-2e^2\rho^2.
\end{split}
\end{equation*}
Combining the above estimates, we arrive at
\begin{equation}
 \begin{split}
 \label{eq:eqradfieldcos2}
 0=&\:2\partial_u\partial_{\rho}\phi+\partial_{\rho}\left(\left((1-A_{r_c}^{(2)})\rho^2+2\kappa_c \rho+A_{r_c}^{(3)}\rho^3+A_{r_c}^{(4)} \rho^4\right)\partial_{\rho}\phi\right)\\
 &-(B_{r_c}^{(0)}+B_{r_c}^{(1)}\rho +B_{r_c}^{(2)}\rho^2)\phi+\mathring{\slashed{\Delta}} \phi,
  \end{split}
\end{equation}
with coefficients
\begin{align*}
B_{r_c}^{(0)}=&\:2\mathfrak{l}^{-2}r_c^2-2\kappa_cr_c,\\
B_{r_c}^{(1)}=&\:-(4e^2r_c^{-2}-2Mr_c^{-1})r_c,\\
B_{r_c}^{(2)}=&\:-2e^2,
\end{align*}
where we used the following expression, derived in Section \ref{sec:confcoord}:
 \begin{equation*}
 \begin{split}
 Dr^{-2}=&\: 2\kappa_c\rho +(1-A_{r_c}^{(2)})\rho^2+A_{r_c}^{(3)}\rho^3+A_{r_c}^{(4)}\rho^4,
 \end{split}
 \end{equation*}
 where
 \begin{align*}
 A_{r_c}^{(2)}=&\:3\mathfrak{l}^{-2}\left[r_c(r_++r_-)+r_+^2+r_-^2-r_c^{-1}r_+r_-(r_++r_-)\right],\\
 A_{r_c}^{(3)}=&\:-(r_++r_-)+\mathfrak{l}^{-2}(4r_cr_+r_-+r_+^3+r_-^3+5r_+^2r_-+5r_-^2r_+),\\
 A_{r_c}^{(4)}=&\:e^2.
 \end{align*}
 
 \subsection{Higher-order equations}
In this section we consider the higher-order derivatives $\partial_{\rho_c}^n\phi$ and $\partial_{\rho_+}^n\phi$, with $n\in \N$, and derive the corresponding higher-order equations. Observe first that we can write both \eqref{eq:eqradfieldhor2} and \eqref{eq:eqradfieldcos2} as follows
\begin{equation}
\label{eq:mainphysicalspaceeq}
\begin{split}
 0=&\:2T\partial_{\rho}\phi+\partial_{\rho}\left(\left((1-A_2)\rho^2+2\kappa \rho+A_3\rho^3+A_4 \rho^4\right)\partial_{\rho}\phi\right)-(B_0+B_1\rho +B_2\rho^2)\phi+\mathring{\slashed{\Delta}} \phi,
 \end{split}
\end{equation}
for appropriate coefficients $A_i\in \R$, with $i=2,3,4$, and $B_i\in \R$, with $i=0,1,2$, depending on $M$, $e$ and $\Lambda$, with $\kappa=\kappa_c$ or $\kappa=\kappa_+$, depending on the choice of coordinate $\rho=\rho_c$ or $\rho=\rho_+$.

\begin{proposition}
\label{prop:naivehoeq}
Denote $\phi_{(n)}=\hat{L}^{n}\phi$ or $\phi_{(n)}=\hat{\underline{L}}^{n}\phi$ . Then for each $n\in \N_0$, $\phi_{(n)}$ satisfies the following equation:
\begin{equation}
\label{eq:mainhophysicalspaceeq}
\begin{split}
 0=&\:2T\phi_{(n+1)}+\left((1-A_2)\rho^2+2\kappa \rho+A_3\rho^3+A_4 \rho^4\right)\phi_{(n+2)}\\
 &+2(n+1)\left((1-A_2)\rho+\kappa+\frac{3}{2}A_3\rho^2+2A_4 \rho^3\right)\phi_{(n+1)}\\
 &+\left[n(n+1)\left(1-A_2+3A_3 \rho+6A_4 \rho^2\right)-(B_0+B_1\rho +B_2\rho^2)\right]\phi_{(n)}+\mathring{\slashed{\Delta}} \phi_{(n)}\\
 &+\left[(n+1)n(n-1)\left(A_3+4A_4\rho\right) -n(B_1+2B_2\rho)\right]\phi_{(n-1)}\\
 &+\left[(n+1)n(n-1)(n-2)A_4-n(n-1)B_2\right]\phi_{(n-2)}
 \end{split}
 \end{equation}
\end{proposition}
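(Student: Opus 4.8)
The plan is to prove Proposition~\ref{prop:naivehoeq} by induction on $n$, starting from the base case $n=0$, which is precisely equation~\eqref{eq:mainphysicalspaceeq} after expanding the derivative $\partial_\rho\big((\cdots)\partial_\rho\phi\big)$ into $(\cdots)\partial_\rho^2\phi + (\cdots)'\partial_\rho\phi$. Here I would first observe that, on each piece of the foliation, the operators $\hat{L}$ and $\hat{\underline L}$ act as $\pm r^2\partial_r = \pm\partial_\rho$ (see Section~\ref{sec:foliations}), so writing $\phi_{(n)} = \partial_\rho^n\phi$ reduces the claim to a statement about the $n$-th $\rho$-derivative of~\eqref{eq:mainphysicalspaceeq}. (One must be slightly careful that $T$ commutes with $\partial_\rho$ in the relevant coordinates $(v,\rho_+)$ or $(u,\rho_c)$, which it does since $T=\partial_v=\partial_u$ and the coordinate vector fields commute.)

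The core of the argument is then the Leibniz-rule bookkeeping: apply $\partial_\rho^n$ to each term of~\eqref{eq:mainphysicalspaceeq} and collect. The only nontrivial terms are those where $\partial_\rho^n$ hits a product of a polynomial in $\rho$ (of degree at most $4$) with a $\rho$-derivative of $\phi$. For a polynomial $p(\rho)$ of degree $d$, we have $\partial_\rho^n(p(\rho)\,\partial_\rho^k\phi) = \sum_{j=0}^{d}\binom{n}{j} p^{(j)}(\rho)\,\partial_\rho^{n+k-j}\phi$, so each term produces a finite sum involving $\phi_{(n+k)}, \phi_{(n+k-1)},\dots,\phi_{(n+k-d)}$. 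Carrying this out for the three groups of terms --- the second-derivative term (coefficient a degree-$4$ polynomial, $k=2$), the first-derivative term (coefficient a degree-$3$ polynomial, $k=1$), and the zeroth-order potential term (coefficient a degree-$2$ polynomial, $k=0$) --- and then summing the contributions to each fixed order $\phi_{(n+2)}, \phi_{(n+1)}, \phi_{(n)}, \phi_{(n-1)}, \phi_{(n-2)}$, yields exactly the coefficients displayed in~\eqref{eq:mainhophysicalspaceeq}. The $\mathring{\slashed\Delta}\phi$ term commutes with $\partial_\rho$ and contributes $\mathring{\slashed\Delta}\phi_{(n)}$, and the $2T\partial_\rho\phi$ term contributes $2T\phi_{(n+1)}$. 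An induction presentation is equivalent here: differentiating the order-$n$ equation once in $\rho$ and re-indexing reproduces the order-$(n+1)$ equation, which is the cleanest way to organize the combinatorics and avoid writing out all the binomial sums explicitly.

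Concretely I would verify the coefficient of each $\phi_{(m)}$ by matching powers of $n$: the $\phi_{(n+1)}$ coefficient $2(n+1)\big((1-A_2)\rho + \kappa + \tfrac32 A_3\rho^2 + 2A_4\rho^3\big)$ comes from the single derivative $\binom{n}{1}$ hitting the degree-$4$ polynomial's derivative plus the un-differentiated degree-$3$ coefficient from the first-derivative term (after the $\partial_\rho^n$ with $j=0$); the $\phi_{(n)}$ coefficient combines the $\binom{n}{2}p''$ contribution of the second-derivative term (giving the $n(n+1)(1-A_2+3A_3\rho+6A_4\rho^2)$ piece, using $\binom{n}{2}\cdot 2 \cdot \tfrac{(\text{leading})}{}$ bookkeeping together with the $(n+1)$ from re-indexing) minus the potential term $B_0+B_1\rho+B_2\rho^2$; and the lowest-order terms $\phi_{(n-1)}, \phi_{(n-2)}$ pick up $\binom{n}{3}$ and $\binom{n}{4}$ contributions from the degree-$\geq3$ parts of the polynomials, which is where the falling factorials $(n+1)n(n-1)$ and $(n+1)n(n-1)(n-2)$ in the stated coefficients originate. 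I do not expect any genuine obstacle: this is a deterministic, finite computation. The main thing to be careful about is consistency of the sign conventions for $\hat{L}$ versus $\hat{\underline L}$ (both equal $\pm\partial_\rho$ but with $\rho=\rho_c$ or $\rho=\rho_+$ respectively, and with $D r^{-2}$ having the same structural form~\eqref{eq:Dcosmo}/\eqref{eq:Devent} in each case), so that the single equation~\eqref{eq:mainphysicalspaceeq} genuinely covers both; once that unification is in place --- which is exactly what~\eqref{eq:mainphysicalspaceeq} asserts --- the higher-order equation follows uniformly, and the proposition is proved.
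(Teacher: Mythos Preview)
Your proposal is correct and follows essentially the same approach as the paper: reduce to $\phi_{(n)}=\partial_\rho^n\phi$ in the $(v,\rho_+)$ or $(u,\rho_c)$ coordinates, verify the $n=0$ case by expanding~\eqref{eq:mainphysicalspaceeq}, and proceed by induction. The paper's write-up is slightly more explicit in that it records the three combinatorial identities $(n+1)n+2(n+1)=(n+2)(n+1)$, $(n+1)n(n-1)+3(n+1)n=(n+2)(n+1)n$, and $(n+1)n(n-1)(n-2)+4(n+1)n(n-1)=(n+2)(n+1)n(n-1)$ that drive the inductive step, which you allude to under ``re-indexing'' but do not state.
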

\begin{proof}
It is convenient to work in $(u,\rho)$ or $(v,\rho)$ coordinates in which
\begin{equation*}
\phi_{(n)}=\partial_{\rho}^n\phi.
\end{equation*}
The equation follows by a simple induction argument, using \eqref{eq:mainphysicalspaceeq}, we have that for $n=0$:
\begin{equation*}
\begin{split}
 0=&\:2T\partial_{\rho}\phi+\left((1-A_2)\rho^2+2\kappa \rho+A_3\rho^3+A_4 \rho^4\right)\partial_{\rho}^2\phi+2\left((1-A_2)\rho+\kappa+\frac{3}{2}A_3\rho^2+2A_4 \rho^3\right)\partial_{\rho}\phi\\
 &-(B_0+B_1\rho +B_2\rho^2)\phi+\mathring{\slashed{\Delta}} \phi.
 \end{split}
\end{equation*}
In the inductive step, we apply the following identities:
\begin{align*}
(n+1)n+2(n+1)=&\:(n+2)(n+1),\\
(n+1)n(n-1)+3(n+1)n=&\: (n+2)(n+1)n,\\
(n+1)n(n-1)(n-2)+4(n+1)n(n-1)=&\: (n+2)(n+1)n(n-1).\qedhere
\end{align*}
\end{proof}

\subsection{Higher-order conservation laws}
\label{sec:hoconsvlaws}
The higher-order equations derived in Proposition \ref{prop:naivehoeq} have to be modified in order to use them in the Gevrey estimates of Section \ref{sec:gevrest}. In this section we carry out the required modification and we show that \emph{conservation laws} arise for the modified higher-order quantities along $\mathcal{H}^+$ and $\mathcal{I}^+$ when $\Lambda=0$ and $|e|=M$. In order to see the conservation laws, we need to restrict to the projections $\psi_{\ell}=\pi_{\ell}(\psi)$ of solutions $\psi$ to \eqref{eq:waveequation}, i.e. solutions satisfying ${\mathring{\slashed{\Delta}}}\psi_{\ell}=-\ell(\ell+1)\psi_{\ell}$ (see Section \ref{sec:enspaces}).

\textbf{The key goal of this section is to derive equations for modified higher-order quantities, which give rise to conservation laws, \emph{whilst keeping track of the precise dependence of constants on the \underline{order of differentiation} and the \underline{surface gravities} $\kappa_+$, $\kappa_c$.}} This is achieved in Lemma \ref{lm:eqmodquant}. It is only at this part of the paper where \emph{extremality} of the spacetime in the $\Lambda=0$ limit is fundamentally used via smallness of \emph{both} $\kappa_c$ and $\kappa_+$.\footnote{Although one could in principle try to derive an analogue of Lemma \ref{lm:eqmodquant} near $\mathcal{I}^+$ or $\mathcal{C}^+$, which is useful even when $\kappa_+$ is not small.}

In the discussion below, we will simultaneously investigate spacetimes regions near $\mathcal{H}^+$ or $\mathcal{C}^+$ ($\mathcal{I}^+$ when $\Lambda=0$), by working with the coordinate $\rho$, where $\rho=\rho_+$ or $\rho=\rho_c$, respectively.

In order to obtain the conservation law in the $\ell=0$ case, no modification is required of $\phi_{(n)}$. We consider \eqref{eq:mainhophysicalspaceeq} with $n=0$ and immediately see that the quantity
\begin{equation*}
\lim_{r\to \infty}\partial_{\rho_c}\phi_{0}(\tau,r,\theta,\varphi)
\end{equation*}
is independent of $\tau$ in the limit $\Lambda \downarrow 0$ and moreover,
\begin{equation*}
\lim_{r\downarrow r_+}\partial_{\rho_+}\phi_{0}(\tau,r,\theta,\varphi)
\end{equation*}
is independent of $\tau$ when both $\Lambda \downarrow 0$ and $|e|\to M$.

Now consider $\ell>0$ and \eqref{eq:mainhophysicalspaceeq} with $n=\ell$. When $n=\ell$, we see that
\begin{equation*}
\lim_{r\to \infty}\partial_{\rho}^{\ell+1}\phi_{\ell}(\tau,r,\theta,\varphi)
\end{equation*}
is independent of $\tau$ in the limit $\Lambda \downarrow 0$ if $M=0$. However, this conservation law fails to hold when $M>0$.

In order to derive conservation laws in the $M>0$ case when $\Lambda \downarrow 0$ \emph{and} $|e|\to M$, we introduce the quantity ${\Phi}:=w(\rho)^{-1}\phi$, where the weight function $w$ will be chosen appropriately. Let us consider \eqref{eq:mainhophysicalspaceeq} with $n=0$ and an additional inhomogeneous term $f$ on the left-hand side. We express this equation in terms of ${\Phi}$ as follows:
\begin{equation*}
\begin{split}
f=&\:2T\partial_{\rho}\phi+\partial_{\rho}\left(Dr^{-2} \partial_{\rho}(w{\Phi})\right)-(B_0+B_1\rho +B_2\rho^2)w{\Phi}+w\mathring{\slashed{\Delta}} {\Phi}\\
=&\:2T\partial_{\rho}\phi+w^{-1}\partial_{\rho}\left(Dr^{-2} w^2\partial_{\rho}{\Phi}\right)+\left[(Dr^{-2} w')' w^{-1}-(B_0+B_1\rho +B_2\rho^2)\right]w{\Phi}+w\mathring{\slashed{\Delta}} {\Phi}
 \end{split}
\end{equation*}
and hence, 

\begin{equation}
\label{eqwaveqeqtildepsi}
\begin{split}
w^{-1}f=&\:2Tw^{-1}\partial_{\rho}\phi+w^{-2}\partial_{\rho}\left(Dr^{-2} w^2\partial_{\rho}{\Phi}\right)+\left[(Dr^{-2} w')' w^{-1}-(B_0+B_1\rho +B_2\rho^2)\right]{\Phi}+\mathring{\slashed{\Delta}} {\Phi}.
 \end{split}
\end{equation}

We now choose $w$ so that the factor in front of the zeroth-order term ${\Phi}$ vanishes when $\kappa_c=\kappa_+=0$. We first define constants $\widetilde{A}_3$, $\widetilde{A}_4$, $\widetilde{B}_1$ and $\widetilde{B}_2$ that satisfy:
\begin{align*}
Dr^{-2}=&\:\rho^2(1-M\rho)^2+2\kappa \rho +A_2\rho^2+ \widetilde{A}_3\rho^3+\widetilde{A}_4 \rho^4,\\
B_0+B_1\rho+B_2\rho^2=&\:2M\rho(1-M\rho)+ B_0 +\widetilde{B}_1\rho+\widetilde{B}_2\rho^2.
\end{align*}
Note that $\widetilde{A}_3$, $\widetilde{A}_4$, $\widetilde{B}_1$ and $\widetilde{B}_2$ vanish when $\kappa_c=\kappa_+=0$ and $A_2$ and $B_0$ also vanish when $\kappa_c=\kappa_+=0$.

Let $w$ be a solution to the following differential equation:
\begin{equation*}
(\rho^2(1-M\rho)^2w')'=2M\rho(1-M\rho)w.
\end{equation*}
Then $w(\rho)$ must be of the form
\begin{equation*}
w(\rho)= C_1 (\rho-M\rho^2)^{-1}+C_2(1-M\rho)^{-1},
\end{equation*}
for some constants $C_1\in \R$ and $C_2\in \R$.

We set $C_1=0$ and $C_2=1$ to obtain:
\begin{equation*}
w(\rho)= (1-M\rho)^{-1}.
\end{equation*}
Note that $w(0)=1$.

We now recall the coordinate $\tilde{\rho}$ introduced in Section \ref{sec:confcoord}:
\begin{equation*}
\tilde{\rho}=\frac{\rho}{1-M\rho}.
\end{equation*}
Then
\begin{equation*}
\partial_{\rho}=\frac{d\tilde{\rho}}{d\rho} \partial_{\tilde{\rho}}=\frac{1}{(1-M\rho)^2} \partial_{\tilde{\rho}}=w^2 \partial_{\tilde{\rho}}
\end{equation*}
Furthermore,
\begin{equation*}
\rho= \frac{\tilde{\rho}}{1+M\tilde{\rho}}
\end{equation*}
and we can express
\begin{equation*}
w(\rho)=\frac{\tilde{\rho}}{\rho}=1+M\tilde{\rho}.
\end{equation*}

By the above observations, we can rewrite \eqref{eqwaveqeqtildepsi} as follows:
\begin{equation*}
\begin{split}
w^{-1}f=&\:2Tw\partial_{\tilde{\rho}}\phi+\partial_{\tilde{\rho}}\left(Dr^{-2} (1-M\rho)^{-4}\partial_{\tilde{\rho}}{\Phi}\right)+\mathring{\slashed{\Delta}} {\Phi}\\
&+\left[w^{-1}\partial_{\rho}((\rho^2(1-M\rho)^2+2\kappa \rho+A_2\rho^2+\widetilde{A}_3\rho^3+\widetilde{A}_4 \rho^4)\partial_{\rho}w)-(2M\rho(1-M\rho)+B_0+\widetilde{B}_1\rho +\widetilde{B}_2\rho^2)\right]{\Phi}\\
=&\:2T(1+M\tilde{\rho})\partial_{\tilde{\rho}}\phi+\partial_{\tilde{\rho}}\left(\tilde{\rho}^2\partial_{\tilde{\rho}}{\Phi}\right)+\partial_{\tilde{\rho}}\left((2\kappa \rho+A_2\rho^2+ \widetilde{A}_3\rho^3+\widetilde{A}_4 \rho^4)(1-M\rho)^{-4}\partial_{\tilde{\rho}}{\Phi}\right)+\mathring{\slashed{\Delta}} {\Phi}\\
&+\left[w^{-1}\partial_{\rho}(M(1-M\rho)^{-2}(2\kappa \rho+A_2\rho^2+\widetilde{A}_3\rho^3+\widetilde{A}_4 \rho^4))-(B_0+\widetilde{B}_1\rho +\widetilde{B}_2\rho^2)\right]{\Phi}.
 \end{split}
\end{equation*}

We can express:
\begin{equation*}
\begin{split}
(2\kappa \rho+A_2\rho^2+ \widetilde{A}_3\rho^3+\widetilde{A}_4 \rho^4)(1-M\rho)^{-4}=&\:2\kappa \frac{\tilde{\rho}^4}{\rho^3}+A_2 \frac{\tilde{\rho}^{4}}{\rho^2}+\widetilde{A}_3\frac{\tilde{\rho}^4}{\rho}+\widetilde{A}_4 \tilde{\rho}^4\\
=&\:2\kappa \tilde{\rho}(1+M\tilde{\rho})^3+A_2 \tilde{\rho}^{2}(1+M\tilde{\rho})^2+\widetilde{A}_3\tilde{\rho}^3(1+M\tilde{\rho})+\widetilde{A}_4 \tilde{\rho}^4.
\end{split}
\end{equation*}
Furthermore, we can express
\begin{equation*}
\begin{split}
w^{-1}&\partial_{\rho}(M(1-M\rho)^{-2}(2\kappa \rho+A_2\rho^2+\widetilde{A}_3\rho^3+\widetilde{A}_4 \rho^4))-(B_0+\widetilde{B}_1\rho +\widetilde{B}_2\rho^2)\\
=&\:M(1+M\tilde{\rho})\partial_{\tilde{\rho}}\left(2\kappa \tilde{\rho}(1+M\tilde{\rho})+A_2\tilde{\rho}^2+\widetilde{A}_3 \tilde{\rho}^{3}(1+M\tilde{\rho})^{-1}+\widetilde{A}_4\tilde{\rho}^4 (1+M\tilde{\rho})^{-2}\right)\\
&-\left(B_0+\widetilde{B}_1\tilde{\rho} (1+M\tilde{\rho})^{-1}+\widetilde{B}_2\tilde{\rho}^2 (1+M\tilde{\rho})^{-2}\right)\\
=&\:2M\kappa (1+M\tilde{\rho})(1+2M\tilde{\rho})+2MA_2 \tilde{\rho}(1+M\tilde{\rho})+M\widetilde{A}_3(3\tilde{\rho}^2-M\tilde{\rho}^{3}(1+M\tilde{\rho})^{-1})\\
&+M\widetilde{A}_4(4\tilde{\rho}^3(1+M\tilde{\rho})^{-1}-2M\tilde{\rho}^{4}(1+M\tilde{\rho})^{-2})-\left(B_0+\widetilde{B}_1\tilde{\rho} (1+M\tilde{\rho})^{-1}+\widetilde{B}_2\tilde{\rho}^2 (1+M\tilde{\rho})^{-2}\right)
\end{split}
\end{equation*}
to obtain:
\begin{equation}
\label{eq:waveqeqtildepsi2}
\begin{split}
(1+M\tilde{\rho})^{-1}f=&\:2T(1+M\tilde{\rho})\partial_{\tilde{\rho}}\phi+\partial_{\tilde{\rho}}\left(\tilde{\rho}^2\partial_{\tilde{\rho}}{\Phi}\right)\\
&+\partial_{\tilde{\rho}}\left((2\kappa \tilde{\rho}(1+M\tilde{\rho})^3+A_2 \tilde{\rho}^{2}(1+M\tilde{\rho})^2+\widetilde{A}_3\tilde{\rho}^3(1+M\tilde{\rho})+\widetilde{A}_4 \tilde{\rho}^4)\partial_{\tilde{\rho}}{\Phi}\right)+\mathring{\slashed{\Delta}} {\Phi}\\
&+\Big[ 2M\kappa (1+M\tilde{\rho})(1+2M\tilde{\rho})+2MA_2 \tilde{\rho}(1+M\tilde{\rho})+M\widetilde{A}_3(3\tilde{\rho}^2-M\tilde{\rho}^{3}(1+M\tilde{\rho})^{-1})\\
&+M\widetilde{A}_4(4\tilde{\rho}^3(1+M\tilde{\rho})^{-1}-2M\tilde{\rho}^{4}(1+M\tilde{\rho})^{-2})-\left(B_0+\widetilde{B}_1\tilde{\rho} (1+M\tilde{\rho})^{-1}+\widetilde{B}_2\tilde{\rho}^2 (1+M\tilde{\rho})^{-2}\right)\Big]{\Phi}.
 \end{split}
\end{equation}
Recall now that
\begin{equation*}
\phi=(1+M\tilde{\rho}){\Phi}.
\end{equation*}
Define moreover $f_{(0)}=(1+M\tilde{\rho})^{-1} f$, then we can write

\begin{equation}
\label{eq:waveqeqtildepsifin}
\begin{split}
f_{(0)}=&\:2T(1+M\tilde{\rho})^2\partial_{\tilde{\rho}}\Phi+2M(1+M\tilde{\rho})T{\Phi}+\partial_{\tilde{\rho}}\left(\tilde{\rho}^2\partial_{\tilde{\rho}}{\Phi}\right)\\
&+\partial_{\tilde{\rho}}\left(\left[2\kappa \tilde{\rho}(1+M\tilde{\rho})^3+A_2 \tilde{\rho}^{2}(1+M\tilde{\rho})^2+\widetilde{A}_3\tilde{\rho}^3(1+M\tilde{\rho})+\widetilde{A}_4 \tilde{\rho}^4\right]\partial_{\tilde{\rho}}{\Phi}\right)+\mathring{\slashed{\Delta}} {\Phi}\\
&+\Big[ 2M\kappa (1+M\tilde{\rho})(1+2M\tilde{\rho})+2MA_2 \tilde{\rho}(1+M\tilde{\rho})+M\widetilde{A}_3(3\tilde{\rho}^2-M\tilde{\rho}^{3}(1+M\tilde{\rho})^{-1})\\
&+M\widetilde{A}_4(4\tilde{\rho}^3(1+M\tilde{\rho})^{-1}-2M\tilde{\rho}^{4}(1+M\tilde{\rho})^{-2})-\left(B_0+\widetilde{B}_1\tilde{\rho} (1+M\tilde{\rho})^{-1}+\widetilde{B}_2\tilde{\rho}^2 (1+M\tilde{\rho})^{-2}\right)\Big]{\Phi}.
 \end{split}
\end{equation}
\begin{lemma}
\label{lm:eqmodquant} 
Define
\begin{align*}
{\Phi}_{(0)}:=&\: {\Phi},\\
{\Phi}_{(n)}:=&\:\partial_{\tilde{\rho}}^{n} {\Phi},\\
{f}_{(n)}:=&\: \partial_{\tilde{\rho}}^n((1+M\tilde{\rho})^{-1} f),
\end{align*}
where $\partial_{\tilde{\rho}}$ is defined with respect to the coordinates $(v,\tilde{\rho}_+,\theta,\varphi)$ or $(u,\tilde{\rho}_c,\theta,\varphi)$.

Let $0<\rho_0<\frac{r_c-r_+}{r_c r_+}$ and restrict to the region $\{\rho<\rho_0\}$. Then there exist functions ${F}_{n,k},\widetilde{F}_{n,k}:[0,\tilde{\rho}_0]\to \R$ for all $n\in \N$ and $k\in \N$ with $k\leq n+2$ such that 
\begin{equation}
\label{eq:howeighteqv1}
\begin{split}
{f}_{(n)}= &\:2(1+M\tilde{\rho})^2T{\Phi}_{(n+1)}+4\left(n+\frac{1}{2}\right)M(1+M\tilde{\rho})T{\Phi}_{(n)}+2n^2M^2T{\Phi}_{(n-1)}\\
&+[2\kappa \tilde{\rho} +\tilde{\rho}^2(1+ {F}_{n,n+2})]{\Phi}_{(n+2)}+ 2(n+1)[\tilde{\rho}+\kappa+\tilde{\rho}{F}_{n,n+1}]{\Phi}_{(n+1)}\\
&+\left[n(n+1)+\mathring{\slashed{\Delta}} \right]{\Phi}_{(n)}+\sum_{k=1}^{n} \frac{(n+1)!}{(k-1)!} M^{n-k}{F}_{n,k}{\Phi}_{(k)}+\sum_{k=0}^n \frac{n! (n-k+1)}{k!}M^{n-k}\widetilde{F}_{n,k}{\Phi}_{(k)},
\end{split}
\end{equation}
with $ {F}_{n,k},\widetilde{F}_{n,k}$ satisfying:
\begin{equation}
\label{eq:esttildeF}
||{F}_{n,k}||_{L^{\infty}([0,\rho_0])}+||\widetilde{F}_{n,k}||_{L^{\infty}([0,\rho_0])}\leq C_0(\kappa_++\kappa_c),
\end{equation}
with $C_0>0$ a constant that is independent of $n,k,\kappa_+,\kappa_c$.
\end{lemma}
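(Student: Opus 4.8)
I would obtain \eqref{eq:howeighteqv1} by applying $\partial_{\tilde{\rho}}^n$ to the $n=0$ equation \eqref{eq:waveqeqtildepsifin} and regrouping the resulting terms; equivalently one could run an induction on $n$ in the style of Proposition~\ref{prop:naivehoeq}. The first step is to dispose of the ``model'' contributions. Since $(1+M\tilde{\rho})^2$ and $(1+M\tilde{\rho})$ are polynomials of degree $\le 2$ in $\tilde{\rho}$, the Leibniz rule applied to the transport terms $2T(1+M\tilde{\rho})^2\partial_{\tilde{\rho}}\Phi+2M(1+M\tilde{\rho})T\Phi$ produces only three terms each and, after collecting, reproduces exactly the first line of \eqref{eq:howeighteqv1}, namely $2(1+M\tilde{\rho})^2T\Phi_{(n+1)}+2(2n+1)M(1+M\tilde{\rho})T\Phi_{(n)}+2n^2M^2T\Phi_{(n-1)}$. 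Differentiating $\partial_{\tilde{\rho}}(\tilde{\rho}^2\partial_{\tilde{\rho}}\Phi)=\tilde{\rho}^2\Phi_{(2)}+2\tilde{\rho}\Phi_{(1)}$ and the degree-one part $2\kappa\tilde{\rho}$ of $2\kappa\tilde{\rho}(1+M\tilde{\rho})^3$, and using the combinatorial identities of Proposition~\ref{prop:naivehoeq} (e.g.\ $n(n-1)+2n=n(n+1)$), gives the explicit terms $[2\kappa\tilde{\rho}+\tilde{\rho}^2]\Phi_{(n+2)}$, $2(n+1)[\tilde{\rho}+\kappa]\Phi_{(n+1)}$ and $n(n+1)\Phi_{(n)}$, while $\mathring{\slashed{\Delta}}$ commutes with $\partial_{\tilde{\rho}}$ and contributes $\mathring{\slashed{\Delta}}\Phi_{(n)}$.

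All remaining contributions — coming from the higher-degree part of $2\kappa\tilde{\rho}(1+M\tilde{\rho})^3$, from $A_2\tilde{\rho}^2(1+M\tilde{\rho})^2+\widetilde{A}_3\tilde{\rho}^3(1+M\tilde{\rho})+\widetilde{A}_4\tilde{\rho}^4$, and from the entire zeroth-order bracket in \eqref{eq:waveqeqtildepsifin} — then need to be packaged into the two sums in \eqref{eq:howeighteqv1}, which also defines the functions $F_{n,k}$ and $\widetilde{F}_{n,k}$ (set to zero for those $k$ outside the relevant range). I would split these error contributions into (i) those of the form $\partial_{\tilde{\rho}}(V\partial_{\tilde{\rho}}\Phi)$ with $V$ a polynomial in $\tilde{\rho}$ of degree $\le 4$ vanishing to order $\ge 1$ at $\tilde{\rho}=0$, together with the polynomial part of the zeroth-order bracket; and (ii) the genuinely rational part of the zeroth-order bracket, a finite sum of monomials $c_{p,q}\,\tilde{\rho}^p(1+M\tilde{\rho})^{-q}$ with $p,q\le 2$ and $c_{p,q}=O(\kappa_++\kappa_c)$. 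For (i), $\partial_{\tilde{\rho}}^n$ produces by Leibniz only contributions to $\Phi_{(k)}$ with $n-2\le k\le n+2$, the coefficient of $\Phi_{(k)}$ being $\binom{n+1}{k-1}V^{(n+2-k)}$ after combining the $V\Phi_{(2)}$ and $V'\Phi_{(1)}$ pieces via Pascal's identity; since $V^{(n+2-k)}$ is a derivative of order $\le 4$ of a degree-$\le 4$ polynomial with coefficients $O(\kappa_++\kappa_c)$ and $\frac{(n+1)!}{(k-1)!}=\binom{n+1}{k-1}(n+2-k)!$, this matches the first prefactor and yields $||F_{n,k}||_{L^{\infty}([0,\rho_0])}\le C_0(\kappa_++\kappa_c)$. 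For (ii), I would compute $\partial_{\tilde{\rho}}^n(V\Phi)=\sum_{k=0}^n\binom{n}{n-k}V^{(n-k)}\Phi_{(k)}$ and use $\partial_{\tilde{\rho}}^m(1+M\tilde{\rho})^{-q}=(-1)^m\frac{(q+m-1)!}{(q-1)!}M^m(1+M\tilde{\rho})^{-(q+m)}$ to bound $|V^{(m)}|\le C_0(\kappa_++\kappa_c)(m+1)!M^m$ on $[0,\rho_0]$, the factor $(1+M\tilde{\rho})^{-(q+m)}$ being uniformly bounded because $\rho_0$ (hence $\tilde{\rho}_0$) is small; since $\binom{n}{n-k}(n-k+1)!=\frac{n!(n-k+1)}{k!}$, this is precisely the second prefactor and gives $||\widetilde{F}_{n,k}||_{L^{\infty}([0,\rho_0])}\le C_0(\kappa_++\kappa_c)$. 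That each of the scalars $\kappa$, $A_2$, $\widetilde{A}_3$, $\widetilde{A}_4$, $B_0$, $\widetilde{B}_1$, $\widetilde{B}_2$ is $O(\kappa_++\kappa_c)$ follows by direct inspection of the formulae in Section~\ref{sec:metric2} and Section~\ref{sec:eq}, using that $\mathfrak{l}^{-2}$ and $M^2-e^2$ are $O(\kappa_++\kappa_c)$ in the near-extremal, small-$\Lambda$ regime under consideration.

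The main obstacle is the combinatorial bookkeeping in (ii): one must verify that the factorials generated by repeatedly differentiating the negative powers $(1+M\tilde{\rho})^{-q}$ inside the full Leibniz expansion of $\partial_{\tilde{\rho}}^n$ — including the cross terms in which derivatives also fall on the factors $\tilde{\rho}^p$ — are bounded for every $k$ by exactly $\binom{n}{n-k}(n-k+1)!$ up to fixed powers of $M$ and a uniform constant, so that the prescribed prefactors absorb all of the $n$-dependence. The cleanest way to see this is to apply Leibniz to each monomial $\tilde{\rho}^p(1+M\tilde{\rho})^{-q}$ separately and note that the term in which all derivatives land on the negative power already saturates the bound while the others (there are $\le p$ of them) are of strictly lower order in $n$. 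The remaining care is with the boundary cases $k\in\{n-2,n-1,n,n+1,n+2\}$, which additionally receive the model contributions isolated in the first paragraph, and with the bookkeeping convention that the first sum in \eqref{eq:howeighteqv1} starts at $k=1$, so that all $\Phi_{(0)}$-contributions are placed in the second sum, where the prefactor $\frac{n!(n+1)}{0!}M^n=(n+1)!\,M^n$ matches the $\sim n!\,M^n$ growth of $V^{(n)}$ coming from the $q=2$ monomials. Once these cancellations are in place, uniformity of $C_0$ in $n$, $k$, $\kappa_+$ and $\kappa_c$ is immediate.
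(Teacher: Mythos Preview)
Your proposal is correct and follows essentially the same route as the paper: differentiate \eqref{eq:waveqeqtildepsifin} $n$ times, read off the model terms from the polynomial factors (transport terms, $\tilde{\rho}^2\partial_{\tilde{\rho}}^2$ piece, and the leading $2\kappa\tilde{\rho}$), and then package the remaining error terms using the explicit formula $\partial_{\tilde{\rho}}^m(1+M\tilde{\rho})^{-q}=(-1)^m\frac{(q+m-1)!}{(q-1)!}M^m(1+M\tilde{\rho})^{-(q+m)}$, which is exactly what the paper does. The only (cosmetic) difference is that the paper puts the \emph{entire} zeroth-order bracket into the $\widetilde{F}_{n,k}$-sum, whereas you split off its polynomial part into the $F_{n,k}$-sum; both groupings are legitimate since for those terms only $k\in\{n-2,n-1,n\}$ occur and the two prefactors are comparable there.
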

\begin{proof}
We will obtain \eqref{eq:esttildeF} by acting with $\partial_{\tilde{\rho}}^n$ on both sides of \eqref{eq:waveqeqtildepsifin}. First of all, recall that the constants $A_2,\widetilde{A}_3,\widetilde{A}_4,B_0,B_1,B_2$ can all be bounded by $\kappa_++\kappa_c$

Then, observe that the factors in the terms on the first two lines on the right-hand side of \eqref{eq:waveqeqtildepsifin} are polynomials in $\tilde{\rho}$. Acting with $\partial_{\tilde{\rho}}^n$ therefore results straightforwardly in the terms on the first two lines on the right-hand side of \eqref{eq:esttildeF} and $\left[n(n+1)+\mathring{\slashed{\Delta}} \right]{\Phi}_{(n)}$.  This determines in particular $F_{n,n+2}$ and $F_{n,n+1}$, which are simply polynomials in $\tilde{\rho}$.

Furthermore, after acting with $\partial_{\tilde{\rho}}^n$ on the terms with polynomial factors in $\tilde{\rho}$ and applying the Leibniz rule, the terms involving $\kappa, A_2,\widetilde{A}_3,\widetilde{A}_4$ generate also terms involving $\Phi_{(k)}$, $n-2\leq k\leq n$ that can be grouped as follows:
\begin{equation*}
\sum_{k=n-2}^n  \frac{(n+1)!}{(k-1)!} M^{n-k}{F}_{n,k}{\Phi}_{(k)}.
\end{equation*}

We are left with applying $\partial_{\tilde{\rho}}^n$ to the term of the form $[\ldots]\Phi$ on the right-hand side of \eqref{eq:waveqeqtildepsifin}. This term also involves factors of the form $(1+M\tilde{\rho})^{-m}$ multiplying $\Phi$, so when we act with $\partial_{\tilde{\rho}}^n$ we will see derivatives of the form $\Phi_{(k)}$ for \underline{all} $0\leq k\leq n$. It therefore remains to show that the factors in front of $\Phi_{(k)}$ do not grow too fast in $n$, so as to respect the behaviour stated in \eqref{eq:esttildeF}.

We will need to use that, after applying the Leibniz rule, we have:
\begin{equation*}
\partial_{\tilde{\rho}}^j ((1+M\tilde{\rho})^{-m})=\frac{(-1)^{j} (j+m-1)! M^j}{(1+M \tilde{\rho})^{j+1}(m-1)!}.
\end{equation*}
The terms that will lead to the largest growth in $n$ are therefore terms involving factors $(1+M\tilde{\rho})^{-m}$ with the largest values of $m$. In the case under consideration, these will terms of the form:
\begin{equation*}
\begin{split}
\partial_{\tilde{\rho}}^n\left( (1+M\tilde{\rho})^{-2} {\Phi}\right)=&\:\sum_{k=0}^n \frac{n!}{k! (n-k)!} \partial_{\tilde{\rho}}^{n-k} \left( (1+M\tilde{\rho})^{-2}\right) {\Phi}_{(k)}\\
=&\:\sum_{k=0}^n\frac{n!}{k! }  \left[ \frac{(-1)^{n-k}  M^{n-k}}{(1+M \tilde{\rho})^{n-k+1}}\cdot \frac{(n-k+1)!}{(n-k)!}\right] {\Phi}_{(k)}\\
=&\:\sum_{k=0}^n(n-k+1)\frac{n!}{k! }  \frac{(-1)^{n-k}  M^{n-k}}{(1+M \tilde{\rho})^{n-k+1}}{\Phi}_{(k)}.
\end{split}
\end{equation*}
We can similarly decompose $n^l \tilde{\rho}^l\partial_{\tilde{\rho}}^{n-l}( (1+M\tilde{\rho})^{-2}\Phi)$ as sum over ${\Phi}_{(k)}$ to group all the terms coming from $\partial_{\tilde{\rho}}$ acting on $\tilde{\rho}^4(1+M\tilde{\rho})^{-2} {\Phi}$ as:
\begin{equation*}
\sum_{k=0}^n \frac{n! (n-k+1)}{k!}M^{n-k}\widetilde{F}_{n,k}{\Phi}_{(k)}.
\end{equation*}
The remaining terms in $[\ldots]\Phi$ on the right-hand side of \eqref{eq:waveqeqtildepsifin} can be dealt with \emph{mutatis mutandis} to conclude that \eqref{eq:howeighteqv1} holds.
\end{proof}

Note that by Lemma \ref{lm:eqmodquant}, it follows immediately that for $\Phi$ corresponding to a solution $\psi_{\ell}$ to \eqref{eq:waveequation} on extremal Reissner--Nordstr\"om ($\kappa_+=\kappa_c=0$), the quantity
\begin{equation*}
{\Phi}_{(\ell+1)}+2\left(\ell+\frac{1}{2}\right)M{\Phi}_{(\ell)}+\ell^2M^2{\Phi}_{(\ell-1)}
\end{equation*}
is conserved in $\tau$ along $\mathcal{I}^+$ (when $\rho=\rho_c$) or $\mathcal{H}^+$ (when $\rho=\rho_+$).

\subsection{Fixed-frequency operators}
\label{sec:fixedfreqop}
We consider solutions $\psi$ to \eqref{eq:waveequation} that take on the following form:
\begin{equation*}
\psi(\tau, r,\theta,\varphi)=e^{s \tau}\sum_{m=-\ell}^{\ell} \hat{\psi}_{\ell m}(r) Y_{\ell m}(\theta,\varphi).
\end{equation*}
For the sake of convenience, we will suppress the $m$ and $\ell$ in the subscript of $ \hat{\psi}_{\ell m}$. Furthermore, we denote $\hat{\Phi}:= (1+M\tilde{\rho})^{-1}r{\hat{\psi}}$ and $\hat{\phi}=r{\hat{\psi}}$ .

Then we use \eqref{eq:waveeqhyperboloidal} to obtain the following ODE for ${\hat{\psi}}$:
\begin{equation}
\label{eq:maineqhatpsi}
0=2(1-h_{r_+}D)s\partial_{r}{\hat{\psi}}+r^{-2}\partial_{r}(Dr^2\partial_{r}{\hat{\psi}})+[(2r^{-1}-r^{-2}(Dr^2h_{r_+})')s-h_{r_+}(2-h_{r_+}D)s^2]{\hat{\psi}}-r^{-2}\ell(\ell+1){\hat{\psi}}-2\mathfrak{l}^{-2}{\hat{\psi}}.
\end{equation}

Let us furthermore define the higher-order quantities $\hat{\Phi}_{(n)}$ as follows:
\begin{equation*}
\hat{\Phi}_{(n)}:= (\partial_{\tilde{\rho}}-s\hat{h})^n \hat{\Phi},
\end{equation*}
where we recall that $\hat{h}= r^2h(1+M\tilde{\rho})^{-2}$, with either $h=h_{r_+}$ or $h=h_{r_c}=2D^{-1}-h_{r_+}$, and where $\partial_{\tilde{\rho}}$ is defined with respect to the coordinates $(\tau,\tilde{\rho},\theta,\varphi)$.

Recall from Lemma \ref{lm:eqmodquant} the two different definitions of $\Phi_{(n)}$ corresponding to the choices $\rho=\rho_+$ and $\rho=\rho_c$, expressed in a coordinate invariant way:
\begin{align*}
\Phi_{(n)}=((1+M\tilde{\rho}_+)^{-2} \hat{\underline{L}})^n\Phi,\\
\Phi_{(n)}=((1+M\tilde{\rho}_c)^{-2} \hat{{L}})^n\Phi.
\end{align*}
By \eqref{eq:eqrhohat1} and \eqref{eq:eqrhohat2}, we can write both of these as:
\begin{equation*}
\Phi_{(n)}=(\partial_{\tilde{\rho}}-\hat{h}T)^n\Phi.
\end{equation*}
We can therefore conclude that
\begin{equation*}
({\Phi}_{(n)})_{\ell m}(\tau, r)=e^{s \tau}\hat{\Phi}_{(n)}(r),
\end{equation*}
with $\partial_{\tilde{\rho}}$ defined with respect to the coordinates $(\tau,\tilde{\rho},\theta,\varphi)$.

Then, \eqref{eq:mainhophysicalspaceeq} reduces to
\begin{equation}
\label{eq:maineqhatphin}
\begin{split}
 0=&\:2s\hat{\phi}^{(n+1)}+\left((1-A_2)\rho^2+2\kappa \rho+A_3\rho^3+A_4 \rho^4\right)\hat{\phi}^{(n+2)}\\
 &+2(n+1)\left((1-A_2)\rho+\kappa+\frac{3}{2}A_3\rho^2+2A_4 \rho^3\right)\hat{\phi}^{(n+1)}\\
 &\left[n(n+1)\left(1-A_2+3A_3 \rho+6A_4 \rho^2\right)-(B_0+B_1\rho +B_2\rho^2)\right]\hat{\phi}^{(n)}-\ell(\ell+1)\hat{\phi}^{(n)}\\
 &+\left[(n+1)n(n-1)\left(A_3+4A_4\rho\right) -n(B_1+2B_2\rho)\right]\hat{\phi}^{(n-1)}\\
 &+\left[(n+1)n(n-1)(n-2)A_4-n(n-1)B_2\right]\hat{\phi}^{(n-2)}
 \end{split}
 \end{equation}
and \eqref{eq:howeighteqv1} reduces to
\begin{equation}
\label{eq:homodifiedeq}
\begin{split}
0=&\: 2(1+M\tilde{\rho})^2s\hat{\Phi}_{(n+1)}+4\left(n+\frac{1}{2}\right)M(1+M\tilde{\rho})s\hat{\Phi}_{(n)}+2n^2M^2s\hat{\Phi}_{(n-1)}\\ \nonumber
&+[2\kappa \tilde{\rho}+\tilde{\rho}^2(1+ {F}_{n,n+2})]\hat{\Phi}_{(n+2)}+ 2(n+1)(\tilde{\rho}+\kappa+\tilde{\rho}{F}_{n,n+1})\hat{\Phi}_{(n+1)}\\ \nonumber
&+\left[n(n+1)-\ell(\ell+1) \right]\hat{\Phi}_{(n)}+\sum_{k=1}^{n} \frac{(n+1)!}{(k-1)!} M^{n-k}{F}_{n,k}\hat{\Phi}_{(k)}+\sum_{k=0}^n \frac{n! (n-k+1 )}{k!}M^{n-k}\widetilde{F}_{n,k}\hat{\Phi}_{(k)}.
\end{split}
\end{equation}
Equation \eqref{eq:maineqhatpsi} motivates the study of the differential operator $\hat{{L}}_{s,\ell,\kappa}$, which is defined as follows:
\begin{equation}
\label{eq:fixedfreqpsis}
\begin{split}
r^{-2}\hat{{L}}_{s,\ell,\kappa}({\hat{\psi}}):=&\:2(1-h_{r_+}D)s\partial_{r}{\hat{\psi}}+r^{-2}\partial_{r}(Dr^2\partial_{r}{\hat{\psi}})+[-h_{r_+}(2-h_{r_+}D)s^2+(2r^{-1}-r^{-2}(Dr^2h_{r_+})')s]{\hat{\psi}}\\ 
&-(r^{-2} \ell(\ell+1)+2\mathfrak{l}^{-2}) {\hat{\psi}},\\
=&\:2(1-h_{r_+}D)sr^{-1}\partial_{r}(r{\hat{\psi}})+r^{-2}\partial_{r}(Dr^2\partial_{r}{\hat{\psi}})+[-h_{r_+}(2-h_{r_+}D)s^2-(h_+D)'s]{\hat{\psi}}\\
&-(r^{-2}\ell(\ell+1)+2\mathfrak{l}^{-2}) {\hat{\psi}}.
\end{split}
\end{equation}
We moreover have the following relation
\begin{equation}
\label{eq:fixedfreqpsihat}
\begin{split}
&(\partial_{\tilde{\rho}}-s\hat{h})^n(r^3(1+M\tilde{\rho})^{-1}r^{-2}\hat{{L}}_{s,\ell,\kappa}({\hat{\psi}}))= 2(1+M\tilde{\rho})^2s\hat{\Phi}_{(n+1)}+4\left(n+\frac{1}{2}\right)M(1+M\tilde{\rho})s\hat{\Phi}_{(n)}\\
&+2n^2M^2s\hat{\Phi}_{(n-1)}\\ 
&+[2\kappa \tilde{\rho}+\tilde{\rho}^2(1+ {F}_{n,n+2})]\hat{\Phi}_{(n+2)}+ 2(n+1)(\tilde{\rho}+\kappa+\tilde{\rho}{F}_{n,n+1})\hat{\Phi}_{(n+1)}\\
&+\left[n(n+1)-\ell(\ell+1) \right]\hat{\Phi}_{(n)}+\sum_{k=1}^{n} \frac{(n+1)!}{(k-1)!} M^{n-k}{F}_{n,k}\hat{\Phi}_{(k)}+\sum_{k=0}^n \frac{n! (n-k+1)}{k!}M^{n-k}\widetilde{F}_{n,k}\hat{\Phi}_{(k)}.
\end{split}
\end{equation}
\textbf{Note that the operator $\hat{{L}}_{s,\ell,\kappa}$ depends on the choice of hypersurface $\Sigma$, i.e. the choice of $h_{r_+}$!}

Suppose
\begin{equation*}
r(1+M\tilde{\rho})^{-1}\hat{{L}}_{s,\ell,\kappa}({\hat{\psi}})=f,
\end{equation*}
then we denote
\begin{equation}
\label{def:fn}
f_n:=(\partial_{\tilde{\rho}}-s\hat{h})^n(f).
\end{equation}

\section{The solution operator semigroup}
\label{sec:propgevreyphys}

In this section, we will derive physical space estimates involving Gevrey norms, which will allow us to define the time-translation operator $\mathcal{S}(\tau)$ on the Hilbert space $\mathbf{H}_{\sigma,\rho_0}$ (see Section \ref{sec:enspaces}). In particular, we establish propagation of Gevrey regularity near $\mathcal{H}^+$ and $\mathcal{C}^+$ or $\mathcal{I}^+$.

Here, we will fix $\Sigma_0$ to be a mixed spacelike-null hypersurface of the form $\Sigma_0=\underline{N}_0\cup S_0\cup N_0$ by choosing $h_{r_+}=0$ for $r\leq R_0^+$ and $h_{r_c}=0$ for $r \geq R_0^c$, see Section \ref{sec:foliations}.
\begin{proposition}
\label{prop:physspacegevrey}
Let $\psi\in C^{\infty}( \hat{\mathcal{R}};\C)$ be a solution to \eqref{eq:waveequation}. Let $\sigma>0$. Then, for $\rho=\rho_+$ or $\rho=\rho_c$ and $\rho_0>0$ suitably small, there exist constants $B,C>0$, independent of $\sigma$, such that for all $\tau_1>0$ and $N_{\infty}\in \N$
\begin{equation}
\label{eq:physicalspacegevreylown2} 
\begin{split}
\int_{\Sigma_{\tau=\tau_1}}& \mathbb{T}(T,\mathbf{n}_{\Sigma_{\tau=\tau_1}})[\psi] \,d\mu_{\tau=\tau_1}+\sum_{\ell=0}^{\infty}\Bigg[\sum_{n=0}^{\ell-1} \frac{\sigma^{2n}}{\ell^{2n}(\ell+1)^{2n}}  (\ell+1)^4\int_{0}^{\rho_0}\int_{\s^2} |(\phi_{\ell})_{(n)}|^2(\tau_1,\rho,\theta,\varphi)\,d\omega d\rho\\
&+ \frac{\ell!^2 (\ell+1)!^2}{\ell^{2\ell}(\ell+1)^{2\ell}}\sum_{n=\ell}^{N_{\infty}} \frac{\sigma^{2n}}{n!^2(n+1)!^2}  (n+1)^4\int_{0}^{\rho_0}\int_{\s^2} |(\phi_{\ell})_{(n)}|^2(\tau_1,\rho,\theta,\varphi)\,d\omega d\rho\Bigg]\\
  \leq&\:  Ce^{B \sigma \tau_1}  \sum_{\ell=0}^{\infty}\Bigg[\sum_{n=0}^{\ell-1} \frac{\sigma^{2n}}{\ell^{2n}(\ell+1)^{2n}}  (\ell+1)^4 \int_{0}^{\rho_0}\int_{\s^2} (\phi_{\ell})_{(n)}|^2(0,\rho,\theta,\varphi)\,d\omega d\rho\\
  &+ \frac{\ell!^2 (\ell+1)!^2}{\ell^{2\ell}(\ell+1)^{2\ell}}\sum_{n=\ell}^{N_{\infty}} \frac{\sigma^{2n}}{n!^2(n+1)!^2}  (n+1)^4 \int_{0}^{\rho_0}\int_{\s^2} |(\phi_{\ell})_{(n)}|^2(0,\rho,\theta,\varphi)\,d\omega d\rho\Bigg]\\
  &+Ce^{B \sigma \tau_1}\int_{\Sigma_{\tau=0}} \mathbb{T}(T,\mathbf{n}_{\Sigma_{\tau=0}})[\psi] \,d\mu_{\tau=0}.
\end{split}
\end{equation}
\end{proposition}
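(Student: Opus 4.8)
The plan is to derive a differential inequality in $\tau$ for the weighted sum appearing on the left-hand side of \eqref{eq:physicalspacegevreylown2}, together with the energy flux $\int_{\Sigma_\tau}\mathbb{T}(T,\mathbf{n}_{\Sigma_\tau})[\psi]\,d\mu_\tau$, and then integrate via Gr\"onwall. First I would fix a single spherical harmonic mode $\psi_\ell$ (so that $\mathring{\slashed{\Delta}}\psi_\ell=-\ell(\ell+1)\psi_\ell$) and work near $\mathcal{H}^+$ ($\rho=\rho_+$) and near $\mathcal{C}^+$ or $\mathcal{I}^+$ ($\rho=\rho_c$) separately; the spacelike middle piece $S_\tau$ is compact and is controlled by the standard (degenerate) energy estimate together with elliptic/Sobolev estimates, so the only genuinely new content is the Gevrey-weighted estimate on the null pieces $\underline{N}_\tau$ and $N_\tau$. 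The key structural input is the hierarchy of higher-order equations for $\phi_{(n)}=\hat{\underline{L}}^n\phi$ (resp. $\hat{L}^n\phi$) from Proposition \ref{prop:naivehoeq}, and crucially its \emph{modified} version from Lemma \ref{lm:eqmodquant} for the quantities $\Phi_{(n)}=(\partial_{\tilde\rho}-\hat h T)^n\Phi$, since it is the modified quantities for which the dangerous $[n(n+1)-\ell(\ell+1)]$ coefficient vanishes exactly at $n=\ell$ and the error coefficients $F_{n,k},\widetilde F_{n,k}$ are $O(\kappa_++\kappa_c)$ uniformly in $n,k$.

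\textbf{Key steps.} For each $n$, multiply \eqref{eq:mainhophysicalspaceeq} (in the $\Phi_{(n)}$-form from Lemma \ref{lm:eqmodquant}) by $\overline{\Phi_{(n)}}$, take real parts and integrate over a null slab $\bigcup_{\tau\in[0,\tau_1]}\underline{N}_\tau$. The term $2(1+M\tilde\rho)^2 T\Phi_{(n+1)}$ paired with $\Phi_{(n)}$ is handled by integration by parts in $\tilde\rho$ to convert it into $\partial_\tau$ of $\Phi_{(n)}\partial_{\tilde\rho}\overline{\Phi_{(n)}}$ plus lower terms; the principal transport term $\partial_{\tilde\rho}(\tilde\rho^2\partial_{\tilde\rho}\Phi_{(n)})$ produces a favourable boundary contribution at $\tilde\rho=0$ (this is where the conservation law lives) and a $-\tilde\rho^2|\Phi_{(n+1)}|^2$ bulk term. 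The essential point is that for $n\ge\ell$ the coefficient $n(n+1)-\ell(\ell+1)\ge 0$, so that term has a good sign, and for $0\le n\le\ell-1$ we instead estimate $|\Phi_{(n)}|^2$ by the higher-order terms using the ``inverse'' structure of the hierarchy — this is exactly why the two ranges $n\le \ell-1$ and $n\ge \ell$ are weighted differently in \eqref{eq:physicalspacegevreylown2} (with the switch factor $\tfrac{\ell!^2(\ell+1)!^2}{\ell^{2\ell}(\ell+1)^{2\ell}}$ at $n=\ell$ making the two pieces match continuously). Multiplying by the weights $\frac{\sigma^{2n}}{n!^2(n+1)!^2}(n+1)^4$ (resp. $\frac{\sigma^{2n}}{\ell^{2n}(\ell+1)^{2n}}(\ell+1)^4$) and summing over $n$, the off-diagonal error terms $\sum_k \frac{(n+1)!}{(k-1)!}M^{n-k}F_{n,k}\Phi_{(k)}$ are absorbed: the ratio of consecutive weights is $\sim \sigma^2/((n+1)^2(n+2)^2)$, so after Cauchy–Schwarz the factorials in the error coefficients combine with the weights to give a convergent geometric-type series, and the sum over $k<n$ telescopes into a controllable multiple of the full weighted energy — this produces the factor $e^{B\sigma\tau_1}$ with $B$ independent of $\sigma$ (one should track that $\sigma$ enters only linearly in the exponent, coming from the $\sigma^{2n}$ weights and the $2s\Phi_{(n+1)}$-type cross terms).

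\textbf{Main obstacle.} The hard part will be the simultaneous bookkeeping of three things: the $n$-dependence of constants (needed so the weighted sums converge and so the final constant $B$ is $\sigma$-independent), the top-order term at $n=N_\infty$ (which cannot be absorbed into order $N_\infty+1$), and the angular-frequency ranges. For the top-order term, since we are stating the estimate for \emph{every} finite $N_\infty$ and the background here has $\kappa_+,\kappa_c=0$ or positive, I would handle the $N_\infty$-term by keeping it on the right-hand side as part of the data (it is controlled by the data at $\tau=0$ through the same transport estimate run backwards on that single top slab, since no absorption into higher order is needed when we do not pass to the limit $N_\infty\to\infty$); alternatively, in the $\kappa>0$ case one invokes the enhanced red-shift to absorb it, as sketched in Section \ref{intro:keyest}. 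The off-diagonal absorption is the place where a naive estimate loses: one must use that $M^{n-k}/( (k-1)!\cdot(\text{weight ratio}))$ decays in $n-k$, which forces choosing $\rho_0$ small (so that $|\tilde\rho|\le\tilde\rho_0$ is small and the polynomial factors $F_{n,k}$, which are polynomials in $\tilde\rho$ of controlled degree, are genuinely small there). Finally I would sum over $\ell$: the per-$\ell$ estimate has constants uniform in $\ell$ (the $(\ell+1)^4$ weights and the continuity of the weight switch at $n=\ell$ are arranged precisely so this works), and the $\ell=0$ and the middle-region pieces assemble into the claimed inequality, with the $T$-energy flux term propagated by the standard conserved-to-decreasing energy estimate along the foliation.
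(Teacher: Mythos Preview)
Your overall strategy—an energy identity for each $n$, weighted summation, then Gr\"onwall—matches the paper. But two technical choices in your proposal diverge from the paper's proof in ways that cause trouble.

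First, the paper does \emph{not} use the modified quantities $\Phi_{(n)}$ from Lemma~\ref{lm:eqmodquant} here; it works directly with the unmodified $\phi_{(n)}$ and equation~\eqref{eq:mainhophysicalspaceeq}. The modification (and the vanishing of $[n(n+1)-\ell(\ell+1)]$ at $n=\ell$) is essential for the frequency-space estimates in Section~\ref{sec:gevrest}, but plays no role in this physical-space propagation estimate: nothing needs to be absorbed into a nonnegative left-hand side here, since all cross terms go to the right and are handled by Gr\"onwall.

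Second, and more seriously, the multiplier is $\overline{\phi}_{(n+1)}$, not $\overline{\Phi_{(n)}}$. With $\overline{\phi}_{(n+1)}$, the leading term $2T\phi_{(n+1)}$ becomes $T(|\phi_{(n+1)}|^2)$ immediately—this is the whole point. Your multiplier $\overline{\Phi_{(n)}}$ applied to $T\Phi_{(n+1)}$ does not produce a time-derivative of a nonnegative quantity: the integration by parts in $\tilde\rho$ you describe leaves a residual $\Re(T\Phi_{(n)}\overline{\Phi_{(n+1)}})$ term, which cannot be closed without re-invoking the equation at lower order. Correspondingly, your explanation of the weight split at $n=\ell$ is off. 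With the correct multiplier, $[n(n+1)-\ell(\ell+1)]\phi_{(n)}$ becomes a \emph{cross term} $\Re(\phi_{(n)}\overline{\phi_{(n+1)}})$, estimated by Young's inequality as $\max\{\ell^2(\ell+1)^2,n^2(n+1)^2\}|\phi_{(n)}|^2$ plus $|\phi_{(n+1)}|^2$; the two weight regimes $\sigma^{2n}/\ell^{2n}(\ell+1)^{2n}$ and $\sigma^{2n}/n!^2(n+1)!^2$ are chosen precisely so that this $|\phi_{(n)}|^2$ term is absorbed into the $|\phi_{(n'+1)}|^2$ term at level $n'=n-1$. The \emph{sign} of $n(n+1)-\ell(\ell+1)$ is never used.

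The paper then handles $n=0$ separately with a cutoff supported in $\rho\le 2\rho_0$ (to avoid a boundary term at $\rho_0$), treats the $\mathring{\slashed\Delta}\phi\cdot\overline{\phi_{(1)}}$ term as a total $\rho$-derivative, combines the two cases, applies Gr\"onwall, and controls the leftover terms in $\rho_0\le\rho\le 2\rho_0$ by the standard $T$-energy boundedness. No red-shift or top-order trick is needed: the estimate is stated for finite $N_\infty$, and the top-order term simply sits on the right before Gr\"onwall.
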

\begin{proof}
In the notation below, we will suppress the subscript $\ell$ in $\psi_{\ell}$ and $\phi_{\ell}$. The higher-order quantities $\phi_{(n)}$ satisfy \eqref{eq:mainhophysicalspaceeq}. Multiplying both sides of \eqref{eq:mainhophysicalspaceeq} with $ \overline{\phi}_{(n+1)}$ and taking the real part of the resulting equation, gives:
\begin{equation}
\label{eq:mainidhophysicalspaceeq}
\begin{split}
 0=&\: T( |\phi_{(n+1)}|^2)+\frac{1}{2}\left((1-A_2)\rho^2+2\kappa \rho+A_3\rho^3+A_4 \rho^4\right)\partial_{\rho}(|\phi_{(n+1))}|^2)\\
 &+\frac{1}{2}(r^2h) \left((1-A_2)\rho^2+2\kappa \rho+A_3\rho^3+A_4 \rho^4\right) T(|\phi_{(n+1)}|^2) \\
 &+2(n+1)\left((1-A_2)\rho+\kappa+\frac{3}{2}A_3\rho^2+2A_4 \rho^3\right)|\phi_{(n+1)}|^2\\
 &+\left[n(n+1)\left(1-A_2+3A_3 \rho+6A_4 \rho^2\right)-(B_0+B_1\rho +B_2\rho^2)\right]\re (\phi_{(n)} \overline{\phi}_{(n+1)})+ \re (\mathring{\slashed{\Delta}} \phi_{(n)}\overline{\phi}_{(n+1)})\\
 &+\left[(n+1)n(n-1)\left(A_3+4A_4\rho\right) -n(B_1+2B_2\rho)\right]\re (\phi_{(n-1)}\overline{\phi}_{(n+1)})\\
 &+\left[(n+1)n(n-1)(n-2)A_4-n(n-1)B_2\right]\re (\phi_{(n-2)}\overline{\phi}_{(n+1)})
 \end{split}
 \end{equation}
 We apply the Leibniz rule to rewrite:
 \begin{equation*}
 \begin{split}
 \frac{1}{2}&\left((1-A_2)\rho^2+2\kappa \rho+A_3\rho^3+A_4 \rho^4\right)\partial_{\rho}(|\phi_{(n+1))}|^2)=\partial_{\rho}\left(\frac{1}{2}\left((1-A_2)\rho^2+2\kappa \rho+A_3\rho^3+A_4 \rho^4\right)|\phi_{(n+1))}|^2\right)\\
 &-\left[(1-A_2)\rho+\kappa +\frac{3}{2} A_3\rho^2+2A_4\rho^3\right]|\phi_{(n+1))}|^2.
 \end{split}
 \end{equation*}
 We now invoke the assumption $\mathring{\slashed{\Delta}} \psi=-\ell(\ell+1)\psi$ and integrate the above equation in $\bigcup_{\tau\in [0,\tau_1]} \Sigma_{\tau}\cap \{\rho\leq \rho_0\}$, with $\tau_1>0$ arbitrarily large, and we apply a weighted Young's inequality to the terms involving $\phi^{(k)}$ with $k\leq n$ to obtain the following estimate for $n\geq 1$: let $\epsilon,\sigma>0$, then
 \begin{equation}
 \label{eq:physicalspacegevrey}
 \begin{split}
 \int_{0}^{\rho_0}\int_{\s^2}& \left[1+\frac{1}{2}r^2h\left((1-A_2)\rho^2+2\kappa \rho+A_3\rho^3+A_4 \rho^4\right)  \right] |\phi_{(n+1)}|^2(\tau_1,\rho,\theta,\varphi)\ \,d\omega d\rho\\
 &+\frac{1}{2}\left((1-A_2)\rho_0^2+2\kappa \rho_0+A_3\rho_0^3+A_4 \rho_0^4\right) \int_0^{\tau_1}\int_{\s^2}|\phi_{(n+1))}|^2(\tau,\rho_0,\theta,\varphi)\,d\omega d\tau\\
 &+2(n+1) \int_0^{\tau_1}\int_{0}^{\rho_0}\int_{\s^2}\Bigg((1-A_2)\rho+\kappa+\frac{3}{2}A_3\rho^2+2A_4 \rho^3\\
 &-\frac{1}{2}(n+1)^{-1}\left[(1-A_2)\rho+\kappa +\frac{3}{2} A_3\rho^2+2A_4\rho^3\right]\Bigg)|\phi_{(n+1)}|^2\,d\omega d\rho d\tau\\
 \leq &\: \int_{0}^{\rho_0}\int_{\s^2} \left[1+\frac{1}{2}r^2h\left((1-A_2)\rho^2+2\kappa \rho+A_3\rho^3+A_4 \rho^4\right)  \right] |\phi_{(n+1)}|^2(0,\rho,\theta,\varphi)\ \,d\omega d\rho\\
 &+ C \epsilon^{-1} \sigma \int_0^{\tau_1} \int_{0}^{\rho_0}\int_{\s^2}|\phi_{(n+1)}|^2\,d\omega d\rho d\tau\\
 &+ \epsilon \sigma^{-1} \int_0^{\tau_1} \int_{0}^{\rho_0}\int_{\s^2}  \max\{\ell^2(\ell+1)^2,n^2(n+1)^2\}|\phi_{(n)}|^2+ M^2 [(n-1)^2n^2(n+1)^2+n^2]|\phi_{(n-1)}|^2\\
 &+M^4[(n-2)^2 (n-1)^2n^2(n+1)^2+n^2(n-1)^2]|\phi_{(n-2)}|^2 \,d\omega d\rho d\tau,
 \end{split}
 \end{equation}
  where we introduced appropriate factors of $M$ on the right-hand side to ensure that the constants $C,\epsilon>0$ can be taken to be dimensionless and $\sigma>0$ has the same dimension as $M^{-1}$. Note that all the terms appearing on the left-hand side are non-negative definite when $\rho_0$ is chosen sufficiently small, using moreover that $\kappa\geq 0$ has a good sign and that $A_2\to 0$ when $\kappa\to 0$.\footnote{One can also first set $M=1$, carry out the computation, and then place appropriate powers of $M$ in front of the terms on the right-hand side to ensure that the dimensions of all the terms in the inequality are the same.}
  
 For $n=0$ we instead multiply both sides of \eqref{eq:mainhophysicalspaceeq} with $\chi \overline{\phi}_{(1)}$, where $\chi$ is a non-negative smooth cut-off function, such that $\chi(\rho)=1$ in $\rho\leq \rho_0$ and $\chi(\rho)=0$ for $2\rho_0<\rho<\frac{r_c-r_+}{r_c r_+}$, and we integrate in the larger region $\rho\leq 2 \rho_0$. We assume $\rho_0$ is sufficiently small so that $2\rho_0<\frac{1}{2}\rho_0+\frac{1}{2}\frac{r_c-r_+}{r_c r_+}$. This generates additional terms with a factor $\partial_{\rho}\chi$, which are supported in compact regions of $r$, away from $r_c$ and $r_+$.
 
 Furthermore, we do not apply Young's inequality to estimate the integral of the terms $\re (\mathring{\slashed{\Delta}} \phi \chi \overline{\phi}_{(1)})$ and $\re (-B_0\phi \chi \overline{\phi}_{(1)})$ on the right-hand side of \eqref{eq:mainidhophysicalspaceeq}, and we instead estimate
 \begin{equation*}
 \int_{\s^2}  \re ((\mathring{\slashed{\Delta}}-B_0) \phi_{(0)}\chi \overline{\phi}_{(1)})\,d\omega=\int_{\s^2 } -\frac{1}{2}\partial_{\rho} ( \chi|\mathring{\snabla} \phi|^2+B_0\chi |\phi|^2)+hr^2 \chi T(|\mathring{\snabla} \phi|^2+B_0|\phi|^2)+\frac{1}{2}\frac{d\chi}{dr}(|\mathring{\snabla} \phi|^2+B_0|\phi|^2)\,d\omega,
 \end{equation*}
 where the terms involving $T$ on the right-hand side vanish for $\rho\leq \rho_0$ since $h=0$ for $\rho\leq \rho_0$.
We do apply Young's inequality and then a Hardy inequality \eqref{eq:hardy1} to estimate
 \begin{equation*}
 \begin{split}
 \int_{0}^{\tau_1}\int_0^{2\rho_0}\int_{\s^2}\chi|B_1\rho+B_2\rho^2||\Re(\phi_{(1)} \overline{\phi})|\,d\omega d\rho d\tau\leq &\: \int_{0}^{\tau_1}\int_0^{2\rho_0}\int_{\s^2} \epsilon \rho \chi |\phi_{(1)}|^2\,d\omega d\rho d\tau\\
 &+C\epsilon^{-1}\int_{0}^{\tau_1}\int_0^{2\rho_0}\int_{\s^2} (B_1^2 \rho_0+B_2^2\rho_0^2)\chi |\phi|^2 \,d\omega d\rho d\tau\\
 \leq &\:  \int_{0}^{\tau_1}\int_0^{2\rho_0}\int_{\s^2} (\epsilon +C\epsilon^{-1} \rho_0^2)\rho\chi |\phi_{(1)}|^2\,d\omega d\rho d\tau\\
 &+C\epsilon^{-1}  \int_0^{\tau_1}\int_{\rho_0}^{2\rho_0}\int_{\s^2} |\phi|^2\,d\omega d\rho d\tau,
  \end{split}
 \end{equation*}
 where $\epsilon>0$ can be taken arbitrarily small, and $\rho_0$ is chosen suitably small such that $C\epsilon^{-1}\rho_0<\epsilon$.

We then obtain
  \begin{equation}
  \label{eq:physicalspacegevrey1}
 \begin{split}
 \int_{0}^{2\rho_0}\int_{\s^2}& \chi \left[1+\frac{1}{2}r^2h\left((1-A_2)\rho^2+2\kappa \rho+A_3\rho^3+A_4 \rho^4\right)  \right] |\phi_{(1)}|^2(\tau_1,\rho,\theta,\varphi)\ \,d\omega d\rho \\
 &+2 \int_0^{\tau_1}\int_{0}^{2\rho_0}\int_{\s^2}\chi \Bigg((1-A_2-\epsilon)\rho+\kappa+\frac{3}{2}A_3\rho^2+2A_4 \rho^3\\
 &-\frac{1}{2}\left[(1-A_2)\rho+\kappa +\frac{3}{2} A_3\rho^2+2A_4\rho^3\right]\Bigg)|\phi_{(1)}|^2\,d\omega d\rho d\tau\\
 \leq &\: \int_{0}^{2\rho_0}\int_{\s^2} \chi \left[1+\frac{1}{2}r^2h\left((1-A_2)\rho^2+2\kappa \rho+A_3\rho^3+A_4 \rho^4\right)  \right] |\phi_{(1)}|^2(0,\rho,\theta,\varphi)\ \,d\omega d\rho\\
 &+C\epsilon^{-1}  \int_0^{\tau_1}\int_{\rho_0}^{2\rho_0}\int_{\s^2} |\phi|^2+|{\mathring{\snabla}}\phi|^2\,d\omega d\rho d\tau+C\sum_{\tau'=0,\tau_1}\int_{\rho_0}^{2\rho_0}\int_{\s^2} (|\phi|^2+|{\mathring{\snabla}}\phi|)|_{\tau=\tau'}\,d\omega d\rho.
 \end{split}
 \end{equation}
 
 We will now consider the higher-order quantities $\phi_{(n)}$ and treat the cases $n\leq \ell$ and $n\geq \ell+1$ separately.
 
 \emph{Case 1: $n\leq \ell-1$}
 Take $\rho_0$ suitably small and consider the following weighted summation over $n$:
 \begin{equation*}
 \sum_{n=0}^{\ell-1} \frac{\sigma^{2n}}{\ell^{2n}(\ell+1)^{2n}}\left[\cdot\right].
 \end{equation*}
 We apply \eqref{eq:physicalspacegevrey} and \eqref{eq:physicalspacegevrey1} to obtain:
\begin{equation}
\label{eq:pregevphys1}
\begin{split}
  \int_{0}^{\rho_0}&\int_{\s^2} \sum_{n=0}^{\ell-1} \frac{\sigma^{2n}}{\ell^{2n}(\ell+1)^{2n}} |\phi_{(n+1)}|^2(\tau_1,\rho,\theta,\varphi)\,d\omega d\rho\\
  \leq&\:  C\epsilon^{-1} \sigma \int_0^{\tau_1}\int_{0}^{\rho_0}\int_{\s^2}\sum_{n=0}^{\ell-1} \frac{\sigma^{2n}}{\ell^{2n}(\ell+1)^{2n}} |\phi_{(n+1)}|^2(\tau,\rho,\theta,\varphi)\,d\omega d\rho d\tau\\
  &+C \int_{0}^{\rho_0}\int_{\s^2} \sum_{n=0}^{\ell-1} \frac{\sigma^{2n}}{\ell^{2n}(\ell+1)^{2n}} |\phi_{(n+1)}|^2(0,\rho,\theta,\varphi)\,d\omega d\rho+ C\epsilon^{-1}  \int_0^{\tau_1}\int_{\rho_0}^{2\rho_0}\int_{\s^2} |\phi|^2+|{\mathring{\snabla}}\phi|^2\,d\omega d\rho d\tau\\
  &+C\sum_{\tau'=0,\tau_1}\int_{\rho_0}^{2\rho_0}\int_{\s^2} (|\phi|^2+|{\mathring{\snabla}}\phi|)|_{\tau=\tau'}\,d\omega d\rho.
\end{split}
\end{equation}

 \emph{Case 2: $n\geq \ell$}
 Take $\rho_0$ suitably small and consider the following weighted summation over $n$:
 \begin{equation*}
 \frac{\ell!^2 (\ell+1)!^2}{\ell^{2\ell}(\ell+1)^{2\ell}}\sum_{n=\ell}^{N_{\infty}} \frac{\sigma^{2n}}{n!^2(n+1)!^2}\left[\cdot\right],
 \end{equation*}
 where $N_{\infty}$ is arbitrarily large.
 
 We apply \eqref{eq:physicalspacegevrey} (when $\ell\geq 1$) and \eqref{eq:physicalspacegevrey1} (when $\ell=0$) to obtain
 \begin{equation*}
\begin{split}
\int_{0}^{\rho_0}&\int_{\s^2} \sum_{n=\ell}^{N_{\infty}}  \frac{\sigma^{2n}}{n!^2(n+1)!^2} |\phi_{(n+1)}|^2(\tau_1,\rho,\theta,\varphi)\,d\omega d\rho\\
   \leq&\:  C\epsilon^{-1} \sigma \int_0^{\tau_1}\int_{0}^{\rho_0}\int_{\s^2}\sum_{n=\ell}^{N_{\infty}} \frac{\sigma^{2n}}{n!^2(n+1)!^2} |\phi_{(n+1)}|^2(\tau,\rho,\theta,\varphi)\,d\omega d\rho d\tau\\
  &+C \int_{0}^{\rho_0}\int_{\s^2} \sum_{n=\ell}^{N_{\infty}}  \frac{\sigma^{2n}}{n!^2(n+1)!^2} |\phi_{(n+1)}|^2(0,\rho,\theta,\varphi)\,d\omega d\rho\\
&+ \epsilon \sigma^{-1}\int_0^{\tau_1}\int_{0}^{\rho_0}\int_{\s^2} \sum_{n=\ell}^{N_{\infty}}  \frac{\sigma^{2n}}{n!^2(n+1)!^2}\Big[(n+1)^2n^2|\phi_{(n)}|^2+M^2[(n+1)^2n^2(n-1)^2+n^2]|\phi_{(n-1)}|^2\\
&+M^4[(n+1)^2n^2(n-1)^2(n-2)^2+n^2(n-1)^2]|\phi_{(n-2)}|^2\Big]\,d\omega d\rho d\tau\\
&+C\epsilon^{-1}  \int_0^{\tau_1}\int_{\rho_0}^{2\rho_0}\int_{\s^2} |\phi|^2+|{\mathring{\snabla}}\phi|^2\,d\omega d\rho d\tau+C\sum_{\tau'=0,\tau_1}\int_{\rho_0}^{2\rho_0}\int_{\s^2} (|\phi|^2+|{\mathring{\snabla}}\phi|)|_{\tau=\tau'}\,d\omega d\rho.
    \end{split}
\end{equation*}
Given the choice of summation over $n$, we can group the terms on the right-hand side with a factor $\epsilon>0$ with terms lower in the summation to obtain:
\begin{equation*}
\begin{split}
\int_{0}^{\rho_0}&\int_{\s^2} \sum_{n=\ell}^{N_{\infty}}  \frac{\sigma^{2n}}{n!^2(n+1)!^2} |\phi_{(n+1)}|^2(\tau_1,\rho,\theta,\varphi)\,d\omega d\rho\\
   \leq&\:  C\epsilon^{-1} \sigma \int_0^{\tau_1}\int_{0}^{\rho_0}\int_{\s^2}\sum_{n=\ell}^{N_{\infty}} \frac{\sigma^{2n}}{n!^2(n+1)!^2} |\phi_{(n+1)}|^2(\tau,\rho,\theta,\varphi)\,d\omega d\rho d\tau\\
  &+C \int_{0}^{\rho_0}\int_{\s^2} \sum_{n=\ell}^{N_{\infty}}  \frac{\sigma^{2n}}{n!^2(n+1)!^2} |\phi_{(n+1)}|^2(0,\rho,\theta,\varphi)\,d\omega d\rho\\
&+ C\epsilon \sigma^{-1}(\sigma^2+\sigma^4+\sigma^6)\int_0^{\tau_1}\int_{0}^{\rho_0}\int_{\s^2} \sum_{n=\ell-3}^{N_{\infty}-1}  \frac{\sigma^{2n}}{n!^2(n+1)!^2}|\phi_{(n+1)}|^2\,d\omega d\rho d\tau\\
&+C\epsilon^{-1}  \int_0^{\tau_1}\int_{\rho_0}^{2\rho_0}\int_{\s^2} |\phi|^2+|{\mathring{\snabla}}\phi|^2\,d\omega d\rho d\tau+C\sum_{\tau'=0,\tau_1}\int_{\rho_0}^{2\rho_0}\int_{\s^2} (|\phi|^2+|{\mathring{\snabla}}\phi|)|_{\tau=\tau'}\,d\omega d\rho.
    \end{split}
\end{equation*}
We can rewrite the above expression further to obtain:
\begin{equation}
\label{eq:pregevphys2}
\begin{split}
   \frac{\ell!^2 (\ell+1)!^2}{\ell^{2\ell}(\ell+1)^{2\ell}}\int_{0}^{\rho_0}&\int_{\s^2} \sum_{n=\ell}^{N_{\infty}}  \frac{\sigma^{2n}}{n!^2(n+1)!^2} |\phi_{(n+1)}|^2(\tau_1,\rho,\theta,\varphi)\,d\omega d\rho\\
   \leq&\: C(\epsilon^{-1}\sigma+\epsilon \sigma^5)  \frac{\ell!^2 (\ell+1)!^2}{\ell^{2\ell}(\ell+1)^{2\ell}}\int_0^{\tau_1}\int_{0}^{\rho_0}\int_{\s^2}\sum_{n=\ell}^{N_{\infty}} \frac{\sigma^{2n}}{n!^2(n+1)!^2} |\phi_{(n+1)}|^2(\tau,\rho,\theta,\varphi)\,d\omega d\rho d\tau\\
  &+C  \frac{\ell!^2 (\ell+1)!^2}{\ell^{2\ell}(\ell+1)^{2\ell}}\int_{0}^{\rho_0}\int_{\s^2} \sum_{n=\ell}^{N_{\infty}}  \frac{\sigma^{2n}}{n!^2(n+1)!^2} |\phi_{(n+1)}|^2(0,\rho,\theta,\varphi)\,d\omega d\rho\\
  &+C\epsilon (\sigma+\sigma^5)  \frac{\sigma^{2(\ell-1)}}{\ell^{2(\ell-1)} (\ell+1)^{2(\ell-1)} } \int_0^{\tau_1}\int_{0}^{\rho_0}\int_{\s^2} |\phi_{(\ell)}|^2(\tau,\rho,\theta,\varphi)\,d\omega d\rho d\tau\\
  &+ C\epsilon (\sigma+\sigma^5)   \frac{\sigma^{2(\ell-2)}}{\ell^{2(\ell-2)} (\ell+1)^{2(\ell-2)} } \int_0^{\tau_1}\int_{0}^{\rho_0}\int_{\s^2} |\phi_{(\ell-1)}|^2(\tau,\rho,\theta,\varphi)\,d\omega d\rho d\tau\\
    &+ C\epsilon (\sigma+\sigma^5) \frac{\sigma^{2(\ell-3)}}{\ell^{2(\ell-3)} (\ell+1)^{2(\ell-3)} } \int_0^{\tau_1}\int_{0}^{\rho_0}\int_{\s^2} |\phi_{(\ell-2)}|^2(\tau,\rho,\theta,\varphi)\,d\omega d\rho d\tau\\
    &+C\epsilon^{-1}  \int_0^{\tau_1}\int_{\rho_0}^{2\rho_0}\int_{\s^2} |\phi|^2+|{\mathring{\snabla}}\phi|^2\,d\omega d\rho d\tau+C\sum_{\tau'=0,\tau_1}\int_{\rho_0}^{2\rho_0}\int_{\s^2} (|\phi|^2+|{\mathring{\snabla}}\phi|)|_{\tau=\tau'}\,d\omega d\rho.
\end{split}
\end{equation}
We can combine \eqref{eq:pregevphys2} with \eqref{eq:pregevphys1} to group the last three spacetime integrals on  on the right-hand side of \eqref{eq:pregevphys2} with the terms on the right-hand side of \eqref{eq:pregevphys1} and obtain:
\begin{equation*}
\begin{split}
  \sum_{n=0}^{\ell-1}& \frac{\sigma^{2n}}{\ell^{2n}(\ell+1)^{2n}}  \int_{0}^{\rho_0}\int_{\s^2} |\phi_{(n+1)}|^2(\tau_1,\rho,\theta,\varphi)\,d\omega d\rho\\
 &+ \frac{\ell!^2 (\ell+1)!^2}{\ell^{2\ell}(\ell+1)^{2\ell}}\sum_{n=\ell}^{N_{\infty}} \frac{\sigma^{2n}}{n!^2(n+1)!^2}  \int_{0}^{\rho_0}\int_{\s^2} |\phi_{(n+1)}|^2(\tau_1,\rho,\theta,\varphi)\,d\omega d\rho\\
   \leq&\: C\sum_{n=0}^{\ell-1}\frac{\sigma^{2n}}{\ell^{2n}(\ell+1)^{2n}}  \int_{0}^{\rho_0}\int_{\s^2} |\phi_{(n+1)}|^2(0,\rho,\theta,\varphi)\,d\omega d\rho\\
 &+ C\frac{\ell!^2 (\ell+1)!^2}{\ell^{2\ell}(\ell+1)^{2\ell}}\sum_{n=\ell}^{N_{\infty}} \frac{\sigma^{2n}}{n!^2(n+1)!^2}  \int_{0}^{\rho_0}\int_{\s^2} |\phi_{(n+1)}|^2(0,\rho,\theta,\varphi)\,d\omega d\rho\\
   &+C\epsilon^{-1}\sigma \sum_{n=0}^{\ell-1} \frac{\sigma^{2n}}{\ell^{2n}(\ell+1)^{2n}}  \int_0^{\tau_1}\int_{0}^{\rho_0}\int_{\s^2} |\phi_{(n+1)}|^2(\tau,\rho,\theta,\varphi)\,d\omega d\rho\\
   &+C(\epsilon^{-1}\sigma+\epsilon \sigma^5)  \frac{\ell!^2 (\ell+1)!^2}{\ell^{2\ell}(\ell+1)^{2\ell}}\int_0^{\tau_1}\int_{0}^{\rho_0}\int_{\s^2}\sum_{n=\ell}^{N_{\infty}} \frac{\sigma^{2n}}{n!^2(n+1)!^2} |\phi_{(n+1)}|^2(\tau,\rho,\theta,\varphi)\,d\omega d\rho d\tau\\
    &+C\epsilon^{-1}  \int_0^{\tau_1}\int_{\rho_0}^{2\rho_0}\int_{\s^2} |\phi|^2+|{\mathring{\snabla}}\phi|^2\,d\omega d\rho d\tau+C\sum_{\tau'=0,\tau_1}\int_{\rho_0}^{2\rho_0}\int_{\s^2} (|\phi|^2+|{\mathring{\snabla}}\phi|)|_{\tau=\tau'}\,d\omega d\rho.
\end{split}
\end{equation*}
We then apply a standard Gr\"onwall estimate to obtain
\begin{equation}
\label{eq:physicalspacegevreylown} 
\begin{split}
 \sum_{n=0}^{\ell-1}& \frac{\sigma^{2n}}{\ell^{2n}(\ell+1)^{2n}}  \int_{0}^{\rho_0}\int_{\s^2} |\phi_{(n+1)}|^2(\tau_1,\rho,\theta,\varphi)\,d\omega d\rho\\
 &+ \frac{\ell!^2 (\ell+1)!^2}{\ell^{2\ell}(\ell+1)^{2\ell}}\sum_{n=\ell}^{N_{\infty}} \frac{\sigma^{2n}}{n!^2(n+1)!^2}  \int_{0}^{\rho_0}\int_{\s^2} |\phi_{(n+1)}|^2(\tau_1,\rho,\theta,\varphi)\,d\omega d\rho\\
  \leq&\:  Ce^{B (\sigma+\sigma^5) \tau_1} \Bigg[ \sum_{n=0}^{\ell-1} \frac{\sigma^{2n}}{\ell^{2n} (\ell+1)^{2n}}  \int_{0}^{\rho_0}\int_{\s^2} |\phi_{(n+1)}|^2(0,\rho,\theta,\varphi)\,d\omega d\rho\\
  &+ \frac{\ell!^2 (\ell+1)!^2}{\ell^{2\ell}(\ell+1)^{2\ell}}\sum_{n=\ell}^{N_{\infty}} \frac{\sigma^{2n}}{n!^2(n+1)!^2}  \int_{0}^{\rho_0}\int_{\s^2} |\phi_{(n+1)}|^2(0,\rho,\theta,\varphi)\,d\omega d\rho\\
  &+  \int_0^{\tau_1}\int_{\rho_0}^{2\rho_0}\int_{\s^2} |\phi|^2+|{\mathring{\snabla}}\phi|^2\,d\omega d\rho d\tau+\sum_{\tau'=0,\tau_1}\int_{\rho_0}^{2\rho_0}\int_{\s^2} (|\phi|^2+|{\mathring{\snabla}}\phi|)|_{\tau=\tau'}\,d\omega d\rho\Bigg],
\end{split}
\end{equation}
with $B>0$ a suitably large constant.

It remains to bound the integrals supported in $\rho_0\leq \rho\leq 2\rho_0$. This can be done by applying a standard $T$-energy boundedness estimate (see for example \cite{aretakis1} in the extremal Reissner--Nordstr\"om setting) and a Gr\"onwall inequality, together with the fundamental theorem of calculus to estimate integrals of $|\phi|^2$ in terms of the $T$-energy.  

By relabelling the summation variable, we then conclude that:
\begin{equation*}
\begin{split}
\int_{\Sigma_{\tau=\tau_1}}& \mathbb{T}(T,\mathbf{n}_{\Sigma_{\tau=\tau_1}})[\psi] \,d\mu_{\tau=\tau_1}+\sigma^2\sum_{n=1}^{\ell-1} \frac{\sigma^{2n}}{\ell^{2n}(\ell+1)^{2n}}  (\ell+1)^4\int_{0}^{\rho_0}\int_{\s^2} |\phi_{(n)}|^2(\tau_1,\rho,\theta,\varphi)\,d\omega d\rho\\
&+ \sigma^2\frac{\ell!^2 (\ell+1)!^2}{\ell^{2\ell}(\ell+1)^{2\ell}}\sum_{n=\ell}^{N_{\infty}} \frac{\sigma^{2n}}{n!^2(n+1)!^2} (n+1)^4\int_{0}^{\rho_0}\int_{\s^2} |\phi_{(n)}|^2(\tau_1,\rho,\theta,\varphi)\,d\omega d\rho\\
  \leq&\:  Ce^{B (\sigma+\sigma^5) \tau_1} \Bigg[ \sigma^2 \sum_{n=1}^{\ell-1} \frac{\sigma^{2n}}{\ell^{2n}(\ell+1)^{2n}}  (\ell+1)^4 \int_{0}^{\rho_0}\int_{\s^2} |\phi_{(n)}|^2(0,\rho,\theta,\varphi)\,d\omega d\rho\\
  &+ \sigma^2\frac{\ell!^2 (\ell+1)!^2}{\ell^{2\ell}(\ell+1)^{2\ell}}\sum_{n=\ell}^{N_{\infty}} \frac{\sigma^{2n}}{n!^2(n+1)!^2}  (n+1)^4 \int_{0}^{\rho_0}\int_{\s^2} |\phi_{(n)}|^2(0,\rho,\theta,\varphi)\,d\omega d\rho\\
  &+\int_{\Sigma_{\tau=0}} \mathbb{T}(T,\mathbf{n}_{\Sigma_{\tau=0}})[\psi] \,d\mu_{\tau=0}\Bigg]. \qedhere
\end{split}
\end{equation*}
\end{proof}

\begin{remark}
The estimate in Proposition \ref{prop:physspacegevrey} cannot be made uniform in time $\tau_1$ \emph{without} losing the uniformity in $\kappa$. This is because, in the case for extremal Reissner--Nordstr\"om, for each $\ell$ the terms with $n\geq \ell+2$ on the left-hand side. will grow polynomially in $\tau$ when $\kappa_+=0$ or $\kappa_c=0$. This is a manifestion of the \emph{Aretakis instability}, discovered in \cite{aretakis1, aretakis2}, together with its analogue along null infinity \cite{aag18}.
\end{remark}

\begin{definition}
\label{def:solop}
For $\rho_0>0$ suitably small, the solution operator corresponding to \eqref{eq:waveequation} is the bounded linear operator
\begin{align*}
\mathcal{S}(\tau): \mathbf{H}_{\sigma,\rho_0}&\to \mathbf{H}_{\sigma,\rho_0}\\
(\Psi,\Psi')& \mapsto  (\psi|_{\Sigma_{\tau}}, T\psi|_{S_{\tau}}),
\end{align*}
where $\psi$ is the unique solution to \eqref{eq:waveequation} corresponding to initial data $(\Psi,\Psi')$.
By Proposition \ref{prop:physspacegevrey}, the above definition makes sense.
\end{definition}

By Proposition \ref{prop:physspacegevrey}, there exist constants $B,C>0$ independent of $\kappa_+$, $\kappa_c$ and $\tau$ such that:
\begin{equation}
\label{eq:mainboundS}
||\mathcal{S}(\tau)||^2\leq C e^{B\sigma \tau}.
\end{equation}
\begin{theorem}
The family of solution operators $S(\tau)$ from Definition \ref{def:solop} define a $C_0$-semigroup on $\mathbf{H}_{\sigma,\rho_0}$.
\end{theorem}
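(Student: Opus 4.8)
The plan is to verify the three defining properties of a $C_0$-semigroup: (a) $\mathcal{S}(0) = \mathrm{Id}$, (b) the semigroup/cocycle property $\mathcal{S}(\tau_1 + \tau_2) = \mathcal{S}(\tau_1)\mathcal{S}(\tau_2)$, and (c) strong continuity $\lim_{\tau \downarrow 0} \|\mathcal{S}(\tau)(\Psi,\Psi') - (\Psi,\Psi')\|_{\mathbf{H}_{\sigma,\rho_0}} = 0$ for every $(\Psi,\Psi') \in \mathbf{H}_{\sigma,\rho_0}$. Properties (a) and (b) are essentially immediate from the definition of $\mathcal{S}(\tau)$ via the well-posed initial value problem for \eqref{eq:waveequation}: the foliation $\{\Sigma_\tau\}$ is obtained by flowing $\Sigma_0$ along the integral curves of the Killing field $T$ with $T(\tau) = 1$, so translating data from $\Sigma_0$ to $\Sigma_\tau$ and then to $\Sigma_{\tau_1+\tau_2}$ is the same as translating directly, by uniqueness of solutions. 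One should note that $S_\tau \subset \Sigma_\tau$ and that $T$ is tangent to the flow, so $T\psi|_{S_\tau}$ is well-defined and the composition is consistent; the domain/codomain is $\mathbf{H}_{\sigma,\rho_0}$ by Proposition \ref{prop:physspacegevrey}, which also supplies the uniform bound \eqref{eq:mainboundS}.

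The substantive point is strong continuity (c). The plan is to first establish it on a dense subspace and then use the uniform boundedness \eqref{eq:mainboundS} to extend. Concretely, for $(\Psi,\Psi')$ arising from data with $\psi \in C^\infty(\widehat{\mathcal{R}};\mathbb{C})$ solving \eqref{eq:waveequation} — which is dense in $\mathbf{H}_{\sigma,\rho_0}$ by definition of the space as a completion — one has genuine smoothness of $\psi$ up to and including $\mathcal{H}^+$ and $\mathcal{I}^+$ (or $\mathcal{C}^+$), so all the quantities $\phi_{(n)}$, $T\psi$, etc., depend continuously (indeed smoothly) on $\tau$ pointwise, and $\partial_\tau \psi = T\psi$ lies in the relevant regularity class. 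The difference $\mathcal{S}(\tau)(\Psi,\Psi') - (\Psi,\Psi')$ can be written via the fundamental theorem of calculus as $\int_0^\tau \partial_{\tau'}(\psi|_{\Sigma_{\tau'}}, T\psi|_{S_{\tau'}})\,d\tau'$; applying the $\mathbf{H}_{\sigma,\rho_0}$-norm and \eqref{eq:mainboundS} (applied to the solution $T\psi$, whose Gevrey data is controlled because differentiating \eqref{eq:waveequation} in $\tau$ preserves the structure) gives a bound $\lesssim \tau \, e^{B\sigma\tau}\|(T\psi|_{\Sigma_0}, T^2\psi|_{S_0})\|_{\mathbf{H}_{\sigma,\rho_0}}$, which tends to $0$. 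For general $(\Psi,\Psi') \in \mathbf{H}_{\sigma,\rho_0}$, given $\varepsilon>0$ pick a smooth approximant $(\Psi_\varepsilon,\Psi_\varepsilon')$ within $\varepsilon$, and estimate
\begin{equation*}
\|\mathcal{S}(\tau)(\Psi,\Psi') - (\Psi,\Psi')\| \leq \|\mathcal{S}(\tau)\|\,\varepsilon + \|\mathcal{S}(\tau)(\Psi_\varepsilon,\Psi_\varepsilon') - (\Psi_\varepsilon,\Psi_\varepsilon')\| + \varepsilon,
\end{equation*}
and use $\|\mathcal{S}(\tau)\| \leq C e^{B\sigma\tau/2}$ bounded for $\tau \in [0,1]$ together with the dense-subspace continuity to conclude.

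I expect the main obstacle to be the dense-subspace strong continuity, specifically making rigorous that for smooth $\psi$ the map $\tau \mapsto (\psi|_{\Sigma_\tau}, T\psi|_{S_\tau})$ is continuous into the Gevrey-type norm $\|\cdot\|_{\mathbf{H}_{\sigma,\rho_0}}$ — this requires controlling the \emph{infinite sum over $n$} (and over $\ell$) of the weighted $L^2$-norms of $\hat{\underline{L}}^n(r\psi)$ uniformly for small $\tau$, which is exactly what Proposition \ref{prop:physspacegevrey} provides, but one must check that the data for the differentiated solution $T\psi$ still has finite $\mathbf{H}_{\sigma,\rho_0}$-norm. This is where one uses that commuting \eqref{eq:waveequation} with $T$ costs nothing (since $T$ is Killing and commutes with $\square_{g_\Lambda}$ and with $\hat{\underline{L}}, \hat{L}$ up to lower-order terms already accounted for in Proposition \ref{prop:naivehoeq}), so the Gevrey norms of $T\psi|_{\Sigma_0}$ are controlled by those of $\psi$ together with its tangential derivatives along $\Sigma_0$ — finite for the dense class. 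An alternative, perhaps cleaner route for strong continuity is to invoke a standard criterion (e.g. the Hille–Yosida-type lemma that a semigroup of bounded operators with $\|\mathcal{S}(\tau)\| \leq Ce^{B\sigma\tau}$ satisfying (a), (b) is strongly continuous provided $\tau \mapsto \mathcal{S}(\tau)x$ is weakly continuous at $0$ on a dense set, or merely weakly measurable), reducing the task to the pointwise-in-$(\rho,\omega)$ continuity of the smooth solution, which is manifest; but the direct FTC estimate above is self-contained and I would present that.
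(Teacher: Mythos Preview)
Your overall strategy---verifying (a), (b) via well-posedness and the Killing flow, then establishing (c) by the fundamental theorem of calculus on a dense subspace followed by a uniform-boundedness approximation---matches the paper's. The FTC/Minkowski bound
\[
\|\mathcal{S}(\tau_1)(\Psi,\Psi')-(\Psi,\Psi')\|_{\mathbf{H}_{\sigma,\rho_0}}\leq \int_0^{\tau_1}\|T\psi|_{\tau}\|_{\mathbf{H}_{\sigma,\rho_0}}\,d\tau\leq C e^{B\sigma\tau_1}\tau_1\,\|T\psi|_0\|_{\mathbf{H}_{\sigma,\rho_0}}
\]
is exactly what the paper writes.

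There is, however, a genuine gap in your choice of dense subspace. You take the dense class to be smooth data with finite $\mathbf{H}_{\sigma,\rho_0}$-norm and assert that for such data $\|T\psi|_0\|_{\mathbf{H}_{\sigma,\rho_0}}$ is finite because ``the Gevrey norms of $T\psi|_{\Sigma_0}$ are controlled by those of $\psi$ together with its tangential derivatives along $\Sigma_0$.'' But expressing $T\psi|_{\underline{N}_0\cup N_0}$ via the wave equation costs one extra $\rho$-derivative, and the $\mathbf{H}_{\sigma,\rho_0}$-norm is \emph{not} closed under taking one more $\partial_\rho$: the weights $\sigma^{2n}/(n!^2(n+1)!^2)$ are too tight (a borderline Gevrey function $f$ can have $f'\notin G^2_{\sigma}$). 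Smoothness alone does not rescue this, since smoothness says nothing about the growth rate of derivatives. So for a generic element of your dense class, $\|T\psi|_0\|_{\mathbf{H}_{\sigma,\rho_0}}$ may well be infinite, and the FTC bound is vacuous.

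The paper resolves this by choosing a \emph{different} dense subspace: it takes $(\Psi,\Psi')\in \mathbf{H}_{\sigma',\rho_0}\cap C^\infty(\Sigma_0)$ with $\sigma'>\sigma$. The extra room in the Gevrey weight, i.e.\ the factor $(\sigma'/\sigma)^{2n}$, absorbs the polynomial loss from the extra $\partial_\rho$, so $\|T\psi|_0\|_{\mathbf{H}_{\sigma,\rho_0}}<\infty$ holds on this class. Density of $\mathbf{H}_{\sigma',\rho_0}$ in $\mathbf{H}_{\sigma,\rho_0}$ is then obtained by convolving with Gaussian mollifiers (which are Gevrey of arbitrarily high index). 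Once you replace your dense class by this one, the rest of your argument goes through verbatim.
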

\begin{proof}
We omit the proof as it follows directly from the proof of Theorem 3.16 of \cite{warn15} combined with \eqref{eq:mainboundS}. We can apply  \eqref{eq:mainboundS} because initial data $(\Psi,\Psi')$ with $r\Psi\in C^{\infty}(\widehat{\Sigma},\C)$ and $\Psi'\in C^{\infty}(S,\C)$ lead to unique solutions $\psi\in  C^{\infty}(\widehat{\mathcal{R}},\C)$ of \eqref{eq:waveequation}, since the equivalent equations \eqref{eq:eqradfieldhor2} and \eqref{eq:eqradfieldcos2} have coefficients in $C^{\infty}(\widehat{\mathcal{R}},\C)$, so standard local-in-time energy estimates apply. 

To obtain strong continuity, we first make the slight abuse of notation: $||\psi|_{\tau}||_{\mathbf{H}_{\sigma,\rho_0}}=||\mathcal{S}(\tau)(\Psi,\Psi')||_{\mathbf{H}_{\sigma,\rho_0}}$, with $\psi$ the solution to \eqref{eq:waveequation} arising from initial data $(\Psi,\Psi')$. Then it follows from a Minkowski inequality that
\begin{equation*}
	||\mathcal{S}(\tau_1)(\Psi,\Psi')-(\Psi,\Psi')||_{\mathbf{H}_{\sigma,\rho_0}}\leq \int_{0}^{\tau_1} ||T\psi|_{\tau}||_{\mathbf{H}_{\sigma,\rho_0}}\,d\tau\leq C e^{B\sigma \tau_1}\tau_1 ||T\psi|_{0}||_{\mathbf{H}_{\sigma,\rho_0}}.
\end{equation*}
For general $(\Psi,\Psi')\in \mathbf{H}_{\sigma,\rho_0}$, $||T\psi|_{0}||_{\mathbf{H}_{\sigma}}$ need not be finite, since it involves derivatives of $(\Psi,\Psi')$. But if we assume that $(\Psi,\Psi')\in \mathbf{H}_{\sigma',\rho_0}\cap C^{\infty}(\Sigma_0)$, with $\sigma'>\sigma$, then it follows that $||T\psi|_{0}||_{\mathbf{H}_{\sigma,\rho_0}}$ is finite, so $||\mathcal{S}(\tau_1)(\Psi,\Psi')-(\Psi,\Psi')||_{\mathbf{H}_{\sigma,\rho_0}}\to 0$ as $\tau_1\downarrow 0$. Convergence in the general case then follows from a standard approximation argument, using that $\mathbf{H}_{\sigma',\rho_0}$ is dense in $\mathbf{H}_{\sigma,\rho_0}$ if $\sigma'>\sigma$, which in turn follows from considering convolutions with appropriate Gaussian mollifiers, which have sufficiently large Gevrey regularity.
\end{proof}

\begin{definition}
Let
\begin{equation*}
\mathcal{D}_{\sigma}(\mathcal{A}):=\left\{(\Psi,\Psi')\in \mathbf{H}_{\sigma,\rho_0}\,\Big|\, \lim_{\tau \downarrow 0} \frac{\mathcal{S}(\tau)(\Psi,\Psi')-(\Psi,\Psi')}{\tau}\:\: \textnormal{exists in $\mathbf{H}_{\sigma,\rho_0}$ } \right\}
\end{equation*}
and
\begin{align*}
\mathcal{A}: \mathbf{H}_{\sigma,\rho_0}\supseteq \mathcal{D}_{\sigma}(\mathcal{A}) \to & \mathbf{H}_{\sigma,\rho_0}\\
(\Psi,\Psi')\mapsto & \lim_{\tau \downarrow 0} \frac{\mathcal{S}(\tau)(\Psi,\Psi')-(\Psi,\Psi')}{\tau}.
\end{align*}
We refer to $(\mathcal{D}_{\sigma}(\mathcal{A}),\mathcal{A})$ as the infinitesimal generator of the semigroup $\mathcal{S}(\tau)$ and to $\mathcal{D}_{\sigma}(\mathcal{A})$ as the domain of $\mathcal{A}$.
\end{definition}

\begin{theorem}
\label{thm:semigroup}
The infinitesimal generator $(\mathcal{D}_{\sigma}(\mathcal{A}),\mathcal{A})$ satisfies the following properties:
\begin{itemize}
\item[(i)] $\mathcal{D}_{\sigma}(\mathcal{A})$ is dense in $\mathbf{H}_{\sigma,\rho_0}$.
\item[(ii)]$(\mathcal{D}_{\sigma}(\mathcal{A}),\mathcal{A})$ is a closed operator
\item[(iii)] There exists $B>0$ such that the resolvent $(\mathcal{A}-s)^{-1}: \mathbf{H}_{\sigma,\rho_0}\to \mathcal{D}_{\sigma}(\mathcal{A})$ exists and is a bounded linear operator for $\mathfrak{Re}(s)>B$.
\end{itemize}
\end{theorem}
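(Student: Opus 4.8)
The plan is to establish all three properties as consequences of the Hille--Yosida theory for $C_0$-semigroups, together with the quantitative bound \eqref{eq:mainboundS} already obtained in Proposition \ref{prop:physspacegevrey}. Properties (i) and (ii) are in fact completely general features of the infinitesimal generator of \emph{any} $C_0$-semigroup on a Banach space: the domain $\mathcal{D}_{\sigma}(\mathcal{A})$ of the generator of a strongly continuous semigroup is always dense, and the generator is always a closed operator. Thus for (i) and (ii) I would simply invoke the standard theorem (e.g.\ Pazy, \emph{Semigroups of Linear Operators and Applications to PDEs}, Chapter 1, Theorem 2.4, or Engel--Nagel), citing that $\mathcal{S}(\tau)$ is a $C_0$-semigroup on the Hilbert space $\mathbf{H}_{\sigma,\rho_0}$ as established in the preceding theorem. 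The only point requiring a word of justification is that $\mathcal{D}_\sigma(\mathcal{A})$ as \emph{defined} in the excerpt (the set of $(\Psi,\Psi')$ on which the difference quotient converges) coincides with the abstract notion of generator domain, which is immediate from the definition.

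For (iii) the plan is to use the elementary half of the Hille--Yosida construction: if $\mathcal{S}(\tau)$ is a $C_0$-semigroup with growth bound $\|\mathcal{S}(\tau)\| \leq M e^{\omega \tau}$, then for every $s \in \C$ with $\re s > \omega$ the complex number $s$ lies in the resolvent set of $\mathcal{A}$ and the resolvent is given by the Laplace transform of the semigroup,
\begin{equation*}
(\mathcal{A}-s)^{-1} (\Psi,\Psi') = -\int_0^\infty e^{-s\tau}\, \mathcal{S}(\tau)(\Psi,\Psi')\, d\tau,
\end{equation*}
the integral converging absolutely as a Bochner integral in $\mathbf{H}_{\sigma,\rho_0}$ because $\|e^{-s\tau}\mathcal{S}(\tau)\| \leq M e^{(\omega - \re s)\tau}$ is integrable. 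Here, by \eqref{eq:mainboundS}, we may take $M = \sqrt{C}$ and $\omega = \tfrac{1}{2}B\sigma$, so that any $B' > \tfrac{1}{2}B\sigma$ works; renaming $B'$ as $B$ gives exactly the statement. One checks directly that the operator so defined maps into $\mathcal{D}_\sigma(\mathcal{A})$ and is a two-sided inverse of $(\mathcal{A}-s)$: applying $\mathcal{S}(h)$, differentiating at $h=0$, and integrating by parts gives $(s - \mathcal{A})$ applied to the integral equals the identity, and conversely. This is the standard computation and I would present it compactly or cite it.

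The main (minor) obstacle is purely bookkeeping: one must make sure the Bochner integral is well-defined, i.e.\ that $\tau \mapsto \mathcal{S}(\tau)(\Psi,\Psi')$ is strongly measurable (which follows from strong continuity) and that its norm is integrable against $e^{-\re s \,\tau}$ (which follows from \eqref{eq:mainboundS} and $\re s > \tfrac{1}{2}B\sigma$). There is no hard PDE content in this theorem — all the real analytic work was done in Proposition \ref{prop:physspacegevrey} to obtain the growth bound \eqref{eq:mainboundS} in the first place. I would therefore keep the proof short: cite the general semigroup facts for (i) and (ii), and for (iii) exhibit the resolvent as the Laplace transform, verify it is a bounded inverse using \eqref{eq:mainboundS}, and note that $B$ may be taken to be (a constant multiple of) $\sigma$, consistent with the later need in Theorem \ref{thm:discreigen} to locate the spectrum in a half-plane whose boundary scales with $\sigma$.
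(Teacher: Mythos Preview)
Your proposal is correct and follows essentially the same approach as the paper: the paper's proof simply cites standard $C_0$-semigroup theory (Evans \S 7.4 and Hille--Phillips) for (i)--(iii), noting that (iii) additionally requires the growth bound \eqref{eq:mainboundS}. Your version is slightly more explicit in spelling out the Laplace-transform representation of the resolvent and the bookkeeping around the Bochner integral, but the substance is identical.
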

\begin{proof}
The statements (i)--(iii) are standard properties of strongly continuous semigroups, see for example Theorem 2 in \S 7.4 of \cite{evans} and Theorem 11.6.1 of \cite{hille1957}. The statement (iii) follows by invoking additionally \eqref{eq:mainboundS}.
\end{proof}
Consider the differential operator $\hat{{L}}_{s,\ell,\kappa}$. We can take as the domain of $\hat{{L}}_{s,\ell,\kappa}$ to be the function space $\mathcal{D}_{\sigma}(\hat{{L}}_{s,\ell,\kappa})$ defined as the closure of
\begin{equation*}
\{f\in C^{\infty}((r_+,r_c);\C)\,|\, ||f||_{H_{\sigma,\rho_0}}+||\hat{{L}}_{s,\ell,\kappa}(f)||_{H_{\sigma,\rho_0}}<\infty\}
\end{equation*}
under the norm $ ||\cdot||_{H_{\sigma,\rho_0}}+||\hat{{L}}_{s,\ell,\kappa}(\cdot)||_{H_{\sigma,\rho_0}}$.

We define the following restricted operators:
\begin{equation*}
\mathcal{A}_{\ell}=\mathcal{A}|_{\mathcal{D}_{\sigma}(\mathcal{A})\cap V_{\ell}}:\mathcal{D}_{\sigma}(\mathcal{A})\cap V_{\ell}\to \mathbf{H}_{\sigma,\rho_0}\cap V_{\ell}.
\end{equation*}

\begin{proposition}
\label{prop:relationAL}
Let $\kappa_+\geq 0$ and $\kappa_c\geq 0$. 
\begin{itemize}
\item[(i)]
The map
\begin{equation*}
(\mathcal{A}_{\ell}-s)^{-1}: \mathbf{H}_{\sigma,\rho_0}\cap V_{\ell}\to \mathcal{D}_{\sigma}(\mathcal{A})\cap V_{\ell}
\end{equation*}
is a well-defined bounded linear operator and is the inverse of $\mathcal{A}_{\ell}-s$ if and only if
\begin{equation*}
\hat{{L}}^{-1}_{s,\ell,\kappa}:  H_{\sigma,\rho_0}\to \mathcal{D}_{\sigma}(\hat{{L}}_{\kappa,\ell,s})
\end{equation*}
exists as a bounded operator with $\mathcal{D}_{\sigma}(\hat{{L}}_{s,\ell,\kappa})\subset H_{\sigma,1,\rho_0}$.
\item[(ii)]
Let $(\Psi,\Psi')\in \ker (\mathcal{A}-s)$, with $ (\mathcal{A}-s): \mathcal{D}_{\sigma}(\mathcal{A})\to \mathbf{H}_{\sigma,\rho_0}$ and assume that $(\mathcal{A}_{\ell}-s)^{-1}$ exists as well-defined bounded operator for all $\ell\geq L$. Then 
\begin{equation*}
\pi_{\geq L}(\Psi)=\pi_{\geq L}(\Psi')=0.
\end{equation*}
\end{itemize}
\end{proposition}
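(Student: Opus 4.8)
The plan is to establish the two statements separately, using in an essential way the relation between the infinitesimal generator $\mathcal{A}$ of the time-translation semigroup and the fixed-frequency operators $\hat{L}_{s,\ell,\kappa}$ that was set up in Section \ref{sec:fixedfreqop}. For part (i), I would first unwind the definitions: given $(\Psi,\Psi')\in \mathbf{H}_{\sigma,\rho_0}\cap V_\ell$, the equation $(\mathcal{A}_\ell-s)(\Psi,\Psi')=(F,F')$ is, by the computation displayed in Section \ref{sec:mainideas} (the matrix factorization illustrated there for Minkowski, which carries over to the present foliation), equivalent to a factorization $(\mathcal{A}_\ell - s) = M_1 \circ \mathrm{diag}(\hat L_{s,\ell,\kappa}, 1) \circ M_2$ with $M_1, M_2$ explicit invertible (bounded, bounded inverse) operators on $\mathbf{H}_{\sigma,\rho_0}\cap V_\ell$ depending on $s$ in a bounded way. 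Invertibility of the middle factor is then precisely invertibility of $\hat L_{s,\ell,\kappa}$. The point to check carefully is the mapping property: since $\mathcal{D}_\sigma(\mathcal{A})$ is, by the semigroup construction, the set of data for which $T\psi$ again lies in $\mathbf{H}_{\sigma,\rho_0}$, and $T$ applied to a fixed-frequency solution amounts to multiplication by $s$ plus one extra radial derivative via the equation, membership in $\mathcal{D}_\sigma(\mathcal{A})\cap V_\ell$ forces the radial profile $\hat\psi$ to have one more $\rho$-derivative in the Gevrey norm, i.e. $\hat\psi\in H_{\sigma,1,\rho_0}$. Conversely, if $\hat L^{-1}_{s,\ell,\kappa}$ exists with range in $H_{\sigma,1,\rho_0}$, one reconstructs the bounded inverse $(\mathcal{A}_\ell-s)^{-1}$ by composing with $M_1^{-1}, M_2^{-1}$ and checks its range lands in $\mathcal{D}_\sigma(\mathcal{A})\cap V_\ell$ using the gained derivative; boundedness is automatic from boundedness of the three factors.

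For part (ii), suppose $(\Psi,\Psi')\in\ker(\mathcal{A}-s)$ and $(\mathcal{A}_\ell-s)^{-1}$ exists for all $\ell\ge L$. The key observations are that the wave equation \eqref{eq:waveequation} commutes with the spherical-harmonic projections $\pi_\ell$, hence so do the semigroup $\mathcal{S}(\tau)$ and its generator $\mathcal{A}$; consequently $\pi_\ell$ maps $\mathcal{D}_\sigma(\mathcal{A})$ into itself and $\mathcal{A}\circ\pi_\ell = \pi_\ell\circ\mathcal{A}$. Therefore for each fixed $\ell\ge L$ the element $\pi_\ell(\Psi,\Psi') := (\pi_\ell\Psi,\pi_\ell\Psi')$ lies in $\ker(\mathcal{A}_\ell - s)\subset \mathcal{D}_\sigma(\mathcal{A})\cap V_\ell$. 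But by part (i), for $\ell\ge L$ the operator $\mathcal{A}_\ell - s$ is invertible, in particular injective, so $\pi_\ell(\Psi,\Psi')=0$ for every $\ell\ge L$. Summing over $\ell\ge L$ (and using that the $\pi_\ell$ are pairwise orthogonal projections whose partial sums converge in $\mathbf{H}_{\sigma,\rho_0}$, which is part of the definition of that Hilbert space through the $\ell$-sum in $\|\cdot\|_{\mathbf{H}_{\sigma,\rho_0}}$) gives $\pi_{\ge L}(\Psi)=\pi_{\ge L}(\Psi')=0$.

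The main obstacle I anticipate is in part (i): making the equivalence of the \emph{domains} rigorous, i.e. matching $\mathcal{D}_\sigma(\mathcal{A})\cap V_\ell$ (defined abstractly via differentiability of the orbit $\tau\mapsto\mathcal{S}(\tau)(\Psi,\Psi')$ in $\mathbf{H}_{\sigma,\rho_0}$) with the concrete space $\{\hat\psi\in H_{\sigma,1,\rho_0}\}$ — and in particular verifying that $T\psi|_{S_\tau}$ and $\psi|_{\Sigma_\tau}$ transform under the factorization exactly as claimed, including the subtlety that $\mathbf{H}_{\sigma,\rho_0}$ records data as a pair $(\psi|_{\Sigma_0},T\psi|_{S_0})$ on a mixed spacelike-null slice so the "$\partial_\tau$" of the orbit is literally $(T\psi|_{\Sigma_0}, T^2\psi|_{S_0})$ and one must use \eqref{eq:waveeqhyperboloidal} (equivalently \eqref{eq:maineqhatpsi}) to trade $T^2\psi$ for two radial derivatives plus lower order. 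Once the bookkeeping of these factors is in place — essentially the same bookkeeping as in the proof of Theorem 3.16 of \cite{warn15} and the Minkowski model computation in Section \ref{sec:mainideas} — the argument is soft. I would also record explicitly that $M_1, M_2$ and their inverses are bounded \emph{uniformly} in the relevant range of $s$, since this uniformity is what later feeds into the holomorphy and compactness arguments of Section \ref{intro:fred}, though for the present proposition only existence and boundedness at fixed $s$ is needed.
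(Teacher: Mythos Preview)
Your approach is essentially the one taken in the paper, and part (ii) matches the paper's argument almost verbatim. For part (i) there is one point you have flagged as ``the main obstacle'' but have slightly mischaracterized: the matrix factorization $(\mathcal{A}_\ell-s)=M_1\circ\mathrm{diag}(\hat L_{s,\ell,\kappa},1)\circ M_2$ only applies on the \emph{spacelike} portion $S_0$ of the slice, where the data genuinely has two components $(\Psi|_S,\Psi')$. On the null legs $N_0$ and $\underline{N}_0$ there is only one component of data (since $\Psi'$ is only recorded on $S_0$), so no $2\times 2$ factorization is available there. Instead, the paper derives a separate first-order relation: writing $T\phi|_{N_\tau}$ as the boundary value $T\phi|_S$ plus an integral of the right-hand side of \eqref{eq:mainphysicalspaceeq} in $\rho$, and then differentiating in $\rho$, one obtains
\[
r\hat L_{s,\ell,\kappa}(\Psi|_N)=-2\,\partial_{\rho_c}\!\bigl(r\,\widetilde\Psi|_N\bigr),\qquad r\hat L_{s,\ell,\kappa}(\Psi|_{\underline N})=-2\,\partial_{\rho_+}\!\bigl(r\,\widetilde\Psi|_{\underline N}\bigr),
\]
so that on the null legs the inverse reads $\Psi|_N=-2(r\hat L_{s,\ell,\kappa})^{-1}\bigl(\partial_\rho(r\widetilde\Psi|_N)\bigr)$. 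This is where the requirement $\mathcal{D}_\sigma(\hat L_{s,\ell,\kappa})\subset H_{\sigma,1,\rho_0}$ actually enters: one extra $\rho$-derivative of $\widetilde\Psi$ on the null leg must still land in $H_{\sigma,\rho_0}$ after applying $\hat L^{-1}$, which is exactly the gain of one derivative in the Gevrey scale. Once you add this piece to your factorization on $S$, the two together give the formal inverse, and the remaining bookkeeping (boundedness of $M_1^{\pm1},M_2^{\pm1}$, domain identification) is as you describe.
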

\begin{proof}

Observe that we can rearrange \eqref{eq:waveequation} to obtain:
\begin{equation*}
\begin{split}
h_{r_+}(2-h_{r_+}D)T^2\psi=&\:2(1-h_{r_+}D)r^{-1}\partial_{r}T(r\psi)+r^{-2}\partial_{r}(Dr^2\partial_{r}{\psi})-(h_+D)'T{\psi}-(r^{-2}\mathring{\slashed{\Delta}}+2\mathfrak{l}^{-2}) {\psi}\\
=&\:-h_{r_+}(2-h_{r_+}D)P_1 (T\psi)+r^{-2}\partial_{r}(Dr^2\partial_{r}{\psi})+(r^{-2}\mathring{\slashed{\Delta}}-2\mathfrak{l}^{-2}) {\psi},
\end{split}
\end{equation*}
where $P_1$ is the following operator:
\begin{equation*}
P_1=-\frac{1}{h_{r_+}(2-h_{r_+}D)}\left(2(1-h_{r_+}D)r^{-1}\partial_{r}(r\cdot)-(h_+D)'(\cdot)\right).
\end{equation*}
Similarly, we can rearrange \eqref{eq:mainphysicalspaceeq} to write
\begin {equation*}
T\partial_{\rho}\phi=-\frac{1}{2}\partial_{\rho}(Dr^{-2}\partial_{\rho}\phi)+\frac{1}{2}(B_0+B_1\rho +B_2\rho^2)\phi-\frac{1}{2}\mathring{\slashed{\Delta}} \phi,
\end{equation*}
and hence,
\begin{align*}
T\phi|_{N_{\tau}}(\tau,\rho,\theta,\varphi)=T\phi|_{S}(\tau,r((\rho_c)_0),\theta,\varphi)- \frac{1}{2}\int_{\rho_0}^{\rho}\left[\partial_{\rho}(Dr^{-2}\partial_{\rho}\phi)-(B_0+B_1\rho +B_2\rho^2)\phi+\mathring{\slashed{\Delta}} \phi\right](\tau,\rho',\theta,\varphi)\,d\rho'.
\end{align*}
By using that $\Psi=\psi|_{\Sigma}$ and $\Psi'=T\psi|_{S}$, we find an explicit expression for $\mathcal{A}$, by using that $\mathcal{A}=T=\partial_{\tau}$ if we use the wave equation \eqref{eq:waveequation} to write $T$, which acts on $\psi$, as an operator acting on initial data $(\Psi,\Psi')$ in the following way:
\begin{equation*}
\mathcal{A}\begin{pmatrix}
\Psi\\
\Psi'
\end{pmatrix}
=
T\begin{pmatrix}
\Psi\\
\Psi'
\end{pmatrix}
= \begin{pmatrix}
T\psi|_{\Sigma_0}\\
T^2\psi|_{S_0}
\end{pmatrix}
= \begin{pmatrix}
\widetilde{\Psi}\\
\widetilde{\Psi}'
\end{pmatrix},
\end{equation*}
where
\begin{align*}
\widetilde{\Psi}|_{S}=&\:\Psi',\\
r\widetilde{\Psi}|_{N}=&\:r\Psi'(r((\rho_c)_0),\theta,\varphi)- \frac{1}{2}\int_{\rho_0}^{\rho}\left[\partial_{\rho}(Dr^{-2}\partial_{\rho}(r\Psi))-(B_0+B_1\rho +B_2\rho^2)r\Psi+\mathring{\slashed{\Delta}}( r\Psi)\right]|_{N}(\rho',\theta,\varphi)\,d\rho',\\
\widetilde{\Psi}'=&\:-P_1( \Psi')+\frac{1}{h_{r_+}(2-h_{r_+}D)}r^{-2}\partial_{r}(Dr^2\partial_{r}{\Psi})+\frac{1}{h_{r_+}(2-h_{r_+}D)}(r^{-2}\mathring{\slashed{\Delta}}-2\mathfrak{l}^{-2}) {\Psi}.
\end{align*}
So if we instead denote
\begin{equation*}
(\mathcal{A}-s)\begin{pmatrix}
\Psi\\
\Psi'
\end{pmatrix}
= \begin{pmatrix}
\widetilde{\Psi}\\
\widetilde{\Psi}'
\end{pmatrix},
\end{equation*}
we can express:
\begin{align*}
\widetilde{\Psi}|_{S}=&\:\Psi'-s\Psi|_S,\\
r\widetilde{\Psi}|_{N}=&\:-s r\Psi|_N+r\Psi'(r((\rho_c)_0),\theta,\varphi)\\
&- \frac{1}{2}\int_{\rho_0}^{\rho}\left[\partial_{\rho}(Dr^{-2}\partial_{\rho}(r\Psi))-(B_0+B_1\rho +B_2\rho^2)r\Psi+\mathring{\slashed{\Delta}}( r\Psi)\right](\rho',\theta,\varphi)\,d\rho',\\
\widetilde{\Psi}'=&\:-P_1( \Psi')+\frac{1}{h_{r_+}(2-h_{r_+}D)}r^{-2}\partial_{r}(Dr^2\partial_{r}{\Psi}|_S)+\frac{1}{h_{r_+}(2-h_{r_+}D)}(r^{-2}\mathring{\slashed{\Delta}}-2\mathfrak{l}^{-2}) {\Psi}|_S-s\Psi'.
\end{align*}

We can further rewrite the above expressions for the restriction $\mathcal{A}_{\ell}-s$ by making use of the operators $\hat{{L}}_{s,\ell,\kappa}$ (acting on the $\Psi_{\ell m}$ part of $\Psi_{\ell}$) :
\begin{align*}
\widetilde{\Psi}|_{S}=&\:\Psi'-s\Psi|_S,\\
r\widetilde{\Psi}|_{N}=&\:-s r\Psi|_N+r\Psi'(r((\rho_c)_0),\theta,\varphi)- \frac{1}{2}\int_{\rho_0}^{\rho}\left[r\hat{{L}}_{s,\ell,\kappa}(\Psi)|_N-2s\partial_{\rho}(r\Psi)|_N\right](\rho',\theta,\varphi)\,d\rho',\\
\widetilde{\Psi}'=&\:\frac{1}{h_{r_+}(2-h_{r_+}D)}r^{-2}\hat{{L}}_{s,\ell,\kappa}(\Psi)|_S+s(P_1+s)(\Psi)|_S-(P_1+s)(\Psi').
\end{align*}

Note in particular that it's possible to write the above relations on $S$ in matrix form:
\begin{equation*}
\begin{pmatrix}
\widetilde{\Psi}|_S\\
\widetilde{\Psi}'
\end{pmatrix}
=\begin{pmatrix}
0 & 1\\
1 & -(P_1+s)
\end{pmatrix}
\begin{pmatrix}
-\frac{1}{h_{r_+}(2-h_{r_+}D)}r^{-2}\hat{{L}}_{s,\ell,\kappa} & 0\\
0 & 1
\end{pmatrix}
\begin{pmatrix}
-1 & 0\\
-s & 1
\end{pmatrix}
\begin{pmatrix}
{\Psi}|_S\\
{\Psi}'
\end{pmatrix}
\end{equation*}
We can formally invert the above matrix to obtain
\begin{equation*}
\begin{pmatrix}
{\Psi}|_S\\
{\Psi}'
\end{pmatrix}
=\begin{pmatrix}
-1 & 1\\
-s& -0
\end{pmatrix}
\begin{pmatrix}
-{h_{r_+}(2-h_{r_+}D)}r^2\hat{{L}}_{s,\ell,\kappa}^{-1} & 0\\
0 & 1
\end{pmatrix}
\begin{pmatrix}
P_1+s & 1\\
1& 0
\end{pmatrix}
\begin{pmatrix}
\widetilde{\Psi}|_S\\
\widetilde{\Psi}'
\end{pmatrix}
\end{equation*}
We moreover have that
\begin{align*}
{\Psi}|_{N}=&\: -2(r\hat{{L}}_{s,\ell,\kappa})^{-1}( \partial_{\rho} (r\widetilde{\Psi}|_{N}))|_{N}.
\end{align*}
The above expressions constitute a formal definition for $(\mathcal{A}_{\ell}-s)^{-1}$, where we identify the functions on $S$ and $N$ with their trivial extensions to $\Sigma$ in order to act with $\hat{{L}}_{s,\ell,\kappa}^{-1}$.

We immediately see that $(\mathcal{A}_{\ell}-s)^{-1}$ is a well-defined bounded operator iff $\hat{{L}}_{s,\ell,\kappa}^{-1}$ exists as a bounded operator with $\mathcal{D}_{\sigma}(\hat{{L}}_{s,\ell,\kappa})\subset H_{\sigma,1,\rho_0}$. The identities $(\mathcal{A}_{\ell}-s)\circ (\mathcal{A}_{\ell}-s)^{-1}=\textnormal{id}_{\mathbf{H}_{\sigma,\rho_0}}$ and $(\mathcal{A}_{\ell}-s)^{-1}\circ (\mathcal{A}_{\ell}-s)=\textnormal{id}_{\mathcal{D}_{\sigma}(\mathcal{A}_{\ell})}$ then follow immediately.

In order to prove (ii) we observe that we can decompose $\Psi=\sum_{\ell=0}^{\infty}\pi_{\ell}(\Psi)$ and $\Psi'=\sum_{\ell=0}^{\infty}\pi_{\ell}(\Psi')$ and $(\pi_{\ell}(\Psi),(\pi_{\ell}(\Psi))\in \ker (\mathcal{A}_{\ell}-s)$.
\end{proof}

\section{Degenerate elliptic estimates}
\label{sec:degelliptic}
In this section we will derive two degenerate elliptic estimates, which are each valid for either small $|s|$ or large angular momentum $\ell$. The purpose of these estimates is to control $H^2$-norms of $\hat{\psi}$ \emph{away} from $\mathcal{H}^+$ and $\mathcal{C}^+$ or $\mathcal{I}^+$, in contrast with the Gevrey estimates of Section \ref{sec:gevrest}, which are restricted to suitably small neighbourhoods of $\mathcal{H}^+$ and $\mathcal{C}^+$ or $\mathcal{I}^+$ and involve higher-order Sobolev norms. The degenerate elliptic estimates and Gevrey estimates are \emph{coupled}. In order to make the coupling constant sufficiently small in the degenerate elliptic estimates below, we will make use of either the smallness of $|s|$ (Proposition \ref{prop:lowfreqellipticest}) or largeness of $\ell$ (Proposition \ref{prop:degellipticest}).

\begin{proposition}
\label{prop:lowfreqellipticest}
Let $\kappa_c,\kappa_+\geq 0$ and let $\rho_0>0$ and $\epsilon>0$ be arbitrarily small. Then there exists a constant $s_0>0$ and constants $C,C_{\epsilon}>0$ that are independent of $s$, $\kappa_+$ and $\kappa_c$, such that for all $\ell\in \N$ and $s\in \C$ such that $(1+\ell(\ell+1))^{-1}(|s|+M|s|^2)<s_0$:
\begin{equation}
\label{eq:lowfreqellipticest}
\begin{split}
\int_{r_+}^{r_c}&M^2r^{-2}|\partial_r(Dr^2\partial_r{\hat{\psi}})|^2+D|\partial_r\hphi|^2+Mr^{-3}\sqrt{D}|\hphi|^2+DM^2(\ell(\ell+1)+2r^2\mathfrak{l}^{-2})|\partial_{r}{\hat{\psi}}|^2\\
&+[\ell(\ell+1)r^{-2}+M^2\ell^2(\ell+1)^2r^{-4}]|\hphi|^2\,dr\\
\leq&\: C|s| \sum_{\star\in \{+,c\}} \int_{0}^{\rho_0} (M+|s|M^2+\epsilon)|\hphi|^2+M^2(M+|s|M^2)|\hphi_{(1)}|^2\,d\rho_{\star}+ C_{\epsilon}\int_{r_+}^{r_c}|\hat{L}_{s,\ell,\kappa}(\hat{\psi})|^2\,dr.
\end{split}
\end{equation}
\end{proposition}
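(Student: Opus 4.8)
The plan is to run a weighted energy estimate for the elliptic (``$\partial_\tau$-free'') part of the operator $\hat{L}_{s,\ell,\kappa}$, exploiting the fact that when $(1+\ell(\ell+1))^{-1}(|s|+M|s|^2)$ is small, the first-order-in-$s$ terms and the $s^2$ terms in $\hat{L}_{s,\ell,\kappa}$ (see \eqref{eq:fixedfreqpsis}) are subordinate to the genuinely elliptic term $r^{-2}\partial_r(Dr^2\partial_r\hat\psi)$ and the zeroth order term with the favourable sign $-(r^{-2}\ell(\ell+1)+2\mathfrak l^{-2})\hat\psi$. The starting point is the pointwise identity
\begin{equation*}
r^{-2}\partial_r(Dr^2\partial_r\hat\psi)-(r^{-2}\ell(\ell+1)+2\mathfrak l^{-2})\hat\psi = r^{-2}\hat L_{s,\ell,\kappa}(\hat\psi)-2(1-h_{r_+}D)sr^{-1}\partial_r(r\hat\psi)+[h_{r_+}(2-h_{r_+}D)s^2+(h_+D)'s]\hat\psi,
\end{equation*}
obtained by rearranging \eqref{eq:fixedfreqpsis}. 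First I would multiply this identity by $r^2\overline{\partial_r(Dr^2\partial_r\hat\psi)}$ (to generate the $M^2r^{-2}|\partial_r(Dr^2\partial_r\hat\psi)|^2$ term) and separately by $r^2 D \overline{\partial_r\hat\psi}$ and by $\overline{\hat\psi}$ times suitable $r$-weights, integrate over $(r_+,r_c)$, and integrate by parts; the boundary terms at $r=r_+$ and $r=r_c$ (equivalently $\rho_+=0$, $\rho_c=0$) must be handled by the averaging/cut-off trick flagged in Section \ref{intro:keyest}, namely integrate over $[\eta,r_c-\eta']$, multiply by a cutoff and average in $\eta$, so that the boundary contributions are replaced by integrals of $|\hat\phi|^2$ and $|\hat\phi_{(1)}|^2$ over small $\rho_\star$-intervals near the horizons — exactly the terms on the right-hand side of \eqref{eq:lowfreqellipticest}.

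\textbf{Key steps, in order.} (1) Derive the rearranged identity above and fix the three test multipliers; (2) integrate by parts, being careful with the $D$-degeneracy at $r=r_+,r_c$ so that the left-hand side quantities appear with the stated weights (the $D|\partial_r\hat\phi|^2$, $DM^2(\ell(\ell+1)+2r^2\mathfrak l^{-2})|\partial_r\hat\psi|^2$ and $[\ell(\ell+1)r^{-2}+M^2\ell^2(\ell+1)^2r^{-4}]|\hat\phi|^2$ terms come naturally from integrating $\hat L$ against $\hat\psi$ and against $\partial_r(Dr^2\partial_r\hat\psi)$); (3) estimate the $s$-dependent error terms: the term $sr^{-1}\partial_r(r\hat\psi)$ paired against $\partial_r(Dr^2\partial_r\hat\psi)$ is bounded by $\epsilon M^2r^{-2}|\partial_r(Dr^2\partial_r\hat\psi)|^2 + C\epsilon^{-1}|s|^2|\partial_r\hat\psi|^2$ (Young), and the crucial point is to absorb the $|s|^2|\partial_r\hat\psi|^2$ contribution into the $DM^2\ell(\ell+1)|\partial_r\hat\psi|^2$ term on the left using $|s|^2 \lesssim s_0(1+\ell(\ell+1))$; similarly the $s^2\hat\psi$ and $s\hat\psi$ terms get absorbed into $\ell(\ell+1)r^{-2}|\hat\phi|^2$ and the Hardy-improved $Mr^{-3}\sqrt D|\hat\phi|^2$ term; (4) collect the boundary terms near $\rho_\star=0$ via the averaging argument and identify them with the $\int_0^{\rho_0}(\dots)|\hat\phi|^2 + (\dots)|\hat\phi_{(1)}|^2\,d\rho_\star$ expression, taking $\rho_0$ small so that $D r^{-2}\approx 2\kappa_\star\rho_\star + \rho_\star^2$ controls everything; (5) finally use Young's inequality once more to bound $\int |\hat L_{s,\ell,\kappa}(\hat\psi)| \cdot |\partial_r(Dr^2\partial_r\hat\psi)|$ by $\epsilon M^2 r^{-2}|\partial_r(Dr^2\partial_r\hat\psi)|^2 + C_\epsilon \int|\hat L_{s,\ell,\kappa}(\hat\psi)|^2$, absorb the $\epsilon$-term into the left, and conclude.

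\textbf{Main obstacle.} The delicate point is step (2)--(3): producing the quantity $M^2r^{-2}|\partial_r(Dr^2\partial_r\hat\psi)|^2$ on the left (which requires pairing the rearranged equation against $\partial_r(Dr^2\partial_r\hat\psi)$ itself, an a priori dangerous move since that quantity is exactly what we want to control) while simultaneously keeping the $\epsilon$-smallness of all terms that get absorbed back into it, and making sure the $\ell$-weighted smallness condition $(1+\ell(\ell+1))^{-1}(|s|+M|s|^2)<s_0$ is strong enough to dominate \emph{all} the $s$-error terms after integration by parts — including the ones arising from commuting $\partial_r$ past $D$ in the integration by parts, which produce first-derivative terms with $D'$ coefficients that must be Hardy- or Young-controlled near the horizons. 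A secondary technical nuisance is that $h_{r_+}$ and $h_{r_c}$ enter the $s$-coefficients, so the estimate near $\mathcal H^+$ (where $h_{r_+}=0$) and near $\mathcal C^+$/$\mathcal I^+$ (where $h_{r_c}=0$) must be treated by the corresponding choice of $\rho_\star$, and the bulk region $[R_0^+,R_0^c]$ handled by a standard (non-degenerate) elliptic estimate; patching these three regions together with compatible cutoffs is routine but must be set up carefully so the constants remain independent of $\kappa_+,\kappa_c$.
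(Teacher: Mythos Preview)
Your strategy is essentially correct and would work, but the paper's argument is cleaner in two places and this matters for the ``main obstacle'' you flag.

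First, instead of pairing the rearranged identity against $\overline{\partial_r(Dr^2\partial_r\hat\psi)}$ (which, as you note, is a priori dangerous), the paper simply \emph{squares} both sides of
\[
r^{-1}\partial_r(Dr^2\partial_r\hat\psi)-r^{-1}(\ell(\ell+1)+2r^2\mathfrak l^{-2})\hat\psi=r^{-1}F.
\]
Expanding the square gives $r^{-2}|\partial_r(Dr^2\partial_r\hat\psi)|^2$ directly, plus the cross term $-2\Re\bigl[r^{-2}(\ell(\ell+1)+2r^2\mathfrak l^{-2})\partial_r(Dr^2\partial_r\hat\psi)\overline{\hat\psi}\bigr]$, which integrates by parts to $2D(\ell(\ell+1)+2r^2\mathfrak l^{-2})|\partial_r\hat\psi|^2$ plus a total derivative. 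The RHS is just $r^{-2}|F|^2$, so there is no Young-inequality loss on the second-order term and your ``dangerous pairing'' concern disappears. Combined with the zeroth-order multiplier $-\overline{\hat\phi}$ (which you also have), this gives all the LHS terms at once; your extra multiplier $r^2D\overline{\partial_r\hat\psi}$ is not needed.

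Second, there is no averaging/cut-off trick at the horizons in this proposition. The total-derivative boundary terms at $r=r_+$ and $r=r_c$ vanish outright because they all carry a factor of $D$, and $D(r_+)=D(r_c)=0$. The near-horizon integrals on the RHS of \eqref{eq:lowfreqellipticest} arise from a different mechanism: the $s$-error terms $|s|M\int r^{-2}|\hat\phi|^2+M^{-2}r^2|\partial_r\hat\phi|^2\,dr$ (and the analogous $|F|^2$ contribution) are absorbed into the LHS \emph{only} on $[R_0^+,R_0^c]$, where $D$ is bounded below and the smallness condition $(1+\ell(\ell+1))^{-1}(|s|+M|s|^2)<s_0$ suffices. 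On $\{\rho_\star\leq\rho_0\}$ the factor $D$ on the LHS degenerates, absorption fails, and those pieces of the $s$-error integrals are simply left on the RHS --- after changing variables to $\rho_\star$ they are exactly the $|\hat\phi|^2$ and $|\hat\phi_{(1)}|^2$ integrals in \eqref{eq:lowfreqellipticest}. Your step~(3) absorption argument implicitly assumes $D$ is bounded below, which is precisely what fails near the horizons; this is the real source of the RHS, not boundary terms.
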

\begin{proof}
We rewrite the equation $\hat{L}_{s,\ell,\kappa}(\hat{\psi})=\tilde{f}$, see \eqref{eq:fixedfreqpsis}, in terms of $\hat{\phi}=r\hat{\psi}$ instead of $\hat{\psi}$: obtain:
 \begin{equation*}
 2s(1-h_{r_+}D)\partial_r\hphi+\partial_r(D\partial_r\hphi)-r^{-2}(2Mr^{-1}-2e^2r^{-2})\hphi-r^{-2}\ell(\ell+1)\hphi-(h_{r_+}(2-h_{r_+}D)s^2+(h_+D)'s)\hphi=r^{-1}\tilde{f},
 \end{equation*}
We multiply both sides by $-\overline{\phi}$, take the real part and rearrange terms to obtain:

\begin{equation*}
\begin{split}
-\overline{\hphi}\partial_r(D\partial_r\hphi)+2r^{-4}(Mr-e^2)|\hphi|^2+\ell(\ell+1)r^{-2}|\hphi|^2=&\:2(1-h_{r_+}D)\re (s \partial_r\hphi \overline{\hphi})-\re(r^{-1}\tilde{f}\overline{\hphi})\\
&-\re(h_{r_+}(2-h_{r_+}D)s^2+(h_+D)'s)|\hphi|^2.
\end{split}
\end{equation*}
Integrating over $[r_+,r_c)$ with $r_c<\infty$ or $r_c=\infty$ and using that the corresponding boundary terms vanish, we obtain after applying Young's inequality on the right-hand side:
\begin{equation*}
\begin{split}
\int_{r_+}^{r_c}&D|\partial_r\hphi|^2+2r^{-4}(Mr-e^2)|\hphi|^2+\ell(\ell+1)r^{-2}|\hphi|^2\,dr\leq |s|M \int_{r_+}^{r_c} r^{-2}|\hphi|^2+M^{-2}r^2|\partial_r\hphi|^2\,dr\\
&+\epsilon \int_{r_+}^{r_c} r^{-2}|\hphi|^2\,dr+ C_{\epsilon}\int_{r_+}^{r_c} |\tilde{f}|^2\,dr,
\end{split}
\end{equation*}
where $Mr-e^2\geq 0$, since $r\geq r_+\geq M$ and $|e|\leq M$. Indeed we have that $D(M)\leq \frac{e^2}{M^2}-1\leq 0$ and $D(r)\geq 0$ for all $r\geq r_+$, so $r_+\geq M$.

From \eqref{eq:fixedfreqpsis} it follows that we can alternatively write:
\begin{equation}
\label{eq:maineqelliptic}
\begin{split}
r^{-1}\tilde{f}=&\: 2(1-h_{r_+}D)s\partial_{r}(r{\hat{\psi}})+r^{-1}\partial_{r}(Dr^2\partial_{r}{\hat{\psi}})+[-h_{r_+}(2-h_{r_+}D)s^2-(h_+D)'s]r{\hat{\psi}}\\
&-r^{-1}(\ell(\ell+1)+2r^2\mathfrak{l}^{-2}) {\hat{\psi}}.
\end{split}
\end{equation}
We moreover denote
\begin{equation*}
r^{-1} F:=-2(1-h_{r_+}D)s\partial_{r}(r{\hat{\psi}})+[h_{r_+}(2-h_{r_+}D)s^2+(h_+D)'s]r{\hat{\psi}}+r^{-1}\tilde{f},
\end{equation*}
then
\begin{equation}
\label{eq:ellipticeqv1}
r^{-1}\partial_{r}(Dr^2\partial_{r}{\hat{\psi}})-r^{-1}(\ell(\ell+1)+2r^2\mathfrak{l}^{-2}) {\hat{\psi}}=r^{-1}F.
\end{equation}
so after taking the square norm on both sides and apply the Leibniz rule several times, we obtain
\begin{equation*}
\begin{split}
r^{-2}|\partial_r(Dr^2\partial_r{\hat{\psi}})|^2&+(4r^2\mathfrak{l}^{-4}+\ell^2(\ell+1)^2r^{-2}+4\ell(\ell+1)\mathfrak{l}^{-2})|{\hat{\psi}}|^2-2\partial_r((\ell(\ell+1)r^{-2}+2\mathfrak{l}^{-2})\re(Dr^2\partial_{r}{\hat{\psi}} \overline{\hat{\psi}})\\
&+2D(\ell(\ell+1)+2r^2\mathfrak{l}^{-2})|\partial_{r}{\hat{\psi}}|^2-4\ell(\ell+1)Dr^{-1}\re(\partial_{r}{\hat{\psi}} \overline{\hat{\psi}})=r^{-2}|F|^2
\end{split}
\end{equation*}
We can estimate
\begin{equation*}
4\ell(\ell+1)Dr^{-1}|\partial_{r}{\hat{\psi}}||\overline{\hat{\psi}}|\leq 2D\ell(\ell+1)|\partial_r{\hat{\psi}}|^2+2\ell(\ell+1)Dr^{-2}|{\hat{\psi}}|^2,
\end{equation*}
so, using moreover that the total derivative term vanishes after integration, we can finally conclude that
\begin{equation*}
\begin{split}
\int_{r_+}^{r_c}&D|\partial_r\hphi|^2+2r^{-4}(Mr-e^2)|\hphi|^2+\ell(\ell+1)r^{-2}|\hphi|^2\,dr\\
&+M^2\int_{r_+}^{r_c}r^{-2}|\partial_r(Dr^2\partial_r{\hat{\psi}})|^2+D(\ell(\ell+1)+2r^2\mathfrak{l}^{-2})|\partial_{r}{\hat{\psi}}|^2\\
&+(4r^2\mathfrak{l}^{-4}+\ell^2(\ell+1)^2r^{-2}+4\ell(\ell+1)\mathfrak{l}^{-2})|{\hat{\psi}}|^2\,dr\\
\leq &\: C\int_{r_+}^{r_c}r^{-2}|F|^2+|\tilde{f}|^2\,dr+ M|s| \int_{r_+}^{r_c} r^{-2}|\hphi|^2+M^{-2}r^2|\partial_r\hphi|^2\,dr+\epsilon \int_{r_+}^{r_c} r^{-2}|\hphi|^2\,dr\\
\leq &\: C\int_{r_+}^{r_c}|\tilde{f}|^2\,dr+ \max\{|s|M,|s|^2M^2\} \int_{r_+}^{r_c} r^{-2}|\hphi|^2+M^{-2}r^2|\partial_r\hphi|^2\,dr+\epsilon \int_{r_+}^{r_c} r^{-2}|\hphi|^2\,dr.
\end{split}
\end{equation*}
For $M|s|(1+\ell(\ell+1))^{-1}$ suitably small compared to $M\rho_0$ and $\epsilon>0$ appropriately small, we can absorb the terms involving $\hphi$ and $\partial_r\hphi$ in the integrals restricted to $(R_+^0,R_c^0)$ on the right-hand side into left-hand side. We are then left with
\begin{equation*}
\begin{split}
\int_{r_+}^{r_c}&D|\partial_r\hphi|^2+2r^{-4}(Mr-e^2)|\hphi|^2+\ell(\ell+1)r^{-2}|\hphi|^2\,dr\\
&+M^2\int_{r_+}^{r_c}r^{-2}|\partial_r(Dr^2\partial_r{\hat{\psi}})|^2+D(\ell(\ell+1)+2r^2\mathfrak{l}^{-2})|\partial_{r}{\hat{\psi}}|^2\\
&+(4r^2\mathfrak{l}^{-4}+\ell^2(\ell+1)^2r^{-2}+4\ell(\ell+1)\mathfrak{l}^{-2})|{\hat{\psi}}|^2\,dr\\
\leq &\: C\int_{r_+}^{r_c}|\tilde{f}|^2\,dr+C\sum_{\star\in \{+,c\}} \int_{0}^{\rho_0} (M|s|+|s|^2M^2+\epsilon)|\hphi|^2+M^2(|s|M+|s|^2M^2)|\hphi_{(1)}|^2\,d\rho_{\star}.
\end{split}
\end{equation*}
\end{proof}

\begin{proposition}
\label{prop:degellipticest}
Fix $0<\rho_0=({\rho}_+)_0=({\rho}_c)_0<\frac{r_c-r_+}{r_c r_+}$. Let $K\in \N$, $K\geq 2$.  Let $\kappa_+,\kappa_c>0$ be suitably small compared to $\rho_0$ and $K$. Consider a smooth function $g:[r_+,r_c]\to \R_{\geq 0}$ which satisfies:
\begin{equation*}
g(r)=\left\{\begin{matrix}
1 & \textnormal{for $R_0^+\leq r \leq R^c_0$},\\
M^2\rho_c(r)\rho_+(r)& \textnormal{for $r\leq r((\frac{1}{2}\rho_+)_0)$ and $r\geq r(\frac{1}{2}(\rho_c)_0)$}.
\end{matrix}\right.
\end{equation*}
Furthermore, let $\chi: [r_+,r_c]\to \R_{\geq 0}$ be a cut-off function supported in $[r(\frac{1}{2K}({\rho}_+)_0),r(\frac{1}{2K}({\rho}_c)_0)]$ such that $\chi(r)=1$ for all $r\in [r(\frac{1}{K}({\rho}_+)_0),r(\frac{1}{K}({\rho}_c)_0)]$. 

Then there exists a constant $c_0>0$ depending only on $\chi$, such that for 
\begin{equation*}
k^2\leq c_0 M^3\rho_0^3\ell(\ell+1)
\end{equation*}
and
\begin{equation}
\label{eq:condlelliptic}
\ell(\ell+1){\gg}|s|^2(1+M^2|s|^2) K^2\rho_0^{-2}
\end{equation}
we have that for all $\kappa_+,\kappa_c$ suitably small compared to $\ell$:
\begin{equation}
\label{eq:ellipticestext}
\begin{split}
\int_{r(\frac{1}{2K}({\rho}_+)_0)}^{r(\frac{1}{2K}({\rho}_c)_0)}& g^k D^2(\ell(\ell+1)+2r^2\mathfrak{l}^{-2}) |\partial_r (\chi {\hat{\psi}})|^2+  r^{-2}g^k(\ell(\ell+1)+2r^2\mathfrak{l}^{-2})^2D \chi^2|{\hat{\psi}}|^2\,dr\\
\leq&\: C(\rho_c)_0^{-2}(K M^{-1}(\rho_c)_0^{-1})^{-k}\int_{\frac{1}{2K}(\rho_c)_0}^{\frac{1}{K}(\rho_c)_0}|\hat{\phi}|^2+ |\partial_{\rho }\hat{\phi} |^2 \,d\rho_c\\
&+ C(\rho_+)_0^{-2}(K M^{-1}(\rho_+)_0^{-1})^{-k}\int_{\frac{1}{2K}(\rho_+)_0}^{\frac{1}{K}(\rho_+)_0} |\hat{\phi}|^2+ |\partial_{\rho }\hat{\phi} |^2 \,d\rho_+\\
&+C \int_{r(\frac{1}{2K}(\rho_+)_0)}^{r(\frac{1}{2K}(\rho_c)_0)}g^{k}Dr^{-2}\chi^2|\hat{{L}}_{s,\ell,\kappa}({\hat{\psi}})|^2\,dr,
\end{split}
\end{equation}
where $C>0$ is a constant that is independent of $\ell$, $\kappa_+$ and $\kappa_c$.

In particular, for any constant $K_0>0$, and $\ell$ satisfying \eqref{eq:condlelliptic} with $K$ replaced by $K_0$, we can estimate
\begin{equation}
\label{eq:ellipticestextv2}
\begin{split}
\int_{R_0^+}^{R^c_0}& r^{-2}|\partial_{r}(Dr^2\partial_{r}{\hat{\psi}})|^2+D^2(\ell(\ell+1)+2r^2\mathfrak{l}^{-2}) |\partial_{r}{\hat{\psi}}|^2+  Dr^{-2}(\ell(\ell+1)+2r^2\mathfrak{l}^{-2})^2 |{\hat{\psi}}|^2\,dr\\
\leq&\: K_0^{-\ell}\int_{0}^{(\rho_c)_0}  |\hat{\phi}|^2+ |\partial_{\rho}\hat{\phi}|^2 \,d\rho_c+ K_0^{-\ell}\int_{0}^{(\rho_+)_0} |\hat{\phi}|^2+ |\partial_{\rho}\hat{\phi}|^2 \,d\rho_+\\
&+C \int_{R^+_0}^{R^c_0}Dr^{-2}|\hat{{L}}_{s,\ell,\kappa}({\hat{\psi}})|^2\,dr+C\int_{0}^{(\rho_c)_0} |r\hat{{L}}_{s,\ell,\kappa}({\hat{\psi}})|^2 \,d\rho_c+ C\int_{0}^{(\rho_+)_0}|r\hat{{L}}_{s,\ell,\kappa}({\hat{\psi}})|^2 \,d\rho_+,
\end{split}
\end{equation}
with $C>0$ a constant depending in particular on $K_0$, but independent of $\ell$.
\end{proposition}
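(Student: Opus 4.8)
The plan is to establish \eqref{eq:ellipticestext} first by a ``degenerate Carleman'' argument on the interval $[r(\tfrac{1}{2K}(\rho_+)_0),r(\tfrac{1}{2K}(\rho_c)_0)]$, and then to deduce \eqref{eq:ellipticestextv2} as a corollary by specializing the weight. For the main estimate, start from equation \eqref{eq:ellipticeqv1}, i.e. $r^{-1}\partial_r(Dr^2\partial_r\hat\psi)-r^{-1}(\ell(\ell+1)+2r^2\mathfrak{l}^{-2})\hat\psi=r^{-1}F$ with $r^{-1}F$ as in the proof of Proposition \ref{prop:lowfreqellipticest}, now localized by multiplying $\hat\psi$ by the cut-off $\chi$. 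Multiply the equation for $\chi\hat\psi$ by $g^k\overline{\chi\hat\psi}$ (weighted ``$L^2$-pairing'') and, separately, take the square norm against the weight $g^k$; integrating by parts and combining the two identities produces, on the left-hand side, the coercive quantity $\int g^k D^2(\ell(\ell+1)+2r^2\mathfrak{l}^{-2})|\partial_r(\chi\hat\psi)|^2 + \int r^{-2}g^k(\ell(\ell+1)+2r^2\mathfrak{l}^{-2})^2 D\chi^2|\hat\psi|^2$, together with commutator terms. The key algebraic point is that the weight $g$ vanishes to order one in each of $\rho_+,\rho_c$ near the horizon and near infinity, so $g^k$ decays like $\rho_{\pm}^k$; differentiating $g^k$ brings down a factor $k g^{k-1}g'$, and since $|g'|\lesssim M\rho$ and $g^{k-1}\lesssim M^{-2(k-1)}\rho^{k-1}$, each commutator term carries a gain of $k$ but also a compensating power of $\rho_0$. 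One then absorbs the $|s|$-dependent first-order terms $2(1-h_{r_+}D)s\partial_r(r\hat\psi)$ and the $s^2$ zeroth-order terms hiding in $F$ into the coercive $\ell(\ell+1)$-weighted terms on the left; this absorption is exactly what forces the hypothesis \eqref{eq:condlelliptic}, $\ell(\ell+1)\gg |s|^2(1+M^2|s|^2)K^2\rho_0^{-2}$, and the constraint $k^2\le c_0 M^3\rho_0^3\ell(\ell+1)$ (so that $k\cdot(\text{commutator gain})$ stays small relative to the coercive weight).

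The boundary and cut-off terms are handled as follows. The commutator terms generated by $\chi'$ are supported in the two annular regions $[r(\tfrac{1}{2K}(\rho_\pm)_0),r(\tfrac1K(\rho_\pm)_0)]$, where $g^k$ is of size $(M^{-2}(\rho_\pm)_0^2/K^2)^{k}$ up to constants — that is, $g^k\sim (KM^{-1}(\rho_\pm)_0^{-1})^{-2k}$ — which gives precisely the exponentially-in-$k$ small prefactors $(KM^{-1}(\rho_c)_0^{-1})^{-k}$ and $(KM^{-1}(\rho_+)_0^{-1})^{-k}$ multiplying $\int |\hat\phi|^2+|\partial_\rho\hat\phi|^2$ in \eqref{eq:ellipticestext}. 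Converting $\partial_r$-derivatives of $\hat\psi$ in these annuli into $\partial_\rho\hat\phi=\partial_\rho(r\hat\psi)$ is routine using $\frac{d\rho_\pm}{dr}=\pm r^{-2}$ and the boundedness of $r$ on the support. The inhomogeneous term contributes $\int g^k Dr^{-2}\chi^2|\hat L_{s,\ell,\kappa}(\hat\psi)|^2$ after one application of Young's inequality with the coercive weight. Finally, the smallness of $\kappa_+,\kappa_c$ relative to $\rho_0$ and $K$ (resp. relative to $\ell$) is used exactly as in Proposition \ref{prop:lowfreqellipticest}: it guarantees that $Dr^{-2}=2\kappa\rho+\dots$ together with the coefficients $A_2,B_0,\dots$ do not spoil the sign of the coercive terms on $[r_+,r_c]$, i.e. that $D\ge 0$ with the expected fall-off and that the $h_{r_+}$-dependent terms remain controllable.

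For \eqref{eq:ellipticestextv2} one takes $K=K_0$ fixed and, on the central region $[R_0^+,R_0^c]$ where $g\equiv 1$ and $\chi\equiv 1$, the weight $g^k$ is harmless; one chooses $k$ proportional to $\ell$ (which is permitted by $k^2\le c_0M^3\rho_0^3\ell(\ell+1)$ for $\ell$ large relative to $\rho_0^{-1}$) so that the prefactors $(K_0 M^{-1}(\rho_\pm)_0^{-1})^{-k}$ become $K_0^{-\ell}$ up to a constant absorbed into the other terms, and one adds back the full near-horizon/near-infinity control of the inhomogeneity $\int_0^{(\rho_\pm)_0}|r\hat L_{s,\ell,\kappa}(\hat\psi)|^2\,d\rho_\pm$ to extend the estimate for $r^{-2}|\partial_r(Dr^2\partial_r\hat\psi)|^2$ and the $\ell(\ell+1)$-weighted terms from $[r(\tfrac1{2K_0}(\rho_+)_0),r(\tfrac1{2K_0}(\rho_c)_0)]$ across the full range $[R_0^+,R_0^c]$, using that $g,D$ are bounded above and below there. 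The main obstacle will be the bookkeeping in the commutator estimates: tracking the interplay of the $k$-dependent combinatorial factors from differentiating $g^k$, the $\rho_0$-powers they come with, and the $\ell$- and $|s|$-weights, so that all error terms are genuinely absorbed under the stated constraints $k^2\le c_0M^3\rho_0^3\ell(\ell+1)$ and \eqref{eq:condlelliptic} — and in particular verifying that the ``gain'' in the annuli is $(KM^{-1}(\rho_\pm)_0^{-1})^{-k}$ with the correct power of $k$ rather than, say, $(KM^{-1}(\rho_\pm)_0^{-1})^{-2k}$, which is what makes the coupling with the Gevrey estimates of Section \ref{sec:gevrest} close.
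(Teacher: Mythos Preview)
Your approach is essentially the paper's: square the elliptic identity \eqref{eq:ellipticeqv1} against a degenerate weight, integrate by parts on the cross term, and absorb errors using largeness of $\ell(\ell+1)$ and the constraint $k^2\le c_0M^3\rho_0^3\ell(\ell+1)$. Two clarifications are needed, though.

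First, the paper does \emph{not} use a separate $L^2$-pairing with $g^k\overline{\chi\hat\psi}$; it works entirely with the squared equation, multiplied by the weight $Dg^k$ (not $g^k$). The extra factor of $D$ is what produces the $D^2$ in front of $|\partial_r(\chi\hat\psi)|^2$ after the integration by parts on the cross term $-2\re\bigl(r^{-2}Dg^k\,\partial_r(Dr^2\partial_r\hat\psi)\,(\ell(\ell+1)+2r^2\mathfrak{l}^{-2})\overline{\hat\psi}\bigr)$; see \eqref{eq:mainidentdegelliptic}. The commutator from $\partial_r(g^k)=kg^{k-1}g'$ is then bounded via Young's inequality, and the constraint on $k$ arises precisely from $\frac{k^2}{\ell(\ell+1)}\,Dr^{-2}|g^{-1}\partial_\rho g|^2\lesssim \frac{k^2}{\ell(\ell+1)}\rho_0^{-3}$ needing to be $\lesssim 1$.

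Second, your uncertainty about $-k$ versus $-2k$ stems from a miscount: in the annulus $\rho_c\in[\tfrac{1}{2K}(\rho_c)_0,\tfrac{1}{K}(\rho_c)_0]$, only $\rho_c$ is small ($\sim\rho_0/K$), while $\rho_+=\tfrac{1}{r_+}-\tfrac{1}{r}\approx\tfrac{1}{r_+}\sim M^{-1}$ since $r$ is large there. Hence $g=M^2\rho_c\rho_+\sim M\rho_0/K$, so $g^k\sim(KM^{-1}\rho_0^{-1})^{-k}$ with the power $-k$, as stated. Your claim $g^k\sim(KM^{-1}(\rho_\pm)_0^{-1})^{-2k}$ would require \emph{both} $\rho_+$ and $\rho_c$ to be of order $\rho_0/K$ simultaneously, which never happens on the support of $\chi'$.
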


\begin{remark}
It is important that the coupling constant $K_0^{-\ell}$ on the right-hand side of \eqref{eq:ellipticestextv2} depends \emph{exponentially} on $\ell$ and moreover $K_0$ can be chosen arbitrarily large if $\ell$ is taken appropriately large, because of the competition with an exponentially growing constant on the right-hand side of \eqref{eq:maingevreysum} in Corollary \ref{cor:maingevreysumbound}.
\end{remark}
\begin{proof}[Proof of Proposition \ref{prop:degellipticest}]
Let $\tilde{f}=\hat{{L}}_{s,\ell,\kappa}({\hat{\psi}})$. We denote
\begin{equation*}
r^{-1} F:=-2(1-h_{r_+}D)s\partial_{r}(r{\hat{\psi}})+[h_{r_+}(2-h_{r_+}D)s^2+(h_+D)'s]r{\hat{\psi}}+r^{-1}\tilde{f}.
\end{equation*}
By \eqref{eq:fixedfreqpsis}, we have that
\begin{equation}
\label{eq:ellipticeq}
r^{-1}\partial_{r}(Dr^2\partial_{r}{\hat{\psi}})-r^{-1}(\ell(\ell+1)+2r^2\mathfrak{l}^{-2}) {\hat{\psi}}=r^{-1}F.
\end{equation}
Then, by applying \eqref{eq:ellipticeq}, we can estimate, for $0<\eta<1$ arbitrarily small,
\begin{equation}
\label{eq:ellipticeqest}
r^{-2}|\partial_{r}(Dr^2\partial_{r}{\hat{\psi}})|^2\geq (1-\eta)r^{-2}(\ell(\ell+1)+2r^2\mathfrak{l}^{-2})^2 |{\hat{\psi}}|^2-C_{\eta}r^{-2}|F|^2,
\end{equation}
with $C_{\eta}>0$ an appropriately large numerical constant depending on $\eta>0$. 

Let $g: (r_+,r_c)\to \R_{>0}$ be a smooth weight function that we will specify below. Then we can take the square norm of both sides of \eqref{eq:ellipticeq}, multiply both sides by the weight function $Dg^k$ and apply \eqref{eq:ellipticeqest} to arrive at the following inequality:
\begin{equation*}
(2-\eta) Dg^k r^{-2}(\ell(\ell+1)+2r^2\mathfrak{l}^{-2})^2 |{\hat{\psi}}|^2-2\re(r^{-2}Dg^k\partial_{r}(Dr^2\partial_{r}{\hat{\psi}})(\ell(\ell+1)+2r^2\mathfrak{l}^{-2}) \overline{{\hat{\psi}}})\leq C_{\eta}r^{-2}Dg^k|F|^2,
\end{equation*}
where $k\in \N_0$.

By applying the Leibniz rule, we can further estimate
\begin{equation}
\label{eq:mainidentdegelliptic}
\begin{split}
-2\re(r^{-2}&Dg^k\partial_{r}(Dr^2\partial_{r}{\hat{\psi}})(\ell(\ell+1)+2r^2\mathfrak{l}^{-2}) \overline{{\hat{\psi}}})=-2\partial_{r}\left(\re\left(r^{-2}Dg^kDr^2\partial_{r}{\hat{\psi}}(\ell(\ell+1)+2r^2\mathfrak{l}^{-2}) \overline{{\hat{\psi}}}\right)\right)\\
&+2g^kD^2(\ell(\ell+1)+2r^2\mathfrak{l}^{-2}) |\partial_{r}{\hat{\psi}}|^2+2kg^{k-1} g'D^2\re\left(\partial_{r}{\hat{\psi}}\left(\ell(\ell+1)+2r^2\mathfrak{l}^{-2}\right) \overline{{\hat{\psi}}}\right)\\
&+[2D'D(\ell(\ell+1)+2r^2\mathfrak{l}^{-2})-4\ell(\ell+1) D^2 r^{-1}] g^k\re(\partial_{r}{\hat{\psi}}\overline{{\hat{\psi}}}).
\end{split}
\end{equation}

We now take the weight function $g$ to satisfy the following properties: $g(r)=1$ for $R_0^+\leq r \leq R^c_0$ and $g(r)=M^2\rho_c(r)\rho_+(r)$ for $r\leq r((\frac{1}{2}\rho_+)_0)$ and $r\geq r(\frac{1}{2}(\rho_c)_0)$, so that $g'(r)=0$ for $R_0^+\leq r \leq R_0^c$ and $g'(r)=-M^2r^{-2} (\rho_+-\rho_c)$  for $r\leq r((\frac{1}{2}\rho_+)_0)$ and $r\geq r(\frac{1}{2}(\rho_c)_0)$. 

Note that
\begin{equation*}
\begin{split}\
&\left|-2kg^{k-1}g' D^2\re\left(\partial_{r}\hat{\psi}(\ell(\ell+1)+2r^2\mathfrak{l}^{-2}) \overline{\hat{\psi}}\right)\right|\\
\leq&\: (2-2\eta)  g^k(\ell(\ell+1)+2r^2\mathfrak{l}^{-2})^2 Dr^{-2}|\hat{\psi}|^2+\frac{k^2}{(2-2\eta)\ell(\ell+1)} Dr^{-2}|g^{-1}\partial_{\rho}g |^2g^kD^2 \ell(\ell+1)|\partial_{r}\hat{\psi}|^2,
\end{split}
\end{equation*}
where $\partial_{\rho}=\partial_{\rho_+}$ in the region $r\leq R_0^+$ and $\partial_{\rho}=\partial_{\rho_c}$ in the region $r\geq R_0^c$.

By using standard properties of cut-off functions, we can infer that there exists a $c_0>0$ depending only on the choice of cut-off function appearing in the construction of $g$, such that 
\begin{equation*}
Dr^{-2}|g^{-1}\partial_{\rho}g|^2\leq c_0^{-1} \rho_0^{-3}.
\end{equation*}

Given $\epsilon'>0$ arbitrarily small, we can moreover estimate
\begin{equation*}
\begin{split}
&\left|\left[2D'D(\ell(\ell+1)+2r^2\mathfrak{l}^{-2})-4\ell(\ell+1) D^2 r^{-1}\right] g^k\re\left(\partial_{r}\hat{\psi}\overline{\hat{\psi}}\right)\right| \leq 2\epsilon' D^2g^{k} \ell(\ell+1) |\partial_{r}\hat{\psi}|^2\\
&+\epsilon' D^2g^{k} (\ell(\ell+1)+2r^2\mathfrak{l}^{-2})) |\partial_{r}\hat{\psi}|^2 + \frac{\epsilon'^{-1}}{\ell(\ell+1)} \cdot g^{k}(\ell(\ell+1)+2r^2\mathfrak{l}^{-2})^2 r^{-2} D^2|\hat{\psi}|^2\\
&+\frac{\epsilon'^{-1}}{\ell(\ell+1)} \cdot g^{k}(\ell(\ell+1)+2r^2\mathfrak{l}^{-2})^2 (rD')^2r^{-2} |\hat{\psi}|^2 .
\end{split}
\end{equation*}
Note that we can write $D'r= (Dr^{-2})'r^3+2D$, and using \eqref{eq:Devent}, we can therefore estimate
\begin{equation*}
\begin{split}
 (rD')^2\leq &\: 2r^2 |\partial_{\rho_+}(Dr^{-2})|^2+8D^2\\
 \leq &\: Cr^2\rho_+^2+C\kappa_+ \rho_+ r^2+C\kappa_+^2r^2
\end{split}
\end{equation*}
The above estimate holds also with $\rho_+$ and $\kappa_+$ replaced with $\rho_c$ and $\kappa_c$, so we can then estimate:
\begin{equation*}
(rD')^2\leq C  D+ C(\kappa_+^2+\kappa_c^2)r^2.
\end{equation*}

Now assume $k^2\leq c_0 M^3\rho_0^3 \ell(\ell+1)$ and choose $\eta,\epsilon'>0$ appropriately small and $\ell$ appropriately large. We then combine the above estimates to obtain:
\begin{equation}
\label{eq:keyestelliptic}
\begin{split}
g^kDr^{-2}&(\ell(\ell+1)+2r^2\mathfrak{l}^{-2})^2 |{\hat{\psi}}|^2+g^kD^2(\ell(\ell+1)+2r^2\mathfrak{l}^{-2}) |\partial_r{\hat{\psi}}|^2+\textnormal{total derivative}\\
\leq&\: C Dr^{-2}g^{k}|F|^2+C(\kappa_+^2+\kappa_c^2) \ell^{-2}\cdot g^{k}(\ell(\ell+1)+2r^2\mathfrak{l}^{-2})^2 |\hat{\psi}|^2.
\end{split}
\end{equation}

Rather than integrating \eqref{eq:keyestelliptic} directly from $r_+$ to $r_c$, we will first introduce a cut-off function $\chi: [r_+,r_c] \to \R$. Let $K\in \N$, $K\geq 2$. Take the cut-off function $\chi$ to satisfy:
\begin{align*}
\chi(r)&\:\geq 0\quad \textnormal{for all $r\in [r_+,r_c]$}, \\
\chi(r)&\:=0 \quad  \textnormal{if $r\leq r\left(\frac{1}{2}K^{-1}(\rho_+)_0\right)$ or $r\geq r\left(\frac{1}{2}K^{-1}(\rho_c)_0\right)$},\\
\chi(r)&\:=1 \quad  \textnormal{if  $r(K^{-1}(\rho_+)_0)\leq r\leq  r(K^{-1}(\rho_c)_0)$}.
\end{align*}

If we replace ${\hat{\psi}}$ in the above estimates by $\chi\cdot {\hat{\psi}}$, then we have to replace $r^{-1}F$ by
\begin{equation*}
\chi\cdot  r^{-1}F-\chi'' Dr {\hat{\psi}}-r^{-1}\partial_{r}(Dr^2)\chi' {\hat{\psi}}-2Dr \chi' \partial_{r}{\hat{\psi}},
\end{equation*}
and we can apply \eqref{eq:keyestelliptic} to obtain
\begin{equation}
\label{eq:keyestelliptic2}
\begin{split}
g^kDr^{-2}&(\ell(\ell+1)+2r^2\mathfrak{l}^{-2})^2\chi^2 |{\hat{\psi}}|^2+g^kD^2(\ell(\ell+1)+2r^2\mathfrak{l}^{-2}) |\partial_r(\chi{\hat{\psi}})|^2+\textnormal{total derivative}
\\
\leq&\: CDr^{-2}\chi^2 g^{k}|F|^2+ C D(D|\chi''|^2+r^{-2}|\chi'|^2) g^{k}|\hat{\phi}|^2+ CD r^{-2} |\chi'|^2   g^{k}|\partial_{\rho} \hat{\phi}|^2\\
&+C\chi^2(\kappa_+^2+\kappa_c^2) \ell^{-2}\cdot g^{k}(\ell(\ell+1)+2r^2\mathfrak{l}^{-2})^2 |\hat{\psi}|^2.
\end{split}
\end{equation}
We have that
\begin{equation*}
\chi^2 Dr^{-2}|F|^2\leq C Dr^2|s|^2 |\partial_{r}(\chi {\hat{\psi}})|^2 +C(|\chi'|^2r^2+1)(|s|^2+M^2|s|^4)D|{\hat{\psi}}|^2+CDr^{-2}\chi^2|\tilde{f}|^2.
\end{equation*}

If 
\begin{equation}
\label{eq:conditionl}
\ell(\ell+1){\gg} |s|^2(1+M^2|s|^2)\max\left \{(D^{-1}r^2)\left(\frac{1}{2}K^{-1}\left(\rho_+\right)_0\right),(D^{-1}r^2)\left(\frac{1}{2}K^{-1}(\rho_c)_0\right)\right\}, 
\end{equation}
then we can use that the cut-off $\chi$ allows us to restrict $r\in [r\left(\frac{1}{2}K^{-1}(\rho_+)_0\right),r\left(\frac{1}{2}K^{-1}(\rho_c)_0\right)]$ in order to absorb part of the $|F|^2$ term into the left-hand side of \eqref{eq:keyestelliptic2} to be left with
\begin{equation}
\label{eq:keyestelliptic3}
\begin{split}
g^{k} &r^{-2}(\ell(\ell+1)+2r^2\mathfrak{l}^{-2})^2 D\chi^2|{\hat{\psi}}|^2+ g^{k}D^2(\ell(\ell+1)+2r^2\mathfrak{l}^{-2}) |\partial_{r}(\chi {\hat{\psi}})|^2\\
&+\textnormal{total derivative}\\
\leq&\: CDr^{-2}\chi^2|\tilde{f}|^2+ CD (D|\chi''|^2+r^{-2}|\chi'|^2)|\hat{\phi}|^2+ C Dr^{-2}|\chi'|^2  |\partial_{\rho }\hat{\phi} |^2\\
&+C\chi^2(\kappa_+^2+\kappa_c^2)\ell^{-2}\cdot g^{k}(\ell(\ell+1)+2r^2\mathfrak{l}^{-2})^2 |\hat{\psi}|^2.
\end{split}
\end{equation}

Finally, we integrate \eqref{eq:keyestelliptic3} over $[r((\frac{1}{2}K^{-1}(\rho_+)_0)),r((\frac{1}{2}K^{-1}(\rho_c)_0))]$ and use that we can bound
\begin{align*}
Dr^{-2} |\chi'|^2+D^2|\chi''|^2\lesssim &\: r^{-4}+\kappa_+\rho_+^{-1}r^{-4}+\kappa_c\rho_c^{-1}r^{-4}+\kappa_+^2\rho_+^{-2}r^{-4}+\kappa_c^2\rho_c^{-2}r^{-4}
\end{align*}
to obtain
\begin{equation*}
\begin{split}
\int_{r(\frac{1}{2K}({\rho}_+)_0)}^{r(\frac{1}{2K}({\rho}_c)_0)}& g^k D^2(\ell(\ell+1)+2r^2\mathfrak{l}^{-2}) |\partial_r (\chi {\hat{\psi}})|^2+  r^{-2}g^k(\ell(\ell+1)+2r^2\mathfrak{l}^{-2})^2D \chi^2|{\hat{\psi}}|^2\,dr\\
\leq&\: C(\rho_c)_0^{-2}(K M^{-1}\rho_0^{-1})^{-k}\int_{\frac{1}{2K}(\rho_c)_0}^{\frac{1}{K}(\rho_c)_0}|\hat{\phi}|^2+ |\partial_{\rho }\hat{\phi} |^2 \,d\rho_c\\
&+ C(\rho_+)_0^{-2}(K M^{-1}\rho_0^{-1})^{-k}\int_{\frac{1}{2K}(\rho_+)_0}^{\frac{1}{K}(\rho_+)_0} |\hat{\phi}|^2+ |\partial_{\rho }\hat{\phi} |^2 \,d\rho_+\\
&+C \int_{r(\frac{1}{2K}(\rho_+)_0)}^{r(\frac{1}{2K}(\rho_c)_0)}g^{k}Dr^{-2}\chi^2|\tilde{f}|^2\,dr,
\end{split}
\end{equation*}
where $C$ is a constant that depends on $K$ but is independent of $\ell$ and $\kappa_+,\kappa_c$ are chosen suitably small, depending on $\rho_0$ and $K$. We conclude that \eqref{eq:ellipticestext} holds.

In order to prove \eqref{eq:ellipticestextv2}, we first observe that $K>0$ can be taken arbitrarily large (provided $\ell$ is taken suitably large according to \eqref{eq:conditionl}) so we simply make the integration interval on the left-hand side of \eqref{eq:ellipticestext} smaller and the integration interval on the right-hand side larger, and we use that $g\equiv 1$ on $[R^+_0,R^c_0]$. Finally, we can include the term $r^{-2}|\partial_{r}(Dr^2\partial_{r}{\hat{\psi}})|^2$ that appears on the left-hand side of \eqref{eq:ellipticestextv2} by applying additionally \eqref{eq:maineqelliptic}.

\end{proof}

\section{Main Gevrey estimates}
\label{sec:gevrest}
In this section, \textbf{we prove the key estimate of the paper}, namely a $L^2$-based Gevrey estimate near $\mathcal{H}^+$ and $\mathcal{C}^+$ or $\mathcal{I}^+$ under the assumption of suitably large angular momentum $\ell$.

\begin{theorem}
\label{thm:mainthmpaper}
Let ${\hat{\psi}}\in C^{\infty}([r_+,r_c])$. Fix $\sigma \in \R_{>0}$, $\ell\in \N_0$ and $(\rho_+)_0=(\rho_c)_0=\rho_0>0$. Let $\kappa_+=\kappa_c=\kappa>0$ or let $\kappa=0$ and assume additionally that ${\hat{\psi}} \in H_{\sigma,2,\rho_0}$. Let $s\in \Omega_{\sigma}\subset \C$, where
\begin{equation*}
\Omega_{\sigma}=\left\{ z\in \C\:|\: \re z<0,\: |z|<\sigma,\: 3(\im z)^2-5(\re z)^2>{\sigma}^2\right\}\cup \{ z\in \C\:|\: \re z \geq 0, |z|>0\}.
\end{equation*}
Then for all $\rho_0>0$ suitably small, there exist $\lambda_0\in \N$ suitably large such that for all $\lambda\geq \lambda_0$ and for $\kappa$ suitably small depending on $\ell$ and $\lambda$, we can find constants $C_{\ell,\lambda},C_{s,\rho_0}>0$, such that we can estimate

\begin{equation}
\label{eq:mainestpaper}
\begin{split}
\int_{R^+_0}^{R^c_0}& |{\hat{\psi}}|^2+|\partial_r{\hat{\psi}}|^2+|\partial_r^2{\hat{\psi}}|^2\,dr+\frac{1}{C_{\ell,\lambda}}\sum_{n=0}^{\infty} \frac{\sigma^{2n}}{(n+1)!^2n!^2} n^2(n+1)^2\int_{0}^{(\rho_+)_0} |\partial_{\rho_+}^n(r{\hat{\psi}})|^2+\rho^4|\partial_{\rho_+}^{n+1}(r{\hat{\psi}})|^2\, d{\rho_+}\\
&+\frac{1}{C_{\ell,\lambda}}\sum_{n=0}^{\infty} \frac{\sigma^{2n}}{(n+1)!^2n!^2} n^2(n+1)^2\int_{0}^{(\rho_c)_0} |\partial_{\rho_c}^n(r{\hat{\psi}})|^2+ \rho^4|\partial_{\rho_c}^{n+1}(r{\hat{\psi}})|^2\, d{\rho_c}\\
\leq &\: C_{ s,\rho_0}\int_{R_0^+}^{R^c_0} |{L}_{s,\ell+\lambda, \kappa} ({\hat{\psi}})|^2 \,dr +C_{ s,\rho_0} \sum_{n=0}^{\infty} \frac{\sigma^{2n}}{(n+1)!^2n!^2} \int_{0}^{(\rho_+)_0} |\partial_{\rho_+}^n(r {L}_{s,\ell+\lambda,\kappa} ({\hat{\psi}}))|^2\, d{\rho_+}\\
&+C_{s,\rho_0} \sum_{n=0}^{\infty} \frac{\sigma^{2n}}{(n+1)!^2n!^2} \int_{0}^{(\rho_c)_0} |\partial_{\rho_c}^n(r {L}_{s,\ell+\lambda,\kappa} ({\hat{\psi}}))|^2\, d{\rho_c}.
\end{split}
\end{equation}
\end{theorem}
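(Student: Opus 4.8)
The plan is to establish \eqref{eq:mainestpaper} by coupling three families of estimates: (1) the $L^2$-based Gevrey estimates near $\mathcal{H}^+$ and $\mathcal{C}^+$/$\mathcal{I}^+$ obtained from the modified higher-order equations of Lemma \ref{lm:eqmodquant}, summed over the differentiation order $n$ with the weights $\frac{\sigma^{2n}}{n!^2(n+1)!^2}$ and starting the summation at $n=\ell$; (2) a ``Carleman-type'' estimate that recovers the low-order derivatives $0\le n\le \ell-1$ from the $n\ge\ell$ estimates together with boundary terms at $\rho=\rho_0$ (equivalently, along the fixed-$r$ hypersurfaces $r=R_0^+$ and $r=R_0^c$); and (3) the degenerate elliptic estimate of Proposition \ref{prop:degellipticest} (specifically \eqref{eq:ellipticestextv2}), applied with $\ell$ replaced by $\ell+\lambda$, which controls the $H^2$-norm of $\hat\psi$ on $[R_0^+,R_0^c]$ and, crucially, the boundary terms at $R_0^+,R_0^c$, with a coupling constant $K_0^{-(\ell+\lambda)}$ that can be made as small as desired by taking $\lambda$ large. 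One first carries out all of this in the regime $\kappa=\kappa_+=\kappa_c>0$, using the (enhanced) red-shift at both horizons to absorb the top-order term $n=N_\infty$ in a finite truncation of the Gevrey sum, then takes $N_\infty\to\infty$; because every constant is uniform in $\kappa$, the estimate for $\kappa=0$ follows by the additional a priori assumption $\hat\psi\in H_{\sigma,2,\rho_0}$ (which makes the top-order truncation terms vanish directly in the limit).

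\textbf{Order of steps.} First I would write, for each of the two choices $\rho=\rho_+$ and $\rho=\rho_c$, the frequency-space modified higher-order equation \eqref{eq:homodifiedeq}/\eqref{eq:fixedfreqpsihat}: multiply by $\overline{\hat\Phi_{(n+1)}}$, take real parts, and integrate over $[0,\rho_0]$. The structure mirrors the toy computation \eqref{eq:keytoyeq}: the term $[n(n+1)-\ell(\ell+1)]\hat\Phi_{(n)}$ produces, after Young's inequality, a contribution $(n(n+1)-\ell(\ell+1))^2|\hat\Phi_{(n)}|^2$ which vanishes at $n=\ell$ (this is the conservation law) and for $n>\ell$ is absorbed into the order $n-1$ estimate provided one sums with the weights $\frac{\sigma^{2n}|s|^{0}}{n!^2(n+1)!^2}$ — here the condition $\re s>-\tfrac12|s|$, i.e. $s\in\Omega_\sigma$, is exactly what makes the $\re(s\,\hat\Phi_{(n+1)}\overline{\hat\Phi_{(n+1)}})$ term have the right sign after summation. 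The error terms $F_{n,k},\widetilde F_{n,k}$ from Lemma \ref{lm:eqmodquant} are bounded by $C_0(\kappa_++\kappa_c)$ and carry combinatorial factors $\frac{(n+1)!}{(k-1)!}M^{n-k}$, $\frac{n!(n-k+1)}{k!}M^{n-k}$; these are absorbed by the Gevrey weights after taking $\rho_0$ small and, for $\kappa>0$, by the red-shift (choosing $\kappa$ small depending on $\ell$ and $\lambda$). This yields control of $\sum_{n\ge\ell}\frac{\sigma^{2n}}{n!^2(n+1)!^2}n^2(n+1)^2\int|\hat\Phi_{(n)}|^2$ and the $\rho^4|\hat\Phi_{(n+1)}|^2$ terms, modulo boundary terms at $\rho=\rho_0$ and the lower-order terms $0\le n\le\ell-1$. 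Second, I would insert exponential weights $e^{\lambda\rho}$ (or similar monotone Carleman weights) into the $n$-summation to obtain the $0\le n\le\ell-1$ derivatives in terms of the $n\ge\ell$ sum plus the boundary data $|\hat\phi|^2(\rho_0),|\partial_\rho\hat\phi|^2(\rho_0)$ and $\rho$-derivatives of $L_{s,\ell+\lambda,\kappa}(\hat\psi)$ applied via \eqref{eq:fixedfreqpsihat} repeatedly; the cost is a constant growing exponentially in $\ell$ multiplying the $|\partial_\rho\hat\phi|^2(\rho_0)$ boundary term. Third, I would convert $\hat\Phi_{(n)}=(\partial_{\tilde\rho}-s\hat h)^n\hat\Phi$ back to $\partial_\rho^n(r\hat\psi)$ using that $\hat h=0$ near $\rho=0$ (by the choice of $h_{r_+}$ with $h_{r_+}=0$ for $r\le R_0^+$ and $h_{r_c}=0$ for $r\ge R_0^c$) and that $\hat\Phi=(1+M\tilde\rho)^{-1}r\hat\psi$ differs from $r\hat\psi$ by a smooth bounded factor, which changes the Gevrey norm only by an $s$-independent constant.

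\textbf{Coupling and the main obstacle.} The decisive step is to close the system by adding \eqref{eq:ellipticestextv2} (with $\ell\mapsto\ell+\lambda$) to the combined Gevrey/Carleman estimate. The elliptic estimate controls $\int_{R_0^+}^{R_0^c}|\hat\psi|^2+|\partial_r\hat\psi|^2+|\partial_r^2\hat\psi|^2$ (the $r^{-2}|\partial_r(Dr^2\partial_r\hat\psi)|^2$ and $(\ell+\lambda)$-weighted terms on its left-hand side dominate these) and also the boundary integrals $\int_0^{(\rho_c)_0}|\hat\phi|^2+|\partial_\rho\hat\phi|^2\,d\rho_c$ and the analogous $\rho_+$ integral — but with a prefactor $K_0^{-(\ell+\lambda)}$ on the right. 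Choosing $K_0$ large (equivalently $\lambda$ large) makes $K_0^{-(\ell+\lambda)}C_{\ell,\lambda}^{\mathrm{bdry}}<1$, where $C_{\ell,\lambda}^{\mathrm{bdry}}$ is the exponentially-in-$\ell$ growing constant from the Carleman step; the exponential decay in $\ell+\lambda$ of the elliptic coupling constant is precisely engineered to beat the exponential growth in $\ell$ from the Gevrey boundary term, which is the point of the remark following Proposition \ref{prop:degellipticest}. The main obstacle — and the place requiring the most care — is bookkeeping the $\ell$-dependence through all three estimates simultaneously: one must verify that the condition \eqref{eq:condlelliptic}/\eqref{eq:conditionl} on $\ell$ versus $|s|$ is compatible with $s\in\Omega_\sigma$ (since $|s|<\sigma$ there, this is a condition $\ell(\ell+1)\gg\sigma^2(1+M^2\sigma^2)K_0^2\rho_0^{-2}$, satisfied once $\lambda_0$ is large), that the red-shift number $N_\kappa\to\infty$ as $\kappa\downarrow0$ is chosen after $\ell,\lambda$ so the truncation $N_\infty\ge N_\kappa$ can always be taken, and that all error/coupling constants $C_{\ell,\lambda}$ on the left and $C_{s,\rho_0}$ on the right can indeed be separated as in the statement — the left-hand constant absorbs all $\ell$- and $\lambda$-dependence while the right-hand constant depends only on $s$ and $\rho_0$. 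Once the absorption inequality is set up, taking $N_\infty\to\infty$ (legitimate for $\kappa>0$ by the red-shift, and for $\kappa=0$ by the a priori $H_{\sigma,2,\rho_0}$ hypothesis) and then noting all constants are uniform in $\kappa$ yields \eqref{eq:mainestpaper}. $\qed$
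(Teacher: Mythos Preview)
Your overall architecture matches the paper's six-step outline closely: Gevrey estimates for $n\ge\ell$ exploiting the vanishing of $[n(n+1)-\ell(\ell+1)]$ at $n=\ell$, Carleman-type recovery of $0\le n\le\ell-1$, coupling with the degenerate elliptic estimate \eqref{eq:ellipticestextv2} at angular momentum $\ell+\lambda$, and the red-shift/a priori argument for the top-order term. The identification of the exponential-in-$\ell$ competition between the Carleman boundary constant and the $K_0^{-(\ell+\lambda)}$ elliptic coupling is exactly right.

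There is, however, a genuine gap in your mechanism for obtaining the sector $\Omega_\sigma$. You propose to multiply \eqref{eq:fixedfreqpsihat} by $\overline{\hat\Phi_{(n+1)}}$ and take real parts, claiming that ``the condition $\re s>-\tfrac12|s|$ is exactly what makes the $\re(s\,\hat\Phi_{(n+1)}\overline{\hat\Phi_{(n+1)}})$ term have the right sign.'' But $\hat\Phi_{(n+1)}\overline{\hat\Phi_{(n+1)}}=|\hat\Phi_{(n+1)}|^2$ is real, so this term equals $(\re s)|\hat\Phi_{(n+1)}|^2$, which is non-negative only for $\re s\ge 0$ --- the linear multiplier cannot reach into the left half-plane at all. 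The paper does \emph{not} use a linear multiplier here: in Proposition \ref{eq:maingevprop} it rearranges \eqref{eq:fixedfreqpsihat} into the form \eqref{eq:fixedfreqpsihat2}, takes the \emph{square norm} of both sides, and multiplies by an exponential weight. This produces the good term $|2s|^2|\hat\Phi_{(n+1)}|^2$ (positive regardless of $\arg s$) together with several cross-terms $J_1,\dots,J_4$ that are estimated via Young's inequality with auxiliary parameters $\alpha,\beta,\mu,\nu$. The sector $\{|\arg z|<\tfrac{2}{3}\pi\}$ and the precise shape of $\Omega_\sigma$ then emerge only after the optimization of Lemma \ref{lm:allowedvaluess}, which solves the compatibility system \eqref{paramconst1}--\eqref{paramconst3} for these parameters (yielding $\beta=2$, $\mu=\tfrac12$, $\alpha=2$ at the optimum). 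Without this square-norm-plus-optimization step your scheme would stall at $\re s\ge 0$; you should replace the linear multiplier by the squared-equation argument and track the parameters through to Lemma \ref{lm:allowedvaluess}.

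A smaller point: the summation weights are $\frac{|2s|^{2n}\widetilde\sigma^{2n}}{(n+1)!^2 n!^2}$ with $\sigma=2\widetilde\sigma|s|$, not $\frac{\sigma^{2n}}{n!^2(n+1)!^2}$ with an $|s|^0$ factor; the explicit $|s|$-dependence in the weights is what couples the Gevrey radius to the frequency and is needed for the absorption arguments in Proposition \ref{prop:gevreyestsum}.
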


For the convenience of the reader, we provide an outline of the main steps involved in the proof of  Theorem \ref{thm:mainthmpaper}.\\
\\
\emph{Outline of the proof of Theorem \ref{thm:mainthmpaper}:}
\begin{itemize}
\item[\textbf{Step 1}:] We restrict to a single region of the form $\{r\leq R_0^+ \}$ or $\{r\geq R_0^c\}$, with $\rho_0:=(\rho_+)_0=(\rho_c)_0$, where $0<\rho_0<\frac{r_c-r_+}{r_cr_+}$ will be chosen suitably small, and we derive estimates for the modified quantity $\hat{\Phi}_{(n)}$ in terms of $f_n$, which is defined in \eqref{def:fn} and appears on the left-hand side of \eqref{eq:fixedfreqpsihat}. We start by proving degenerate $H^2$-type estimates for all $n\geq \ell$ (\textbf{see Proposition \ref{eq:maingevprop}}). These can be viewed as the main (fixed-frequency) vector field multiplier estimates in the near-horizon regions.
\item[\textbf{Step 2}:] We consider the estimates from Proposition \ref{eq:maingevprop} and sum over $n$, starting from $n=\ell$ with appropriate $n$-dependent weights that allow us to absorb terms into (lower) $n-k$-order or (higher) $n+k$-order estimates with $k>0$ and we use the non-degeneracy of the wave operator \emph{away from the horizons} to express the sum over the boundary terms at $\rho=\rho_0$ in terms of $\hat{\Phi}(\rho_0)$ and $\hat{\Phi}_{(1)}(\rho_0)$ (\textbf{see Lemma \ref{lm:analytawayhor} and Proposition \ref{prop:gevreyestsum}}).
\item[\textbf{Step 3}:] We increase the summation from $n\geq \ell$ to $n\geq 0$ (\textbf{see Proposition \ref{prop:maingevreysum}}).
\item[\textbf{Step 4}:] We show that for $s\in \Omega_{\sigma}$ all but the top order terms on the left-hand side of Proposition \ref{prop:maingevreysum} are non-negative definite (\textbf{see Lemma \ref{lm:allowedvaluess}}) and show that we can moreover absorb the top order terms if we assume $\kappa>0$, making use of the \emph{enhanced red-shift effect} (\textbf{see Corollary \ref{cor:maingevreysumbound}}).
\item[\textbf{Step 5}:] We couple the estimate from Corollary \ref{cor:maingevreysumbound} to the estimates in Proposition \ref{prop:degellipticest} in order to get rid of the $\rho=\rho_0$ boundary terms (\textbf{see Proposition \ref{prop:maingevreysumv2}}).
\item[\textbf{Step 6}:] We finish the proof of Theorem \ref{thm:mainthmpaper} by converting the estimates for the modified higher-order quantities $\hat{\Phi}_{(n)}$ from Proposition \ref{prop:maingevreysumv2} into estimates for $\partial_{\rho}^n \hat{\phi}$  (\textbf{see Corollary \ref{cor:maingevreysum3}}).
\end{itemize}
\pagebreak
\begin{proposition}
\label{eq:maingevprop}
Let $s\neq 0$. For all $\epsilon>0$, there exist $\ell\in \N_0$ and $\rho_0$, $\gamma$ satisfying
\begin{equation*}
M^{-1}\tilde{\rho}_0^{-1}{\gg}\:\max\{1,\gamma\}\quad \textnormal{and}\quad \gamma{\gg}\:(M^{-1}|s|^{-1}+M |s|),
\end{equation*}
such that the following estimates hold when $\kappa_c+\kappa_c\lesssim (\ell+1)^{-1}$. Let $\ell\leq n\leq 2\ell$, then for all $\alpha,\beta,\nu>0$, $\mu\in [0,1]$ and $\hat{\Phi}\in C^{\infty}([0,\tilde{\rho}_{0}])$
\begin{equation*} 
\begin{split}
\int_{0}^{\tilde{\rho}_0}& (1-\mu \nu^{-1}(1+M\tilde{\rho})^4-\epsilon)(2\kappa\tilde{\rho}+\tilde{\rho}^2)^2e^{2\gamma M \sqrt{\ell(\ell+1)} \tilde{\rho} }|\hat{\Phi}_{(n+2)}|^2\, d\tilde{\rho}\\
&+\int_{0}^{\tilde{\rho}_0} \left(1-4 \beta^{-1}\left(\frac{\max\{-\re(s),0\}}{|s|}\right)^2-\nu \mu-\epsilon \right)|2s|^2e^{2\gamma M \sqrt{\ell(\ell+1)} \tilde{\rho} }|\hat{\Phi}_{(n+1)}|^2\, d\tilde{\rho}\\
&+\int_{0}^{\tilde{\rho}_0}  (4-\beta-\alpha(1-\mu)-\epsilon)(n+1)^2(\tilde{\rho}+\kappa)^2e^{2\gamma M \sqrt{\ell(\ell+1)} \tilde{\rho} }|\hat{\Phi}_{(n+1)}|^2 \, d\tilde{\rho}\\
&- \int_{0}^{\tilde{\rho}_0} \frac{|2s|^2\widetilde{\sigma}^2}{(n+1)^2(n+2)^2}\cdot \alpha^{-1}\widetilde{\sigma}^{-2}(1-\mu)(n+2)^2(1+M\tilde{\rho})^4(\tilde{\rho}^2+2\kappa\tilde{\rho})e^{2\gamma M \sqrt{\ell(\ell+1)} \tilde{\rho} }|\hat{\Phi}_{(n+1+1)}|^2 \, d\tilde{\rho}\\
&-\int_{0}^{\tilde{\rho}_0}\frac{[n(n+1)-\ell(\ell+1)]^2}{|2s|^2\widetilde{\sigma}^2} (1+\epsilon)\widetilde{\sigma}^2 |2s|^2e^{2\gamma M \sqrt{\ell(\ell+1)} \tilde{\rho}}|\hat{\Phi}_{(n-1+1)}|^2 \, d\tilde{\rho}\\
\leq &\: C_{\epsilon} (\ell+1)^{-3}  \sum_{k=1}^{n}\gamma^{\frac{1}{2} (n-k)} \frac{(n+1)!^2}{(k-1)!^2} e^{2\gamma M \sqrt{\ell(\ell+1)} \tilde{\rho}_0}|\hat{\Phi}_{(k)}|^2(\tilde{\rho}_0)\\
&+C_{\epsilon} (\ell+1)^{-1}\sum_{k=0}^{n}\gamma^{\frac{1}{2} (n-k)}   \frac{n!^2}{k!^2} e^{2\gamma M \sqrt{\ell(\ell+1)} \tilde{\rho}_0}|\hat{\Phi}_{(k)}|^2(\tilde{\rho}_0)+C_{\epsilon}\int_{0}^{\tilde{\rho}_0}e^{2\gamma M \sqrt{\ell(\ell+1)} \tilde{\rho}}|f_n|^2\, d\tilde{\rho},
\end{split}
\end{equation*}
with $C_{\epsilon}>0$ a constant that is independent of $\ell$.

Let $n>2\ell$, then
\begin{equation*}
\begin{split}
\int_{0}^{\tilde{\rho}_0}& (1-\mu \nu^{-1}(1+M\tilde{\rho})^4-\epsilon)(2\kappa\tilde{\rho}+\tilde{\rho}^2)^2e^{2\gamma M \sqrt{\ell(\ell+1)} \tilde{\rho} }|\hat{\Phi}_{(n+2)}|^2\, d\tilde{\rho}\\
&+\int_{0}^{\tilde{\rho}_0} \left(1-4 \beta^{-1}\left(\frac{\max\{-\re(s),0\}}{|s|}\right)^2-\nu \mu-\epsilon \right)|2s|^2e^{2\gamma M \sqrt{\ell(\ell+1)} \tilde{\rho} }|\hat{\Phi}_{(n+1)}|^2\, d\tilde{\rho}\\
&+\int_{0}^{\tilde{\rho}_0}  (4-\beta-\alpha(1-\mu)-\epsilon)(n+1)^2(\tilde{\rho}+\kappa)^2e^{2\gamma M \sqrt{\ell(\ell+1)} \tilde{\rho} }|\hat{\Phi}_{(n+1)}|^2 \, d\tilde{\rho}\\
&- \int_{0}^{\tilde{\rho}_0} \frac{|2s|^2\widetilde{\sigma}^2}{(n+1)^2(n+2)^2}\cdot \alpha^{-1}\widetilde{\sigma}^{-2}(1-\mu)(n+2)^2(1+M\tilde{\rho})^4(\tilde{\rho}^3+2\kappa\tilde{\rho}^2)e^{2\gamma M \sqrt{\ell(\ell+1)} \tilde{\rho} }|\hat{\Phi}_{(n+1+1)}|^2 \, d\tilde{\rho}\\
&-\int_{0}^{\tilde{\rho}_0}\frac{[n(n+1)-\ell(\ell+1)]^2}{|2s|^2\widetilde{\sigma}^2} (1+\epsilon)\widetilde{\sigma}^2 |2s|^2e^{2\gamma M \sqrt{\ell(\ell+1)} \tilde{\rho}}|\hat{\Phi}_{(n-1+1)}|^2 \, d\tilde{\rho}\\
&-C\int_{0}^{\tilde{\rho}_0}\widetilde{\sigma}^4 n^{-4} \frac{n^2(n+1)^2\cdot (n-1)^2 n^2}{|2s|^4\widetilde{\sigma}^4}  |2s|^2|e^{2\gamma M \sqrt{\ell(\ell+1)} \tilde{\rho}}\hat{\Phi}_{(n-2+1)}|^2\, d\tilde{\rho}\\
&-C (\ell+1)^{-2}\sum_{k=2\ell+1}^{n-1}\int_{0}^{\tilde{\rho}_0}\widetilde{\sigma}^{2(n-k+1)}|2s|^{2(n-k)}\frac{(n-2\ell) k!^2}{n!^2}\left[\frac{(n+1)!^2 n!^2}{k!^2(k-1)!^2}\frac{1}{|2s|^{2(n-k+1)} \widetilde{\sigma}^{2(n-k+1)}}\right]\\
& \cdot |2s|^2 e^{2\gamma M \sqrt{\ell(\ell+1)} \tilde{\rho}}|\hat{\Phi}_{(k-1+1)}|^2\, d\tilde{\rho}\\
&-C \gamma^{-2}(\ell+1)^{-2} \int_0^{\tilde{\rho}_0} \frac{(n+1)!^2}{(2\ell)!^2}e^{2\gamma M  \sqrt{\ell(\ell+1)} \tilde{\rho}}|2s|^2|\hat{\Phi}_{(2\ell+1)}|^2 \,d \tilde{\rho}\\
\leq &\: C_{\epsilon} (\ell+1)^{-3}  \sum_{k=1}^{2\ell}\gamma^{\frac{1}{2} (2\ell-k)} \frac{(n+1)!^2}{(k-1)!^2} e^{2\gamma M \sqrt{\ell(\ell+1)} \tilde{\rho}_0}|\hat{\Phi}_{(k)}|^2(\tilde{\rho}_0)\\
&+C_{\epsilon}\gamma^{-1} (\ell+1)^{-3} (n+2)^2 \sum_{k=0}^{2\ell}\gamma^{\frac{1}{2} (2\ell-k)}  \frac{n!^2}{k!^2} e^{2\gamma M \sqrt{\ell(\ell+1)} \tilde{\rho}_0}|\hat{\Phi}_{(k)}|^2(\tilde{\rho}_0)+C_{\epsilon}\int_{0}^{\tilde{\rho}_0}e^{2\gamma M \sqrt{\ell(\ell+1)} \tilde{\rho}} |f_n|^2\, d\tilde{\rho},
\end{split}
\end{equation*}
with $C,C_{\epsilon}>0$ constants that are independent of $\ell$.
\end{proposition}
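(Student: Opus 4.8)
\textbf{Proof strategy for Proposition \ref{eq:maingevprop}.}

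The plan is to derive the estimate from the $n$-th order modified equation \eqref{eq:homodifiedeq}, or equivalently \eqref{eq:fixedfreqpsihat}, which expresses $f_n$ as a linear combination of $\hat{\Phi}_{(k)}$ for $k$ ranging from $n-2$ up to $n+2$ together with the Laplacian term $[n(n+1)-\ell(\ell+1)]\hat{\Phi}_{(n)}$ and the lower-order tails $\sum_{k=1}^n \tfrac{(n+1)!}{(k-1)!}M^{n-k}F_{n,k}\hat{\Phi}_{(k)}$ and $\sum_{k=0}^n \tfrac{n!(n-k+1)}{k!}M^{n-k}\widetilde F_{n,k}\hat{\Phi}_{(k)}$, whose coefficients $F_{n,k},\widetilde F_{n,k}$ are bounded by $C_0(\kappa_++\kappa_c)$ by \eqref{eq:esttildeF}. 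The main idea is a \emph{weighted vector-field multiplier} argument at fixed frequency: multiply \eqref{eq:fixedfreqpsihat} by the conjugate of an appropriate multiple of $\hat{\Phi}_{(n+1)}$ (this is the transversal/outgoing derivative relative to the red-shift vector field, which is why $\hat{\Phi}_{(n+1)}$ rather than $\hat{\Phi}_{(n)}$ is the natural multiplier near a Killing horizon), weight by the exponential $e^{2\gamma M\sqrt{\ell(\ell+1)}\tilde\rho}$, take real parts, and integrate over $[0,\tilde\rho_0]$. The exponential weight with rate proportional to $\gamma M \sqrt{\ell(\ell+1)}$ is a Carleman-type weight: it is precisely this weight that converts the first-order term $2(n+1)(\tilde\rho+\kappa)\hat{\Phi}_{(n+1)}$ into a coercive bulk term at the cost of producing the boundary terms at $\tilde\rho_0$ on the right-hand side, and the condition $\gamma \gg M^{-1}|s|^{-1}+M|s|$ together with $M^{-1}\tilde\rho_0^{-1}\gg\max\{1,\gamma\}$ ensures the weight is harmless in the interior and the linear coefficients dominate.

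The key steps, in order: (1) Integrate by parts the second-order term $[2\kappa\tilde\rho+\tilde\rho^2(1+F_{n,n+2})]\hat{\Phi}_{(n+2)}\overline{\hat{\Phi}_{(n+1)}}$; this produces $\tfrac12\partial_{\tilde\rho}(|{\hat{\Phi}_{(n+1)}}|^2)$ times the coefficient, which upon a further integration by parts yields a good bulk term $\sim (\tilde\rho+\kappa)|\hat{\Phi}_{(n+1)}|^2$ (combining with the explicit first-order term), a boundary term at $\tilde\rho_0$, and a contribution at $\tilde\rho=0$ that vanishes because the coefficient $2\kappa\tilde\rho+\tilde\rho^2$ vanishes there. (2) Move $[2\kappa\tilde\rho+\tilde\rho^2(1+F_{n,n+2})]^{1/2}\hat{\Phi}_{(n+2)}$ and $2s\hat{\Phi}_{(n+1)}$ and $2(n+1)(\tilde\rho+\kappa)\hat{\Phi}_{(n+1)}$ to the left side as the coercive quantities, and Cauchy--Schwarz the cross terms: the term $2s\overline{\hat{\Phi}_{(n+1)}}$ paired with the $\kappa\tilde\rho$-part of $\hat{\Phi}_{(n+2)}$ costs the factor $\mu\nu^{-1}(1+M\tilde\rho)^4$ and $\nu\mu$ appearing in the first two lines; the Laplacian term $[n(n+1)-\ell(\ell+1)]\hat{\Phi}_{(n)}=[n(n+1)-\ell(\ell+1)]\hat{\Phi}_{(n-1+1)}$ is paired against $2s\hat{\Phi}_{(n+1)}$ producing the negative term on line five with the sharp constant $\tfrac{[n(n+1)-\ell(\ell+1)]^2}{|2s|^2\tilde\sigma^2}$, and the $\tilde\rho^2$ part of $\hat{\Phi}_{(n+2)}$ paired against $(n+1)(\tilde\rho+\kappa)\hat{\Phi}_{(n+1)}$ gives the $\alpha$-weighted negative term on line four (note $\hat{\Phi}_{(n+1+1)}=\hat{\Phi}_{(n+2)}$). (3) Handle the lower-order tails: the sums $\sum_k \tfrac{(n+1)!}{(k-1)!}M^{n-k}F_{n,k}\hat{\Phi}_{(k)}$ split into the ``close'' part $n-2\le k\le n$ which under $\kappa_++\kappa_c\lesssim (\ell+1)^{-1}$ contributes to the $\epsilon$-errors (since $F_{n,k}$ carries the smallness $\kappa_++\kappa_c$ and one gains powers of $(\ell+1)^{-1}$), and the ``far'' part $k\le n-3$ which one estimates by Cauchy--Schwarz into the bulk $|2s|^2|\hat{\Phi}_{(k-1+1)}|^2$ terms lower in the $n$-sum, picking up the combinatorial factors $\gamma^{(n-k)/2}\tfrac{(n+1)!^2}{(k-1)!^2}$ etc., with the $\gamma^{-1/2}$ per step coming from Cauchy--Schwarz against the exponential weight evaluated at $\tilde\rho_0$ in the regime $\gamma \gg 1$. (4) The case split $n\le 2\ell$ vs.\ $n>2\ell$: when $n\le 2\ell$ the factorial quotient $\tfrac{(n+1)!^2}{(k-1)!^2}$ and $\tfrac{n!^2}{k!^2}$ with $k\ge 1$ (resp.\ $k\ge 0$) can be summed directly into the boundary terms at $\tilde\rho_0$ since the number of terms is $O(\ell)$ and the geometric factor $\gamma^{(n-k)/2}$ dominates; when $n>2\ell$ one must be more careful for $k$ between $2\ell+1$ and $n-1$, absorbing them instead into the bulk terms of the neighbouring orders (hence the extra bulk-absorption lines involving $\widetilde\sigma^{2(n-k+1)}|2s|^{2(n-k)}$ and the separated $k=2\ell-1+1$ term with the $\gamma^{-2}(\ell+1)^{-2}\tfrac{(n+1)!^2}{(2\ell)!^2}$ weight), leaving boundary terms at $\tilde\rho_0$ only up to order $2\ell$.

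The main obstacle, and the technically delicate part, is the bookkeeping in step (4) for $n>2\ell$: one must choose the $n$-dependent weights in the eventual summation over $n$ (carried out later in Proposition \ref{prop:gevreyestsum}) so that the ``far'' lower-order terms $\hat{\Phi}_{(k)}$ with $2\ell < k < n$ can be absorbed \emph{into the bulk coercive terms of the order-$k$ estimate} rather than into boundary data, which forces the precise form of the combinatorial prefactors $\tfrac{(n-2\ell)k!^2}{n!^2}\cdot\tfrac{(n+1)!^2n!^2}{k!^2(k-1)!^2}\cdot\tfrac1{|2s|^{2(n-k+1)}\widetilde\sigma^{2(n-k+1)}}$ appearing on line seven, and the need to peel off separately the single term at order $2\ell$ with a $\gamma^{-2}$ gain. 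Keeping the constants $C,C_\epsilon$ genuinely independent of $\ell$ throughout — in particular tracking that every gain of a factorial in the numerator is matched either by a factorial in a denominator or by a power of the small parameter $\gamma^{-1}$ — is what makes the estimate usable for the subsequent summation; this is the heart of the argument and the place where the structure of the conserved quantity (the vanishing of $n(n+1)-\ell(\ell+1)$ at $n=\ell$, which makes the otherwise-fatal line-five term vanish exactly at the starting index of the summation) is exploited.
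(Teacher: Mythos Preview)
Your proposal misidentifies the core mechanism. You propose to multiply \eqref{eq:fixedfreqpsihat} by $\overline{\hat{\Phi}_{(n+1)}}$ and take real parts --- the multiplier argument used, for instance, in the physical-space estimate of Proposition~\ref{prop:physspacegevrey}. But the terms appearing in the statement of Proposition~\ref{eq:maingevprop} cannot be produced this way: the coefficients $(2\kappa\tilde\rho+\tilde\rho^2)^2$, $|2s|^2$, and $4(n+1)^2(\tilde\rho+\kappa)^2$ are all \emph{squares} of the corresponding coefficients in \eqref{eq:fixedfreqpsihat}. The multiplier method would give you $2\re(s)(1+M\tilde\rho)^2|\hat\Phi_{(n+1)}|^2$ in place of the $|2s|^2$ term, which has the wrong sign once $\re(s)<0$, and linear rather than quadratic coefficient factors elsewhere.

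The paper instead \emph{squares} the equation: rearrange \eqref{eq:fixedfreqpsihat} so that the three principal terms $a=2(1+M\tilde\rho)^2s\hat\Phi_{(n+1)}$, $b=(2\kappa\tilde\rho+\tilde\rho^2)\hat\Phi_{(n+2)}$, $c=2(n+1)(\tilde\rho+\kappa)\hat\Phi_{(n+1)}$ together with the $F_{n,k}$-tails sit on one side and everything else (collected into a remainder $G_n$) on the other; then take $|\cdot|^2$ of both sides and multiply by the exponential weight. The diagonal of $|a+b+c|^2$ gives exactly the three coercive bulk integrals; the cross terms $2\re(a\bar c)$, $2\re(b\bar c)$, $2\re(a\bar b)$ are estimated by Young's inequality with the parameters $\beta$, then integration by parts, then $(\alpha,\mu,\nu)$ respectively. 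Your step~(2) reads as if you had this picture in mind, but it is inconsistent with your step~(1) and your opening sentence.

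Your description of the exponential weight is also off. It is not a Carleman weight rendering the first-order term coercive --- the coercive $(n+1)^2(\tilde\rho+\kappa)^2$ contribution comes directly from $|c|^2$, and the weight's contribution after the $J_2$ integration by parts is actually a \emph{bad} term, absorbed using $\gamma M\tilde\rho_0\ll1$. The weight's genuine role is through Lemma~\ref{lm:expweightest}: each application trades $\int e^{2\gamma M\sqrt{\ell(\ell+1)}\tilde\rho}|\hat\Phi_{(k)}|^2$ for $\gamma^{-2}M^{-2}(\ell(\ell+1))^{-1}\int e^{2\gamma M\sqrt{\ell(\ell+1)}\tilde\rho}|\hat\Phi_{(k+1)}|^2$ plus a boundary term at $\tilde\rho_0$; iterating $n_\ell-k+1$ times is what converts the low-$k$ tail terms (the $J_4$ contribution) into the boundary sums on the right-hand side, and the $\gamma^{(n_\ell-k)/2}$ factors arise from the Young-inequality weights chosen so that the iterated $\gamma^{-2}$ gains dominate. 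Your broad picture of the $n\le2\ell$ versus $n>2\ell$ split, of the smallness of $F_{n,k},\widetilde F_{n,k}$ under $\kappa_++\kappa_c\lesssim(\ell+1)^{-1}$, and of the role of the conservation law at $n=\ell$ is correct.
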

\begin{remark}
Given an arbitrary choice of $\alpha,\beta,\mu,\nu$, the first three integrals on the left-hand sides of both estimates in Proposition \ref{eq:maingevprop} need not be non-negative definite. We will later fix these parameters (Lemma \ref{lm:allowedvaluess}), so that under suitable restrictions on $s\in \C$, we obtain non-negative definiteness.
\end{remark}

We will make frequent use of the lemma below that shows that we can absorb \emph{lower order} derivatives terms with respect to $\tilde{\rho}$ into \emph{higher order derivatives} by introducing suitable exponential weights in $\tilde{\rho}$ at the expense of introducing additional boundary terms on the right-hand side of the relevant estimates.
\begin{lemma}
\label{lm:expweightest}
Let $M,\gamma>0$ and consider a function $h\in C^1( [0,\tilde{\rho}_0])$.
\begin{equation}
\label{eq:expweightest}
\begin{split}
\int_{0}^{\tilde{\rho}_0}&e^{2\gamma M  \sqrt{\ell(\ell+1)} \tilde{\rho}} |h|^2 \,d\tilde{\rho}+\frac{1}{\gamma M\sqrt{\ell(\ell+1)} }|h|^2(0)\\
\leq &\: \frac{1}{M^2\gamma^2  \ell(\ell+1)   }\int_{0}^{\tilde{\rho}_0} e^{2\gamma M \sqrt{\ell(\ell+1)} \tilde{\rho}}|\partial_{\tilde{\rho}}h|^2\,d\tilde{\rho}+\frac{1}{ \gamma M \sqrt{\ell(\ell+1)} }e^{2\gamma M \sqrt{\ell(\ell+1)} \tilde{\rho}_0}|h|^2(\tilde{\rho}_0).
\end{split}
\end{equation}
\end{lemma}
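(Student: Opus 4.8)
The inequality \eqref{eq:expweightest} is a weighted Hardy-type / integration-by-parts estimate, and the plan is to obtain it directly from the fundamental theorem of calculus applied to $e^{2\gamma M \sqrt{\ell(\ell+1)}\tilde{\rho}}|h|^2$. First I would write, for $a := \gamma M \sqrt{\ell(\ell+1)}>0$,
\begin{equation*}
\frac{d}{d\tilde{\rho}}\left(e^{2a\tilde{\rho}}|h|^2\right) = 2a\,e^{2a\tilde{\rho}}|h|^2 + 2 e^{2a\tilde{\rho}}\,\re(h\,\overline{\partial_{\tilde{\rho}}h}),
\end{equation*}
and integrate from $0$ to $\tilde{\rho}_0$. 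Rearranging gives
\begin{equation*}
2a\int_0^{\tilde{\rho}_0} e^{2a\tilde{\rho}}|h|^2\,d\tilde{\rho} + |h|^2(0) = e^{2a\tilde{\rho}_0}|h|^2(\tilde{\rho}_0) - 2\int_0^{\tilde{\rho}_0} e^{2a\tilde{\rho}}\re(h\,\overline{\partial_{\tilde{\rho}}h})\,d\tilde{\rho}.
\end{equation*}

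Next I would estimate the cross term by Young's inequality in the form $2|h||\partial_{\tilde{\rho}}h| \le a|h|^2 + a^{-1}|\partial_{\tilde{\rho}}h|^2$, so that
\begin{equation*}
\left|2\int_0^{\tilde{\rho}_0} e^{2a\tilde{\rho}}\re(h\,\overline{\partial_{\tilde{\rho}}h})\,d\tilde{\rho}\right| \le a\int_0^{\tilde{\rho}_0} e^{2a\tilde{\rho}}|h|^2\,d\tilde{\rho} + \frac{1}{a}\int_0^{\tilde{\rho}_0} e^{2a\tilde{\rho}}|\partial_{\tilde{\rho}}h|^2\,d\tilde{\rho}.
\end{equation*}
Absorbing the first term on the right into the left-hand side leaves
\begin{equation*}
a\int_0^{\tilde{\rho}_0} e^{2a\tilde{\rho}}|h|^2\,d\tilde{\rho} + |h|^2(0) \le e^{2a\tilde{\rho}_0}|h|^2(\tilde{\rho}_0) + \frac{1}{a}\int_0^{\tilde{\rho}_0} e^{2a\tilde{\rho}}|\partial_{\tilde{\rho}}h|^2\,d\tilde{\rho}.
\end{equation*}
Dividing through by $a = \gamma M\sqrt{\ell(\ell+1)}$ and noting $1/a^2 = 1/(M^2\gamma^2\ell(\ell+1))$ yields exactly \eqref{eq:expweightest}.

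There is essentially no obstacle here; the only points requiring a word of care are that $h$ is complex-valued (so one works with $\re(h\overline{\partial_{\tilde\rho}h})$ and uses $|\re(h\overline{\partial_{\tilde\rho}h})|\le|h||\partial_{\tilde\rho}h|$), and that $h\in C^1([0,\tilde{\rho}_0])$ guarantees the boundary terms at $\tilde\rho=0$ and $\tilde\rho=\tilde\rho_0$ and the integrand $e^{2a\tilde\rho}|h|^2$ are all well-defined and the fundamental theorem of calculus applies. The positivity $a>0$, which requires $\gamma, M>0$ and $\ell\ge 1$ (or more precisely $\ell(\ell+1)>0$), is what makes the absorption step legitimate; I would simply record this hypothesis. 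The general-$h$ statement then also covers $h=\hat{\Phi}_{(k)}$ and its $\tilde\rho$-derivatives, which is how the lemma will be invoked in the summed Gevrey estimates.
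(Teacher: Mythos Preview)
Your proof is correct and essentially the same as the paper's. The paper writes the identity by expanding $|\partial_{\tilde{\rho}}(e^{a\tilde{\rho}}h)|^2$ (with $a=\gamma M\sqrt{\ell(\ell+1)}$) and dropping this nonnegative square, which is algebraically equivalent to your use of Young's inequality with weight $a$ on the cross term; the resulting inequality and the division by $a$ are identical.
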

\begin{proof}
We can expand
\begin{equation*}
\partial_{\tilde{\rho}}(e^{\gamma M \sqrt{\ell(\ell+1)} \tilde{\rho}} h)=e^{\gamma M \sqrt{\ell(\ell+1)} \tilde{\rho}}\partial_{\tilde{\rho}}h+ \gamma M \sqrt{\ell(\ell+1)} e^{\gamma M \sqrt{\ell(\ell+1)} \tilde{\rho}} h.
\end{equation*}
Hence,
\begin{equation*}
\begin{split}
e^{2\gamma M  \sqrt{\ell(\ell+1)} \tilde{\rho}}|\partial_{\tilde{\rho}}h|^2=&\:|\partial_{\tilde{\rho}}(e^{\gamma M \sqrt{\ell(\ell+1)} \tilde{\rho}} h)|^2+ \gamma^2M^2 \ell(\ell+1) e^{2\gamma M \sqrt{\ell(\ell+1)} \tilde{\rho}} |h|^2\\
&-\partial_{\tilde{\rho}}(\gamma M \sqrt{\ell(\ell+1)}e^{2\gamma M \sqrt{\ell(\ell+1)} \tilde{\rho}}|h|^2).
\end{split}
\end{equation*}
and after integrating, we obtain
\begin{equation*}
\begin{split}
\int_{0}^{\tilde{\rho}_0}& \gamma^2 M^2 \ell(\ell+1) e^{2\gamma M  \sqrt{\ell(\ell+1)} \tilde{\rho}} |h|^2 \,d\tilde{\rho}+\gamma M  \sqrt{\ell(\ell+1)} |h|^2(0)\\\
\leq &\: \int_{0}^{\tilde{\rho}_0} e^{2\gamma M \sqrt{\ell(\ell+1)} \tilde{\rho}}|\partial_{\tilde{\rho}}h|^2\,d\tilde{\rho}+\gamma M \sqrt{\ell(\ell+1)}e^{2\gamma M \sqrt{\ell(\ell+1)} \tilde{\rho}_0}|h|^2(\tilde{\rho}_0). \qedhere
\end{split}
\end{equation*}
\end{proof}

\begin{proof}[Proof of Proposition \ref{eq:maingevprop}]
We first of all rearrange \eqref{eq:fixedfreqpsihat} as follows:
\begin{equation}
\label{eq:fixedfreqpsihat2} 
\begin{split}
2(1+M\tilde{\rho})^2& s \hat{\Phi}_{(n+1)}+[2\kappa\tilde{\rho}+\tilde{\rho}^2] (\partial_{\tilde{\rho}}-s\hat{h})\hat{\Phi}_{(n+1)}+2(n+1)(\tilde{\rho}+\kappa) \hat{\Phi}_{(n+1)}\\
&+\sum_{k=1}^{n_{\ell}}  \frac{(n+1)!}{(k-1)!} M^{n-k} F_{n,k} \hat{\Phi}_{(k)}+\sum_{k=0}^{n_{\ell}} \frac{n! (n-k+1)}{k!}M^{n-k}\widetilde{F}_{n,k}\hat{\Phi}_{(k)}=G_n
\end{split}
\end{equation}
with $n_{\ell}=\min\{n,2\ell\}$ and
\begin{equation*}
\begin{split}
G_n:=&\:f_n- [n(n+1)-\ell(\ell+1)] \hat{\Phi}_{(n)}-(4n+2)M(1+M\tilde{\rho}) s \hat{\Phi}_{(n)}-2n^2M^2 s \hat{\Phi}_{(n-1)}\\
&-\tilde{\rho}^2F_{n,n+2}\hat{\Phi}_{(n+2)}-2(n+1)\tilde{\rho}F_{n,n+1} \hat{\Phi}_{(n+1)}
\end{split}
\end{equation*}
if $n\leq 2\ell$ and
\begin{equation*}
\begin{split}
G_n:=&\:f_n- [n(n+1)-\ell(\ell+1)] \hat{\Phi}_{(n)}-(4n+2)M(1+M\tilde{\rho}) s \hat{\Phi}_{(n)}-2n^2M^2 s \hat{\Phi}_{(n-1)}\\
&-\tilde{\rho}^2F_{n,n+2}\hat{\Phi}_{(n+2)}-2(n+1)\tilde{\rho}F_{n,n+1} \hat{\Phi}_{(n+1)}+\sum_{k=2\ell+1}^{n}  \frac{(n+1)!}{(k-1)!} M^{n-k} F_{n,k} \hat{\Phi}_{(k)}\\
&+\sum_{k=2\ell+1}^n \frac{n! (n-k+1)}{k!}M^{n-k}\widetilde{F}_{n,k}\hat{\Phi}_{(k)}
\end{split}
\end{equation*}
if $n>2\ell$. 

We take the square norm of both sides of the equation \eqref{eq:fixedfreqpsihat2} and then multipy by an exponential weight function to obtain the equation:
 \begin{equation*}
 \begin{split}
 e^{2\gamma M \sqrt{\ell(\ell+1)} \tilde{\rho} }&\Bigg|2(1+M\tilde{\rho})^2 s \hat{\Phi}_{(n+1)}+[2\kappa\tilde{\rho}+\tilde{\rho}^2] (\partial_{\tilde{\rho}}-s\hat{h})\hat{\Phi}_{(n+1)}\\
 &+2(n+1)(\tilde{\rho}+\kappa) \hat{\Phi}_{(n+1)}+\sum_{k=1}^{n_{\ell}}  \frac{(n+1)!}{(k-1)!} M^{n-k} F_{n,k} \hat{\Phi}_{(k)}+\sum_{k=0}^{n_{\ell}} \frac{n! (n-k+1)}{k!}M^{n-k}\widetilde{F}_{n,k}\hat{\Phi}_{(k)}\Bigg|^2\\
 =&\:e^{2\gamma M \sqrt{\ell(\ell+1)} \tilde{\rho} }|G_n|^2,
 \end{split}
 \end{equation*}
 where $\gamma>0$ is a dimensionless constant that will be chosen suitably large.
 
We can split up the left-hand side above as follows:
 \begin{equation*}
 \begin{split}
 e^{2\gamma M \sqrt{\ell(\ell+1)} \tilde{\rho} }&\Big|2(1+M\tilde{\rho})^2 s \hat{\Phi}_{(n+1)}+[2\kappa\tilde{\rho}+\tilde{\rho}^2] (\partial_{\tilde{\rho}}-s\hat{h})\hat{\Phi}_{(n+1)}\\
 &+2(n+1)(\tilde{\rho}+\kappa) \hat{\Phi}_{(n+1)}+\sum_{k=1}^{n_{\ell}}  \frac{(n+1)!}{(k-1)!} M^{n-k} F_{n,k} \hat{\Phi}_{(k)}+\sum_{k=0}^{n_{\ell}} \frac{n! (n-k+1)}{k!}M^{n-k}\widetilde{F}_{n,k}\hat{\Phi}_{(k)}\Big|^2=\sum_{i=0}^4 J_i,
 \end{split}
 \end{equation*}
 where
 \begin{equation*}
 \begin{split}
 J_0:=&\: |2s|^2(1+M\tilde{\rho})^4e^{2\gamma M \sqrt{\ell(\ell+1)} \tilde{\rho} }|\hat{\Phi}_{(n+1)}|^2+(2\kappa\tilde{\rho}+\tilde{\rho}^2)^2e^{2\gamma M  \sqrt{\ell(\ell+1)} \tilde{\rho} }|\hat{\Phi}_{(n+2)}|^2\\
 &+4(n+1)^2(\tilde{\rho}+\kappa)^2e^{2\gamma M \sqrt{\ell(\ell+1)} \tilde{\rho} }|\hat{\Phi}_{(n+1)}|^2
 \end{split}
 \end{equation*}
and
\begin{align*}
J_1:=&\:4 \re(2s)(n+1)(1+M\tilde{\rho})^2 (\tilde{\rho}+\kappa)e^{2\gamma M \sqrt{\ell(\ell+1)} \tilde{\rho} }|\hat{\Phi}_{(n+1)}|^2,\\
J_2:=&\:  2(n+1)(\tilde{\rho}+\kappa)[2\kappa\tilde{\rho}+\tilde{\rho}^2] e^{2\gamma M \sqrt{\ell(\ell+1)} \tilde{\rho} }\partial_{\tilde{\rho}}(|\hat{\Phi}_{(n+1)}|^2) \\
&-2(n+1)\re(2s)\hat{h}(\tilde{\rho}+\kappa)[2\kappa\tilde{\rho}+\tilde{\rho}^2]e^{2\gamma M \sqrt{\ell(\ell+1)} \tilde{\rho} }|\hat{\Phi}_{(n+1)}|^2 ,\\
J_3:=&\: 2(1+M\tilde{\rho})^2[2\kappa\tilde{\rho}+\tilde{\rho}^2]e^{2\gamma M \sqrt{\ell(\ell+1)} \tilde{\rho} }\re(2s\hat{\Phi}_{(n+1)} \overline{\hat{\Phi}_{(n+2)}}),\\
J_4:= &\: e^{2\gamma M \sqrt{\ell(\ell+1)} \tilde{\rho} }\left|\sum_{k=1}^{n_{\ell}}  \frac{(n+1)!}{(k-1)!}M^{n-k} F_{n,k} \hat{\Phi}_{(k)}+\sum_{k=0}^{n_{\ell}} \frac{n! (n-k+1)}{k!}M^{n-k}\widetilde{F}_{n,k}\hat{\Phi}_{(k)}\right|^2 \\
&+ 2 e^{2\gamma M \sqrt{\ell(\ell+1)} \tilde{\rho} }\re \Bigg[\left(\sum_{k=1}^{n_{\ell}}  \frac{(n+1)!}{(k-1)!} M^{n-k} F_{n,k}\overline{\hat{\Phi}_{(k)}}+\sum_{k=0}^{n_{\ell}} \frac{n! (n-k+1)}{k!}M^{n-k}\widetilde{F}_{n,k}\overline{\hat{\Phi}_{(k)}}\right)\cdot \\
&\cdot \left(2(1+M\tilde{\rho})^2 s \hat{\Phi}_{(n+1)}+[2\kappa\tilde{\rho}+\tilde{\rho}^2] \hat{\Phi}_{(n+2)}+2(n+1)(\tilde{\rho}+\kappa) \hat{\Phi}_{(n+1)} \right)\Bigg].
\end{align*}
\\
\emph{Step 1: Estimating $J_1$--$J_3$.} We will first estimate $J_1$. Let us first suppose $\re(s)\geq 0$. Then $J_1$ has a good sign. Now suppose $\re(s)<0$. Then we introduce the parameter $\beta>0$ and apply Young's inequality to estimate:
\begin{equation*}
|J_1|\leq \beta (n+1)^2(1+M\tilde{\rho})^4 (\tilde{\rho}+\kappa)^2e^{2\gamma M \sqrt{\ell(\ell+1)} \tilde{\rho} } |\hat{\Phi}_{(n+1)}|^2+ 4\beta^{-1}\left(\frac{|\re(2s)|}{|2s|}\right)^2 |2s|^2e^{2\gamma M \sqrt{\ell(\ell+1)} \tilde{\rho} }|\hat{\Phi}_{(n+1)}|^2.
\end{equation*}

Consider $J_2$. We integrate by parts (using that $\hat{\Phi}$ is smooth at $\rho=0$):
\begin{equation*}
\begin{split}
\int_{0}^{\tilde{\rho}_0}J_2\,d\rho=&\: 2(n+1)(\tilde{\rho} +\kappa)[2\kappa\tilde{\rho}+\tilde{\rho}^2] e^{2\gamma M \sqrt{\ell(\ell+1)} \tilde{\rho} }|\hat{\Phi}_{(n+1)}|^2\Big|_{\tilde{\rho}=\tilde{\rho}_0}\\
&-\int_{0}^{\tilde{\rho}_0}  (n+1)[4\kappa^2+12\kappa\tilde{\rho}+6\tilde{\rho}^2] e^{2\gamma M \sqrt{\ell(\ell+1)} \tilde{\rho} }|\hat{\Phi}_{(n+1)}|^2\\
&-4\gamma M  \sqrt{\ell(\ell+1)} (n+1)(\tilde{\rho}+\kappa)[2\kappa\tilde{\rho}+\tilde{\rho}^2] e^{2\gamma M \sqrt{\ell(\ell+1)} \tilde{\rho} }|\hat{\Phi}_{(n+1)}|^2\\
&-2(n+1)\re(2s)\hat{h}(\tilde{\rho}+\kappa)[2\kappa\tilde{\rho}+\tilde{\rho}^2]e^{2\gamma M \sqrt{\ell(\ell+1)} \tilde{\rho} }|\hat{\Phi}_{(n+1)}|^2\,d\tilde{\rho}.
\end{split}
\end{equation*}
Note in particular that we can absorb the second term in the integral on the RHS into $J_0$, provided $\gamma M{\ll} \tilde{\rho}_0^{-1}$.

Furthermore, using that $\ell<n$, we can estimate the third and fourth terms by:
\begin{equation*}
\int_{0}^{\tilde{\rho}_0} \epsilon (n+1)^2(\tilde{\rho}+\kappa)^2e^{2\gamma M \sqrt{\ell(\ell+1)} \tilde{\rho} }|\hat{\Phi}_{(n+1)}|^2\,d\tilde{\rho},
\end{equation*}
where $\epsilon>0$ here can be taken suitably small, using that $\gamma M{\ll} \tilde{\rho}_0^{-1}$ and $\kappa\lesssim (\ell+1)^{-1}$, where we will take $\ell$ arbitrarily large.

Now consider $J_3$. We introduce a parameter $\mu\in [0,1]$ and split:
\begin{equation*}
\begin{split}
|J_3|\leq&\: 2|2s|(1+M\tilde{\rho})^2[2\kappa\tilde{\rho}+\tilde{\rho}^2]e^{2\gamma M \sqrt{\ell(\ell+1)} \tilde{\rho} }|\hat{\Phi}_{(n+1)}|  |\hat{\Phi}_{(n+2)}|\\
=&\: (1-\mu)2|2s|(1+M\tilde{\rho})^2[2\kappa\tilde{\rho}+\tilde{\rho}^2]e^{2\gamma M \sqrt{\ell(\ell+1)} \tilde{\rho} }|\hat{\Phi}_{(n+1)}|  |\hat{\Phi}_{(n+2)}|\\
&+\mu2|2s|(1+M\tilde{\rho})^2[2\kappa\tilde{\rho}+\tilde{\rho}^2]e^{2\gamma M \sqrt{\ell(\ell+1)} \tilde{\rho} }|\hat{\Phi}_{(n+1)}|  |\hat{\Phi}_{(n+2)}|.
\end{split}
\end{equation*}
Subsequently, we introduce another parameter $\alpha>0$ in order to estimate the term with a factor $1-\mu$ via Young's inequality:
\begin{equation}
\begin{split}
\label{eq:alpha}
2&(1-\mu)|2s|(1+M\tilde{\rho})^2[2\kappa\tilde{\rho}+\tilde{\rho}^2]e^{2\gamma M \sqrt{\ell(\ell+1)} \tilde{\rho} }|\hat{\Phi}_{(n+1)}|  |\hat{\Phi}_{(n+2)}| \\
\leq &\:\alpha(1-\mu) (n+1)^2 (\tilde{\rho}^2+2\kappa\tilde{\rho})e^{2\gamma M \sqrt{\ell(\ell+1)} \tilde{\rho} }|\hat{\Phi}_{(n+1)}|^2\\
&+ \frac{|2s|^2\widetilde{\sigma}^2}{(n+1)^2(n+2)^2}\cdot \alpha^{-1}\widetilde{\sigma}^{-2}(1-\mu)(n+2)^2(1+M\tilde{\rho})^4(\tilde{\rho}^2+2\kappa\tilde{\rho})e^{2\gamma M \sqrt{\ell(\ell+1)} \tilde{\rho} }|\hat{\Phi}_{(n+1+1)}|^2.
\end{split}
\end{equation}
Furthermore, we introduce the parameter $\nu>0$ in order to further estimate the term with a factor $\mu$ via another application of Young's inequality:
\begin{equation*}
\begin{split}
2\mu|2s|&(1+M\tilde{\rho})^2[2\kappa\tilde{\rho}+\tilde{\rho}^2]e^{2\gamma M \sqrt{\ell(\ell+1)} \tilde{\rho} }|\hat{\Phi}_{(n+1)}|  |\hat{\Phi}_{(n+2)}|\\
\leq &\: \nu \mu |2s|^2e^{2\gamma M \sqrt{\ell(\ell+1)} \tilde{\rho} } |\hat{\Phi}_{(n+1)}|^2+\frac{\mu}{\nu}(1+M\tilde{\rho})^4(2\kappa\tilde{\rho}+\tilde{\rho}^2)^2e^{2\gamma M \sqrt{\ell(\ell+1)} \tilde{\rho} }|\hat{\Phi}_{(n+2)}|^2.
\end{split}
\end{equation*}

\emph{Step 2: Estimating $J_4$. } We consider $J_4$. We apply Young's inequality with suitable weights to obtain
\begin{equation*}
\begin{split}
J_4\geq& -\epsilon  \sum_{k=0}^{n_{\ell}} \gamma^{-\frac{1}{2}(n_{\ell}-k)}J_0- C\epsilon^{-1}\sum_{k=1}^{n_{\ell}}  \gamma^{\frac{1}{2} (n_{\ell}-k)} \frac{(n+1)!^2}{(k-1)!^2} M^{2(n-k)} |F_{n,k}|^2 e^{2\gamma M \sqrt{\ell(\ell+1)} \tilde{\rho} }|\hat{\Phi}_{(k)}|^2\\
&-C \epsilon^{-1}\sum_{k=0}^{n_{\ell}}  \gamma^{\frac{1}{2} (n_{\ell}-k)} \frac{n!^2}{k!^2}(n-k+2)^2 M^{2(n-k)} |\tilde{F}_{n,k}|^2 e^{2\gamma M \sqrt{\ell(\ell+1)} \tilde{\rho} }|\hat{\Phi}_{(k)}|^2.
\end{split}
\end{equation*}

Let $1\leq k\leq n_{\ell}$. We will ignore the $|F_{n,k}|^2$ and $|\tilde{F}_{n,k}|^2$  factors for now. Then we can apply Lemma \ref{lm:expweightest} to estimate:
\begin{equation}
\label{eq:estk1}
\begin{split}
\int_0^{\tilde{\rho}_0} \frac{(n+1)!^2}{(k-1)!^2} e^{2\gamma M \sqrt{\ell(\ell+1)} \tilde{\rho}}|\hat{\Phi}_{(k)}|^2 \,d \tilde{\rho}\leq&\: C\gamma^{-2}\int_0^{\tilde{\rho}_0} \frac{(n+1)!^2}{(k+1-1)!^2} e^{2\gamma M \sqrt{\ell(\ell+1)} \tilde{\rho}}(|\hat{\Phi}_{(k+1)}|^2 +\hat{h}^2|s|^2|\hat{\Phi}_{(k)}|^2)\,d \tilde{\rho}\\
&+ \frac{C}{\gamma (\ell(\ell+1))^{\frac{1}{2}}}\frac{(n+1)!^2}{(k-1)!^2}e^{2\gamma M \sqrt{\ell(\ell+1)} \tilde{\rho}_0}|\hat{\Phi}_{(k)}|^2(\tilde{\rho}_0),
\end{split}
\end{equation}
and then we can absorb the $\hat{h}^2|s|^2|\hat{\Phi}_{(k)}|^2$ term appearing in the integrand on the right-hand side into the left-hand side, for $\gamma$ suitably large.

Similarly, we have that for all $0\leq k\leq n_{\ell}$:
\begin{equation}
\label{eq:estk2}
\begin{split}
\int_0^{\tilde{\rho}_0}& \frac{n!^2}{k!^2} (n-k+2)^2 e^{2\gamma M \sqrt{\ell(\ell+1)} \tilde{\rho}}|\hat{\Phi}_{(k)}|^2 \,d \tilde{\rho}\\
\leq&\: C\gamma^{-2}\int_0^{\tilde{\rho}_0} \frac{n!^2}{(k+1)!^2}(n-(k+1)+2)^2 \left[\frac{(n-k+2)^2}{(n-(k+1)+2)^2}\right] e^{2\gamma M \sqrt{\ell(\ell+1)} \tilde{\rho}}|\hat{\Phi}_{(k+1)}|^2 \,d \tilde{\rho}\\
&+ \frac{C}{\gamma (\ell(\ell+1))^{\frac{1}{2}}}\frac{n!^2}{k!^2}(n-k+2)^2e^{2\gamma M \sqrt{\ell(\ell+1)} \tilde{\rho}_0}|\hat{\Phi}_{(k)}|^2(\tilde{\rho}_0).
\end{split}
\end{equation}

By repeatedly applying the estimate \eqref{eq:estk1} $n_{\ell}-k+1$ times, we obtain the following estimate for $\hat{\Phi}_{(k)}$ with $1\leq k\leq  n_{\ell}$:
\begin{equation*}
\begin{split}
\int_0^{\tilde{\rho}_0}&   \gamma^{\frac{1}{2} (n_{\ell}-k)} \frac{(n+1)!^2}{(k-1)!^2} e^{2\gamma M \sqrt{\ell(\ell+1)} \tilde{\rho}}|\hat{\Phi}_{(k)}|^2 \,d \tilde{\rho}\\
\leq&\: |s|^{-2} C^{n_{\ell}-k}\gamma^{-2}  \gamma^{-\frac{3}{2} (n_{\ell}-k)} \int_0^{\tilde{\rho}_0} |s|^2 \frac{(n+1)!^2}{n_{\ell}!^2} e^{2\gamma M \sqrt{\ell(\ell+1)} \tilde{\rho}}|\hat{\Phi}_{(n_{\ell}+1)}|^2 \,d \tilde{\rho}\\
&+ \frac{C}{ (\ell(\ell+1))^{\frac{1}{2}}} \gamma^{\frac{1}{2} (n_{\ell}-k)}\sum_{m=k}^{n_{\ell}} \gamma^{-1-2(m-k)}\frac{(n+1)!^2}{(m-1)!^2}e^{2\gamma M \sqrt{\ell(\ell+1)} \tilde{\rho}_0}|\hat{\Phi}_{(m)}|^2(\tilde{\rho}_0),
\end{split}
\end{equation*}
where $C>0$ is a constant independent of  $\gamma$, $\ell$. If we take $\gamma$ suitably large compared to $C$, then we can sum over $k$ to obtain:
\begin{equation*}
\begin{split}
\sum_{k=1}^ {n_{\ell}}\int_0^{\tilde{\rho}_0}&   \gamma^{\frac{1}{2} (n_{\ell}-k)} \frac{(n+1)!^2}{(k-1)!^2} e^{2\gamma M \sqrt{\ell(\ell+1)} \tilde{\rho}}|\hat{\Phi}_{(k)}|^2 \,d \tilde{\rho}\leq C |s|^{-2} \gamma^{-2} \int_0^{\tilde{\rho}_0} \frac{(n+1)!^2}{n_{\ell}!^2}e^{2\gamma M \sqrt{\ell(\ell+1)} \tilde{\rho}}|s|^2|\hat{\Phi}_{(n_{\ell}+1)}|^2 \,d \tilde{\rho}\\
&+ \frac{C }{\gamma (\ell+1)} \sum_{k=1}^{n_{\ell}}\gamma^{\frac{1}{2} (n_{\ell}-k)}  \frac{(n+1)!^2}{(k-1)!^2} e^{2\gamma M \sqrt{\ell(\ell+1)} \tilde{\rho}_0}|\hat{\Phi}_{(k)}|^2(\tilde{\rho}_0).
\end{split}
\end{equation*}

We can similarly apply \eqref{eq:estk2} $n_{\ell}-k+1$ times. Before we do that, we observe that for all $0\leq k\leq n_{\ell}$
\begin{equation*}
\begin{split}
\frac{(n-k+2)^2}{(n-(k+1)+2)^2}=&\: \left(1+\frac{1}{n-(k+1)+2}\right)^2\leq 4.
\end{split}
\end{equation*}
Hence,
\begin{equation*}
\begin{split}
\int_0^{\tilde{\rho}_0}&   \gamma^{\frac{1}{2} (n_{\ell}-k)} \frac{n!^2}{k!^2} (n-k+2)^2 e^{2\gamma M \sqrt{\ell(\ell+1)} \tilde{\rho}}|\hat{\Phi}_{(k)}|^2 \,d \tilde{\rho}\\
\leq &\:C^{n_{\ell}-k} |s|^{-2}\gamma^{-2}\gamma^{- \frac{3}{2} (n_{\ell}-k)} \int_0^{\tilde{\rho}_0} \frac{n!}{(n_{\ell} +1)!^2} (n-(n_{\ell}+1)+2)^2 e^{2\gamma M \sqrt{\ell(\ell+1)} \tilde{\rho}} |s|^2|\hat{\Phi}_{(n_{\ell}+1)}|^2 \,d \tilde{\rho}\\
&+ \frac{C}{ (\ell(\ell+1))^{\frac{1}{2}}} \gamma^{\frac{1}{2} (n_{\ell}-k)}\sum_{m=k}^{n_{\ell}} \gamma^{-1-2(m-k)}\frac{n!^2}{m!^2}(n-m+2)^2e^{2\gamma M \sqrt{\ell(\ell+1)} \tilde{\rho}_0}|\hat{\Phi}_{(m)}|^2(\tilde{\rho}_0).
\end{split}
\end{equation*}

Therefore, we can sum over $k$ to obtain:
\begin{equation*}
\begin{split}
\sum_{k=0}^ {n_{\ell}}&\int_0^{\tilde{\rho}_0}   \gamma^{\frac{1}{2} (n_{\ell}-k)} \frac{n!^2}{k!^2}(n-k+2)^2 e^{2\gamma M \sqrt{\ell(\ell+1)} \tilde{\rho}}|\hat{\Phi}_{(k)}|^2 \,d \tilde{\rho}\\
\leq&\: C \gamma^{-2} |s|^{-2} \frac{n!^2}{(n_{\ell} +1)!^2} (n-(n_{\ell}+1)+2)^2 \int_0^{\tilde{\rho}_0}e^{2\gamma M \sqrt{\ell(\ell+1)} \tilde{\rho}}|2s|^2|\hat{\Phi}_{(n_{\ell}+1)}|^2 \,d \tilde{\rho}\\
&+ C \gamma^{-1} (\ell+1)^{-1} (n+2)^2\sum_{k=0}^{n_{\ell}}\gamma^{\frac{1}{2} (n_{\ell}-k)}  \frac{n!^2}{k!^2} e^{2\gamma M \sqrt{\ell(\ell+1)} \tilde{\rho}_0}|\hat{\Phi}_{(k)}|^2(\tilde{\rho}_0).
\end{split}
\end{equation*}

We moreover have that $|F_{n,k}|^2+|\tilde{F}_{n,k}|^2\leq C(\kappa_+^2+\kappa_c^2)$, so in order to ensure that the integrals appearing in the estimates above can be absorbed into $J_0$, we will take $\kappa_+,\kappa_c\lesssim (\ell+1)^{-1}$. We can then conclude that for $\epsilon>0$ arbitrarily small and $|s|>0$, there exists a $\gamma{\gg}M^{-1}|s|^{-1}$ suitably large such that: for $n\leq 2\ell$ ($n_{\ell}=n$)
\begin{equation*}
\begin{split}
\int_0^{\tilde{\rho}_0} J_4 d\tilde{\rho}\geq&\: -\int_0^{\tilde{\rho}_0} \epsilon  J_0d\tilde{\rho}-C_{\epsilon} \gamma^{-1} (\ell+1)^{-3}  \sum_{k=1}^{n}\gamma^{\frac{1}{2} (n-k)}  \frac{(n+1)!^2}{(k-1)!^2} e^{2\gamma M \sqrt{\ell(\ell+1)} \tilde{\rho}_0}|\hat{\Phi}_{(k)}|^2(\tilde{\rho}_0)\\
&-C_{\epsilon} \gamma^{-1} (\ell+1)^{-1}\sum_{k=0}^{n}\gamma^{\frac{1}{2} (n-k)}  \frac{n!^2}{k!^2} e^{2\gamma M \sqrt{\ell(\ell+1)} \tilde{\rho}_0}|\hat{\Phi}_{(k)}|^2(\tilde{\rho}_0)
\end{split}
\end{equation*}
and for $n>2\ell$ ($n_{\ell}=2\ell$)
\begin{equation*}
\begin{split}
\int_0^{\tilde{\rho}_0}& J_4 d\tilde{\rho}\geq -\int_0^{\tilde{\rho}_0} \epsilon  J_0\,d\tilde{\rho}\\
&-C |s|^{-2}\gamma^{-2}(\ell+1)^{-2} \int_0^{\tilde{\rho}_0} \frac{(n+1)!^2}{(2\ell)!^2} e^{2\gamma M \sqrt{\ell(\ell+1)} \tilde{\rho}}|2s|^2|\hat{\Phi}_{(2\ell+1)}|^2 \,d \tilde{\rho}\\
&- C_{\epsilon} \gamma^{-1}(\ell+1)^{-3} \sum_{k=1}^{2\ell}\gamma^{\frac{1}{2} (2\ell-k)}  \frac{(n+1)!^2}{(k-1)!^2} e^{2\gamma M \sqrt{\ell(\ell+1)} \tilde{\rho}_0}|\hat{\Phi}_{(k)}|^2(\tilde{\rho}_0)\\
&-C_{\epsilon}\gamma^{-1} (\ell+1)^{-3} (n+2)^2 \sum_{k=0}^{2\ell}\gamma^{\frac{1}{2} (2\ell-k)}  \frac{n!^2}{k!^2} e^{2\gamma M \sqrt{\ell(\ell+1)} \tilde{\rho}_0}|\hat{\Phi}_{(k)}|^2(\tilde{\rho}_0).
\end{split}
\end{equation*}

\emph{Step 3: Estimating $G_n$.}
Estimate first of all
\begin{equation*}
\begin{split}
&\left|\sum_{k=2\ell+1}^{n-1} \left[ \frac{(n+1)!}{(k-1)!} M^{n-k} F_{n,k}+\frac{n!}{k!}(n-k+1)M^{n-k}\widetilde{ F}_{n,k}\right] \hat{\Phi}_{(k)}\right|^2\\
\leq&\: C (n-2\ell) \sum_{k=2\ell+1}^{n-1}  \Bigg(\frac{(n+1)!^2}{(k-1)!^2} M^{2(n-k)} \Big(|F_{n,k}|^2+ |\widetilde{F}_{n,k}|^2\Big)\Bigg)| \hat{\Phi}_{(k)}|^2.
\end{split}
\end{equation*}

We can therefore apply a suitably weighted Young's inequality in the $n>2\ell$ case: for $\epsilon>0$ arbitrary small, we have that there exists a constant $C_{\epsilon}>0$, such that
\begin{equation*}
\begin{split}
|G_n|^2\leq&\: (1+\epsilon) [n(n+1)-\ell(\ell+1)]^2 |\hat{\Phi}_{(n)}|^2+ C_{\epsilon}(n+1)^2n^2(|F_{n,n}|^2+|\widetilde{F}_{n,n}|^2)| \hat{\Phi}_{(n)}|^2\\
&+C_{\epsilon}(n+1)^2 |s|^2 |\hat{\Phi}_{(n)}|^2+C_{\epsilon}n^4 |s|^2|\hat{\Phi}_{(n-1)}|^2\\
&+C_{\epsilon}\tilde{\rho}^4|F_{n,n+2}|^2|\hat{\Phi}_{(n+2)}|^2+C_{\epsilon}(n+1)^2\tilde{\rho}^2|F_{n,n+1}|^2 |\hat{\Phi}_{(n+1)}|^2\\
&+C_{\epsilon}(n-2\ell) \sum_{k=2\ell+1}^{n-1}  \frac{(n+1)!^2}{(k-1)!^2} M^{2(n-k)}(|F_{n,k}|^2+|\widetilde{F}_{n,k}|^2)| \hat{\Phi}_{(k)}|^2\\
&+C_{\epsilon}|f_n|^2,
\end{split}
\end{equation*}
whereas in the $n\leq 2\ell$ terms we obtain the same inequality, but without the terms depending on $F_{n,k}$ and $\widetilde{F}_{n,k}$ with $2\ell+1\leq k\leq n$ on the right-hand side .

First of all, we rewrite
\begin{equation*}
(1+\epsilon) [n(n+1)-\ell(\ell+1)]^2 |\hat{\Phi}_{(n)}|^2= \frac{[n(n+1)-\ell(\ell+1)]^2}{|2s|^2\widetilde{\sigma}^2} (1+\epsilon)\widetilde{\sigma}^2 |2s|^2|\hat{\Phi}_{(n-1+1)}|^2.
\end{equation*}
We refer to the proof of Proposition \ref{prop:gevreyestsum} for the motivation to include the factor $\widetilde{\sigma}^2$ above. 

Similarly, we have that for $n>2\ell$
\begin{align*}
C_{\epsilon}(n+1)^2n^2|F_{n,n}|^2| \hat{\Phi}_{(n)}|^2\leq&\: C\widetilde{\sigma}^2(\kappa_+^2+\kappa_c^2)\frac{n^2(n+1)^2}{|2s|^2\widetilde{\sigma}^2}  |2s|^2|\hat{\Phi}_{(n-1+1)}|^2,\\
C_{\epsilon}(n+1)^2 |s|^2 |\hat{\Phi}_{(n)}|^2=&\: C_{\epsilon}n^{-2}\widetilde{\sigma}^2 |2s|^2 \frac{n^2(n+1)^2}{|2s|^2\widetilde{\sigma}^2}  |2s|^2|\hat{\Phi}_{(n-1+1)}|^2,\\
C_{\epsilon}n^4 |s|^2 |\hat{\Phi}_{(n-1)}|^2\leq&\: C_{\epsilon}n^{-4} \widetilde{\sigma}^4 |2s|^2\frac{n^2(n+1)^2\cdot (n-1)^2 n^2}{|2s|^2\widetilde{\sigma}^4}  |2s|^2|\hat{\Phi}_{(n-2+1)}|^2
\end{align*}
and, for $2\ell+1\leq k\leq n$,
\begin{equation*}
\begin{split}
M^{2(n-k)}&(|F_{n,k}|^2+|\widetilde{F}_{n,k}|^2)\frac{(n+1)!^2}{(k-1)!^2} (n-2\ell) |\Phi_{(k)}|^2\\
\leq&\: C M^{2(n-k)}(\kappa_+^2+\kappa_c^2) \widetilde{
\sigma}^{2(n-k+1)}|2s|^{2(n-k)}\frac{(n-2\ell)k!^2}{n!^2}\left[\frac{(n+1)!^2n!^2}{k!^2(k-1)!^2}\widetilde{
\sigma}^{-2(n-(k-1))}|2s|^{-2(n-(k-1))}\right]\\
\cdot &|2s|^2|\Phi_{(k-1+1)}|^2.
\end{split}
\end{equation*}
If $n\leq 2\ell$, we instead apply Lemma \ref{lm:expweightest} to estimate:
\begin{equation*}
\begin{split}
\int_0^{\tilde{\rho}_0} n^4 |s|^2 e^{2\gamma M \sqrt{\ell(\ell+1)} \tilde{\rho}}|\hat{\Phi}_{(n-1)}|^2\,d\tilde{\rho}\leq&\: \frac{C}{\gamma^2} n^2\int_0^{\tilde{\rho}_0}  |s|^2 e^{2\gamma M \sqrt{\ell(\ell+1)} \tilde{\rho}}|\hat{\Phi}_{(n)}|^2\,d\tilde{\rho}\\
&+ \frac{C}{(\ell(\ell+1))^{\frac{1}{2}} \gamma} n^4 |s|^2 e^{2\gamma M \sqrt{\ell(\ell+1)} \tilde{\rho}_0}|\hat{\Phi}_{(n-1)}|^2(\tilde{\rho_0}).
\end{split}
\end{equation*}
and
\begin{equation*}
\begin{split}
\int_0^{\tilde{\rho}_0}  n^2|s|^2 e^{2\gamma M \sqrt{\ell(\ell+1)} \tilde{\rho}}|\hat{\Phi}_{(n)}|^2\,d\tilde{\rho}\leq&\: \frac{C}{\gamma^2} \int_0^{\tilde{\rho}_0}  |s|^2 e^{2\gamma M \sqrt{\ell(\ell+1)} \tilde{\rho}}|\hat{\Phi}_{(n+1)}|^2\,d\tilde{\rho}\\
&+ \frac{C}{(\ell(\ell+1))^{\frac{1}{2}}\gamma}n^2  |s|^2 e^{2\gamma M \sqrt{\ell(\ell+1)} \tilde{\rho}_0}|\hat{\Phi}_{(n)}|^2(\tilde{\rho_0}).
\end{split}
\end{equation*}

\emph{Step 4: Putting everything together.} We combine the above estimates to obtain the following estimate: let $\epsilon>0$ be arbitrarily small, then there exist $\gamma$ suitably large with 
\begin{equation*}
M^{-1}\tilde{\rho}_0^{-1}{\gg}\gamma{\gg}(M^{-1}|s|^{-1}+M|s|),
\end{equation*}
and moreover $M^{-1}\tilde{\rho}_0^{-1}{\gg}1$, such that for $n\leq 2\ell$:
\begin{equation*} 
\begin{split}
\int_{0}^{\tilde{\rho}_0}& (1-\mu \nu^{-1}(1+M\tilde{\rho})^4-\epsilon)(2\kappa\tilde{\rho}+\tilde{\rho}^2)^2e^{2\gamma M \sqrt{\ell(\ell+1)} \tilde{\rho} }|\hat{\Phi}_{(n+2)}|^2\, d\tilde{\rho}\\
&+\int_{0}^{\tilde{\rho}_0} \left(1-4 \beta^{-1}\left(\frac{\max\{-\re(s),0\}}{|s|}\right)^2-\nu \mu-\epsilon \right)|2s|^2e^{2\gamma M \sqrt{\ell(\ell+1)} \tilde{\rho} }|\hat{\Phi}_{(n+1)}|^2\, d\tilde{\rho}\\
&+\int_{0}^{\tilde{\rho}_0}  (4-\beta-\alpha(1-\mu)-\epsilon)(n+1)^2(\tilde{\rho}+\kappa)^2e^{2\gamma M \sqrt{\ell(\ell+1)} \tilde{\rho} }|\hat{\Phi}_{(n+1)}|^2 \, d\tilde{\rho}\\
&- \int_{0}^{\tilde{\rho}_0} \frac{|2s|^2\widetilde{\sigma}^2}{(n+1)^2(n+2)^2}\cdot \alpha^{-1}\widetilde{\sigma}^{-2}(1-\mu)(n+2)^2(1+M\tilde{\rho})^4(\tilde{\rho}^2+2\kappa\tilde{\rho})e^{2\gamma M \sqrt{\ell(\ell+1)} \tilde{\rho} }|\hat{\Phi}_{(n+1+1)}|^2 \, d\tilde{\rho}\\
&-\int_{0}^{\tilde{\rho}_0}\frac{[n(n+1)-\ell(\ell+1)]^2}{|2s|^2\widetilde{\sigma}^2} (1+\epsilon)\widetilde{\sigma}^2 |2s|^2e^{2\gamma M \sqrt{\ell(\ell+1)} \tilde{\rho}}|\hat{\Phi}_{(n-1+1)}|^2 \, d\tilde{\rho}\\
\leq &\: C_{\epsilon} (\ell+1)^{-3}\gamma^{-1}  \sum_{k=1}^{n}\gamma^{\frac{1}{2} (n-k)} \frac{(n+1)!^2}{(k-1)!^2} e^{2\gamma M \sqrt{\ell(\ell+1)} \tilde{\rho}_0}|\hat{\Phi}_{(k)}|^2(\tilde{\rho}_0)\\
&+C_{\epsilon} \gamma^{-1} (\ell+1)^{-1}\sum_{k=0}^{n}\gamma^{\frac{1}{2} (n-k)}   \frac{n!^2}{k!^2} e^{2\gamma M \sqrt{\ell(\ell+1)} \tilde{\rho}_0}|\hat{\Phi}_{(k)}|^2(\tilde{\rho}_0)+C_{\epsilon}\int_{0}^{\tilde{\rho}_0}e^{2\gamma M \sqrt{\ell(\ell+1)} \tilde{\rho}}|f_n|^2\, d\tilde{\rho}.
\end{split}
\end{equation*}

Let $n>2\ell$, then
\begin{equation*}
\begin{split}
\int_{0}^{\tilde{\rho}_0}& (1-\mu \nu^{-1}(1+M\tilde{\rho})^4-\epsilon)(2\kappa\tilde{\rho}+\tilde{\rho}^2)^2e^{2\gamma M \sqrt{\ell(\ell+1)} \tilde{\rho} }|\hat{\Phi}_{(n+2)}|^2\, d\tilde{\rho}\\
&+\int_{0}^{\tilde{\rho}_0} \left(1-4 \beta^{-1}\left(\frac{\max\{-\re(s),0\}}{|s|}\right)^2-\nu \mu-\epsilon \right)|2s|^2e^{2\gamma M \sqrt{\ell(\ell+1)} \tilde{\rho} }|\hat{\Phi}_{(n+1)}|^2\, d\tilde{\rho}\\
&+\int_{0}^{\tilde{\rho}_0}  (4-\beta-\alpha(1-\mu)-\epsilon)(n+1)^2(\tilde{\rho}+\kappa)^2e^{2\gamma M \sqrt{\ell(\ell+1)} \tilde{\rho} }|\hat{\Phi}_{(n+1)}|^2 \, d\tilde{\rho}\\
&- \int_{0}^{\tilde{\rho}_0} \frac{|2s|^2\widetilde{\sigma}^2}{(n+1)^2(n+2)^2}\cdot \alpha^{-1}\widetilde{\sigma}^{-2}(1-\mu)(n+2)^2(1+M\tilde{\rho})^4(\tilde{\rho}^2+2\kappa\tilde{\rho})e^{2\gamma M \sqrt{\ell(\ell+1)} \tilde{\rho} }|\hat{\Phi}_{(n+1+1)}|^2 \, d\tilde{\rho}\\
&-\int_{0}^{\tilde{\rho}_0}\frac{[n(n+1)-\ell(\ell+1)]^2}{|2s|^2\widetilde{\sigma}^2} (1+\epsilon)\widetilde{\sigma}^2 |2s|^2e^{2\gamma M \sqrt{\ell(\ell+1)} \tilde{\rho}}|\hat{\Phi}_{(n-1+1)}|^2 \, d\tilde{\rho}\\
&-C\int_{0}^{\tilde{\rho}_0}\widetilde{\sigma}^4 n^{-4} \frac{n^2(n+1)^2\cdot (n-1)^2 n^2}{|2s|^4\widetilde{\sigma}^4}  |2s|^2|e^{2\gamma M \sqrt{\ell(\ell+1)} \tilde{\rho}}\hat{\Phi}_{(n-2+1)}|^2\, d\tilde{\rho}\\
&-C (\ell+1)^{-2}\sum_{k=2\ell+1}^{n-1 }\int_{0}^{\tilde{\rho}_0}\widetilde{\sigma}^{2(n-k+1)}|2s|^{2(n-k)}\frac{(n-2\ell) k!^2}{n!^2}\left[\frac{(n+1)!^2 n!^2}{k!^2(k-1)!^2}\frac{1}{|2s|^{2(n-k+1)} \widetilde{\sigma}^{2(n-k+1)}}\right]\\
&\times |2s|^2 e^{2\gamma M \sqrt{\ell(\ell+1)} \tilde{\rho}}|\Phi_{(k-1+1)}|^2\, d\tilde{\rho}\\
&-C \gamma^{-2}|s|^{-2}(\ell+1)^{-2} \int_0^{\tilde{\rho}_0} \frac{(n+1)!^2}{(2\ell)!^2}e^{2\gamma r_+ \sqrt{\ell(\ell+1)} \tilde{\rho}}|2s|^2|\hat{\Phi}_{(2\ell+1)}|^2 \,d \tilde{\rho}\\
\leq &\: C_{\epsilon} \gamma^{-1}(\ell+1)^{-3}  \sum_{k=1}^{2\ell}\gamma^{\frac{1}{2} (2\ell-k)} \frac{(n+1)!^2}{(k-1)!^2} e^{2\gamma M \sqrt{\ell(\ell+1)} \tilde{\rho}_0}|\hat{\Phi}_{(k)}|^2(\tilde{\rho}_0)\\
&+C_{\epsilon}\gamma^{-1} (\ell+1)^{-3} (n+2)^2 \sum_{k=0}^{2\ell}\gamma^{\frac{1}{2} (2\ell-k)}  \frac{n!^2}{k!^2} e^{2\gamma M \sqrt{\ell(\ell+1)} \tilde{\rho}_0}|\hat{\Phi}_{(k)}|^2(\tilde{\rho}_0)+C_{\epsilon}\int_{0}^{\tilde{\rho}_0}e^{2\gamma M \sqrt{\ell(\ell+1)} \tilde{\rho}} |f_n|^2\, d\tilde{\rho}.
\end{split}
\end{equation*}
\end{proof}

In order to estimate the boundary terms at $\tilde{\rho}=\tilde{\rho}_0$ that appear on the right-hand side of the inequalities in Proposition \ref{eq:maingevprop}, will will make use of the non-degeneracy of the wave equation at $\tilde{\rho}=\tilde{\rho}_0$ (away from the horizons) in the form of the following lemma:

\begin{lemma}
\label{lm:analytawayhor}
There exists a constant $A>0$ such that for all $2\leq n \leq 2\ell$, we can estimate
\begin{equation}
\label{eq:analytawayhor}
|\hat{\Phi}_{(n)}(\tilde{\rho}_0)|\leq A^{n} \tilde{\rho}_0^{-2(n-1)} (\ell+1)^{n}(|\hat{\Phi}|(\tilde{\rho}_0)+|\hat{\Phi}_{(1)}|(\tilde{\rho}_0))+\sum_{k=0}^{n-2} A^{n-2-k} \tilde{\rho}_0^{-2(n-2)-2+2k}(\ell+1)^{n-2-k} |f_{k}|(\tilde{\rho}_0).
\end{equation}
\end{lemma}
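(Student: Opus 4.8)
The plan is to read the fixed-frequency higher-order equation \eqref{eq:fixedfreqpsihat} --- whose left-hand side equals $f_n$ by \eqref{def:fn} --- as a recursion at the fixed point $\tilde{\rho}=\tilde{\rho}_0$: solving for the top-order term, it expresses $\hat{\Phi}_{(n)}(\tilde{\rho}_0)$ in terms of $\hat{\Phi}_{(n-1)}(\tilde{\rho}_0),\hat{\Phi}_{(n-2)}(\tilde{\rho}_0),\hat{\Phi}_{(n-3)}(\tilde{\rho}_0)$, the source value $f_{n-2}(\tilde{\rho}_0)$, and --- when $\kappa>0$ --- the lower-order values $\hat{\Phi}_{(k)}(\tilde{\rho}_0)$, $k\le n-2$, each weighted by one of the small coefficients $F_{n-2,k},\widetilde{F}_{n-2,k}$. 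The division is legitimate because, evaluated at $\tilde{\rho}_0$, the coefficient of the top-order term in \eqref{eq:fixedfreqpsihat}, namely $2\kappa\tilde{\rho}_0+\tilde{\rho}_0^2(1+F_{n-2,n}(\tilde{\rho}_0))$, is bounded below by $\tfrac12\tilde{\rho}_0^2$ once $\kappa_++\kappa_c\le (2C_0)^{-1}$, using the uniform bound $\|F_{n,k}\|_{L^\infty([0,\rho_0])}\le C_0(\kappa_++\kappa_c)$ from \eqref{eq:esttildeF}.

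One then proves \eqref{eq:analytawayhor} by strong induction on $n$, $2\le n\le 2\ell$. The base case $n=2$ follows directly from \eqref{eq:fixedfreqpsihat} with $n=0$. For the inductive step one solves for $\hat{\Phi}_{(n)}(\tilde{\rho}_0)$ from \eqref{eq:fixedfreqpsihat} with index $n-2$ and inserts the induction hypothesis for $\hat{\Phi}_{(n-1)},\hat{\Phi}_{(n-2)},\hat{\Phi}_{(n-3)}$ (if one of these has order $0$ or $1$ it is trivially dominated by the right-hand side of \eqref{eq:analytawayhor}). Using $n\le 2\ell$ and $\tilde{\rho}_0\le 1$, the coefficients of $\hat{\Phi}_{(n-1)},\hat{\Phi}_{(n-2)},\hat{\Phi}_{(n-3)}$ in \eqref{eq:fixedfreqpsihat} are bounded by $C(\ell+1),\,C(\ell+1)^2,\,C(\ell+1)^2$ respectively, with $C$ depending only on $M,e$ (the quadratic growth coming only from $|n(n+1)-\ell(\ell+1)|\lesssim (\ell+1)^2$). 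The crucial observation is that the powers $\tilde{\rho}_0^{-2(n-1)}$, $(\ell+1)^n$ and the exponent $-2(n-2)-2+2k$ in \eqref{eq:analytawayhor} are tuned so that, after dividing by $\tfrac12\tilde{\rho}_0^2$: the contribution of $\hat{\Phi}_{(n-1)}$ (coefficient only linear in $\ell+1$) reproduces the target powers of $\tilde{\rho}_0^{-1}$ and $\ell+1$ exactly, with $A^{n}$ replaced by $A^{n-1}$ times a bounded constant; while the contributions of $\hat{\Phi}_{(n-2)},\hat{\Phi}_{(n-3)}$ (coefficient quadratic in $\ell+1$) come out strictly less singular in $\tilde{\rho}_0$ by a factor $\tilde{\rho}_0^2$ --- and, for $\hat{\Phi}_{(n-3)}$, also with one fewer power of $\ell+1$ --- so the extra $\ell$-power is paid for. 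Hence the sum of the four contributions reproduces \eqref{eq:analytawayhor} as soon as $A$ exceeds a fixed universal multiple of $C$; the $f_k$-terms are handled identically, the contribution of $\hat{\Phi}_{(n-1)}$ matching the $k\le n-3$ entries with exact exponents and the direct term $f_{n-2}(\tilde{\rho}_0)$ matching the $k=n-2$ entry.

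It remains to control the factorial-coefficient terms $\sum_{k}\tfrac{(n+1)!}{(k-1)!}M^{n-k}F_{n,k}\hat{\Phi}_{(k)}$ and $\sum_k \tfrac{n!(n-k+1)}{k!}M^{n-k}\widetilde{F}_{n,k}\hat{\Phi}_{(k)}$, present only for $\kappa>0$. Their coefficients grow factorially in $n$ but carry the small factor $\|F_{n,k}\|,\|\widetilde{F}_{n,k}\|\le C_0(\kappa_++\kappa_c)$; feeding in the induction hypothesis as above shows the resulting extra contribution to $|\hat{\Phi}_{(n)}(\tilde{\rho}_0)|$ is a small multiple of the right-hand side of \eqref{eq:analytawayhor} provided $\kappa_++\kappa_c$ is small, the most restrictive case being $k\approx n-2$, which requires $\kappa_++\kappa_c\lesssim (\ell+1)^{-1}$ --- precisely the smallness already imposed. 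This closes the induction.

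The main obstacle here is bookkeeping rather than anything conceptual: one must track three intertwined quantities simultaneously through the recursion --- the power of $\tilde{\rho}_0^{-1}$, the power of $\ell+1$, and the base constant $A$ --- and verify that in each of the finitely many types of lower-order contribution the scaling is ``cheaper'' than the target in at least one of them, so that neither $A$ nor the exponent of $\ell+1$ is forced to grow with $n$; the secondary point is checking that the factorial coefficients appearing when $\kappa>0$ are genuinely absorbed by the smallness of $\kappa_+,\kappa_c$ allowed in the hypotheses.
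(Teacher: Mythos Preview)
Your proposal is correct and follows essentially the same inductive strategy as the paper's proof: both read \eqref{eq:fixedfreqpsihat} at index $n-2$ as a recursion for $\hat{\Phi}_{(n)}(\tilde{\rho}_0)$, divide by the nondegenerate top-order coefficient $\sim\tilde{\rho}_0^2$, insert the induction hypothesis, and close by choosing $A$ large. Your bookkeeping of the three scales ($\tilde{\rho}_0^{-1}$, $\ell+1$, $A$) matches the paper's, and you are in fact slightly more careful in tracking the $\hat{\Phi}_{(n-3)}$ term arising from $2(n-2)^2M^2 s\,\hat{\Phi}_{(n-3)}$, which the paper's displayed inequality \eqref{eq:indstep} does not make explicit. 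One minor point: for the factorial-coefficient tails the paper uses the elementary bound $\tfrac{N!}{(k-1)!}\le (2(\ell+1))^{N-k+1}$ together with a geometric series in $(2M\tilde{\rho}_0^2/A)$, which only requires $\kappa_++\kappa_c$ small by a \emph{fixed} constant, not $\lesssim(\ell+1)^{-1}$; your stated smallness is stronger than necessary for this lemma, but harmless since it is already in force.
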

\begin{proof}
We will prove \eqref{eq:analytawayhor} by induction. The $n=2$ case follows immediately from \eqref{eq:fixedfreqpsihat} with $n=0$. Now suppose \eqref{eq:analytawayhor} holds for all $2\leq n\leq N$. We will show that then \eqref{eq:analytawayhor} must also hold for $n=N+1$. 

Observe first of all that for all $2\leq n \leq 2\ell$, we have that
\begin{align*}
|n(n+1)-\ell(\ell+1)|\leq&\: 3\ell(\ell+1).
\end{align*}

By  \eqref{eq:fixedfreqpsihat} with $n=N-1$, we have that there exists a constant $C>0$ independent of $N$ and $\ell$ such that
\begin{equation}
\label{eq:indstep}
\begin{split}
|\hat{\Phi}_{(N+1)}(\tilde{\rho}_0)|\leq&\: C M^{-1} N \tilde{\rho}_0^{-2} |\hat{\Phi}_{(N)}|(\tilde{\rho}_0)+ C\ell(\ell+1)\tilde{\rho}_0^{-2} |\hat{\Phi}_{(N-1)}|(\tilde{\rho}_0)\\
&+C(\kappa_++\kappa_c)\tilde{\rho}_0^{-2}\left[\sum_{k=1}^{N-1} \frac{N!}{(k-1)!} M^{n-k}|\hat{\Phi}_{(k)}|(\tilde{\rho}_0)+\sum_{k=0}^{N-1} \frac{(N-1)! (N-k)}{k!}M^{n-k}|\hat{\Phi}_{(k)}|(\tilde{\rho}_0)\right]\\
&+ C\tilde{\rho}_0^{-2} |f_{N-1}|.
\end{split}
\end{equation}
Note that by applying \eqref{eq:analytawayhor} with $n=N$ and $n=N-1$, we obtain
\begin{equation*}
\begin{split}
C M^{-1}N \tilde{\rho}_0^{-2}& |\hat{\Phi}_{(N)}|(\tilde{\rho}_0)+ C\ell(\ell+1)\tilde{\rho}_0^{-2} |\hat{\Phi}_{(N-1)}|(\tilde{\rho}_0)\\
\leq&\: (2C M^{-1} A^N+C A^{N-1}  )\tilde{\rho}_0^{-2N} (\ell+1)^{N+1}(|\hat{\Phi}|(\tilde{\rho}_0)+|\hat{\Phi}_{(1)}|(\tilde{\rho}_0))\\
&+(CM^{-1}A^{-1}+CA^{-2})\sum_{k=0}^{N-2} A^{N-1-k} \tilde{\rho}_0^{-2(N-1)-2+2k}(\ell+1)^{N-1-k} |f_{k}|(\tilde{\rho}_0).
\end{split}
\end{equation*}
Similarly, using Stirling's formula to obtain
\begin{equation*}
\frac{N!}{(k-1)!} \leq 2^{N-k+1} (\ell+1)^{N-k+1},
\end{equation*}
we can estimate
\begin{equation*}
\begin{split}
C&(\kappa_++\kappa_c)\tilde{\rho}_0^{-2}\sum_{k=1}^{N-1} \frac{N!}{(k-1)!} M^{n-k}|\hat{\Phi}_{(k)}|(\tilde{\rho}_0)\leq C(\kappa_++\kappa_c)\tilde{\rho}_0^{-2}\sum_{k=1}^{N-1} 2^{N-k+1} (\ell+1)^{N+1-k} M^{n-k}|\hat{\Phi}_{(k)}|(\tilde{\rho}_0)\\
\leq&\: C(\kappa_++\kappa_c)(\ell+1)^{N+1}A^{N+1} \tilde{\rho}_0^{-2N}\left[\sum_{k=1}^{N-1}\tilde{\rho}_0^{2(N-k)} \left(\frac{2}{A}\right)^{N-k+1}\right] (|\hat{\Phi}|(\tilde{\rho}_0)+|\hat{\Phi}_{(1)}|(\tilde{\rho}_0))\\
&+C(\kappa_++\kappa_c)(\ell+1)^{N+1}A^{N+1} \tilde{\rho}_0^{-2N}\sum_{k=1}^{N-1}\tilde{\rho}_0^{2(N-k)} \left(\frac{2}{A}\right)^{N-k+1} \sum_{m=0}^{k-2} A^{-2-m}\tilde{\rho}_0^{2+2m}(\ell+1)^{-2-m}|f_m|(\tilde{\rho}_0))\\
\leq &\:  \frac{1}{2} A^{N+1}(\ell+1)^{N+1}  \tilde{\rho}_0^{-2N}(|\hat{\Phi}|(\tilde{\rho}_0)+|\hat{\Phi}_{(1)}|(\tilde{\rho}_0))+\frac{1}{2}A^{N+1}\tilde{\rho}_0^{-2N} (\ell+1)^{N-1}\sum_{k=0}^{N-2} A^{-2-k} \tilde{\rho}_0^{2+2k}(\ell+1)^{-k} |f_{k}|(\tilde{\rho}_0),
\end{split}
\end{equation*}
for $\tilde{\rho}_0$ suitably small and $A$ chosen suitably large, where we used the convergence of the geometric series. The remaining term on the right-hand side of \eqref{eq:indstep} involving $\hat{\Phi}_{(k)}$ can be estimated similarly.

By taking the dimensionless constant $A\cdot M$ suitably large compared to the dimensionless constant $C>0$ that appears in the expressions above (and is independent of $\rho_0^{-1}$, $N$ and $\ell$), and by combining the above estimates, we obtain \eqref{eq:analytawayhor}  for $n=N+1$, thereby concluding the induction argument.
\end{proof}

\begin{proposition}
\label{prop:gevreyestsum}
For all $\epsilon>0$, there exist constants $B, C,\gamma>0$ depending on $|s|M$, such that for $\gamma{\gg}(|s|^{-1}M^{-1}+|s|M)$, $\rho_0^{-1} M^{-1}{\gg}\max\{\gamma,1\}$ and $\ell{\gg}1$, we can estimate for $N_{\infty}>2\ell$ arbitrarily large:
\begin{equation}
\label{eq:auxgevreyestsum}
\begin{split}
&\ell! (\ell+1)!\sum_{n=\ell}^{N_{\infty}} \frac{|2s|^{2n}\widetilde{\sigma}^{2n}}{(n+1)!^2n!^2} \int_{0}^{\tilde{\rho}_0} (1-\mu \nu^{-1}(1+M\tilde{\rho})^4-\epsilon)(2\kappa\tilde{\rho}+\tilde{\rho}^2)^2e^{2\gamma M \sqrt{\ell(\ell+1)} \tilde{\rho} }|\hat{\Phi}_{(n+2)}|^2\, d\tilde{\rho}\\
&+\ell! (\ell+1)!\sum_{n=\ell}^{N_{\infty}} \frac{|2s|^{2n}\widetilde{\sigma}^{2n}}{(n+1)!^2n!^2}\cdot  \\
 & \cdot \int_{0}^{\tilde{\rho}_0} \left(1-4 \beta^{-1}\left(\frac{\max\{-\re(s),0\}}{|s|}\right)^2-\nu \mu- (1+\epsilon)\widetilde{\sigma}^2 -\epsilon \right)|2s|^2e^{2\gamma M \sqrt{\ell(\ell+1)} \tilde{\rho} }|\hat{\Phi}_{(n+1)}|^2\, d\tilde{\rho}\\
&+\ell! (\ell+1)!\sum_{n=\ell}^{N_{\infty}} \frac{|2s|^{2n}\widetilde{\sigma}^{2n}}{(n+1)!^2n!^2}\int_{0}^{\tilde{\rho}_0}  (4-\beta-(\alpha+\alpha^{-1}\widetilde{\sigma}^{-2})(1-\mu)-\epsilon)(n+1)^2(\tilde{\rho}+\kappa)^2e^{2\gamma M \sqrt{\ell(\ell+1)} \tilde{\rho} }|\hat{\Phi}_{(n+1)}|^2 \, d\tilde{\rho}\\
&-\ell! (\ell+1)! \int_{0}^{\tilde{\rho}_0} \frac{|2s|^{2N_{\infty}}\widetilde{\sigma}^{2N_{\infty}}}{(N_{\infty}+1)!^2N_{\infty}!^2}\frac{|2s|^{2}}{(N_{\infty}+1)^2}\cdot 5\alpha^{-1}(1-\mu)(1+M\tilde{\rho})^4\tilde{\rho}^2e^{2\gamma M \sqrt{\ell(\ell+1)} \tilde{\rho} }|\hat{\Phi}_{(N_{\infty}+2)}|^2 \, d\tilde{\rho}\\
\leq &\: C B^{2\ell} \tilde{\rho}_0^{-8\ell} |2s|^{\ell} e^{2\gamma M \sqrt{\ell(\ell+1)} \tilde{\rho}_0} \left[|\hat{\Phi}|^2(\tilde{\rho}_0)+|\hat{\Phi}_{(1)}|^2(\tilde{\rho}_0)\right]\\
&+C B^{2\ell} \tilde{\rho}_0^{-8\ell} |2s|^{\ell} e^{2\gamma M \sqrt{\ell(\ell+1)} \tilde{\rho}_0}  \sum_{k=0}^{2\ell-2} A^{-4-4k} \tilde{\rho}_0^{4+4k}(\ell+1)^{-2k} |f_{k}|^2(\tilde{\rho}_0)\\
&+C\ell! (\ell+1)!\sum_{n=\ell}^{N_{\infty}} \frac{|2s|^{2n}\widetilde{\sigma}^{2n}}{(n+1)!^2n!^2}\int_{0}^{\tilde{\rho}_0}e^{2\gamma M \sqrt{\ell(\ell+1)} \tilde{\rho} }|f_n|^2\, d\tilde{\rho}.
\end{split}
\end{equation}
\end{proposition}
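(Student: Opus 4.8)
\textbf{Proof strategy for Proposition \ref{prop:gevreyestsum}.}

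The plan is to take the family of pointwise-in-$n$ estimates from Proposition \ref{eq:maingevprop}, multiply each by the appropriate Gevrey weight $\frac{|2s|^{2n}\widetilde{\sigma}^{2n}}{(n+1)!^2 n!^2}$, prefactor by $\ell!(\ell+1)!$, and sum over $n$ from $n=\ell$ to $N_\infty$. The point of this specific weight is that the various ``off-diagonal'' terms appearing on the left-hand side of Proposition \ref{eq:maingevprop} — namely the terms involving $\hat\Phi_{(n+1+1)}$, $\hat\Phi_{(n-1+1)}$, $\hat\Phi_{(n-2+1)}$, and the tail terms $\hat\Phi_{(k-1+1)}$ for $2\ell+1\le k\le n-1$ — get shifted, under the reindexing of the summation, into terms of the same shape as the coercive terms at a neighbouring value of $n$, up to explicitly bounded combinatorial factors. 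Concretely: the factorial ratios were deliberately arranged in Proposition \ref{eq:maingevprop} (this is the reason the $\widetilde{\sigma}$'s were inserted there) so that after multiplying by the weight and summing, the coefficient of $|2s|^2 e^{2\gamma M\sqrt{\ell(\ell+1)}\tilde\rho}|\hat\Phi_{(m+1)}|^2$ collected from the $\hat\Phi_{(n\pm\text{shift})}$ contributions is comparable to $\widetilde\sigma^{2(\text{shift})}$ (or $(\kappa_+^2+\kappa_c^2)$ times such a factor for the $F_{n,k}$-terms) relative to the diagonal coefficient at $n=m$. First I would therefore reindex each off-diagonal sum, match it against the diagonal sum, and verify that all cross-terms are absorbed into the three coercive diagonal terms, producing the $-(1+\epsilon)\widetilde\sigma^2$ correction in the coefficient of $|2s|^2|\hat\Phi_{(n+1)}|^2$ and the $-\alpha^{-1}\widetilde\sigma^{-2}(1-\mu)$ correction in the coefficient of $(n+1)^2(\tilde\rho+\kappa)^2|\hat\Phi_{(n+1)}|^2$, exactly as displayed. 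The tail-sum terms (those indexed $k=2\ell+1,\dots,n-1$) come with an extra factor $(\kappa_++\kappa_c)$, so once we also impose $\kappa_+,\kappa_c \lesssim (\ell+1)^{-1}$ these are harmlessly small and absorbable; likewise the $\gamma^{-2}|s|^{-2}(\ell+1)^{-2}$-weighted $\hat\Phi_{(2\ell+1)}$ term on the left of the $n>2\ell$ estimate of Proposition \ref{eq:maingevprop} is, after summation, a lower-order contribution controlled by the $n=2\ell-1,2\ell$ diagonal terms for $\gamma$ large.

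The second main ingredient is the treatment of the boundary terms at $\tilde\rho=\tilde\rho_0$. On the right-hand side of Proposition \ref{eq:maingevprop} these appear as sums $\sum_{k=1}^{n_\ell}\gamma^{\frac12(n_\ell-k)}\frac{(n+1)!^2}{(k-1)!^2}|\hat\Phi_{(k)}|^2(\tilde\rho_0)$ and the analogous sum with $\frac{n!^2}{k!^2}(n-k+2)^2$. Here I would invoke Lemma \ref{lm:analytawayhor}: each $|\hat\Phi_{(k)}|(\tilde\rho_0)$ with $2\le k\le 2\ell$ is bounded by $A^k\tilde\rho_0^{-2(k-1)}(\ell+1)^k(|\hat\Phi|(\tilde\rho_0)+|\hat\Phi_{(1)}|(\tilde\rho_0))$ plus a sum over $|f_{j}|(\tilde\rho_0)$ with explicit powers of $A$, $\tilde\rho_0$, $\ell+1$. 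Substituting this in, multiplying by the Gevrey weight, and summing over $n$ (and over $k$), the key observation is that all but finitely many of the combinatorial/$\ell$-power factors are beaten by the decay coming from $\frac{|2s|^{2n}\widetilde\sigma^{2n}}{(n+1)!^2n!^2}$ together with the $\ell!(\ell+1)!$ prefactor and the geometric decay in $\gamma^{-1}$; the net result is a bound by $C B^{2\ell}\tilde\rho_0^{-8\ell}|2s|^\ell e^{2\gamma M\sqrt{\ell(\ell+1)}\tilde\rho_0}$ times $(|\hat\Phi|^2+|\hat\Phi_{(1)}|^2)(\tilde\rho_0)$ plus the stated sum over $|f_k|^2(\tilde\rho_0)$, since the $k$-sum of $A^{-4-4k}\tilde\rho_0^{4+4k}(\ell+1)^{-2k}$-weighted $|f_k|^2$ converges. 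The powers $\tilde\rho_0^{-8\ell}$ and $|2s|^\ell$ simply track the worst term $k\sim 2\ell$, $n\sim 2\ell$ in this double sum, using $(\ell!)^2/((2\ell)!)^2 \sim 4^{-2\ell}$ and Stirling. The remaining inhomogeneous terms — those involving $|f_n|^2$ in the interior integral — are passed through the summation untouched and reproduce the last line of \eqref{eq:auxgevreyestsum}.

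The one structurally delicate point, which I would isolate and handle first, is the \emph{top-order} term at $n=N_\infty$: after reindexing, the contribution of the $\hat\Phi_{(n+1+1)}$-term at $n=N_\infty$ cannot be absorbed into any ``$n+1$'' diagonal term, because that diagonal term is not present in the truncated sum. This is exactly the term displayed with the negative sign and coefficient $\frac{|2s|^{2N_\infty}\widetilde\sigma^{2N_\infty}}{(N_\infty+1)!^2 N_\infty!^2}\frac{|2s|^2}{(N_\infty+1)^2}\cdot 5\alpha^{-1}(1-\mu)(1+M\tilde\rho)^4\tilde\rho^2|\hat\Phi_{(N_\infty+2)}|^2$ on the left-hand side of \eqref{eq:auxgevreyestsum}; it is kept with a negative sign precisely so that the inequality is valid for every finite $N_\infty$ without any a priori assumption of finite Gevrey norm. (The factor $5$ collects the several off-diagonal contributions that all feed into this single slot.) The genuine work — and the main obstacle — is therefore the bookkeeping of Step one: carefully checking that every reindexed off-diagonal coefficient lands, with the claimed size, on one of the three diagonal coercive terms (or on the retained top-order term), keeping the $\ell$- and $n$-dependence of all factorial prefactors explicit, and confirming that the smallness hypotheses $\gamma \gg |s|^{-1}M^{-1}+|s|M$, $\rho_0^{-1}M^{-1}\gg\max\{\gamma,1\}$, $\ell\gg1$, together with $\kappa_+,\kappa_c\lesssim(\ell+1)^{-1}$, suffice to make the absorptions go through with room to spare. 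Once that ledger balances, combining it with the Lemma \ref{lm:analytawayhor} substitution for the $\tilde\rho_0$-boundary terms yields \eqref{eq:auxgevreyestsum} directly.
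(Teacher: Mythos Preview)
Your strategy is essentially the same as the paper's: sum the estimates of Proposition \ref{eq:maingevprop} against the Gevrey weights $\ell!(\ell+1)!\frac{|2s|^{2n}\widetilde{\sigma}^{2n}}{(n+1)!^2n!^2}$, absorb the off-diagonal $\hat{\Phi}_{(n\pm 1+1)}$ terms into the diagonal terms at neighbouring $n$ (this is exactly why the $\widetilde{\sigma}$'s were placed as they were), handle the tail sums and the fixed $\hat{\Phi}_{(2\ell+1)}$ term using the smallness factors $(\ell+1)^{-2}$ and $\gamma^{-2}$, and reduce the $\tilde{\rho}_0$-boundary sums to $|\hat{\Phi}|^2(\tilde{\rho}_0)+|\hat{\Phi}_{(1)}|^2(\tilde{\rho}_0)$ via Lemma \ref{lm:analytawayhor}. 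The paper does precisely this, and your estimate of the worst boundary contribution as $C B^{2\ell}\tilde{\rho}_0^{-8\ell}|2s|^{\ell}$ matches.

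One small correction about the top-order term: the retained negative term at $n=N_\infty$ is not obtained by ``collecting several off-diagonal contributions.'' Rather, the paper \emph{modifies} the proof of Proposition \ref{eq:maingevprop} at $n=N_\infty$: instead of applying the Young inequality \eqref{eq:alpha} (which introduces a $\widetilde{\sigma}^{-2}$ factor designed to be absorbed at order $n+1$), one applies a different splitting that avoids $\widetilde{\sigma}$ altogether. The factor $5$ arises from the Cauchy--Schwarz bound $(2\kappa+\tilde{\rho})^2\le 5(\kappa^2+\tilde{\rho}^2)$ in that alternative Young inequality. Without this modification you would be left with an unabsorbable $\widetilde{\sigma}^{-2}$-weighted term at top order, which is a genuine (if small) point to get right.
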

\begin{proof}
We will sum the inequalities appearing in Proposition \ref{eq:maingevprop} over $n$ with the following summation weights:
\begin{equation*}
\ell! (\ell+1)!\sum_{n=\ell}^{N_{\infty}} \frac{|2s|^{2n}\widetilde{\sigma}^{2n}}{(n+1)!^2n!^2} \Big[\cdot\Big],
\end{equation*}
where $N_{\infty}$ will be taken suitably large and the value of $\tilde{\sigma}$ will be determined later.

Before we carry out the summation, we will first apply Lemma \ref{lm:analytawayhor} to estimate further the boundary terms at $\tilde{\rho}=\tilde{\rho}_0$ that appear in the inequalities of Proposition \ref{eq:maingevprop}.

By Lemma \ref{lm:analytawayhor}, we have that there exists a constant $B>0$ such that for all $0\leq k\leq \min\{n,2\ell\}$, 
\begin{equation*}
\begin{split}
\frac{(n+1)!^2}{(k-1)!^2}& \gamma^{\frac{1}{2}(n_{\ell}-k)}e^{2\gamma M \sqrt{\ell(\ell+1)} \tilde{\rho}_0}|\hat{\Phi}_{(k)}|^2(\tilde{\rho}_0)\leq B^k\tilde{\rho}_0^{-4k} (n+1)!^2 \left(\frac{\ell+1}{k+1}\right)^{2(k+1)}  \gamma^{\frac{1}{2}(n_{\ell}-k)}e^{2\gamma M \sqrt{\ell(\ell+1)} \tilde{\rho}_0} \\
\cdot &\left[|\hat{\Phi}|^2(\tilde{\rho}_0)+|\hat{\Phi}_{(1)}|^2(\tilde{\rho}_0)+\sum_{m=0}^{k-2} A^{-4-4m} \tilde{\rho}_0^{4+4m}(\ell+1)^{-2m} |f_{m}|^2(\tilde{\rho}_0) \right].
\end{split}
\end{equation*}
Note that $\left(\frac{\ell+1}{k+1}\right)^{2(k+1)}$ attains its maximum when $\ell+1=e (k+1)$ so by redefining $B$ we can write
\begin{equation*}
\begin{split}
\frac{(n+1)!^2}{(k-1)!^2}& \gamma^{\frac{1}{2}(n_{\ell}-k)}e^{2\gamma M \sqrt{\ell(\ell+1)} \tilde{\rho}_0}|\hat{\Phi}_{(k)}|^2(\tilde{\rho}_0)\leq B^k\tilde{\rho}_0^{-4k} (n+1)!^2 \gamma^{\frac{1}{2}(n_{\ell}-k)} e^{2\gamma M \sqrt{\ell(\ell+1)} \tilde{\rho}_0}\cdot \\
\cdot &\left[|\hat{\Phi}|^2(\tilde{\rho}_0)+|\hat{\Phi}_{(1)}|^2(\tilde{\rho}_0)+\sum_{m=0}^{k-2} A^{-4-4m} \tilde{\rho}_0^{4+4m}(\ell+1)^{-2m} |f_{m}|^2(\tilde{\rho}_0) \right].
\end{split}
\end{equation*}
And hence,
\begin{equation*}
\begin{split}
 \ell! (\ell+1)!\sum_{n=\ell}^{N_{\infty}} \sum_{k=0}^{n_{\ell}}&\frac{|2s|^{2n}\widetilde{\sigma}^{2n}}{(n+1)!^2n!^2} \frac{(n+1)!^2}{(k-1)!^2}\gamma^{\frac{1}{2}(n_{\ell}-k)}e^{2\gamma M \sqrt{\ell(\ell+1)} \tilde{\rho}_0}|\hat{\Phi}_{(k)}|^2(\tilde{\rho}_0)\\
 \leq&\:  \ell! (\ell+1)!e^{2\gamma M \sqrt{\ell(\ell+1)} \tilde{\rho}_0} \sum_{n=\ell}^{N_{\infty}}\frac{|2s|^{2n}\widetilde{\sigma}^{2n}}{n!^2}\gamma^{\frac{1}{2}n_{\ell}} \sum_{k=0}^{n_{\ell}}B^{k} \tilde{\rho}_0^{-4k}\gamma^{-\frac{1}{2}k}\cdot   \\
&\cdot \left[|\hat{\Phi}|^2(\tilde{\rho}_0)+|\hat{\Phi}_{(1)}|^2(\tilde{\rho}_0)+\sum_{m=0}^{k-2} A^{-4-4m} \tilde{\rho}_0^{4+4m}(\ell+1)^{-2m} |f_{m}|^2(\tilde{\rho}_0) \right]\\
\leq&\: \ell! (\ell+1)!e^{2\gamma M \sqrt{\ell(\ell+1)} \tilde{\rho}_0} \sum_{n=\ell}^{N_{\infty}}\frac{|2s|^{2n}\widetilde{\sigma}^{2n}}{n!^2}B^{n_{\ell}} \tilde{\rho}_0^{-4n_{\ell}}\cdot   \\
&\cdot \sum_{k=0}^{n_{\ell}}B^{-k} \tilde{\rho}_0^{4k}\gamma^{\frac{1}{2}k} \left[|\hat{\Phi}|^2(\tilde{\rho}_0)+|\hat{\Phi}_{(1)}|^2(\tilde{\rho}_0)+\sum_{m=0}^{2\ell-2} A^{-4-4m} \tilde{\rho}_0^{4+4m}(\ell+1)^{-2m} |f_{m}|^2(\tilde{\rho}_0) \right]\\
 \leq&\:C \ell! (\ell+1)!e^{2\gamma M \sqrt{\ell(\ell+1)} \tilde{\rho}_0} \sum_{n=\ell}^{N_{\infty}}\frac{|2s|^{2n}\widetilde{\sigma}^{2n}}{n!^2}B^{n_{\ell}} \tilde{\rho}_0^{-4n_{\ell}}\cdot   \\
&\cdot \left[|\hat{\Phi}|^2(\tilde{\rho}_0)+|\hat{\Phi}_{(1)}|^2(\tilde{\rho}_0)+\sum_{m=0}^{2\ell-2} A^{-4-4m} \tilde{\rho}_0^{4+4m}(\ell+1)^{-2m} |f_{m}|^2(\tilde{\rho}_0) \right]\\
\leq &\: C B^{2\ell} |2s|^{4\ell}\tilde{\rho}_0^{-8\ell}e^{2\gamma M \sqrt{\ell(\ell+1)} \tilde{\rho}_0} \left[|\hat{\Phi}|^2(\tilde{\rho}_0)+|\hat{\Phi}_{(1)}|^2(\tilde{\rho}_0)\right]\\
&+C B^{2\ell} |2s|^{4\ell}\tilde{\rho}_0^{-8\ell} e^{2\gamma M \sqrt{\ell(\ell+1)} \tilde{\rho}_0}  \sum_{k=0}^{2\ell-2} A^{-4-4k} \tilde{\rho}_0^{4+4k}(\ell+1)^{-2k} |f_{k}|^2(\tilde{\rho}_0).
\end{split}
\end{equation*}
We can analogously estimate the remaining boundary terms.

We now take the sum over $n$ of the equations in Proposition \ref{eq:maingevprop} with $\ell$ suitably large. We will then see, in particular, the following double summation if $n>2\ell$:
\begin{equation*}
\begin{split}
(\ell+1)^{-2}&\sum_{n=2\ell+1}^{N_{\infty}}\sum_{k=2\ell+1}^{n-1}\int_{0}^{\tilde{\rho}_0}M^{2(n-k)}\widetilde{\sigma}^{2(n-k+1)}|2s|^{2(n-k)}\frac{(n-2\ell) k!^2}{n!^2}\left[\frac{|2s|^{2(k-1)} \widetilde{\sigma}^{2(k-1)}}{k!^2(k-1)!^2}\right]\cdot\\
& \cdot |2s|^2 e^{2\gamma M \sqrt{\ell(\ell+1)} \tilde{\rho}}|\hat{\Phi}_{(k-1+1)}|^2\, d\tilde{\rho}\\
=&(\ell+1)^{-2}\sum_{k=2\ell+1}^{N_{\infty}}\sum_{n=k+1}^{N_{\infty}}M^{2(n-k)}\widetilde{\sigma}^{2(n-k+1)}|2s|^{2(n-k)}\frac{(n-2\ell) k!^2}{n!^2}\int_{0}^{\tilde{\rho}_0}\left[\frac{|2s|^{2(k-1)} \widetilde{\sigma}^{2(k-1)}}{k!^2(k-1)!^2}\right]\cdot\\
& \cdot |2s|^2 e^{2\gamma M \sqrt{\ell(\ell+1)} \tilde{\rho}}|\hat{\Phi}_{(k-1+1)}|^2\, d\tilde{\rho}
\end{split}
\end{equation*}
We further estimate the factor:
\begin{equation*}
\begin{split}
\sum_{n=k+1}^{N_{\infty}}M^{2(n-k)}\widetilde{\sigma}^{2(n-k+1)}|2s|^{2(n-k)}\frac{(n-2\ell) k!^2}{n!^2}\leq &\:\widetilde{\sigma}^{2}\sum_{m=1}^{\infty}\widetilde{\sigma}^{2m}M^{2m}|2s|^{2m}\frac{(m+k-2\ell) k!^2}{(m+k)!^2}\\
\leq &\: C \widetilde{\sigma}^{2} \sum_{m=0}^{\infty}\frac{1}{(m+1)^2}\leq C \widetilde{\sigma}^{2},
\end{split}
\end{equation*}
for $\ell$ is sufficiently large compared to $M|s|$ and $\widetilde{\sigma}$.

Now, in view of the chosen summation, the terms with a negative sign on the left-hand sides of the inequalities in Proposition \ref{eq:maingevprop} at order $n$ can be directly absorbed into the terms with a positive sign in the inequalities in Proposition \ref{eq:maingevprop} at one order higher or lower, i.e.\ with $n$ replaced by either $n-1$ or $n+1$. Note that when $n=\ell$, only absorption into the estimates at order $n+1$ is necessary.

The absorption into the inequalities with $n$ replaced by $n+1$ is no longer possible at top order, i.e.\ when $n=N_{\infty}$. In that case, we moreover apply a slight variation of the estimates in the proof of Proposition \ref{eq:maingevprop}: we do not apply \eqref{eq:alpha}, but we estimate instead:
\begin{equation*}
\begin{split}
2&(1-\mu)|2s|(1+M\tilde{\rho})^2[2\kappa\tilde{\rho}+\tilde{\rho}^2]e^{2\gamma M \sqrt{\ell(\ell+1)} \tilde{\rho} }|\hat{\Phi}_{(N_{\infty}+1)}|  |\hat{\Phi}_{(N_{\infty}+2)}| \\
\leq &\:\alpha(1-\mu) (n+1)^2 (\tilde{\rho}^2+\kappa^2)e^{2\gamma M \sqrt{\ell(\ell+1)} \tilde{\rho} }|\hat{\Phi}_{(N_{\infty}+1)}|^2\\
&+ 5\frac{|2s|^2}{(N_{\infty}+1)^2}\cdot \alpha^{-1}(1-\mu)(1+M\tilde{\rho})^4 \tilde{\rho}^2e^{2\gamma M \sqrt{\ell(\ell+1)} \tilde{\rho} }|\hat{\Phi}_{(N_{\infty}+2)}|^2.
\end{split}
\end{equation*}
Let $\epsilon>0$ and $\ell$ appropriately large, then we can group the remaining terms in the summation of the equations in Proposition \ref{eq:maingevprop} to obtain \eqref{eq:auxgevreyestsum}.

\end{proof}
\begin{remark}
It is important to note that no terms involving $\hat{\Phi}_{(n)}$ with $n\leq \ell$ appear on the right-hand side of \eqref{prop:gevreyestsum} apart from the $n=0$ and $n=1$ boundary terms. Hence, the $n\geq \ell$ estimates are only coupled via the lowest order boundary terms at $\rho=\rho_0$. This makes it considerably easier to add estimates for the $n\leq \ell$ terms; see Proposition \ref{prop:maingevreysum} below.
\end{remark}

\begin{proposition}
\label{prop:maingevreysum}
For all $\epsilon>0$, there exist a constant $C>0$,  with $C$ independent of $\ell$, such that for $\gamma{\gg}(|s|^{-1}M^{-1}+|s|M)$, $\rho_0^{-1} M^{-1}{\gg}\max\{\gamma,1\}$, $\ell{\gg}1$ and $\kappa_c+\kappa_+\lesssim (\ell+1)^{-1}$, we can estimate for $N_{\infty}>2\ell$ arbitrarily large:

\begin{equation}
\label{eq:maingevreysum0}
\begin{split}
\sum_{n=0}^{\ell}& \frac{|2s|^{2n} \tilde{\sigma}^{2n}}{(\ell(\ell+1))^{n-1}}\int_{0}^{\tilde{\rho}_0}e^{2\gamma M \sqrt{\ell(\ell+1)} \tilde{\rho}} |\hat{\Phi}_{(n)}|^2 \,d\tilde{\rho}\\
&+\ell! (\ell+1)!\sum_{n=\ell}^{N_{\infty}} \frac{|2s|^{2n}\widetilde{\sigma}^{2n}}{(n+1)!^2n!^2} \int_{0}^{\tilde{\rho}_0} (1-\mu \nu^{-1}(1+M\tilde{\rho})^4-\epsilon)(2\kappa\tilde{\rho}+\tilde{\rho}^2)^2e^{2\gamma M \sqrt{\ell(\ell+1)} \tilde{\rho} }|\hat{\Phi}_{(n+2)}|^2\, d\tilde{\rho}\\
&+\ell! (\ell+1)!\sum_{n=\ell}^{N_{\infty}} \frac{|2s|^{2n}\widetilde{\sigma}^{2n}}{(n+1)!^2n!^2}\cdot \\
 & \cdot \int_{0}^{\tilde{\rho}_0} \left(1-4 \beta^{-1}\left(\frac{\max\{-\re(s),0\}}{|s|}\right)^2-\nu \mu- (1+\epsilon)\widetilde{\sigma}^2 -\epsilon \right)|2s|^2e^{2\gamma M \sqrt{\ell(\ell+1)} \tilde{\rho} }|\hat{\Phi}_{(n+1)}|^2\, d\tilde{\rho}\\
&+\ell! (\ell+1)!\sum_{n=\ell}^{N_{\infty}} \frac{|2s|^{2n}\widetilde{\sigma}^{2n}}{(n+1)!^2n!^2}\int_{0}^{\tilde{\rho}_0}  (4-\beta-(\alpha+\alpha^{-1}\widetilde{\sigma}^{-2})(1-\mu)-\epsilon)(n+1)^2(\tilde{\rho}+\kappa)^2e^{2\gamma M \sqrt{\ell(\ell+1)} \tilde{\rho} }|\hat{\Phi}_{(n+1)}|^2 \, d\tilde{\rho}\\
&-\ell! (\ell+1)! \int_{0}^{\tilde{\rho}_0} \frac{|2s|^{2N_{\infty}}\widetilde{\sigma}^{2N_{\infty}}}{(N_{\infty}+1)!^2N_{\infty}!^2}\frac{|2s|^{2}}{(N_{\infty}+2)^2}\cdot 5\alpha^{-1}(1-\mu)(1+M\tilde{\rho})^4\tilde{\rho}^2e^{2\gamma M \sqrt{\ell(\ell+1)} \tilde{\rho} }|\hat{\Phi}_{(N_{\infty}+2)}|^2 \, d\tilde{\rho}\\
\leq &\:C B^{2\ell} \tilde{\rho}_0^{-8\ell}|2s|^{4\ell} e^{2\gamma M \sqrt{\ell(\ell+1)} \tilde{\rho}_0} \left[|\hat{\Phi}|^2(\tilde{\rho}_0)+|\hat{\Phi}_{(1)}|^2(\tilde{\rho}_0)\right]\\
&+C B^{2\ell} \tilde{\rho}_0^{-8\ell} |2s|^{4\ell} e^{2\gamma M \sqrt{\ell(\ell+1)} \tilde{\rho}_0}  \sum_{k=0}^{2\ell-2}B^{-2n}\tilde{\rho}_0^{4n}|2s|^{-2n}(\ell+1)^{-2k} |f_{k}|^2(\tilde{\rho}_0)\\
&+C\ell! (\ell+1)!\sum_{n=\ell}^{N_{\infty}} \frac{|2s|^{2n}\widetilde{\sigma}^{2n}}{(n+1)!^2n!^2}\int_{0}^{\tilde{\rho}_0}e^{2\gamma M \sqrt{\ell(\ell+1)} \tilde{\rho} }|f_n|^2\, d\tilde{\rho}. 
\end{split}
\end{equation}

\end{proposition}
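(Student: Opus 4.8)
\textbf{Proof proposal for Proposition \ref{prop:maingevreysum}.}

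The plan is to take the estimate of Proposition \ref{prop:gevreyestsum}, which already controls the full $n\geq \ell$ Gevrey sum (together with the top-order term and the $\tilde\rho=\tilde\rho_0$ boundary terms), and augment it with a complementary estimate for the low-order range $0\leq n\leq \ell$. The key structural input, emphasised in the remark following Proposition \ref{prop:gevreyestsum}, is that the $n\geq\ell$ estimate is coupled to the lower orders \emph{only} through the $n=0$ and $n=1$ boundary terms $|\hat\Phi|^2(\tilde\rho_0)$ and $|\hat\Phi_{(1)}|^2(\tilde\rho_0)$, which already appear on the right-hand side of \eqref{eq:auxgevreyestsum}. Hence the two ranges can be closed essentially independently and then simply added.

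First I would derive, for each $0\leq n\leq \ell$, an $L^2$-estimate for $\hat\Phi_{(n)}$ on $[0,\tilde\rho_0]$ from the rearranged fixed-frequency equation \eqref{eq:fixedfreqpsihat2} (or equivalently \eqref{eq:homodifiedeq}): take the square norm of both sides against the weight $e^{2\gamma M\sqrt{\ell(\ell+1)}\tilde\rho}$, integrate, integrate by parts in the $J_2$-type term exactly as in the proof of Proposition \ref{eq:maingevprop}, and apply Young's inequality to the zeroth-order and error terms. The crucial point in this range is that $n(n+1)-\ell(\ell+1)$ is \emph{not} small; instead one uses the exponential-weight absorption Lemma \ref{lm:expweightest}, which converts a factor $\gamma^{-2}(\ell(\ell+1))^{-1}$ per derivative, to trade $|\hat\Phi_{(n)}|^2$ against $|\hat\Phi_{(n+1)}|^2$ plus a $\tilde\rho_0$ boundary term. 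Summing these inequalities with the weights $\dfrac{|2s|^{2n}\tilde\sigma^{2n}}{(\ell(\ell+1))^{n-1}}$ (chosen so that the $|\hat\Phi_{(n+1)}|^2$ term produced at order $n$ has precisely the coefficient of the order-$(n+1)$ term, using $\gamma\gg |s|M+|s|^{-1}M^{-1}$), the low-order sum telescopes up to $n=\ell$, where it matches — up to the combinatorial factor $\ell!(\ell+1)!/((\ell+1)!^2\ell!^2)=1/(\ell(\ell+1))^{\ell-1}\cdot(\ldots)$, which is exactly why the weight normalisation at $n=\ell$ is consistent — onto the first term of the $n\geq\ell$ sum in \eqref{eq:auxgevreyestsum}. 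The boundary terms generated at $\tilde\rho_0$ are again controlled by Lemma \ref{lm:analytawayhor} in terms of $|\hat\Phi|^2(\tilde\rho_0)$, $|\hat\Phi_{(1)}|^2(\tilde\rho_0)$ and $\{|f_k|^2(\tilde\rho_0)\}_{k\leq 2\ell-2}$, producing the same $B^{2\ell}\tilde\rho_0^{-8\ell}|2s|^{4\ell}$-type right-hand side as in Proposition \ref{prop:gevreyestsum}.

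Then I would add the low-order estimate to \eqref{eq:auxgevreyestsum}. On the right-hand side, the $n\geq\ell$ sum already carries the inhomogeneous terms $\int e^{2\gamma M\sqrt{\ell(\ell+1)}\tilde\rho}|f_n|^2$ and the boundary contributions; the low-order estimate contributes only more of the same (its $f_n$-terms with $n\leq\ell$ can be absorbed into the $n\geq\ell$ $f$-sum after re-indexing, or simply bounded by the $\tilde\rho_0$-boundary $f_k$-terms via the trace/Sobolev bound on $[0,\tilde\rho_0]$), so no new term type appears and the stated right-hand side of \eqref{eq:maingevreysum0} results after relabelling the summation variable in the $f_k$-sum. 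The main obstacle I anticipate is bookkeeping the precise matching of the combinatorial weights at the junction $n=\ell$: one must check that the coefficient produced by the telescoping low-order sum at $n=\ell+1$ is dominated by — not merely comparable to — the genuinely coercive coefficient of the order-$(\ell+1)$ term appearing in \eqref{eq:auxgevreyestsum}, which requires $\gamma$ large and $\tilde\sigma$ (hence $\epsilon$) chosen after $\gamma$, and that the exponential weight $e^{2\gamma M\sqrt{\ell(\ell+1)}\tilde\rho}$ is the \emph{same} in both ranges so the sums genuinely combine; this is where the hypotheses $\gamma\gg |s|M+|s|^{-1}M^{-1}$, $\rho_0^{-1}M^{-1}\gg\max\{\gamma,1\}$, $\ell\gg 1$ and $\kappa_c+\kappa_+\lesssim(\ell+1)^{-1}$ are all used. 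Everything else is a routine combination of the already-established Propositions \ref{eq:maingevprop} and \ref{prop:gevreyestsum} and Lemmas \ref{lm:expweightest} and \ref{lm:analytawayhor}.
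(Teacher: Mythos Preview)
Your overall plan—add a low-order estimate for $0\leq n\leq\ell$ to the $n\geq\ell$ estimate of Proposition~\ref{prop:gevreyestsum}, using Lemma~\ref{lm:expweightest} to climb in $n$ and Lemma~\ref{lm:analytawayhor} for the $\tilde\rho_0$–boundary terms—is exactly the paper's strategy. The difference is that the paper carries this out more simply, and your extra step creates a genuine difficulty.

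The paper does \emph{not} use the equation \eqref{eq:fixedfreqpsihat2} at all in the range $0\leq n\leq\ell$. It applies Lemma~\ref{lm:expweightest} directly to $h=\hat\Phi_{(n)}$, iterated $\ell+1-n$ times, to get
\[
\int_0^{\tilde\rho_0} e^{2\gamma M\sqrt{\ell(\ell+1)}\tilde\rho}\,|\hat\Phi_{(n)}|^2\,d\tilde\rho
\ \lesssim\ (M\gamma)^{-2(\ell+1-n)}(\ell(\ell+1))^{-(\ell+1-n)}\!\int_0^{\tilde\rho_0}\! e^{\ldots}|\hat\Phi_{(\ell+1)}|^2\,d\tilde\rho
+\text{bdy}(\tilde\rho_0),
\]
then sums over $n$ with the weights $|2s|^{2n}\tilde\sigma^{2n}(\ell(\ell+1))^{-(n-1)}$ and absorbs the resulting $\hat\Phi_{(\ell+1)}$–integral (which carries a factor $(M\gamma|2s|\tilde\sigma)^{-2}$) into the $n=\ell$ term on the left of \eqref{eq:auxgevreyestsum}. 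The boundary sum is then reduced via Lemma~\ref{lm:analytawayhor}. No $f_n$ with $n<\ell$ ever appears, which is why the right-hand side of \eqref{eq:maingevreysum0} has only $\int|f_n|^2$ for $n\geq\ell$ together with pointwise $|f_k|^2(\tilde\rho_0)$.

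Your proposed route—square the equation at each $n\leq\ell$ and then invoke Lemma~\ref{lm:expweightest} to handle the bad $[n(n+1)-\ell(\ell+1)]^2|\hat\Phi_{(n)}|^2$—runs into trouble. That coefficient is of size $\ell^4$, while a single application of Lemma~\ref{lm:expweightest} gains only $(M\gamma)^{-2}(\ell(\ell+1))^{-1}$; the residual factor $\ell(\ell+1)(M\gamma)^{-2}$ is not small for large $\ell$ (recall $\gamma$ is fixed by $|s|$ and $\rho_0$, \emph{before} $\ell\to\infty$), so it cannot be absorbed into the $|2s|^2|\hat\Phi_{(n+1)}|^2$ term as you suggest. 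Relatedly, using the equation would generate $\int_0^{\tilde\rho_0}|f_n|^2$ for $n<\ell$, and your suggestions for disposing of these (``re-indexing'' into the $n\geq\ell$ sum, or a trace/Sobolev bound by $|f_k|^2(\tilde\rho_0)$) both go the wrong way. The fix is simply to drop the equation entirely in this range and iterate Lemma~\ref{lm:expweightest} as above; then the telescoping you describe is precisely the summed version of that iteration, and it closes.
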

\begin{proof}
We apply Lemma \ref{lm:expweightest} $\ell+1-n$ times to estimate for all $0\leq n\leq \ell$
\begin{equation*}
\begin{split}
\int_{0}^{\tilde{\rho}_0}& e^{2\gamma M \sqrt{\ell(\ell+1)} \tilde{\rho}}\frac{ |2s|^{2n} \tilde{\sigma}^{2n}|\hat{\Phi}_{(n)}|^2}{(\ell(\ell+1))^{n-1}}\,d\tilde{\rho}\leq  C |2s|^{2n} \tilde{\sigma}^{2n} (M\gamma)^{-2(\ell+1-n)} (\ell(\ell+1))^{-\ell} \int_{0}^{\tilde{\rho}_0} e^{2\gamma M \sqrt{\ell(\ell+1)} \tilde{\rho}}|\hat{\Phi}_{(\ell+1)}|^2\,d\tilde{\rho}\\
&+ C(M\gamma)^{-1}(\ell(\ell+1))^{-\frac{1}{2}} |2s|^{2n} \tilde{\sigma}^{2n}\sum_{k=n}^{\ell} (M\gamma)^{-2(k-n)}   (\ell(\ell+1))^{-(k-1)}e^{2\gamma M \sqrt{\ell(\ell+1)} \tilde{\rho}_0} |\hat{\Phi}_{(k)}|^2(\tilde{\rho}_0).
\end{split}
\end{equation*}
After summing over $n$, and using that $\ell{\gg} M\gamma |2s|$, we obtain
\begin{equation*}
\begin{split}
\sum_{n=0}^{\ell}& |2s|^{2n} \tilde{\sigma}^{2n}\sum_{k=n}^{\ell} (M\gamma)^{-2(k-n)}   (\ell(\ell+1))^{-(k-1)}e^{2\gamma M \sqrt{\ell(\ell+1)} \tilde{\rho}_0} |\hat{\Phi}_{(k)}|^2(\tilde{\rho}_0)\\
\leq&\: \sum_{n=0}^{\ell}(M\gamma)^{-2n}(\ell(\ell+1))^{1-n} \left[\sum_{k=0}^{n}  |2s|^{2k}\tilde{\sigma}^{2k}(M\gamma)^{2k}\right]e^{2\gamma M \sqrt{\ell(\ell+1)} \tilde{\rho}_0} |\hat{\Phi}_{(n)}|^2(\tilde{\rho}_0)\\
\leq&\:\sum_{n=0}^{\ell}\frac{|2s|^{2n}\tilde{\sigma}^{2n}}{(\ell(\ell+1))^{n-1}}e^{2\gamma M \sqrt{\ell(\ell+1)} \tilde{\rho}_0} |\hat{\Phi}_{(n)}|^2(\tilde{\rho}_0)
\end{split}
\end{equation*}
and hence,
\begin{equation*}
\begin{split}
\sum_{n=0}^{\ell}& \frac{|2s|^{2n} \tilde{\sigma}^{2n}}{(\ell(\ell+1))^{n-1}}\int_{0}^{\tilde{\rho}_0}e^{2\gamma M \sqrt{\ell(\ell+1)} \tilde{\rho}} |\hat{\Phi}_{(n)}|^2 \,d\tilde{\rho}\\
\leq&\: C (M|2s| \tilde{\sigma} \gamma)^{-2}\frac{|s|^{2(\ell+1)} \tilde{\sigma}^{2(\ell+1)}}{(\ell(\ell+1))^{\ell}} \int_{0}^{\tilde{\rho}_0} e^{2\gamma M \sqrt{\ell(\ell+1)} \tilde{\rho}}|\hat{\Phi}_{(\ell+1)}|^2\,d\tilde{\rho}\\
&+ C(M\gamma)^{-1}(\ell(\ell+1))^{\frac{1}{2}}\sum_{n=0}^{\ell}\frac{|2s|^{2n}\tilde{\sigma}^{2n}}{(\ell(\ell+1))^{n}}e^{2\gamma M \sqrt{\ell(\ell+1)} \tilde{\rho}_0} |\hat{\Phi}_{(n)}|^2(\tilde{\rho}_0).
\end{split}
\end{equation*}

By Lemma \ref{lm:analytawayhor} we have that there exists a constant $B>0$ such that
\begin{equation*}
\begin{split}
(M\gamma)^{-1}&(\ell(\ell+1))^{\frac{1}{2}}\sum_{n=0}^{\ell}\frac{|2s|^{2n}\tilde{\sigma}^{2n}}{(\ell(\ell+1))^{n}}e^{2\gamma M \sqrt{\ell(\ell+1)} \tilde{\rho}_0} |\hat{\Phi}_{(n)}|^2(\tilde{\rho}_0)\\
\leq&\:  (M\gamma)^{-1} (\ell(\ell+1))^{\frac{1}{2}}\max_{0\leq n\leq \ell-2}\{B^{2n}\tilde{\rho}_0^{-4n}|2s|^{2n}\} (|\hat{\Phi}|^2(\tilde{\rho_0})+|\hat{\Phi}_{(1)}|^2(\tilde{\rho_0}))\\
&+  (M\gamma)^{-1} (\ell(\ell+1))^{\frac{1}{2}} B^{2\ell}\tilde{\rho}_0^{-4\ell}|2s|^{2\ell}\sum_{n=0}^{\ell-2}B^{-2n}\tilde{\rho}_0^{4n}|2s|^{-2n}(\ell(\ell+1))^{-2(n+2)} |f_n|^2(\tilde{\rho}_0).
\end{split}
\end{equation*}
By combining the above results with the estimates in Proposition \ref{prop:gevreyestsum}, we obtain \eqref{eq:maingevreysum0}.
\end{proof}

In order to ensure that the terms on the left-hand side of \eqref{eq:maingevreysum0} are positive-definite for $\kappa$ and $\alpha$ suitably small, we need to impose several \emph{compatibility conditions} on the parameters $\alpha,\beta,\mu,\nu,\widetilde{\sigma}$. 

For the sake of convenience, we introduce the notation $\varsigma:=\frac{|\re(s)|^2}{|s|^2}$. When $\re(s)<0$, the compatibility conditions are: if $\mu>0$, then
\begin{align}
\label{paramconst1}
0<\mu<&\:\nu,\\
\label{paramconst2}
4-\alpha(1-\mu)-\beta-\widetilde{\sigma}^{-2}\alpha^{-1}(1-\mu)>&\:0,\\
\label{paramconst3}
1-4\beta^{-1}\varsigma-\widetilde{\sigma}^2-\mu^2>&\:0.
\end{align}
If $\re(s)\geq 0$, we instead obtain:
\begin{align}
\label{paramconst1b}
\mu<&\:\nu,\\
\label{paramconst2b}
4-\alpha(1-\mu)-\beta-\widetilde{\sigma}^{-2}\alpha^{-1}(1-\mu)>&\:0,\\
\label{paramconst3b}
1-\mu^2>&\:0.
\end{align}
Note that if we take $\mu=0$, we can omit the parameter $\nu$ in the above expressions.

  \begin{figure}[H]
	\begin{center}
				\includegraphics[scale=0.5]{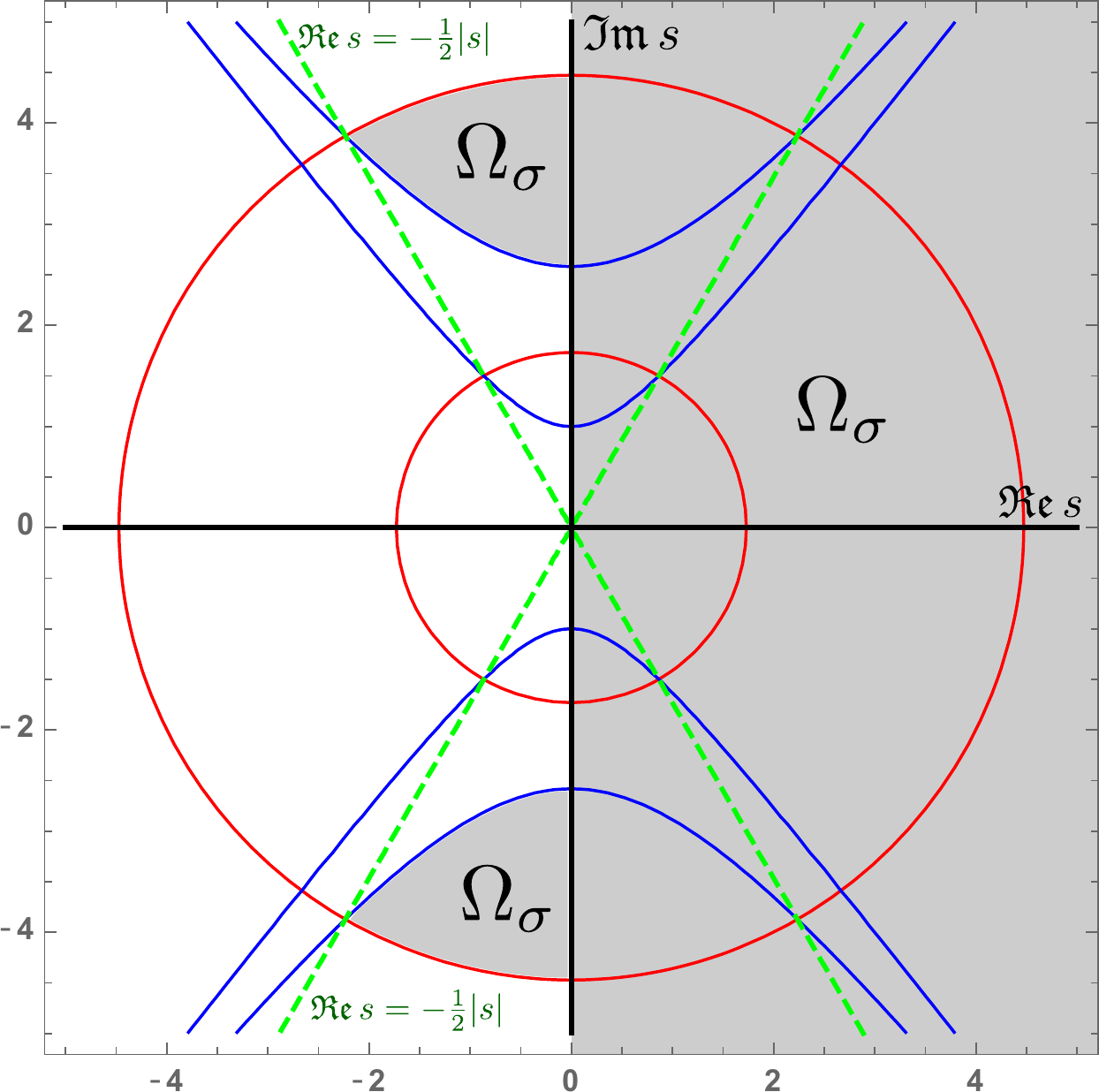}
\end{center}
\vspace{-0.2cm}
\caption{An example of a domain $\Omega_{\sigma}$, which is represented by the shaded region in the figure. The points satisfying $\re(s)=-\frac{1}{2}|s|$ are represented by the two dashed lines in the left half-plane in the picture.}
	\label{fig:res}
\end{figure}
\begin{lemma}
\label{lm:allowedvaluess}

\begin{enumerate}[\rm (i)]
\item
There exist $\alpha,\beta,\mu,\nu$ and $\widetilde{\sigma}^2$ satisfying \eqref{paramconst1}, \eqref{paramconst2} and \eqref{paramconst3} if and only if, when $\re(s)< 0$, we have that $|\re s|<\frac{1}{2}|s|$, or equivalently $|\textnormal{arg}(s)|<\frac{2}{3}\pi$.
\item
Fix the product $\sigma:=2\widetilde{\sigma} |s| \in \R$. Then, there exist $\alpha,\beta,\mu,\nu$ satisfying \eqref{paramconst1}, \eqref{paramconst2} and \eqref{paramconst3} if and only if, when $\re s< 0$, we have that:
\begin{align*}
|s|^2<&\:\sigma^2,\\
3|\im s|^2-5|\re s |^2>&\:\sigma^2.
\end{align*}
If we denote 
\begin{equation*}
\Omega_{\sigma}:=\left\{(x,y)\in \C\:|\: x<0,\: x^2+y^2<\sigma^2,\: 3y^2-5x^2>{\sigma}^2\right\}\cup\{(x,y)\in \C\:| x\geq 0,\, (x,y)\neq (0,0)\}\subset \C
\end{equation*}
then
\begin{equation*}
\bigcup_{\sigma \in \R_{>0}}\Omega_{\sigma}= \left\{(x,y)\in \C\:| x<0,\, x^2<\frac{1}{4}(x^2+y^2)\right\}\cup\{(x,y)\in \C\:| x\geq 0,\, (x,y)\neq (0,0)\}
\end{equation*}
\item Fix $s\in \{-\frac{1}{2}|s|\leq \re s\leq 0 \}$. There exist $\alpha,\beta,\mu,\nu$ and $\widetilde{\sigma}^2$ satisfying \eqref{paramconst1}, \eqref{paramconst2} and \eqref{paramconst3} if
\begin{equation*}
\frac{1}{4}<\tilde{\sigma}^2<\frac{3}{4}-2\left(\frac{|\re s|}{|s|}\right)^2.
\end{equation*}
\end{enumerate}
\end{lemma}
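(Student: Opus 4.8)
The statement to prove is Lemma~\ref{lm:allowedvaluess}, which asserts the equivalence (in (i)) and explicit description (in (ii)) of the set of $s$ for which the compatibility conditions \eqref{paramconst1}, \eqref{paramconst2}, \eqref{paramconst3} admit admissible parameters $\alpha,\beta,\mu,\nu,\widetilde{\sigma}^2$, together with the sufficient condition (iii). My plan is to treat this as a pure exercise in elementary real-variable optimization: the four parameters are coupled through three inequalities and I simply need to characterize when the feasible region is non-empty.

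\textbf{Step 1: Reduce \eqref{paramconst2} to a single constraint on $\beta$.} For fixed $\mu\in(0,1)$ and $\widetilde{\sigma}^2>0$, the quantity $\alpha(1-\mu)+\widetilde{\sigma}^{-2}\alpha^{-1}(1-\mu)$ is minimized over $\alpha>0$ by AM--GM at $\alpha=\widetilde{\sigma}^{-1}$, with minimum $2(1-\mu)\widetilde{\sigma}^{-1}$. Hence \eqref{paramconst2} is satisfiable (for some $\alpha$) if and only if $\beta<4-2(1-\mu)\widetilde{\sigma}^{-1}$. Since $\nu$ only appears in \eqref{paramconst1} as an upper bound that can be taken arbitrarily large, \eqref{paramconst1} imposes no real restriction beyond $\mu>0$. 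So the problem collapses to: find $\mu\in(0,1)$, $\widetilde{\sigma}^2>0$, $\beta>0$ with $\beta<4-2(1-\mu)\widetilde{\sigma}^{-1}$ and $1-4\beta^{-1}\varsigma-\widetilde{\sigma}^2-\mu^2>0$, where $\varsigma=|\re s|^2/|s|^2$.

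\textbf{Step 2: Optimize over $\beta$ and $\mu$.} From \eqref{paramconst3} we need $\beta>4\varsigma/(1-\widetilde{\sigma}^2-\mu^2)$ (which forces $\widetilde{\sigma}^2+\mu^2<1$), so feasibility in $\beta$ requires $4\varsigma/(1-\widetilde{\sigma}^2-\mu^2)<4-2(1-\mu)\widetilde{\sigma}^{-1}$, i.e.
\begin{equation*}
\varsigma<(1-\widetilde{\sigma}^2-\mu^2)\Bigl(1-\tfrac{1-\mu}{2\widetilde{\sigma}}\Bigr).
\end{equation*}
I then maximize the right-hand side over admissible $\mu\in(0,1)$ and $\widetilde{\sigma}\in(0,1)$. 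Taking $\mu\downarrow 0$ makes the factor $1-\widetilde{\sigma}^2-\mu^2$ as large as possible while $1-\frac{1-\mu}{2\widetilde{\sigma}}$ only decreases slightly; a short computation shows the supremum as $\mu\to 0$ is $\sup_{\widetilde{\sigma}\in(0,1)}(1-\widetilde{\sigma}^2)(1-\frac{1}{2\widetilde{\sigma}})$. Differentiating $g(\widetilde{\sigma})=(1-\widetilde{\sigma}^2)(1-\frac{1}{2\widetilde{\sigma}})$ and solving $g'=0$ should yield the critical value and the supremal value $\tfrac14$, i.e.\ the condition becomes $\varsigma<\tfrac14$, equivalently $|\re s|<\tfrac12|s|$, equivalently $|\mathrm{arg}(s)|<\tfrac23\pi$. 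Since the supremum is not attained, I must check that $\varsigma<\tfrac14$ is genuinely sufficient (pick $\mu$ small and $\widetilde{\sigma}$ near the critical point) and that $\varsigma\geq\tfrac14$ genuinely fails — both follow from continuity. This proves (i).

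\textbf{Step 3: The fixed-$\sigma$ version (ii) and the sufficient condition (iii).} For (ii) I substitute $\widetilde{\sigma}=\sigma/(2|s|)$, so $\widetilde{\sigma}^2=\sigma^2/(4|s|^2)$ is now determined by $s$ and $\sigma$; the only free parameters are $\alpha,\beta,\mu,\nu$. Repeating Steps 1--2 but now with $\widetilde{\sigma}^2$ \emph{fixed}: feasibility requires $\widetilde{\sigma}^2<1$ (giving $|s|^2<\sigma^2$ after clearing denominators, which is the first listed condition) and, taking $\mu\downarrow 0$, the condition $\varsigma<(1-\widetilde{\sigma}^2)(1-\frac{1}{2\widetilde{\sigma}})$. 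Writing $\varsigma=|\re s|^2/|s|^2$ and $\widetilde{\sigma}^2=\sigma^2/(4|s|^2)$ and clearing denominators should produce exactly $3|\im s|^2-5|\re s|^2>\sigma^2$ after using $|s|^2=|\re s|^2+|\im s|^2$; I will carry out this algebraic rearrangement carefully since it is the one genuinely computational point. The union formula $\bigcup_\sigma\Omega_\sigma$ then follows by letting $\sigma$ range: for $x<0$ with $x^2<\frac14(x^2+y^2)$ one checks $3y^2-5x^2>0$, so a small enough $\sigma$ works, and conversely. Finally (iii): with $\frac14<\widetilde{\sigma}^2<\frac34-2(|\re s|/|s|)^2$ one verifies directly that $\mu$ small, $\alpha=\widetilde{\sigma}^{-1}$, and $\beta$ chosen in the (nonempty) window between $4\varsigma/(1-\widetilde{\sigma}^2-\mu^2)$ and $4-2(1-\mu)\widetilde{\sigma}^{-1}$ satisfies all three conditions; the stated bounds are precisely what makes that window nonempty at $\mu=0$.

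\textbf{Main obstacle.} The only real work is the one-variable calculus in Step~2 — confirming that $\sup_{\widetilde{\sigma}\in(0,1)}(1-\widetilde{\sigma}^2)(1-\frac{1}{2\widetilde{\sigma}})=\tfrac14$ and locating the maximizer — and the bookkeeping in Step~3 translating the inequality in $\varsigma$ and $\widetilde{\sigma}$ into the clean quadratic form $3|\im s|^2-5|\re s|^2>\sigma^2$. Everything else (the AM--GM reduction, discarding $\nu$, the $\mu\downarrow 0$ limit, continuity arguments for strictness) is routine. The $\re s\geq 0$ case is trivial since \eqref{paramconst3b} reads $1-\mu^2>0$ and is met by any $\mu\in(0,1)$, so no constraint on $s$ arises there beyond $s\neq 0$, consistent with $\Omega_\sigma$ containing the entire right half-plane minus the origin.
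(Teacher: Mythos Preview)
Your reduction in Step~1 is fine: AM--GM gives the optimal $\alpha=\widetilde{\sigma}^{-1}$, and the problem does collapse to finding $\mu\in(0,1)$, $\widetilde{\sigma}>0$ with
\[
\varsigma<(1-\widetilde{\sigma}^2-\mu^2)\Bigl(1-\tfrac{1-\mu}{2\widetilde{\sigma}}\Bigr)=:h(\mu,\widetilde{\sigma}).
\]
The gap is in Step~2: the claim that $\mu\downarrow 0$ is optimal is wrong. At $\mu=0$ one has $\partial_\mu h=(1-\widetilde{\sigma}^2)/(2\widetilde{\sigma})>0$, so the maximum lies at an interior value of $\mu$. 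Concretely, your one-variable function $g(\widetilde{\sigma})=(1-\widetilde{\sigma}^2)(1-\tfrac{1}{2\widetilde{\sigma}})$ does \emph{not} have supremum $\tfrac14$: its critical point solves $4\widetilde{\sigma}^3-\widetilde{\sigma}^2-1=0$, giving $\widetilde{\sigma}\approx 0.725$ and $g\approx 0.147$. So with $\mu\to 0$ you would only prove sufficiency of $\varsigma<0.147$, missing the sharp threshold $\varsigma<\tfrac14$, and your ``only if'' direction would be false.

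The paper instead optimizes $\mu$ nontrivially. Writing $2b=4-\beta$, it maximizes $1-\mu^2-(1-\mu)^2b^{-2}$ over $\mu$ to find $\mu=1/(1+b^2)$, then maximizes $2(2-b)b^2/(1+b^2)$ over $b$ to obtain $b=1$, hence $\beta=2$, $\mu=\tfrac12$, $\alpha=2$, and the threshold $4\varsigma<1$. Equivalently, in your variables, the interior critical point of $h$ lies on the curve $\widetilde{\sigma}^2=\mu(1-\mu)$ and the global maximum is $h(\tfrac12,\tfrac12)=\tfrac14$. This same specific choice $(\alpha,\beta,\mu)=(2,2,\tfrac12)$ is what yields~(iii): plugging in gives \eqref{paramconst2}$\Leftrightarrow\widetilde{\sigma}^2>\tfrac14$ and \eqref{paramconst3}$\Leftrightarrow\widetilde{\sigma}^2<\tfrac34-2\varsigma$ directly. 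Your proposed verification of~(iii) via ``$\mu$ small'' fails for the same reason---for instance at $\widetilde{\sigma}=0.55$ the window in (iii) allows $\varsigma$ up to about $0.22$, but $h(0,0.55)\approx 0.063$. The fix throughout is to take $\mu=\tfrac12$, not $\mu\to 0$; once you do that, your framework recovers the paper's argument.
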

\begin{proof}
There exist $\alpha,\beta,\mu$ such that \eqref{paramconst2} and \eqref{paramconst3} hold if and only if
\begin{equation*}
\frac{1}{(4-\beta)(1-\mu)^{-1}\alpha-\alpha^2}<\widetilde{\sigma}^2<1-4\beta^{-1}\varsigma-\mu^2.
\end{equation*}
As a first step, we minimise the lower bound for $\widetilde{\sigma}^2$ in $\alpha$. It it easy to see that it is minimized if
\begin{equation*}
\alpha=\frac{1}{2}(4-\beta)(1-\mu)^{-1},
\end{equation*}
so with this choice of $\alpha$, we obtain
\begin{equation}
\label{eq:boundssigma}
\frac{4 (1-\mu)^2}{(4-\beta)^2}<\widetilde{\sigma}^2<1-4\beta^{-1}\varsigma-\mu^2.
\end{equation}

We now assume $\re(s)<0$.  We can find a $\widetilde{\sigma}^2$ satisfying \eqref{eq:boundssigma} if and only if 
\begin{equation*}
4\varsigma<2(2-b)\left[1-\mu^2-(1-\mu)^2b^{-2}\right],
\end{equation*}
where $2b:=4-\beta$.

Now, we will find a $\mu\in [0,1]$ such that
\begin{equation*}
1-\mu^2-(1-\mu)^2b^{-2}
\end{equation*}
is maximized.

Equivalently, we would like
\begin{equation*}
-2\mu+2(1-\mu)b^{-2}=0
\end{equation*}
which is the case when
\begin{equation*}
\mu=\frac{b^{-2}}{1+b^{-2}}=\frac{1}{1+b^2}.
\end{equation*}
Using the above choice of $\mu$, we are left with
\begin{equation*}
4\varsigma<2(2-b)\left[1-\frac{1}{(1+b^2)^2}-\frac{b^4}{b^2(1+b^2)^2}\right]=\frac{2(2-b)b^2}{1+b^2},
\end{equation*}
We have that
\begin{equation*}
\frac{d}{db}\left(\frac{2(2-b)b^2}{1+b^2}\right)=-\frac{2b(b-1)(b^2+b+4)}{(b^2+1)^2},
\end{equation*}
with $b\leq 2$. The above expression must attain a maximum when $b\in (0,2)$ at $b=1$ and we can conclude that
\begin{equation*}
4\varsigma<1,
\end{equation*}
or equivalently, we need to restrict
\begin{equation*}
|\re s|<\frac{1}{2}|s|,
\end{equation*}
after taking $\beta=2$, $\mu=\frac{1}{2}$ and $\alpha=2$.

Now assume $\re s\geq 0$. Then we can take $\mu=0$, $\beta=0$ and $\alpha=2$, and \eqref{paramconst2b} reduces to:
\begin{equation*}
\tilde{\sigma}^2>\frac{1}{4},
\end{equation*}
so no restriction on the allowed values of $s$ is required in this case. We have established $\rm (i)$.

We now turn to $\rm (ii)$. Fix $\sigma\in \R_{>0}$. Suppose $\re(s)<0$. From the above, we can moreover infer that with the choice $\alpha=2$, $\mu=\frac{1}{2}$ and $b=1$, we have that
\begin{equation}
\label{eq:conditionsigma}
\frac{1}{4}<\widetilde{\sigma}^2<\frac{3}{4}-2\varsigma.
\end{equation}
Hence, if we fix $\sigma^2=\widetilde{\sigma}^2|2s|^2$, we have that
\begin{align*}
|s|^2<&\:{\sigma}^2,\\
3|\im s|^2-5|\re s|^2>&\:{\sigma}^2.
\end{align*}
The statements in $\rm (ii)$ and $\rm (iii)$ now follow immediately.
\end{proof}

\begin{remark}
The choice of domains $\Omega_{\sigma}$ in Lemma \ref{lm:allowedvaluess} is sufficient for foliating $\{\re(z)>\frac{1}{2}|z|\}$, but it is not optimally chosen. One can for example show that all values of $s$ the subset $\{|z|>S\}\cap \{\re(z)>-C_0\}$, with arbitrary $C_0>0$ and $S>0$ suitably large can be arranged to satisfy \eqref{paramconst1}, \eqref{paramconst2} and \eqref{paramconst3} with a single choice of $\sigma$.
\end{remark}

\begin{corollary}
\label{cor:maingevreysumbound}
Fix $\sigma \in \R_{>0}$ and let $s\in \Omega_{\sigma}\subset \C$. Assume that $\kappa_c+\kappa_+\lesssim (\ell+1)^{-1}$. Then there exist $M\tilde{\rho}_0, \gamma, L>0$ suitably large depending on $s$, and constants $A, C>0$, depending on $s$, $M\tilde{\rho}_0$ and $\gamma$, such that for all $\ell\geq L$ and $N_{\infty}>2\ell$:
\begin{equation}
\label{eq:maingevreysum}
\begin{split}
\sum_{n=0}^{\ell}& \frac{|2s|^{2n} \tilde{\sigma}^{2n}}{(\ell(\ell+1))^{n-1}}\int_{0}^{\tilde{\rho}_0}e^{2\gamma M \sqrt{\ell(\ell+1)} \tilde{\rho}} |\hat{\Phi}_{(n)}|^2 \,d\tilde{\rho}\\
&+\ell! (\ell+1)!\sum_{n=\ell}^{N_{\infty}} \frac{|2s|^{2n}\widetilde{\sigma}^{2n}}{(n+1)!^2n!^2} \int_{0}^{\tilde{\rho}_0} (2\kappa\tilde{\rho}+\tilde{\rho}^2)^2e^{2\gamma M \sqrt{\ell(\ell+1)} \tilde{\rho} }|\hat{\Phi}_{(n+2)}|^2\, d\tilde{\rho}\\
&+\ell! (\ell+1)!\sum_{n=\ell}^{N_{\infty}} \frac{|2s|^{2n}\widetilde{\sigma}^{2n}}{(n+1)!^2n!^2} \int_{0}^{\tilde{\rho}_0} [|2s|^2+(n+1)^2(\tilde{\rho}+\kappa)^2]e^{2\gamma M \sqrt{\ell(\ell+1)} \tilde{\rho} }|\hat{\Phi}_{(n+1)}|^2\, d\tilde{\rho}\\
&-C\ell! (\ell+1)! \int_{0}^{\tilde{\rho}_0} \frac{|2s|^{2N_{\infty}}\widetilde{\sigma}^{2N_{\infty}}}{(N_{\infty}+1)!^2N_{\infty}!^2}\frac{|2s|^{2}}{(N_{\infty}+1)^2}\tilde{\rho}^2e^{2\gamma M \sqrt{\ell(\ell+1)} \tilde{\rho} }|\hat{\Phi}_{(N_{\infty}+2)}|^2 \, d\tilde{\rho}\\
\leq &\: A^{\ell}  e^{2\gamma M \sqrt{\ell(\ell+1)} \tilde{\rho}_0} \left[|\hat{\Phi}|^2(\tilde{\rho}_0)+|\hat{\Phi}_{(1)}|^2(\tilde{\rho}_0)\right]\\
&+A^{\ell} \sum_{n=0}^{\ell-1} \frac{|2s|^{2n} \tilde{\sigma}^{2n}}{(\ell(\ell+1))^{n-1}}\int_{0}^{\tilde{\rho}_0}e^{2\gamma M \sqrt{\ell(\ell+1)} \tilde{\rho}} |f_n|^2 \,d\tilde{\rho}+A^{\ell}\ell! (\ell+1)!\sum_{n=\ell}^{N_{\infty}} \frac{|2s|^{2n}\widetilde{\sigma}^{2n}}{(n+1)!^2n!^2}\int_{0}^{\tilde{\rho}_0}e^{2\gamma M \sqrt{\ell(\ell+1)} \tilde{\rho} }|f_n|^2\, d\tilde{\rho}.
\end{split}
\end{equation}
\end{corollary}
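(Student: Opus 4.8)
The proof of Corollary \ref{cor:maingevreysumbound} will proceed from Proposition \ref{prop:maingevreysum} by choosing the free parameters $\alpha,\beta,\mu,\nu,\widetilde{\sigma}$ according to Lemma \ref{lm:allowedvaluess} and then disposing of the remaining non-positive terms on the left-hand side of \eqref{eq:maingevreysum0} by two distinct mechanisms: for the bulk orders $\ell\leq n<N_{\infty}$ we use the sign-definiteness coming from $s\in\Omega_\sigma$, and for the single top order $n=N_{\infty}$ we invoke the enhanced red-shift. First I would fix $s\in\Omega_\sigma$ and apply part (ii)--(iii) of Lemma \ref{lm:allowedvaluess}: this furnishes $\alpha,\beta,\mu,\nu$ and a value $\widetilde{\sigma}^2\in(\tfrac14,\tfrac34-2\varsigma)$ with $\sigma=2\widetilde{\sigma}|s|$, such that the three structural coefficients
\begin{equation*}
1-\mu\nu^{-1}(1+M\tilde\rho)^4-\epsilon,\qquad 1-4\beta^{-1}\varsigma-\nu\mu-(1+\epsilon)\widetilde{\sigma}^2-\epsilon,\qquad 4-\beta-(\alpha+\alpha^{-1}\widetilde{\sigma}^{-2})(1-\mu)-\epsilon
\end{equation*}
appearing in \eqref{eq:maingevreysum0} are all strictly positive once $\epsilon$ and $M\tilde\rho_0$ are taken small enough (the $(1+M\tilde\rho)^4$ factor is controlled by $1+C M\tilde\rho_0$, and $\epsilon$ is a free smallness parameter in Proposition \ref{prop:maingevreysum}). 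At this point all the explicitly displayed integrals in \eqref{eq:maingevreysum0} except the final $n=N_{\infty}$ term are non-negative, and they dominate the left-hand side of \eqref{eq:maingevreysum}; the $\sum_{n=0}^{\ell}$ term and the $[|2s|^2+(n+1)^2(\tilde\rho+\kappa)^2]$ combination in \eqref{eq:maingevreysum} are obtained directly by keeping the $\kappa$-good-sign terms and lumping the $|2s|^2|\hat\Phi_{(n+1)}|^2$ and $(n+1)^2(\tilde\rho+\kappa)^2|\hat\Phi_{(n+1)}|^2$ contributions together.

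The main obstacle is the top-order term: on the left-hand side of \eqref{eq:maingevreysum0} there is a negative multiple of $\int_0^{\tilde\rho_0}\frac{|2s|^{2N_\infty}\widetilde\sigma^{2N_\infty}}{(N_\infty+1)!^2 N_\infty!^2}\frac{|2s|^2}{(N_\infty+2)^2}(1-\mu)(1+M\tilde\rho)^4\tilde\rho^2\,e^{2\gamma M\sqrt{\ell(\ell+1)}\tilde\rho}|\hat\Phi_{(N_\infty+2)}|^2\,d\tilde\rho$ which cannot be absorbed into a higher-order estimate since there is none. Here I would use that $\kappa>0$: the coefficient of $|\hat\Phi_{(N_\infty+2)}|^2$ in the $n=N_\infty$ instance of Proposition \ref{eq:maingevprop} (first, positive, $J_0$-type term) is $(2\kappa\tilde\rho+\tilde\rho^2)^2$, and since $\kappa>0$ this dominates $\tilde\rho^2$ near $\tilde\rho=0$ — quantitatively, $(2\kappa\tilde\rho+\tilde\rho^2)^2\geq 4\kappa^2\tilde\rho^2$, so as long as the negative coefficient $\frac{5\alpha^{-1}(1-\mu)}{(N_\infty+2)^2}(1+M\tilde\rho_0)^4\leq 4\kappa^2(1-C\epsilon)$, which holds once $N_\infty$ exceeds a threshold $N_\kappa$ with $N_\kappa\to\infty$ as $\kappa\downarrow0$, the top-order negative term is absorbed into the top-order positive term $(2\kappa\tilde\rho+\tilde\rho^2)^2|\hat\Phi_{(N_\infty+2)}|^2$. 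This is precisely the enhanced red-shift mechanism: the absorption improves with $N_\infty$ because of the $(N_\infty+2)^{-2}$ gain, and it is exactly why the estimate is valid for $\kappa>0$ but degenerates as $\kappa\downarrow0$. Since the final conclusion \eqref{eq:maingevreysum} retains a (harmless) negative top-order term with a constant $C$, I would only need to absorb the \emph{excess} beyond what \eqref{eq:maingevreysum} already permits, which is an even weaker requirement.

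Having arranged positivity, the right-hand side of \eqref{eq:maingevreysum0} is already in the desired shape modulo bookkeeping of constants: the boundary terms $|\hat\Phi|^2(\tilde\rho_0)+|\hat\Phi_{(1)}|^2(\tilde\rho_0)$ carry a prefactor $CB^{2\ell}\tilde\rho_0^{-8\ell}|2s|^{4\ell}$, which I would simply rename $A^\ell$ by absorbing $B^2\tilde\rho_0^{-8}|2s|^4$ (and a harmless extra power) into a single large constant $A$ depending on $s$, $M\tilde\rho_0$, $\gamma$; the $|f_k|^2(\tilde\rho_0)$ boundary terms and the two bulk sums $\sum_{n=0}^{\ell-1}\frac{|2s|^{2n}\widetilde\sigma^{2n}}{(\ell(\ell+1))^{n-1}}\int e^{\cdots}|f_n|^2$ and $\ell!(\ell+1)!\sum_{n=\ell}^{N_\infty}\frac{|2s|^{2n}\widetilde\sigma^{2n}}{(n+1)!^2 n!^2}\int e^{\cdots}|f_n|^2$ match directly, again after absorbing constants into $A^\ell$. (One should note that the pointwise $|f_k|^2(\tilde\rho_0)$ boundary terms can themselves be converted into the spacetime $\int_0^{\tilde\rho_0}e^{\cdots}|f_n|^2$ sums via a trace/Sobolev bound on $[0,\tilde\rho_0]$ at the cost of an $\ell$-independent constant, so they do not contribute anything new to the right-hand side of \eqref{eq:maingevreysum}; alternatively one may simply keep them and observe they are already dominated.) Throughout, the conditions $\gamma\gg(|s|^{-1}M^{-1}+|s|M)$, $\rho_0^{-1}M^{-1}\gg\max\{\gamma,1\}$, $\ell\gg1$ and $\kappa_c+\kappa_+\lesssim(\ell+1)^{-1}$ inherited from Proposition \ref{prop:maingevreysum} are exactly the hypotheses $M\tilde\rho_0,\gamma,L$ large and $\kappa$ small (depending on $\ell$) in the statement, so no new smallness or largeness is introduced; the extra requirement $N_\infty>N_\kappa$ for the red-shift absorption is compatible with taking $N_\infty$ arbitrarily large and with subsequently letting $N_\infty\to\infty$ in later arguments.
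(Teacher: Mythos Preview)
Your proposal is essentially correct and follows the same route as the paper: apply Lemma~\ref{lm:allowedvaluess} to fix $\alpha,\beta,\mu,\nu,\widetilde{\sigma}$ so that the three structural coefficients in \eqref{eq:maingevreysum0} become strictly positive (for $\epsilon$ and $M\tilde{\rho}_0$ small), rename the exponentially growing constants on the right-hand side as $A^{\ell}$, and convert the pointwise $|f_k|^2(\tilde{\rho}_0)$ boundary terms into integral norms by a standard Sobolev inequality on $[0,\tilde{\rho}_0]$.

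The one inaccuracy in your write-up is the role you assign to the enhanced red-shift. You frame the argument as requiring ``two distinct mechanisms'', the second being red-shift absorption of the top-order $n=N_{\infty}$ term. But compare the top-order terms in \eqref{eq:maingevreysum0} and \eqref{eq:maingevreysum}: they are the same integral up to replacing the explicit factor $5\alpha^{-1}(1-\mu)(1+M\tilde{\rho})^4$ by a constant $C$ and the harmless change $(N_{\infty}+2)^2\to(N_{\infty}+1)^2$. No absorption takes place in this corollary---the negative top-order term is simply \emph{retained}. You eventually notice this yourself (``the final conclusion \eqref{eq:maingevreysum} retains a (harmless) negative top-order term''), so there is no mathematical error, but the whole paragraph about $(2\kappa\tilde{\rho}+\tilde{\rho}^2)^2\geq 4\kappa^2\tilde{\rho}^2$ and the threshold $N_{\kappa}$ is superfluous here. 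That red-shift absorption is precisely what the paper performs in the \emph{next} step, Proposition~\ref{prop:maingevreysumv2}, where $\kappa>0$ is genuinely used to kill the retained negative term before coupling to the elliptic estimate. Correspondingly, Corollary~\ref{cor:maingevreysumbound} itself does not require $\kappa>0$ or any condition of the form $N_{\infty}>N_{\kappa}$; the only smallness on $\kappa$ is the inherited $\kappa_c+\kappa_+\lesssim(\ell+1)^{-1}$ from Proposition~\ref{prop:maingevreysum}.
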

\begin{proof}
We apply Lemma \ref{lm:allowedvaluess} to conclude that we can choose $\alpha,\beta, \mu,\nu$ such that the constants in front of all but the final integral on the left-hand side of \eqref{eq:maingevreysum0} are positive, provided $s\in \Omega_{\sigma}$. Then we estimate the boundary terms involving $f_n$ on the right-hand side of \eqref{eq:maingevreysum0} in terms of integral norms by applying a standard Sobolev inequality.
\end{proof}

\begin{remark}
It is important to note that the constants appearing in the estimate \eqref{eq:maingevreysum} depend exponentially on $\ell$. The growth of the constants in $\ell$ turns out to be sufficiently slow so as to allow for a \emph{coupling} of \eqref{eq:maingevreysum} with the estimate \eqref{eq:ellipticestext} in Proposition \ref{prop:degellipticest} in order to convert \eqref{eq:maingevreysum} into a \emph{global} estimate; see Proposition \ref{prop:maingevreysumv2} below.
\end{remark}

\begin{proposition}
\label{prop:maingevreysumv2}
Let $\ell\in \N$ and assume $\kappa_+,\kappa_c>0$ and $\kappa_++\kappa_c\lesssim (\ell+1)^{-1}$. Fix $\sigma \in \R_{>0}$ and let $s\in \Omega_{\sigma}\subset \C$. If $\ell$ is suitably large, there exist constants $C_{\tilde{\rho}_0,s}>0$ and $A>0$ that are independent of $\kappa_+,\kappa_c$, $\ell$ and $N_{\kappa_+,\kappa_c}\in \N$, such that

\begin{equation}
\label{eq:maingevreysumv2}
\begin{split}
\int_{R^+_0}^{R^c_0}& (1+\ell^2(\ell+1)^2)|{\hat{\psi}}|^2+(1+\ell(\ell+1))|\partial_r{\hat{\psi}}|^2+|\partial_r^2{\hat{\psi}}|^2\,dr\\
&+A^{-\ell}\sum_{\star\in \{+,c\}} \sum_{n=0}^{\ell} \frac{M^n}{(\ell+1)^{2(n-1)}} \int_{0}^{\tilde{\rho}_0} e^{2\gamma M \sqrt{\ell(\ell+1)} \tilde{\rho}} |\hat{\Phi}_{(n)}|^2\, d\tilde{\rho}_{\star}\\
&+A^{-\ell}\ell! (\ell+1)!\sum_{\star\in \{+,c\}}\sum_{n=\ell}^{N_{\infty}} \frac{|2s|^{2n}\widetilde{\sigma}^{2n}}{(n+1)!^2n!^2} \int_{0}^{\tilde{\rho}_0} (|2s|^2+(n+1)^2\tilde{\rho}_{\star}^2)e^{2\gamma M \sqrt{\ell(\ell+1)} \tilde{\rho}} |\hat{\Phi}_{(n+1)}|^2\\
&+\tilde{\rho}_{\star}^4 e^{2\gamma M \sqrt{\ell(\ell+1)} \tilde{\rho}_0}|\hat{\Phi}_{(n+2)}|^2 \, d\tilde{\rho}_{\star}\\
\leq &\: C_{\tilde{\rho}_0,s}\int_{R^+_0}^{R^c_0} |\tilde{f}|^2\,dr+ C_{\tilde{\rho}_0,s} \sum_{\star\in \{+,c\}} \sum_{n=0}^{\ell}\frac{M^n}{(\ell+1)^{2n}} \int_{0}^{\tilde{\rho}_0} e^{2\gamma M \sqrt{\ell(\ell+1)} \tilde{\rho}} |f_n|^2\, d\tilde{\rho}_{\star} \\
&+C_{\tilde{\rho}_0,s} \ell! (\ell+1)!\sum_{\star\in \{+,c\}} \sum_{n=\ell}^{N_{\infty}} \frac{|2s|^{2n}\widetilde{\sigma}^{2n}}{(n+1)!^2n!^2}\int_{0}^{\tilde{\rho}_0}e^{2\gamma M \sqrt{\ell(\ell+1)} \tilde{\rho}_0} |f_n|^2\, d\tilde{\rho}_{\star},
\end{split}
\end{equation}
for all $N_{\infty}\geq N_{\kappa_+,\kappa_c}$.

Furthermore, if $\kappa_+=\kappa_c=0$ and we make the a priori assumption $\hat{\psi}\in H_{\sigma,2,\rho_0}$, then the above estimate holds with $N_{\infty}$ replaced by $\infty$.
\end{proposition}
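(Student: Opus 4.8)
\textbf{Proof proposal for Proposition \ref{prop:maingevreysumv2}.}

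The plan is to couple the near-horizon Gevrey estimate of Corollary \ref{cor:maingevreysumbound} (applied separately in the regions $\{\rho=\rho_+<\rho_0\}$ and $\{\rho=\rho_c<\rho_0\}$) with the degenerate elliptic estimate \eqref{eq:ellipticestextv2} of Proposition \ref{prop:degellipticest}, which controls the $H^2$-norm of $\hat\psi$ on $[R_0^+,R_0^c]$ together with the angular-momentum weights. The key arithmetic point is that the coupling constant between the two estimates is $A^\ell$ coming from the right-hand side of \eqref{eq:maingevreysum} (the exponentially growing factor in front of the boundary terms at $\tilde\rho_0$), while \eqref{eq:ellipticestextv2} delivers a factor $K_0^{-\ell}$ multiplying the two ingoing/outgoing energies $\int |\hat\phi|^2+|\partial_\rho\hat\phi|^2\,d\rho_\star$ which appear on the right-hand side of the Gevrey estimate; since $K_0$ may be chosen arbitrarily large once $\ell$ is taken large enough (subject to \eqref{eq:condlelliptic}, which is satisfied for $\ell$ large compared to $|s|$), the product $A^\ell K_0^{-\ell}$ can be made $\leq \frac12$, so the coupling terms are absorbed. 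First I would rescale the Gevrey estimate \eqref{eq:maingevreysum} by $A^{-\ell}$ so the left-hand side carries the weight $A^{-\ell}$ displayed in \eqref{eq:maingevreysumv2}, and note that the only terms on its right-hand side not of the form ``(weight)$\times|f_n|^2$'' are the boundary terms $e^{2\gamma M\sqrt{\ell(\ell+1)}\tilde\rho_0}(|\hat\Phi|^2+|\hat\Phi_{(1)}|^2)(\tilde\rho_0)$; these are controlled, via $\tilde\rho_0$ being a fixed interior radius and the relation $\hat\Phi=(1+M\tilde\rho)^{-1}\hat\phi=(1+M\tilde\rho)^{-1}r\hat\psi$, by $|\hat\psi|^2(\tilde\rho_0)+|\partial_r\hat\psi|^2(\tilde\rho_0)$, hence by $\int_{R_0^+}^{R_0^c}|\hat\psi|^2+|\partial_r\hat\psi|^2+|\partial_r^2\hat\psi|^2\,dr$ up to a trace/Sobolev inequality on the compact interval.

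The next step is the bookkeeping that relates the two different families of higher-order quantities. Corollary \ref{cor:maingevreysumbound} is stated for the modified quantities $\hat\Phi_{(n)}=(\partial_{\tilde\rho}-s\hat h)^n\hat\Phi$ with weights $\frac{|2s|^{2n}\tilde\sigma^{2n}}{(n+1)!^2 n!^2}$ and $\sigma=2\tilde\sigma|s|$, so after replacing $|2s|^{2n}\tilde\sigma^{2n}$ by $\sigma^{2n}$ and $\frac{|2s|^{2n}\tilde\sigma^{2n}}{(\ell(\ell+1))^{n-1}}$ by $\frac{M^n}{(\ell+1)^{2(n-1)}}$ up to $\ell$-independent constants (absorbable into $A^\ell$), the left-hand side of \eqref{eq:maingevreysum} matches the Gevrey part of the left-hand side of \eqref{eq:maingevreysumv2}. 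On the elliptic side, \eqref{eq:ellipticestextv2} already produces $\int_{R_0^+}^{R_0^c} r^{-2}|\partial_r(Dr^2\partial_r\hat\psi)|^2 + D^2(\ell(\ell+1)+2r^2\mathfrak l^{-2})|\partial_r\hat\psi|^2 + Dr^{-2}(\ell(\ell+1)+2r^2\mathfrak l^{-2})^2|\hat\psi|^2\,dr$; since $D$ is bounded below on the compact interval $[R_0^+,R_0^c]$ and $r$ is comparable to $M$ there, this dominates $(1+\ell^2(\ell+1)^2)|\hat\psi|^2 + (1+\ell(\ell+1))|\partial_r\hat\psi|^2 + |\partial_r^2\hat\psi|^2$ after expanding $\partial_r(Dr^2\partial_r\hat\psi)$ by the Leibniz rule and absorbing the lower-order cross terms. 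The right-hand side of \eqref{eq:ellipticestextv2} contains the two energies multiplied by $K_0^{-\ell}$ (absorbed as above) plus terms of the form $\int Dr^{-2}|\tilde f|^2$ and $\int_0^{(\rho_\star)_0}|r\tilde f|^2\,d\rho_\star$, which are controlled by $\int_{R_0^+}^{R_0^c}|\tilde f|^2\,dr$ and by the $n=0$ terms of the two $f_n$-sums on the right-hand side of \eqref{eq:maingevreysumv2}, respectively.

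For the $\kappa_+,\kappa_c>0$ case with $N_\infty\geq N_{\kappa_+,\kappa_c}$, the top-order term $-C\ell!(\ell+1)!\frac{|2s|^{2N_\infty}\tilde\sigma^{2N_\infty}}{(N_\infty+1)!^2N_\infty!^2}\frac{|2s|^2}{(N_\infty+1)^2}\int\tilde\rho^2 e^{2\gamma M\sqrt{\ell(\ell+1)}\tilde\rho}|\hat\Phi_{(N_\infty+2)}|^2$, which appears with a bad sign on the left of \eqref{eq:maingevreysum}, is exactly the term handled by the enhanced red-shift mechanism already built into Corollary \ref{cor:maingevreysumbound}: for $\kappa$ small depending on $\ell$ and $N_\infty$ large enough (this is where $N_{\kappa_+,\kappa_c}$ enters), this term is absorbed into the $n=N_\infty$ positive terms and so does not appear in \eqref{eq:maingevreysumv2}. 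For the extremal case $\kappa_+=\kappa_c=0$, we are given a priori that $\hat\psi\in H_{\sigma,2,\rho_0}$, so in particular the Gevrey sum on the left of \eqref{eq:maingevreysum} converges; this justifies passing to the limit $N_\infty\to\infty$, in which the top-order term tends to zero and drops out entirely. The main obstacle is precisely the delicate balance in the coupling step: one must verify that the exponential rate $A$ in \eqref{eq:maingevreysum} (which depends on $s$, $M\tilde\rho_0$ and $\gamma$) and the admissible sizes of $K_0$ in \eqref{eq:ellipticestextv2} (constrained by \eqref{eq:condlelliptic}/\eqref{eq:conditionl}, which force $\ell$ large relative to $|s|$ and $\rho_0$) are compatible, i.e. that there is a single threshold $L$ and a single choice of $\gamma,\tilde\rho_0,K_0$ making $A^\ell K_0^{-\ell}\leq \frac12$ for all $\ell\geq L$; this requires chasing the dependence of $A$ on the parameters through Corollary \ref{cor:maingevreysumbound} and Proposition \ref{prop:gevreyestsum} and confirming it does not itself blow up as one enlarges $K_0$.
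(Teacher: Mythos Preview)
Your proposal is correct and follows essentially the same approach as the paper's proof: couple Corollary~\ref{cor:maingevreysumbound} with the degenerate elliptic estimate \eqref{eq:ellipticestextv2}, control the boundary terms $|\hat\Phi|^2(\tilde\rho_0)+|\hat\Phi_{(1)}|^2(\tilde\rho_0)$ by the $H^2$-norm on $[R_0^+,R_0^c]$ via a Sobolev/fundamental-theorem-of-calculus step, and exploit that $K_0$ in \eqref{eq:ellipticestextv2} can be chosen large relative to the fixed constant $A$ so that $A^\ell K_0^{-\ell}$ is absorbable. One minor correction: the absorption of the top-order bad-sign term for $\kappa>0$ is not ``already built into'' Corollary~\ref{cor:maingevreysumbound} (it still appears on the left of \eqref{eq:maingevreysum}); rather, it is carried out here by noting that the positive $(2\kappa\tilde\rho+\tilde\rho^2)^2|\hat\Phi_{(N_\infty+2)}|^2$ term at $n=N_\infty$ dominates the bad $\frac{|2s|^2}{(N_\infty+1)^2}\tilde\rho^2|\hat\Phi_{(N_\infty+2)}|^2$ term once $N_\infty\gtrsim |s|/\kappa$ --- exactly the mechanism you describe, just located in this proof rather than the Corollary.
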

\begin{proof}
We first use that for $\kappa>0$, we can absorb the non-positive definite integral on on the left-hand side of \eqref{eq:maingevreysum} into the second integral on the left-hand side, provided $N_{\infty}$ is taken suitably large, depending on the value of $\kappa$. This is a manifestation of the red-shift effect along the horizons.

Then we apply the fundamental theorem of calculus to estimate for both $\tilde{\rho}=\tilde{\rho}_c$ and $\tilde{\rho}=\tilde{\rho}_+$:
\begin{equation*}
 |\hat{\Phi}|^2(\tilde{\rho}_0)+ |\hat{\Phi}_{(1)}|^2(\tilde{\rho}_0)\lesssim  C(\tilde{\rho_0})[ |{\hat{\psi}}|^2(\tilde{\rho}_0)+|\partial_r{\hat{\psi}}|^2(\tilde{\rho}_0)]\lesssim C(\tilde{\rho_0}) \int_{r( (\tilde{\rho}_+)_0)}^{r( (\tilde{\rho}_c)_0)} |{\hat{\psi}}|^2+ |\partial_r{\hat{\psi}}|^2+ |\partial_r^2{\hat{\psi}}|^2\,dr.
\end{equation*}
In order to obtain \eqref{eq:maingevreysumv2}, we estimate the RHS above by applying \eqref{eq:ellipticestextv2} \textbf{and using that $K_0$ can be taken arbitrarily large compared to the constant $A$} on the RHS of \eqref{eq:maingevreysum} so that we can subsequently absorb the terms on the RHS of \eqref{eq:ellipticestextv2} involving $\hat{\phi}$ and $\partial_{\rho} \hat{\phi}$ into the LHS of \eqref{eq:maingevreysum}. Note that we require in particular $\ell(\ell+1)$ to grow as $|s|\downarrow 0$.

When $\kappa=0$, the non-positive definite integral on on the left-hand side  of \eqref{eq:maingevreysum} vanishes as $N_{\infty}\to \infty$ if we assume that $||\hat{\psi}||_{\sigma,2}<\infty$.
\end{proof}

\begin{corollary}
\label{cor:maingevreysum3}
Let $\ell\in \N$ and assume $\kappa_+,\kappa_c>0$ and $\kappa_++\kappa_c\lesssim (\ell+1)^{-1}$. Fix $\sigma \in \R_{>0}$ and let $s\in \Omega_{\sigma}\subset \C$. If $\ell$ is suitably large, there exist constants $C_{s,\rho_0},C_{\ell}>0$ that are independent of $\kappa_+,\kappa_c$, with $C_{s,\rho_0}$ moreover independent of $\ell$, and $N_{\kappa_+,\kappa_c}$ that does depend on $\kappa_+,\kappa_c$ , such that:
\begin{equation}
\label{eq:maingevreysum3}
\begin{split}
\int_{R^+_0}^{R^c_0}&(1+\ell^2(\ell+1)^2)|{\hat{\psi}}|^2+(1+\ell(\ell+1))|\partial_r{\hat{\psi}}|^2+|\partial_r^2{\hat{\psi}}|^2\,dr\\
&+C_{\ell}^{-1}\sum_{\star\in \{+,c\}}\sum_{n=0}^{N_{\infty}} \frac{|2s|^{2n}\widetilde{\sigma}^{2n}}{(n+1)!^2n!^2}  (n+1)^2n^2\int_{0}^{{\rho}_0}  |\partial_{\rho_{\star}}^n (r{\hat{\psi}})|^2+\rho_{\star}^4|\partial_{\rho_{\star}}^{n+1} (r{\hat{\psi}})|^2\, d{\rho}_{\star}\\
\leq &\: C_{s,{\rho}_0}\int_{R^+_0}^{R^c_0}|\tilde{f}|^2\,dr+ C_{\ell} \sum_{\star\in \{+,c\}} \sum_{n=0}^{N_{\infty}} \frac{|2s|^{2n}\widetilde{\sigma}^{2n}}{(n+1)!^2n!^2}\int_{0}^{{\rho}_0} |\partial_{\rho_{\star}}^n (r \tilde{f})|^2\, d{\rho}_{\star}.
\end{split}
\end{equation}
for all $N_{\infty}\geq N_{\kappa_+,\kappa_c}$.

Furthermore, if $\kappa_+=\kappa_c=0$ and we make the a priori assumption $\hat{\psi}\in H_{\sigma,2,\rho_0}$, then the above estimate holds with $N_{\infty}$ replaced by $\infty$.
\end{corollary}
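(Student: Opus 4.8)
The statement is essentially a bookkeeping translation of Proposition~\ref{prop:maingevreysumv2} from the modified quantities $\hat{\Phi}_{(n)} = (\partial_{\tilde{\rho}} - s\hat{h})^n \hat{\Phi}$ (with $\hat{\Phi} = (1+M\tilde{\rho})^{-1} r\hat{\psi}$) and the inhomogeneities $f_n = (\partial_{\tilde{\rho}} - s\hat{h})^n f$, back to the ``bare'' derivatives $\partial_{\rho}^n(r\hat{\psi})$ and $\partial_{\rho}^n(r\tilde{f})$ appearing on both sides of \eqref{eq:maingevreysum3}. Accordingly, the plan is to start from \eqref{eq:maingevreysumv2}, and perform two separate changes of variables/quantities: first from $\hat{\Phi}_{(n)}$ to $\partial_{\tilde{\rho}}^n \hat{\phi}$ (undoing both the multiplication by $(1+M\tilde{\rho})^{-1}$ and the insertion of the $-s\hat{h}$ terms), and then from $\partial_{\tilde{\rho}}$ to $\partial_{\rho}$ on the small intervals $\{\rho \leq \rho_0\}$ where $\tilde{\rho} = \rho/(1-M\rho)$ is a smooth diffeomorphism with smooth inverse. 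The $H^2$ term on $[R_0^+,R_0^c]$ is already in the desired form, so nothing needs to be done there except to keep it.

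\textbf{Key steps.} First I would establish the comparability, for each $n$ and on $[0,\rho_0]$, of the weighted $L^2$ norms of $\hat{\Phi}_{(n)}$ and of $\partial_{\tilde{\rho}}^n \hat{\phi}$, up to a factor $C_\ell$ that may grow (at worst exponentially) in $\ell$ but is independent of $s$ restricted to a compact set, of $\kappa_\pm$, and of $N_\infty$. Concretely: since $\hat{h} = r^2 h (1+M\tilde{\rho})^{-2}$ is a smooth bounded function (vanishing identically near $\rho = 0$ for the choice of $h$ made in Section~\ref{sec:propgevreyphys}, and in any case $C^\infty$ on $[0,\rho_0]$), expanding $(\partial_{\tilde{\rho}} - s\hat{h})^n$ by the Leibniz/binomial rule produces $\partial_{\tilde{\rho}}^n$ plus a sum of lower-order terms $\sum_{k<n} c_{n,k}(\tilde{\rho}) \partial_{\tilde{\rho}}^k$ with coefficients controlled by $|s|$ and derivatives of $\hat{h}$; the resulting triangular linear relation between $\{\hat{\Phi}_{(k)}\}_{k\le n}$ and $\{\partial_{\tilde{\rho}}^k \hat{\phi}\}_{k\le n}$ is invertible with a comparable structure, and the Gevrey weights $\frac{\sigma^{2n}}{(n+1)!^2 n!^2}$ are robust under such finite-order shifts (as already exploited repeatedly in Section~\ref{sec:gevrest}; see e.g.\ Lemma~\ref{lm:expweightest} and the absorption arguments in Proposition~\ref{prop:gevreyestsum}). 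The analogous computation relates $f_n$ to $\partial_{\tilde{\rho}}^n(r\tilde{f})$, where $\tilde{f} = \hat{L}_{s,\ell,\kappa}(\hat{\psi})$ and $f = r(1+M\tilde{\rho})^{-1}\hat{L}_{s,\ell,\kappa}(\hat{\psi})$, so that $f_n$ and $\partial_{\tilde{\rho}}^n(r\tilde{f})$ are again related triangularly with $|s|$-dependent, $\ell$-independent coefficients. Second, I would convert $\partial_{\tilde{\rho}}$ to $\partial_{\rho}$: since $\partial_\rho = (1-M\rho)^{-2}\partial_{\tilde{\rho}}$ and $(1-M\rho)^{-1}$ is smooth and bounded above and below on $[0,\rho_0]$, iterating gives $\partial_{\rho}^n = \sum_{k\le n} a_{n,k}(\rho)\partial_{\tilde{\rho}}^k$ with $a_{n,n} = (1-M\rho)^{-2n}$ bounded by $C^n$, so once more the weighted sums are comparable up to an $\ell$-independent (but possibly $n$-growing, hence absorbed into $C_\ell$) factor. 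Combining these with \eqref{eq:maingevreysumv2}, and noting $r = (\rho + r_+^{-1})^{-1}$ etc.\ is bounded on $[0,\rho_0]$ so $r\hat{\psi} = r^2 \cdot r^{-1}\hat{\psi}$ stays comparable to $\hat{\phi}$, yields \eqref{eq:maingevreysum3} after relabelling $\widetilde{\sigma}^{2n}$ vs.\ the explicit $\frac{|2s|^{2n}\widetilde{\sigma}^{2n}}{(n+1)!^2 n!^2}$-type weights (recall $\sigma = 2\widetilde{\sigma}|s|$). The final sentence, with $N_\infty$ replaced by $\infty$ when $\kappa_+ = \kappa_c = 0$ and $\hat{\psi} \in H_{\sigma,2,\rho_0}$, is inherited verbatim from the corresponding clause of Proposition~\ref{prop:maingevreysumv2}, since the change-of-quantities estimates above hold term by term and pass to the limit $N_\infty \to \infty$ under the a priori finiteness assumption.

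\textbf{Main obstacle.} The one point requiring genuine care, rather than routine Leibniz bookkeeping, is tracking how the passage $\hat{\Phi}_{(n)} \rightsquigarrow \partial_{\tilde{\rho}}^n\hat{\phi} \rightsquigarrow \partial_{\rho}^n\hat{\phi}$ interacts with the Gevrey weights and with the summation over $n$ up to $N_\infty$: one must ensure that the lower-order ``commutator'' terms generated at order $n$ can be absorbed back into the left-hand side at orders $n, n-1, n-2, \dots$ without generating an $\ell$-dependence in $C_{s,\rho_0}$ (the constant multiplying the $\int_{R_0^+}^{R_0^c}|\tilde{f}|^2$ term, which is claimed $\ell$-independent) and without destroying the gain from the factor $n^2(n+1)^2$ that appears on the left of \eqref{eq:maingevreysum3} but not on the right. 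The correct strategy is the same telescoping/weight-shifting used throughout Section~\ref{sec:gevrest}: the extra factors of $|s|$ from each $-s\hat{h}$ and the combinatorial factors from differentiating polynomial coefficients are all dominated, for $\ell$ large and $\rho_0$ small, by the ratios of consecutive Gevrey weights, so that all commutator contributions land at strictly lower order and can be absorbed, the only cost being the $\ell$-dependent constant $C_\ell$ which is explicitly permitted in the statement. A secondary, purely technical point is that the two terms $\int_{R_0^+}^{R_0^c}|\tilde{f}|^2$ on the right of \eqref{eq:maingevreysumv2} and the near-horizon $f$-terms there must be matched to the single collection of $\partial_{\rho_\star}^n(r\tilde{f})$-norms in \eqref{eq:maingevreysum3}; this is immediate since the $n=0$ term of the latter sum dominates $\int_0^{\rho_0}|r\tilde{f}|^2$ and hence, by the boundedness of $r$ and $(1+M\tilde{\rho})^{-1}$, also $\int_0^{\rho_0}|f|^2$, while on $[R_0^+,R_0^c]$ the $H^2$-to-$L^2$ bound is trivial.
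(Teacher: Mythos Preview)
Your proposal is correct and follows essentially the same approach as the paper's own proof. The paper's argument is in fact considerably terser than yours: it simply rewrites \eqref{eq:maingevreysumv2} as a single Gevrey-weighted sum in $\hat{\Phi}_{(n)}$ and $f_n$, records the explicit expressions $\hat{\Phi}_{(n)} = ((1-M\rho)^2\partial_\rho - s\hat{h})^n(r\hat{\psi})$ and $f_n = ((1-M\rho)^2\partial_\rho - s\hat{h})^n((1-M\rho)r^3\tilde{f})$, and then asserts that analyticity of $\hat{h}$ in $\rho$ on $[0,\rho_0]$ together with the rapid growth of the denominators $(n+1)!^2 n!^2$ makes the conversion to $\partial_{\rho}^n(r\hat{\psi})$ and $\partial_{\rho}^n(r\tilde{f})$ ``straightforward''. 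Your two-step decomposition (first remove $-s\hat{h}$ and the weight, then change $\partial_{\tilde{\rho}}\leftrightarrow\partial_{\rho}$) and your discussion of the triangular Leibniz system are exactly what underlies that one-line assertion; the paper's emphasis on \emph{analyticity} of $\hat{h}$ (rather than mere smoothness) is precisely to guarantee the at-most-factorial growth of the coefficients $c_{n,k}$ you mention, so that they are dominated by the Gevrey weights.
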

\begin{proof}
We can immediately rewrite \eqref{eq:maingevreysumv2} to obtain
\begin{equation*}
\begin{split}
\int_{R^+_0}^{R^c_0}&(1+\ell^2(\ell+1)^2)|{\hat{\psi}}|^2+(1+\ell(\ell+1))|\partial_r{\hat{\psi}}|^2+|\partial_r^2{\hat{\psi}}|^2\,dr\\
&+A^{-\ell}\sum_{\star\in \{+,c\}}\sum_{n=0}^{N_{\infty}} \frac{|2s|^{2n}\widetilde{\sigma}^{2n}}{(n+1)!^2n!^2} (n+1)^2n^2\int_{0}^{\tilde{\rho}_0}  |\hat{\Phi}_{(n)}|^2+\tilde{\rho}^4|\hat{\Phi}_{(n+1)}|^2\, d\tilde{\rho}_{\star}\\
\leq &\: C_{\tilde{\rho}_0,s}\int_{R^+_0}^{R^c_0}|\tilde{f}|^2\,dr+ C_{\ell} \sum_{\star\in \{+,c\}} \sum_{n=0}^{N_{\infty}} \frac{|2s|^{2n}\widetilde{\sigma}^{2n}}{(n+1)!^2n!^2}\int_{0}^{\tilde{\rho}_0} |f_n|^2\, d\tilde{\rho}_{\star}.
\end{split}
\end{equation*}
Observe that
\begin{align*}
\hat{\Phi}_{(n)}=&\:( (1-M\rho)^2\partial_{\rho}-s \hat{h})^n(r{\hat{\psi}}),\\
f_n=&\:( (1-M\rho)^2\partial_{\rho}-s \hat{h})^n((1-M\rho)r^3 \tilde{f}),
\end{align*}
with $\rho=\rho_c$ or $\rho=\rho_+$ and $\hat{h}=\hat{h}_c$ or $\hat{h}=\hat{h}_+$, respectively.

Using analyticity of $\hat{h}$ as a function of ${\rho}$ in $[0,\rho_0]$ and the rapid growth in $n$ of the denominators in the summation factors, the estimate \eqref{eq:maingevreysum3} then follows in a straightforward manner.
\end{proof}

We will additionally need a slight variation of \eqref{eq:maingevreysum3} when investigating the convergence of resolvent operators with $\kappa>0$ as $\kappa\downarrow 0$; see 
\begin{proposition}
\label{prop:maingevreysum3v2}
Let $\ell\in \N$ and assume $\kappa_+,\kappa_c>0$ and $\kappa_++\kappa_c\lesssim (\ell+1)^{-1}$. Fix $\sigma \in \R_{>0}$ and let $s\in \Omega_{\sigma}\subset \C$. If $\ell$ is suitably large, there exist constants $C_{s,\rho_0},C_{\ell}>0$ that are independent of $\kappa_+,\kappa_c$ and $N_{\kappa_+,\kappa_c}$ that does depend on $\kappa_+,\kappa_c$ , such that

\begin{equation}
\label{eq:maingevreysum3v2}
\begin{split}
\int_{R^+_0}^{R^c_0}&(1+\ell^2(\ell+1)^2)|{\hat{\psi}}|^2+(1+\ell(\ell+1))|\partial_r{\hat{\psi}}|^2+|\partial_r^2{\hat{\psi}}|^2\,dr\\
&+C_{\ell}^{-1}\sum_{\star\in \{+,c\}}\sum_{n=0}^{N_{\infty}} \frac{|2s|^{2n}\widetilde{\sigma}^{2n}}{(n+1)!^2n!^2}  n^2\int_{0}^{{\rho}_0}  |\partial_{\rho_{\star}}^n (r{\hat{\psi}})|^2+\rho_{\star}^4|\partial_{\rho_{\star}}^{n+1} (r{\hat{\psi}})|^2\, d{\rho}_{\star}\\
\leq &\: C_{s,{\rho}_0}\int_{R^+_0}^{R^c_0}|\tilde{f}|^2\,dr+ C_{s,{\rho}_0} \sum_{\star\in \{+,c\}} \sum_{n=0}^{N_{\infty}} \frac{|2s|^{2n}\widetilde{\sigma}^{2n}}{(n+1)!^2n!^2}\int_{0}^{{\rho}_0} (n+1)^{-2}|\partial_{\rho_{\star}}^n (r \tilde{f})|^2\, d{\rho}_{\star}.
\end{split}
\end{equation}
for all $N_{\infty}\geq N_{\kappa_+,\kappa_c}$.
\end{proposition}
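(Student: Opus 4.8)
Looking at this, I need to prove Proposition \ref{prop:maingevreysum3v2}, which is a slight variation of Corollary \ref{cor:maingevreysum3} (equation \eqref{eq:maingevreysum3}). The difference is subtle: in \eqref{eq:maingevreysum3} the left-hand side carries a weight $(n+1)^2n^2$ on the Gevrey sum while the right-hand side has no extra polynomial factor on the $f_n$ terms; in the desired \eqref{eq:maingevreysum3v2} the left-hand side carries only $n^2$ while the right-hand side gains a compensating factor $(n+1)^{-2}$ on the $f_n$ terms. So this is a "trading polynomial weights between the two sides" variant.

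\begin{proof}[Proof of Proposition \ref{prop:maingevreysum3v2}]
The plan is to re-run the argument of Proposition \ref{prop:maingevreysumv2} and Corollary \ref{cor:maingevreysum3}, keeping track of two fewer powers of $(n+1)$ on the left-hand side and correspondingly allowing two additional negative powers of $(n+1)$ on the inhomogeneity. Concretely, recall that in the proof of Proposition \ref{prop:maingevreysumv2} the relevant estimate is obtained by coupling Corollary \ref{cor:maingevreysumbound} with the degenerate elliptic estimate of Proposition \ref{prop:degellipticest}, after absorbing the top-order term via the red-shift effect (which fixes $N_{\kappa_+,\kappa_c}$). The left-hand side of \eqref{eq:maingevreysum} controls $\sum_n \frac{|2s|^{2n}\widetilde{\sigma}^{2n}}{(n+1)!^2 n!^2}\int_0^{\tilde\rho_0}[|2s|^2 + (n+1)^2(\tilde\rho+\kappa)^2]e^{2\gamma M\sqrt{\ell(\ell+1)}\tilde\rho}|\hat\Phi_{(n+1)}|^2\,d\tilde\rho$, i.e.\ after relabelling $n+1\mapsto n$ it controls the sum with weight $(n+1)^2 n^2$ as in \eqref{eq:maingevreysum3}. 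To get only the weight $n^2$ as in \eqref{eq:maingevreysum3v2}, I simply discard the extra factor $(n+1)^2$ that multiplied $|\hat\Phi_{(n+1)}|^2$ in the $(\tilde\rho+\kappa)^2$ term — this only weakens the left-hand side, so it is permissible.

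First I would re-derive the analogue of \eqref{eq:maingevreysum0} and \eqref{eq:maingevreysum}, but now estimating the boundary terms involving $f_n$ on the right-hand side of the summed inequality from Proposition \ref{eq:maingevprop} with a different application of Young's inequality: instead of absorbing $|f_n|$ into the $\epsilon J_0$ terms at the same order, allocate to each $|f_n|^2$ term a weight that is smaller by a factor $(n+1)^{-2}$, compensating by a correspondingly larger constant $C_{s,\rho_0}$ (which is allowed to depend on $s$ and $\rho_0$ but, crucially, remains independent of $\ell$ for the $\int_{R_0^+}^{R_0^c}|\tilde f|^2$ piece). Since the summation factors $\frac{|2s|^{2n}\widetilde{\sigma}^{2n}}{(n+1)!^2 n!^2}$ decay super-exponentially in $n$, the extra factor $(n+1)^{-2}$ on the $f_n$ side costs nothing and the extra factor of $(n+1)^2$ lost on the $\hat\Phi$ side is harmless. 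Then, exactly as in the proof of Corollary \ref{cor:maingevreysum3}, I would convert the estimates for $\hat\Phi_{(n)} = ((1-M\rho)^2\partial_\rho - s\hat h)^n(r\hat\psi)$ and $f_n = ((1-M\rho)^2\partial_\rho - s\hat h)^n((1-M\rho)r^3\tilde f)$ back into estimates for $\partial_\rho^n(r\hat\psi)$ and $\partial_\rho^n(r\tilde f)$, using analyticity of $\hat h$ in $\rho\in[0,\rho_0]$ and the rapid growth of the factorial denominators to absorb all commutator terms; the $(n+1)^{-2}$ weight on the right-hand side is preserved under this conversion up to harmless constants.

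The one point requiring a little care — and the step I expect to be the main (if mild) obstacle — is ensuring that the constant $C_{s,\rho_0}$ multiplying the \emph{entire} right-hand side, including the $f_n$ sum, can be taken \emph{independent of $\ell$}, in contrast with \eqref{eq:maingevreysum3} where the $f_n$ sum carried the $\ell$-dependent constant $C_\ell$. This improvement is exactly what the extra $(n+1)^{-2}$ on the right-hand side buys us: in the coupling step of Proposition \ref{prop:maingevreysumv2}, the $\ell$-dependence $A^\ell$ entered because the elliptic estimate \eqref{eq:ellipticestextv2} was multiplied by $A^{-\ell}$ on the left to absorb $A^\ell$-growing boundary contributions, forcing $C_\ell$ on the right. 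Here, by instead running the coupling at the level of the weaker left-hand side (weight $n^2$ rather than $(n+1)^2 n^2$) and feeding the surplus polynomial decay into the $f_n$ terms, the powers of $A$ cancel in the inhomogeneous terms and one is left with an $\ell$-independent constant there, while $C_\ell$ survives only on the left-hand side Gevrey sum as a normalization — which is precisely the form stated in \eqref{eq:maingevreysum3v2} (where $C_\ell^{-1}$ appears on the left and $C_{s,\rho_0}$ on the right). The rest is bookkeeping identical to Corollary \ref{cor:maingevreysum3}.
\end{proof}
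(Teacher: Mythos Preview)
Your proposal has a genuine gap in the mechanism by which the extra factor $(n+1)^{-2}$ appears on the inhomogeneous side. You suggest obtaining it by reallocating weights in the Young inequality that separates $f_n$ from the rest of $G_n$. But in the proof of Proposition~\ref{eq:maingevprop} the term $|f_n|^2$ arises from squaring $G_n$, via an estimate of the form $|G_n|^2 \le (1+\delta)|G_n'|^2 + (1+\delta^{-1})|f_n|^2$. Taking $\delta \sim (n+1)^2$ does give the $(n+1)^{-2}$ weight on $|f_n|^2$, but it multiplies the dominant contribution $[n(n+1)-\ell(\ell+1)]^2|\hat\Phi_{(n)}|^2$ in $|G_n'|^2$ by a factor $1+\epsilon(n+1)^2$. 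After the summation $\sum_n \frac{|2s|^{2n}\tilde\sigma^{2n}}{(n+1)!^2 n!^2}[\cdot]$, this term must be absorbed into the positive $|2s|^2|\hat\Phi_{(n)}|^2$ term at order $n-1$; the extra $(n+1)^2$ destroys that absorption. Simply discarding the factor $(n+1)^2$ from the left-hand side afterward cannot retroactively fix this.

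The paper's approach is different and cleaner: rerun the entire summation argument of the section with the modified weight
\[
\ell!(\ell+1)!\sum_{n=\ell}^{N_\infty}\frac{|2s|^{2n}\tilde\sigma^{2n}}{(n+1)!^2 n!^2}\,(n+1)^{-2}\,[\cdot],
\]
so that the $(n+1)^{-2}$ is built into the summation itself rather than squeezed out of a pointwise Young inequality. The only new issue is that when absorbing a term at order $n$ into the estimate at order $n\pm 1$, the ratio of weights acquires an extra factor $(n+1)^2/n^2$ or its reciprocal; since $(n+1)^2/n^2 = 1+O(n^{-1})$ and $n\ge \ell$, this is harmless for $\ell$ suitably large. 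Your discussion of $\ell$-independence of $C_{s,\rho_0}$ is also a misreading: the statement only asserts independence of $\kappa_+,\kappa_c$, and the application in Proposition~\ref{prop:resolventzerokappa} is at fixed $\ell$, so no such cancellation of the $A^\ell$ factors is needed or claimed.
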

\begin{proof}
Observe that the difference between \eqref{eq:maingevreysum3} and \eqref{eq:maingevreysum3v2} amounts to an additional factor $(n+1)^{-2}$ in the summation over $n$. We arrive at by repeating the arguments in this section, employing instead the summation
\begin{equation*}
\ell! (\ell+1)!\sum_{n=\ell}^{N_{\infty}} \frac{|2s|^{2n}\widetilde{\sigma}^{2n}}{(n+1)!^2n!^2} (n+1)^{-2}\Big[\cdot\Big].
\end{equation*}
We omit the details of this procedure. We remark that the main difference is that will see in particular additional factors of $\frac{(n+1)^2}{n^2}$ appearing when trying to absorb terms into estimates of order $n+1$ or $n-1$, but we use that
\begin{equation*}
\frac{(n+1)^2}{n^2}=1+O(n^{-1})
\end{equation*}
so the procedure of absorbing terms poses no problem provided $\ell$ is taken suitably large.
\end{proof}

\section{Additional Gevrey estimates for low frequencies}
\label{sec:lowfreqgev}

In this section we obtain additional Gevrey estimates that are only valid for bounded $0\leq \ell\leq L$ and for correspondingly small values of $|s|$, cf. the estimates in Section \ref{sec:gevrest}, where $|s|>0$ is allowed to be arbitrary large, but we need to assume $\ell$ is suitably large.

\begin{theorem}
\label{thm:smallsthmpaper}
Let $L\in \N_0$, $\sigma \in {\R_{>0}}$ and $\rho_0>0$. Assume that $\kappa_+=\kappa_c=\kappa>0$ and $0\leq \ell\leq L$.  Let $s\in \Omega_{\sigma}\cap\{|s|<s_0\}$ for a given positive constant $s_0$. If $s_0$ is chosen suitably small and $\rho_0>0$ is a suitably small constant, both independent of $\kappa$, then we can find a constant $C_{s, L}>0$ that is also independent of $\kappa$, such that for all ${\hat{\psi}}\in C^{\infty}([r_+,r_c])$ we can estimate

\begin{equation}
\label{eq:smallsestpaper}
\begin{split}
\int_{R_0^+}^{R_0^c}& |{\hat{\psi}}|^2+|\partial_r{\hat{\psi}}|^2+|\partial_r^2{\hat{\psi}}|^2\,dr+\sum_{n=0}^{\infty} \frac{\sigma^{2n}}{(n+1)!^2n!^2} n^2(n+1)^2\int_{0}^{(\rho_+)_0} |\partial_{\rho_+}^n(r{\hat{\psi}})|^2+\rho_+^4|\partial_{\rho_+}^{n+1}(r{\hat{\psi}})|^2\, d{\rho_+}\\
&+\sum_{n=0}^{\infty} \frac{\sigma^{2n}}{(n+1)!^2n!^2} n^2(n+1)^2\int_{0}^{(\rho_c)_0} |\partial_{\rho_c}^n(r{\hat{\psi}})|^2+ \rho_c^4|\partial_{\rho_c}^{n+1}(r{\hat{\psi}})|^2\, d{\rho_c}\\
\leq &\: C_{ s, L}\int_{R_0^+}^{R_0^c} |{L}_{\kappa,s,\ell} ({\hat{\psi}})|^2 \,dr +C_{ s,L} \sum_{n=0}^{\infty} \frac{\sigma^{2n}}{(n+1)!^2n!^2} \int_{0}^{(\rho_+)_0} |\partial_{\rho_+}^n(r {L}_{s,\ell,\kappa} ({\hat{\psi}}))|^2\, d{\rho_+}\\
&+C_{ s,L} \sum_{n=0}^{\infty} \frac{\sigma^{2n}}{(n+1)!^2n!^2} \int_{0}^{(\rho_c)_0} |\partial_{\rho_c}^n(r {L}_{s,\ell,\kappa} ({\hat{\psi}}))|^2\, d{\rho_c}.
\end{split}
\end{equation} 
\end{theorem}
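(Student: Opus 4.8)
\textbf{Proof plan for Theorem \ref{thm:smallsthmpaper}.}

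The plan is to mirror the structure of the proof of Theorem \ref{thm:mainthmpaper} (Steps 1--6 in its outline), but to replace the mechanism by which the various coupling terms are absorbed: in Theorem \ref{thm:mainthmpaper} large angular momentum $\ell$ provided the smallness needed to couple the Gevrey estimates near $\mathcal{H}^+$, $\mathcal{C}^+$ (or $\mathcal{I}^+$) to the degenerate elliptic estimate, whereas here $\ell$ is bounded by $L$ and instead the smallness of $|s|$ (via $s\in\Omega_\sigma\cap\{|s|<s_0\}$ with $s_0$ chosen small) must do the work. First I would run the near-horizon/near-infinity modified higher-order estimates of Proposition \ref{eq:maingevprop} — note that these were derived for $n\geq\ell$, but since $0\le\ell\le L$ this still produces a usable family of estimates for all $n\geq L$, and the finitely many terms with $n\le L-1$ are handled directly by Lemma \ref{lm:expweightest} and the non-degeneracy of the equation away from the horizons. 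Summing over $n$ with the weights $\ell!(\ell+1)!\sum_n \frac{|2s|^{2n}\widetilde\sigma^{2n}}{(n+1)!^2 n!^2}[\cdot]$ exactly as in Proposition \ref{prop:gevreyestsum} and Proposition \ref{prop:maingevreysum}, and using Lemma \ref{lm:allowedvaluess} to choose $\alpha,\beta,\mu,\nu,\widetilde\sigma$ so that all but the top-order term on the left-hand side are positive-definite for $s\in\Omega_\sigma$, gives the analogue of Corollary \ref{cor:maingevreysumbound}, where the top-order term is again absorbed using the enhanced red-shift effect (this is why we require $\kappa>0$ and $N_\infty$ large depending on $\kappa$, but everything uniform in $\kappa$ since $\ell\le L$ keeps $\kappa\lesssim(\ell+1)^{-1}$ automatic for $\kappa$ small).

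The key difference comes at the coupling step (the analogue of Proposition \ref{prop:maingevreysumv2}). Here I would invoke Proposition \ref{prop:lowfreqellipticest} rather than Proposition \ref{prop:degellipticest}: its hypothesis is precisely $(1+\ell(\ell+1))^{-1}(|s|+M|s|^2)<s_0$, which for bounded $\ell\le L$ is guaranteed by taking $|s|<s_0'$ with $s_0'$ small depending on $L$. The elliptic estimate \eqref{eq:lowfreqellipticest} controls $\int_{R_0^+}^{R_0^c}(|{\hat{\psi}}|^2+|\partial_r{\hat{\psi}}|^2+|\partial_r^2{\hat{\psi}}|^2)\,dr$ together with the boundary data $|{\hat{\psi}}|^2(\tilde\rho_0)+|\partial_r{\hat{\psi}}|^2(\tilde\rho_0)$ in terms of $\int|\hat L_{s,\ell,\kappa}({\hat{\psi}})|^2$ plus the terms $\int_0^{\rho_0}(M+|s|M^2+\epsilon)|\hphi|^2+M^2(M+|s|M^2)|\hphi_{(1)}|^2\,d\rho_\star$, which come with an overall factor $C|s|$. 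This factor $C|s|$ is the small coupling constant: by choosing $s_0$ small enough, the terms $\int_0^{\rho_0}|\hphi|^2+|\hphi_{(1)}|^2$ appearing in \eqref{eq:lowfreqellipticest} (which are just the $n=0,1$ terms of the Gevrey sum on the left-hand side of the summed estimate) can be absorbed back into the left-hand side of the Gevrey estimate. Conversely the boundary terms $|\hat\Phi|^2(\tilde\rho_0)+|\hat\Phi_{(1)}|^2(\tilde\rho_0)$ on the right of the Gevrey estimate are controlled by the left-hand side of \eqref{eq:lowfreqellipticest} via the fundamental theorem of calculus, exactly as in the proof of Proposition \ref{prop:maingevreysumv2}. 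Since $\ell\le L$, there is no exponentially-growing-in-$\ell$ constant to fight against, so the coupling is in fact simpler than in Section \ref{sec:gevrest}; one only needs $|s|$ small. Finally, converting $\hat\Phi_{(n)}=((1-M\rho)^2\partial_\rho-s\hat h)^n(r{\hat{\psi}})$-estimates into $\partial_{\rho_\star}^n(r{\hat{\psi}})$-estimates uses analyticity of $\hat h$ in $\rho$ and the rapid decay of the summation denominators, precisely as in Corollary \ref{cor:maingevreysum3}, and taking $N_\infty\to\infty$ (legitimate in the $\kappa>0$ case once the red-shift absorption is in place, uniformly in $\kappa$) yields \eqref{eq:smallsestpaper}.

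The main obstacle I anticipate is bookkeeping the interplay between the two smallness parameters $s_0$ and $\rho_0$: the Gevrey summation machinery of Propositions \ref{eq:maingevprop}--\ref{prop:maingevreysum} requires $\rho_0$ (equivalently $M^{-1}\tilde\rho_0^{-1}$) large relative to $\gamma$, and $\gamma\gg M^{-1}|s|^{-1}+M|s|$, so $\gamma\to\infty$ as $|s|\downarrow 0$, which tightens the required smallness of $\rho_0$; meanwhile the elliptic coupling needs $|s|$ small relative to $\rho_0$. One must check these are simultaneously satisfiable — which they are, since one fixes $\sigma$, then chooses $s_0$ and $\rho_0$ in the right order — but it requires care, and it is the reason the constant $C_{s,L}$ in \eqref{eq:smallsestpaper} is allowed to depend on $s$ and $L$ (and implicitly blows up as $|s|\downarrow 0$) rather than being uniform. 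A secondary point requiring attention is that Proposition \ref{eq:maingevprop} is stated with the restriction $\ell\le n\le 2\ell$ and $n>2\ell$; for small bounded $\ell$ one should double-check that the low-order terms $0\le n<\ell$ are genuinely harmless, but these are a fixed finite number of terms controlled by Lemma \ref{lm:expweightest} together with the elliptic estimate, so no new idea is needed.
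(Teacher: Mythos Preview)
Your coupling step is exactly right: the paper also uses Proposition \ref{prop:lowfreqellipticest} here, absorbing its right-hand side via the smallness of $|s|$, just as you describe. But the paper does \emph{not} run the near-horizon Gevrey estimates through the modified quantities $\hat\Phi_{(n)}$ and Proposition \ref{eq:maingevprop}. Instead it works directly with the unmodified $\hat\phi^{(n)}=\partial_\rho^n\hat\phi$ and the simpler equation \eqref{eq:maineqhatphin} (this is Proposition \ref{prop:boundlgevrey}). The whole point of the modification $\hat\phi\to\hat\Phi$ in Section \ref{sec:gevrest} was to arrange a conservation law so that the summed estimate could start at $n=\ell$ with no lower-order residue; that mechanism is what forces $\gamma\gg M^{-1}|s|^{-1}$ in Proposition \ref{eq:maingevprop} (the $F_{n,k}$ terms in the modified equation must be absorbed into the $|2s|^2|\hat\Phi_{(n+1)}|^2$ part of $J_0$ after applying Lemma \ref{lm:expweightest}, and this costs a factor $|s|^{-2}\gamma^{-2}$). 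For bounded $\ell\le L$ one does not need this: one simply starts the Gevrey sum at some $N_{L,\epsilon}>2\ell$, where $[n(n+1)-\ell(\ell+1)]^2<n^2(n+1)^2$ holds outright, and handles the finitely many low orders $n\le N_{L,\epsilon}$ by Lemma \ref{lm:expweightest}. With the unmodified equation there are no $F_{n,k}$ terms, the exponential weight is $e^{2M\gamma\rho}$ with $\gamma$ depending only on $L$, and so $\rho_0\ll (M\gamma)^{-1}$ can be fixed independently of $s$.

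This is precisely the obstacle you flag in your last paragraph, and it is not dispatched by ``choosing $s_0$ and $\rho_0$ in the right order'': under your scheme $\gamma\gg|s|^{-1}$ is unbounded on $\{0<|s|<s_0\}$ for \emph{any} fixed $s_0$, so no single $\rho_0$ works across the whole range, whereas the theorem (and its use in Proposition \ref{prop:inversepositivekappa}) requires a fixed $\rho_0$. Switching to the unmodified $\hat\phi$ removes this tension entirely and also eliminates the final conversion step $\hat\Phi_{(n)}\to\partial_\rho^n(r\hat\psi)$. The rest of your plan---summation with the same weights, Lemma \ref{lm:allowedvaluess} for positivity, red-shift absorption of the top order for $\kappa>0$, and Lemma \ref{lm:analytawayhor} for the $\rho=\rho_0$ boundary terms---carries over unchanged and matches the paper.
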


We will prove the theorem via several propositions which play a similar role to the propositions of Section \ref{sec:gevrest}.

\begin{proposition}
\label{prop:boundlgevrey}
Let $L\in \N_0$ and assume that $\ell\leq L$. Let $\sigma,\alpha,\beta,\mu,\nu\in \R$ and let $\epsilon>0$ be arbitrarily small. Then there exist integers $0<N_{L,\epsilon}<N_{\infty}=N_{\infty}(\kappa)$ and constants $\rho_0 =\rho_0(L)>0$ suitably small and $C_{\epsilon,L}>0$, such that
\begin{equation}
\label{eq:boundlgevrey}
\begin{split}
\sum_{n=0}^{N_{\infty}}&\frac{|2s|^{2n}\tilde{\sigma}^{2n}}{(n+1)!^2n!^2}\int_{0}^{\rho_0}(1-\mu \nu^{-1})\left((1-A_2)\rho^2+2\kappa \rho+A_3\rho^3+A_4 \rho^4\right)^2e^{2M\gamma\rho} |\hat{\phi}^{(n+2)}|^2\,d\rho\\
&+\sum_{n=0}^{N_{\infty}}\frac{|2s|^{2n}\tilde{\sigma}^{2n}}{(n+1)!^2n!^2}\int_{0}^{\rho_0}\left(1-4 \beta^{-1}\left(\frac{\max\{-\re(s),0\}}{|s|}\right)^2-(1+\epsilon)\sigma^{-2}-\epsilon-\nu \mu \right)|2s|^2e^{2M\gamma\rho}|\hat{\phi}^{(n+1)}|^2\,d\rho\\
&+\sum_{n=0}^{N_{\infty}}\frac{|2s|^{2n}\tilde{\sigma}^{2n}}{(n+1)!^2n!^2}\Bigg[\int_{0}^{\rho_0} (4-\beta-\alpha(1+A_2+\epsilon(1+\kappa))(1-\mu)-\alpha^{-1}(1+A_2+\epsilon(1+\kappa)) \sigma^{-2}(1-\mu))\cdot\\
&\cdot (n+1)^2\left((1-A_2)^2\rho^2+\kappa^2+2(1-A_2)\kappa \rho+O(\rho^3)+\kappa O(\rho^2)+\kappa^2 O(\rho) \right)e^{2M\gamma\rho}|\hat{\phi}^{(n+1)}|^2\,d\rho\\
&-\int_{0}^{\rho_0}(n+1) [6(1-A_2)^2\rho^2+4\kappa((1-A_2)^2+\kappa)+(1+|s|)O(\rho_0^3)+\kappa (1+|s|)O(\rho_0)]e^{2M\gamma\rho}|\hat{\phi}^{(n+1)}|^2\,d\rho\Bigg]\\
&- C\int_{0}^{\rho_0}\frac{|2s|^{2N_{\infty}}\widetilde{\sigma}^{2N_{\infty}}}{(N_{\infty}+1)!^2N_{\infty}!^2}\frac{|2s|^{2}}{(N_{\infty}+2)^2}\cdot \alpha^{-1}(1-\mu)(\rho^3+2\kappa\rho^2) |\hat{\phi}^{(N_{\infty}+2)}|^2 \, d{\rho}\\
&+\sum_{n=0}^{N_{\infty}}\frac{|2s|^{2n}\tilde{\sigma}^{2n}}{(n+1)!^2n!^2} (n+1)[2\rho_0^3(1-A_2)^2+4\kappa((1-A_2)^2+\kappa)\rho_0+O(\rho_0^4)+\kappa O(\rho_0^2)]e^{2M\gamma\rho}|\hat{\phi}^{(n+1)}|^2\Big|_{\rho=\rho_0}\\
\leq&\: C_{\epsilon,L}\sum_{n=0}^{N_{L,\epsilon}} |\hphi^{(n)}|^2(\rho_0)+C_{\epsilon,L}\sum_{n=0}^{N_{\infty}}\frac{|2s|^{2n}\tilde{\sigma}^{2n}}{(n+1)!^2n!^2}\int_{0}^{\rho_0} e^{2M\gamma\rho}|\partial_{\rho}^n(r\tilde{f})|^2\,d\rho.
\end{split}
\end{equation}
\end{proposition}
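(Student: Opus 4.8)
\textbf{Proof strategy for Proposition \ref{prop:boundlgevrey}.} The plan is to mimic the structure of the proof of Proposition \ref{eq:maingevprop}, but now exploiting the \emph{smallness of $|s|$ and the boundedness of $\ell$} rather than the largeness of $\ell$. First I would rearrange \eqref{eq:maineqhatphin} (equivalently \eqref{eq:fixedfreqpsihat}) at order $n$ into the schematic form
\begin{equation*}
2s\hat{\phi}^{(n+1)}+\left((1-A_2)\rho^2+2\kappa\rho+A_3\rho^3+A_4\rho^4\right)\hat{\phi}^{(n+2)}+2(n+1)\left((1-A_2)\rho+\kappa+\cdots\right)\hat{\phi}^{(n+1)}+[n(n+1)-\ell(\ell+1)]\hat{\phi}^{(n)}+\cdots = r\tilde f^{(n)},
\end{equation*}
take the square modulus of both sides, multiply by the exponential weight $e^{2M\gamma\rho}$ with $\gamma$ a large dimensionless constant, and expand the left-hand side into the analogue of the terms $J_0,\dots,J_4$ appearing in the proof of Proposition \ref{eq:maingevprop}. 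The cross terms $J_1$ (the $\re(2s)$ term) and $J_3$ (the $\hat{\phi}^{(n+1)}\overline{\hat{\phi}^{(n+2)}}$ term) are handled by the same Young's inequality manoeuvres, introducing the parameters $\alpha,\beta,\mu,\nu$; this is exactly where the compatibility conditions \eqref{paramconst1}--\eqref{paramconst3} (and their $\re s\geq 0$ analogues) will later be invoked via Lemma \ref{lm:allowedvaluess}. The term $J_2$ is integrated by parts in $\rho$ as before, producing a boundary term at $\rho=\rho_0$ and interior terms that can be absorbed into $J_0$ once $\gamma M\ll\rho_0^{-1}$ and $\kappa\lesssim 1$.

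The key difference from Section \ref{sec:gevrest} is the treatment of the zeroth-order term $[n(n+1)-\ell(\ell+1)]\hat{\phi}^{(n)}$, which here is \emph{not} small when $n\approx\ell$ (there is no conservation-law cancellation available at bounded $\ell$ without the $\hat\Phi$ modification, but since $\ell\le L$ is bounded one can afford to be crude). The idea is to write $[n(n+1)-\ell(\ell+1)]\hat\phi^{(n)} = \widetilde\sigma^{-2}|2s|^{-2}[n(n+1)-\ell(\ell+1)]^2\,\widetilde\sigma^2|2s|^2|\hat\phi^{(n-1+1)}|^2$-type factors and absorb them into the order-$(n-1)$ estimate after performing the weighted summation $\sum_n \frac{|2s|^{2n}\widetilde\sigma^{2n}}{(n+1)!^2 n!^2}[\cdot]$; the combinatorial denominators $(n+1)!^2 n!^2$ together with the factor $|2s|^{2n}$ make this absorption work precisely because $(n(n+1)-\ell(\ell+1))/(n(n+1))\to 1$ and because after the first $N_{L,\epsilon}$ orders (with $N_{L,\epsilon}$ chosen so that $n(n+1)>\ell(\ell+1)$ and comfortably dominates) the loss is a bounded geometric-type factor. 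For $n\le N_{L,\epsilon}$, where this scheme does not yet give coercivity, one simply dumps everything into the $L^\infty$ boundary data $\sum_{n=0}^{N_{L,\epsilon}}|\hat\phi^{(n)}|^2(\rho_0)$ using that this is a \emph{finite} sum of low-order quantities (this is the role of $N_{L,\epsilon}$ on the right-hand side), controlled via standard ODE estimates away from $\rho=0$ as in Lemma \ref{lm:analytawayhor}. The terms involving $F_{n,k},\widetilde F_{n,k}$, bounded by $C(\kappa_++\kappa_c)$, are handled exactly as $J_4$ in Proposition \ref{eq:maingevprop}, applying Lemma \ref{lm:expweightest} repeatedly to trade lower-order $\hat\phi^{(k)}$ against higher-order derivatives plus boundary terms; since $\kappa$ is assumed small this costs only a harmless constant.

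At the top order $n=N_\infty$ the absorption into order $N_\infty+1$ is no longer available, so — exactly as in the proof of Proposition \ref{prop:gevreyestsum} — I would not apply the split \eqref{eq:alpha} at $n=N_\infty$ but instead estimate the cross term $2(1-\mu)|2s|(\cdots)|\hat\phi^{(N_\infty+1)}||\hat\phi^{(N_\infty+2)}|$ by $\alpha(1-\mu)(n+1)^2(\cdots)|\hat\phi^{(N_\infty+1)}|^2 + 5\alpha^{-1}(1-\mu)\frac{|2s|^2}{(N_\infty+1)^2}(\rho^3+2\kappa\rho^2)|\hat\phi^{(N_\infty+2)}|^2$, which explains the non-positive-definite $N_\infty$-term on the left-hand side of \eqref{eq:boundlgevrey}. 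The main obstacle I anticipate is bookkeeping: keeping track of the precise coefficients in front of each of the surviving terms $|\hat\phi^{(n+2)}|^2$, $|2s|^2|\hat\phi^{(n+1)}|^2$ and $(n+1)^2(\cdots)|\hat\phi^{(n+1)}|^2$ after the summation, so that they match the stated expressions $(1-\mu\nu^{-1})$, $(1-4\beta^{-1}\varsigma-(1+\epsilon)\sigma^{-2}-\epsilon-\nu\mu)$, and $(4-\beta-\alpha(1+A_2+\cdots)(1-\mu)-\alpha^{-1}(1+A_2+\cdots)\sigma^{-2}(1-\mu))$ — in particular the appearance of $\sigma^{-2}=\widetilde\sigma^{-2}|2s|^{-2}$ instead of the bare $\widetilde\sigma^{-2}$ of Proposition \ref{eq:maingevprop} comes from absorbing the $[n(n+1)-\ell(\ell+1)]$ term at \emph{all} orders uniformly (since $\ell$ is bounded), and one must check that the induced geometric series in $n$ converges, which is where $|s|<s_0$ with $s_0$ small and $\rho_0$ small enter. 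The final inhomogeneous terms $\sum_n\frac{|2s|^{2n}\widetilde\sigma^{2n}}{(n+1)!^2n!^2}\int_0^{\rho_0}e^{2M\gamma\rho}|\partial_\rho^n(r\tilde f)|^2\,d\rho$ arise by Young's inequality from the $G_n$/$r\tilde f^{(n)}$ side, and the low-order boundary sum $\sum_{n=0}^{N_{L,\epsilon}}|\hat\phi^{(n)}|^2(\rho_0)$ is kept explicitly to be disposed of later by coupling with a (low-frequency) elliptic estimate, namely Proposition \ref{prop:lowfreqellipticest}.
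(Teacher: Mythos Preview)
Your proposal follows essentially the same route as the paper's proof: square the rearranged $n$th-order equation, multiply by $e^{2M\gamma\rho}$, integrate, split into the $J_0,\dots,J_3$ pieces with the $\alpha,\beta,\mu,\nu$ Young's-inequality treatment, sum with the weights $\frac{|2s|^{2n}\widetilde\sigma^{2n}}{(n+1)!^2n!^2}$, absorb the $[n(n+1)-\ell(\ell+1)]$ term into order $n-1$ once $n\ge N_{L,\epsilon}>2\ell$ (using $[n(n+1)-\ell(\ell+1)]^2<n^2(n+1)^2$ there), and handle the top-order term as in Proposition~\ref{prop:gevreyestsum}.

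Two small corrections. First, this proposition works with the \emph{unmodified} quantity $\hat\phi$ and the equation \eqref{eq:maineqhatphin}, not with $\hat\Phi$ and \eqref{eq:fixedfreqpsihat}; in particular there are \emph{no} $F_{n,k},\widetilde F_{n,k}$ terms here, and hence no analogue of $J_4$ to treat. The paper explicitly notes this is the point of the simplification (``$\hat\Phi$ replaced by $\hat\phi$, which satisfies a significantly simpler equation''). The lower-order terms that do appear (with coefficients $A_3,A_4,B_1,B_2$) are only finitely many orders back and are absorbed by the same summation shift. Second, for $n\le N_{L,\epsilon}$ the paper does not invoke Lemma~\ref{lm:analytawayhor} (that lemma is reserved for the \emph{next} step, Proposition~\ref{prop:boundlgev2}, to reduce the boundary sum to $|\hat\phi|^2(\rho_0)+|\hat\phi^{(1)}|^2(\rho_0)$); rather it applies Lemma~\ref{lm:expweightest} repeatedly, exactly as in the proof of Proposition~\ref{prop:gevreyestsum}, to trade the low-$n$ bulk integrals for the bulk integral at order $N_{L,\epsilon}+1$ plus the boundary terms $\sum_{n=0}^{N_{L,\epsilon}}|\hat\phi^{(n)}|^2(\rho_0)$, with constants allowed to depend freely on $L$.
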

\begin{proof}
We consider the inhomogeneous version of \eqref{eq:maineqhatphin} and split
\begin{equation*}
\begin{split}
2s\hat{\phi}^{(n+1)}&+\left((1-A_2)\rho^2+2\kappa \rho+A_3\rho^3+A_4 \rho^4\right)\hat{\phi}^{(n+2)}+2(n+1)\left((1-A_2)\rho+\kappa+\frac{3}{2}A_3\rho^2+2A_4 \rho^3\right)\hat{\phi}^{(n+1)}\\
= &\:r{L}_{s,\ell,\kappa}({\hat{\psi}})-\left[n(n+1)\left(1-A_2+3A_3 \rho+6A_4 \rho^2\right)-(B_0+B_1\rho +B_2\rho^2)\right]\hat{\phi}^{(n)}+\ell(\ell+1)\hat{\phi}^{(n)}\\
 &-\left[(n+1)n(n-1)\left(A_3+4A_4\rho\right) -n(B_1+2B_2\rho)\right]\hat{\phi}^{(n-1)}\\
 &-\left[(n+1)n(n-1)(n-2)A_4-n(n-1)B_2\right]\hat{\phi}^{(n-2)}.
 \end{split}
\end{equation*}
Subsequently, we square both sides and multiply them by $e^{2M\gamma\rho}$, with $\gamma>0$ an $L$-dependent constant that will be chosen suitably large. Then we integrate over $[0,\rho_0]$  as in the proof of Proposition \ref{eq:maingevprop}, but with $\hat{\Phi}$ replaced by $\hat{\phi}$ (which satisfies a significantly simpler equation) and $\tilde{\rho}_0$ replaced by $\rho_0$. 

We first restrict to the case $n\geq N_{L,\epsilon}>2\ell$ suitably large depending on $\epsilon>0$ because in this case, we have that
\begin{equation*}
[n(n+1)-\ell(\ell+1)]^2=n^2(n+1)^2-2n(n+1)\ell(\ell+1)+\ell^2(\ell+1)^2<n^2(n+1)^2
\end{equation*}
so the $\ell^2(\ell+1)^2|\hphi^{(n)}|^2$ term forms no problem. We sum over $N_{L,\epsilon}\leq n\leq N_{\infty}$ with the same weights as in Proposition \ref{prop:maingevreysum}. We require here that $\rho_0{\ll}\frac{1}{\gamma M}$.

We can estimate the $n\leq N_{L,\epsilon}$ terms simply by repeatedly applying  Lemma \ref{lm:expweightest} as in the proof of Proposition \ref{prop:gevreyestsum}. The weights in the summation over $0\leq n\leq N_{L,\epsilon}$ do not matter as we are assuming $\ell\leq L$ and we do not keep track of the precise $L$ dependence of the constants.

\end{proof}

\begin{proposition}
\label{prop:boundlgev2}
Let $L\in \N_0$ and $\sigma \in {\R_{>0}}$. Let $s\in \Omega_{\sigma}$ with $|s|\leq s_0$. Assume that $\kappa_+,\kappa_c>0$. Then there exists $N_{\kappa_+,\kappa_c}>0$, $s_0=s_0(L)>0$, $(\rho_+)_0,(\rho_c)_0>0$ suitably small, and $C_{L,s}>0$, where $C_{L,s}>0$ remains bounded as $|s|\to0$, such that under the restriction $0<|s|\leq s_0$ and $0\leq \ell\leq L$:
\begin{equation}
\label{eq:boundlgevrey2}
\begin{split}
\int_{R_0^+}^{R_0^c}& |{\hat{\psi}}|^2+|\partial_r{\hat{\psi}}|^2+|\partial_r^2{\hat{\psi}}|^2\,dr\\
&+\sum_{\star\in \{+,c\}}\sum_{n=0}^{N_{\infty}} \frac{|2s|^{2n}\widetilde{\sigma}^{2n}}{(n+1)!^2n!^2} \int_{0}^{{\rho}_0} (|2s|^2+(n+1)^2{\rho}_{\star}^2+1)e^{2\gamma M \rho} |\hat{\phi}_{(n+1)}|^2\, d\rho_{\star}\\
\leq &\: C_{L,s}\int_{R_0^+}^{R_0^c} |\tilde{f}|^2\,dr+ C_{L,s}\sum_{\star\in \{+,c\}} \sum_{n=0}^{N_{\infty}} \frac{|2s|^{2n}\widetilde{\sigma}^{2n}}{(n+1)!^2n!^2}\int_{0}^{{\rho}_0}e^{2\gamma M \rho} |\partial_{\rho}^n(r\tilde{f})|^2\, d{\rho}_{\star},
\end{split}
\end{equation}
for all $N_{\infty}\geq N_{\kappa_+,\kappa_c}$.
\end{proposition}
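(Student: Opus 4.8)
\textbf{Proof plan for Proposition \ref{prop:boundlgevrey2}.}
The structure mirrors the proof of Corollary \ref{cor:maingevreysum3}, with the role of ``large $\ell$'' replaced by ``small $|s|$'', so I would start from the low-frequency Gevrey estimate of Proposition \ref{prop:boundlgevrey} and couple it to the low-frequency degenerate elliptic estimate of Proposition \ref{prop:lowfreqellipticest}. First I would observe that, since $0\le \ell\le L$ is bounded and $|s|\le s_0$, the hypothesis $(1+\ell(\ell+1))^{-1}(|s|+M|s|^2)<s_0$ required for Proposition \ref{prop:lowfreqellipticest} is satisfied provided $s_0$ is chosen small enough depending on $L$. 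One then applies \eqref{eq:lowfreqellipticest}, which controls
\begin{equation*}
\int_{r_+}^{r_c} M^2r^{-2}|\partial_r(Dr^2\partial_r{\hat{\psi}})|^2+D|\partial_r\hphi|^2+\ldots\,dr
\end{equation*}
in terms of $C_\epsilon\int |\hat L_{s,\ell,\kappa}(\hat\psi)|^2\,dr$ plus a small multiple of $\int_0^{\rho_0}(M+|s|M^2+\epsilon)|\hphi|^2+M^2(\ldots)|\hphi_{(1)}|^2\,d\rho_\star$; the latter boundary-type terms near $\rho=0$ are, for $|s|$ small, precisely of the form that can be absorbed into the $n=0,1$ terms on the left-hand side of the Gevrey sum \eqref{eq:boundlgevrey}.

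The key steps, in order, are: \textbf{(1)} fix parameters $\alpha,\beta,\mu,\nu,\widetilde\sigma$ via Lemma \ref{lm:allowedvaluess} so that (as in Corollary \ref{cor:maingevreysumbound}) all but the top-order integral on the left of \eqref{eq:boundlgevrey} is positive-definite — this is where the restriction $s\in\Omega_\sigma$ enters; \textbf{(2)} use $\kappa_+,\kappa_c>0$ and the enhanced red-shift to absorb the non-positive top-order term $n=N_\infty$ into the coercive second integral, provided $N_\infty\ge N_{\kappa_+,\kappa_c}$, exactly as in the proof of Proposition \ref{prop:maingevreysumv2}; \textbf{(3)} estimate the finitely many boundary terms $\sum_{n=0}^{N_{L,\epsilon}}|\hphi^{(n)}|^2(\rho_0)$ appearing on the right of \eqref{eq:boundlgevrey}: by repeated use of the fixed-frequency ODE \eqref{eq:maineqhatphin} away from the horizon (cf. Lemma \ref{lm:analytawayhor}, which here is easier since $N_{L,\epsilon}$ is bounded in terms of $L$) these are controlled by $|\hphi|^2(\rho_0)+|\hphi_{(1)}|^2(\rho_0)$ plus lower-order integrals of $\tilde f$; \textbf{(4)} bound $|\hphi|^2(\rho_0)+|\hphi_{(1)}|^2(\rho_0)$ by $\int_{R_0^+}^{R_0^c}|\hat\psi|^2+|\partial_r\hat\psi|^2+|\partial_r^2\hat\psi|^2\,dr$ via the fundamental theorem of calculus and a trace inequality, and then feed in the elliptic estimate from step (1) above; \textbf{(5)} absorb the small ($O(\epsilon)+O(|s|)+O(\rho_0)$) near-$\rho=0$ terms coming out of \eqref{eq:lowfreqellipticest} back into the left-hand side, using smallness of $s_0$ and $\rho_0$; \textbf{(6)} convert the estimate for $\hat\phi^{(n+1)}$ and $\hat\phi^{(n+2)}$ into the stated estimate by relabelling and using that $\hat\phi=r\hat\psi$, which in these regions is just $\partial_\rho^n$ acting on $r\hat\psi$ in $(v,\rho)$ or $(u,\rho)$ coordinates.

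The main obstacle is the coupling in steps (3)--(5): one must ensure that when the finitely many boundary terms $|\hphi^{(n)}|^2(\rho_0)$, $0\le n\le N_{L,\epsilon}$, are traded for the interior $H^2$-norm of $\hat\psi$ on $[R_0^+,R_0^c]$ and then for $\int|\hat L_{s,\ell,\kappa}(\hat\psi)|^2$ via Proposition \ref{prop:lowfreqellipticest}, the resulting near-horizon feedback terms genuinely carry a small factor. In the large-$\ell$ case of Section \ref{sec:gevrest} this smallness was the exponentially decaying factor $K_0^{-\ell}$ in \eqref{eq:ellipticestextv2} beating the exponential growth $A^\ell$; here, with $\ell$ bounded, the smallness must instead come from $|s|$ (and from $\rho_0$), which is why $s_0$ must be taken small depending on $L$ — note the constant $C_{L,s}$ is allowed to blow up as $L\to\infty$ but must stay bounded as $|s|\to 0$. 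A secondary technical point is that all constants obtained this way must be shown to be \emph{independent of} $\kappa_+,\kappa_c$; this follows because every estimate invoked (Proposition \ref{prop:lowfreqellipticest}, Proposition \ref{prop:boundlgevrey}, Lemma \ref{lm:expweightest}) has $\kappa$-independent constants, and the only $\kappa$-dependence is in the threshold $N_{\kappa_+,\kappa_c}$ for absorbing the top-order term, which is exactly what the statement permits. Once these bookkeeping points are handled, the estimate \eqref{eq:boundlgevrey2} follows by combining \eqref{eq:boundlgevrey} with the elliptic estimate and a Gr\"onwall-free direct absorption argument.
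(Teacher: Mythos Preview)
Your proposal is correct and follows essentially the same route as the paper: start from Proposition~\ref{prop:boundlgevrey}, fix the parameters via Lemma~\ref{lm:allowedvaluess} and absorb the top-order term using $\kappa>0$ (as in Corollary~\ref{cor:maingevreysumbound} and Proposition~\ref{prop:maingevreysumv2}), reduce the finitely many boundary terms $|\hphi^{(n)}|^2(\rho_0)$ to $|\hphi|^2(\rho_0)+|\hphi_{(1)}|^2(\rho_0)$ via the analogue of Lemma~\ref{lm:analytawayhor}, and then couple to the low-frequency elliptic estimate \eqref{eq:lowfreqellipticest} instead of \eqref{eq:ellipticestextv2}, with the absorbing smallness now coming from $|s|\le s_0(L)$ rather than from an exponential-in-$\ell$ factor. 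Your identification of the coupling issue and of why the constants are $\kappa$-independent is exactly right.
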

\begin{proof}
First we apply the analogues of the estimates in Lemma \ref{lm:analytawayhor} with $\hat{\Phi}$ replaced by $\hat{\phi}$ to estimate the boundary terms on the right-hand side of \eqref{eq:boundlgevrey} by $\hphi(\rho_0)$ and $\hphi^{(1)}(\rho_0)$. We then repeat the proof of Corollary \ref{cor:maingevreysumbound}, using Proposition \ref{prop:boundlgevrey} instead of Proposition \ref{prop:maingevreysum}. Finally, we proceed as in the proof of Proposition \ref{prop:maingevreysumv2}, applying \eqref{eq:lowfreqellipticest} in place of \eqref{eq:ellipticestextv2}, taking $|s|$ to be suitably small (depending on $C_{L}$) so that the terms on the right-hand side of \eqref{eq:lowfreqellipticest} can be absorbed.
\end{proof}

\section{Quasinormal modes as eigenfunctions}
\label{sec:constrresolvent}
In this section, we apply the estimates of Theorem \ref{thm:mainthmpaper} to construct the resolvent operator $\hat{{L}}^{-1}_{s,\ell,0}$ on an appropriate Hilbert space and derive compactness properties.

\subsection{Definition of the resolvent operator}
We start by investigating the invertibility of $\hat{{L}}_{s,\ell,\kappa}$ with $\kappa>0$. In this case, the main required properties, which involve estimates that are \emph{not} uniform in $\kappa$, are adaptations of the estimates in \cite{warn15} and are derived in the setting of the present paper in Appendix \ref{sec:red-shift}.
\begin{theorem}
\label{thm:fredholmpositivekappa}
Fix $\ell,\lambda\in \N_0$ and $\kappa>0$. Let $s\in \C$ and $k\in \N_0$. We define $\mathcal{D}^{k}(\hat{{L}}_{s,\ell,\kappa})$ as the closure of $\{f\in \pi_{\ell}(C^{\infty}([r_+,r_c]\times \s^2;\C))\,|\, ||f||_{\widetilde{H}^{k}}+||\hat{{L}}_{s,\ell,\kappa}(f)||_{\widetilde{H}^{k}}<\infty\}$ with respect to the graph norm 
\begin{equation*}
||\cdot||_{\widetilde{H}^{k}}+||\hat{{L}}_{s,\ell,\kappa}(\cdot)||_{\widetilde{H}^{k}}.
\end{equation*}
Then, for all $k\in \N$, the map
\begin{equation*}
\hat{L}_{s,\ell, \kappa}: \mathcal{D}^{k}(\hat{{L}}_{s,\ell, \kappa})\to \pi_{\ell}(\widetilde{H}^{k})
\end{equation*}
is a densely defined, closed linear operator.

Moreover, there exists $k_{\kappa}\in \N$, independent of $\ell$ and $\lambda$, such that the following dichotomy holds.
\begin{enumerate}
\item Either
\begin{equation*}
\hat{{L}}_{s,\ell+\lambda, \kappa}^{-1}: \pi_{\ell+\lambda}(\widetilde{H}^{k_{\kappa}})\to \mathcal{D}^{k_{\kappa}}(\hat{{L}}_{s,\ell+\lambda, \kappa})\subset \pi_{\ell+\lambda}(\widetilde{H}_2^{k_{\kappa}+1})
\end{equation*}
is a well-defined bounded linear operator,
\item or the equation
\begin{equation*}
\hat{{L}}_{s,\ell+\lambda, \kappa}({\hat{\psi}})=0
\end{equation*}
admits non-trivial solutions ${\hat{\psi}}\in \mathcal{D}^{k_{\kappa}}(\hat{{L}}_{s,\ell+\lambda, \kappa})$.
\end{enumerate}
\end{theorem}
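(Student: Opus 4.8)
\textbf{Proof proposal for Theorem \ref{thm:fredholmpositivekappa}.}

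The plan is to establish the stated dichotomy as an instance of the Fredholm alternative for the operator $\hat{{L}}_{s,\ell+\lambda,\kappa}$, viewed as a compact perturbation of an invertible operator on the Hilbert space $\pi_{\ell+\lambda}(\widetilde{H}^{k_\kappa})$. First I would record the basic functional-analytic setup: that $\hat{{L}}_{s,\ell,\kappa}$ with domain $\mathcal{D}^k(\hat{{L}}_{s,\ell,\kappa})$ (the closure of smooth fixed-$\ell$ functions under the graph norm) is densely defined and closed — this is immediate from the definition of the domain as a graph-norm closure, together with the observation that smooth compactly supported data in $\pi_\ell(C^\infty([r_+,r_c]\times\s^2))$ are dense in $\pi_\ell(\widetilde{H}^k)$. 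Closedness is automatic for any operator defined on its graph-norm completion. The substance is in the dichotomy.

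The key step is to produce, for $\re s$ sufficiently large (say $\re s > B$ with $B$ the constant from Theorem \ref{thm:semigroup}), a bounded inverse $\hat{{L}}_{s,\ell+\lambda,\kappa}^{-1}$ on $\pi_{\ell+\lambda}(\widetilde{H}^{k_\kappa})$ mapping into $\pi_{\ell+\lambda}(\widetilde{H}_2^{k_\kappa+1})$, using the red-shift and enhanced red-shift estimates near both horizons (developed in \cite{redshift} and adapted in Appendix \ref{sec:red-shift} to the present conventions). These estimates, applied at $k_\kappa$ levels of regularity — where $k_\kappa$ is chosen large enough that the enhanced red-shift dominates the zeroth-order angular term uniformly in $\ell+\lambda$, which is why $k_\kappa$ can be taken independent of $\ell$ and $\lambda$ — yield a coercive estimate of the form $\|\hat{\psi}\|_{\widetilde{H}_2^{k_\kappa+1}} \lesssim \|\hat{{L}}_{s,\ell+\lambda,\kappa}(\hat{\psi})\|_{\widetilde{H}^{k_\kappa}}$ for such $s$, establishing injectivity and closed range; surjectivity follows by a standard duality/density argument (or by directly solving via the Lax--Milgram-type framework of \cite{warn15}). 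Then one invokes the compact embedding $\widetilde{H}_2^{k_\kappa+1} \hookrightarrow \widetilde{H}^{k_\kappa}$ (a Rellich--Kondrachov statement on the compact interval pieces, with the $\widetilde{H}^{k_\kappa}$-to-$\widetilde{H}^2$ drop only on the middle region) to conclude that $\hat{{L}}_{s,\ell+\lambda,\kappa}^{-1}$ is compact for this reference value of $s$. Writing $\hat{{L}}_{s,\ell+\lambda,\kappa} = \hat{{L}}_{s_0,\ell+\lambda,\kappa} + (\hat{{L}}_{s,\ell+\lambda,\kappa} - \hat{{L}}_{s_0,\ell+\lambda,\kappa})$ and noting that the difference is a first-plus-zeroth-order operator in $s$ — hence relatively compact with respect to $\hat{{L}}_{s_0,\ell+\lambda,\kappa}$ — the analytic Fredholm theorem applied to $s\mapsto \hat{{L}}_{s,\ell+\lambda,\kappa}^{-1}$ gives exactly the dichotomy: either $\hat{{L}}_{s,\ell+\lambda,\kappa}^{-1}$ exists as a bounded operator (and then automatically maps into $\mathcal{D}^{k_\kappa}\subset\widetilde{H}_2^{k_\kappa+1}$ by elliptic regularity away from the horizons combined with the red-shift gain at the horizons), or the kernel of $\hat{{L}}_{s,\ell+\lambda,\kappa}$ is nontrivial.

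The main obstacle I anticipate is the bookkeeping needed to verify that $k_\kappa$ can be chosen \emph{uniformly} in $\ell$ and $\lambda$ while still absorbing the $\ell(\ell+1)$-weighted zeroth-order term: the enhanced red-shift effect gains a factor growing in the number of commuted derivatives, so one needs $k_\kappa$ large enough (depending only on $\kappa$, since $\kappa$ controls the surface-gravity weight) that this growth outpaces the fixed angular eigenvalue — this is a careful but essentially routine adaptation of the argument in Section 3 of \cite{warn15}, carried out in Appendix \ref{sec:red-shift}. A secondary technical point is confirming that the domain $\mathcal{D}^{k_\kappa}(\hat{{L}}_{s,\ell+\lambda,\kappa})$, defined via graph-norm closure, actually coincides with a genuine weighted Sobolev space into which the resolvent maps boundedly; this requires the elliptic estimate near the (non-degenerate, since $\kappa>0$) horizons, which is precisely what the red-shift multiplier provides, so no new idea is needed beyond invoking the appendix.
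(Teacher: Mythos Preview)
Your strategy matches the paper's: obtain a reference point at which $\hat{L}_{s,\ell+\lambda,\kappa}^{-1}$ is bounded with compact image (via the red-shift estimates of Appendix~\ref{sec:red-shift} together with Rellich--Kondrachov), then invoke the Analytic Fredholm Theorem. The paper takes the reference point by shifting the angular parameter (large $\lambda$, using Proposition~\ref{prop:poskappainvert}) rather than by going to large $\re s_0$ as you do, but both routes feed into the same Fredholm machinery from \cite{warn15}.

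There is one conceptual slip in your justification of the uniformity of $k_\kappa$ in $\ell,\lambda$. You write that $k_\kappa$ must be large enough for the enhanced red-shift to ``dominate the zeroth-order angular term uniformly in $\ell+\lambda$''. This is backwards: the term $-(\ell+\lambda)(\ell+\lambda+1)\hat\psi$ enters the coercive estimate with a \emph{favourable} sign (it strengthens the elliptic part), so it never needs to be absorbed by the red-shift gain. The threshold on $k_\kappa$ comes solely from the condition $\re(s)>-\bigl(\tfrac{1}{2}+k_\kappa\bigr)\kappa$ in the commuted red-shift estimate (Proposition~\ref{prop:mainred-shiftest}), which involves only $s$ and $\kappa$. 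This is precisely why Proposition~\ref{prop:poskappainvert} delivers invertibility for all angular modes at once, and likewise why your large-$\re s_0$ argument is uniform in $\ell+\lambda$. The error is in the explanation, not the mechanism, so the proof itself is unaffected.
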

\begin{proof}
Since $\kappa>0$, we can apply the techniques developed in \cite{warn15}. Since \cite{warn15} deals with asymptotically anti de Sitter spacetimes, we include the main estimates in the setting of the present paper in Appendix \ref{sec:red-shift}. In particular, Proposition \ref{prop:poskappainvert} together with Rellich--Kondrachov implies that $\hat{{L}}_{s,\ell+\lambda, \kappa}^{-1}$ is well-defined for suitably large $\lambda$ (depending on $k$ and $\kappa$). We conclude the proof by applying the Analytic Fredholm Theorem as in Theorem 4.9 of \cite{warn15}.
\end{proof}

We now apply Corollary \ref{cor:maingevreysum3} to derive estimates for $\hat{{L}}_{s,\ell+\lambda, \kappa}^{-1}$ that are uniform in $\kappa$, after restricting to suitable subspaces.
\begin{proposition}
\label{prop:inversepositivekappa}
Fix $\sigma\in \R_{>0}$ and $\ell\in \N$. Then for all $s\in \Omega_{\sigma}$, there exist $\lambda_0\in \N$ and $\kappa_0>0$ independent of $\ell$ and $s_{\ell}>0$ dependent on $\ell$ such that for all $0<\kappa\leq \kappa_0$ and either $\lambda\geq \lambda_0$ or $|s|\leq s_{\ell}$ the inverse map
\begin{equation*}
\hat{{L}}_{s,\ell+\lambda, \kappa}^{-1}: H_{\sigma,\rho_0}\to \mathcal{D}_{\sigma}(\hat{{L}}_{s,\ell+\lambda, \kappa})\subseteq H_{\sigma,2,\rho_0}.
\end{equation*}
is a well-defined bounded linear operator and there exists a constant $C_{\ell,\lambda,s}>0$ such that the operator norm can be bounded:
\begin{equation*}
||\hat{{L}}_{s,\ell+\lambda, \kappa}^{-1}||_{H_{\sigma, \rho_0} \to H_{\sigma,2,\rho_0}}\leq C_{\ell,\lambda,s}.
\end{equation*}
\end{proposition}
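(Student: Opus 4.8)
The plan is to deduce Proposition~\ref{prop:inversepositivekappa} from the two ``hard'' estimates of the paper, Corollary~\ref{cor:maingevreysum3} (valid for $\ell$ large, any $s\in\Omega_\sigma$) and Proposition~\ref{prop:boundlgev2} (valid for bounded $\ell$, small $|s|$), together with the Fredholm dichotomy of Theorem~\ref{thm:fredholmpositivekappa}. First I would fix $\sigma\in\R_{>0}$ and $s\in\Omega_\sigma$, and choose $\lambda_0$ so large that $\ell+\lambda\geq\ell+\lambda_0$ satisfies the ``suitably large'' hypothesis appearing in Corollary~\ref{cor:maingevreysum3}; alternatively, for the low-frequency branch, fix $L=\ell$ and shrink $s_\ell$ so that $|s|\leq s_\ell$ puts us in the regime of Proposition~\ref{prop:boundlgev2}. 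In either case I obtain an a priori estimate of the schematic form
\begin{equation}
\label{eq:aprioriinv}
\|{\hat{\psi}}\|_{H_{\sigma,2,\rho_0}}\leq C_{\ell,\lambda,s}\,\|\hat{{L}}_{s,\ell+\lambda,\kappa}({\hat{\psi}})\|_{H_{\sigma,\rho_0}},
\end{equation}
with $C_{\ell,\lambda,s}$ \emph{independent of $\kappa$} (this is the crucial uniformity, coming from the fact that all constants in Sections~\ref{sec:degelliptic}--\ref{sec:lowfreqgev} were tracked to be $\kappa$-independent once $\kappa_++\kappa_c\lesssim(\ell+1)^{-1}$). I then pick $\kappa_0>0$ small enough that $2\kappa_0\lesssim(\ell+\lambda+1)^{-1}$, so that \eqref{eq:aprioriinv} applies for all $0<\kappa\leq\kappa_0$.

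\textbf{Key steps.} The estimate \eqref{eq:aprioriinv} does three things at once, and I would extract them in this order. (1) \emph{Injectivity}: if $\hat{{L}}_{s,\ell+\lambda,\kappa}({\hat{\psi}})=0$ with ${\hat{\psi}}\in\mathcal{D}_\sigma(\hat{{L}}_{s,\ell+\lambda,\kappa})\subseteq H_{\sigma,2,\rho_0}$, then \eqref{eq:aprioriinv} forces ${\hat{\psi}}=0$; in particular the second alternative of the dichotomy in Theorem~\ref{thm:fredholmpositivekappa} is ruled out, so $\hat{{L}}_{s,\ell+\lambda,\kappa}^{-1}$ exists as a bounded operator $\pi_{\ell+\lambda}(\widetilde H^{k_\kappa})\to\mathcal{D}^{k_\kappa}(\hat{{L}}_{s,\ell+\lambda,\kappa})\subset\pi_{\ell+\lambda}(\widetilde H_2^{k_\kappa+1})$. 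Here one has to be slightly careful that the kernel in the \emph{Sobolev} domain $\mathcal{D}^{k_\kappa}$ coincides with the kernel in the \emph{Gevrey} domain $\mathcal{D}_\sigma$: this follows from elliptic regularity away from the horizons plus the propagation of Gevrey regularity near $\mathcal{H}^+$ and $\mathcal{C}^+$ (the coefficients of \eqref{eq:eqradfieldhor2}, \eqref{eq:eqradfieldcos2} are analytic in $\rho$, so a smooth solution of the homogeneous ODE that is regular at $\rho=0$ is automatically in $H_{\sigma,2,\rho_0}$ for the relevant $\sigma$). (2) \emph{Surjectivity/mapping into the Gevrey space}: given $f\in H_{\sigma,\rho_0}$, approximate by $f_j\in\pi_{\ell+\lambda}(C^\infty)$ in $H_{\sigma,\rho_0}$; each ${\hat{\psi}}_j:=\hat{{L}}_{s,\ell+\lambda,\kappa}^{-1}(f_j)$ lies in $\mathcal{D}^{k_\kappa}$, hence is smooth on $[r_+,r_c]$, and \eqref{eq:aprioriinv} applied to ${\hat{\psi}}_j-{\hat{\psi}}_k$ shows $\{{\hat{\psi}}_j\}$ is Cauchy in $H_{\sigma,2,\rho_0}$; its limit ${\hat{\psi}}$ lies in $\mathcal{D}_\sigma(\hat{{L}}_{s,\ell+\lambda,\kappa})\subseteq H_{\sigma,2,\rho_0}$ and solves $\hat{{L}}_{s,\ell+\lambda,\kappa}({\hat{\psi}})=f$ by continuity of the differential operator on the graph norm. (3) \emph{Operator norm bound}: passing to the limit in \eqref{eq:aprioriinv} along this sequence gives $\|\hat{{L}}_{s,\ell+\lambda,\kappa}^{-1}(f)\|_{H_{\sigma,2,\rho_0}}\leq C_{\ell,\lambda,s}\|f\|_{H_{\sigma,\rho_0}}$, which is exactly the claimed bound, with $C_{\ell,\lambda,s}$ independent of $\kappa$.

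\textbf{Main obstacle.} The genuinely delicate point is not any of the soft functional-analytic steps above but the \emph{identification of domains and the $\kappa$-independence} in step (1)--(3): one must be sure that the a priori estimate proved on smooth functions (Corollary~\ref{cor:maingevreysum3}, Proposition~\ref{prop:boundlgev2}) genuinely controls elements of the abstract closure $\mathcal{D}_\sigma(\hat{{L}}_{s,\ell+\lambda,\kappa})$, and that the Sobolev-based inverse furnished by Theorem~\ref{thm:fredholmpositivekappa} actually lands in the smaller Gevrey space. The former is handled by a density argument using that $\{f\in C^\infty\,|\,\|f\|_{H_{\sigma,\rho_0}}+\|\hat{{L}}_{s,\ell+\lambda,\kappa}(f)\|_{H_{\sigma,\rho_0}}<\infty\}$ is dense in the graph norm by definition of $\mathcal{D}_\sigma$; the latter needs the observation that once $f\in H_{\sigma,\rho_0}$ and ${\hat{\psi}}$ is the (unique, by $\kappa>0$ theory) solution, the higher-order equations \eqref{eq:maineqhatphin} together with Corollary~\ref{cor:maingevreysum3}'s right-hand side show $\partial_\rho^n(r{\hat{\psi}})$ obeys precisely the Gevrey summability defining $H_{\sigma,2,\rho_0}$. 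A secondary nuisance is bookkeeping the compatibility $2\kappa\lesssim(\ell+\lambda+1)^{-1}$ between the ``$\kappa$ suitably small'' hypotheses of the various estimates and the fixed $\kappa_0$: one takes $\kappa_0$ after $\lambda_0$ (respectively after $s_\ell$) is chosen, which is consistent since none of those choices depended on $\kappa$. Finally, the two branches ($\lambda\geq\lambda_0$ versus $|s|\leq s_\ell$) are combined by simply taking the maximum/minimum of the respective thresholds, giving the statement as written.
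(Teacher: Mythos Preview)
Your approach is essentially the same as the paper's, and it is correct. Both arguments combine the Fredholm dichotomy of Theorem~\ref{thm:fredholmpositivekappa} with the $\kappa$-uniform a priori estimate of Corollary~\ref{cor:maingevreysum3} (and Proposition~\ref{prop:boundlgev2} for the small-$|s|$ branch) to rule out case~(2) and bound the inverse in case~(1).

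The paper's version is more streamlined in two respects. First, it observes directly that $H_{\sigma,\rho_0}\subset \widetilde H^{k_\kappa}$ for every $k_\kappa$ (Gevrey regularity near the horizons implies membership in every Sobolev space there, and in the middle region only $L^2$ is required), so the Sobolev-based inverse from Theorem~\ref{thm:fredholmpositivekappa} applies immediately to $f\in H_{\sigma,\rho_0}$ without any approximation by smooth data; your Cauchy-sequence step~(2) is therefore unnecessary, though harmless. Second, for the injectivity step the paper does not argue that a smooth kernel element is automatically Gevrey; it simply notes that any ${\hat{\psi}}\in\ker\hat L_{s,\ell+\lambda,\kappa}\cap\mathcal D^{k_\kappa}$ is smooth (by the red-shift bootstrap of Proposition~\ref{prop:poskappainvert}, valid for all $k\geq k_\kappa$), and then applies Corollary~\ref{cor:maingevreysum3} with $\tilde f=0$ and finite $N_\infty$, letting $N_\infty\to\infty$ to conclude ${\hat{\psi}}=0$. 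Your route through analyticity at the regular singular point is correct for $\kappa>0$ but more than is needed. Finally, you are right to flag the second branch explicitly: the paper's proof cites only Corollary~\ref{cor:maingevreysum3}, leaving the $|s|\leq s_\ell$ alternative implicit, so your version is in that respect more complete.
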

\begin{proof}
Let $\kappa>0$. First suppose for our choice of $s$, we are in case (1) of Theorem \ref{thm:fredholmpositivekappa}. Suppose $f\in H_{\sigma,\rho_0}$ and consider $g=f\cdot Y_{\ell+\lambda,0}$. Then $g\in \pi_{\ell+\lambda}(\widetilde{H}^{k_{\kappa}})$ for all $\kappa>0$, so the restriction
\begin{equation*}
\hat{{L}}_{s,\ell+\lambda, \kappa}^{-1}: H_{\sigma,\rho_0}\to \mathcal{D}_{\sigma}(\hat{{L}}_{s,\ell+\lambda, \kappa})
\end{equation*}
is well-defined. Then we can apply Corollary \ref{cor:maingevreysum3} and take the limit $N_{\infty}\to \infty$ to conclude that there exists $\kappa_0>0$ and $\rho_0>0$ suitably small, such that for all $0\leq \kappa\leq \kappa_0$ and $\tilde{f}\in H_{\sigma,\rho_0}$
\begin{equation*}
||\hat{{L}}_{s,\ell+\lambda, \kappa}^{-1}(\tilde{f})||_{\sigma,2,\rho_0}\leq C_{\ell,\lambda,s} ||\tilde{f}||_{\sigma,\rho_0},
\end{equation*}
so $\mathcal{D}_{\sigma}(\hat{{L}}_{s,\ell+\lambda, \kappa})\subseteq H_{\sigma,2,\rho_0}$ and $||\hat{{L}}_{s,\ell+\lambda, \kappa}^{-1}||_{H_{\sigma,\rho_0} \to H_{\sigma,2,\rho_0}}\leq C_{\ell,\lambda,s}$. 

Now suppose we are in case (2) of Theorem \ref{thm:fredholmpositivekappa}. Let $\hat{{L}}_{s,\ell+\lambda, \kappa}{\hat{\psi}}=0$. Then we can apply Corollary \ref{cor:maingevreysum3} to conclude that ${\hat{\psi}}=0$, which is a contradiction.
\end{proof}

Finally, we use the uniform estimates from Proposition \ref{prop:inversepositivekappa} to construct  $\hat{{L}}_{s,\ell+\lambda, 0}^{-1}$ as a limit of a sequence of resolvent operators with strictly decreasing $\kappa>0$.

\begin{proposition}
\label{prop:resolventzerokappa}
Fix $\sigma\in \R_{>0}$ and let $s\in \Omega_{\sigma}$.  Let $\ell,\lambda\in \N_0$ and consider the linear operator
\begin{equation*}
\hat{{L}}_{s,\ell+\lambda, 0}: \mathcal{D}_{\sigma}(\hat{{L}}_{s,\ell+\lambda, 0})\to H_{\sigma,\rho_0}.
\end{equation*}

Let $\kappa_0, \lambda_0, s_{\ell}$ be the constants from Proposition \ref{prop:inversepositivekappa} and let $\{\kappa_n\}_{n\in \N}$ be a sequence of positive numbers such that $|\kappa_n|<\kappa_0$ and $\kappa_n\to 0$. Let $\lambda\geq \lambda_0$ or $|s|\leq s_{\ell}$. Then the sequence of linear operators 
\begin{equation*}
\left\{\hat{{L}}_{s, \ell+\lambda, \kappa_n}^{-1}: H_{\sigma,\rho_0} \to\mathcal{D}_{\sigma} (\hat{{L}}_{s, \ell+\lambda, \kappa_n}) \subset H_{\sigma, 2, \rho_0}\right\}_{n\in \N}
\end{equation*}
has a subsequence that converges in the Banach space of bounded linear operators $B(H_{\sigma,\rho_0} ,H_{\sigma,\rho_0})$ to
\begin{equation*}
\hat{{L}}_{s, \ell+\lambda, 0}^{-1}: H_{\sigma,\rho_0} \to \mathcal{D}_{\sigma} (\hat{{L}}_{s, \ell+\lambda,0}) \subseteq H_{\sigma,2, \rho_0},
\end{equation*}
which is the inverse of $\hat{{L}}_{s, \ell+\lambda, 0}$ and there exists a constant $C_{\ell,\lambda,s}>0$ such that
\begin{equation*}
||\hat{{L}}_{s, \ell+\lambda, 0}^{-1}||_{H_{\sigma,\rho_0} \to  H_{\sigma,2,\rho_0}}\leq C_{\ell,\lambda,s}.
\end{equation*}
\end{proposition}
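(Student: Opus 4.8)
The plan is to obtain the resolvent at $\kappa = 0$ as a strong limit of the resolvents $\hat{{L}}_{s,\ell+\lambda,\kappa_n}^{-1}$, exploiting that Proposition \ref{prop:inversepositivekappa} gives a uniform (in $\kappa$) bound on their operator norms $H_{\sigma,\rho_0}\to H_{\sigma,2,\rho_0}$, together with the compactness of the embedding $H_{\sigma,2,\rho_0}\hookrightarrow H_{\sigma,\rho_0}$ (the $L^2$-based Gevrey analogue of Rellich--Kondrachov referred to in Section \ref{intro:keyest}). First I would fix an orthonormal-type generating family, or simply argue abstractly: since $\{\hat{{L}}_{s,\ell+\lambda,\kappa_n}^{-1}\}$ is a bounded sequence in $B(H_{\sigma,\rho_0},H_{\sigma,2,\rho_0})$ and $H_{\sigma,2,\rho_0}$ embeds compactly into $H_{\sigma,\rho_0}$, the composition with the compact embedding yields a sequence of compact operators in $B(H_{\sigma,\rho_0},H_{\sigma,\rho_0})$ which is bounded in the stronger norm; by a diagonal argument (compactness of each operator plus the uniform $H_{\sigma,2,\rho_0}$-bound) one extracts a subsequence converging in operator norm in $B(H_{\sigma,\rho_0},H_{\sigma,\rho_0})$ to some bounded operator $R$. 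Here the key point is that the uniform bound into the \emph{smaller} space $H_{\sigma,2,\rho_0}$ upgrades weak/strong subsequential limits to norm limits via the compact embedding — this is the standard trick for passing to the limit in a family of resolvents that are uniformly smoothing.

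Next I would identify $R$ with $\hat{{L}}_{s,\ell+\lambda,0}^{-1}$. For $f \in H_{\sigma,\rho_0}$, set $\hat{\psi}_n = \hat{{L}}_{s,\ell+\lambda,\kappa_n}^{-1}(f)$, so $\hat{{L}}_{s,\ell+\lambda,\kappa_n}(\hat{\psi}_n) = f$. Along the convergent subsequence $\hat{\psi}_n \to Rf$ in $H_{\sigma,\rho_0}$, and the uniform bound gives $\hat{\psi}_n \to Rf$ weakly in $H_{\sigma,2,\rho_0}$; in particular $Rf \in H_{\sigma,2,\rho_0}$, which already establishes $\mathcal{D}_{\sigma}(\hat{{L}}_{s,\ell+\lambda,0}) \subseteq H_{\sigma,2,\rho_0}$ once we know $R$ maps into the domain. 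The coefficients of $\hat{{L}}_{s,\ell,\kappa}$ depend continuously (indeed analytically) on $\kappa$ — this is manifest from \eqref{eq:fixedfreqpsis} and the $\kappa$-dependence of $D$, $h_{r_+}$, and the constants $A_i,B_i$ recorded in Sections \ref{sec:confcoord} and \ref{sec:eq} — and $\kappa_n \to 0$. Testing the equation $\hat{{L}}_{s,\ell+\lambda,\kappa_n}(\hat{\psi}_n) = f$ against a fixed smooth compactly supported test function and passing to the limit, using the $H_{\sigma,2,\rho_0}$-weak convergence of $\hat{\psi}_n$ to control the top-order ($\rho^4 \partial_\rho^{n+2}$) terms and the $\kappa_n \to 0$ convergence of the coefficients, yields $\hat{{L}}_{s,\ell+\lambda,0}(Rf) = f$ in the distributional sense, hence $Rf \in \mathcal{D}_{\sigma}(\hat{{L}}_{s,\ell+\lambda,0})$ and $\hat{{L}}_{s,\ell+\lambda,0}\circ R = \mathrm{id}$. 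For the reverse identity $R\circ \hat{{L}}_{s,\ell+\lambda,0} = \mathrm{id}$ on $\mathcal{D}_{\sigma}(\hat{{L}}_{s,\ell+\lambda,0})$, it suffices to note injectivity of $\hat{{L}}_{s,\ell+\lambda,0}$: if $\hat{{L}}_{s,\ell+\lambda,0}(\hat{\psi}) = 0$ with $\hat{\psi}\in \mathcal{D}_{\sigma}\subseteq H_{\sigma,2,\rho_0}$, then the a priori estimate of Corollary \ref{cor:maingevreysum3} (valid at $\kappa = 0$ precisely under the assumption $\hat{\psi}\in H_{\sigma,2,\rho_0}$), or for bounded $\ell$ and small $|s|$ Proposition \ref{prop:boundlgev2}, forces $\hat{\psi} = 0$. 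Finally the operator-norm bound $\|\hat{{L}}_{s,\ell+\lambda,0}^{-1}\|_{H_{\sigma,\rho_0}\to H_{\sigma,2,\rho_0}} \leq C_{\ell,\lambda,s}$ passes to the limit from Proposition \ref{prop:inversepositivekappa} by lower semicontinuity of the norm under weak convergence in $H_{\sigma,2,\rho_0}$.

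The main obstacle I anticipate is the limiting argument for the top-order term: at $\kappa_n > 0$ the operator $\hat{{L}}_{s,\ell+\lambda,\kappa_n}^{-1}$ is genuinely smoothing (the red-shift mechanism of Theorem \ref{thm:fredholmpositivekappa} and Appendix \ref{sec:red-shift}), whereas at $\kappa = 0$ the $\rho^4\partial_\rho^{n+2}$-weighted sum in $H_{\sigma,2,\rho_0}$ is the most delicate piece and the uniform bound from Proposition \ref{prop:inversepositivekappa} is exactly what keeps it under control; one must be careful that the weak limit in $H_{\sigma,2,\rho_0}$ really captures the correct distributional equation at $\kappa = 0$ rather than losing the limiting top-order term. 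A secondary subtlety is that the subsequence extraction must be done once and for all (not $f$-by-$f$) to produce an operator limit — this is handled by the compactness-plus-uniform-bound structure, which gives precompactness in operator norm on $B(H_{\sigma,\rho_0},H_{\sigma,\rho_0})$ directly. Everything else (continuity of coefficients in $\kappa$, density of smooth functions, the graph-norm definition of the domain) is routine given the machinery already set up in Sections \ref{sec:enspaces}--\ref{sec:lowfreqgev}.
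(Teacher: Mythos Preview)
Your operator-norm convergence step has a genuine gap. A uniformly bounded sequence $T_n:H_{\sigma,\rho_0}\to H_{\sigma,2,\rho_0}$ composed with a compact embedding produces a \emph{collectively compact} family in $B(H_{\sigma,\rho_0},H_{\sigma,\rho_0})$, and a diagonal argument on a dense set yields a strongly convergent subsequence; but collective compactness plus strong convergence does \emph{not} imply norm convergence. (Think of rank-one projections onto successive basis vectors in $\ell^2$: each is compact, all have norm $1$, they converge strongly to $0$, yet $\|P_n-P_m\|=1$.) So the sentence ``the uniform bound into the smaller space \ldots upgrades weak/strong subsequential limits to norm limits'' is not a standard trick---it is simply false as stated, and your ``once and for all'' subsequence does not come for free.

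The paper does two separate things. First, to \emph{construct} the limit operator, it exploits the openness of $\Omega_\sigma$ to pick $\sigma_2>\sigma_1>\sigma$ with $s\in\Omega_{\sigma_2}$, so that Proposition~\ref{prop:inversepositivekappa} gives bounds in the strictly stronger space $H_{\sigma_2,2,\rho_0}$; the compact embedding $H_{\sigma',\rho_0}\Subset H_{\sigma,\rho_0}$ for $\sigma'>\sigma$ (Lemma~\ref{lm:cptembeddinggevrey}) then gives enough convergence to pass to the limit in the equation, and density extends the inverse to all of $H_{\sigma,\rho_0}$. Second, for \emph{norm} convergence the paper proves the sequence is Cauchy via the resolvent identity
\[
\hat{L}_{s,\ell+\lambda,\kappa_n}^{-1}-\hat{L}_{s,\ell+\lambda,\kappa_m}^{-1}=\hat{L}_{s,\ell+\lambda,\kappa_m}^{-1}\circ(\hat{L}_{s,\ell+\lambda,\kappa_m}-\hat{L}_{s,\ell+\lambda,\kappa_n})\circ\hat{L}_{s,\ell+\lambda,\kappa_n}^{-1}.
\]
Here there is a further subtlety you did not anticipate: the difference of the operators contains a term $\sim(\kappa_n-\kappa_m)\rho\,\partial_\rho^2(r\hat\psi)$, and $\|\rho\,\partial_\rho^2(r\hat\psi)\|_{\sigma,\rho_0}$ is \emph{not} controlled by $\|\hat\psi\|_{\sigma,2,\rho_0}$ (the latter carries only a $\rho^4$ weight on second derivatives). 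The paper resolves this by introducing an auxiliary weaker norm $\|\cdot\|_{\sigma,\rho_0,w}$ with an extra $(n+1)^{-2}$ in the summation, for which this term \emph{is} bounded, and then invoking the variant estimate Proposition~\ref{prop:maingevreysum3v2} to close the Cauchy argument. Your proposal misses both the need for the resolvent identity and this weight mismatch.
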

\begin{proof}
We establish first the existence of a limit of $\hat{{L}}_{s, \ell+\lambda, \kappa_n}^{-1}$ with respect to strong operator convergence. Let $s\in \Omega_{\sigma}$. By openness of $\Omega_{\sigma}$, there exist $\sigma_2>\sigma_1>\sigma$ such that $s\in {\Omega}_{\sigma_1}\cap \Omega_{\sigma_2}$. Hence, for all $f\in H_{\sigma_2,\rho_0}$, $\hat{{L}}_{s, \ell+\lambda, \kappa_n}^{-1}(f)$ is a bounded sequence with respect to $||\cdot||_{\sigma_2,2,\rho_0}$. Since $H_{\sigma_2,\rho_0}\Subset H_{\sigma_1,\rho_0}$ by Lemma \ref{lm:cptembeddinggevrey} below, there exists a subsequence of $\{\hat{{L}}_{s, \ell+\lambda, \kappa_n}^{-1}(f)\}$ that converges with respect to $||\cdot||_{\sigma_1,2,\rho_0}$. We denote the limit by $\hat{\psi}$. We moreover have that
\begin{align*}
L_{s,\ell+\lambda, 0}(\hat{\psi})=&\:(L_{s, \ell+\lambda, 0}-L_{s, \ell+\lambda, \kappa_n})(\hat{\psi})+ L_{s, \ell+\lambda, \kappa_n}(\hat{\psi}-\hat{\psi}_n)+f,\\
||(L_{s, \ell+\lambda, 0}-L_{s, \ell+\lambda, \kappa_n})(\hat{\psi})||_{\sigma,\rho_0}\leq&\: \kappa_n \sum_{\star \in\{+,c\}}\left[||\hat{\psi}||_{\sigma,\rho_0}+||r^{-1} \partial_{\rho_{\star}}(r\hat{\psi})||_{\sigma,\rho_0}+||\rho r^{-1} \partial_{\rho_{\star}}^2(r\hat{\psi})||_{\sigma,\rho_0} \right],\\
||L_{s, \ell+\lambda, \kappa_n}(\hat{\psi}-\hat{\psi}_n)||_{\sigma,\rho_0}\leq&\:  \sum_{\star \in\{+,c\}}\Big[||\hat{\psi}-\hat{\psi}_n||_{\sigma,\rho_0}+||r^{-1} \partial_{\rho_{\star}}(r\hat{\psi}-r\hat{\psi}_n)||_{\sigma,\rho_0}\\
&+||\rho r^{-1} \partial_{\rho_{\star}}^2(r\hat{\psi}-r\hat{\psi}_n)||_{\sigma,\rho_0} \Big]
\end{align*}
where the norms on the right-hand side go to zero as $n\to \infty$ after passing to a subsequence if we take $\sigma_1>\sigma$, since $H_{\sigma_1,\rho_0}\Subset H_{\sigma,\rho_0}$ and we obtain
\begin{equation*}
L_{s, \ell+\lambda, 0}(\hat{\psi})=f.
\end{equation*}
This provides a definition of $L_{s, \ell+\lambda, 0}^{-1}: H_{\sigma_2,\rho_0} \to  H_{\sigma,2,\rho_0}$. By density of $H_{\sigma_2,\rho_0}$ in $H_{\sigma,\rho_0}$, we can extend $L_{s, \ell+\lambda, 0}^{-1}$ to $H_{\sigma,\rho_0}$. Furthermore, by Corollary \ref{cor:maingevreysum3} we have that
\begin{equation*}
||\hat{{L}}_{s, \ell+\lambda, 0}^{-1}||_{H_{\sigma,\rho_0} \to  H_{\sigma,2,\rho_0}}\leq C_{\ell,\lambda,s}.
\end{equation*}
Now we will show that in fact $\hat{{L}}_{s, \ell+\lambda, \kappa_n}^{-1}$ in the sense of \emph{uniform} operator convergence. We will show that $\hat{{L}}_{s, \ell+\lambda, \kappa_n}^{-1}$ form a Cauchy sequence in the Banach space $B(H_{\sigma,\rho_0}, H_{\sigma,\rho_0})$. We can express
\begin{equation*}
\hat{{L}}_{s, \ell+\lambda, \kappa_n}^{-1}-\hat{{L}}_{s, \ell+\lambda, \kappa_m}^{-1}=\hat{{L}}_{s, \ell+\lambda, \kappa_m}^{-1} \circ (\hat{{L}}_{s, \ell+\lambda, \kappa_m}-\hat{{L}}_{s, \ell+\lambda, \kappa_n})\circ \hat{{L}}_{s, \ell+\lambda, \kappa_n}^{-1}.
\end{equation*} 
Note that
\begin{equation*}
|| (\hat{{L}}_{s, \ell+\lambda, \kappa_m}-\hat{{L}}_{s, \ell+\lambda, \kappa_n})(f)||_{\sigma,\rho_0}\lesssim |\kappa_n-\kappa_m| \sum_{\star \in\{+,c\}}\left[||\hat{\psi}||_{\sigma,\rho_0}+||r^{-1} \partial_{\rho_{\star}}(r\hat{\psi})||_{\sigma,\rho_0}+||r^{-1} \rho \partial_{\rho_{\star}}^2(r\hat{\psi})||_{\sigma,\rho_0} \right]
\end{equation*}
with $\hat{\psi}=\hat{{L}}_{s, \ell+\lambda, \kappa_n}^{-1}(f)$. However, we cannot control $||r^{-1} \rho \partial_{\rho_{\star}}^2(r\hat{\psi})||_{\sigma,\rho_0}$ with $||\hat{\psi}||_{\sigma,2,\rho_0}$ (which we control by $||f||_{\sigma,\rho_0}$  using Corollary \ref{cor:maingevreysum3}). 

Instead consider a slightly weaker norm of $(\hat{{L}}_{s, \ell+\lambda, \kappa_m}-\hat{{L}}_{s, \ell+\lambda, \kappa_n})(f)$, where the terms in the sum defining the norms come with an extra factor $(n+1)^{-2}$. We will denote the relevant weaker norms by $||\cdot||_{\sigma,\rho_0,w}$ and $||\cdot||_{\sigma,2,\rho_0,w}$.  Then we have that
\begin{equation*}
||r^{-1} \rho \partial_{\rho_{\star}}^2(r\hat{\psi})||_{\sigma,\rho_0,w}+||r^{-1} \partial_{\rho_{\star}}(r\hat{\psi})||_{\sigma,\rho_0,w}\leq C ||\hat{\psi}||_{\sigma,2,\rho_0}\leq C||f||_{\sigma,\rho_0}.
\end{equation*}

From Proposition \ref{prop:maingevreysum3v2}, it moreover follows that
\begin{equation*}
||(\hat{{L}}_{s, \ell+\lambda, \kappa_m}^{-1}(g)||_{\sigma,2,\rho_0,w}\leq C ||g||_{\sigma,\rho_0,w},
\end{equation*}
so we obtain:
\begin{equation*}
||(\hat{{L}}_{s, \ell+\lambda, \kappa_n}^{-1}-\hat{{L}}_{s, \ell+\lambda, \kappa_m}^{-1})(f)||_{\sigma,2,\rho_0,w}\leq C |\kappa_n-\kappa_m| \cdot ||f||_{\sigma,\rho_0}.
\end{equation*}
Finally, we observe that:
\begin{equation*}
||(\hat{{L}}_{s, \ell+\lambda, \kappa_n}^{-1}-\hat{{L}}_{s, \ell+\lambda, \kappa_m}^{-1})(f)||_{\sigma,\rho_0}\leq C ||(\hat{{L}}_{s, \ell+\lambda, \kappa_n}^{-1}-\hat{{L}}_{s, \ell+\lambda, \kappa_m}^{-1})(f)||_{\sigma,2,\rho_0,w}\leq C |\kappa_n-\kappa_m| \cdot ||f||_{\sigma,\rho_0}.
\end{equation*}

The convergence of $\{\kappa_n\}$ ensures that indeed $\{\hat{{L}}_{s, \ell+\lambda, \kappa_n}^{-1}\}$ is Cauchy in the Banach space $B(H_{\sigma,\rho_0}, H_{\sigma,\rho_0})$ and therefore converges. By uniqueness of limits, the limit is $L_{s, \ell+\lambda, 0}^{-1}$.
\end{proof}

\subsection{Fredholm theory}
In this section we establish compactness of the operator ${L}_{0,\ell+\lambda, s}^{-1}$, which allows us to apply the Analytic Fredholm Theorem in order to obtain information about the boundedness properties of ${L}_{s,\ell, 0}^{-1}$ (with $\lambda=0$).
\begin{lemma}
\label{lm:cptembeddinggevrey}
Let $\sigma\in \R_{>0}$. The following embeddings hold
\begin{align}
\label{eq:cptgev1}
H_{\sigma,2,\rho_0}\subset H_{\sigma,1,\rho_0} \Subset&\: H_{\sigma,\rho_0},\\
\label{eq:cptgev2}
H_{\sigma', \rho_0}\Subset&\: H_{\sigma,\rho_0} \quad \textnormal{if $\sigma'>\sigma$.}
\end{align}
\end{lemma}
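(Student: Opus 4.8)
\textbf{Proof proposal for Lemma \ref{lm:cptembeddinggevrey}.}

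The plan is to prove both compact embeddings by a diagonal argument combined with tail estimates in the Gevrey series, exploiting the rapidly growing weights $\frac{\sigma^{2n}}{n!^2(n+1)!^2}$. Recall that $H_{\sigma,j,\rho_0}$ is built out of pieces $H_{\sigma,j,\rho_0}$ near $\mathcal{H}^+$ and $\mathcal{C}^+$ (or $\mathcal{I}^+$) together with an $H^j$ piece on the compact interval $[R_0^+,R_0^c]$; since the embedding $H^2[R_0^+,R_0^c] \subset H^1[R_0^+,R_0^c] \Subset L^2[R_0^+,R_0^c]$ is the classical Rellich--Kondrachov theorem, it suffices to treat a single Gevrey component $G^2_{\sigma,j,\rho_0}$, and by symmetry I will argue for one of the two (say $\rho=\rho_+$). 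So the core claim reduces to: (a) $G^2_{\sigma,2,\rho_0} \subset G^2_{\sigma,1,\rho_0} \Subset G^2_{\sigma,0,\rho_0}$, and (b) $G^2_{\sigma',0,\rho_0} \Subset G^2_{\sigma,0,\rho_0}$ when $\sigma'>\sigma$.

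For part (b), let $\{f_k\}$ be a bounded sequence in $G^2_{\sigma',0,\rho_0}$, say $\|f_k\|_{G^2_{\sigma',0,\rho_0}} \leq D$. First I would extract a subsequence (not relabelled) such that, for every fixed $n$, $\partial_\rho^n f_k$ converges in $L^2(0,\rho_0)$; this is possible because for each fixed $n$ the sequence $\{f_k\}$ is bounded in $H^{n+1}(0,\rho_0)$ (from the single $n$-th term of the $\sigma'$-series together with the lower-order terms), and $H^{n+1}(0,\rho_0) \Subset H^n(0,\rho_0)$, so a diagonal argument over $n$ works. Call the resulting limit $f$; it lies in $G^2_{\sigma',0,\rho_0}$ by Fatou/lower semicontinuity. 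To upgrade to convergence in $G^2_{\sigma,0,\rho_0}$, split the series at level $N$: the head $\sum_{n=0}^{N}\frac{\sigma^{2n}}{n!^2(n+1)!^2}\|\partial_\rho^n(f_k-f)\|_{L^2}^2 \to 0$ as $k\to\infty$ by the $L^2$-convergence of each derivative, while the tail $\sum_{n>N}$ is bounded by $\sup_{n>N}(\sigma/\sigma')^{2n}$ times $\|f_k-f\|_{G^2_{\sigma',0,\rho_0}}^2 \leq (2D)^2 (\sigma/\sigma')^{2N}$, which can be made arbitrarily small by choosing $N$ large since $\sigma/\sigma'<1$. This gives the compactness in \eqref{eq:cptgev2}.

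For part (a), the inclusion $H_{\sigma,2,\rho_0} \subset H_{\sigma,1,\rho_0}$ is immediate from the definitions (drop the $\rho^4\partial_\rho^{n+2}$ term, noting $H^2 \subset H^1$ on the compact piece). The compact embedding $H_{\sigma,1,\rho_0} \Subset H_{\sigma,\rho_0}$ again reduces to $G^2_{\sigma,1,\rho_0} \Subset G^2_{\sigma,0,\rho_0}$: given a bounded sequence in $G^2_{\sigma,1,\rho_0}$, perform the same diagonal extraction so that every $\partial_\rho^n f_k$ converges in $L^2(0,\rho_0)$, with limit $f$. The head of the $G^2_{\sigma,0,\rho_0}$-difference again tends to $0$. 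For the tail, use that the $n$-th term of $\|f_k-f\|_{G^2_{\sigma,0,\rho_0}}^2$ is $\frac{\sigma^{2n}}{n!^2(n+1)!^2}\|\partial_\rho^n(f_k-f)\|_{L^2}^2$, which is bounded by $\frac{(n+1)!^2}{(n+2)!^2}\cdot \frac{\sigma^{2(n+1)}}{\sigma^2}\cdot\frac{1}{(n+1)!^2(n+2)!^2}\cdot \frac{1}{\sigma^{2(n+1)}}\cdots$ — more cleanly: the tail $\sum_{n>N}\frac{\sigma^{2n}}{n!^2(n+1)!^2}\|\partial_\rho^n(f_k-f)\|_{L^2}^2$ is controlled, after shifting the index and comparing $\frac{\sigma^{2n}/(n!^2(n+1)!^2)}{\sigma^{2(n-1)}/((n-1)!^2 n!^2)} = \frac{\sigma^2}{n^2(n+1)^2}$, by $\frac{\sigma^2}{(N+1)^2}$ times $\sum_{n\geq N}\frac{\sigma^{2(n-1)}}{(n-1)!^2 n!^2}\|\partial_\rho^{n}(f_k-f)\|_{L^2}^2$, and the latter sum is bounded by $\|f_k-f\|_{G^2_{\sigma,1,\rho_0}}^2 \leq (2D)^2$. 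Hence the tail is $\leq (2D)^2\sigma^2/(N+1)^2 \to 0$ as $N\to\infty$, uniformly in $k$, completing \eqref{eq:cptgev1}.

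I expect the main obstacle to be purely bookkeeping: making the diagonal-subsequence extraction and the head/tail split completely rigorous while correctly tracking the three distinct norms $\|\cdot\|_{\sigma,\rho_0}$, $\|\cdot\|_{\sigma,1,\rho_0}$, $\|\cdot\|_{\sigma,2,\rho_0}$ and the two spatial regions simultaneously, and verifying that the candidate limit $f$ is genuinely smooth on $(0,\rho_0)$ and lies in the target space (this follows from local Sobolev embedding on compact subintervals of $(0,\rho_0)$ plus lower semicontinuity of the Gevrey norm under the subsequential limit). No genuinely hard analytic input is needed beyond classical Rellich--Kondrachov; the Gevrey weights do all the work in controlling the tails because consecutive weights decay by a factor $\sim \sigma^2/n^4$ (for \eqref{eq:cptgev1}) or the ratio $(\sigma/\sigma')^{2n}\to 0$ (for \eqref{eq:cptgev2}).
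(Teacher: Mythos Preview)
Your proposal is correct and follows essentially the same strategy as the paper: reduce to a single Gevrey piece, extract a diagonal subsequence converging in every finite-order Sobolev norm, then use a head/tail split where the tail is controlled by the extra regularity (the additional derivative in $G^2_{\sigma,1}$ for \eqref{eq:cptgev1}, the ratio $(\sigma/\sigma')^{2n}$ for \eqref{eq:cptgev2}). The only cosmetic difference is that the paper uses Sobolev embedding into $C^k$ and Arzel\`a--Ascoli to extract the convergent subsequence, while you use Rellich--Kondrachov $H^{n+1}\Subset H^n$ directly; the paper's tail bound for \eqref{eq:cptgev1} is stated as $\frac{1}{N^2(N+1)^2}\|f_{n_k}-f_{n_l}\|_{\sigma,1,\rho_0}^2$, which is exactly your index-shift computation (your weaker factor $\sigma^2/(N+1)^2$ is of course also sufficient).
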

\begin{proof}
We will only prove \eqref{eq:cptgev1} as the proof of \eqref{eq:cptgev2} proceeds very similarly. Consider a sequence $\{f_n\}$ in $H_{\sigma,1,\rho_0}$, such that $||f_n||_{\sigma,1,\rho_0}=1$. We will show that there exists a subsequence that converges with respect to $||\cdot||_{\sigma,\rho_0}$. First of all, by $||f_n||_{\sigma,1,\rho_0}=1$, given any $N\in \N$, we can use standard Sobolev inequalities to estimate for all $0\leq k\leq N$
\begin{equation*}
||rf_n||_{C^{k+1}([0,(\rho_+)_0])}+||rf_n||_{C^{k+1}([0,(\rho_c)_0])}\leq C_N,
\end{equation*}
where the $C^k$-norm is taken with respect to the $\rho_+$ and $\rho_c$ coordinates. Hence, by Arzel\`a--Ascoli, there exists for all $N\in \N_0$ a subsequence $\{f_{n_j}\}$ satisfying the following Cauchy property: for all $\epsilon'>0$, there exists a $K>0$ such that for all $m>l>K$ and $0\leq k\leq N$:
\begin{equation*}
||rf_{n_m}-rf_{n_l}||_{C^{k}([0,(\rho_+)_0])}+||rf_{n_m}-rf_{n_l}||_{C^{k}([0,(\rho_c)_0])}<\epsilon'.
\end{equation*}
We moreover have that
\begin{equation*}
\sum_{\star\in\{+,c\}}\sum_{j=N}^{\infty} \frac{\sigma^{2j}}{j!^2(j+1)!^2} \int_0^{(\rho_{\star})_{0}} |\partial_{\rho_{\star}}^j (rf_{n_k}-rf_{n_l})|^2\,d\rho_{\star}\leq \frac{1}{N^2(N+1)^2}||f_{n_k}-f_{n_{\ell}}||^2_{\sigma,\rho_0,1}\leq \frac{2}{N^2(N+1)^2}.
\end{equation*}
Hence, for any $\epsilon>0$, there exists $N>0$ suitably large such that
\begin{equation*}
||f_{n_k}-f_{n_{l}}||_{G^2_{\sigma,0,\rho_0}}\leq C\epsilon'+ \frac{2}{N^2(N+1)^2}<\frac{\epsilon}{2}.
\end{equation*}
We moreover have that
\begin{equation*}
||f_{n_k}||_{H^1[R_0^+,R_0^c]}\leq 1,
\end{equation*}
so by Rellich--Kondrachov, there exists a further subsequence $\{f_{n_{k_m} }\}$ that is Cauchy with respect to the $L^2[R_0^+,R_0^c]$-norm. We can conclude from all the above that $\{f_{n_{k_m} }\}$ is a Cauchy sequence with respect to $||\cdot||_{\sigma,\rho_0}$ and must therefore converge.
\end{proof}

\begin{proposition}
\label{prop:fredholm}
Let $\sigma \in {\R_{>0}}$ and let $\kappa_0$ be the constant from Proposition \ref{prop:inversepositivekappa}. Then for all $0\leq \kappa\leq \kappa_0$
\begin{equation*}
 \hat{{L}}_{s,\ell,\kappa}^{-1}: H_{\sigma,\rho_0}\to H_{\sigma,2,\rho_0}
\end{equation*}
is holomorphic for all $s\in \Omega_{\sigma}\setminus \Lambda_{QNF}^{\sigma,\ell, \rho_0}$, with $\Lambda_{QNF}^{\sigma,\ell,\rho_0}\subset \Omega_{\sigma}$ a discrete set. Furthermore, if $s\in  \Lambda_{QNF}^{\sigma,\ell,\rho_0}$ then $\ker \hat{{L}}_{s,\ell,\kappa}=0$ is finite dimensional and $\hat{{L}}_{s,\ell,\kappa} {\hat{\psi}}=f$ admits a solution if and only if $f\in (\textnormal{coker}\,  \hat{{L}}_{s,\ell,\kappa})^{\perp}$ with $\textnormal{coker}\,  \hat{{L}}_{s,\ell,\kappa}< H_{\sigma,\rho_0}$ and $\dim \textnormal{coker}\,  \hat{{L}}_{s,\ell,\kappa}=\dim \ker \hat{{L}}_{s,\ell,\kappa}$. 
\end{proposition}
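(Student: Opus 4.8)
The plan is to reduce the holomorphy and Fredholm statements for $\hat{L}_{s,\ell,\kappa}^{-1}$ to the Analytic Fredholm Theorem, using the factorization $\hat{L}_{s,\ell,\kappa}^{-1} = \hat{L}_{s,\ell+\lambda,\kappa}^{-1}\circ(\text{id} + K_s)$ for an appropriate compact operator. First I would fix a large $\lambda = \lambda_0$ (independent of $s$, chosen as in Proposition~\ref{prop:inversepositivekappa}) so that $\hat{L}_{s,\ell+\lambda,\kappa}^{-1}: H_{\sigma,\rho_0}\to H_{\sigma,2,\rho_0}$ exists as a bounded operator for all $s\in\Omega_\sigma$ and $0\le\kappa\le\kappa_0$ (this is Proposition~\ref{prop:inversepositivekappa} for $\kappa>0$ and Proposition~\ref{prop:resolventzerokappa} for $\kappa=0$). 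The key observation is that $\hat{L}_{s,\ell,\kappa}$ and $\hat{L}_{s,\ell+\lambda,\kappa}$ differ only by the zeroth-order multiplication operator by $-\lambda(\lambda+1+2\ell)r^{-2}$ (coming from the $-r^{-2}\ell(\ell+1)\hat\psi$ term), i.e. $\hat{L}_{s,\ell,\kappa} = \hat{L}_{s,\ell+\lambda,\kappa} + V_{\ell,\lambda}$ with $V_{\ell,\lambda}$ smooth and bounded on $[r_+,r_c]$. Hence on the domain $\mathcal{D}_\sigma(\hat{L}_{s,\ell,\kappa})$ one has $\hat{L}_{s,\ell,\kappa} = (\text{id} + V_{\ell,\lambda}\circ\hat{L}_{s,\ell+\lambda,\kappa}^{-1})\circ\hat{L}_{s,\ell+\lambda,\kappa}$, and $K_s := V_{\ell,\lambda}\circ\hat{L}_{s,\ell+\lambda,\kappa}^{-1}: H_{\sigma,\rho_0}\to H_{\sigma,\rho_0}$ is the composition of a bounded operator into $H_{\sigma,2,\rho_0}$ with the compact embedding $H_{\sigma,2,\rho_0}\Subset H_{\sigma,\rho_0}$ from Lemma~\ref{lm:cptembeddinggevrey}, hence compact.

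The next step is to verify that $s\mapsto K_s$ is holomorphic on $\Omega_\sigma$ as a map into the bounded operators on $H_{\sigma,\rho_0}$. For $\kappa>0$ this follows from the holomorphy of $\hat{L}_{s,\ell+\lambda,\kappa}^{-1}$ established via the Analytic Fredholm Theorem in Theorem~\ref{thm:fredholmpositivekappa} (case (1) being the generic alternative, which holds for $\lambda\ge\lambda_0$ by Proposition~\ref{prop:inversepositivekappa}). For $\kappa=0$ one uses that $\hat{L}_{s,\ell+\lambda,0}^{-1}$ is the uniform-operator-norm limit of the holomorphic family $\hat{L}_{s,\ell+\lambda,\kappa_n}^{-1}$ (Proposition~\ref{prop:resolventzerokappa}), together with the fact that a locally uniform limit of holomorphic operator-valued maps is holomorphic; alternatively, one differentiates the identity $\hat{L}_{s,\ell+\lambda,0}\circ\hat{L}_{s,\ell+\lambda,0}^{-1}=\text{id}$ in $s$ directly, since $\partial_s\hat{L}_{s,\ell,\kappa}$ is a first-order operator and the uniform bounds of Corollary~\ref{cor:maingevreysum3} control the resulting difference quotients. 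Once holomorphy of $s\mapsto K_s$ is in hand, I would apply the Analytic Fredholm Theorem on the connected open set $\Omega_\sigma$: either $(\text{id}+K_s)^{-1}$ exists for no $s\in\Omega_\sigma$, or it exists on $\Omega_\sigma\setminus\Lambda_{QNF}^{\sigma,\ell,\rho_0}$ for a discrete set $\Lambda_{QNF}^{\sigma,\ell,\rho_0}$ with the resolvent meromorphic there. The first alternative is ruled out because for $\Re s$ large (indeed for $\Re s > B$ with $B$ from Theorem~\ref{thm:semigroup}, after relating $\hat{L}_{s,\ell,\kappa}^{-1}$ to $(\mathcal{A}_\ell-s)^{-1}$ via Proposition~\ref{prop:relationAL}) the operator $\hat{L}_{s,\ell,0}^{-1}$ exists and is bounded; since $\Omega_\sigma$ contains a right half-plane, $(\text{id}+K_s)$ must be invertible somewhere.

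Having established that $\hat{L}_{s,\ell,\kappa}^{-1} = (\text{id}+K_s)^{-1}\circ\hat{L}_{s,\ell+\lambda,\kappa}^{-1}$ is holomorphic on $\Omega_\sigma\setminus\Lambda_{QNF}^{\sigma,\ell,\rho_0}$ and that it maps $H_{\sigma,\rho_0}$ into $H_{\sigma,2,\rho_0}$ (with the target regularity coming from Corollary~\ref{cor:maingevreysum3} applied with $N_\infty\to\infty$), the remaining Fredholm assertions are read off from the structure theory of $\text{id}+K_s$ with $K_s$ compact. At $s\in\Lambda_{QNF}^{\sigma,\ell,\rho_0}$, $\ker\hat{L}_{s,\ell,\kappa} = \ker(\text{id}+K_s)$ is finite-dimensional by the Riesz–Schauder theory for compact operators, and $\hat{L}_{s,\ell,\kappa}\hat\psi = f$ is solvable iff $\hat{L}_{s,\ell+\lambda,\kappa}^{-1}f\in\text{ran}(\text{id}+K_s)$, which by the closed-range/Fredholm-alternative property of $\text{id}+K_s$ is equivalent to $f$ lying in a closed subspace of finite codimension equal to $\dim\ker(\text{id}+K_s)$; transporting this back through the bounded bijection $\hat{L}_{s,\ell+\lambda,\kappa}^{-1}$ identifies $\text{coker}\,\hat{L}_{s,\ell,\kappa}$ as a finite-dimensional subspace of $H_{\sigma,\rho_0}$ with $\dim\text{coker}\,\hat{L}_{s,\ell,\kappa} = \dim\ker\hat{L}_{s,\ell,\kappa}$. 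I expect the main obstacle to be the $\kappa=0$ holomorphy of $K_s$: one must argue carefully that the convergence in Proposition~\ref{prop:resolventzerokappa} is genuinely in operator norm locally uniformly in $s$ (the proof there fixes $s$), which requires checking that the constants $C_{\ell,\lambda,s}$ in Corollary~\ref{cor:maingevreysum3} and the threshold $N_{\kappa_+,\kappa_c}$ can be taken locally uniform in $s\in\Omega_\sigma$ — a point that needs the openness of $\Omega_\sigma$ and the $\sigma_1>\sigma$ trick already used in the proof of Proposition~\ref{prop:resolventzerokappa}.
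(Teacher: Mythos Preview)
Your approach is essentially the same as the paper's: both factor through the shifted operator $\hat{L}_{s,\ell+\lambda,\kappa}$, exhibit the difference as a compact perturbation using Lemma~\ref{lm:cptembeddinggevrey}, verify holomorphy of the resulting family, and invoke the Analytic Fredholm Theorem together with the existence of the inverse at one point (via Theorem~\ref{thm:semigroup} and Proposition~\ref{prop:relationAL}). The paper phrases the compact operator as $B(s)=\lambda(\lambda+1)\hat{L}_{s,\ell+\lambda,\kappa}^{-1}$ and verifies analyticity by a direct difference-quotient computation, which is exactly your suggested ``alternative'' for the $\kappa=0$ case and avoids the locally-uniform-in-$s$ convergence issue you flag.

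Two minor slips to clean up: first, from the definition \eqref{eq:fixedfreqpsis} the $\ell$-dependence of $\hat{L}_{s,\ell,\kappa}$ is $-\ell(\ell+1)\hat{\psi}$ (the $r^{-2}$ is in $r^{-2}\hat{L}_{s,\ell,\kappa}$, not in $\hat{L}_{s,\ell,\kappa}$ itself), so $V_{\ell,\lambda}$ is multiplication by the constant $\lambda(\lambda+2\ell+1)$, not by $-\lambda(\lambda+2\ell+1)r^{-2}$; this does not affect your argument since you only need $V_{\ell,\lambda}$ bounded. Second, your opening line writes $\hat{L}_{s,\ell,\kappa}^{-1}=\hat{L}_{s,\ell+\lambda,\kappa}^{-1}\circ(\mathrm{id}+K_s)$ without the inverse on the second factor, and your third paragraph swaps the order of composition; since $V_{\ell,\lambda}$ is in fact scalar, both orderings happen to coincide, but you should be consistent with the factorization you actually derive.
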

\begin{proof}
Let $s\in \Omega_{\sigma}$. By Proposition \ref {prop:resolventzerokappa} and Lemma \ref{lm:cptembeddinggevrey}
\begin{equation*}
B(s):=\lambda(\lambda +1) \hat{{L}}_{s,\ell+\lambda,\kappa}^{-1}=( \hat{{L}}_{s,\ell,\kappa} -\lambda(\lambda+1)id)^{-1}: H_{\sigma,\rho_0}\to H_{\sigma,\rho_0}
\end{equation*}
is a well-defined compact operator when $\lambda \geq \lambda_0$. Furthermore, one can easily verify that
\begin{equation*}
\begin{split}
\frac{B(s')-B(s)}{s-s'}(f)=&\:\lambda(\lambda+1) ( \hat{{L}}_{s,\ell+\lambda,\kappa})^{-1}\Big[r^2h_{r_+}(2-h_{r_+}D)P_1(r(\hat{{L}}_{s,\ell+\lambda,\kappa})^{-1}(f))\\
&-(s+s')r^2h_{r_+}(2-h_{r_+}D)(\hat{{L}}_{s,\ell+\lambda,\kappa})^{-1}(f)\Big],
\end{split}
\end{equation*}
with $P_1$ defined in Proposition \ref{prop:relationAL}. Since $( \hat{{L}}_{s,\ell+\lambda,\kappa})^{-1}(f)\in H_{\sigma,2,\rho_0}$, it follows that $\lim_{s'\to s} \frac{B(s')-B(s)}{s-s'}$ is a bounded linear operator for all $s\in \Omega_{\sigma}$, and hence $s\mapsto B(s)$ is analytic. 

We can relate the existence of $ \hat{{L}}_{s,\ell,\kappa}^{-1}: H_{\sigma,\rho_0}\to H_{\sigma,\rho_0}$ to the invertibility of $1+B(s)$. Indeed, one may easily verify that ${\hat{\psi}}\in H_{\sigma,\rho_0}$ satisfies
\begin{equation*}
\hat{{L}}_{s,\ell,\kappa}({\hat{\psi}})=f \quad \Longleftrightarrow \quad (1+B(s))({\hat{\psi}})=( \hat{{L}}_{s,\ell+\lambda,\kappa})^{-1}(f)
\end{equation*}

Since there exists $s\in \Omega_{\sigma}$ (with $\re(s)>0$) such that $(\hat{{L}}_{s,\ell,\kappa})^{-1}$ is well-defined by Theorem \ref{thm:semigroup} and Proposition \ref{prop:relationAL}, we can apply the Analytic Fredholm Theorem  (see for example Theorem 7.92 of \cite{ren93}) to conclude that $(id+B(s))^{-1}: H_{\sigma,\rho_0} \to H_{\sigma,\rho_0}$ is holomorphic for all $s\in \Omega_{\sigma}\setminus \Lambda_{QNF}^{\sigma,\ell,\rho_0}$, where $\Lambda_{QNF}^{\sigma,\ell, \rho_0}\subset \Omega_{\sigma}$ is a discrete set. Furthermore, if $s\in  \Lambda_{QNF}^{\sigma, \ell,\rho_0}$, then the space of solutions $\hat{{L}}_{s,\ell,\kappa}({\hat{\psi}})=0$ is finite dimensional.

By the above, we have that if $s\notin   \Lambda_{QNF}^{\sigma, \ell,\rho_0}$, then $\hat{\psi}=\hat{{L}}_{s,\ell,\kappa}^{-1}(f)\in H_{\sigma,\rho_0}$. Since we can moreover express $\hat{\psi}=\hat{{L}}_{s,\ell+\lambda,\kappa}^{-1}(f+\lambda(\lambda+1)\hat{\psi})$, we can take $\lambda$ suitably large and apply Proposition \ref{prop:inversepositivekappa} and Proposition \ref{prop:resolventzerokappa} to conclude that $\hat{\psi}\in H_{\sigma,2,\rho_0}$.
\end{proof}

\subsection{Convergence of quasinormal modes as $\kappa \downarrow 0$}

The proposition below is a variation of the proof of Theorem 7.92 of \cite{ren93}, utilising the uniform convergence of ${L}^{-1}_{\kappa,\ell+\lambda,s}$ as $\kappa\downarrow 0$ that is established in Proposition \ref{prop:resolventzerokappa}.
\begin{proposition}
\label{prop:convqnf} 
Let $\sigma\in \R_{>0}$ and denote 
\begin{equation*}
A_{\kappa,s}:=\lambda(\lambda+1)(\hat{L}_{s, \ell+\lambda, \kappa})^{-1}: H_{\sigma,\rho_0}\to H_{\sigma,\rho_0},
\end{equation*}
with $\kappa\geq 0$ and $s\in \Omega_{\sigma}$. Suppose that 
\begin{equation*}
\ker (1-A_{0,s_*})\neq \emptyset
\end{equation*}
for some $s_*\in \Omega_{\sigma}$.

Then there exists a sequence $\{\kappa_n\}$ in $\R_{>0}$ and $\{s_n\}$ in $\C$, such that $\kappa_n\to0$, $s_n\to s_*$ and
\begin{equation*}
\ker (1-A_{\kappa_n,s_n}) \neq \emptyset.
\end{equation*}
\end{proposition}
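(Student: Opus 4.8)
The plan is to exploit the uniform-in-$\kappa$ convergence $A_{\kappa_n,s}\to A_{0,s}$ in operator norm (Proposition \ref{prop:resolventzerokappa}), together with holomorphicity of $s\mapsto A_{\kappa,s}$ on $\Omega_\sigma$ (Proposition \ref{prop:fredholm}), and run a perturbative argument modelled on the proof of the Analytic Fredholm Theorem as in Theorem 7.92 of \cite{ren93}. First I would fix $s_*\in\Omega_\sigma$ with $\ker(1-A_{0,s_*})\neq\emptyset$ and choose a small closed disc $\overline{B_\delta(s_*)}\subset\Omega_\sigma$ on whose boundary circle $\partial B_\delta(s_*)$ the operator $1-A_{0,s}$ is invertible; this is possible because, by Proposition \ref{prop:fredholm}, the set of $s$ where $1-A_{0,s}$ fails to be invertible is discrete, so after shrinking $\delta$ we may assume $s_*$ is the only such point in $\overline{B_\delta(s_*)}$. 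Then the Riesz projection
\begin{equation*}
P_0=\frac{1}{2\pi i}\oint_{\partial B_\delta(s_*)} (1-A_{0,s})^{-1}\, ds
\end{equation*}
is well-defined, and because $\ker(1-A_{0,s_*})\neq\emptyset$ it is nonzero (here one uses that $A_{0,s}$ is compact, so $1-A_{0,s}$ is Fredholm of index zero and the finite-dimensional generalized eigenspace is detected by $P_0$).

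Next I would transfer this to $\kappa>0$. Since $\kappa_n\mapsto A_{\kappa_n,s}$ converges to $A_{0,s}$ in $B(H_{\sigma,\rho_0})$ \emph{uniformly} for $s$ in the compact set $\partial B_\delta(s_*)$ — this uniformity is exactly what Proposition \ref{prop:resolventzerokappa} provides, as the estimate $\|\hat L_{s,\ell+\lambda,0}^{-1}-\hat L_{s,\ell+\lambda,\kappa_n}^{-1}\|\lesssim|\kappa_n|$ is uniform on compact subsets of $\Omega_\sigma$ — and since inversion is continuous on the open set of invertible operators, for $n$ large enough $1-A_{\kappa_n,s}$ is invertible for every $s\in\partial B_\delta(s_*)$ and $(1-A_{\kappa_n,s})^{-1}\to(1-A_{0,s})^{-1}$ uniformly there. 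Consequently the projections
\begin{equation*}
P_{\kappa_n}=\frac{1}{2\pi i}\oint_{\partial B_\delta(s_*)} (1-A_{\kappa_n,s})^{-1}\, ds
\end{equation*}
are well-defined for large $n$ and satisfy $\|P_{\kappa_n}-P_0\|\to 0$. A standard fact about projections — two projections that are close in norm have equal rank — then forces $\operatorname{rank}P_{\kappa_n}=\operatorname{rank}P_0\geq 1$ for all large $n$, using finiteness of these ranks (compactness of $A_{\kappa_n,s}$ for $s\in\Omega_\sigma$, Theorem \ref{thm:fredholmpositivekappa} combined with Proposition \ref{prop:inversepositivekappa}, makes $P_{\kappa_n}$ finite-rank). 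Since $P_{\kappa_n}\neq 0$ is the Riesz projection associated with $A_{\kappa_n,\cdot}$ over the disc $B_\delta(s_*)$, there must exist $s_n\in B_\delta(s_*)$ with $\ker(1-A_{\kappa_n,s_n})\neq\emptyset$. Finally, shrinking $\delta=\delta_j\downarrow 0$ along a sequence and extracting a diagonal subsequence of the $\kappa_n$'s yields $\kappa_n\to 0$ and $s_n\to s_*$ with $\ker(1-A_{\kappa_n,s_n})\neq\emptyset$, as desired.

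The main obstacle I anticipate is not any single hard estimate — the analytic input is already packaged in Propositions \ref{prop:fredholm} and \ref{prop:resolventzerokappa} — but rather making the contour/Riesz-projection argument fully rigorous in the Banach space setting: one must be careful that the nonvanishing of $P_0$ genuinely follows from $\ker(1-A_{0,s_*})\neq\emptyset$ (this needs the index-zero Fredholm property, so that a nontrivial kernel guarantees a nontrivial generalized eigenspace and hence $\operatorname{tr}P_0=\dim\operatorname{ran}P_0\geq 1$), and that the ``close projections have equal rank'' step is applied to finite-rank projections where rank equals trace. A secondary technical point is ensuring $\overline{B_\delta(s_*)}\subset\Omega_\sigma$ and that the holomorphicity and uniform-convergence hypotheses both hold on the \emph{same} disc; this is handled by first shrinking $\delta$ using discreteness of $\Lambda_{QNF}^{\sigma,\ell,\rho_0}$ and openness of $\Omega_\sigma$, exactly as in the choice of $\sigma_1,\sigma_2$ in the proof of Proposition \ref{prop:resolventzerokappa}.
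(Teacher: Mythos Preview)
Your overall stability-of-contour-integrals strategy is valid, but there is a genuine gap in the specific implementation, and the paper takes a different route that avoids it.

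The gap concerns your operator $P_0=\frac{1}{2\pi i}\oint (1-A_{0,s})^{-1}\,ds$. This is \emph{not} a spectral Riesz projection in the usual sense: you are integrating over the deformation parameter $s$, not over a spectral parameter of a fixed operator, so $P_0$ need not be idempotent and need not be nonzero even when $s_*$ is a genuine pole of $(1-A_{0,s})^{-1}$. A scalar counterexample already shows this: take $A_{0,s}=1+(s-s_*)^2$ on $\C$, so $(1-A_{0,s})^{-1}=-(s-s_*)^{-2}$ has residue zero at $s_*$. Hence your assertion ``$\operatorname{tr}P_0=\dim\operatorname{ran}P_0\geq 1$'' is unjustified as stated. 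The fix is to replace the naive residue either by the Gohberg--Sigal logarithmic residue $\frac{1}{2\pi i}\operatorname{tr}\oint(1-A_{0,s})^{-1}\partial_sA_{0,s}\,ds$, which does count null multiplicity, or more simply by integrating $(s-s_*)^{N-1}(1-A_{0,s})^{-1}$ where $N$ is the order of the pole, picking out the leading Laurent coefficient $B_{-N}\neq 0$; then your convergence argument goes through unchanged.

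The paper avoids this altogether by a finite-rank reduction modelled on the proof of the Analytic Fredholm Theorem in \cite{ren93}: approximate the compact operator $A_{0,s_*}$ by a rank-$N$ operator $B$ with $\|B-A_{\kappa,s}\|\leq\tfrac12$ on a neighbourhood $U_{s_*}$, factor $1-A_{\kappa,s}=(1-F_\kappa(s))\circ(1-(A_{\kappa,s}-B))$ with $F_\kappa(s):=B(1-(A_{\kappa,s}-B))^{-1}$ of rank $\leq N$, and reduce invertibility of $1-A_{\kappa,s}$ to the vanishing of the scalar holomorphic function $\det(1-M_\kappa(s))$ for an $N\times N$ matrix $M_\kappa(s)$. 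Since $M_\kappa(s)\to M_0(s)$ uniformly on compacta as $\kappa\to 0$ and $\det(1-M_0(s_*))=0$, Hurwitz's theorem for scalar holomorphic functions immediately yields the sequence $s_n\to s_*$. This determinant-plus-Hurwitz route is more self-contained, needing only classical one-variable complex analysis; your route is more operator-theoretic and, once the correct trace formula is in place, arguably more conceptual.
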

\begin{proof}
By Proposition \ref{prop:resolventzerokappa}, there exists a suitably small $\kappa_0>0$ and a neighbourhood $U_{s_*}\subset \Omega_{\sigma}$ of $s_*$, such that for all $s\in U_{s_*}$ and $0\leq \kappa\leq \kappa_0$,
\begin{equation*}
||A_{\kappa,s}-A_{0,s_*}||\leq \frac{1}{4}.
\end{equation*}
Since $A_{0,s_*}$ is compact, there exists an $N\in \N$ and a linear operator of rank $N$, $B: H_{\sigma,\rho_0}\to H_{\sigma,\rho_0}$, such that
\begin{equation*}
||B-A_{0,s_*}||\leq \frac{1}{4}.
\end{equation*}
We therefore have that for all $s\in U_{s_*}$, we can estimate
\begin{equation*}
||B-A_{\kappa,s}||\leq \frac{1}{2}.
\end{equation*}
Since $A_{\kappa,s}$ is analytic in $s$, we use the convergence of the appropriate Neumann series to conclude that the operator $(1-(A_{\kappa,s}-B))^{-1}$ exists and is also analytic in $s$. Furthermore, we have that for all $K \Subset U_{s_*}$
\begin{equation*}
\sup_{s\in K}||(1-(A_{\kappa,s}-B))^{-1}-(1-(A_{0,s}-B))^{-1}||\to 0
\end{equation*}
as $\kappa \downarrow 0$.

We define the linear operator
\begin{equation*}
F_{\kappa}(s):=B\circ (1-(A_{\kappa,s}-B))^{-1} : H_{\sigma,\rho_0}\to H_{\sigma,\rho_0}\,
\end{equation*}
so that we can write
\begin{equation*}
1-A_{\kappa,s}=(1-F_{\kappa}(s))\circ  (1-(A_{\kappa,s}-B)).
\end{equation*}
We have that $1-A_{\kappa,s}$ is invertible if and only if $1-F_{\kappa}(s)$ is invertible.

Note that $F_{\kappa}(s)$ is an operator of rank $N$ that is analytic in $s$ and for all $K \Subset U_{s_*}$
\begin{equation*}
\sup_{s\in K}||F_{\kappa}(s)-F_{0}(s)||\to 0
\end{equation*}
as $\kappa \downarrow 0$.

By the finite rank property, there exists a basis $\{e_i\}$ of $B( H_{\sigma,\rho_0})$ with $1\leq i\leq N$ and analytic functions $f_{\kappa,i}: U_{s_*} \to H_{\sigma,\rho_0}$, with $1\leq i \leq N$, such that for all $v \in  H_{\sigma,\rho_0}$,
\begin{equation*}
F_{\kappa}(s)v=\sum_{i=1}^N \la f_{\kappa,i},v\ra e_i.
\end{equation*}
Define $M_{ij; \kappa}(s):= \la f_{\kappa,i}, e_j\ra$, $1\leq i,j\leq N$ and denote with $M_{\kappa}(s)$ the corresponding $N\times N$ matrix. Then $M_{ij;\kappa}(s)$ is analytic in $s$ and for all $K \Subset U_{s_*}$, $\sup_{s\in K}|M_{ij;\kappa}(s)-M_{ij;0}(s)|\to 0$ as $\kappa \downarrow 0$.

We can conclude that $1-F_{\kappa}(s)$ and therefore $1-A_{\kappa,s}$ is invertible if and only if
\begin{equation*}
\det (1-M_{\kappa}(s))=0.
\end{equation*}
Since, $\det (1-M_{0}(s_*))=0$ by assumption, we can take $\{\kappa_n\}_{n\in \N}$ to be any sequence converging to $0$ with $|\kappa_n|< \kappa_0$ for all $n\in \N$ and use the above properties of $M_{ij;\kappa}(s)$ to apply Hurwitz's theorem, see for example Theorem 2.5 in \S 7.2 of \cite{conway78}, and conclude that there exists a corresponding sequence $\{s_n\}$ in $U_{s_*}$ such that $s_n\to s_*$ and
\begin{equation*}
\det (1-M_{\kappa_n}(s_n))=0. \qedhere
\end{equation*}
\end{proof}

We will use the above convergence property of the quasinormal frequencies associated to $\kappa>0$ to derive an improved regularity property of the $\kappa=0$ quasinormal modes.
\begin{proposition}
\label{prop:conveigenf}
Let $\sigma \in {\R_{>0}}$ and let $\hat{\psi}_{s_*;0,\ell}\in \ker \hat{L}_{s_*,\ell,0}$, with $s_* \in \Omega_{\sigma}$, $\kappa=0$ and angular frequency $\ell$. Then there exists a sequence
\begin{equation*}
\hat{\psi}_{s_n;\kappa_n,\ell} \in \ker \hat{L}_{s_n,\kappa_n,\ell}
\end{equation*}
with $\kappa_n\to 0$ and $s_n\to s_*$ such that
\begin{equation*}
||\hat{\psi}_{s_*;0,\ell}-\hat{\psi}_{s_{n}; \kappa_{n},\ell}||_{\sigma, \rho_0}\to 0
\end{equation*}
for all $\rho_0>0$ suitably small, as $n\to \infty$.
\end{proposition}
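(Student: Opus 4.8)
The plan is to deduce Proposition \ref{prop:conveigenf} from the convergence statement of Proposition \ref{prop:convqnf} together with the uniform-in-$\kappa$ resolvent bounds and the compact embeddings of Gevrey spaces. First I would fix $\hat\psi_{s_*;0,\ell}\in\ker\hat L_{s_*,\ell,0}$ and observe that, after projecting onto a fixed spherical harmonic $Y_{\ell,0}$ and relabelling $\ell\mapsto \ell+\lambda$ with $\lambda$ large (or, for bounded $\ell$, using small $|s_*|$ as in Proposition \ref{prop:resolventzerokappa}), the element $\hat\psi_{s_*;0,\ell}$ lies in $\ker(1-A_{0,s_*})$ where $A_{\kappa,s}=\lambda(\lambda+1)(\hat L_{s,\ell+\lambda,\kappa})^{-1}$. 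Here I must be slightly careful: the operator $\hat L_{s,\ell,\kappa}$ in the statement is a genuine fixed-$\ell$ operator, so I would first reconcile the indices by writing $\hat L_{s,\ell,0}=\hat L_{s,(\ell-\lambda)+\lambda,0}$ and reindexing; equivalently, since $\ker\hat L_{s_*,\ell,0}\ne\emptyset$ implies $s_*\in\Lambda_{QNF}^{\sigma,\ell,\rho_0}$ by Proposition \ref{prop:fredholm}, the relation $\hat L_{s,\ell,\kappa}(\hat\psi)=f\iff(1+B(s))(\hat\psi)=\hat L_{s,\ell+\lambda,\kappa}^{-1}(f)$ from the proof of Proposition \ref{prop:fredholm} shows $\ker(1-A_{0,s_*})\ne\emptyset$.

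Next I would invoke Proposition \ref{prop:convqnf} to produce sequences $\kappa_n\downarrow 0$ and $s_n\to s_*$ with $\ker(1-A_{\kappa_n,s_n})\ne\emptyset$, i.e.\ there exist nonzero $\hat\psi_{s_n;\kappa_n,\ell+\lambda}\in H_{\sigma,\rho_0}$ with $\hat L_{s_n,\ell+\lambda,\kappa_n}(\hat\psi_{s_n;\kappa_n,\ell+\lambda})=0$, which I normalise so that $\|\hat\psi_{s_n;\kappa_n,\ell+\lambda}\|_{\sigma,\rho_0}=\|\hat\psi_{s_*;0,\ell}\|_{\sigma,\rho_0}$. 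By the homogeneous equation and Proposition \ref{prop:inversepositivekappa} (applicable since $s_n\in\Omega_{\sigma'}$ for some $\sigma'>\sigma$ once $n$ is large, by openness of $\Omega_\sigma$), each $\hat\psi_{s_n;\kappa_n,\ell+\lambda}$ actually lies in $H_{\sigma',2,\rho_0}$ with a uniform bound $\|\hat\psi_{s_n;\kappa_n,\ell+\lambda}\|_{\sigma',2,\rho_0}\le C$; note the homogeneous equation means there is no inhomogeneity to control, so the bound follows directly from Corollary \ref{cor:maingevreysum3} applied with $\tilde f=0$ — this is the point where I would be careful that the relevant estimate holds for the eigenfunction itself (an a priori $H_{\sigma,2,\rho_0}$ element when $\kappa_n>0$) and is uniform in $\kappa_n$. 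Then by Lemma \ref{lm:cptembeddinggevrey} the embedding $H_{\sigma',2,\rho_0}\Subset H_{\sigma,\rho_0}$ (in fact into $H_{\sigma,2,\rho_0}$ up to shrinking $\sigma'$ again) is compact, so a subsequence of $\hat\psi_{s_n;\kappa_n,\ell+\lambda}$ converges in $H_{\sigma,\rho_0}$ to some $\hat\psi_\infty$.

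It then remains to identify $\hat\psi_\infty$ with (a nonzero multiple of) $\hat\psi_{s_*;0,\ell}$. Passing to the limit in $\hat L_{s_n,\ell+\lambda,\kappa_n}(\hat\psi_{s_n;\kappa_n,\ell+\lambda})=0$, using $\|\hat L_{s_n,\ell+\lambda,\kappa_n}-\hat L_{s_*,\ell+\lambda,0}\|$ controlled by $\kappa_n+|s_n-s_*|$ on the stronger space (exactly the estimates displayed in the proof of Proposition \ref{prop:resolventzerokappa}) and the $H_{\sigma,2,\rho_0}$-bound, I get $\hat L_{s_*,\ell+\lambda,0}(\hat\psi_\infty)=0$, and $\hat\psi_\infty\ne 0$ from the normalisation. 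If $\ker\hat L_{s_*,\ell+\lambda,0}$ is one-dimensional this immediately gives $\hat\psi_\infty=c\,\hat\psi_{s_*;0,\ell}$ and rescaling each $\hat\psi_{s_n;\kappa_n,\ell+\lambda}$ by $c^{-1}$ (still in the kernel) yields $\|\hat\psi_{s_*;0,\ell}-\hat\psi_{s_n;\kappa_n,\ell}\|_{\sigma,\rho_0}\to 0$; in the higher-multiplicity case one argues within the finite-dimensional kernel, choosing a basis of the $\kappa_n$-kernels converging to a basis of the limit kernel via the finite-dimensional determinant picture of Proposition \ref{prop:convqnf} (the matrices $M_{ij;\kappa}(s)$), and then matches $\hat\psi_{s_*;0,\ell}$ to its limit. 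Finally, one transports the statement back from index $\ell+\lambda$ to $\ell$ using $\hat\psi=\hat L_{s,\ell+\lambda,\kappa}^{-1}((\lambda(\lambda+1))\hat\psi)$ together with Propositions \ref{prop:inversepositivekappa} and \ref{prop:resolventzerokappa}, which also upgrades the convergence to the $\|\cdot\|_{\sigma,\rho_0}$ norm for the original operators. The main obstacle I expect is the bookkeeping around the index shift $\ell\to\ell+\lambda$ and ensuring that the uniform-in-$\kappa$ estimates of Corollary \ref{cor:maingevreysum3} really do apply to the eigenfunctions $\hat\psi_{s_n;\kappa_n,\cdot}$ (which requires knowing they are a priori in $H_{\sigma,2,\rho_0}$ for $\kappa_n>0$, supplied by Theorem \ref{thm:fredholmpositivekappa} and Proposition \ref{prop:inversepositivekappa}) and that the limit is not killed, i.e.\ the normalisation survives the compactness passage — both of which are handled, but require care in choosing the right pair $\sigma<\sigma'$ so that every compact embedding and every resolvent bound invoked is actually valid at the relevant frequency $s_n\in\Omega_\sigma$.
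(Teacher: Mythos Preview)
Your overall strategy---invoke Proposition \ref{prop:convqnf} to obtain $(\kappa_n,s_n)$ and then argue convergence of eigenvectors---matches the paper's, but the execution has two concrete problems.

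First, a genuine error in the bookkeeping: you write that elements of $\ker(1-A_{\kappa_n,s_n})$ satisfy $\hat L_{s_n,\ell+\lambda,\kappa_n}(\hat\psi_{s_n;\kappa_n,\ell+\lambda})=0$, and then appeal to Corollary \ref{cor:maingevreysum3} with $\tilde f=0$. But $\ker(1-A_{\kappa,s})=\ker\hat L_{s,\ell,\kappa}$, \emph{not} $\ker\hat L_{s,\ell+\lambda,\kappa}$; indeed the entire purpose of the shift by $\lambda$ is that $\ker\hat L_{s,\ell+\lambda,\kappa}=\{0\}$ for $\lambda$ large. Your eigenfunctions actually satisfy $\hat L_{s_n,\ell+\lambda,\kappa_n}(\hat\psi_n)=\lambda(\lambda+1)\hat\psi_n$, so the inhomogeneity in Corollary \ref{cor:maingevreysum3} is $\lambda(\lambda+1)\hat\psi_n$, not zero---applying the corollary with $\tilde f=0$ would force $\hat\psi_n=0$. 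The fix is straightforward: write $\hat\psi_n=A_{\kappa_n,s_n}\hat\psi_n$ and use Proposition \ref{prop:inversepositivekappa} directly to get $\|\hat\psi_n\|_{\sigma,2,\rho_0}\le C\|\hat\psi_n\|_{\sigma,\rho_0}$ uniformly in $n$. (Once this is done there is no need to introduce an auxiliary $\sigma'>\sigma$, nor to ``transport back'' from $\ell+\lambda$ to $\ell$: the two kernels are literally the same set.)

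Second, and more substantive, your compactness argument produces \emph{some} limit $\hat\psi_\infty\in\ker\hat L_{s_*,\ell,0}$, while the proposition demands a sequence converging to the \emph{given} $\hat\psi_{s_*;0,\ell}$. In the one-dimensional case rescaling suffices, but in higher multiplicity your appeal to the matrices $M_{ij;\kappa}(s)$ from Proposition \ref{prop:convqnf} does not do the job: those matrices record only the determinant condition $\det(1-M_\kappa(s))=0$ for nontriviality of the kernel, not a constructive matching of individual eigenvectors. The paper resolves this cleanly by invoking the abstract Lemma \ref{lm:conveigenvectors}: given compact operators $A_n\to A$ in operator norm with $\ker(1-A_n)\neq\emptyset$, and any $x\in\ker(1-A)$, one obtains $x_n\in\ker(1-A_n)$ with $\|x_n-x\|\to 0$ by taking $x_n$ to be the orthogonal projection of $x$ onto $\ker(1-A_n)$ and showing the complementary piece lands in $V\cap V^\perp=\{0\}$. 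The required operator-norm convergence $\|A_{\kappa_n,s_n}-A_{0,s_*}\|\to 0$ is already established in Proposition \ref{prop:resolventzerokappa}. This single lemma application replaces your entire compactness-and-identification discussion and handles arbitrary multiplicity without further argument.
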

\begin{proof}
Define $A:=\lambda(\lambda+1)({L}_{s_*,\ell+\lambda,0})^{-1}$. By Proposition \ref{prop:convqnf}, for every sequence $\{\kappa_n\}$ in $\R_{>0}$, such that $\kappa_n\to0$ as $n\to \infty$, there exists a sequence $\{s_n\}$ in $\C$ such that $s_n\to s_*$ as $n\to \infty$. Furthermore, if we define $A_n:=\lambda(\lambda+1)({L}_{s_n,\kappa_n, \ell+\lambda})^{-1}$, then $\ker (1-A_n)\neq \emptyset$ and
\begin{equation*}
||A-A_n||\to 0
\end{equation*}
as $n\to \infty$. Hence, assuming without loss of generality that $||\hat{\psi}_{s_*;0,\ell}||_{\sigma}=1$, we can apply Lemma \ref{lm:conveigenvectors} to obtain a subsequence $\hat{\psi}_{s_{n_k};\kappa_{n_k},\ell}\in \ker (1-A_{n_k})$ such that 
\begin{equation*}
||\psi_{s_*;0,\ell}-\psi_{s_{n_k}; \kappa_{n_k},\ell}||_{\sigma}\to 0
\end{equation*}
as $k\to \infty$.
\end{proof}

\begin{proposition}
\label{prop:moreregqnm}
Let $\sigma\in \R_{>0}$ and let $\hat{\psi}_{s_*;0,\ell}\in \ker \hat{L}_{s_*,\ell,0}\subset  H_{\sigma,\rho_0}$ with $s_* \in \Omega_{\sigma}$. Then
\begin{equation*}
\hat{\psi}_{s_*;0,\ell}\in H_{\sigma',\rho_0}
\end{equation*}
for all
\begin{equation}
\label{eq:conditionsigmaprime}
\frac{1}{4}|s_*|^2<{\sigma'}^2<\frac{3}{4}|s_*|^2-2|\re(s_*)|^2.
\end{equation}
\end{proposition}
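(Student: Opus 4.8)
The statement is a bootstrap/elliptic-regularity improvement: a $\kappa=0$ quasinormal mode which is a priori only $(\sigma,2)$-Gevrey at the horizon and at infinity is in fact $(\sigma',2)$-Gevrey for every $\sigma'$ in the stated (strictly larger, since $\sigma' > \sigma$ is allowed) range. The natural route is to exploit the convergence result just established. First I would fix $s_*\in\Omega_\sigma$ and $\hat\psi_{s_*;0,\ell}\in\ker\hat L_{s_*,\ell,0}\subset H_{\sigma,\rho_0}$, and invoke Proposition \ref{prop:conveigenf} to produce sequences $\kappa_n\downarrow 0$, $s_n\to s_*$ and $\hat\psi_{s_n;\kappa_n,\ell}\in\ker\hat L_{s_n,\kappa_n,\ell}$ with $\|\hat\psi_{s_*;0,\ell}-\hat\psi_{s_n;\kappa_n,\ell}\|_{\sigma,\rho_0}\to 0$. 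Because $s_*\in\Omega_\sigma$ and $\Omega_\sigma$ (equivalently, the parameter constraints \eqref{paramconst1}--\eqref{paramconst3}) is open, by Lemma \ref{lm:allowedvaluess}(iii) and the displayed condition \eqref{eq:conditionsigma} one may, for any $\sigma'$ satisfying \eqref{eq:conditionsigmaprime}, choose $\tilde\sigma'$ with $\sigma' = 2\tilde\sigma'|s_*|$ and simultaneously arrange that $s_n\in\Omega_{\sigma'}$ for all large $n$ (shrinking the neighbourhood of $s_*$ if necessary, using continuity of the strict inequalities in $s$). Thus each $\hat\psi_{s_n;\kappa_n,\ell}$ is a $\kappa_n>0$ mode at a frequency lying in $\Omega_{\sigma'}$.

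The key step is then to apply the $\kappa>0$, \emph{uniform-in-$\kappa$} Gevrey estimate of Corollary \ref{cor:maingevreysum3} (or Theorem \ref{thm:smallsthmpaper} if $\ell$ is small and $|s_*|$ correspondingly small, via Proposition \ref{prop:boundlgev2}) at the Gevrey scale $\sigma'$ rather than $\sigma$: since $\hat L_{s_n,\ell,\kappa_n}(\hat\psi_{s_n;\kappa_n,\ell}) = 0$, for $\lambda$ large enough (or $|s_n|$ small enough) the right-hand side of \eqref{eq:maingevreysum3} involving $\tilde f$ vanishes, and one concludes
\begin{equation*}
\|\hat\psi_{s_n;\kappa_n,\ell}\|_{\sigma',2,\rho_0}\leq C_{s',\rho_0}\|\hat L_{s_n,\ell,\kappa_n}(\hat\psi_{s_n;\kappa_n,\ell})\|_{\sigma',\rho_0}=0 + (\text{lower order}),
\end{equation*}
so one must be slightly careful: the clean statement is that these modes satisfy a \emph{closed} estimate controlling $\|\hat\psi_{s_n;\kappa_n,\ell}\|_{\sigma',2,\rho_0}$ by a constant independent of $\kappa_n$ and of $n$ (after normalizing $\|\hat\psi_{s_n;\kappa_n,\ell}\|_{\sigma,\rho_0}=1$ and absorbing the finitely-many-derivatives interior term, which is controlled since convergence in $H_{\sigma,\rho_0}$ gives uniform $L^2$ and $H^2$ bounds on $[R_0^+,R_0^c]$). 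Hence $\{\hat\psi_{s_n;\kappa_n,\ell}\}$ is a bounded sequence in $H_{\sigma',2,\rho_0}$.

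Finally I would pass to the limit. By Lemma \ref{lm:cptembeddinggevrey}, $H_{\sigma',2,\rho_0}\Subset H_{\sigma',1,\rho_0}\Subset H_{\sigma'',\rho_0}$ for any $\sigma''<\sigma'$; but in fact we want the limit in $H_{\sigma',\rho_0}$ itself, so it is better to argue by weak-$*$ compactness / lower semicontinuity of the norm: a bounded sequence in the Hilbert space $H_{\sigma',2,\rho_0}$ has a weakly convergent subsequence, and since $\hat\psi_{s_n;\kappa_n,\ell}\to\hat\psi_{s_*;0,\ell}$ strongly in $H_{\sigma,\rho_0}$ (hence the weak limit in $H_{\sigma',2,\rho_0}$ must be $\hat\psi_{s_*;0,\ell}$), weak lower semicontinuity gives $\|\hat\psi_{s_*;0,\ell}\|_{\sigma',2,\rho_0}\leq\liminf_n\|\hat\psi_{s_n;\kappa_n,\ell}\|_{\sigma',2,\rho_0}<\infty$, so in particular $\hat\psi_{s_*;0,\ell}\in H_{\sigma',\rho_0}$. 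The main obstacle I anticipate is the bookkeeping in the second step: one must verify that the constants $C_{s',\rho_0}$ and $C_\ell$ appearing in Corollary \ref{cor:maingevreysum3} and the threshold $\lambda_0$ (resp. $s_0$) can be chosen uniformly for $s_n$ in a small neighbourhood of $s_*$ and for all $\kappa_n$ small, and that the interior $H^2$ term $\int_{R_0^+}^{R_0^c}|\tilde f|^2\,dr$ — here $\tilde f = \hat L_{s_n,\ell,\kappa_n}(\hat\psi_{s_n;\kappa_n,\ell})=0$ — genuinely vanishes (it does), while the $\ell$-dependent interior terms on the left are harmless; the only real content is ensuring the uniform Gevrey bound at scale $\sigma'$, which is exactly what the $\kappa$-uniformity built into Sections \ref{sec:gevrest}--\ref{sec:lowfreqgev} was designed to provide.
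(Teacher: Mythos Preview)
Your overall architecture is the paper's: approximate by $\kappa_n>0$ modes via Proposition \ref{prop:conveigenf}, apply the $\kappa$-uniform Gevrey estimate at scale $\sigma'$, pass to the limit. Your weak-$*$/lower-semicontinuity argument for the limit is a perfectly valid alternative to the paper's explicit Cauchy-sequence argument.

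The gap is in the middle step, where your writing becomes muddled. You cannot apply Corollary \ref{cor:maingevreysum3} with $\tilde f=\hat L_{s_n,\ell,\kappa_n}(\hat\psi)=0$: that corollary requires $\ell$ suitably large, and if $\ell$ were that large the kernel would be trivial. What the paper actually does is shift, writing $\hat L_{s_n,\ell+\lambda,\kappa_n}(\hat\psi_{s_n;\kappa_n,\ell})=\lambda(\lambda+1)\hat\psi_{s_n;\kappa_n,\ell}$ with $\lambda$ large enough that Corollary \ref{cor:maingevreysum3} applies to $\hat L_{s_n,\ell+\lambda,\kappa_n}$. Then $\tilde f=\lambda(\lambda+1)\hat\psi$ and the right-hand side of \eqref{eq:maingevreysum3} at scale $\sigma'$ is essentially $C_{\ell,\lambda,s_*}\|\hat\psi\|_{\sigma',\rho_0}^2$ --- the very quantity you want to bound. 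The resolution is \emph{not} the interior $[R_0^+,R_0^c]$ term (that part is harmless), but the extra factor $m^2(m+1)^2$ in each summand on the left: for $m>M_\infty$ with $M_\infty^2(M_\infty+1)^2>C_{\ell,\lambda,s_*}$ one absorbs those summands of the right-hand side into the left, leaving on the right only a \emph{finite} sum $\sum_{m=0}^{M_\infty}$ of derivatives, which is controlled by $\|\hat\psi_{s_n;\kappa_n,\ell}\|_{\sigma,\rho_0}$. That yields the uniform bound $\|\hat\psi_{s_n;\kappa_n,\ell}\|_{\sigma',\rho_0}\leq C\|\hat\psi_{s_n;\kappa_n,\ell}\|_{\sigma,\rho_0}$ you need. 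Your phrase ``absorbing the finitely-many-derivatives interior term'' gestures at this but misidentifies where the absorption happens and omits the mechanism.
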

\begin{proof}
By Proposition \ref{prop:conveigenf}, there exists a sequence $\hat{\psi}_{s_{n}; \kappa_{n},\ell}\in  H_{\sigma,\rho_0}$, such that $||\hat{\psi}_{s_{n}; \kappa_{n},\ell}-\hat{\psi}_{s_*;0,\ell}||_{\sigma}\to 0$ as $n\to \infty$ and
\begin{equation*}
{L}_{\kappa_n,\ell+\lambda, s_n,}(\hat{\psi}_{s_{n}; \kappa_{n},\ell})=\lambda(\lambda+1)\hat{\psi}_{s_{n}; \kappa_{n},\ell}.
\end{equation*}
Let us assume, without loss of generality, that $\sigma'$ satisfies \eqref{eq:conditionsigmaprime} and $\sigma'>\sigma$. By Corollary \ref{cor:maingevreysum3}, we can estimate
\begin{equation}
\label{eq:poskappaqnmest}
\begin{split}
\int_{r_+}^{r_c}& (|\hat{\psi}_{s_{n}; \kappa_{n},\ell}|^2+|\partial_r\hat{\psi}_{s_{n}; \kappa_{n},\ell}|^2) r^2\,dr+\sum_{\star \in \{+,c\}}\sum_{m=0}^{N_{\infty}} \frac{|2s_n|^{2m}{\widetilde{\sigma'}}_n^{2m}}{(m+1)!^2n!^2} m^2(m+1)^2\int_{0}^{{\rho}_0} |(r \hat{\psi}_{s_{n}; \kappa_{n},\ell})_{(m)}|^2\, d{\rho}_{\star}\\
\leq &\: C_{\ell,\lambda,s_n}\int_{r_+}^{r_c} |\hat{\psi}_{s_{n}; \kappa_{n},\ell}|^2 r^2\,dr+ C_{\ell, \lambda,  s_n} \sum_{\star \in \{+,c\}} \sum_{m=0}^{N_{\infty}} \frac{|2s_n|^{2m}{\widetilde{\sigma'}}_n^{2m}}{(m+1)!^2m!^2} \int_{0}^{{\rho}_0} |(r \hat{\psi}_{s_{n}; \kappa_{n},\ell})_{(m)}|^2\, d{\rho}_{\star},
\end{split}
\end{equation}
for all $N_{\infty}>N_{\kappa_n}$, where $|s_n|\widetilde{\sigma'}_n:=\sigma'_n$, with
\begin{equation*}
\frac{1}{4}|s_n|^2<{\sigma'}_n^2<\frac{3}{4}|s_n|^2-2|\re(s_*)|^2.
\end{equation*}

Since $s_n\to s_*$, there exists $N_0=N_0(\sigma')\in \N$ such that for all $n\geq N_0$, we can fix $\sigma'_n=\sigma'$. Furthermore, as the constant $C_{\ell,\lambda,s_n}$ does not diverge as $s_n\to s_*$, we can replace it by a constant that is independent of $n$.

There exists $M_{\infty}$ suitably large depending on $\ell$ and $\lambda$, so that we can absorb all terms on the right-hand side of \eqref{eq:poskappaqnmest} with $m>M_{\infty}$ into the left-hand side, making use of the additional factor $m^2(m+1)^2$ that appears on the left-hand side, provided we take $N_{\infty}$ to be suitably large. We obtain:
\begin{equation*}
\begin{split}
\int_{r_+}^{r_c}& (|\hat{\psi}_{s_{n}; \kappa_{n},\ell}|^2+|\partial_r\hat{\psi}_{s_{n}; \kappa_{n},\ell}|^2) r^2\,dr+\sum_{\star \in \{+,c\}}\sum_{m=0}^{N_{\infty}} \frac{(2\sigma')^{2m}}{(m+1)!^2n!^2} m^2(m+1)^2\int_{0}^{{\rho}_0} |(r \hat{\psi}_{s_{n}; \kappa_{n},\ell})_{(m)}|^2\, d{\rho}_{\star}\\
\leq &\: C_{\ell,\lambda,s_*}\int_{r_+}^{r_c} |\hat{\psi}_{s_{n}; \kappa_{n},\ell}|^2 r^2\,dr+ C_{\ell, \lambda,s_*}\sum_{\star \in \{+,c\}} \sum_{m=0}^{M_{\infty}} \frac{(2\sigma')^{2m}}{(m+1)!^2m!^2} \int_{0}^{{\rho}_0} |(r \hat{\psi}_{s_{n}; \kappa_{n},\ell})_{(m)}|^2\, d{\rho}_{\star}.
\end{split}
\end{equation*}
By taking the limit $N_{\infty}\to \infty$ on the left-hand side, we therefore obtain that
\begin{equation*}
||\hat{\psi}_{s_{n}; \kappa_{n},\ell}||_{\sigma',\rho_0}\leq C ||\hat{\psi}_{s_{n}; \kappa_{n},\ell}||_{\sigma,\rho_0},
\end{equation*}
for some constant $C=C(\ell,\lambda, s_*, \sigma',\sigma)>0$. Since $\hat{\psi}_{s_{n}; \kappa_{n},\ell}$ is a convergent sequence in $H_{\sigma,\rho_0}$, we can conclude that $\hat{\psi}_{s_{n}; \kappa_{n},\ell}$ is a uniformly bounded sequence in $H_{\sigma',\rho_0}$
By considering the difference $\hat{\psi}_{s_{n}; \kappa_{n},\ell}-\psi_{s_{n'}; \kappa_{n'},\ell}$, with $n>n'$ and using that
\begin{equation*}
\hat{L}_{\kappa_n,\ell+\lambda, s_n,}(\hat{\psi}_{s_{n}; \kappa_{n},\ell}-\psi_{s_{n'}; \kappa_{n'},\ell})=\lambda(\lambda+1)(\hat{\psi}_{s_{n}; \kappa_{n},\ell}-\psi_{s_{n'}; \kappa_{n'},\ell})+ (\hat{L}_{\kappa_n,\ell+\lambda, s_n}-\hat{L}_{\kappa_n',\ell+\lambda, s_n'})\psi_{s_{n'}; \kappa_{n'},\ell}
\end{equation*}
and $\kappa_n\downarrow 0$, $s_n\to s_*$, it is straightforward to apply the above estimates to the difference $\hat{\psi}_{s_{n}; \kappa_{n},\ell}-\hat{\psi}_{s_{n'}; \kappa_{n'},\ell}$ in order to show that $\{\hat{\psi}_{s_{n}; \kappa_{n},\ell}\}$ is also a Cauchy sequence with respect to the $H_{\sigma',\rho_0}$ norm, so the corresponding (unique) limit must satisfy: $\hat{\psi}_{s_*;0,\ell}\in H_{\sigma',\rho_0}$.
\end{proof}

In the following proposition, we investigate the dependence of $\Lambda^{\sigma,\ell,\rho_0}_{QNF}$ on the choice of $\rho_0$.

\begin{proposition}
\label{prop:indprho0}
Consider $\rho_0$ and $\rho_0'$ suitably small so that Proposition \ref{prop:fredholm} can be applied to define $\Lambda^{\sigma,\ell,\rho_0}_{QNF}$ and $\Lambda^{\sigma,\ell,\rho_0'}_{QNF}$. Then
\begin{equation*}
\Lambda^{\sigma,\ell,\rho_0}_{QNF}=\Lambda^{\sigma,\ell,\rho_0'}_{QNF}.
\end{equation*}
\end{proposition}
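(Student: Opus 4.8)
The plan is to show that the set $\Lambda^{\sigma,\ell,\rho_0}_{QNF}$, which was characterized in Proposition \ref{prop:fredholm} as the discrete set of $s \in \Omega_\sigma$ where $\hat{L}_{s,\ell,\kappa}^{-1}: H_{\sigma,\rho_0} \to H_{\sigma,2,\rho_0}$ fails to exist as a bounded operator, does not actually depend on $\rho_0$ (for $\rho_0$ in the admissible range). The key observation is that membership of $s$ in $\Lambda^{\sigma,\ell,\rho_0}_{QNF}$ is equivalent to the existence of a nontrivial element of $\ker \hat{L}_{s,\ell,\kappa}$ inside $H_{\sigma,\rho_0}$ (by the Fredholm alternative established in Proposition \ref{prop:fredholm}: if $s \notin \Lambda^{\sigma,\ell,\rho_0}_{QNF}$ then $\hat{L}_{s,\ell,\kappa}^{-1}$ exists, and conversely a nontrivial kernel element obstructs invertibility). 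So it suffices to prove that the solution space $\{\hat\psi \in H_{\sigma,\rho_0} : \hat{L}_{s,\ell,\kappa}(\hat\psi) = 0\}$ is independent of the choice of $\rho_0$.

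First I would fix $s \in \Omega_\sigma$ and two admissible radii $\rho_0 < \rho_0'$, and observe that there is a natural inclusion: if $\hat\psi \in H_{\sigma,\rho_0'}$ with $\hat{L}_{s,\ell,\kappa}(\hat\psi) = 0$, then since the Gevrey inner products $\langle \cdot, \cdot \rangle_{G^2_{\sigma,j,\rho_0}}$ only integrate over $(0,\rho_0) \subset (0,\rho_0')$ and involve the same derivatives, we immediately get $\|\hat\psi\|_{\sigma,\rho_0} \le C\|\hat\psi\|_{\sigma,\rho_0'}$, hence $\hat\psi \in H_{\sigma,\rho_0}$. This gives $\Lambda^{\sigma,\ell,\rho_0}_{QNF} \supseteq \Lambda^{\sigma,\ell,\rho_0'}_{QNF}$ essentially for free (a kernel element for the larger radius restricts to one for the smaller radius). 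The substantive direction is the reverse: given $\hat\psi \in H_{\sigma,\rho_0}$ solving $\hat{L}_{s,\ell,\kappa}(\hat\psi)=0$, I must show $\hat\psi \in H_{\sigma,\rho_0'}$, i.e.\ that the Gevrey regularity at the horizon and at infinity \emph{propagates outward} to the larger radius $\rho_0'$.

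The mechanism for this outward propagation is precisely the Gevrey estimate of Corollary \ref{cor:maingevreysum3} (or Theorem \ref{thm:smallsthmpaper} for bounded $\ell$ and small $|s|$), applied with radius $\rho_0'$: the left-hand side controls $\|\hat\psi\|_{\sigma,2,\rho_0'}$ (after passing to the $\kappa\downarrow 0$ limit via Proposition \ref{prop:resolventzerokappa}, or directly working at $\kappa>0$), while the right-hand side involves only $\|\hat{L}_{s,\ell,\kappa}(\hat\psi)\|$ over the appropriate regions — which vanishes since $\hat{L}_{s,\ell,\kappa}(\hat\psi)=0$ — together with an $L^2[R_0^+, R_0^c]$-type bulk norm of $\hat\psi$ on the compact middle region. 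The point is that $\hat\psi$ is automatically smooth on the open interval $(r_+,r_c)$ by elliptic regularity for the ODE $\hat{L}_{s,\ell,\kappa}(\hat\psi)=0$ (the coefficients are smooth away from the endpoints), so the bulk norm over any compact subinterval is finite; and the Gevrey estimate with radius $\rho_0'$ then upgrades this to finiteness of $\|\hat\psi\|_{\sigma,2,\rho_0'}$, giving $\hat\psi \in H_{\sigma,2,\rho_0'} \subset H_{\sigma,\rho_0'}$ as desired. One technical wrinkle: to invoke Corollary \ref{cor:maingevreysum3} in the $\kappa=0$ case one needs the a priori assumption $\hat\psi \in H_{\sigma,2,\rho_0}$ with the smaller radius; this is where the $\rho_0$-to-$\rho_0'$ structure matters, and I would argue that $\hat\psi \in H_{\sigma,\rho_0}$ combined with the ODE and smoothness on compact subsets already places $\hat\psi$ in $H_{\sigma,2,\rho_0''}$ for any $\rho_0'' $ slightly larger than $\rho_0$, then bootstrap in finitely many steps (or directly if $\rho_0, \rho_0'$ are both in a range where a single application suffices).

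I expect the main obstacle to be bookkeeping the a priori regularity hypothesis in Corollary \ref{cor:maingevreysum3}/Proposition \ref{prop:maingevreysumv2} carefully: the estimate at radius $\rho_0'$ requires knowing $\hat\psi \in H_{\sigma,2,\rho_0'}$ \emph{a priori} when $\kappa=0$, which is circular unless one either (i) runs the argument at $\kappa > 0$ where no such hypothesis is needed and then passes to the limit using Proposition \ref{prop:resolventzerokappa} (noting $\hat\psi$ arises as a limit of $\kappa_n > 0$ eigenfunctions by Proposition \ref{prop:conveigenf}, which do satisfy the relevant finite-$N_\infty$ estimates uniformly), or (ii) establishes the needed a priori bound by a separate soft argument from the ODE. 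Option (i) seems cleanest: by Proposition \ref{prop:conveigenf}, a kernel element $\hat\psi_{s_*;0,\ell} \in H_{\sigma,\rho_0}$ is a limit in $H_{\sigma,\rho_0}$ of $\hat\psi_{s_n;\kappa_n,\ell} \in \ker \hat{L}_{s_n,\ell,\kappa_n}$; applying the $\kappa_n > 0$ Gevrey estimate of Proposition \ref{prop:maingevreysumv2} at radius $\rho_0'$ to $\hat\psi_{s_n;\kappa_n,\ell}$ (valid for $N_\infty \ge N_{\kappa_n}$, and the constants are uniform in $\kappa_n$), then letting $N_\infty \to \infty$ and $n \to \infty$, and using that the bulk $L^2$ term is controlled by the known convergence in $H_{\sigma,\rho_0}$ together with interior elliptic estimates, yields $\hat\psi_{s_*;0,\ell} \in H_{\sigma,2,\rho_0'}$ and hence $s_* \in \Lambda^{\sigma,\ell,\rho_0'}_{QNF}$ only if it is already flagged there — more precisely, it shows the kernel is nontrivial at radius $\rho_0'$, completing $\Lambda^{\sigma,\ell,\rho_0}_{QNF} \subseteq \Lambda^{\sigma,\ell,\rho_0'}_{QNF}$. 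Combined with the easy reverse inclusion, this gives equality.
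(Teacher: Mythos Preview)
Your proposal is correct and takes essentially the same approach as the paper: reduce to showing that a kernel element $\hat\psi_{s_*;0,\ell}\in H_{\sigma,\rho_0}$ actually lies in $H_{\sigma,\rho_0'}$, approximate it by $\kappa_n>0$ eigenfunctions via Proposition~\ref{prop:conveigenf}, apply the Gevrey estimate at the larger radius $\rho_0'$, and control the right-hand side using the equation on the interior region $(\rho_0,\rho_0']$ together with the known $H_{\sigma,\rho_0}$ bound. The paper makes one point slightly more explicit than you do: it rewrites $\hat L_{s_n,\ell,\kappa_n}(\hat\psi)=0$ as $\hat L_{s_n,\ell+\lambda,\kappa_n}(\hat\psi)=\lambda(\lambda+1)\hat\psi$ (so that the large-$\ell$ hypothesis of Corollary~\ref{cor:maingevreysum3} is met for any $\ell$), absorbs the high-order terms on the right-hand side into the left as in Proposition~\ref{prop:moreregqnm} to reduce to finitely many derivatives, and then splits the remaining integrals over $[0,\rho_0']$ into $[0,\rho_0]$ and $(\rho_0,\rho_0']$, controlling the latter directly from the ODE.
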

\begin{proof}
Assume without loss of generality that $\rho_0'>\rho_0$.  Let $\hat{\psi}_{s_*; 0,\ell}\in  H_{\sigma,\rho_0}$. Then we use that, as in the proof of Proposition \ref{prop:moreregqnm}, there exists a sequence $\hat{\psi}_{s_{n}; \kappa_{n},\ell}\in  H_{\sigma,\rho_0}$, such that $||\hat{\psi}_{s_{n}; \kappa_{n},\ell}-\hat{\psi}_{s_*;0,\ell}||_{\sigma}\to 0$ as $n\to \infty$ and
\begin{equation*}
{L}_{\kappa_n,\ell+\lambda, s_n,}(\hat{\psi}_{s_{n}; \kappa_{n},\ell})=\lambda(\lambda+1)\hat{\psi}_{s_{n}; \kappa_{n},\ell}.
\end{equation*}
Similarly, we have that
\begin{equation*}
\begin{split}
\int_{r_+}^{r_c}& (|\hat{\psi}_{s_{n}; \kappa_{n},\ell}|^2+|\partial_r\hat{\psi}_{s_{n}; \kappa_{n},\ell}|^2) r^2\,dr+\sum_{\star \in \{+,c\}}\sum_{m=0}^{N_{\infty}} \frac{(2\sigma)^{2m}}{(m+1)!^2n!^2} m^2(m+1)^2\int_{0}^{{\rho}_0'} |(r \hat{\psi}_{s_{n}; \kappa_{n},\ell})_{(m)}|^2\, d{\rho}_{\star}\\
\leq &\: C_{\ell,\lambda,s_*}\int_{r_+}^{r_c} |\hat{\psi}_{s_{n}; \kappa_{n},\ell}|^2 r^2\,dr+ C_{\gamma, |s|,\ell} \sum_{\star \in \{+,c\}}\sum_{m=0}^{M_{\infty}} \frac{(2\sigma)^{2m}}{(m+1)!^2m!^2} \int_{0}^{{\rho}_0'} |(r \hat{\psi}_{s_{n}; \kappa_{n},\ell})_{(m)}|^2\, d{\rho}_{\star}.
\end{split}
\end{equation*}
In order to conclude that the term in the sum on the right-hand side is finite, we split the integral over $[0,\rho_0']$ into an integral over $[0,\rho_0]$ and $(\rho_0,\rho_0']$. The integral over $[0,\rho_0']$ is finite since $\hat{\psi}_{s_{n}; \kappa_{n},\ell}\in  H_{\sigma,\rho_0}$. In order to conclude that the remaining integral is finite, we use the equation ${L}_{\kappa_n,\ell, s_n,}(\hat{\psi}_{s_{n}; \kappa_{n},\ell})=0$ (and commute with $\partial_{\rho}^m$) to control all higher-order derivatives. Hence, $\hat{\psi}_{s_{n}; \kappa_{n},\ell}\in  H_{\sigma,\rho_0'}$. It follows straightforwardly that the limit $\hat{\psi}_{s_*; 0,\ell}$ must also be an element of $H_{\sigma,\rho_0'}$.
\end{proof}

By Proposition \ref{prop:indprho0}, we can unambiguously denote
\begin{equation*}
\Lambda^{\sigma,\ell}_{QNF}=\Lambda^{\sigma,\ell,\rho_0}_{QNF},
\end{equation*}
omitting $\rho_0$ in the superscript.
\begin{corollary}
\label{cor:mainresult}
Let $\sigma \in {\R_{>0}}$ and consider $\mathcal{A}: \mathbf{H}_{\sigma,\rho_0} \supseteq \mathcal{D}(\mathcal{A})\to \mathbf{H}_{\sigma,\rho_0}$. 
\begin{itemize}
\item[\rm (i)]Then 
\begin{align}
\label{eq:spect1}
\textnormal{Spect} (\mathcal{A}_{\ell})\cap \Omega_{\sigma}=&\:\textnormal{Spect}_{\rm point} (\mathcal{A}_{\ell})\cap \Omega_{\sigma}=\Lambda^{\sigma,\ell}_{QNF},\\
\label{eq:spect2}
\textnormal{Spect}_{\rm point} (\mathcal{A})\cap \Omega_{\sigma}=&\:\Lambda^{\sigma}_{QNF}:=\bigcup_{\ell \in \N_0} \Lambda^{\sigma,\ell}_{QNF},
\end{align}
with $\Lambda^{\sigma,\ell}_{QNF}\subset \Omega_{\sigma}$ the sets of isolated points from Proposition \ref{prop:fredholm}.
\item[\rm (ii)]
Define
\begin{equation*}
\Lambda_{QNF}:=\bigcup_{\sigma \in {\R_{>0}}}\Lambda^{\sigma}_{QNF}\subset \left\{|\textnormal{arg}(z)|<\frac{2}{3}\pi \right\},
\end{equation*}
then $\Lambda_{QNF}$ is a set of isolated points in $\left\{|\textnormal{arg}(z)|<\frac{2}{3}\pi \ \right\}$ (with possible accumulation only on the boundary of $\left\{|\textnormal{arg}(z)|<\frac{2}{3}\pi \ \right\}$ in $\C$).
\item[\rm (iii)]
\begin{equation*}
\textnormal{Spect}_{\rm point} (\mathcal{A})\cap \{\re z \geq 0\}=\emptyset.
\end{equation*}
\item[\rm (iv)]
For all $\ell\in \N_0$ there exists $\delta_{\ell}>0$ such that $\Lambda^{\sigma,\ell}_{QNF}\cap \{|z|<\delta_{\ell}\}=\emptyset$, i.e. the elements of $\Lambda^{\sigma,\ell}_{QNF}$ do not accumulate at the origin.
\end{itemize}
\end{corollary}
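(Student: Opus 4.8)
The plan is to prove all four parts of Corollary \ref{cor:mainresult} by assembling the functional-analytic machinery developed in Proposition \ref{prop:relationAL}, Proposition \ref{prop:fredholm}, Proposition \ref{prop:indprho0}, and Theorem \ref{thm:semigroup}. First I would establish \eqref{eq:spect1}. By Proposition \ref{prop:relationAL}(i), for $s \in \Omega_\sigma$ the operator $\mathcal{A}_\ell - s$ is invertible with bounded inverse if and only if $\hat{L}_{s,\ell,0}^{-1}$ exists as a bounded operator with domain contained in $H_{\sigma,1,\rho_0}$. By Proposition \ref{prop:fredholm} (applied with $\kappa=0$, which is permitted since $0 \le \kappa_0$), the map $s \mapsto \hat{L}_{s,\ell,0}^{-1}$ is holomorphic on $\Omega_\sigma \setminus \Lambda_{QNF}^{\sigma,\ell}$ with the complement $\Lambda_{QNF}^{\sigma,\ell}$ a discrete set, and moreover on that complement $\hat{L}_{s,\ell,0}^{-1}$ maps into $H_{\sigma,2,\rho_0} \subset H_{\sigma,1,\rho_0}$. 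Thus $s \notin \Lambda_{QNF}^{\sigma,\ell}$ implies $s \notin \textnormal{Spect}(\mathcal{A}_\ell)$. Conversely, if $s \in \Lambda_{QNF}^{\sigma,\ell}$ then $\ker \hat{L}_{s,\ell,0}$ is nontrivial and finite-dimensional, which by the correspondence in the proof of Proposition \ref{prop:relationAL} (the matrix factorization relating $\mathcal{A}_\ell - s$ to $\hat{L}_{s,\ell,0}$) produces a nontrivial finite-dimensional kernel of $\mathcal{A}_\ell - s$; hence $s \in \textnormal{Spect}_{\rm point}(\mathcal{A}_\ell)$. This gives the chain of equalities in \eqref{eq:spect1}, the use of Proposition \ref{prop:indprho0} justifying the suppression of $\rho_0$.

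Next I would prove \eqref{eq:spect2}. The inclusion $\bigcup_\ell \Lambda_{QNF}^{\sigma,\ell} \subseteq \textnormal{Spect}_{\rm point}(\mathcal{A})$ is immediate: an eigenfunction of $\mathcal{A}_\ell$ is an eigenfunction of $\mathcal{A}$ supported on $V_\ell$. For the reverse inclusion, suppose $(\Psi,\Psi') \in \ker(\mathcal{A} - s)$ with $s \in \Omega_\sigma$. If $s$ were in no $\Lambda_{QNF}^{\sigma,\ell}$, then by \eqref{eq:spect1} the operators $(\mathcal{A}_\ell - s)^{-1}$ exist and are bounded for every $\ell$; but here I also need them to be uniformly bounded for $\ell \ge L$ with some fixed $L$ — this is exactly where the $\ell$-uniform Gevrey estimates of Corollary \ref{cor:maingevreysum3} combined with the degenerate elliptic estimate Proposition \ref{prop:degellipticest} and Proposition \ref{prop:resolventzerokappa} enter, since they give $||\hat{L}_{s,\ell+\lambda,0}^{-1}||$ controlled uniformly in $\ell$ for $\ell$ large (with the $\lambda=0$ case following from Fredholm theory once compactness holds). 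Then Proposition \ref{prop:relationAL}(ii) forces $\pi_{\ge L}(\Psi) = \pi_{\ge L}(\Psi') = 0$, so the eigenfunction is supported on finitely many angular frequencies and $s$ must lie in some $\Lambda_{QNF}^{\sigma,\ell}$ with $\ell < L$. The statement (iii) that there are no point-spectrum elements with $\re z \ge 0$ follows because such $s$ lie in $\Omega_\sigma$ for suitable $\sigma$, yet by Theorem \ref{thm:semigroup}(iii) the resolvent of $\mathcal{A}$ exists for $\re s > B$, and more importantly by Lemma \ref{lm:allowedvaluess} the non-negativity of the left-hand side of \eqref{eq:maingevreysum0} holds throughout $\Omega_\sigma$; combining Corollary \ref{cor:maingevreysum3} (which at $\re s \ge 0$ needs no restriction on $s$) with Proposition \ref{prop:degellipticest} shows $\hat{L}_{s,\ell,0}$ is injective for large $\ell$, and the Fredholm alternative of Proposition \ref{prop:fredholm} promotes this to invertibility for all $\ell$, so no eigenvalue can occur there; I should phrase this carefully since the discreteness of $\Lambda_{QNF}^{\sigma,\ell}$ already allows points on the positive axis a priori and it is the estimates, not soft theory, that exclude them.

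For part (ii), I would fix the sector $\mathcal{S} := \{|\textnormal{arg}(z)| < \frac{2}{3}\pi\}$ and use Lemma \ref{lm:allowedvaluess}(ii), which shows $\bigcup_{\sigma} \Omega_\sigma = \mathcal{S}$ (up to the origin). Given a compact $K \Subset \mathcal{S}$, I would cover $K$ by finitely many domains $\Omega_{\sigma_1}, \dots, \Omega_{\sigma_N}$; since each $\Lambda_{QNF}^{\sigma_j} = \bigcup_\ell \Lambda_{QNF}^{\sigma_j, \ell}$ and, by Theorem \ref{thm:smallslargel}(ii)/Proposition \ref{prop:relationAL}(ii) applied with the compact set $K$, only finitely many $\ell$ contribute eigenvalues inside $K$, the set $\Lambda_{QNF} \cap K$ is a finite union of discrete sets intersected with a compact set, hence finite. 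Therefore $\Lambda_{QNF}$ has no accumulation points in the interior of $\mathcal{S}$, proving (ii). Finally, part (iv): fix $\ell$ and apply the low-frequency estimates of Theorem \ref{thm:smallsthmpaper} / Proposition \ref{prop:boundlgev2} together with the low-frequency elliptic estimate Proposition \ref{prop:lowfreqellipticest}, which together imply that for $|s| < s_0(L)$ with $L = \ell$ and $s \in \Omega_\sigma$ the operator $\hat{L}_{s,\ell,0}$ is injective (the homogeneous equation forces $\hat\psi = 0$); by the Fredholm alternative (Proposition \ref{prop:fredholm}) injectivity gives invertibility, so $\hat{L}_{s,\ell,0}^{-1}$ exists and hence $s \notin \Lambda_{QNF}^{\sigma,\ell}$. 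Setting $\delta_\ell = s_0(\ell)$ finishes it. The main obstacle I anticipate is bookkeeping the passage from the $\ell$-uniform and $|s|$-small estimates to genuine invertibility via the Fredholm alternative — in particular making sure that in each regime (large $\ell$ with arbitrary $|s|$; bounded $\ell$ with small $|s|$) the a priori estimate really does rule out the kernel, since Corollary \ref{cor:maingevreysum3} is stated for $\kappa > 0$ and one must invoke the $\kappa \downarrow 0$ limit from Proposition \ref{prop:resolventzerokappa} (and the a priori $H_{\sigma,2,\rho_0}$ membership) to conclude at $\kappa = 0$; threading this limiting argument through each of the four parts without circularity is the delicate point.
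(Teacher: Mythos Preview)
Your argument for parts (i) and (iv) is essentially the paper's. There are, however, genuine gaps in your treatment of parts (ii) and (iii).

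For part (ii), your covering argument does not close. You write $\Lambda_{QNF}\cap K \subseteq \bigcup_{j=1}^N \Lambda_{QNF}^{\sigma_j}\cap K$, but this inclusion is not justified: an element $s_1\in\Lambda_{QNF}\cap K$ by definition lies in $\Lambda_{QNF}^{\sigma'}$ for \emph{some} $\sigma'$, and nothing forces $\sigma'$ to be among the finitely many $\sigma_j$ in your cover (the domains $\Omega_\sigma$ are not nested in $\sigma$). What is missing is precisely Proposition~\ref{prop:moreregqnm}, the improved-regularity statement: if $\hat\psi_{s_1}\in\ker\hat L_{s_1,\ell,0}\subset H_{\sigma',\rho_0}$ and $s_1\in\Omega_\sigma$, then in fact $\hat\psi_{s_1}\in H_{\sigma,\rho_0}$, so $s_1\in\Lambda_{QNF}^{\sigma}$. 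The paper uses this directly: fixing $s_0\in\Lambda_{QNF}^\sigma$ and a punctured neighbourhood $U_{s_0}\subset\Omega_\sigma$ disjoint from $\Lambda_{QNF}^\sigma\setminus\{s_0\}$, any hypothetical $s_1\in U_{s_0}\cap\Lambda_{QNF}^{\sigma'}$ is forced by Proposition~\ref{prop:moreregqnm} into $\Lambda_{QNF}^\sigma$, a contradiction.

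For part (iii), your frequency-space route does not cover the regime of bounded $\ell$ and $|s|$ bounded away from zero with $\re s\geq 0$: Corollary~\ref{cor:maingevreysum3} requires $\ell$ large, Theorem~\ref{thm:smallsthmpaper} requires $|s|$ small, and the Fredholm alternative (Proposition~\ref{prop:fredholm}) does not ``promote'' injectivity at large $\ell$ to injectivity at a fixed small $\ell$. The paper's argument is entirely different and much simpler: if $(\mathcal{A}-s)(\Psi,\Psi')=0$ with $\re s\geq 0$, the corresponding solution $\psi=e^{s\tau}\Psi$ has non-decaying (or growing) degenerate $T$-energy, contradicting the physical-space energy-boundedness estimate (monotonicity of the $T$-energy along the foliation, as in \cite{aretakis1}). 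This works uniformly in $\ell$ with no appeal to the Gevrey machinery.
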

\begin{proof}
First, consider part (i) of the proposition. By Proposition \ref{prop:relationAL} combined with Proposition \ref{prop:fredholm}, we immediately obtain \eqref{eq:spect1}. We moreover obtain
\begin{equation*}
\bigcup_{\ell \in \N_0} \Lambda^{\sigma,\ell}_{QNF} \subseteq \textnormal{Spect}_{\rm point} (\mathcal{A})\cap \Omega_{\sigma},
\end{equation*}
since $\textnormal{Spect}_{\rm point} (\mathcal{A}_{\ell})\cap \Omega_{\sigma} \subset \textnormal{Spect}_{\rm point} (\mathcal{A}_{\ell})$. Suppose $s\in  \textnormal{Spect}_{\rm point} (\mathcal{A})$. Let $(\Psi,\Psi')\in \ker (\mathcal{A}-s)$ be initial data, then the corresponding solution $\psi$ is of the form $\psi(\tau,r,\theta,\varphi)=\hat{\psi}(r,\theta,\varphi)e^{s \tau}$ and we can decompose $\hat{\psi}=\sum_{\ell\in \N_0} \hat{\psi}_{\ell}$ with $\hat{L}_{s,\ell,0}(\hat{\psi}_{\ell})=0$. By Proposition \ref{prop:resolventzerokappa}, we have that there exists $L>0$ such that for all $\ell>L$,  $\hat{L}_{s,\ell,0}^{-1}$ is well-defined, so we must have that $\hat{\psi}=\sum_{\ell=0}^L\hat{\psi}_{\ell}$ and we can conclude that $s\in \bigcup_{\ell=0}^L  \Lambda^{\sigma,\ell}_{QNF}$ so 
\begin{equation*}
\bigcup_{\ell \in \N_0} \Lambda^{\sigma,\ell}_{QNF} \supseteq \textnormal{Spect}_{\rm point} (\mathcal{A})\cap \Omega_{\sigma}.
\end{equation*}

Consider part (ii). Let $s_0\in \Lambda_{QNF}^{\sigma}$ for some $\sigma \in {\R_{>0}}$. Since $\Lambda^{\sigma}_{QNF}$ is a set of isolated points by Proposition \ref{prop:fredholm}, there exists a neighbourhood $U_{s_0}$ of $s_0$ such that $U_{s_0}\cap \Omega_{\sigma}=\{s_0\}$. Suppose there exists $s_1\in U_{s_0}\setminus\{s_0\}$ and $\ell\in \N$, such that $s_1\in \Lambda_{QNF}^{\sigma',\ell}$ for some $\sigma'\in \R$. Let $\hat{\psi}_{s_1} \in \ker \hat{L}_{s_1,\ell,0}\subset H_{\sigma'}$. Since $s_1\in \Omega_{\sigma}$ we can apply Proposition \ref{prop:moreregqnm} to conclude that $\hat{\psi}_{s_1}\in H_{\sigma,\rho_0}$. But then $s_1\in \Lambda^{\sigma}_{QNF}$, which is a contradiction. So we can conclude that $\bigcup_{\sigma \in {\R_{>0}}}\Lambda^{\sigma}_{QNF}\cap U_{s_0}=\{s_0\}$.

Consider now part (iii). Let $\re(s)\geq 0$ and suppose $(\mathcal{A}-s)(\Psi,\Psi')=0$ with $(\Psi,\Psi')\in \mathbf{H}_{\sigma,\rho_0}$ for some $\sigma \in {\R_{>0}}$. Then there exists a corresponding solution $\psi(\tau,r,\theta,\varphi)=e^{s\tau}\Psi(r,\theta,\varphi)$ to \eqref{eq:waveequation}. However, by an application of the degenerate energy estimates (see for example \cite{aretakis1} for the relevant estimates in the extremal Reissner--Nordstr\"om setting), it follows that the (degenerate) energy with respect to $T$ must decrease in time, which contradicts the supposed exponential growth or non-decay in $\tau$ of the $T$-energy norm.

Finally, (iv) follows from Proposition \ref{prop:fredholm} combined with Proposition \ref{prop:relationAL}.
\end{proof}

\section{Relation to the scattering resonances}
\label{sec:tradqnms}

In this section, we show that the ``traditional'' notion of quasinormal frequencies as scattering resonances (with fixed angular frequency), defined as in Theorem \ref{thm:bach}, but in the setting of extremal Reissner--Nordstr\"om, can be interpreted as eigenvalues of $\mathcal{A}$. In fact, we determine the appropriate restriction to $\mathcal{A}$, which guarantees that \emph{all} eigenvalues correspond to scattering resonances in a suitable subset of the complex plane. Furthermore, we show that we can make sense of scattering resonances in extremal Reissner--Nordstr\"om without a restriction to fixed angular frequencies.

Let $\psi$ be a solution to \eqref{eq:waveequation}. Then we can express in $(t,r,\theta,\varphi)$ coordinates
\begin{equation}
\label{eq:asympflatwe}
D^{-1}r^{-1}\left[ -\partial_t^2(r\psi)+(D\partial_r)^2(r\psi)+Dr^{-2}\mathring{\slashed{\Delta}}(r\psi)-(r^{-1}DD'(r)+2D\mathfrak{l}^{-2})r\psi\right]=0.
\end{equation}
We introduce the following fixed-frequency operator with $\re s>0$:
\begin{equation*}
\hat{\mathcal{L}}_{s,\kappa}(\hat{\psi})=(D\partial_r)^2(r\hat{\psi})+Dr^{-2}\mathring{\slashed{\Delta}}(r\hat{\psi})-(s^2+r^{-1}DD'(r)+2D\mathfrak{l}^{-2})r\hat{\psi}.
\end{equation*}

Let $f\in L^2(\{t=0\})$, where $L^2(\{t=0\})$ is defined with respect to the natural volume form on $\{t=0\}$ with respect to the induced metric, and denote the trivial extension as follows: $\overline{f}: \mathcal{R}\to \R$, $\overline{f}(t,r,\theta,\varphi):=f(r,\theta,\varphi)$. Then
the map
\begin{align*}
Q_s:& L^2(\{t=0\})\to L^2(\Sigma\setminus (\mathcal{H}^+\cup\mathcal{C}^+)),\\
f&\mapsto \overline{f}e^{st}|_{\Sigma}
\end{align*}
is well-defined and invertible. Furthermore, if $\chi,\chi': \{t=0\}  \to \R$ are smooth cut-off functions that vanish near $\mathcal{H}^+$ and $\mathcal{C}^+$, then it follows immediately that
\begin{align*}
Q_s\circ \chi:  L^2(\{t=0\})&\to L^2(\Sigma),\\
\chi'\circ \mathcal{Q}_s^{-1}\circ \chi: H^2(\Sigma) &\to  H^2(\{t=0\})
\end{align*}
are bounded linear operators that are holomorphic in $s$.
\begin{lemma}
\label{lm:boundresposres}
Let $\kappa_c=\kappa_+=\kappa\geq 0$. Then we can express
\begin{equation*}
D r \hat{\mathcal{L}}_{s,\kappa}={Q}_s^{-1}\circ \hat{L}_{s,\ell,\kappa}\circ{Q}_s.
\end{equation*}
Let $\chi,\chi': \{t=0\}  \to \R$ be smooth cut-off functions that vanish near $\mathcal{H}^+$ and $\mathcal{C}^+$ or $\mathcal{I}^+$. Then we moreover have that for $\re s>0$:
\begin{equation*}
\chi'\circ {R}_{\kappa}(s)\circ \chi:=\chi'\circ \hat{\mathcal{L}}^{-1}_{s,\kappa}\circ \chi= \chi'\circ{Q}_s^{-1}\circ (D^{-1} r^{-1}\hat{L}^{-1}_{s,\kappa})\circ{Q}_s \circ \chi:  L^2(\{t=0\}) \to  H^2(\{t=0\})
\end{equation*}
is a bounded linear operator.
\end{lemma}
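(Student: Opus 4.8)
The statement is essentially an algebraic identity dressed up with mapping-property bookkeeping, so the plan is to first establish the operator identity $Dr\,\hat{\mathcal{L}}_{s,\kappa} = Q_s^{-1}\circ \hat{L}_{s,\ell,\kappa}\circ Q_s$ at the level of smooth functions, and then promote it to the claimed bounded-operator statement on the relevant Hilbert spaces. First I would write out $\hat{L}_{s,\ell,\kappa}$ acting on a mode solution $\psi = \hat{\psi}(r,\theta,\varphi)e^{st}$ along $\Sigma$, using the explicit form \eqref{eq:fixedfreqpsis}, and compare it to $\hat{\mathcal{L}}_{s,\kappa}$ acting on $\hat{\psi}$ in standard $(t,r,\theta,\varphi)$ coordinates via \eqref{eq:asympflatwe}. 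The key point is that $\hat{L}_{s,\ell,\kappa}$ was derived in $(\tau,r,\theta,\varphi)$ coordinates adapted to the foliation $\Sigma_\tau$ (where $\partial_\tau = T$, $\partial_r = Y$), whereas $\hat{\mathcal{L}}_{s,\kappa}$ lives in the static $t$-coordinate; since $T = \partial_t = \partial_\tau$ and the hypersurfaces $\Sigma_\tau$ and $\{t = \mathrm{const}\}$ both flow along $T$, a mode solution $e^{s\tau}\hat\psi$ restricted to $\Sigma$ corresponds to $e^{st}\hat\psi$ restricted to $\{t=0\}$, pulled back by $Q_s$. The factor $Dr$ reconciles the normalisation: \eqref{eq:asympflatwe} has the prefactor $D^{-1}r^{-1}$ in front of the wave operator written in terms of $r\psi$, while \eqref{eq:fixedfreqpsis} defines $r^{-2}\hat{L}_{s,\ell,\kappa}$ directly. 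One simply matches the $h_{r_+}$-dependent terms: the choice of $h_{r_+}$ enters $\hat{L}_{s,\ell,\kappa}$ but should drop out of $Q_s^{-1}\circ \hat L_{s,\ell,\kappa}\circ Q_s$ precisely because conjugating by $Q_s$ undoes the change of slicing, leaving the slicing-independent operator $Dr\,\hat{\mathcal{L}}_{s,\kappa}$.

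Having established the identity on smooth functions, I would invert it: for $\re s > 0$, Proposition \ref{prop:relationAL} (or directly Theorem \ref{thm:semigroup}(iii) together with the relation between $\mathcal{A}_\ell - s$ and $\hat{L}_{s,\ell,\kappa}$) guarantees that $\hat{L}^{-1}_{s,\ell,\kappa}$ exists as a bounded operator $H_{\sigma,\rho_0}\to H_{\sigma,1,\rho_0}$; more relevantly, for the purposes of this lemma one needs $\hat{L}^{-1}_{s,\kappa}$ as an operator between $L^2$-based spaces on $\Sigma$, which for $\re s > B$ follows from the semigroup resolvent bound and for $0 < \re s \le B$ requires the uniform energy boundedness input referred to after Theorem \ref{thm:bach} (from \cite{aretakis1,Civin2014}). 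Inverting the conjugation identity gives $\hat{\mathcal{L}}^{-1}_{s,\kappa} = Q_s^{-1}\circ (D^{-1}r^{-1}\hat{L}^{-1}_{s,\kappa})\circ Q_s$ as operators between appropriate spaces over $\Sigma\setminus(\mathcal{H}^+\cup\mathcal{C}^+)$ and its interior. Finally I would insert the cut-offs: $\chi$ vanishes near the horizons, so $Q_s\circ\chi: L^2(\{t=0\})\to L^2(\Sigma)$ is bounded and holomorphic in $s$ (this is stated just above the lemma), $D^{-1}r^{-1}\hat L^{-1}_{s,\kappa}$ maps $L^2(\Sigma)$ into $H^2$ away from the horizons where $D^{-1}$ is smooth and bounded, and $\chi'\circ Q_s^{-1}: H^2(\Sigma)\to H^2(\{t=0\})$ is bounded — composing gives the claimed boundedness of $\chi'\circ R_\kappa(s)\circ\chi: L^2(\{t=0\})\to H^2(\{t=0\})$, holomorphic in $\{\re s > 0\}$.

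The main obstacle I anticipate is the careful verification that the conjugation by $Q_s$ genuinely eliminates all $h_{r_+}$-dependence and produces exactly $Dr\,\hat{\mathcal{L}}_{s,\kappa}$ with the correct powers of $D$, $r$, and $(1+M\tilde\rho)$: the operator $\hat{L}_{s,\ell,\kappa}$ in \eqref{eq:fixedfreqpsis} carries terms like $-h_{r_+}(2-h_{r_+}D)s^2\hat\psi - (h_+D)'s\,\hat\psi$ and $2(1-h_{r_+}D)sr^{-1}\partial_r(r\hat\psi)$, none of which appear in $\hat{\mathcal{L}}_{s,\kappa}$; the resolution is that the coordinate relation between the $\Sigma$-adapted $\tau$ and the static $t$ is $t = \tau - (\text{function of } r)$ along $\Sigma$ built from $h_{r_+}$, so $T$ acting on $e^{s\tau}\hat\psi|_\Sigma$ versus $e^{st}\hat\psi|_{\{t=0\}}$ differs precisely by these $h_{r_+}$-factors and $Q_s$ absorbs them. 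I would make this precise by tracking the relation $u_{\Sigma_0}, v_{\Sigma_0}$ as functions of $r$ from Section \ref{sec:foliations} and computing $Q_s^{-1}\circ T\circ Q_s$ explicitly, rather than appealing to it heuristically. A secondary, more routine point is ensuring the cut-off commutator terms (from $\chi$ hitting the second-order part of $\hat{L}^{-1}_{s,\kappa}$) stay in $L^2$, which is immediate since $\chi,\chi'$ are supported away from $\mathcal H^+$ and $\mathcal C^+$ where all metric coefficients and $D^{-1}$ are smooth.
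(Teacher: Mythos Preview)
Your proposal is correct and follows the same two-step strategy as the paper: verify the conjugation identity, then establish boundedness of the inverse with cut-offs. The paper's proof, however, inverts your emphasis: it treats the identity $Dr\,\hat{\mathcal{L}}_{s,\kappa} = Q_s^{-1}\circ \hat{L}_{s,\ell,\kappa}\circ Q_s$ as essentially definitional (it does not verify it in detail) and instead focuses on the boundedness claim. For that, rather than splitting into $\re s > B$ (semigroup resolvent bound) and $0 < \re s \le B$ (energy boundedness from \cite{aretakis1,Civin2014}), the paper observes directly that the $n=0$ multiplier computation behind Proposition~\ref{prop:mainred-shiftest} already yields $\|\hat\psi\|_{H^1(\Sigma)} \le C\|r^{-2}\hat{L}_{s,\kappa}\hat\psi\|_{L^2(\Sigma)}$ for all $\re s > 0$ and all $\kappa \ge 0$ (the term $2\re(s)$ in that estimate supplies positivity even when $\kappa = 0$), and then upgrades to $H^2$ under $\chi'$ by ellipticity away from the horizons. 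Your route through Theorem~\ref{thm:semigroup}(iii) is slightly indirect since that statement lives on the Gevrey space $\mathbf{H}_{\sigma,\rho_0}$ rather than $L^2(\Sigma)$; your fallback to the energy-boundedness input is the right substantive ingredient and amounts to the same estimate the paper extracts directly.
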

\begin{proof}
We first need to show that $(r^{-2}\hat{{L}})^{-1}_{s,\kappa}: L^2(\Sigma)\to H^1(\Sigma)$ is well-defined as a bounded linear operator when $\re s>0$. This is a standard result. By performing the steps in the proof of Proposition \ref{prop:mainred-shiftest} with $\re s>0$ and $\kappa_+,\kappa_c\geq 0$, it follows that
\begin{equation*}
||\hat{\psi}||_{H^1(\Sigma)}\leq C ||r^{-2}\hat{{L}}_{s,\kappa}\hat{\psi}||_{L^2(\Sigma)},
\end{equation*}
with $C>0$ a constant that is independent of $\kappa$. We then apply similar steps to those in Appendix \ref{sec:red-shift} to construct $\hat{\mathcal{L}}^{-1}_{s,\kappa}$. In fact, the above estimate moreover implies that
\begin{equation*}
||\chi (r^{-2}\hat{L})^{-1}_{s,\kappa}(\tilde{f})||_{H^2(\Sigma)}\leq C ||\tilde{f}||_{L^2(\Sigma)},
\end{equation*}
for a cut-off function $\chi$ that is supported away from $r=r_+$ and $r=r_c$ or $r=\infty$.
\end{proof}

\begin{proposition}
\label{prop:merocont}
Let $0\leq \kappa_c=\kappa_+=\kappa<\kappa_0$, with $\kappa_0>0$ suitably small. Then the linear operator
\begin{equation*}
\chi'\circ R_{\kappa}(s)\circ \chi:L^2(\{t=0\}) \to  H^2(\{t=0\})
\end{equation*}
can be meromorphically continued to $\{|\textnormal{arg}(z)|<\frac{2}{3}\pi \}\subset \C$ and the poles form a subset of $\Lambda_{QNF}$.
\end{proposition}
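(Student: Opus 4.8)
The plan is to connect the cut-off resolvent $\chi' \circ R_\kappa(s) \circ \chi$ acting on the static slice $\{t=0\}$ with the fixed-angular-frequency resolvents $\hat{L}^{-1}_{s,\ell,\kappa}$ acting on the Gevrey spaces $H_{\sigma,\rho_0}$, using the conjugation relation $D r \hat{\mathcal{L}}_{s,\kappa} = Q_s^{-1} \circ \hat{L}_{s,\ell,\kappa} \circ Q_s$ from Lemma \ref{lm:boundresposres}. First I would fix $\sigma \in \R_{>0}$ and note that for data supported away from $\mathcal{H}^+$ and $\mathcal{C}^+$ (or $\mathcal{I}^+$), the cut-off function $\chi$ effectively localizes to the region $[R_0^+, R_0^c]$ where the Gevrey norms reduce to ordinary $L^2$ and $H^2$ norms; hence for any $f \in L^2(\{t=0\})$ the function $\chi f$, transported via $Q_s$, lies in $H_{\sigma,\rho_0}$ for every $\sigma$, and similarly $\chi'$ on the output side picks out an $H^2$-controlled piece. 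The key point is then: for each fixed $\ell$, by Proposition \ref{prop:fredholm}, the operator $\hat{L}^{-1}_{s,\ell,\kappa}: H_{\sigma,\rho_0} \to H_{\sigma,2,\rho_0}$ is meromorphic on $\Omega_\sigma$ with poles in the discrete set $\Lambda_{QNF}^{\sigma,\ell}$, and by Corollary \ref{cor:mainresult} (together with Proposition \ref{prop:resolventzerokappa}) only finitely many $\ell$ contribute poles in any compact subset of $\Omega_\sigma$.

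The second step is to assemble the full (non-fixed-$\ell$) resolvent. Decomposing $f = \sum_\ell f_\ell$ into spherical harmonics, we write $\chi' \circ R_\kappa(s) \circ \chi (f) = \sum_\ell \chi' \circ Q_s^{-1} \circ (D^{-1} r^{-1} \hat{L}^{-1}_{s,\ell,\kappa}) \circ Q_s \circ \chi (f_\ell)$. For $\re s > 0$ this converges and equals the operator from Lemma \ref{lm:boundresposres}. To continue meromorphically into $\{|\arg z| < \tfrac23 \pi\}$, I would cover this sector by the domains $\Omega_\sigma$ (using part (ii) of Lemma \ref{lm:allowedvaluess}, $\bigcup_\sigma \Omega_\sigma = \{|\arg z| < \tfrac23 \pi\}$). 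On a fixed compact $K \Subset \Omega_\sigma$, part (ii) of Theorem \ref{thm:smallslargel}, i.e. Proposition \ref{prop:resolventzerokappa}, gives an $L = L(K)$ such that for $\ell > L$ the operator $\hat{L}^{-1}_{s,\ell,\kappa}$ is holomorphic on $K$; moreover the uniform-in-$\ell$ bounds from Corollary \ref{cor:maingevreysum3} (with the exponentially growing constant $C_{\ell,\lambda,s}$ competing against the rapidly decaying $\ell$-dependent factors once one commutes with $\partial_\rho^n$, or alternatively using the $1/\ell$-gain in the zeroth-order term) yield that $\sum_{\ell > L} \chi' \circ Q_s^{-1} \circ (\cdots) \circ Q_s \circ \chi$ converges uniformly on $K$ to a holomorphic operator-valued function. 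The finitely many terms with $\ell \le L$ are each meromorphic on $K$ with poles in $\Lambda_{QNF}^{\sigma,\ell}$, so the sum is meromorphic on $K$, hence on $\Omega_\sigma$, hence on all of $\{|\arg z| < \tfrac23 \pi\}$ by gluing across overlapping $\Omega_\sigma$'s. Crucially, the continuation is well-defined because on $\Omega_{\sigma_1} \cap \Omega_{\sigma_2}$ the two constructions agree (both extend the same $\re s > 0$ operator and are meromorphic), and by Proposition \ref{prop:indprho0} the poles are independent of $\rho_0$.

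The third step identifies the pole set. Any pole of $\chi' \circ R_\kappa(s) \circ \chi$ must come from a pole of one of the fixed-$\ell$ pieces $\hat{L}^{-1}_{s,\ell,\kappa}$ (the conjugating operators $Q_s$, $Q_s^{-1}$, $\chi$, $\chi'$ and the multiplication by $D^{-1} r^{-1}$ are all holomorphic in $s$ with no poles), so the pole set is contained in $\bigcup_\ell \Lambda_{QNF}^{\sigma,\ell} = \Lambda_{QNF}^\sigma$ on $\Omega_\sigma$, and taking the union over $\sigma$ gives containment in $\Lambda_{QNF}$. One must also check that poles do not disappear spuriously; but since we only claim $\Lambda_{QNF}^{\rm res} \subseteq \Lambda_{QNF}$ (not equality) this direction is not needed here. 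I expect the main obstacle to be making the infinite sum over $\ell$ genuinely convergent and holomorphic in the relevant operator topology: one needs the $\ell$-dependence of the constants in Corollary \ref{cor:maingevreysum3} to be controlled well enough that, after the cut-off $\chi'$ restricts to $[R_0^+, R_0^c]$ and one bounds the $H^2$-norm there in terms of the $L^2$-norm of $\chi f_\ell$, the tail $\sum_{\ell > L}$ is summable uniformly on compact subsets of $\Omega_\sigma$ — this is precisely where the competition between the exponentially growing Gevrey constant and the exponentially small coupling in the degenerate elliptic estimate (Proposition \ref{prop:degellipticest}) must be invoked, and care is needed to track that the net $\ell$-dependence is summable rather than merely finite for each $\ell$.
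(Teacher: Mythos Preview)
Your approach is essentially the same as the paper's: decompose into spherical harmonics, use Proposition~\ref{prop:fredholm} for the finitely many $\ell \le L$ to get meromorphicity with poles in $\Lambda_{QNF}^{\sigma,\ell}$, and use Corollary~\ref{cor:maingevreysum3} for the tail $\ell > L$ to get a holomorphic, bounded remainder.

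Your closing concern, however, is misplaced and obscures the actual mechanism. You write that ``the competition between the exponentially growing Gevrey constant and the exponentially small coupling in the degenerate elliptic estimate must be invoked'' to control the tail sum over $\ell$. That competition was already resolved inside the proof of Corollary~\ref{cor:maingevreysum3}; at this stage the relevant observation is much simpler. The constant in front of the $H^2([R_0^+,R_0^c])$ norm on the left-hand side of \eqref{eq:maingevreysum3} is $C_{s,\rho_0}$, which is \emph{explicitly stated to be independent of $\ell$}. The only $\ell$-dependent constant on the right-hand side is $C_\ell$, which multiplies the Gevrey integrals $\int_0^{\rho_0}|\partial_{\rho_\star}^n(r\tilde f)|^2\,d\rho_\star$. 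But since $\chi$ is compactly supported in $(r_+,r_c)$, you simply choose $\rho_0$ small enough (depending on $\chi$) so that $\operatorname{supp}(\chi f) \subset (R_0^+,R_0^c)$; then those Gevrey integrals vanish identically and the estimate reduces to
\[
\int_{R_0^+}^{R_0^c}\bigl(|\hat\psi_\ell|^2+|\partial_r\hat\psi_\ell|^2+|\partial_r^2\hat\psi_\ell|^2\bigr)\,dr \le C_{s,\rho_0}\int_{R_0^+}^{R_0^c}|\chi f_\ell|^2\,dr,
\]
uniformly in $\ell$. Summing over $\ell$ (and $m$) then gives the $H^2$ bound by $\|f\|_{L^2}$ directly by orthogonality; no delicate tracking of exponential constants is needed. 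This is exactly what the paper does in one line, and it is the point you should state cleanly rather than flag as an obstacle.
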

\begin{proof}
By Lemma \ref{lm:boundresposres}, we have to prove that 
\begin{equation*}
\chi' \circ \hat{L}^{-1}_{s,\kappa}\circ \chi: L^2(\{t=0\}) \to  H^2(\{t=0\})
\end{equation*}
can be meromorphically continued to $\{-\frac{\re s}{|s|}<\frac{1}{2}\}\subset \C$. Note that we can express for $f\in L^2(\{t=0\})$
\begin{equation}
\label{eq:sumlresolvent}
(\chi' \circ \hat{L}^{-1}_{s,\kappa}\circ \chi)(f)=\sum_{\ell=0}\sum_{|m|\leq \ell}(\chi' \circ \hat{L}^{-1}_{s,\kappa,\ell}\circ \chi)(f_{\ell m})Y_{\ell m}(\theta,\varphi)
\end{equation}
when $\re (s)>0$. 

Let $s\in \Omega$, with $\Omega \subset \{-\frac{\re z}{|z|}<\frac{1}{2}, z\neq 0\}$ such that $\Omega\cap \{\re z \leq 0\}$ is compact. Then, by Corollary \ref{cor:maingevreysum3} with $\rho_0>0$ suitably small depending on $\chi$ and $\chi'$, there exists $L=L(\Omega)>0$ suitably large such that for all $\ell\geq L+1$, $(\chi' \circ \hat{L}^{-1}_{s,\kappa,\ell}\circ \chi)$ is a holomorphic operator from $L^2[r_+,r_c]$ to $H^2[r_+,r_c]$ and there exists $C>0$ such that
\begin{equation*}
\left|\left|\sum_{\ell=L+1}^{\infty}\sum_{|m|\leq \ell}(\chi' \circ \hat{L}^{-1}_{s,\kappa,\ell}\circ \chi)(f_{\ell m})Y_{\ell m}(\theta,\varphi)\right|\right|_{H^2(\{t=0\})}\leq C ||f||_{L^2(\{t=0\}}.
\end{equation*}

 Furthermore, by Proposition \ref{prop:fredholm}, 
\begin{equation*}
f\mapsto \sum_{\ell=0}^L\sum_{|m|\leq \ell}(\chi' \circ \hat{L}^{-1}_{s,\kappa,\ell}\circ \chi)(f_{\ell m})Y_{\ell m}(\theta,\varphi)
\end{equation*}
is meromorphic on $\Omega$ as an operator from $L^2(\{t=0\})$ to $H^2(\{t=0\})$. We can conclude that \eqref{eq:sumlresolvent} is well-defined as a meromorphic operator from $L^2(\{t=0\})$ to $H^2(\{t=0\})$ when $\{-\frac{\re s}{|s|}<\frac{1}{2}\}\subset \C$.
\end{proof}

We would like to identify the poles of $R_{\kappa}(s)$ with the eigenvalues of $\mathcal{A}$, restricted to a suitable subspace of $\mathbf{H}_{\sigma,\rho_0}$. We introduce the following function space:
\begin{equation*}
X=\left\{\mathcal{S}(\tau)(\Psi,\Psi')\,|\, (\Psi,\Psi')\in (C_c^{\infty}(\Sigma)\times C^{\infty}(S))\cap \mathbf{H}_{\sigma,\rho_0},\, \tau\geq 0\right\}.
\end{equation*}
We denote with $\mathbf{H}_{\sigma,\rho_0}^{\rm res}$ the closure of $X$ under the norm on $\mathbf{H}_{\sigma,\rho_0}$. Then, $\mathbf{H}_{\sigma,\rho_0}^{\rm res}$ is the \emph{smallest, closed subspace of $\mathbf{H}_{\sigma,\rho_0}$ that is invariant under $\mathcal{S}(\tau)$ and contains $(C_c^{\infty}(\Sigma)\times C^{\infty}(S))\cap \mathbf{H}_{\sigma,\rho_0}$}. Furthermore,
\begin{equation*}
\mathcal{S}(\tau): \mathbf{H}_{\sigma,\rho_0}^{\rm res}\to \mathbf{H}_{\sigma,\rho_0}^{\rm res}
\end{equation*}
is well-defined, by construction, and hence,
\begin{equation*}
\mathcal{A}^{\rm res}: \mathcal{D}_{\sigma, \rho_0}^{ \rm res}(\mathcal{A}) \to \mathbf{H}_{\sigma,\rho_0}^{\rm res}
\end{equation*}
is a densely defined closed operator with $ \mathcal{D}_{\sigma,\rho_0}^{\rm res}(\mathcal{A}) \subseteq \mathbf{H}_{\sigma,\rho_0}^{\rm res}$ a dense subset.

We now denote with $\Lambda^{\sigma, \rm res}_{QNF}$ the subset of eigenvalues in $\Lambda^{\sigma}_{QNF}$ for $\mathcal{A}$ restricted to $\mathcal{D}_{\sigma, \rho_0}^{\rm res}(\mathcal{A})$. Moreover, we denote
\begin{equation*}
\Lambda_{QNF}^{ \rm res}=\bigcup_{\sigma \in {\R_{>0}}} \Lambda^{\sigma,  \rm res}_{QNF} \subset \Lambda_{QNF}.
\end{equation*}
\begin{proposition}
\label{prop:relwithtradition}
Let $\kappa_c=\kappa_+=\kappa\geq 0$ be suitably small. Then the poles of
\begin{equation*}
\chi'\circ R_{\kappa}(s)\circ \chi:L^2(\{t=0\}) \to  H^2(\{t=0\})
\end{equation*}
for any choice of cut-off functions $\chi,\chi'$ correspond precisely to elements of $\Lambda_{QNF}^{\rm res}$.
\end{proposition}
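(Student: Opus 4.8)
\textbf{Proof proposal for Proposition \ref{prop:relwithtradition}.}
The plan is to reduce the statement, via Lemma \ref{lm:boundresposres} and Proposition \ref{prop:merocont}, to an identification of the poles of the (meromorphically continued) operator $\chi'\circ \hat{L}^{-1}_{s,\kappa}\circ\chi$ with the elements of $\Lambda^{\rm res}_{QNF}$, and then to match these against the point spectrum of $\mathcal{A}^{\rm res}$ using Proposition \ref{prop:relationAL} together with the invariance and density properties defining $\mathbf{H}^{\rm res}_{\sigma,\rho_0}$. Throughout I would fix $\sigma\in\R_{>0}$, work in the sector $\Omega_\sigma$, and at the end take the union over $\sigma$; by Proposition \ref{prop:indprho0} the resulting sets are independent of $\rho_0$, so the choice of cut-offs $\chi,\chi'$ (which only dictates how small $\rho_0$ must be taken, as in the proof of Proposition \ref{prop:merocont}) is immaterial.

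First I would establish the inclusion ``poles of $\chi'\circ R_\kappa(s)\circ\chi$'' $\subseteq \Lambda^{\rm res}_{QNF}$. Suppose $s_*$ is a pole. By the fixed-$\ell$ decomposition \eqref{eq:sumlresolvent} and Corollary \ref{cor:maingevreysum3}, only finitely many $\ell\le L$ contribute a pole at $s_*$ (the tail $\ell\ge L+1$ is holomorphic near $s_*$ as an operator into $H^2$), so $s_*\in\bigcup_{\ell=0}^L\Lambda^{\sigma,\ell}_{QNF}=\Lambda^{\sigma}_{QNF}$ by Corollary \ref{cor:mainresult}. It remains to check that the corresponding residue, which is built from elements of $\ker\hat{L}_{s_*,\ell,\kappa}$, in fact lands in $\mathbf{H}^{\rm res}_{\sigma,\rho_0}$, i.e.\ that the associated eigenfunction of $\mathcal{A}$ lies in the closure $X$ of $\mathcal{S}(\tau)$-orbits of smooth compactly supported data. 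The natural mechanism here is the Bromwich/inverse-Laplace representation sketched in Section \ref{intro:laplacetransf}: for compactly supported initial data $(\Psi,\Psi')$ the solution $\psi_{\ell m}$ is recovered by a contour integral of $R_\ell(s)(f_{\ell m})$ which, after contour deformation using Proposition \ref{prop:merocont}, picks up a residue at $s_*$ proportional to the resonant state. Thus the resonant state is a limit (in $\mathbf{H}_{\sigma,\rho_0}$, using the Gevrey estimates of Theorem \ref{thm:mainthmpaper} to control the relevant norms of the residue) of combinations of $\mathcal{S}(\tau)$-images of smooth compactly supported data, hence lies in $\mathbf{H}^{\rm res}_{\sigma,\rho_0}$; so $s_*\in\Lambda^{\sigma,\rm res}_{QNF}$.

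Conversely, to show $\Lambda^{\rm res}_{QNF}\subseteq$ ``poles of $\chi'\circ R_\kappa(s)\circ\chi$'': let $s_*\in\Lambda^{\sigma,\rm res}_{QNF}$ with eigenvector $(\Psi,\Psi')\in\mathcal{D}^{\rm res}_{\sigma,\rho_0}(\mathcal{A})$, and decompose into spherical harmonics, $(\mathcal{A}_\ell-s_*)(\pi_\ell\Psi,\pi_\ell\Psi')=0$. By Proposition \ref{prop:relationAL}(i) the kernel of $\mathcal{A}_\ell-s_*$ is nonzero precisely when $\hat{L}_{s_*,\ell,\kappa}^{-1}$ fails to exist, i.e.\ when $s_*\in\Lambda^{\sigma,\ell}_{QNF}$, so $\hat{L}_{s_*,\ell,\kappa}$ has a nontrivial kernel for some $\ell\le L$, and by Proposition \ref{prop:fredholm} this is a genuine pole of $\chi'\circ\hat{L}^{-1}_{s,\kappa,\ell}\circ\chi$ (nonvanishing of the residue follows since $\chi,\chi'\equiv 1$ on $\supp$ of a suitable non-trivial vector, using that eigenfunctions are supported away from the horizons to the relevant order and the unique continuation built into \eqref{eq:fixedfreqpsihat}). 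One must then rule out cancellation of this pole in the full sum \eqref{eq:sumlresolvent} over $\ell$ — but different $\ell$ contribute to orthogonal spherical-harmonic subspaces, so no cancellation occurs — and transfer the pole of $\chi'\circ\hat{L}^{-1}_{s,\kappa}\circ\chi$ to a pole of $\chi'\circ R_\kappa(s)\circ\chi$ via the conjugation identity of Lemma \ref{lm:boundresposres}, noting that $Q_s$, $Q_s^{-1}$ and $D^{-1}r^{-1}$ contribute only holomorphic, invertible factors near $s_*$.

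The main obstacle I expect is the membership claim in the first inclusion: showing that the residue of $\chi'\circ R_\kappa(s)\circ\chi$ at a pole, a priori only a fixed-frequency object built from $\ker\hat{L}_{s_*,\ell,\kappa}$, actually defines an element of $\mathbf{H}^{\rm res}_{\sigma,\rho_0}$ rather than merely of $\mathbf{H}_{\sigma,\rho_0}$. This requires care because $\mathbf{H}^{\rm res}_{\sigma,\rho_0}$ is defined as a closure of orbits of \emph{compactly supported} data, and one has to produce an explicit approximating sequence — most cleanly by appealing to the inverse-Laplace/residue calculus of Section \ref{intro:laplacetransf} applied to a family of compactly supported data and checking convergence in the $\mathbf{H}_{\sigma,\rho_0}$-norm using the quantitative Gevrey bounds (Theorem \ref{thm:mainthmpaper}, Corollary \ref{cor:maingevreysum3}) to control the Gevrey-weighted tails of the residue uniformly. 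A secondary technical point is verifying non-vanishing of residues (multiplicity/order of the pole), which I would handle as in the proof of Theorem \ref{thm:ads} in \cite{warn15} by relating the order of the pole to the Jordan structure of $\mathcal{A}_\ell$ at $s_*$.
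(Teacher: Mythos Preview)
Your proposed route for the first inclusion (pole $\Rightarrow s_*\in\Lambda^{\rm res}_{QNF}$) has a genuine gap. You want to show the resonant state lies in $\mathbf{H}^{\rm res}_{\sigma,\rho_0}$ by producing an explicit approximating sequence via the Bromwich contour integral; but this would require uniform resolvent bounds along vertical contours in the $\mathbf{H}_{\sigma,\rho_0}$-topology and convergence of the inverse Laplace transform in Gevrey norms, none of which is established in the paper (and would be substantial additional work, not covered by Theorem~\ref{thm:mainthmpaper} or Corollary~\ref{cor:maingevreysum3}, which are pointwise-in-$s$ estimates). The paper avoids this entirely by proving the \emph{contrapositive}: if $s$ is not a pole of $\chi'\circ R_{\kappa,\ell}(s)\circ\chi$ for any cut-offs, then using the Laurent expansion $\hat L^{-1}_{s',\ell,\kappa}=A(s')+\sum_k(s-s')^{-k}B_k$ and the conjugation identity from Lemma~\ref{lm:boundresposres}, every compactly supported $f$ lies in $\bigcap_k\ker B_k\subset\textnormal{Ran}(\hat L_{s,\ell,\kappa})$. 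Via Proposition~\ref{prop:relationAL} this gives $(C_c^\infty(\Sigma)\times C^\infty(S))\cap\mathbf{H}_{\sigma,\rho_0}\cap V_\ell\subset\textnormal{Ran}(\mathcal{A}^{\rm res}_\ell-s)$. The key point you are missing is that $\textnormal{Ran}(\mathcal{A}^{\rm res}_\ell-s)$ is closed (by the Fredholm property of Proposition~\ref{prop:fredholm}) and $\mathcal{S}(\tau)$-invariant (since $\mathcal{A}$ commutes with $\mathcal{S}(\tau)$); hence by the very \emph{definition} of $\mathbf{H}^{\rm res}_{\sigma,\rho_0}$ as the smallest closed $\mathcal{S}(\tau)$-invariant subspace containing compactly supported data, one obtains $\textnormal{Ran}(\mathcal{A}^{\rm res}_\ell-s)=\mathbf{H}^{\rm res}_{\sigma,\rho_0}\cap V_\ell$ directly. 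Uniqueness on the dense set $X$ then forces $\ker(\mathcal{A}^{\rm res}_\ell-s)=\{0\}$, so $s\notin\Lambda^{\rm res,\sigma,\ell}_{QNF}$. No approximating sequence, no contour deformation, no residue calculus --- the minimality property of $\mathbf{H}^{\rm res}_{\sigma,\rho_0}$ does all the work.

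Your second inclusion is essentially correct in spirit, though the paper handles it more economically (again via the same Laurent decomposition and conjugation identity, without invoking unique continuation or Jordan-block analysis): once $s\notin\Lambda^{\rm res,\sigma,\ell}_{QNF}$ one simply observes that $\chi'\circ R_{\kappa,\ell}(s)\circ\chi$ is bounded on $H_{\sigma,\rho_0}$, and by taking $\rho_0$ small relative to the supports of $\chi,\chi'$ this transfers to boundedness on $L^2$.
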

\begin{proof}
Fix $\ell \geq 0$, $\sigma>0$ and $\rho_0>0$ suitably small. In view of the fact that for $s$ in any compact neighbourhood in $\Omega_{\sigma}$, there exists an $L\in \N$ such that  $\ker (\mathcal{A}-s) \subseteq \mathbf{H}_{\sigma,\rho_0}\cap \sum_{\ell=0}^LV_{\ell}$, combined with Proposition \ref{prop:indprho0}, it is sufficient to show that $s$ is \emph{not} a pole of $\chi'\circ R_{\kappa,\ell}(s)\circ \chi$ for all cut-off functions $\chi,\chi'$ if and only if $s\notin \Lambda_{QNF}^{\sigma, \ell, \rm res}$ for all $\ell$. For the sake of convenience, we moreover restrict to functions in $V_{m\ell}$ for a fixed $m\in \Z$ with $|m|\leq \ell$ and we ignore their angular part.

We will use the following expressions, derived in the proof of Proposition \ref{prop:relationAL}:
\begin{align*}
(\mathcal{A}_{\ell}^{\rm res}-s)\begin{pmatrix}
{\Psi}\\
{\Psi}'
\end{pmatrix}=\begin{pmatrix}
\widetilde{\Psi}\\
\widetilde{\Psi}'
\end{pmatrix}
\end{align*}
for $(\Psi,\Psi'), (\widetilde{\Psi},\widetilde{\Psi}')\in \mathbf{H}^{\rm res}_{\sigma, \rho_0}\cap V_{m \ell}$, if and only if
\begin{align*}
\hat{L}_{s,\ell,\kappa}(\Psi|_{S})=&\:h_{r_+}(2-h_{r_+}D)r^2 \left[\widetilde{\Psi}'+(P_1+s)(\widetilde{\Psi}|_S)\right],\\
\Psi'=&\: s\Psi|_S+\widetilde{\Psi}|_S,\\
r\hat{L}_{s,\ell,\kappa}(\Psi|_{N})=&\:-2\partial_{\rho_c}(r \widetilde{\Psi}|_N),\\
r\hat{L}_{s,\ell,\kappa}(\Psi|_{\underline{N}})=&\:-2\partial_{\rho_+}(r \widetilde{\Psi}|_{\underline{N}}).
\end{align*}
Let $(\widetilde{\Psi},\widetilde{\Psi}')\in (C_c^{\infty}(\Sigma)\cap  \mathbf{H}_{\sigma, \rho_0} \cap V_{m \ell})\times (C^{\infty}(S) \cap V_{m \ell}) $ and suppose $s$ is not a pole of $\chi'\circ R_{\kappa,\ell}(s)\circ \chi$. 

We define the following auxilliary function:
\begin{align*}
rf=&\:r^3h_{r_+}(2-h_{r_+}D) \left[\widetilde{\Psi}'+(P_1+s)(\widetilde{\Psi}|_S)\right]\quad \textnormal{on $S$},\\
rf=&-2\partial_{\rho_c}(r \widetilde{\Psi}|_N)\quad \textnormal{on $N$,}\\
rf=&-2\partial_{\rho_+}(r \widetilde{\Psi}|_{\underline{N}})\quad \textnormal{on $\underline{N}$,}
\end{align*}
then $ f\in L^2_c(r_+,r_c)\cap H_{\sigma,\rho_0}$. By the meromorphicity of $\hat{L}_{s,\kappa,\ell}^{-1}$, there exists a neighbourhood $U_s$ of $s$ such that for all $s'\in U_s$ we can decompose
\begin{equation}
\label{eq:decompresolv}
\hat{L}_{s',\kappa,\ell}^{-1}=A(s')+ \sum_{k=1}^N (s-s')^{-k} B_k, 
\end{equation}
where $N\in \N_0$, $A(s), B_k: H_{\sigma,\rho_0}\to H_{\sigma,\rho_0}$ are holomorphic linear operators. We moreover have that $B_k$ are finite rank operators that are independent of $s'$ (note that $B_k=0$ for all $k$ if $s\notin \Lambda_{QNF}^{\sigma,\ell}$).

By Lemma \ref{lm:boundresposres}, we can moreover express
\begin{equation}
\label{eq:relationresolvents}
\chi' D^{-1}r^{-1}\hat{L}_{s',\kappa,\ell}^{-1}(f)= (Q_{s'}\circ \chi' \circ R_{\kappa}(s')\circ \chi \circ Q_{s'}^{-1}) (f)
\end{equation}
with $\chi'$ an arbitrary smooth cut-off function and $\chi$ a smooth cut-off function such that $\chi\equiv 1$ on $\supp f$. Since $s$ is not a pole of $\chi'\circ R(s')\circ \chi$ by supposition, we have that $\chi'\circ R(s')\circ \chi$ is uniformly bounded in $s'$ provided $U_{s}$ is suitably small, so we can multiply both sides of \eqref{eq:relationresolvents} by $(s-s')^k$ with $1\leq k\leq N$ and take the limit $s'\to s$ to conclude that $\chi' B_k(f)=0$ for all $k$. Since $\chi'$ was chosen arbitrarily, we must in fact have that $f\in \ker B_k$ for all $k$ and hence  $f\in \textnormal{Ran}(\hat{L}_{s,\kappa,\ell})=(\textnormal{coker}\, \hat{L}_{s,\kappa,\ell})^{\perp}$. 

Defining $\Psi'=s\Psi|_S+\widetilde{\Psi}|_S$, with $\Psi=A(s)(f)$, we can therefore conclude that $(\mathcal{A}_{\ell}^{\rm res}-s)(\Psi,\Psi')=(\widetilde{\Psi},\widetilde{\Psi}')\in (C_c^{\infty}(\Sigma)\cap  \mathbf{H}_{\sigma,\rho_0} \cap V_{m \ell})\times (C^{\infty}(S)\cap  \mathbf{H}_{\sigma,\rho_0} \cap V_{m \ell})$, with $(\Psi,\Psi')$ uniquely determined, and since $(\mathcal{A}_{\ell}^{\rm res}-s)$ commutes with $\mathcal{S}(\tau)$:
\begin{equation*}
(\mathcal{A}_{\ell}^{\rm res}-s)^{-1}:  \textnormal{Ran}(\mathcal{A}_{\ell}^{\rm res}-s)\supset X\cap V_{\ell}\to \mathcal{D}_{\sigma, \rho_0}^{ \rm res}(\mathcal{A}) 
\end{equation*}
is well-defined.

By the closedness of $\textnormal{Ran}(\hat{L}_{s,\kappa,\ell})=(\textnormal{coker}\, \hat{L}_{s,\kappa,\ell})^{\perp}$ it follows that for any $s\in \Omega_{\sigma}$, $\textnormal{Ran}(\mathcal{A}_{\ell}^{\rm res}-s)$ is closed. Since $\mathcal{A}_{\ell}^{\rm res}-s$ trivially commutes with $\mathcal{S}(\tau)$ we moreover have that $\textnormal{Ran}(\mathcal{A}_{\ell}^{\rm res}-s)$ is $\mathcal{S}(\tau)$-invariant. We therefore obtain the following identity:
\begin{equation}
\label{eq:rangeequality}
\textnormal{Ran}(\mathcal{A}_{\ell}^{\rm res}-s)=\mathbf{H}^{\rm res}_{\sigma,\rho_0}\cap V_{\ell},
\end{equation}
since $\mathbf{H}^{\rm res}_{\sigma,\rho_0}\cap V_{\ell}$ is the smallest closed, $\mathcal{S}(\tau)$-invariant subspace of $\mathbf{H}_{\sigma,\rho_0}\cap V_{\ell}$ containing $(C_c^{\infty}(\Sigma)\cap  \mathbf{H}_{\sigma,\rho_0} \cap V_{m \ell})\times (C^{\infty}(S)\cap  \mathbf{H}_{\sigma,\rho_0} \cap V_{m \ell})$. Since $X$ is dense in $\mathbf{H}^{\rm res}_{\sigma,\rho_0}$ and the inverse is well-defined on $X$, we must have that $\ker (\mathcal{A}_{\ell}^{\rm res}-s)=\{0\}$ and $\textnormal{Spect}(\mathcal{A}_{\ell}^{\rm res})\cap U_s=\emptyset$. We conclude that $s\notin \Lambda^{\rm res,\sigma,\ell}_{QNF}$. 

In order to conclude the proof, we need to show that if $s\notin \Lambda^{\rm res,\sigma,\ell}_{QNF}$, then $s$ is not a pole of $R_{\kappa,\ell}(s)$ for any choice of cut-off functions $\chi,\chi'$. By \eqref{eq:decompresolv} and \eqref{eq:relationresolvents}, we have that if $s\notin \Lambda^{\rm res,\sigma,\ell}_{QNF}$, then $\chi'\circ  R_{\kappa,\ell}(s)\circ \chi$ is a bounded operator on $H_{\sigma,\rho_0}$.  We can take $\rho_0$ to be suitably small, depending on the choice of $\chi,\chi'$ to conclude that $\chi'\circ  R_{\kappa,\ell}(s)\circ \chi$ is a bounded operator on $L^2(r_+,r_c)$, from which it follows that $s$ cannot be a pole.
\end{proof}
\appendix
\section{Basic lemmas}

\begin{lemma}
\label{lm:conveigenvectors}
Let $H$ be a complex Hilbert space and let $(A_n)_{n\in \N}$ be a sequence of compact operators on $H$ such that there exists a compact operator $A$ on $H$  with
\begin{equation*}
A_n\to A
\end{equation*}
with respect to the operator norm, as $n\to \infty$. Assume moreover that
\begin{equation*}
\ker (1-A_n)\neq \emptyset \quad \textnormal{and}\quad \dim \ker (1-A_n)<\infty\:\textnormal{for all $n\in \N$}.
\end{equation*}
Let $x\in \ker (1-A)$, with $||x||=1$. Then there exists a subsequence $x_{n_k}\in \ker (1-A_{n_k})$ such that
\begin{equation*}
||x-x_{n_k}||\to 0
\end{equation*}
as $k\to \infty$.
\end{lemma}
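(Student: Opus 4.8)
The plan is to exploit compactness of the operators together with the uniform convergence $A_n \to A$ to extract a convergent sequence of unit eigenvectors. First I would pick, for each $n$, an arbitrary unit vector $x_n \in \ker(1-A_n)$, which exists by hypothesis. Since $x_n = A_n x_n$, I can write
\begin{equation*}
x_n = A x_n + (A_n - A) x_n,
\end{equation*}
and the second term has norm at most $\|A_n - A\| \to 0$. The sequence $(x_n)$ is bounded, so by compactness of $A$ the sequence $(A x_n)$ has a convergent subsequence, say $A x_{n_k} \to y$ for some $y \in H$. Combining this with the displayed identity and the vanishing of $(A_n - A)x_{n_k}$, we conclude $x_{n_k} \to y$, and by continuity $y = A y$ and $\|y\| = 1$, so $y \in \ker(1-A)\setminus\{0\}$.

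The difficulty is that $y$ need not equal the prescribed vector $x$; the kernel of $1-A$ may be higher-dimensional. To fix this I would need a slightly more careful selection of the $x_n$. The key observation is that $A_n \to A$ in operator norm forces the spectral projections onto $\ker(1-A_n)$ (more precisely, the Riesz projections $P_n$ associated to the part of the spectrum of $A_n$ near $1$) to converge in norm to the Riesz projection $P$ associated to the eigenvalue $1$ of $A$. This is a standard consequence of the norm-resolvent convergence implied by $A_n \to A$: one writes $P_n = \frac{1}{2\pi i}\oint_\Gamma (z - A_n)^{-1}\, dz$ and $P = \frac{1}{2\pi i}\oint_\Gamma (z-A)^{-1}\, dz$ over a small contour $\Gamma$ around $1$ enclosing no other spectrum of $A$, uses that such a $\Gamma$ also avoids $\operatorname{spec}(A_n)$ except near $1$ for $n$ large, and estimates $\|(z-A_n)^{-1} - (z-A)^{-1}\| \le \|(z-A_n)^{-1}\|\,\|A_n - A\|\,\|(z-A)^{-1}\|$ uniformly on $\Gamma$.

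Given $\|P_n - P\| \to 0$, I would then set $x_n := P_n x / \|P_n x\|$; this is well-defined for $n$ large since $P_n x \to P x = x$ (as $x \in \operatorname{Ran} P$), so $\|P_n x\| \to 1 \ne 0$. Moreover $x_n \in \operatorname{Ran} P_n$, and since $\dim \ker(1-A_n) < \infty$ this range is precisely the algebraic eigenspace; but actually one wants $x_n$ to be a genuine eigenvector. For a compact operator the generalized eigenspace at $1$ is finite-dimensional, and if it happens that $1$ is a semisimple eigenvalue of each $A_n$ then $\operatorname{Ran} P_n = \ker(1-A_n)$ directly and we are done; in general one argues that $P_n x$ still lies in a space on which $A_n$ acts with a spectrum clustering at $1$, and by a further small perturbation within this finite-dimensional space one produces $x_{n_k} \in \ker(1-A_{n_k})$ with $\|x_{n_k} - P_{n_k}x/\|P_{n_k}x\|\| \to 0$. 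Then the triangle inequality gives $\|x - x_{n_k}\| \le \|x - P_{n_k}x/\|P_{n_k}x\|\| + o(1) \to 0$, which is the claim. The main obstacle is precisely this last point — passing from convergence of the Riesz projections (which is soft and automatic) to the existence of \emph{honest} eigenvectors converging to $x$ — and I expect the cleanest route is to invoke the finite-dimensionality of $\operatorname{Ran} P_n$ and a dimension-counting / continuity-of-roots argument for the characteristic polynomial of $A_n|_{\operatorname{Ran} P_n}$, in the spirit of the Hurwitz-theorem argument already used in Proposition \ref{prop:convqnf}.
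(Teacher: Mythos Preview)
Your approach via Riesz spectral projections is natural and the convergence $P_n \to P$ in norm is indeed standard, but the obstacle you flag at the end is real: $\operatorname{Ran}(P_n)$ is the \emph{algebraic} eigenspace for the cluster of eigenvalues near $1$, and there is no reason it should coincide with $\ker(1-A_n)$, nor that $P_n x$ should be close to a genuine eigenvector. Your proposed Hurwitz-type fix on the finite-dimensional space $\operatorname{Ran}(P_n)$ could presumably be made to work, but it is more machinery than the problem requires.

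The paper takes a more elementary route that sidesteps this issue entirely by never leaving the geometric eigenspaces. Rather than Riesz projections, it uses the \emph{orthogonal} projection of the given $x$ onto $V_n := \ker(1-A_n)$ directly, writing $x = x_n + y_n$ with $x_n \in V_n$ and $y_n \in V_n^{\perp}$. From $Ax = x$ and $A_n x_n = x_n$ one obtains $y_n = A y_n + (A - A_n) x_n$; compactness of $A$ together with $\|(A-A_n)x_n\| \to 0$ gives a subsequence $y_{n_k} \to y$ with $Ay = y$. The key remaining step is to show $y = 0$: since $y_{n_k} \in V_{n_k}^{\perp} = \operatorname{Ran}(1 - A_{n_k}^{\dagger})$ (closed range, by the Fredholm alternative for compact perturbations of the identity), one writes $y_{n_k} = (1 - A_{n_k}^{\dagger}) z_{n_k}$ and passes to the limit to conclude $y \in \operatorname{Ran}(1 - A^{\dagger}) = \ker(1-A)^{\perp}$, whence $y \in \ker(1-A) \cap \ker(1-A)^{\perp} = \{0\}$. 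Thus $x_{n_k} \to x$, and $x_{n_k} \in \ker(1-A_{n_k})$ by construction from the outset. The trade-off: the paper's argument requires a short adjoint/Fredholm step but produces honest eigenvectors immediately; your argument is spectrally cleaner but defers the hard part to the end.
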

\begin{proof}
Denote $V=\ker (1-A)$ and let $n\in \N_0$. Consider the eigenspaces $V_n:= \ker (1-A_n)$. Let $N_n:=\dim V_n$ and $\{e_{n,m}\}$ an orthonormal basis for $V_n$, with $1\leq m\leq N_n$. 

Let $x\in V$. Then we can decompose
\begin{equation*}
x=\sum_{m=0}^{N_n}  \la x, e_{n,m}\ra e_{n,m}+y_n,
\end{equation*}
with $y_n\in V_n^{\perp}$. Let $x_n:=\sum_{m=0}^{N_n}  \la x, e_{n,m}\ra e_{n,m}$.

We can rearrange terms and use that $x\in V$ and $x_n\in V_n$ to obtain
\begin{equation*}
\begin{split}
y_n=&\:x-x_n\\
=&\: Ax-A_n x_n\\
=&\:Ay_n +(A-A_n) x_n
\end{split}
\end{equation*}
Note that
\begin{equation*}
1=||x||^2=||x_n||^2+||y_n||^2,
\end{equation*}
so
\begin{equation*}
||(A_n-A) x_n||\leq ||A_n-A||\to 0
\end{equation*}
as $n\to \infty$. Furthermore, $\{y_n\}$ is a bounded sequence, so by compactness of $A$, there exists a subsequence $\{y_{n_k}\}$ and $y\in H$ such that
\begin{equation*}
||A y_{n_k}-y||\to 0
\end{equation*}
as $k\to \infty$. Since $y_n=Ay_n +(A-A_n) x_n$ by the above, we can conclude that $||y_{n_k}-y||\to 0$ and $Ay=y$, so $y\in V$.

By compactness of $A_n$, we can moreover conclude that $\textnormal{Ran}\, (1-A_{n_k}^{\dag})$ is closed in $H$ and
\begin{equation*}
y_{n_k}\in V_{n_k}^{\perp}=\textnormal{Ran}\, (1-A_{n_k}^{\dag})
\end{equation*}
Hence, there exists $z_{n_k}\in H$ such that 
\begin{equation*}
y_{n_k}= (1-A_{n_k}^{\dag})z_{n_k}=(1-A^{\dag})z_{n_k}+(A-A_{n_k})^{\dag}z_{n_k}.
\end{equation*}
We have that $||1-A_{n_k}^{\dag}||\geq ||1-A^{\dag}||-||A^{\dag}-A_{n_k}^{\dag}||\to  ||1-A^{\dag}||$ as $k\to \infty$. Without loss of generality, $A\neq 1$, so $ ||1-A^{\dag}||>0$ and there exists a constant $C>0$ such that
\begin{equation*}
||z_{n,k}||\leq \frac{C}{  ||1-A^{\dag}||}||y_{n_k}||\leq \frac{C}{  ||1-A^{\dag}||}.
\end{equation*}
and therefore, $(A-A_{n_k})^{\dag}z_{n,k}\to 0$ as $k\to \infty$. We therefore have that
\begin{equation*}
y= \lim_{k\to\infty}(1-A^{\dag})z_{n_k},
\end{equation*}
so by closedness of $\textnormal{Ran}\, (1-A^{\dag})$, we must have that
\begin{equation*}
y\in \textnormal{Ran}\, (1-A^{\dag})=V^{\perp}.
\end{equation*}
Since we also showed that $y\in V$, we must have that $y=0$ and
\begin{equation*}
||x-x_{n_k}||\to 0
\end{equation*}
as $k\to \infty$.
\end{proof}

\begin{lemma}[Hardy inequalities]
\label{lm:hardy}
Let $q\in \R\setminus \{-1\}$. Let $f: [a,b] \to \R$ be a $C^1$ function with $a,b\geq 0$. Then
\begin{align}
\label{eq:hardy1}
\int_{a}^{b} x^qf^2(x)\,dx\leq &\:4(q+1)^{-2} \int_{a}^{b} x^{q+2}\left|\frac{df}{dx}\right|^2\,dx+ 2b^{q+1}f^2(b),\quad \textnormal{for $q>-1$},\\
\label{eq:hardy2}
\int_{a}^{b} x^qf^2(x)\,dx\leq &\:4(q+1)^{-2} \int_{a}^{b} x^{q+2}\left|\frac{df}{dx}\right|^2\,dx+ 2a^{q+1}f^2(a),\quad \textnormal{for $q<-1$}.
\end{align}
 \end{lemma}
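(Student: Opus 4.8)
The plan is to prove both inequalities simultaneously by a single integration by parts, the two cases $q>-1$ and $q<-1$ being distinguished only by the sign of $q+1$, which decides which endpoint contributes a favourable (non-positive) boundary term. Starting from the identity
\begin{equation*}
\frac{d}{dx}\bigl(x^{q+1}f^2\bigr)=(q+1)x^qf^2+2x^{q+1}ff',
\end{equation*}
valid for $x>0$, one integrates over $[a,b]$ to obtain
\begin{equation*}
(q+1)\int_a^b x^qf^2\,dx=b^{q+1}f^2(b)-a^{q+1}f^2(a)-2\int_a^b x^{q+1}ff'\,dx.
\end{equation*}
When $q>-1$ and $a=0$ the term $a^{q+1}f^2(a)$ is read as $0$, consistent with the exponent $q+1>0$; when $q<-1$ one reads the statement with $a>0$, since otherwise the claimed bound is vacuous.

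First I would treat $q>-1$. Here $q+1>0$, the endpoint term $-a^{q+1}f^2(a)$ is $\le 0$ and is discarded, and it remains only to absorb the cross term. Applying Young's inequality in the form $2x^{q+1}|f||f'|=2\,|x^{q/2}f|\,|x^{q/2+1}f'|\le \lambda\,x^qf^2+\lambda^{-1}x^{q+2}|f'|^2$ with the choice $\lambda=\tfrac12(q+1)$ gives
\begin{equation*}
\tfrac12(q+1)\int_a^b x^qf^2\,dx\le b^{q+1}f^2(b)+\frac{2}{q+1}\int_a^b x^{q+2}|f'|^2\,dx,
\end{equation*}
and multiplying through by $2/(q+1)$ produces the bound with gradient constant $4(q+1)^{-2}$ and boundary constant $2(q+1)^{-1}$, which is $\le 2$ precisely when $q\ge 0$ — the regime in which \eqref{eq:hardy1} is invoked in the body of the paper. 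The case $q<-1$ is entirely symmetric: now $q+1<0$, so upon dividing the identity by $q+1$ one discards instead the $b$-endpoint contribution $\tfrac{1}{q+1}b^{q+1}f^2(b)\le 0$, and the same Young inequality (with $\lambda=\tfrac12|q+1|$) yields gradient constant $4(q+1)^{-2}$ and boundary constant $2|q+1|^{-1}$ attached to $a^{q+1}f^2(a)$, which is $\le 2$ when $q\le -2$, again matching the use of \eqref{eq:hardy2}.

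The argument contains no genuine obstacle; the only points requiring a little care are the interpretation of the endpoint weight $a^{q+1}$ at $a=0$ (harmless for $q>-1$, and the reason the second inequality is applied only with $a>0$), and the bookkeeping of constants — one should note that the literal factor $2$ in the boundary terms of \eqref{eq:hardy1}–\eqref{eq:hardy2} is the value of $2|q+1|^{-1}$ in the relevant ranges $|q+1|\ge 1$, or, alternatively, one may simply record the inequalities with the sharper boundary constant $2|q+1|^{-1}$, which is what this proof delivers.
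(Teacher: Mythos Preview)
Your argument is correct and is the standard integration-by-parts proof of the Hardy inequality; the paper itself does not give a proof but simply refers to Lemma~2.2 of \cite{aag18}, where the same computation is carried out. Your remark about the boundary constant is accurate: the argument naturally produces $2|q+1|^{-1}$ rather than the stated $2$, and these agree only for $|q+1|\ge 1$; since the applications in the paper (e.g.\ in the proof of Proposition~\ref{prop:physspacegevrey}) use $q\ge 0$, this discrepancy is harmless, and your noting it is a useful clarification rather than a defect.
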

\begin{proof}
See for example the proof of Lemma 2.2 in \cite{aag18}.
\end{proof}
\section{Red-shift estimates in Reissner--Nordstr\"om--de Sitter}
\label{sec:red-shift}
In this section, we include various estimates which are \emph{not} uniform in the surface gravities $\kappa_+$ and $\kappa_c$. These are small variations of the estimates derived in \cite{warn15} that fundamentally rely on the red-shift effect along both the event and cosmological horizon, and that we include for completeness.

Consider the operator
\begin{equation*}
\hat{{L}}_{s,\kappa}:  \mathcal{D}^k(\hat{{L}}_{s,\kappa})\to \widetilde{H}^k,
\end{equation*}
with $ \mathcal{D}^k(\hat{{L}}_{s,\kappa})$ the closure of $C^{\infty}([r_+,r_c]\times \s^2)$ under the graph norm $||f||_{\widetilde{H}^k}+||\hat{L}_{s,\kappa}(f)||_{\widetilde{H}^k}$. By construction, $( \mathcal{D}^k(\hat{{L}}_{s,\kappa}),\hat{{L}}_{s,\kappa})$ is a closed, densely defined operator.

\begin{proposition}
\label{prop:mainred-shiftest}
Let  $\kappa_+=\kappa_c=\kappa$, $s\in \C$ and $k\in \N_0$ such that $\re(s)>-\kappa(\frac{1}{2}+k)$. For suitably small $(\rho_0)_+>0$, $(\rho_0)_c>0$ and $\lambda>0$ suitably large, there exists a constant $C=C(s,\lambda,\kappa_+,\kappa_c)>0$ such that
\begin{equation*}
\begin{split}
||{\hat{\psi}}&||_{H^2([R_0^+,R_0^c]\times \s^2) }+||{\hat{\psi}}||_{H^{k+1}([r_+, R_0^+] \times \s^2)}+||{\hat{\psi}}||_{H^k([R_0^c,r_c]\times \s^2)}\\
\leq&\: C||(\hat{{L}}_{s,\kappa}-\lambda(\lambda+1))({\hat{\psi}})||_{L^2([(R_0^+,R_0^c]\times \s^2) }+C||(\hat{{L}}_{s,\kappa}-\lambda(\lambda+1))({\hat{\psi}})||_{H^k([r_+, R_0^+] \times \s^2)}\\
&+C||(\hat{{L}}_{s,\kappa}-\lambda(\lambda+1))({\hat{\psi}})||_{H^k([R_0^c,r_c]\times \s^2)}.
\end{split}
\end{equation*}
\end{proposition}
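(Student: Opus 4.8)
The plan is to reduce the claimed inequality to three regional estimates — an interior elliptic estimate on the spacelike part $\{R_0^+\le r\le R_0^c\}$ and two commuted red-shift estimates, in the coordinate $\rho_+$ near $\mathcal{H}^+$ and in $\rho_c$ near $\mathcal{C}^+$ — and then patch them together with a partition of unity adapted to the regions, absorbing the overlap and boundary terms by taking $\lambda$ sufficiently large. Throughout one works mode-by-mode, writing $\hat\psi=\sum_\ell \hat\psi_\ell$ with $\mathring{\slashed{\Delta}}\hat\psi_\ell=-\ell(\ell+1)\hat\psi_\ell$, and denotes $f:=(\hat{{L}}_{s,\kappa}-\lambda(\lambda+1))(\hat\psi)$.

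First, on $\{R_0^+\le r\le R_0^c\}$ the coefficient $D$ is bounded below by a positive constant and, using \eqref{eq:fixedfreqpsis}, the operator is uniformly elliptic in $r$ with a favourably signed zeroth-order term coming from $-\ell(\ell+1)-2\mathfrak{l}^{-2}-\lambda(\lambda+1)$. Multiplying $f=r^{-2}\partial_r(Dr^2\partial_r\hat\psi)+(\text{first order in }\partial_r)-(\ell(\ell+1)+2\mathfrak{l}^{-2}+\lambda(\lambda+1))\hat\psi$ by $\overline{\hat\psi}$, integrating, and applying Young's inequality gives, for $\lambda$ large, an $H^1([R_0^+,R_0^c]\times\s^2)$ bound; then writing $\partial_r(Dr^2\partial_r\hat\psi)=r^2f+(\text{lower order})$, taking the square norm and integrating exactly as in the proof of Proposition \ref{prop:lowfreqellipticest} upgrades this to the required $H^2$ bound, with the right-hand side controlled by $\|f\|_{L^2}$ plus trace terms at $r=R_0^\pm$ that will be handled in the patching step.

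Second, near $\mathcal{H}^+$ I would work in $(v,\rho_+,\theta,\varphi)$ coordinates, where on a fixed $\ell$-mode $\hat{{L}}_{s,\ell,\kappa}$ has the transport structure $2s\,\partial_{\rho_+}(\cdot)+\partial_{\rho_+}(Dr^{-2}\partial_{\rho_+}\,\cdot)+(\text{zeroth order})$ with $Dr^{-2}=2\kappa_+\rho_+ +O(\rho_+^2)$ by \eqref{eq:Devent}. Commuting the equation with $\partial_{\rho_+}^{j}$ for $0\le j\le k$ produces, by the computation underlying Proposition \ref{prop:naivehoeq}, an identity whose zeroth-order coefficient in $\partial_{\rho_+}^{j+1}\hat\psi$ is $2\re(s)+(2j+1)\kappa_+ +O(\rho_+)$; this is the (enhanced) red-shift term, and it is strictly positive on $[0,(\rho_0)_+]$ precisely because $\re(s)>-(\tfrac12+k)\kappa$ and $(\rho_0)_+$ is small. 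Multiplying the $j$-th commuted equation by $\overline{\partial_{\rho_+}^{j+1}\hat\psi}$, integrating over $[0,(\rho_0)_+]$, summing over $0\le j\le k$ with $j$-dependent weights so that the lower-order commutator terms get absorbed into higher-order ones, and invoking the Hardy inequality \eqref{eq:hardy1} to control the undifferentiated quantities, yields an $H^{k+1}([r_+,R_0^+]\times\s^2)$ bound for $\hat\psi$ in terms of $\|f\|_{H^k}$ and boundary terms at $\rho_+=(\rho_0)_+$. The identical argument with $\rho_c$, $\kappa_c$ and $Dr^{-2}=2\kappa_c\rho_c+O(\rho_c^2)$ from \eqref{eq:Dcosmo} gives the $H^k([R_0^c,r_c]\times\s^2)$ bound near $\mathcal{C}^+$, the loss of one derivative relative to $\mathcal{H}^+$ being immaterial for the statement.

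Finally, gluing the three estimates with cut-offs, the commutator terms generated by the cut-offs are supported in overlap regions where $D$ is bounded below, hence absorbed by the interior $H^2$ bound; the boundary terms at $\rho_\pm=(\rho_0)_\pm$ and at $r=R_0^\pm$ in the near-horizon estimates are traces controlled by the interior norm, and the whole system closes after choosing $\lambda$ large and exploiting the coercivity of $-\lambda(\lambda+1)$. The main obstacle is the near-horizon commuted estimate: one must track the sign of $2\re(s)+(2j+1)\kappa$ through $j=0,\dots,k$ commutations, verify that the threshold $\re(s)>-(\tfrac12+k)\kappa$ is exactly what forces positivity, and arrange the $j$-dependent weights so that \emph{all} error terms — those proportional to $\kappa$ as well as those from the $O(\rho)$-corrections to $Dr^{-2}$ and the curvature-type zeroth-order terms — are genuinely absorbable; this is where the argument must be adapted from the red-shift and enhanced-red-shift estimates of \cite{warn15}, as indicated in the statement.
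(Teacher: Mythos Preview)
Your proposal is correct and follows essentially the same approach as the paper: commute with $\partial_\rho^j$ near each horizon to extract the enhanced red-shift coefficient $2\re(s)+(2j+1)\kappa$ (which is exactly what forces the threshold $\re(s)>-(\tfrac12+k)\kappa$), use the Hardy inequality \eqref{eq:hardy1} to control lower-order terms, and couple to an interior elliptic estimate. The only cosmetic difference is that the paper handles the coupling by feeding the $\rho=\rho_0$ boundary terms directly into the estimate \eqref{eq:lowfreqellipticest} of Proposition~\ref{prop:lowfreqellipticest} and then takes $\ell$ (equivalently $\lambda$) large, whereas you phrase the gluing via cut-offs and a partition of unity; both are standard and equivalent here.
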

\begin{proof}
We multiply equation \eqref{eq:maineqhatphin} with inhomogeneity $\partial_{\rho}^n(r \tilde{f})$ by $\partial_{\rho}\hphi_{(n)}=(\overline{\hphi_{(n+1)}}-s\hat{h}\overline{\hphi}_{(n)})$ to obtain
\begin{equation*}
\begin{split}
\re \left( \partial_{\rho}\hphi_{(n)}\partial_{\rho}^n(r\tilde{f})\right)=&\:2(\re(s)+(n+1)\kappa+O(\rho))|\hphi_{(n+1)}|^2+(\kappa \rho +O(\rho^2))\partial_{\rho}(|\hphi_{(n+1)}|^2) +\re(O(1) \hphi_{(n)}  \hphi_{(n+1)} )\\
&\frac{1}{2}\ell(\ell+1)\partial_{\rho}(|\phi_{(n)}|^2)+\re(O(1) \hphi_{(n-1)}  \hphi_{(n+1)} )+\re(O(1) \hphi_{(n-2)}  \hphi_{(n+1)} ),
\end{split}
\end{equation*}
where the big O notation indicates the behaviour of factors as $\rho\to 0$ and we do not keep track of dependence on $n$ and $s$. We integrate the above equation on $[0,\rho_0]$ to obtain the following estimate for $\rho_0$ suitably small:
\begin{equation*}
\begin{split}
\int_{0}^{\rho_0}& (\re(2s)+(n+1)\kappa)|\hphi_{(n+1)}|^2\,d\rho+ \kappa \rho_0|\hphi_{(n+1)}|^2(\rho_0)+\ell(\ell+1)|\hphi_{(n)}|^2(0)\\\
\leq&\: C\int_{0}^{\rho_0} (|\hphi_{(n)}|+|\hphi_{(n-1)}|+|\hphi_{(n-2)}|)| \hphi_{(n+1)}|\,d\rho\\
&+C\ell(\ell+1)|\hphi_{(n)}|^2(\rho_0)+C\kappa \rho_0|\hphi_{(n)}|^2(\rho_0) +C\int_{0}^{\rho_0} |\partial_{\rho}^n(r\tilde{f})| | \hphi_{(n+1)}|\,d\rho.
\end{split}
\end{equation*}
We apply a Hardy inequality, see Lemma \ref{lm:hardy}, to estimate further:
\begin{equation*}
\begin{split}
\int_{0}^{\rho_0} (\re(2s)+(n+1)\kappa-\epsilon)|\hphi_{(n+1)}|^2\,d\rho\leq&\: C\sum_{k=0}^2 |\hat{\phi}_{(n-k)}|^2(\rho_0)+C\ell(\ell+1)|\hphi_{(n)}|^2(\rho_0)\\
&+C\epsilon^{-1}\int_{0}^{\rho_0} |\partial_{\rho}^n(r\tilde{f})|^2\,d\rho.
\end{split}
\end{equation*}
We can sum over $n$ to obtain for $\re{s}>-(n+\frac{1}{2})\kappa$:
\begin{equation*}
\begin{split}
\sum_{k=0}^n\sum_{k_1+k_2=k}\int_{0}^{\rho_0}(\ell(\ell+1))^{k_2}( |\hphi_{(k_1+1)}|^2+ |\hphi_{(k_1)}|^2)\,d\rho\leq&\: C\ell(\ell+1))^{n+1}|\hat{\phi}_{(0)}|^2(\rho_0)\\
&+C\epsilon^{-1}\sum_{k=0}^n\sum_{k_1+k_2=k}\int_{0}^{\rho_0}(\ell(\ell+1))^{k_2} |\partial_{\rho}^{k_1}(r\tilde{f})|^2\,d\rho,
\end{split}
\end{equation*}
with $C$ a constant independent of $\ell$. We can estimate the boundary term at $\rho=\rho_0$ using \eqref{eq:lowfreqellipticest} and by taking $\ell$ suitably large, we are left with
\begin{equation*}
\begin{split}
\int_{R_0^+}^{R_0^c}& |\partial_r^2{\hat{\psi}}|^2+|\partial_r{\hat{\psi}}|^2+(1+\ell^2(\ell+1)^2)|{\hat{\psi}}|^2\,dr+\sum_{\star\in\{+,c\}} \sum_{k=0}^{n+1}\sum_{k_1+k_2=k}\int_{0}^{(\rho_0)_{\star}}(\ell(\ell+1))^{k_2} |\hphi_{(k_1)}|^2\,d\rho_{\star}\\
\leq&\: C\int_{r_+}^{r_c}|\tilde{f}|^2\,dr+\sum_{k=0}^n\sum_{k_1+k_2=k}\int_{0}^{\rho_0}(\ell(\ell+1))^{k_2} |\partial_{\rho}^{k_1}(r\tilde{f})|^2\,d\rho.
\end{split}
\end{equation*}
\end{proof}

From Proposition \ref{prop:mainred-shiftest}, it follows immediately that:
\begin{corollary}
\label{cor:red-shift}
For suitably large $\lambda>0$, $\hat{{L}}_{s,\kappa}-\lambda(\lambda+1): \mathcal{D}^k(\hat{{L}}_{s,\kappa})\to \widetilde{H}^k$ is an injective linear operator with a closed range and $\mathcal{D}^k(\hat{{L}}_{s,\kappa})\subseteq \widetilde{H}^{k+1}_2$.
\end{corollary}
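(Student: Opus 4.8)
The plan is to read off all three claims directly from the a priori estimate of Proposition \ref{prop:mainred-shiftest}, using in addition that $\hat{{L}}_{s,\kappa}$ — and hence $A_\lambda:=\hat{{L}}_{s,\kappa}-\lambda(\lambda+1)$ — is a closed, densely defined operator on $\mathcal{D}^k(\hat{{L}}_{s,\kappa})$ equipped with the graph norm $\|\cdot\|_{\widetilde{H}^k}+\|\hat{{L}}_{s,\kappa}(\cdot)\|_{\widetilde{H}^k}$, as recorded just before that proposition. Fix $\lambda>0$ large and $(\rho_0)_+,(\rho_0)_c>0$ small, so that Proposition \ref{prop:mainred-shiftest} applies (in particular $\re s>-\kappa(\frac{1}{2}+k)$), and write $\|{\hat{\psi}}\|_\star$ for the left-hand side of its estimate. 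First I would note that, since $\kappa_+=\kappa_c$, the near-horizon red-shift argument in the proof of Proposition \ref{prop:mainred-shiftest} can be run at the cosmological horizon exactly as at the event horizon, which upgrades the $H^k$-control near $r_c$ to $H^{k+1}$; thus $\|\cdot\|_\star$ is (equivalent to) the norm of $\widetilde{H}^{k+1}_2$, and Proposition \ref{prop:mainred-shiftest} reads $\|{\hat{\psi}}\|_\star\le C\|A_\lambda{\hat{\psi}}\|_{\widetilde{H}^k}$ for all ${\hat{\psi}}\in C^\infty([r_+,r_c]\times\s^2)$. The next step is to extend this bound to every ${\hat{\psi}}\in\mathcal{D}^k(\hat{{L}}_{s,\kappa})$: approximate such a ${\hat{\psi}}$ in the graph norm by smooth ${\hat{\psi}}_n$, apply the bound to ${\hat{\psi}}_n$, and pass to the limit, using that $A_\lambda{\hat{\psi}}_n\to A_\lambda{\hat{\psi}}$ in $\widetilde{H}^k$ together with lower semicontinuity of $\|\cdot\|_\star$ with respect to $\widetilde{H}^k$-convergence. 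Because $\hat{{L}}_{s,\kappa}{\hat{\psi}}\in\widetilde{H}^k$ and ${\hat{\psi}}\in\widetilde{H}^k$ whenever ${\hat{\psi}}\in\mathcal{D}^k(\hat{{L}}_{s,\kappa})$, the right-hand side is then finite, so this single step proves at once the inclusion $\mathcal{D}^k(\hat{{L}}_{s,\kappa})\subseteq\widetilde{H}^{k+1}_2$ and the coercive bound
\begin{equation*}
\|{\hat{\psi}}\|_\star\le C\,\|A_\lambda{\hat{\psi}}\|_{\widetilde{H}^k}\qquad\text{for all }{\hat{\psi}}\in\mathcal{D}^k(\hat{{L}}_{s,\kappa}).
\end{equation*}

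Injectivity is then immediate: if $A_\lambda{\hat{\psi}}=0$, the bound forces $\|{\hat{\psi}}\|_\star=0$, hence ${\hat{\psi}}=0$. For closed range, I would take a sequence ${\hat{\psi}}_n\in\mathcal{D}^k(\hat{{L}}_{s,\kappa})$ with $A_\lambda{\hat{\psi}}_n\to g$ in $\widetilde{H}^k$; applying the coercive bound to ${\hat{\psi}}_n-{\hat{\psi}}_m$ shows $({\hat{\psi}}_n)$ is Cauchy in $\|\cdot\|_\star$, hence converges in $\widetilde{H}^{k+1}_2$, and a fortiori in $\widetilde{H}^k$, to some ${\hat{\psi}}$. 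Then $\hat{{L}}_{s,\kappa}{\hat{\psi}}_n=A_\lambda{\hat{\psi}}_n+\lambda(\lambda+1){\hat{\psi}}_n\to g+\lambda(\lambda+1){\hat{\psi}}$ in $\widetilde{H}^k$, so closedness of $\hat{{L}}_{s,\kappa}$ yields ${\hat{\psi}}\in\mathcal{D}^k(\hat{{L}}_{s,\kappa})$ with $A_\lambda{\hat{\psi}}=g$; hence $g\in\textnormal{Ran}(A_\lambda)$ and the range is closed.

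I do not expect a genuine obstacle here: the argument is purely functional-analytic once Proposition \ref{prop:mainred-shiftest} is available. The two points that require a little care are (i) the density/limiting step extending the estimate from smooth functions to all of $\mathcal{D}^k(\hat{{L}}_{s,\kappa})$, where the closedness of $\hat{{L}}_{s,\kappa}$ must be invoked correctly (both there and in the closed-range argument), and (ii) the identification of the left-hand side of the estimate in Proposition \ref{prop:mainred-shiftest} with the full $\widetilde{H}^{k+1}_2$-norm, which needs the symmetric $H^{k+1}$-gain near $r_c$ and hence a short look back into the proof of that proposition rather than merely quoting its statement.
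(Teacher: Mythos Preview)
Your proposal is correct and follows exactly the route the paper intends: the corollary is stated as an immediate consequence of Proposition~\ref{prop:mainred-shiftest}, and you have written out the standard functional-analytic argument (coercive estimate $\Rightarrow$ injectivity, closed range, and the regularity inclusion) that the paper leaves implicit. Your observation in point~(ii) is also well taken: the stated left-hand side of Proposition~\ref{prop:mainred-shiftest} carries only $H^k$ near $r_c$, whereas $\widetilde{H}^{k+1}_2$ requires $H^{k+1}$ there, so one does need the symmetric $\kappa_c>0$ red-shift gain from the proof rather than the statement as written.
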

We now consider the adjoint operator of $\hat{{L}}_{s,\kappa}$.
\begin{lemma}
The linear operator 
\begin{equation*}
\hat{{L}}_{s,\kappa}^{\dag}:=\hat{{L}}_{-\overline{s},\kappa}: \mathcal{D}^k(\hat{{L}}^{\dag}_{s,\kappa})\to \widetilde{H}^k,
\end{equation*}
 with $\mathcal{D}^k(\hat{{L}}^{\dag}_{s,\kappa})$ the closure of 
\begin{equation*}
\{f\in C^{\infty}([r_+,r_c]\times \s^2)\:\big|\: \partial_r^kf(r_+)=\partial_r^kf(r_c)=0\:\textnormal{for all $k\in \N_0$}\} 
\end{equation*}
under the graph norm $||f||_{\widetilde{H}^k}+||{L}_{s,\kappa}(f)||_{\widetilde{H}^k}$, is the adjoint of $\hat{{L}}_{s,\kappa}$ with respect to $\widetilde{H}^k$.
\end{lemma}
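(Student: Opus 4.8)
The plan is to establish the defining property of the Hilbert-space adjoint by means of a Green's (Lagrange) identity together with a careful analysis of the boundary contributions at the horizons, proving the two inclusions $\hat{L}_{-\overline{s},\kappa}|_{\mathcal{D}^k(\hat{L}^{\dagger}_{s,\kappa})}\subseteq\hat{L}_{s,\kappa}^{*}$ and $\mathcal{D}(\hat{L}_{s,\kappa}^{*})\subseteq\mathcal{D}^k(\hat{L}^{\dagger}_{s,\kappa})$. The structural input for the first inclusion is that $\hat{L}_{s,\kappa}$ is obtained by inserting the mode ansatz $\psi=e^{s\tau}\hat{\psi}$ into $\square_{g_{\Lambda}}-\tfrac{2}{3}\Lambda$, which is formally self-adjoint with respect to the spacetime volume form; the combined operation of time reversal $\tau\mapsto-\tau$ and complex conjugation sends $s\mapsto-\overline{s}$, so that (for the normalization fixed in \eqref{eq:fixedfreqpsis}) the formal adjoint of $\hat{L}_{s,\kappa}$ is $\hat{L}_{-\overline{s},\kappa}$. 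Concretely one checks from \eqref{eq:fixedfreqpsis} that the second-order term is formally self-adjoint and that the first-order term $2(1-h_{r_+}D)s\,r^{-1}\partial_r(r\,\cdot)$ together with the zeroth-order term $-(h_+D)'s$ integrates by parts into the corresponding expressions with $s$ replaced by $-\overline{s}$, modulo boundary terms at $r=r_+$ and $r=r_c$. Summing the three pieces of $\langle\cdot,\cdot\rangle_{\widetilde{H}^k}$ on $(r_+,R_0^+)$, $(R_0^+,R_0^c)$, $(R_0^c,r_c)$ and integrating by parts one obtains
\[
\langle \hat{L}_{s,\kappa}u,v\rangle_{\widetilde{H}^k}-\langle u,\hat{L}_{-\overline{s},\kappa}v\rangle_{\widetilde{H}^k}=\mathcal{B}_{r_+}(u,v)+\mathcal{B}_{r_c}(u,v),
\]
where the contributions at the interior breakpoints $R_0^+$, $R_0^c$ cancel (the norm carries no derivatives across them and the Sobolev traces of $u$, $\hat{L}_{s,\kappa}u$ and their derivatives agree from both sides), and $\mathcal{B}_{r_\pm}$ is bilinear in the boundary jets of $u$ and $v$.

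For the first inclusion it then suffices to note that $\mathcal{D}^k(\hat{L}_{s,\kappa})\subseteq\widetilde{H}^{k+1}_2$ by Corollary \ref{cor:red-shift}, so the jets of $u$ entering $\mathcal{B}_{r_\pm}$ are well defined, while for $v\in\mathcal{D}^k(\hat{L}^{\dagger}_{s,\kappa})$ all $r$-derivatives of $v$ vanish at $r_+$ and $r_c$ — a property preserved under the graph-norm closure since the relevant trace maps remain continuous on the completion. Hence $\mathcal{B}_{r_\pm}\equiv 0$ and the adjoint relation holds. For the converse, take $v\in\mathcal{D}(\hat{L}_{s,\kappa}^{*})$ with $\hat{L}_{s,\kappa}^{*}v=w\in\widetilde{H}^k$; testing against $u\in C_c^{\infty}((r_+,r_c)\times\s^2)\subseteq\mathcal{D}^k(\hat{L}_{s,\kappa})$ gives $\hat{L}_{-\overline{s},\kappa}v=w$ in the interior, which interior elliptic and angular regularity upgrades there. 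Near the horizons, since $\re(-\overline{s})=\re s$ the operator $\hat{L}_{-\overline{s},\kappa}$ is again of red-shift type, so the regularity statement underlying Proposition \ref{prop:mainred-shiftest} (obtained by the usual difference-quotient argument) together with $\hat{L}_{-\overline{s},\kappa}v=w$ places $v$ in $\widetilde{H}^{k+1}_2$ near $r_\pm$. The identity above then forces $\mathcal{B}_{r_\pm}(u,v)=0$ for every $u\in\mathcal{D}^k(\hat{L}_{s,\kappa})$; since this domain contains functions with arbitrarily prescribed boundary jets, each jet component of $v$ occurring in $\mathcal{B}_{r_\pm}$ must vanish, and a bootstrap in the order of differentiation, using $\hat{L}_{-\overline{s},\kappa}v=w$ to trade off transversal derivatives, forces all $r$-derivatives of $v$ to vanish at $r_\pm$; graph-norm density of smooth functions with this property gives $v\in\mathcal{D}^k(\hat{L}^{\dagger}_{s,\kappa})$.

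The step I expect to be the main obstacle is the precise bookkeeping of the boundary forms $\mathcal{B}_{r_\pm}$: because $\langle\cdot,\cdot\rangle_{\widetilde{H}^k}$ carries $k$ transversal derivatives near the horizons and the principal part $\partial_r(Dr^2\partial_r\,\cdot)$ degenerates there (simply, as $\kappa>0$), the relevant Green's identity is not the naive second-order Lagrange identity, and one must work with the specific near-horizon structure of the $\widetilde{H}^k$ norm, following \cite{warn15}, to verify that all derivatives are transferred correctly and to identify exactly which components of the jet of $v$ the adjoint relation annihilates — in particular confirming that the degeneracy of the principal symbol does not weaken the conclusion below the full vanishing-to-all-orders condition used to define $\mathcal{D}^k(\hat{L}^{\dagger}_{s,\kappa})$. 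This is precisely where the non-degenerate first-order red-shift term $2(1-h_{r_+}D)s\,r^{-1}\partial_r(r\,\cdot)$ is indispensable, as it produces the undifferentiated boundary term that forces $v(r_\pm)=0$ and thereby seeds the bootstrap.
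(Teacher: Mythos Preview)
The paper does not give its own proof of this lemma; it simply writes ``See Lemma 4.5 of \cite{warn15}.'' Your proposal is therefore not a comparison against the paper's argument but rather an attempt to reconstruct the proof that the cited reference provides. The overall strategy you describe --- a Green's/Lagrange identity for $\hat{L}_{s,\kappa}$ on $\widetilde{H}^k$, identification of the formal adjoint as $\hat{L}_{-\overline{s},\kappa}$, and a two-inclusion argument for the domain via analysis of the boundary bilinear forms $\mathcal{B}_{r_\pm}$ --- is indeed the standard route and is essentially what \cite{warn15} does.

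There is, however, a genuine error in your converse inclusion. You write ``since $\re(-\overline{s})=\re s$ the operator $\hat{L}_{-\overline{s},\kappa}$ is again of red-shift type,'' but in fact $\re(-\overline{s})=-\re s$. This is not a typo one can ignore: the whole point of the adjoint domain carrying the infinite-order vanishing condition at $r_\pm$ is precisely that $\hat{L}_{-\overline{s},\kappa}$ is \emph{not} of red-shift type at the horizons when $\re s>0$ --- the first-order term now has the wrong sign, and without boundary conditions one could not close the estimates of Proposition~\ref{prop:mainred-shiftest}. Consequently your appeal to ``the regularity statement underlying Proposition~\ref{prop:mainred-shiftest}'' to place $v$ in $\widetilde{H}^{k+1}_2$ near the horizons does not go through as stated. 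The correct argument (as in \cite{warn15}) proceeds differently: one first extracts from the adjoint relation and the degeneracy of the principal part that $v$ must vanish at the boundary, and then bootstraps the vanishing order using the equation; the regularity near the horizon is obtained \emph{in conjunction with} the boundary vanishing, not prior to it via a red-shift mechanism. Your final paragraph in fact anticipates that the boundary bookkeeping is the crux, and that is where the repair must happen.
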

\begin{proof}
See Lemma 4.5 of \cite{warn15}.
\end{proof}

\begin{proposition}
\label{prop:adjred-shiftest}
Let  $\kappa_+=\kappa_c=\kappa$ and $\re(s)>\frac{1}{2}\kappa $. For suitably small $(\rho_0)_+>0$, $(\rho_0)_c>0$ and $\lambda>0$ suitably large, there exists a constant $C=C(s,\lambda,\kappa)>0$ such that
\begin{equation}
\label{eq:adjest1}
||{\hat{\psi}}||_{\widetilde{H}^1_2}\leq C  ||(\hat{{L}}_{s,\kappa}^{\dag}-\lambda(\lambda+1))({\hat{\psi}})||_{\widetilde{H}^0}.
\end{equation}
Furthermore, for $k\in \N_0$ and $\re(s)<(\frac{1}{2}+k)\kappa$, there exist $(\rho_0)_+>0$, $(\rho_0)_c>0$ suitably small, $\lambda>0$ suitably large and a constant $C=C(s,\lambda,\kappa,k)>0$ such that
\begin{equation}
\label{eq:adjest2}
||{\hat{\psi}}||_{\widetilde{H}^{k+1}_2}\leq C  ||(\hat{{L}}_{s,\kappa}^{\dag}-\lambda(\lambda+1))({\hat{\psi}})||_{\widetilde{H}^k}.
\end{equation}
\end{proposition}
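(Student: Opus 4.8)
The plan is to run the multiplier argument of Proposition \ref{prop:mainred-shiftest}, modified in two ways to account for the structure of the adjoint. First, $\hat{{L}}^{\dag}_{s,\kappa} = \hat{{L}}_{-\overline{s},\kappa}$, so the relevant spectral parameter is $-\overline{s}$ and the sign of $\re s$ is flipped in every coefficient. Second, elements of $\mathcal{D}^k(\hat{{L}}^{\dag}_{s,\kappa})$ vanish to infinite order at $r=r_+$ and $r=r_c$, so all boundary terms produced at $\rho=0$ vanish identically and no red-shift argument is needed to control them. As in Section \ref{sec:eq} I would set $\tilde f = (\hat{{L}}^{\dag}_{s,\kappa} - \lambda(\lambda+1))(\hat\psi)$, $\phi = r\hat\psi$, and, near each horizon, $\phi_{(n)} = \partial_\rho^n\phi$ in $(v,\rho_+)$ or $(u,\rho_c)$ coordinates, so that $\phi_{(n)}$ solves the inhomogeneous version of \eqref{eq:maineqhatphin} with $s$ replaced by $-\overline{s}$ and right-hand side $\partial_\rho^n(r\tilde f)$. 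Multiplying that equation by $c\,\overline{\phi_{(n+1)}}$ with $c=\pm1$, taking real parts, integrating over $[0,\rho_0]$ with $\rho_0$ small, and integrating by parts in the term carrying the factor $2\kappa\rho+O(\rho^2)$ in front of $\phi_{(n+2)}$, produces a bulk coefficient $c\bigl(-2\re(s)+(2n+1)\kappa+O(\rho_0)\bigr)$ in front of $|\phi_{(n+1)}|^2$, a boundary term at $\rho=\rho_0$ proportional to $c\,\kappa\rho_0|\phi_{(n+1)}|^2(\rho_0)$, a term coming from the inhomogeneity, and lower-order terms $\phi_{(n)},\phi_{(n-1)},\phi_{(n-2)}$.

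For \eqref{eq:adjest1} I would take $n=0$ and $c=-1$, so the bulk coefficient is $2\re(s)-\kappa+O(\rho_0)$, which is positive exactly when $\re s>\tfrac12\kappa$ and $\rho_0$ is small enough. This yields $H^1$ control of $\phi$ on $\{\rho\leq\rho_0\}$ (both near $r_+$ and near $r_c$) by $\|\tilde f\|_{\widetilde H^0}$ plus a boundary term at $\rho=\rho_0$; since $\rho=\rho_0$ sits at a radius strictly between the horizons, this boundary term together with the full $H^2$ norm on $[R_0^+,R_0^c]\times\s^2$ is recovered from the degenerate elliptic estimate \eqref{eq:lowfreqellipticest} (Proposition \ref{prop:lowfreqellipticest}) after taking $\lambda$ large, precisely as at the end of the proof of Proposition \ref{prop:mainred-shiftest}. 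No commutation is used, so only the $\widetilde H^{1}_2$ norm appears on the left.

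For \eqref{eq:adjest2} I would instead commute with $\partial_\rho^n$ for $0\leq n\leq k$ and take $c=+1$; the bulk coefficient $-2\re(s)+(2n+1)\kappa+O(\rho_0)$ is then positive for every $n\leq k$ and $\rho_0$ small exactly when $\re s<(k+\tfrac12)\kappa$, and the $\rho_0$-boundary term now carries the favorable sign. Summing the order-$n$ inequalities against the weights $(\ell(\ell+1))^{-n}$ used in Proposition \ref{prop:mainred-shiftest}, absorbing the lower-order terms $\phi_{(n-j)}$, $j=0,1,2$, via the Hardy inequalities of Lemma \ref{lm:hardy}, and absorbing the residual $\rho_0$-boundary contributions and the interior $H^2$ norm through \eqref{eq:lowfreqellipticest} with $\lambda$ large, would give control of $\partial_\rho^n\phi$ on $\{\rho\leq\rho_0\}$ for all $n\leq k+1$, hence the $\widetilde H^{k+1}_2$ norm. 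Because $k$ is fixed and finite, the summation over $n$ is finite and no top-order obstruction of the kind appearing in the Gevrey estimates arises: the second-derivative term $\phi_{(k+2)}$ enters only through the total derivative $\partial_\rho(|\phi_{(k+1)}|^2)$, which is disposed of by the integration by parts.

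The point I expect to be most delicate is the bookkeeping of the red-shift thresholds: the opposite choices $c=-1$ and $c=+1$ in the two estimates, together with the exact surface-gravity contribution $(2n+1)\kappa$ to the bulk coefficient after the integration by parts, are what produce the sharp conditions $\re s>\tfrac12\kappa$ and $\re s<(k+\tfrac12)\kappa$; keeping these constants and the $O(\rho_0)$ and $\lambda$-dependent error terms on the correct side is the only step requiring genuine care. Everything else is a routine transcription of Proposition \ref{prop:mainred-shiftest} and of the corresponding estimates in \cite{warn15}.
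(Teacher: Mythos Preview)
Your overall strategy is correct and matches the paper's own proof: for \eqref{eq:adjest1} one runs the $n=0$ multiplier identity of Proposition~\ref{prop:mainred-shiftest} with $s$ replaced by $-\overline{s}$ and the sign of the multiplier flipped; for \eqref{eq:adjest2} one repeats the full commuted argument of Proposition~\ref{prop:mainred-shiftest} with $s\to-\overline{s}$ and the original sign. Your observation that the infinite-order vanishing at $\rho=0$ kills the horizon boundary terms is a useful addition the paper leaves implicit.

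There is, however, one genuine slip in your threshold analysis for \eqref{eq:adjest2}. You claim that the bulk coefficient $-2\re(s)+(2n+1)\kappa$ is positive for \emph{every} $0\le n\le k$ exactly when $\re s<(k+\tfrac12)\kappa$. This is false: since the coefficient is increasing in $n$, the condition $\re s<(k+\tfrac12)\kappa$ only guarantees positivity at the \emph{top} order $n=k$; for $\tfrac12\kappa<\re s<(k+\tfrac12)\kappa$ the $n=0$ coefficient is negative. The same phenomenon is already present in Proposition~\ref{prop:mainred-shiftest}, whose hypothesis $\re s>-(k+\tfrac12)\kappa$ is likewise set by the top order alone. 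The resolution, implicit in that proof and available to you here, is that you do not need positivity at every order: the top-order estimate controls $\int_0^{\rho_0}|\hat\phi_{(k+1)}|^2$, and all lower-order derivatives $\hat\phi_{(j)}$ with $j\le k$ are then recovered via the Hardy inequality of Lemma~\ref{lm:hardy} (applied with $q=0$, producing only $\rho=\rho_0$ boundary terms), together with the interior elliptic estimate \eqref{eq:lowfreqellipticest} for $\lambda$ large. You already invoke Hardy for the cross terms $\phi_{(n-j)}\overline{\phi_{(n+1)}}$; it is the same mechanism that handles the lower-order bulk terms when their coefficients fail to be positive. With this correction your argument goes through.
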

\begin{proof}
In order to prove \eqref{eq:adjest1}, we proceed as in the proof of Proposition \ref{prop:mainred-shiftest} for $n=0$, with $s$ replaced by $-\overline{s}$ but we multiply the equation with $\partial_{\rho}\hat{\phi}$ rather than $-\partial_{\rho}\hat{\phi}$.

We obtain \eqref{eq:adjest2} by repeating exactly the proof of Proposition \ref{prop:mainred-shiftest} with $s$ replaced by $-\overline{s}$.
\end{proof}

\begin{proposition}
\label{prop:poskappainvert}
Let $k\in \N_0$ and $\re(s)>-(\frac{1}{2}+k)\kappa$. Then, for $\lambda\in \N_0$ suitably large, the operator
\begin{equation*}
\hat{{L}}_{s,\kappa}-\lambda(\lambda+1):   \mathcal{D}^k(\hat{{L}}_{s,\kappa})\to \widetilde{H}^k
\end{equation*}
is invertible and
\begin{equation*}
\mathcal{D}^k(\hat{{L}}_{s,\kappa})\subset \widetilde{H}^{k+1}_2.
\end{equation*}
\end{proposition}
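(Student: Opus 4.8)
The plan is to combine the two a priori estimates already established for $\hat{L}_{s,\kappa}-\lambda(\lambda+1)$ and its adjoint. By Proposition \ref{prop:mainred-shiftest} and Corollary \ref{cor:red-shift}, for $\re(s)>-(\frac12+k)\kappa$ and $\lambda$ suitably large the operator $\hat{L}_{s,\kappa}-\lambda(\lambda+1):\mathcal{D}^k(\hat{L}_{s,\kappa})\to\widetilde H^k$ is injective with closed range and satisfies $\mathcal{D}^k(\hat{L}_{s,\kappa})\subseteq\widetilde H^{k+1}_2$. So the only thing left to prove is surjectivity. The standard route is: a closed, densely defined operator with closed range is surjective if and only if its adjoint is injective; equivalently, $\operatorname{Ran}(\hat{L}_{s,\kappa}-\lambda(\lambda+1))=(\ker(\hat{L}^{\dag}_{s,\kappa}-\lambda(\lambda+1)))^{\perp}$, so it suffices to show $\ker(\hat{L}^{\dag}_{s,\kappa}-\lambda(\lambda+1))=\{0\}$.

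First I would record that $\hat{L}^{\dag}_{s,\kappa}=\hat{L}_{-\overline s,\kappa}$ on the domain $\mathcal{D}^k(\hat{L}^{\dag}_{s,\kappa})$, by the lemma preceding Proposition \ref{prop:adjred-shiftest}. Next I would apply Proposition \ref{prop:adjred-shiftest}: since $\re(s)>-(\frac12+k)\kappa$ is equivalent to $\re(-\overline s)<(\frac12+k)\kappa$, estimate \eqref{eq:adjest2} gives, for $\lambda$ suitably large and $(\rho_0)_+,(\rho_0)_c$ suitably small,
\begin{equation*}
\|\hat\psi\|_{\widetilde H^{k+1}_2}\leq C\,\|(\hat{L}^{\dag}_{s,\kappa}-\lambda(\lambda+1))(\hat\psi)\|_{\widetilde H^k}
\end{equation*}
for all $\hat\psi$ in the domain. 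In particular $(\hat{L}^{\dag}_{s,\kappa}-\lambda(\lambda+1))(\hat\psi)=0$ forces $\hat\psi=0$, so the adjoint minus $\lambda(\lambda+1)$ is injective, hence $\operatorname{Ran}(\hat{L}_{s,\kappa}-\lambda(\lambda+1))$ is dense; being also closed by Corollary \ref{cor:red-shift}, it equals $\widetilde H^k$. Combined with the injectivity of $\hat{L}_{s,\kappa}-\lambda(\lambda+1)$ itself (Corollary \ref{cor:red-shift}), this yields a bijective bounded-below closed operator, so by the closed graph / bounded inverse theorem $(\hat{L}_{s,\kappa}-\lambda(\lambda+1))^{-1}:\widetilde H^k\to\mathcal{D}^k(\hat{L}_{s,\kappa})$ is a bounded linear operator. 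The inclusion $\mathcal{D}^k(\hat{L}_{s,\kappa})\subset\widetilde H^{k+1}_2$ is then exactly the content of Corollary \ref{cor:red-shift}.

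A small bookkeeping point to address is that Propositions \ref{prop:mainred-shiftest} and \ref{prop:adjred-shiftest} may a priori require different lower bounds on $\lambda$ and different choices of the cut-off radii $(\rho_0)_{\pm,c}$; one simply takes $\lambda$ larger than both thresholds and the radii smaller than both, since all the estimates are monotone in the obvious direction. One also checks that the constant in the adjoint estimate does not degenerate: both estimates are of the same structural type (an $L^2$-based redshift multiplier identity, integrated against $\partial_\rho\hat\phi_{(n)}$, followed by a Hardy inequality and a summation over $n$), so the dichotomy "$\re(s)$ above/below $-(\tfrac12+k)\kappa$" is precisely the borderline for coercivity in both cases, which is why the hypothesis $\re(s)>-(\tfrac12+k)\kappa$ is the natural and sharp one.

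The main obstacle here is not in the final argument, which is a routine application of closed-range duality, but rather in having the two coercive estimates in hand with compatible hypotheses — and the genuinely delicate estimate, Proposition \ref{prop:adjred-shiftest} for the adjoint, is already proved (following \cite{warn15}) by exploiting the redshift at \emph{both} horizons, which is exactly why this appendix is restricted to $\kappa_+=\kappa_c=\kappa>0$ and does not transfer to the extremal or asymptotically flat setting. Thus the proof of Proposition \ref{prop:poskappainvert} is essentially a two-line deduction: adjoint injectivity from \eqref{eq:adjest2} $\Rightarrow$ surjectivity of $\hat{L}_{s,\kappa}-\lambda(\lambda+1)$ given its closed range $\Rightarrow$ bounded invertibility, with the domain inclusion inherited verbatim from Corollary \ref{cor:red-shift}.
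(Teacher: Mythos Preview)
Your overall strategy---injectivity and closed range from Corollary~\ref{cor:red-shift}, then surjectivity via injectivity of the adjoint---is exactly the paper's route. However, there is a genuine error in your application of Proposition~\ref{prop:adjred-shiftest}. The hypothesis of \eqref{eq:adjest2} is $\re(s)<(\tfrac12+k)\kappa$, an \emph{upper} bound on $\re(s)$, whereas the proposition you are proving assumes the \emph{lower} bound $\re(s)>-(\tfrac12+k)\kappa$. Your ``equivalence'' $\re(s)>-(\tfrac12+k)\kappa\iff \re(-\bar s)<(\tfrac12+k)\kappa$ is arithmetically true but irrelevant: the variable in the hypothesis of \eqref{eq:adjest2} is $s$ itself, not $-\bar s$ (the substitution $s\mapsto -\bar s$ has already been made in the statement, since $\hat L^{\dag}_{s,\kappa}=\hat L_{-\bar s,\kappa}$). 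Concretely, for any $s$ with $\re(s)\geq (\tfrac12+k)\kappa$ your argument gives no control of $\ker(\hat L^{\dag}_{s,\kappa}-\lambda(\lambda+1))$.

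The paper closes this gap by using \emph{both} estimates of Proposition~\ref{prop:adjred-shiftest}: \eqref{eq:adjest1} handles $\re(s)>\tfrac12\kappa$ (at the level of $\widetilde H^0$, hence injectivity on any $\mathcal D^k\subseteq\mathcal D^0$), while \eqref{eq:adjest2} handles $\re(s)<(\tfrac12+k)\kappa$. Together these cover all $s\in\C$ once $k\geq 1$. For $k=0$ the paper does not invoke adjoint injectivity directly; instead it first solves $(\hat L_{s,\kappa}-\lambda(\lambda+1))\hat\psi=\tilde f$ for smooth $\tilde f$ by working in $\mathcal D^1$, then uses the $k=0$ a~priori estimate from Corollary~\ref{cor:red-shift} to extend the inverse by density to all of $\widetilde H^0$. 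Your proof needs this two-step structure (or an equivalent fix) to be complete.
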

\begin{proof}
We first investigate the existence and uniqueness of solutions to 
\begin{equation}
\label{eq:bijection}
(\hat{{L}}_{s,\kappa}-\lambda(\lambda+1))({\hat{\psi}})=\tilde{f},
\end{equation}
 with $\tilde{f} \in C^{\infty}([r_+,r_c]\times \s^2)$. Uniqueness follows from Corollary \ref{cor:red-shift}. Furthermore, given any $k\in \N$, $\hat{{L}}_{s,\kappa}-\lambda(\lambda+1):  \mathcal{D}^k(\hat{{L}}_{s,\kappa})\to \widetilde{H}^k$ is a closed operator with closed range provided $\re(s)>-(\frac{1}{2}+k)\kappa$. Hence, by a standard functional analytic argument, see for example Lemma 4.4 of \cite{warn15}, 
 \begin{equation*}
 (\hat{{L}}_{s,\kappa}-\lambda(\lambda+1))(  \mathcal{D}^k(\hat{{L}}_{s,\kappa}))=\widetilde{H}^k\quad  \textnormal{if}\quad \hat{{L}}_{s,\kappa}^{\dag}-\lambda(\lambda+1):  \mathcal{D}^k(\hat{{L}}^{\dag}_{s,\kappa})\to \widetilde{H}^k \quad \textnormal{is injective}.
 \end{equation*}

By Proposition \ref {prop:adjred-shiftest}, injectivity of the adjoint holds for all $s\in \C$ provided we restrict to $k\geq 1$.

Since $\tilde{f} \in C^{\infty}([r_+,r_c]\times \s^2)$, we can take $k\geq 1$ and conclude that for $\re(s)>-(\frac{1}{2}+k)\kappa$, \eqref{eq:bijection} admits a solution in $ \mathcal{D}^k(\hat{{L}}_{s,\kappa})$. Hence, 
\begin{equation*}
(\hat{{L}}_{s,\kappa}-\lambda(\lambda+1))^{-1}:  C^{\infty}([r_+,r_c]\times \s^2)\to  \mathcal{D}^k(\hat{{L}}_{s,\kappa})
\end{equation*}
is well-defined if $\re(s)>-(\frac{1}{2}+k)\kappa$ with $k\geq 1$ and $\lambda$ is suitably large. By Corollary \ref{cor:red-shift} we in fact have that for all  $k\geq 0$ and $\re(s)>-(\frac{1}{2}+k)\kappa$ and $\lambda>0$ suitably large
\begin{equation*}
||(\hat{{L}}_{s,\kappa}-\lambda(\lambda+1))^{-1}(\tilde{f})||_{\widetilde{H}^{k+1}_2}\leq C ||\tilde{f}||_{\widetilde{H}^k},
\end{equation*}
so for all $k\geq 0$, $(\hat{{L}}_{s,\kappa}-\lambda(\lambda+1))^{-1}$ admits a unique extension as an operator from $\widetilde{H}^k$ to $\mathcal{D}^k(\hat{{L}}_{s,\kappa})$, provided $\re(s)>-(\frac{1}{2}+k)\kappa$, and moreover we have that
\begin{equation*}
\mathcal{D}^k(\hat{{L}}_{s,\kappa})\subset \widetilde{H}^{k+1}_2.
\end{equation*}
It immediately follows that the extended operator must be the inverse of $\hat{{L}}_{s,\kappa}-\lambda(\lambda+1)$.
\end{proof}


\end{document}